\newtheorem{theorem}{Theorem}[section]
\newtheorem{remark}{Remark}[section]
\newtheorem{lemma}{Lemma}[section]
\newtheorem{claim}{Claim}[section]
\newtheorem{corollary}{Corollary}[section]
\newtheorem{conjecture}{Conjecture}[section]
\newtheorem{assump}{Believed Hardness}[section]
\newtheorem{proposition}{Proposition}[section]
\newtheorem{fact}{Fact}[section]
\theoremstyle{definition}
\newtheorem{definition}{Definition}[section]
\newtheorem{openproblem}{Open Problem}
\newenvironment{mylem}[1]
  {\innercustomlem\itshape}
  {\endinnercustomlem}
\newenvironment{myprop}[1]
  {\innercustomprop\itshape}
  {\endinnercustomprop}
\newenvironment{mycor}[1]
  {\innercustomcor\itshape}
  {\endinnercustomcor}
\newenvironment{myclaim}[1]
  {\innercustomclaim\itshape}
  {\endinnercustomclaim}
\newenvironment{mydef}[1]
  {\innercustomdef\itshape}
  {\endinnercustomdef}
\renewcommand{\epsilon}{\rho}
\newcommand{\vio}[1]{\textcolor{violet}{#1}}
\newcommand{\gb}[1]{}
\newcommand{\ah}[1]{}
\newcommand{\pr}{{p_R}}
\newcommand{\pl}{{p_L}}
\newcommand{\M}{\mathsf{E}}
\newcommand{\Mr}{{\mathsf{E}_R}}
\newcommand{\Ml}{{\mathsf{E}_L}}
\newcommand{\Mu}{M}
\newcommand{\Mur}{M_R}
\newcommand{\Mul}{M_L}
\newcommand{\Xg}{X^G}
\newcommand{\Xgl}{X^G_L}
\newcommand{\Xgr}{X^G_R}
\newcommand{\Xr}{X_R}
\newcommand{\Xl}{X_L}
\newcommand{\deltaR}{\delta_R}
\newcommand{\deltaL}{\delta_L}
\newcommand{\Y}{\mathsf{Y}}
\newcommand{\Yb}{\mathbf{Y}}
\newcommand{\Zb}{\mathbf{Z}}
\newcommand{\Ss}{\mathsf{S}}
\newcommand{\Zt}{\mathsf{Z}}
\newcommand{\full}{\mathsf{full}}
\newcommand{\norm}{\psi}
\newcommand{\law}{\mathcal{L}}
\newcommand{\Tr}{\mathsf{Tr}}
\newcommand{\mat}[2]{\mathsf{mat}_{#1 \times #2}}
\newcommand{\tenso}{\mathsf{tens}}
\newcommand{\veco}{\mathsf{vec}}
\newcommand{\supp}{\mathsf{supp}}
\newcommand{\tr}{^\top}
\newcommand{\E}{\mathbb{E}}
\renewcommand{\P}{\mathbb{P}}
\newcommand{\unif}{\mathsf{Unif}}
\newcommand{\bin}{\mathsf{Bin}}
\newcommand{\poi}{\mathsf{Pois}}
\newcommand{\ber}{\mathsf{Ber}}
\newcommand{\bern}[1]{\mathsf{Ber}\paren{#1}}
\newcommand{\rad}{\mathsf{Rad}}
\newcommand{\maj}{\mathsf{MAJ}}
\newcommand{\sign}{\mathrm{sign}}
\newcommand{\GOE}{\mathsf{GOE}}
\newcommand{\poly}{\mathrm{poly}}
\newcommand{\alphab}{\overline{\alpha}}
\newcommand{\Pois}{\mathsf{Pois}}
\newcommand{\FULL}{\mathsf{FULL}}
\newcommand{\wor}{\mathsf{wor}}
\newcommand{\asymm}{\mathsf{asymm}}
\newcommand{\nulll}{\mathsf{NULL}}
\newcommand{\kxor}{k\mathsf{\text{-}XOR}}
\newcommand{\ktens}{k\mathsf{\text{-}Gauss}}
\newcommand{\kmodel}{k\mathsf{\text{-}model}}
\newcommand{\kxorvar}[1]{\kxor^{#1}}
\newcommand{\ktensvar}[1]{\ktens^{#1}}
\newcommand{\kxork}[1]{#1\mathsf{\text{-}XOR}}
\newcommand{\ktensk}[1]{#1\mathsf{\text{-}Gauss}}
\newcommand{\kmodelk}[1]{#1\mathsf{\text{-}model}}
\newcommand{\xorvar}[5]{\kxork{#1}^{#5}(#2, #3, #4)}
\newcommand{\gtensvar}[5]{\ktensk{#1}^{#5}(#2, #3, #4)}
\newcommand{\xorpvar}[5]{\kxork{#1}_{#3, #4}^{#5}(#2)}
\newcommand{\gtenspvar}[5]{\ktensk{#1}_{#3, #4}^{#5}(#2)}
\newcommand{\xor}[4]{\xorvar{#1}{#2}{#3}{#4}{}}
\newcommand{\gtens}[4]{\gtensvar{#1}{#2}{#3}{#4}{}}
\newcommand{\xorp}[4]{\xorpvar{#1}{#2}{#3}{#4}{}}
\newcommand{\gtensp}[4]{\gtenspvar{#1}{#2}{#3}{#4}{}}
\newcommand{\xorstar}[4]{\xorvar{#1}{#2}{#3}{#4}{\star}}
\newcommand{\gtensstar}[4]{\gtensvar{#1}{#2}{#3}{#4}{\star}}
\newcommand{\xorpstar}[4]{\xorpvar{#1}{#2}{#3}{#4}{\star}}
\newcommand{\gtenspstar}[4]{\gtenspvar{#1}{#2}{#3}{#4}{\star}}
\newcommand{\xorpnull}[3]{\kxork{#1}_{#3}^{\nulll}(#2)}
\newcommand{\gtenspnull}[4]{\ktensk{#1}_{#3, #4}^{\nulll}(#2)}
\newcommand{\xorpfull}[4]{\xorpvar{#1}{#2}{#3}{#4}{\FULL}}
\newcommand{\gtenspfull}[4]{\gtenspvar{#1}{#2}{#3}{#4}{\FULL}}
\newcommand{\xorpoi}[4]{\xorvar{#1}{#2}{#3}{#4}{\Pois}}
\newcommand{\gtenspoi}[4]{\gtensvar{#1}{#2}{#3}{#4}{\Pois}}
\newcommand{\xorppoi}[4]{\xorpvar{#1}{#2}{#3}{#4}{\Pois}}
\newcommand{\gtensppoi}[4]{\gtenspvar{#1}{#2}{#3}{#4}{\Pois}}
\newcommand{\kxorfull}{\kxorvar{\FULL}}
\newcommand{\kxorpoi}{\kxorvar{\Pois}}
\newcommand{\ktenspoi}{\ktensvar{\Pois}}
\newcommand{\lwe}{\mathsf{LWE}}
\newcommand{\klwe}{k\mathsf{\text{-}LWE}}
\newcommand{\klwek}[1]{#1\mathsf{\text{-}LWE}}
\newcommand{\lweg}[4]{#1\mathsf{\text{-}LWE}{(#2, #3, #4)}}
\newcommand{\lwegpoi}[4]{#1\mathsf{\text{-}LWE}^{\mathsf{Pois}}{(#2, #3, #4)}}
\newcommand{\lwegnull}[3]{#1\mathsf{\text{-}LWE}^{\mathsf{NULL}}{(#2, #3)}}
\newcommand{\conj}[2]{\mathsf{CONJ}\paren{#1, #2}}
\newcommand{\conjeps}[1]{\mathsf{CONJ}_{#1}}
\newcommand{\conjD}[2]{\mathsf{CONJ}^\mathsf{D}\paren{#1, #2}}
\newcommand{\conjR}[2]{\mathsf{CONJ}^\mathsf{R}\paren{#1, #2}}
\newcommand{\noise}{\mathsf{W}}
\newcommand{\noiseG}{\mathsf{G}}
\newcommand{\neff}{n_{\mathsf{eff}}}
\newcommand{\kgtensppoik}[1]{#1\mathsf{\text{-}Gauss}^{\mathsf{Pois}}}
\newcommand{\gtensppoinull}[3]{#1\mathsf{\text{-}Gauss}^{\mathsf{Pois, NULL}}_{#3}(#2)}
\newcommand{\gtenspoinull}[3]{#1\mathsf{\text{-}Gauss}^{\mathsf{Pois, NULL}}(#2, #3)}
\newcommand{\tensorpca}[3]{#1\mathsf{\text{-}TensorPCA}(#2, #3)}
\newcommand{\tensorpcap}[3]{#1\mathsf{\text{-}TensorPCA_{#3}}(#2)}
\newcommand{\ktensorpca}{k\mathsf{\text{-}TensorPCA}}
\newcommand{\ktensorpcak}[1]{#1\mathsf{\text{-}TensorPCA}}
\newcommand{\pcag}[3]{#1\mathsf{\text{-}tPCA}^{\mathsf{Gauss}}(#2, #3)}
\newcommand{\Gaussianize}{\mathsf{Gaussianize}}
\newcommand{\Discr}{\mathsf{Discr}}
\newcommand{\Discretize}{\mathsf{Discretize}}
\newcommand{\GaussClone}{\mathsf{GaussClone}}
\newcommand{\kTensClone}{\mathsf{kGaussClone}}
\newcommand{\kTensSplit}{\mathsf{kSplit}}
\newcommand{\XorClone}{\mathsf{kXORClone}}
\newcommand{\XorSplit}{\mathsf{kSplit}}
\newcommand{\TwoPath}{\mathsf{DiscrRes}}
\newcommand{\GaussResolution}{\mathsf{GaussRes}}
\newcommand{\GaussResolutionTwo}{\mathsf{GaussResTwo}}
\newcommand{\TwoPathTwo}{\mathsf{DiscrResTwo}}
\newcommand{\ReduceK}{\mathsf{ReduceK}}
\newcommand{\DenseRed}{\mathsf{Densify}}
\newcommand{\SparseRed}{\mathsf{Sparsify}}
\newcommand{\SparseToDense}{\mathsf{SparseToDense}}
\newcommand{\Symm}{\mathsf{Symm}}
\newcommand{\DeSymm}{\mathsf{DeSymm}}
\newcommand{\DiscrResolutionLWE}{\mathsf{DiscrResLWE}}
\newcommand{\ResampleIID}{\mathsf{ResamplePois}}
\newcommand{\ResampleM}{\mathsf{ResampleM}}
\newcommand{\CollisionDetection}{\mathsf{CollisionDetection}}
\newcommand{\Decompose}{\mathsf{Decompose}}
\newcommand{\Compose}{\mathsf{Compose}}
\newcommand{\ReduceKDiscrete}{\mathsf{ReduceKDiscrete}}
\newcommand{\ReduceKGauss}{\mathsf{ReduceKGauss}}
\newcommand{\ToTPCA}{\mathsf{ToWORSampling}}
\newcommand{\ToPoiFromTPCA}{\mathsf{ToWRSampling}}
\newcommand{\DiscrEqComb}{\mathsf{DiscrEqRes}}
\newcommand{\GaussEqComb}{\mathsf{GaussEqRes}}
\newcommand{\EqComb}{\mathsf{EqRes}}
\newcommand{\vareps}{\varepsilon}
\newcommand{\la}{\langle}
\newcommand{\ra}{\rangle}
\newcommand{\paren}[1]{\left( #1 \right)}
\newcommand{\sqb}[1]{\left[ #1 \right]}
\newcommand{\set}[1]{\left\{ #1 \right\}}
\newcommand{\B}{\Big}
\renewcommand{\b}{\big}
\newcommand{\binomset}[2]{\binom{\sqb{#1}}{#2}}
\newcommand{\Thetat}{\widetilde{\Theta}}
\newcommand{\Ot}{\widetilde{O}}
\newcommand{\defi}{\eta}
\newcommand{\kstar}{k'}
\newcommand{\snr}{\nu}
\newcommand{\mpoi}{m_{\mathsf{p}}}
\newcommand{\mpoit}{\widetilde{m}_{\mathsf{p}}}
\newcommand{\eps}{\epsilon}
\newcommand{\calN}{\mathcal{N}}
\newcommand{\calS}{\mathcal{S}}
\newcommand{\Ccal}{\mathcal{C}}
\newcommand{\N}{\mathcal{N}}
\newcommand{\Pcal}{\mathcal{P}}
\newcommand{\Qcal}{\mathcal{Q}}
\newcommand{\calA}{\mathcal{A}}
\newcommand{\calY}{\mathcal{Y}}
\newcommand{\calZ}{\mathcal{Z}}
\newcommand{\qq}{\mathcal{Q}}
\newcommand{\R}{\mathbb{R}}
\newcommand{\Z}{\mathbb{Z}}
\newcommand{\otimest}{\widetilde\otimes}
\newcommand{\eqdist}{\stackrel{d}=}
\newcommand{\X}{\mathbf{X}}
\newcommand{\up}[1]{^{(#1)}}
\newcommand{\tv}{d_\mathsf{TV}}
\newcommand{\TV}{d_\mathsf{TV}}
\newcommand{\kl}{\mathsf{KL}}
\title{Average-Case Reductions for $k$-XOR and Tensor PCA}
\author{Guy Bresler\thanks{Supported by NSF Grant CCF-2428619.} \and Alina Harbuzova\thanks{Supported by MathWorks Fellowship and Siebel Scholarship.}}
\date{Massachusetts Institute of Technology}
\begin{document}

\maketitle
\begin{abstract}
    We study the noisy planted $k\mathsf{\text{-}XOR}$ problem, where for a $\{\pm1\}^n$ signal vector $x$, one observes $m$ entries of $x^{\otimes k}$, each independently flipped with probability $1/2-\delta$. We relate the computational properties of $\kxor$ problems with different parameters $(k,m,\delta)$ via poly-time average-case reductions, transferring algorithms for detection and recovery, as well as hardness results. Our central application is a new connection to Tensor PCA -- another canonical planted average-case problem.

    In Tensor PCA, one observes all $n^k$ entries of $x^{\otimes k}$ with additive Gaussian noise and mean $\delta x^{\otimes k}$, at a much smaller signal level $\delta$ than that of $k\mathsf{\text{-}XOR}$. We unify these settings by formalizing a \emph{family of $\kxor$ problems}, parametrized by order $k$, number of samples $m$, and noise level $\delta$. Via reductions, we formally relate Tensor PCA to an extremely noisy dense regime of $\kxor$.
    
    Our main contribution is a broad network of average-case reductions within the family of $\kxor$ problems across all densities $m \in \sqb{1, n^k}$. The denser $n^{k/2} \leq m \leq n^k$ regime interpolates between $k\mathsf{\text{-}XOR}$ with $m\approx n^{k/2}$ entries and Tensor PCA. The conjectured computational threshold $m\delta^2 \asymp n^{k/2}$ (requiring $\delta \ll 1$) here is well understood; in this regime our reductions achieve tight parameter tradeoffs and preserve proximity to this threshold, yielding new hardness results for all conjectured-hard instances. For example, we reduce any $k\mathsf{\text{-}XOR}$ instance at the computational threshold (i.e., $m = n^{k(1+\varepsilon)/2}$ and $\delta \approx n^{-k\varepsilon/4}$) to Tensor PCA at the computational threshold ($m=n^k$ and $\delta \approx n^{-k/4}$). Additionally, we give new order-reducing maps (e.g., $5\to 4$ $k\mathsf{\text{-}XOR}$ and $7\to 4$ Tensor PCA) at fixed entry density. In the sparser $m \leq n^{k/2}$ regime, our reductions yield a new connection between instances of different orders: for example, we reduce $7\mathsf{\text{-}XOR}$ with $m = n^{3.4}$ to the classical setting of $3\mathsf{\text{-}XOR}$ with $m' = \Thetat(n^{1.4})$, with constant noise in both problems.

    Our reductions are built on an equation resolution primitive -- multiplying pairs of entries -- analyzed in both the discrete and Gaussian settings. Using a computational equivalence between these formulations, we freely pass between them, using whichever yields stronger reductions at the given parameter. For the Gaussian setting, the correctness of our reduction rests on a new high-dimensional CLT for Gramians of sparse Gaussian matrices. In the discrete setting, we design a dependence-avoiding resolution procedure. Our framework extends beyond the binary case, yielding analogous reductions for $k$-sparse $\mathsf{LWE}$ over $\mathbb{F}_q$.

    Taken together, these results establish a hardness partial order of discrete and Gaussian problem variants in the space of tensor order $k$, entry density, and noise level, advancing a unified study of planted tensor models and relationships between them. This unified perspective and our results lead us to formulate numerous open problems. 
 
\end{abstract}

\pagenumbering{gobble}

\newpage
\setcounter{tocdepth}{2}
\hypersetup{linktoc=all}

\tableofcontents
\newpage

\newpage
\pagenumbering{arabic}
\setcounter{page}{1}
\section{Introduction}\label{sec:intro}

This paper gives poly-time average-case reductions within a family of planted noisy $\kxor$ problems, trading off order $k$, number of equations $m$, and noise level $\delta$. The central application is a new connection between $\kxor$ and Tensor PCA, two canonical planted average-case problems.

\paragraph{Planted $\kxor$ problem.}
In $\xor k n m \delta$ (also called $k\mathsf{\text{-}LIN}$), there is a secret \emph{signal vector} $x=(x_1,\dots, x_n) \in \set{\pm 1}^n$. We observe $m$ samples $\paren{\alpha, \Y}$, where for each sample $\alpha = \set{i_1,\dots,i_k}$ is a uniformly random\footnote{ Our $\kxor$ model assumes that the $m$ index sets $\alpha$ are sampled independently \emph{with replacement}; other sampling procedures for $\alpha$ are explored in \autoref{subsec:poi_sampling} and \autoref{subsec:w_replacement}.} $k$-subset of $\sqb{n}$ and
\begin{equation}\label{eq:canonical_kxor}
    \Y = \begin{cases} x_{i_1}\cdots x_{i_k} & \text{ w. prob. }1/2+\delta/2\,,\\
    - x_{i_1}\dots x_{i_k} & \text{ otherwise.}\end{cases}
\end{equation}
    Via the natural $\mathbb{F}_2 \leftrightarrow \set{\pm1}$ map, each sample is a noisy linear equation $y=\langle a,s\rangle + e$ over $\mathbb{F}_2$ where $a$ has $k$-sparse support, $s_i\in\{0,1\}$ with $x_i=(-1)^{s_i}$, and $e\sim \mathsf{Bern}(1/2-\delta/2)$. 
    Two fundamental $\kxor$ tasks are \emph{detection} (distinguish these samples from pure noise) and \emph{recovery} (estimate the secret vector $x$). The most extensively studied regime (which we will henceforth call the \emph{canonical regime}) has $\delta = \Theta(1)$ (e.g., $1/3$), in which case poly-time algorithms are known for $m = \widetilde\Omega(n^{k/2})$.
    As discussed in \autoref{subsec:complexity_profiles}, both detection and recovery are believed to be \emph{computationally hard} (i.e., there exists no polynomial-time algorithm) whenever $m \ll n^{k/2}$. Noisy planted $\kxor$ is a 
canonical problem in average-case complexity that is of interest in its own right~\cite{feige2002relations,alekhnovich2003more,feige2006witnesses,barak2016noisy,allen2015refute,raghavendra2017strongly,kothari17soslbanycsp,guruswami2021algorithms,guruswami2023efficient}. Average-case hardness of $\kxor$ tasks has been used to derive hardness results for statistical tasks: agnostically learning halfspaces
\cite{daniely2016complexityHalf}, learning DNFs \cite{daniely2021local,bui2024structured},  and noisy tensor completion~\cite{bangachev2024near}.
It has also found cryptographic applications \cite{alekhnovich2003more,ishai08cryptoconstant,applebaum2010public,dottling12indcca,dao23multipartyhomomorphic,ragavan24io}. 

Eq.~\eqref{eq:canonical_kxor} in fact defines a \emph{family} of planted noisy $\kxor$ problems indexed by the number of equations $m$ and noise level $\delta$. How are the computational properties of these problems related?
We prove \emph{poly-time average-case reductions} within this family, establishing a hardness partial order among its members. Concretely, for many parameter pairs $(k,m,\delta)$ and $(k',m',\delta')$, we show reductions 
\begin{equation*}\label{eq:ex_reduction}
\text{from}\qquad \xor k n m \delta \qquad\text{to}\qquad \xor {k'} n {m'} {\delta'}\,.
\end{equation*}
These reductions transfer poly-time algorithms for detection and recovery, usually up to $\poly\log$ losses in parameters, therefore also transferring hardness. Our reductions are the first to trade off the signal strength $\delta$ against the number of equations $m$. 
Prior $\kxor$ reductions instead focus on search to decision \cite{applebaum2012pseudorandom, bogdanov2019xor}, mapping from dense to sparse coefficient vectors (i.e., $k = n \to O(1)$) \cite{bangachev2024near}, etc. 

One of our main results is a new connection to Tensor PCA (defined next): for this, we study a denser $m\in\sqb{n^{k/2}, n^k}$ regime, where the interesting behavior occurs at higher noise rates (smaller $\delta$). In the $m \geq n^{k/2}$ regime, all our reductions preserve proximity to the computational threshold $m\delta^2 \asymp n^{k/2}$, yielding hardness results for all conjectured-hard instances.

\paragraph{Tensor PCA problem.} In the canonical statistical model for 
Tensor PCA \cite{richard2014statistical}, denoted $\tensorpca k n \delta$, there is a secret \emph{signal vector} $x \in \set{\pm1}^n$, and we observe an order-$k$ tensor
\begin{equation}\label{eq:tensor_pca}
    \Y = \delta x^{\otimes k} + \noiseG\,,
\end{equation}
where $\noiseG$ is a symmetric tensor of $\N(0,1)$. 
Entry $\Y_{i_1,i_2,\dots, i_k}$ of this tensor has the form $\delta x_{i_1}x_{i_2}\cdots x_{i_k}+G_{\set{i_1,i_2,\dots, i_k}}$.
The \emph{detection} and \emph{recovery} problems for Tensor PCA are central in the fields of high-dimensional statistics \cite{lesieur2017statistical,arous2020algorithmic,chen2021spectral,arous2021online,han2022optimal,luo2022tensor,dudeja2024statistical} and statistical physics \cite{arous2019landscape,ros2019complex,wein2019kikuchi,montanari2024friendly,biroli2020iron,el2021optimization}; tensor PCA has also been intensely studied as a prototypical average-case planted problem \cite{hopkins2015tensor,bhattiprolu2016sum,hopkins2017power,kunisky2019notes,brennan2020reducibility,kunisky2024tensor}. As discussed in \autoref{subsec:complexity_profiles}, for $\delta \gg n^{-k/4}$ there are polynomial-time algorithms and below this level, i.e., $\delta \ll n^{-k/4}$, both tasks are believed to be computationally hard.

\bigskip
The models \eqref{eq:canonical_kxor} and \eqref{eq:tensor_pca} are in some respects quite different, and each has its own line of work, even though they exhibit closely related algorithmic phenomena. In particular, both admit spectral, SOS, and Kikuchi-based algorithms, but these have typically been analyzed separately in the two settings. The planted signal $x_{i_1}x_{i_2}\cdots x_{i_k}$ in each entry has the same form, so $\tensorpca k n \delta$ resembles the extremely dense and noisy $\xor k n m \delta$ with $m \asymp n^k$. Can we relate Tensor PCA to the more canonical $\kxor$ regime with only $m \approx n^{k/2}$ equations?
As one possible goal, can one problem be algorithmically reduced to the other, thereby transferring algorithms and hardness conjectures between the two?

It turns out that Tensor PCA can be related, via cosmetic average-case reductions, to the extreme dense point $\xor k n {m=n^k} \delta$; see \autoref{subsec:family_of_kxor}. Our reductions within the $\kxor$ family therefore yield a new connection to Tensor PCA: in Theorem~\ref{thm:example} we reduce $\kxor$ with $m \approx n^{k/2}$ and $ \delta \approx 1$ to $\ktensorpcak {k'}$. In essence, the conjectured $\kxor$ computational threshold at $m\approx n^{k/2}$ implies the believed threshold for Tensor PCA. Beyond hardness implications, our reductions show a fundamental connection between the problems.

\begin{theorem}[Corollary of Thm.~\ref{thm:densifying_general}]\label{thm:example}
Fix constants $k' \geq 2, \vareps \in (0,1)$, and $\defi > 0$. For some sufficiently large $k = k(k',\vareps)$, if there is \textbf{no} polynomial-time algorithm solving detection (resp. recovery) for $\xor k n m {\delta^\mathsf{xor}}$ with
\vspace{-2mm}
$$
\delta^{\mathsf{xor}} = n^{-k\vareps/4 - \defi/2}\text{  and  } m = n^{k(1+\vareps)/2}\,,
$$ 
then there is \textbf{no} polynomial-time algorithm solving detection (resp. recovery) for $\tensorpca {k'} {n} {\delta^{\mathsf{tpca}}}$ and signal level 
$$
\delta^{\mathsf{tpca}} = \Thetat\b(n^{-k'/4 - \defi}\b)\,.
$$
\end{theorem}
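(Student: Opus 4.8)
\emph{Proof idea.} This is an instantiation of the general densifying reduction Thm.~\ref{thm:densifying_general}. The plan is to choose its free parameters so that the induced parameter map carries the stated $\kxor$ instance to the stated Tensor PCA instance, to check that both endpoints sit a constant multiplicative factor below the conjectured threshold $m\delta^2\asymp n^{k/2}$ (so the reduction applies and nothing degenerates), and then to contrapose. Writing out the source parameters one gets $m(\delta^{\mathsf{xor}})^2=n^{k/2-\defi}$, i.e.\ the $\kxor$ instance lies a factor $n^{-\defi}$ below threshold for \emph{every} order $k$ --- exactly the regime Thm.~\ref{thm:densifying_general} is designed for.

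Next I would pin down $k=k(k',\vareps)$. The densifying reduction forms each output entry as a normalized average of products $\Y_\alpha\Y_\beta=x_{\alpha\triangle\beta}\,\xi_\alpha\xi_\beta$ of pairs of input samples whose index sets overlap in a prescribed number of coordinates, followed by a Gaussianization step; since $\E[\xi_\alpha\xi_\beta]\asymp(\delta^{\mathsf{xor}})^2$, the output signal is $\Thetat\paren{(\delta^{\mathsf{xor}})^2}=\Thetat\paren{n^{-k\vareps/2-\defi}}$, and demanding this equal $\Thetat\paren{n^{-k'/4-\defi}}$ forces $k\vareps=k'/2$. So I take $k$ to be the least integer of the parity/divisibility class required by the combination step with $2k\vareps\ge k'$; then $k=\Theta(k'/\vareps)$ --- large precisely because $\vareps$ may be small --- and any $o(1)$ exponent slack is absorbed into $\Thetat(\cdot)$ and into the freedom in $\defi$. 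The output density is $n^{k'}$ (all order-$k'$ entries), and covering these needs $\binom{m}{2}\asymp n^{k(1+\vareps)}=n^{k'(1+\vareps)/(2\vareps)}$ to exceed $n^{k'}$, i.e.\ $\vareps\le 1$ --- this is where the hypothesis $\vareps\in(0,1)$ enters. (If the conclusion of Thm.~\ref{thm:densifying_general} keeps the output order equal to the input order, I would instead run it to $\tensorpca{k}{n}{\Thetat(n^{-k/4-\defi})}$ and compose with an order-reducing Tensor PCA reduction, as in the $7\to 4$ map, to descend to order $k'$; ``sufficiently large $k$'' gives the slack for the descent, provided one checks that step too preserves proximity to the threshold.)

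Finally, the distributional guarantee in Thm.~\ref{thm:densifying_general} --- the output is within $o(1)$ total variation of a genuine $\tensorpca{k'}{n}{\delta^{\mathsf{tpca}}}$ instance with the \emph{same} secret $x$, and the pure-noise $\kxor$ input maps to a law $o(1)$-close to the pure Gaussian tensor --- lets the reduction transfer algorithms in the contrapositive: a poly-time detector (resp.\ recovery algorithm) for the target, precomposed with the poly-time reduction, solves detection (resp.\ recovery) for the source, contradicting the hypothesis (recovery verbatim since $x$ is unchanged, detection because \textsc{yes}$\mapsto$\textsc{yes} and \textsc{null}$\mapsto$\textsc{null} up to $o(1)$ error). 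The only genuine work in the corollary is this parameter bookkeeping; all the analytic difficulty --- the high-dimensional CLT for Gramians of sparse Gaussian matrices, and the dependence-avoiding discrete combination --- is internal to Thm.~\ref{thm:densifying_general}. I expect the fiddliest point to be aligning tensor orders and the below-threshold exponent through the (possibly two-step) composition so that the $\Thetat(\cdot)$ in the conclusion is honest.
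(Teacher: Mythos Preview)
Your framing is right --- this is Thm.~\ref{thm:densifying_general} specialized to target density $\eps'=1$ and then contraposed, with the analytic content entirely inside that theorem --- but your attempt to redo the internal parameter bookkeeping has a genuine error. You assert that the output signal is $\Thetat\bigl((\delta^{\mathsf{xor}})^2\bigr)=\Thetat(n^{-k\vareps/2-\defi})$, and from matching this to $n^{-k'/4-\defi}$ you deduce that $k\vareps=k'/2$ must hold. That misses the aggregation step: in the dense regime $2k\vareps>k'$ (the one that actually reaches $\eps'=1$), each output index $\gamma\in\binom{[n]}{k'}$ receives $N\approx n^{k\vareps-k'/2}$ product pairs, and the majority vote (or variance-one average) boosts the signal by $\sqrt{N}$, giving $\delta^2\sqrt{N}\approx n^{-k\vareps/2-\defi}\cdot n^{(k\vareps-k'/2)/2}=n^{-k'/4-\defi}$ for \emph{every} $k$ with $2k\vareps>k'$, not only at the boundary. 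Your phrase ``normalized average'' gestures at this, but your arithmetic then drops the $\sqrt{N}$ factor; that boost is exactly the mechanism by which the reduction traces the threshold $m\delta^2\asymp n^{k/2}$.

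You also omit the feasibility constraint $k'\le k(1-\vareps)$ that the discrete combination (Lemma~\ref{lem:2path_red}) needs to avoid reuse dependence. At your boundary choice $k\approx k'/(2\vareps)$ this constraint reads $2\vareps\le 1-\vareps$, so your recipe only covers $\vareps\le 1/3$; for larger $\vareps$ one must either take $k$ strictly larger (and then account for the aggregation you dropped) or switch to the Gaussian combination with its different admissible range (Cor.~\ref{cor:tensor_pca_summary}). Finally, ``$x$ is unchanged'' is not correct: the paper's reductions restrict the secret to $x_{[1:n']}$ with $n'=n(1-o(1))$, or tensor-power it when an order-halving step is used, and transferring recovery then requires the separate argument in \autoref{subsec:recovery_assumptions} rather than a one-liner.
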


Theorem~\ref{thm:example} is part of a broader family of reductions along the $(m,\delta)$ curve (\autoref{subsec:results_intro}).
We also obtain reductions that \emph{change the tensor order $k$}; for example:

\begin{theorem}[Corollary of Thm.~\ref{thm:decrease_k_intro}]\label{thm:example2}
Let $P, Q$ be either of the following pairs:
\begin{enumerate}
    \item $P$ is $\xor 5 n m \delta$ with $m = n^{5/2}$ and $\delta = n^{-\defi/2}$ (for any fixed $\defi > 0$), and $Q$ is $\xor 4 {n} {m'} {\delta'}$ with $m' = \Thetat\b(n^{2}\b)$ and $\delta' = \Thetat\b(n^{-\defi}\b)$;
    \item $P$ is $\tensorpca 7 n \delta$ with $\delta = n^{-7/4-\defi/2}$ (for any fixed $\defi > 0$), and $Q$ is $\tensorpca 4 {n} {\delta'}$ with $\delta' = \Thetat\b(n^{-1 - \defi}\b)$.
\end{enumerate}
If there is \textbf{no} polynomial-time algorithm solving detection (resp. recovery) for $P$, then there is \textbf{no} polynomial-time algorithm solving detection (resp. recovery) for $Q$.
\end{theorem}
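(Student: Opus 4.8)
Both statements are obtained by instantiating the general order-reducing reduction, Thm~\ref{thm:decrease_k_intro}, at specific numeric parameters; no idea beyond that theorem is required. Recall its mechanism. The equation-combination primitive multiplies a pair of observed entries indexed by $k$-sets $\alpha,\beta$ with $|\alpha\cap\beta|=t$; the product is an entry indexed by $\alpha\triangle\beta$, which has $2(k-t)$ coordinates, and since $x_i^2=1$ the signal monomial $x_\alpha x_\beta$ collapses to $x_{\alpha\triangle\beta}$. Thus a pair of order-$k$ equations of signal $\delta$ yields an order-$2(k-t)$ equation of signal $\Theta(\delta^{2})$. To land at target order $k'$ one sets $t=k-\tfrac{k'}2$: Case~1 ($P\to Q$) is a single combination step with $k=5,\ k'=4,\ t=3$, and Case~2 a single step with $k=7,\ k'=4,\ t=5$. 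In both cases the planted vector $x$ is unchanged, and when the combined equations are built from disjoint noise the null maps to the null; hence the reduction transfers both detection and recovery in the direction claimed ($P$ hard $\Rightarrow$ $Q$ hard), up to the routine interconversion between sampling conventions that is also handled inside Thm~\ref{thm:decrease_k_intro}.

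\textbf{Case 1 ($\kxor$).} Substitute $k=5,\ k'=4,\ t=3$ and the input $m=n^{7/2}$, $\delta=n^{-\defi/2}$ into Thm~\ref{thm:decrease_k_intro}. There are $\Theta(m^{2}n^{-3})=\Theta(n^{4})$ sampled pairs $(\alpha,\beta)$ with $|\alpha\cap\beta|=3$, but they share noise bits heavily; the dependence-avoiding equation-combination procedure inside the theorem retains a near-independent sub-family of size $\Thetat(n^{2})$ (the minimal $\kxor$ density $n^{k'/2}=n^{2}$), so $m'=\Thetat(n^{2})$. A typical output $4$-set has only $\Theta(1)$ available representations $\alpha\triangle\beta=\gamma$, so there is no averaging gain and the signal is simply squared: $\delta'=\Thetat(\delta^{2})=\Thetat(n^{-\defi})$. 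What remains is to check the hypotheses of Thm~\ref{thm:decrease_k_intro} at these values --- that $m$ is large enough to supply the required number of near-independent pairs and that $\delta$ exceeds the theorem's lower cutoff --- and that the polylog factors absorbed into $\Thetat(\cdot)$ come out as stated; since $\defi$ is a fixed constant, the quantifier ``for any fixed $\defi>0$'' causes no difficulty.

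\textbf{Case 2 ($\ktensorpca$).} Substitute $k=7,\ k'=4,\ t=5$ and $\delta=n^{-7/4-\defi/2}$ into Thm~\ref{thm:decrease_k_intro}. Now the full tensor is observed, so every output $4$-set $\gamma$ has $\Theta(n^{t})=\Theta(n^{5})$ representations $\gamma=\alpha\triangle\beta$ with $|\alpha\cap\beta|=5$. Averaging the products $\Y_\alpha\Y_\beta$ over these representations keeps the signal at $\Theta(\delta^{2})$ while suppressing the dominant $G_\alpha G_\beta$ noise to standard deviation $\Theta(n^{-5/2})$; renormalizing the resulting (Gramian-type) order-$4$ tensor to unit per-entry variance multiplies the signal by $n^{5/2}$, giving $\delta'=\Thetat(\delta^{2}\,n^{5/2})=\Thetat(n^{-7/2-\defi+5/2})=\Thetat(n^{-1-\defi})$, as claimed. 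The one nontrivial ingredient --- that this renormalized tensor lies within $o(1)$ total variation of a genuine $\tensorpca{4}{n}{\delta'}$ instance --- is exactly the high-dimensional CLT for Gramians of sparse Gaussian matrices that powers Thm~\ref{thm:decrease_k_intro}; for the corollary one need only confirm that $n^{5}$ combined equations at signal $\delta^{2}n^{5}$ fall inside the CLT's admissible window at these parameters.

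\textbf{Main obstacle.} Because this is a corollary, essentially all the weight sits in Thm~\ref{thm:decrease_k_intro}: the dependence-avoiding combination in the discrete case and the sparse-Gramian CLT in the Gaussian case. For the corollary itself the only delicate point is that Case~2 sits near the tight edge of the theorem --- $k$ drops from $7$ to $4$ while the signal is pushed down to $n^{-7/4-\defi/2}$ --- so one must verify that the CLT error (equivalently the TV distance to a bona fide order-$4$ instance) is still $o(1)$ there; Case~1, being discrete, is more forgiving. The rest is bookkeeping: the arithmetic $2(k-t)=k'$, the exponent identities $\delta^{2}n^{t/2}=n^{-1-\defi}$ (Case~2) and $\delta^{2}=n^{-\defi}$ (Case~1), and that the polylog slack in $\Thetat(\cdot)$ does not swamp the constant $\defi$.
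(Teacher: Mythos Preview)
Your approach matches the paper's: both cases instantiate the order-reducing reduction of Thm~\ref{thm:decrease_k_intro} via equation combination---discrete for $\kxor$, Gaussian (with the Gramian CLT) for Tensor PCA---and your Case~2 analysis is correct, including the signal arithmetic $\delta'=\Thetat(\delta^{2}n^{5/2})=\Thetat(n^{-1-\defi})$ and the identification of the CLT (Lemma~\ref{lem:wishart}) as the key ingredient; the admissibility check $k'=4<2k/3=14/3$ indeed holds.

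Case~1 has a genuine gap. Taking $m=n^{7/2}$ at face value puts the input at density $\eps=2/5$, outside the $\eps=0$ regime Thm~\ref{thm:decrease_k_intro} covers; the paper's discrete combination (Lemma~\ref{lem:2path_red}) requires $k'\le k(1-\eps)=3$, so $k'=4$ is inadmissible. At $\eps=2/5$ each cancellation $3$-set $\beta$ has $|\mathcal N(\beta)|\approx n^{1/2}$ input equations, and the one-pair-per-$\beta$ rule would emit $\Thetat(n^{3})$ output equations (one per $\beta$), not $\Thetat(n^{2})$; your assertion that the procedure ``retains a near-independent sub-family of size $\Thetat(n^{2})$'' is neither what the mechanism actually does nor separately argued. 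The clean fix is to first subsample the input down to $m=n^{5/2}$ (this keeps $\delta=n^{-\defi/2}$ and places the instance at $\eps=0$); then $m^{2}n^{-3}=n^{2}$ candidate pairs arise, each $\beta$ sees $O(1)$ equations, and the output is genuinely $\Thetat(n^{2})$ equations at signal $n^{-\defi}$. It is quite likely that $m=n^{7/2}$ in the statement is a typo for $m=n^{5/2}$; with that value your pair count $m^{2}n^{-3}$ is already $n^{2}$ and your argument goes through directly.
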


\paragraph{Technique: equation resolution.} Our reductions are built around an \emph{equation resolution} primitive, which combines two input equations into a new $\kxork {k'}$ equation on the symmetric-difference support, implemented by multiplying two values in Eq.~\eqref{eq:canonical_kxor} (equiv. adding two equations over $\mathbb{F}_2$), see \autoref{sec:ideas} for the overview of main reduction ideas. Crucially, since we work in the average case, our reductions must map to the correct (approximately in total variation) output distribution. Thus, while forming linear combinations of $\kxor$ constraints is a standard operation (e.g., in Gaussian elimination or in proof systems for refuting $\kxor$; see \autoref{subsec:resolution_lit}), we use this primitive differently by designing resolution patterns that are distributionally correct. The same framework extends beyond binary equations, yielding analogous reductions for $k$-sparse $\lwe$ over larger fields $\mathbb{F}_q$ (\autoref{subsec:resolution_lwe}).

\subsection{Notation and Preliminaries}\label{subsec:family_of_kxor}

We first establish some notation, formalize the detection and recovery tasks and the connection between $\kxor$ and $\ktensorpca$.

\begin{remark}[Polynomial Scale]
We treat $k$ as a constant, consider sequences of problems indexed by $n \to \infty$, and work at polynomial scales: $\log_n m$, and $\log_n \delta$ are viewed as constants. 
As is made precise in \autoref{sec:avg_case_defs} (Prop.~\ref{prop:map_up_to_constant}), two model sequences with parameters $\delta_n, p_n$ and $\delta'_n,p'_n$ such that $\lim_n \log_n\delta_n=\lim_n \log_n\delta_n'$ and $\lim_n \log_n m_n =\lim_n \log_n m_n'$
are equivalent for our purposes.
\end{remark}
\begin{definition}[Density Parameter $\eps$]\label{def:density}

We view $\log_n m$, where $m$ is the number of observed entries in $\kxor, \ktens$, as constant. Accordingly, we parametrize $$m = n^{k(\eps+1)/2} \quad \text{for }\eps \in (-1,1]\,.$$ Canonical $\kxor$ has $m=n^{k/2}$, and hence $\eps=0$, and $\ktensorpca$ has $m = n^k$, and hence $\eps = 1$.\footnote{Models differing by $\poly\log n$ factors in parameters $m,\delta$ are equivalent for our purposes, see \autoref{sec:avg_case_defs}.}
\end{definition}

\paragraph{Tasks.} We formalize the following detection and recovery tasks in \autoref{sec:avg_case_defs}:
\begin{enumerate}
    \item \textbf{Detection:} distinguish an instance of $\kxor$ from a null (pure noise) instance (see Remark~\ref{rmk:null} below) that does not depend on the signal vector $x$;
    \item \textbf{Recovery:} given a $\kxor$ instance, estimate the secret vector $x\in\set{\pm 1}^n$, either exactly or approximately. 
\end{enumerate}
\begin{remark}[Choice of Null Distribution]\label{rmk:null}
    For $\xor k n m \delta$ and $\tensorpca k n \delta$ we define the null distribution by replacing the structured signal $x_{i_1}\dots x_{i_k}$ in each equation/entry with a random sign $w \sim \unif\b(\set{\pm1}\b)$. 
    A common alternative is to set the SNR parameter $\delta=0$. These coincide exactly for the discrete $\xor k n m \delta$; for $\tensorpca k n {\delta\ll n^{-k/4}}$, the two versions are statistically indistinguishable (i.e., their total variation is $o(1)$, see Prop.~\ref{prop:tv_null}). 
\end{remark}

Our results address both the detection and recovery tasks. The separate \emph{refutation} task (certifying an upper bound on the satisfiable fraction under the null) is not our focus; our reductions do not directly apply to it and we leave it as an interesting direction for future research.

\paragraph{Connection to Tensor PCA.} 

To precisely capture $\ktensorpca$, we require several cosmetic reductions between $\xor k n {m=n^k} \delta$ and the standard form in Eq.~\eqref{eq:tensor_pca} (see Cor.~\ref{cor:tensor_pca_wr_equiv}, \autoref{subsubsec:equiv_w_wor_replacement}):

\begin{proposition}[Computational Equivalence of $\kxor$ and Canonical Tensor PCA]\label{prop:intro_tpca_equiv}
    For any $k, n,$ and $ \delta \ll n^{-k/4}$, $\kxor(n,m=n^k,\delta)$ is computationally equivalent to the Tensor PCA model $\Y = \delta x^{\otimes k} + \noiseG$ of Eq.~\eqref{eq:tensor_pca} with only distinct-index entries observed.
    A $\kxor$ model variant $\kxor^\FULL(n,m=n^k,\delta)$ (defined in \autoref{subsec:kxor_variants}) that allows index repetitions in observed equations is computationally equivalent to $\ktensorpca(n, \delta)$ as in Eq.~\eqref{eq:tensor_pca}.

    Moreover, all our results hold verbatim for $\ktensorpca(n, \delta)$ in place of $\kxor(n,m=\widetilde\Theta(n^k),\delta)$.
\end{proposition}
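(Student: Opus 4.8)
The plan is to prove both equivalences by composing a short chain of distribution‑exact ``cosmetic'' reductions, each of which is either a standard Poissonization step or an application of a single exact Gaussian–cloning identity. Throughout I would invoke the polynomial‑scale convention (the Remark following Def.~\ref{def:kxor_family} and Prop.~\ref{prop:map_up_to_constant}): since $\log_n(n^k\log^2 n)=k$, I may and do replace the number of equations $m=n^k$ by $m=n^k\log^2 n$. The first move is to pass, via the paper's resampling reductions ($\ResampleIID$, and $\ResampleM$ for the converse), from the with‑replacement model to the \emph{Poissonized} model in which, independently over $k$‑sets $\alpha$, the multiplicity is $c_\alpha\sim\Pois(\mu)$ with $\mu=m/\binom nk=\Theta(\log^2 n)$ and one observes $c_\alpha$ i.i.d.\ copies of $\delta x_\alpha+\N(0,1)$; a Chernoff bound together with a union bound over the $\binom nk$ sets shows that with probability $1-o(1)$ every $c_\alpha$ lies in $[\tfrac12\mu,\,2\mu]$, and I condition on this event at a cost of $o(1)$ in total variation. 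It then suffices to establish the equivalences at the level of the Poissonized model.

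For the direction $\ktens(n,m,\delta)\to$ Tensor PCA, for each $k$‑set $\alpha$ I average its $c_\alpha\ge 1$ observations to obtain $\delta x_\alpha+\N(0,1/c_\alpha)$ and then add a fresh independent $\N(0,\,1-1/c_\alpha)$ to obtain exactly $\delta x_\alpha+\N(0,1)$ (a $\DenoiseGauss$‑type step); the resulting array over all $k$‑sets has exactly the law of the distinct‑index restriction of $\delta x^{\otimes k}+\noiseG$ in Eq.~\eqref{eq:tensor_pca}, so this is a signal‑preserving reduction with no loss in $\delta$. For the converse, from the single observation $T_\alpha=\delta x_\alpha+\N(0,1)$ of each $k$‑set I must synthesize $c_\alpha\sim\Pois(\mu)$ i.i.d.\ copies of $\delta' x_\alpha+\N(0,1)$. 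Here the key primitive is the exact cloning identity ($\GaussClone$/$\kTensClone$): for any integer $r\ge 1$, if $T\sim\N(\theta,1)$ then drawing $(Y_1,\dots,Y_r)$ by setting $\sum_i Y_i=\sqrt r\,T$ and sampling the orthogonal complement from the ($\theta$‑free) conditional law $\N\bigl(\tfrac{\sqrt r\,T}{r}\mathbf{1},\ I-\tfrac1r\mathbf{J}\bigr)$ yields \emph{exactly} $r$ i.i.d.\ $\N(\theta/\sqrt r,1)$ variables. I apply this with $r:=2\mu=\Theta(\log^2 n)$ and $\delta':=\delta/\sqrt r=\Thetat(\delta)$, independently draw $c_\alpha\sim\Pois(\mu)$, and keep the first $c_\alpha\le r$ clones; on the conditioning event this reproduces the Poissonized $\ktens(n,m,\delta')$ law exactly, and converting back to fixed $m$ via $\ResampleM$ completes the equivalence between $\ktens(n,m=n^k,\delta)$ and the distinct‑index Tensor PCA model, with $\delta$ preserved up to $\poly\log n$.

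For the second statement, dropping every repeated‑index observation is a trivial reduction $\ktens^\FULL(n,n^k,\delta)\to\ktens(n,n^k,\delta)$ and likewise $\ktensorpca(n,\delta)\to$ the distinct‑index model; in the other direction the three steps above go through verbatim with $[n]^k$ in place of $\binom{[n]}{k}$ plus an added symmetrization, using that for a colliding multiset $\alpha$ the ``signal'' $x_\alpha=x_{i_1}\cdots x_{i_k}$ collapses to $x_{\alpha'}$ for the odd‑multiplicity sub‑multiset $\alpha'$, which is precisely the mean of the corresponding entry of $\ktensorpca(n,\delta)$ in Eq.~\eqref{eq:tensor_pca}. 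Hence $\ktens^\FULL(n,n^k,\delta)\equiv\ktensorpca(n,\delta)$. The ``moreover'' is then immediate: every reduction in the paper has source and/or target $\ktens(n,n^k,\cdot)$, so pre‑ and post‑composing with the two equivalences above — a $\poly\log n$ change absorbed by the polynomial‑scale convention — yields the identical statements with $\ktensorpca(n,\cdot)$ substituted.

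The substance is not in any single step but in keeping every step \emph{simultaneously} distribution‑exact up to $o(1)$ total variation and polynomial‑scale‑preserving (degrading $\delta$ by at most $\poly\log n$), and the linchpin is that the Gaussian cloning is \emph{exact}. The main obstacle is precisely to avoid the trap where a naive ``rescale‑and‑add‑noise'' clone leaves residual pairwise correlation $\rho$ among the clones of a given entry: the induced Kullback–Leibler gap is $\Theta(\rho^2)$ per entry, and summing over the $\Theta(n^k)$ entries would force $\rho=n^{-\Omega(k)}$, hence a \emph{polynomial} loss in $\delta$; the exact identity sidesteps this entirely. The remaining care is routine — controlling the Poissonization and conditioning errors, checking that the resampling reductions are applicable at $m=\Theta(n^k\log^2 n)$, and verifying that the symmetrization in the $\ktens^\FULL$ case matches the symmetric‑tensor noise convention of Eq.~\eqref{eq:tensor_pca}. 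The hypothesis $\delta\ll n^{-k/4}$ is used only to stay in the regime where these cosmetic identities are all that is needed; above threshold one would additionally note that the known algorithms transfer with a $\poly\log n$ loss.
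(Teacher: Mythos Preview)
Your approach is essentially the paper's: both directions hinge on the exact Gaussian cloning primitive ($\GaussClone$/$\kTensClone$) to move between one observation per index set and many, and on Poisson resampling to switch between fixed-$m$ and Poissonized models. The paper packages this as Prop.~\ref{prop:sampling_equiv_eps1} (equivalence of $\kmodel^\star$ and $\kmodel^{\star,\wor}$ at $\eps=1$, via $\ToTPCA$/$\ToPoiFromTPCA$) and reads off Cor.~\ref{cor:tensor_pca_wr_equiv}. One cosmetic difference: in the direction $\ktens\to$ Tensor~PCA you \emph{average} the $c_\alpha$ observations and top up the noise, whereas the paper's $\ToTPCA$ simply keeps \emph{one} random observation per index set. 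Your route has the advantage that it immediately covers every $\alpha\in\binom{[n]}{k}$ (since $c_\alpha\ge 1$ w.h.p.\ under your $\mu=\Theta(\log^2 n)$ Poissonization), so you avoid the patch the paper needs to fill in missing entries.

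There is, however, one genuine omission: you never check that the \emph{null} distributions are mapped correctly, which is required for the ``computational equivalence'' to cover detection. Under the paper's null (Def.~\ref{def:null_models}) each sample carries an independent sign $w_j$, so your averaging step produces $\delta\bar w_\alpha+\N(0,1)$ with $\bar w_\alpha$ an average of $c_\alpha$ i.i.d.\ signs --- not the target Tensor~PCA null $\delta w_\alpha+\N(0,1)$. Symmetrically, in the reverse direction cloning replicates the \emph{same} $w_\alpha$ across all clones, whereas the $\ktens$ null has i.i.d.\ $w_j$'s. The fix is exactly Prop.~\ref{prop:tv_null}: when $\delta\ll n^{-k/4}$, both nulls are $o(1)$-close in TV to i.i.d.\ $\N(0,1)$, so by the triangle inequality your reduction maps null to null up to $o(1)$. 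This is precisely where the hypothesis $\delta\ll n^{-k/4}$ enters, not the vague ``stay in the regime where cosmetic identities suffice'' you gave; the paper makes this explicit in the proof of Prop.~\ref{prop:sampling_equiv_eps1}. (A side remark: your $\FULL$ paragraph conflates ordered tuples $[n]^k$ with the multisets $\{n\}^k$ that $\ktens^\FULL$ actually uses; the measure $\mu$ on multisets in Def.~\ref{def:multiset} is non-uniform, so the Poisson rates differ across multisets --- your averaging/cloning still works, but ``verbatim'' overstates it.)
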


\section{Main Results}\label{subsec:results_intro}

We study the computational complexities of $\xor k n m \delta$ and $\tensorpca k n \delta$. We do so by building \emph{average-case reductions in total variation} between instances with different $(k, n, m ,\delta)$ parameters and consider their algorithmic and hardness implications. We organize the results by regime: in the denser case $m \in \sqb{n^{k/2}, n^k}$ and the sparser case $m \leq n^{k/2}$, we first overview known algorithmic and hardness results, followed by our reductions and their implications.

\subsection{Results for Dense $\kxor$ with $m \in \sqb{n^{k/2}, n^k}$}\label{subsec:complexity_profiles}

\paragraph{Complexity profile overview.}
As stated above, at noise rate $\delta = \Theta(1)$, known polynomial-time algorithms solve both detection and recovery tasks for $\xor k n m \delta$ if $m = \widetilde\Omega(n^{k/2})$. In the regime of $m \in \sqb{n^{k/2}, n^k}$, the interesting behavior thus occurs at high noise rates $\delta \ll 1$. State-of-the-art polynomial-time algorithms and information-theoretic lower bounds for $\xor k n m \delta$ with $m \in \sqb{n^{k/2}, n^k}$ (equiv., $m = n^{k(1+\eps)/2}$ for $\eps \in [0,1]$) are summarized in Fig.~\ref{fig:phase_diag}. 

\begin{figure}[h]
\vspace{-2mm}
    \centering
    \begin{tikzpicture}[
  >=Latex,
  axis/.style={thick,-{Latex[length=3mm]}},
  vline/.style={dashed,gray!90,thick},
  dot/.style={circle,fill,inner sep=1.5pt},
  arc/.style={-{Latex[length=2.2mm]},thick},
  scale = 1
]

\def\k{7} 
\def\ky{4} 
\def\rhoo{0.6}    

\pgfmathsetmacro{\xmax}{1.25*\k}
\pgfmathsetmacro{\ymax}{0.25*\ky}
\pgfmathsetmacro{\ymin}{-0.75*\ky}

\pgfmathsetmacro{\yann}{-\ky*0.7}

\coordinate (O) at (0,0);
\coordinate (A) at (\k*1/10, 0);                    
\coordinate (B) at (\k, -{\ky*(1/3-1/10)});                  
\coordinate (C) at ({\k*(1/10+\rhoo*(1-1/10))}, {- \ky*\rhoo/3});

\coordinate (D) at (\k,\ky*2/10);
\coordinate (E) at (\k*1/10,\ky*2/10);

\coordinate (F) at ({\k*1/10}, - {\ky*(1/3 + 1/10)});
\coordinate (G) at (\k, - \ky*2/3);
\coordinate (FF) at (\k*1/10, - \ky*0.75);
\coordinate (GG) at (\k, - \ky*0.75);

\fill[green!60!black, fill opacity=0.18, draw=none]
  (A) -- (E) -- (D) -- (B) -- cycle;
\path[
  pattern={Lines[angle=60,distance=6pt,line width=3pt]},
  pattern color=gray!70,
  opacity=0.35 
] (A) -- (B) -- (G) -- (F) -- cycle;
\fill[purple!60!black, fill opacity=0.18, draw=none]
  (F) -- (G) -- (GG) -- (FF) -- cycle;

\tikzset{
  axis/.style={very thick, ->},
  tick/.style={very thick},
  dottedline/.style={densely dotted, thick},
  dashedline/.style={dashed, thick},
  bluept/.style={circle, fill=blue!90!black, draw=blue!90!black, inner sep=0pt, minimum size=8pt},
  blueopen/.style={circle, fill=white, draw=blue!90!black, line width=1.2pt, inner sep=0pt, minimum size=8pt},
  purplept/.style={circle, fill=purple!90!black, draw=purple!90!black, inner sep=0pt, minimum size=5pt},
  blackpt/.style={circle, fill=black, draw=black, inner sep=0pt, minimum size=3pt}
}

\draw[axis] (0,\ymin) -- (0,\ymax) node[above left=2pt] {$\log_{n}\delta$};
\draw[axis] (0,0) -- (\xmax,0) node[below right=2pt] {$\log_{n}m$};

\node[left=4pt, font=\small] at (0,0) {$0$};

\draw[tick] (-0.12, -{\ky*(1/3-1/10)}) -- (0.12, -{\ky*(1/3-1/10)});
\node[left=4pt, font=\small] at (0, -{\ky*(1/3-1/10)}) {$-k/4$};

\draw[tick] (-0.12, -{\ky*(1/3 + 1/10)}) -- (0.12, -{\ky*(1/3 + 1/10)});
\node[left=4pt, font=\small] at (0, -{\ky*(1/3 + 1/10)}) {$-k/4+1/2$};

\draw[tick] (-0.12, -\ky*2/3) -- (0.12, -\ky*2/3);
\node[left=4pt, font=\small] at (0, -\ky*2/3) {$-k/2+1/2$};


\draw[tick] (\k, -0.12) -- (\k, 0.12);
\node[above=4pt, font=\small] at (\k, 0) {$k$};

\draw[tick] (\k*1/10, -0.12) -- (\k*1/10, 0.12);
\node[above=4pt, font=\small] at (\k*1/10, 0) {$k/2$};

\draw[thick] (A) -- (B);
\draw[thick] (F) -- (G);

\draw[thick] ($(A)!1/3!(B)$) -- (B) node[midway, above=-1.5pt, sloped, font=\small] {\eqref{eq:tradeoff}};
\draw[thick] (F) -- ($(F)!2/3!(G)$) node[midway, below=-1.5pt, sloped, font=\small] {\eqref{eq:it_threshold}};



\draw[dottedline, color=purple!90!black] (D) -- (GG) node[midway, above=-1.5pt, sloped, font=\small,color=purple!90!black] {$ \ $ tpca ($m=n^k$)};




\pgfmathsetmacro{\legx}{\xmax }   
\pgfmathsetmacro{\legy}{-0.32*\ky}    
\def\legbox{0.28}                     

\node[anchor=north west, font=\small] (legend) at (\legx,\legy) {%
  \begin{tabular}{@{}c@{\hspace{0.6em}}l@{}}

  \tikz{\draw[draw=black!40, line width=0.3pt,
              fill=green!60!black, fill opacity=0.18]
        (0,0) rectangle (\legbox,\legbox);} &
  poly-time algorithms exist \\[0.35em]

  \tikz{\draw[draw=black!40, line width=0.3pt,
              pattern={Lines[angle=60,distance=6pt,line width=3pt]},
              pattern color=gray!70, opacity=0.35]
        (0,0) rectangle (\legbox,\legbox);} &
  conjectured hard \\[0.35em]

  \tikz{\draw[draw=black!40, line width=0.3pt,
              fill=red!70!black, fill opacity=0.18]
        (0,0) rectangle (\legbox,\legbox);} &
  info-theoretically impossible \\

  \end{tabular}
};

\end{tikzpicture}
    \caption{Known algorithms and lower bounds for $\xor k n m \delta$ with $m \in \sqb{n^{k/2}, n^k}$.}
    \vspace{-2mm}
    \label{fig:phase_diag}
\end{figure}

The endpoints $\eps \in \set{0,1}$ correspond to the canonical $\kxor$ and $\ktensorpca$ settings; see \autoref{subsec:lit_algo} for the overview of known algorithms. 
For intermediate densities $\eps\in\paren{0,1}$, known spectral algorithms for detection \cite{wein2019kikuchi} extend naturally and succeed whenever $m \delta^2 = \widetilde\Omega(n^{k/2})$, and our reductions yield recovery algorithms for all $\eps\in \paren{0,1}$ at the same threshold for sufficiently large order $k = k(\eps)$ (\autoref{sec:discussion}). We refer to the boundary of the computationally feasible regime, $m\delta^2 = \widetilde\Omega(n^{k/2})$, as the \emph{computational threshold}; below it, no poly-time algorithms are known:
\begin{equation}\label{eq:tradeoff}
    \qquad\qquad m \delta^2 \asymp n^{k/2}\quad \text{or equivalently},\quad \delta \asymp n^{-k\eps/4}\,,\tag{Comp. Thr.}
\end{equation}
using the parametrization $m = n^{k(1+\eps)/2}$ (see Def.~\ref{def:density}).

The detection task for Tensor PCA (i.e., $\eps = 1$) becomes information-theoretically impossible at $\delta\ll n^{-(k-1)/2}$ \cite{richard2014statistical,montanari2015limitation,perry2016statistical}.
For the remaining densities $\eps\in[0,1)$, the computation of \cite{perry2016statistical} extends and yields the \emph{information-theoretic threshold}
\begin{equation}\label{eq:it_threshold}
    m \delta^2 \asymp n\quad \text{or equivalently},\quad \delta \asymp n^{-k\eps/4 -(k-2)/4}
    \,.\tag{IT Thr.}
\end{equation}

The discrepancy in the signal strength required to solve the problem \emph{efficiently} \eqref{eq:tradeoff} and \emph{information-theoretically} \eqref{eq:it_threshold} leaves a wide regime 
\begin{equation}\label{eq:com_stat_gap}
    n \ll m \delta^2 \ll n^{k/2} \quad {(\text{equiv. } n^{-k\eps/4 - (k-2)/4} \ll \delta \ll n^{-k\eps/4}) } 
\end{equation}
with no known poly-time algorithms and is known as the \emph{computational-statistical gap}. For the two endpoints at $\eps\in \{0,1\}$ that, as discussed earlier, correspond to the extremely well-studied $\kxor$ and $ \ktensorpca$, a line of work described in \autoref{subsec:lit_comp_bounds} 
provides evidence for the computational-statistical gap in \eqref{eq:com_stat_gap}. We expect that the analysis based on restricted classes of algorithms extends to intermediate $\eps\in (0,1)$. We summarize the believed hardness in the following conjectures:

\begin{assump}[Computational Hardness of $\kxor$ at Density $\eps$: $\conj k \eps$]\label{conj:hardness} Fix parameters $k \in \mathbb{Z}^{+}, \eps \in [0,1]$. We denote $\conj k \eps$ for detection (resp. recovery) to be the assumption that:\footnote{By Prop.~\ref{prop:intro_tpca_equiv}, $\conj k 1$ captures the believed hardness of $\ktensorpca$. As stated above, all our results equally apply to standard $\ktensorpca$ and the model where only distinct-index entries are observed.} 
\begin{center}
\begin{minipage}{0.85\textwidth} 
\centering
``There exists no polynomial-time algorithm solving the detection (resp. recovery) task for $\xor k n {m=n^{k(1+\eps)/2}} \delta$ with $\log_n \delta$ any fixed constant $<-k \eps / 4$."
\end{minipage}
\end{center}
\end{assump}

A key consequence of polynomial-time average-case reductions, which are formally defined in \autoref{sec:avg_case_defs}, is the transfer of efficient algorithms (Lem.~\ref{lem:red_implications_gen}). Therefore, our reductions relate the hardness conjectures $\conj k \eps$ and establish a \emph{hardness partial order} on the parameter pairs $(k, \eps)$:

 \begin{myprop}{\ref{prop:hardness_implication}}\textnormal{(Implications of Average-Case Reductions for $\kxor$ Hardness Conjectures).} Fix $k,k'\in \mathbb{Z}^{+}, \eps, \eps' \in [0,1]$. If for every constant $\log_n \delta' < -k\eps'/4$ there exists constant $\log_n \delta < - k \eps/4$ and a poly-time average-case reduction (Def.~\ref{def:avg_case_points}) from $\xor k n {m = n^{k(1+\eps)/2}} \delta$ to $\xor {k'} {n'} {m' = (n')^{k'(1+\eps')/2}} {\delta'}$ with $n' = \poly(n)$, then $\conj k \eps$ implies $\conj {k'} {\eps'}$, for both detection and recovery, which we denote as
 $$\conj k \eps \Rightarrow \conj {k'} {\eps'}
\,.$$
 \end{myprop} 
 \begin{remark}
     Some of our reductions nontrivially alter the signal vector $x$. We formalize the weak/strong/exact recovery tasks (Def.~\ref{def:exact_recovery}, \ref{def:weak_strong_recovery}) in \autoref{subsec:recovery_assumptions} and show that all of them transfer. 
 \end{remark}

\paragraph{Results.}

We prove reductions that (1) \emph{decrease} the tensor order $k$ at fixed density $\eps$ (Thm.~\ref{thm:decrease_k_intro}), and (2) \emph{densify} the instance by increasing $\eps$ and trading off against the tensor order $k$ (Thms.~\ref{thm:densifying_general}, \ref{thm:specific_intro}). While the precise reduction statements are given in \autoref{sec:resolution_red}, \autoref{sec:to_tensor_pca}, and \autoref{sec:decrease_k}, here we present their implications for the $\kxor$ hardness conjectures. 

\begin{theorem}[Reduction Reducing $k$, Cor.~\ref{cor:decreasek}]\label{thm:decrease_k_intro}
    For $\eps \in [0,1]$ the implication 
\begin{equation}\label{eq:decreasek_implication_}
    \conj k \eps \Rightarrow \conj {k'} \eps
    \end{equation} holds for the following choices of $k$ and $k'$: 
    
    \noindent
    \begin{minipage}[t]{0.55\textwidth}
        \begin{enumerate}
            \item \textbf{Canonical $\kxor$, $\eps = 0$:} 
            \begin{enumerate}
                \item \textbf{even $\kstar$:} all $k > \kstar$;
                \item \textbf{odd $\kstar$:} all odd $k > \kstar$ and even $k \geq 2\kstar$.  
            \end{enumerate}
        \end{enumerate}
    \end{minipage}
    \hfill
    \begin{minipage}[t]{0.43\textwidth}
        \begin{enumerate}
            \setcounter{enumi}{1}
            \item \textbf{Tensor PCA, $\eps = 1$:}
            \begin{enumerate}
                \item \textbf{even $\kstar$:} all $k > 3\kstar/2$;
                \item \textbf{odd $\kstar$:} all $k > 3\kstar$. 
            \end{enumerate}
        \end{enumerate}
    \end{minipage}\vspace{3mm}\\
    In particular, for $\eps \in \set{0,1}$ and any $k'$, the relation Eq.~\eqref{eq:decreasek_implication_} holds for \emph{all} sufficiently large $k \geq k\paren{k'}$. 
\end{theorem}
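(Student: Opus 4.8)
The plan is to obtain each implication from an \emph{equation–combination} primitive and then quote Prop.~\ref{prop:hardness_implication}. The primitive takes two observed equations on supports $\alpha,\beta$ with $|\alpha\cap\beta|=t$ and produces a new equation on the symmetric difference $\alpha\triangle\beta$, of size $2k-2t$, with signal $x_{\alpha\triangle\beta}=x_\alpha x_\beta$: in the discrete model the multiplicative noise becomes $\rad(\delta)\cdot\rad(\delta)=\rad(\delta^2)$, while in the Gaussian model a single product $Y_\alpha Y_\beta=\delta^2 x_{\alpha\triangle\beta}+(\text{cross terms})+G_\alpha G_\beta$ is not Gaussian, so one instead \emph{sums} all observed pairs with a prescribed symmetric difference $\gamma$ and rescales by $(\#\text{pairs})^{-1/2}$, the sum becoming approximately $\N(0,1)$. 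By Prop.~\ref{prop:gauss_discr_equiv_gen} the two implementations are computationally interchangeable, so I would run the discrete version when $\eps=0$ (there $\delta$ is near-constant, input reuse is rare, and each combined equation is a single product) and the Gaussian version when $\eps=1$ (there $\delta$ is tiny and summing many weak products is how one regenerates clean Gaussian noise).

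Fix the target order $\kstar$. \emph{Even $\kstar$.} Take the intersection $t=k-\kstar/2$, an integer in $[1,k-1]$ exactly when $k>\kstar$. A first-moment count gives $\asymp m^2 n^{-t}$ combined equations: for $\eps=0$ this is $n^{\kstar/2}$, essentially one equation per i.i.d.\ uniform $\kstar$-set, hence density $\eps'=0$; for $\eps=1$ it is $n^{k+\kstar/2}$, i.e.\ $\asymp n^{k-\kstar/2}$ pairs per output support $\gamma$, and after summing/rescaling one gets one equation per $\kstar$-set, so $\eps'=1$. The noise is $\delta^2$ (discrete) or $\delta^2\,n^{(2k-\kstar)/4}$ (Gaussian, after the $\sqrt{\#\text{pairs}}$ rescaling); substituting $\delta=n^{-k\eps/4-\defi/2}$ one finds in both cases $\eps'=\eps$ and $\defi'=2\defi+o(1)$. \emph{Odd $\kstar$.} A single XOR-type combination only yields even support sizes, so when the source order $k$ is odd I would first use half of the equations to build, via intersection $s=(k-\kstar)/2$, a canonical order-$(k+\kstar)$ instance with noise $\delta^2$, then cross-combine the remaining order-$k$ equations with these (intersection $u=k$, i.e.\ containment) to reach order $\kstar$ with noise $\delta^3$; the counts still match ($n^{\kstar/2}$ equations for $\eps=0$, $n^{\kstar}$ for $\eps=1$) and $\defi'=3\defi+o(1)$. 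When $k$ has the opposite parity to $\kstar$, no sequence of combinations can ever create an odd-order equation, so an auxiliary parity-changing step is required—an extra stage through the densifying/sparsifying maps (which may target an odd intermediate order) or, in the Gaussian model, a contraction of the tensor against a fresh random vector (which lowers the order by one); this extra step costs a further loss, and is what shrinks the admissible range to $k\ge 2\kstar$ for canonical $\kxor$ and $k>3\kstar$ for Tensor PCA.

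In every case the construction is poly-time, keeps $n'=n$, outputs $\eps'=\eps$, and has $\defi'=c\,\defi+o(1)$ for a fixed integer $c$; so given any target $\defi'>0$ I set $\defi:=\defi'/c$, absorb the $o(1)$ via Prop.~\ref{prop:map_up_to_constant}, and conclude $\conj k\eps\Rightarrow\conj{\kstar}\eps$ by Prop.~\ref{prop:hardness_implication}. The final ``all sufficiently large $k$'' clause is then immediate, since for fixed $\kstar$ every case hypothesis holds once $k\ge 3\kstar$.

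The real work is not the bookkeeping above but showing the law of the constructed instance is within $o(1)$ total variation of the target model. In the discrete regime the combined equations are products $Y_\alpha Y_\beta$ whose supports must be jointly close to i.i.d.\ uniform and whose noises independent $\rad(\delta^2)$; the only coupling is between combined equations sharing an input, so one must run the dependence-avoiding selection, which is manageable here because with $m=n^{k/2}$ inputs each is reused $o(1)$ times in expectation, so collisions can be discarded at negligible cost (Poissonizing the sample count makes this exact). In the Gaussian regime the crux is that the family $\{(\#\text{pairs})^{-1/2}\sum_{\alpha\triangle\beta=\gamma}Y_\alpha Y_\beta\}_\gamma$ must be jointly close to an i.i.d.\ Gaussian tensor even though each input equation feeds $\approx n^{\kstar/2}$ of these sums; this requires a quantitative high-dimensional central limit theorem for the Gramian of the underlying sparse Gaussian matrix, and the condition for it to hold—roughly $(\text{reuse})^2\ll(\text{terms per coordinate})$, i.e.\ $n^{\kstar}\ll n^{k-\kstar/2}$—is precisely what forces $k>3\kstar/2$ in the even Tensor PCA case (and, compounded over the extra stage, $k>3\kstar$ in the odd one). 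Establishing this CLT with $o(1)$ total-variation error, together with concentration of all the pair counts, is the technical heart of the argument.
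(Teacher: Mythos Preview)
Your even-$k'$ cases are essentially right and match the paper: discrete equation combination at $\eps=0$ (Lemma~\ref{lem:2path_red}) and a Gram-matrix/CLT argument at $\eps=1$. One caveat: your heuristic $(\text{reuse})^2\ll(\text{terms per coordinate})$ lands on $k>3k'/2$, but the direct Wishart-type CLT the paper actually proves (Lemma~\ref{lem:wishart}) only gives $k>2k'$; the sharper $3k'/2$ threshold requires a separate argument via the Spiked Covariance connection (Lemma~\ref{lem:2path_red_ext_gauss}).

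The odd-$k'$ cases have genuine gaps. For odd $k$, odd $k'$ at $\eps=0$, your first step---combine up to order $k+k'$ at density $\eps'=0$---cannot work. Dependence-avoiding combination uses each input in at most one output pair, so the output has at most $m/2=n^{k/2}/2$ equations, far short of the $n^{(k+k')/2}$ needed at $\eps'=0$. Allowing reuse gives the right count but reintroduces noise correlation that does not average out (at $\eps=0$ each output is a single product, not a sum). The paper instead goes \emph{down} first: combine to order $k-1$ (valid since $k-1<k$), then feed orders $k-1$ and $k$ into the two-input combination (Lemma~\ref{lem:2path_red_ext_discr}) to hit odd $k'$.

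For even $k$ with odd $k'$ at $\eps=0$, and for all odd $k'$ at $\eps=1$, your proposed parity-changing devices do not work: contracting against a random vector makes the signal coefficient random and the noise non-i.i.d., and the densify/sparsify maps alter $\eps$, not parity. The paper's mechanism is different and is the key idea you are missing. One first builds, via equation combination, instances at \emph{all} even orders $2,4,\dots,2k'$, assembles them into a single $\kxor^{\FULL}$ instance of order $2k'$ (Prop.~\ref{prop:comp_equiv_full_collection}; the $\FULL$ variant allows repeated indices, which is essential), and then applies the change of variables $y_{(i_1,i_2)}=x_{i_1}x_{i_2}$ to halve the order to $k'$ (Lemma~\ref{lem:prelim_decrease_k}). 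The range constraints are exactly the admissibility of this pipeline: needing $2k'\leq k$ gives $k\geq 2k'$ at $\eps=0$, and needing $2k'<2k/3$ gives $k>3k'$ at $\eps=1$.
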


It is convenient to define a unifying conjecture for all $k\geq 3$ at specific density $\eps$. 

\begin{assump}[Computational Hardness of $\kxor$ with density $\eps$: $\conjeps \eps$] Fix $\eps \in \sqb{0,1}$. We consider the following conjectures for detection and recovery: 
    \begin{align*}
        \conjeps \eps: \conj k \eps \text{ holds for all }k\geq 3\,.
    \end{align*}
\end{assump}
\begin{wrapfigure}{r}{0.4\textwidth}
    \centering
    \begin{tikzpicture}[
  >=Latex,
  axis/.style={thick,-{Latex[length=3mm]}},
  vline/.style={dashed,gray!90,thick},
  dot/.style={circle,fill,inner sep=1.5pt},
  arc/.style={-{Latex[length=2.2mm]},thick},
  scale = 0.6
]

\def\kmin{2}
\def\kmax{10}
\def\xscale{5}      
\def\yscale{0.8}   
\def\epsmid{0.6}   

\def\dotsZero{7,3}            
\def\dotsMid{8,6,4,3}         
\def\dotsOne{10,4}            

\def\arrowsZero{7/3}          
\def\arrowsOne{10/4}          

\def\arcoffset{1.0}

\pgfmathsetmacro{\Xzero}{0.3*\xscale}
\pgfmathsetmacro{\Xone}{0.9*\xscale}
\pgfmathsetmacro{\Xmax}{1.4*\xscale}
\pgfmathsetmacro{\Ymax}{(\kmax-\kmin+1)*\yscale}

\draw[axis] (0,0) -- (\Xmax,0) node[below right=2pt] {$\eps$};
\draw[axis] (0,0) -- (0,\Ymax) node[left=2pt] {$k$};

\foreach \X/\label in {\Xzero/0, \Xone/1}{
  \draw[vline] (\X,0) -- (\X,\Ymax);
  \node[below] at (\X,0) {\label};
}

\fill[dot, purple!90!black] (\Xzero,{(4-\kmin)*\yscale}) circle[radius=0.1];
\fill[dot, purple!90!black] (\Xzero,{(5-\kmin)*\yscale}) circle[radius=0.1];
\fill[dot, purple!90!black] (\Xzero,{(6-\kmin)*\yscale}) circle[radius=0.1];
\fill[dot, purple!90!black] (\Xzero,{(7-\kmin)*\yscale}) circle[radius=0.1];
\fill[dot, purple!90!black] (\Xzero,{(3-\kmin)*\yscale}) circle[radius=0.1];
\fill[dot, blue!90!black, opacity=0.3] (\Xzero,{(6-\kmin)*\yscale}) circle[radius=0.1];
\fill[dot, blue!90!black, opacity=0.3] (\Xzero,{(5-\kmin)*\yscale}) circle[radius=0.1];
\fill[dot, blue!90!black, opacity=0.3] (\Xzero,{(3-\kmin)*\yscale}) circle[radius=0.1];


\fill[dot, purple!90!black] (\Xone,{(4-\kmin)*\yscale}) circle[radius=0.1];
\fill[dot, purple!90!black] (\Xone,{(7-\kmin)*\yscale}) circle[radius=0.1];
\fill[dot, blue!90!black, opacity=0.7] (\Xone,{(10-\kmin)*\yscale}) circle[radius=0.1];
\fill[dot, blue!90!black, opacity=0.7] (\Xone,{(3-\kmin)*\yscale}) circle[radius=0.1];

\tikzset{thinheads/.style={-{Latex[length=6pt,width=4pt]}}}

\draw[thinheads, >=Latex, thick, purple!90!black, bend left=60] (\Xzero,{(5-\kmin)*\yscale}) to (\Xzero,{(4-\kmin)*\yscale});
\draw[thinheads, >=Latex, thick, purple!90!black, bend left=60] (\Xzero,{(6-\kmin)*\yscale}) to (\Xzero,{(4-\kmin)*\yscale});
\draw[thinheads, >=Latex, thick, purple!90!black, bend left=60] (\Xzero,{(7-\kmin)*\yscale}) to (\Xzero,{(6-\kmin)*\yscale});
\draw[thinheads, >=Latex, thick, blue!90!black, bend right=60, opacity=0.7] (\Xzero,{(6-\kmin)*\yscale}) to (\Xzero,{(3-\kmin)*\yscale});
\draw[thinheads, >=Latex, thick, blue!90!black, bend right=60, opacity=0.7] (\Xzero,{(5-\kmin)*\yscale}) to (\Xzero,{(3-\kmin)*\yscale});


\draw[thinheads, >=Latex, thick, purple!90!black, bend left=60] (\Xone,{(7-\kmin)*\yscale}) to (\Xone,{(4-\kmin)*\yscale});
\draw[thinheads, >=Latex, thick, blue!90!black, bend right=60, opacity=0.7] (\Xone,{(10-\kmin)*\yscale}) to (\Xone,{(3-\kmin)*\yscale});

\foreach \kk in {\kmin,...,\kmax}{
  \draw[gray!40] (0,{(\kk-\kmin)*\yscale}) -- ++(-0.08,0);
  \node[left,gray!60] at (-0.12,{(\kk-\kmin)*\yscale}) {\scriptsize \kk};
}

\end{tikzpicture}
    \caption{Thm.~\ref{thm:decrease_k_intro}: blue arrows represent reductions to odd $k'$, red -- to even.}
    \label{fig:decrease_k_intro}
    \vspace{-5pt}
\end{wrapfigure}
In light of Thm.~\ref{thm:decrease_k_intro}, for $\eps \in \set{0,1}$, $\conjeps  \eps$ holds if $\conj k \eps$ holds for infinitely many $k$.

Our main result maps sparser to denser instances for a wide parameter range. This includes mapping $\kxor$ arbitrarily close to the canonical version (with any $\eps > 0$) to Tensor PCA.

\begin{theorem}[Densifying Reduction (Growing $\eps$), Corollary of Thm.~\ref{thm:sparse_dense_red} and Cor.~\ref{cor:tensor_pca_summary}]\label{thm:densifying_general}
    Fix a pair of parameters $0 < \eps < \eps' \leq 1$ and any $k' \geq 3$. Then there exists some $k \geq k'$ such that 
    $$\conj k \eps \Rightarrow \conj {k'} {\eps'}\,,$$
    in the following parameter regimes:
    \begin{enumerate}
            \item \textbf{Intermediate $\eps' \in \paren{0,1}$ (Thm.~\ref{thm:sparse_dense_red}):} any $\eps \in \paren{0,1/3}$ and $\eps' \in \paren{2\eps/(1-\eps), 1}$;
        \item \textbf{$\eps' = 1$ (Tensor PCA\footnote{\label{fn:shared}
        This result holds for the standard $\ktensorpca$ in Eq.~\eqref{eq:tensor_pca} ($\delta x^{\otimes k} + \noiseG$) and for $\kxor$ at density $\eps' = 1$, see \autoref{subsubsec:equiv_w_wor_replacement}.} Cor.~\ref{cor:tensor_pca_summary}):} any $\eps \in \paren{0,1}$.
    \end{enumerate}
    In particular, for any $\eps,\eps'$ as above, 
    $
    \conjeps \eps \Rightarrow \conjeps {\eps'}\,.
    $
\end{theorem}

\begin{figure}[h]
    \centering
    \captionsetup[subfigure]{labelformat=simple, labelsep=colon}
    \renewcommand{\thesubfigure}{Fig.~\thefigure(\alph{subfigure})}
    \begin{subfigure}[t]{0.48\textwidth}
        \centering
        \resizebox{\linewidth}{!}{
            \begin{tikzpicture}[
  >=Latex,
  axis/.style={thick,-{Latex[length=3mm]}},
  vline/.style={dashed,gray!90,thick},
  dot/.style={circle,fill,inner sep=1.5pt},
  arc/.style={-{Latex[length=2.2mm]},thick},
  scale = 0.6
]

\def\kmin{2}
\def\kmax{10}
\def\xscale{10}      
\def\yscale{0.55}   

\pgfmathsetmacro{\Xzero}{0.2*\xscale}
\pgfmathsetmacro{\Xone}{1.2*\xscale}
\pgfmathsetmacro{\Xmax}{1.4*\xscale}

\draw[axis] (0,0) -- (\Xmax,0) node[below right=2pt] {$\eps$};
\draw[black, line width=0.6pt, line cap=round]
  ({0.2*\xscale},0) ++(0,-1.5pt) -- ++(0,3pt);
\node[below,gray!60] at ({0.2*\xscale},0) {\scriptsize $0$};

\draw[black, line width=0.6pt, line cap=round]
  ({1.4*\xscale},0) ++(0,-1.5pt) -- ++(0,3pt);

\draw[black, line width=0.6pt, line cap=round]
  ({1.2*\xscale},0) ++(0,-1.5pt) -- ++(0,3pt);
\node[below,gray!60] at ({1.2*\xscale},0) {\scriptsize $1$};

\fill[dot, purple!90!black] ({(0.2+1/10)*\xscale},0) circle[radius=0.1];
\node[below,gray!60] at ({(0.2+1/10)*\xscale},0) {\scriptsize $\frac1{10}$};

\draw[purple!90!black, line width=0.6pt, line cap=round]
  ({(0.2+2/9)*\xscale},0) ++(0,-1.5pt) -- ++(0,3pt);
\node[below,gray!60] at ({(0.2+2/9)*\xscale},0) {\scriptsize $\frac29$};

\draw[purple!90!black, line width=1.5pt] ({(0.2+2/9)*\xscale},0) -- ({1.2*\xscale},0);

\fill[dot, blue!90!black] ({(0.2+1/20)*\xscale},0) circle[radius=0.1];
\fill[dot, blue!90!black] ({(0.2+3/10)*\xscale},0) circle[radius=0.1];
\fill[dot, blue!90!black] ({(0.2+1/2)*\xscale},0) circle[radius=0.1];

\draw[->, >=Latex, thick, purple!90!black, bend left=40] ({(0.2+1/10)*\xscale}, 0) to ({(0.2+11/18)*\xscale}, 0);

\draw[->, >=Latex, thick, blue!90!black, bend right=40] ({(0.2+1/20)*\xscale}, 0) to ({1.2*\xscale}, 0);
\draw[->, >=Latex, thick, blue!90!black, bend right=40] ({(0.2+3/10)*\xscale}, 0) to ({1.2*\xscale}, 0);
\draw[->, >=Latex, thick, blue!90!black, bend right=40] ({(0.2+1/2)*\xscale}, 0) to ({1.2*\xscale}, 0);

\end{tikzpicture}
        }
        \caption{Thm.~\ref{thm:densifying_general}: red arrow exemplifies reductions to sparsities $\eps' \in \paren{2\eps/(1-\eps), 1}$, blue -- to Tensor PCA ($\eps'=1$).}
        \label{fig:red_densify}
    \end{subfigure}
    \hfill
    \begin{subfigure}[t]{0.48\textwidth}
        \centering
        \resizebox{\linewidth}{!}{
            \begin{tikzpicture}[
  >=Latex,
  axis/.style={thick,-{Latex[length=3mm]}},
  vline/.style={dashed,gray!90,thick},
  dot/.style={circle,fill,inner sep=1.5pt},
  arc/.style={-{Latex[length=2.2mm]},thick},
  scale = 0.6
]

\def\kmin{2}
\def\kmax{10}
\def\xscale{10}      
\def\yscale{0.55}   

\pgfmathsetmacro{\Xzero}{0.2*\xscale}
\pgfmathsetmacro{\Xone}{1.2*\xscale}
\pgfmathsetmacro{\Xmax}{1.4*\xscale}

\draw[->, >=Latex, thick, white, bend right=40] ({(0.2+1/20)*\xscale}, 0) to ({1.2*\xscale}, 0);

\draw[axis] (0,0) -- (\Xmax,0) node[below right=2pt] {$\eps$};
\draw[black, line width=0.6pt, line cap=round]
  ({0.2*\xscale},0) ++(0,-1.5pt) -- ++(0,3pt);
\node[below,gray!60] at ({0.2*\xscale},0) {\scriptsize $0$};

\draw[black, line width=0.6pt, line cap=round]
  ({1.4*\xscale},0) ++(0,-1.5pt) -- ++(0,3pt);
\node[below,gray!60] at ({0.25*\xscale},0) {\scriptsize $\eps^{\star}$};

\draw[black, line width=0.6pt, line cap=round]
  ({1.2*\xscale},0) ++(0,-1.5pt) -- ++(0,3pt);
\node[below,gray!60] at ({1.2*\xscale},0) {\scriptsize $1$};


\draw[blue!90!black, line width=1.5pt] ({(0.2)*\xscale},0) -- ({0.25*\xscale},0);
\draw[purple!90!black, line width=1.5pt] ({(0.25)*\xscale},0) -- ({1.2*\xscale},0);


\draw[->, >=Latex, thick, blue!90!black, bend left=40] ({(0.225)*\xscale}, 0) to ({((0.25 + 1.2)/2)*\xscale}, 0);

\end{tikzpicture}
        }
        \caption{Cor.~\ref{cor:densifying_gen}: conjectured hardness in the blue region implies hardness for the red region.}
        \label{fig:red_densify_cor}
    \end{subfigure}
    
\end{figure}

\begin{corollary}[of Theorem~\ref{thm:densifying_general}]\label{cor:densifying_gen} For any $\eps^\star > 0$, if $\conjeps \eps$ holds for all $\eps \in (0, \eps^\star]$, then $\conjeps \eps$ holds for all $\eps \in (0, 1]$: 
$$
\set{\conjeps \eps}_{\eps \in (0, \eps^\star]} \Rightarrow \set{\conjeps \eps}_{\eps \in (0, 1]}\,.
$$
\end{corollary}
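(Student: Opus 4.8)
The plan is to derive this corollary directly from Theorem~\ref{thm:densifying_general}, using a single densifying step to reach each target density from a sufficiently small source density. Fix $\eps^\star > 0$ and suppose $\conjeps\eps$ holds for every $\eps \in (0, \eps^\star]$; let $\eps' \in (0,1]$ be an arbitrary target density. If $\eps' \le \eps^\star$ then $\conjeps{\eps'}$ holds by hypothesis and we are done, so assume $\eps' > \eps^\star$ (hence $\eps^\star < 1$). It then remains to exhibit, for this $\eps'$, an admissible source density for the densifying reduction of Theorem~\ref{thm:densifying_general} lying inside $(0,\eps^\star]$.

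First I would handle $\eps' \in (\eps^\star, 1)$ via part~(1) of Theorem~\ref{thm:densifying_general}. A one-line computation shows that the constraint $2\eps/(1-\eps) < \eps'$ is equivalent to $\eps < \eps'/(2+\eps')$, and that $\eps'/(2+\eps') < 1/3$ whenever $\eps' < 1$; hence any $\eps$ with $0 < \eps < \min\set{\eps^\star,\, \eps'/(2+\eps')}$ simultaneously satisfies $\eps \le \eps^\star$, $\eps \in (0,1/3)$, and $\eps' \in (2\eps/(1-\eps), 1)$ — and such $\eps$ exists precisely because $\eps^\star > 0$. For this $\eps$ the hypothesis supplies $\conjeps\eps$, and part~(1) of Theorem~\ref{thm:densifying_general} gives $\conjeps\eps \Rightarrow \conjeps{\eps'}$, so $\conjeps{\eps'}$ holds. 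For the remaining endpoint $\eps' = 1$, part~(1)'s target range is open at $1$, so I would instead invoke part~(2), which admits any source density $\eps \in (0,1)$: picking $\eps$ with $0 < \eps < \min\set{\eps^\star, 1}$ gives $\conjeps\eps$ by hypothesis and $\conjeps\eps \Rightarrow \conjeps 1$ by part~(2). (Alternatively, one may first reach some $\eps'' \in (\eps^\star,1)$ as above and then apply part~(2) starting from $\eps''$.)

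Since $\eps' \in (0,1]$ was arbitrary, this establishes $\set{\conjeps\eps}_{\eps \in (0,\eps^\star]} \Rightarrow \set{\conjeps\eps}_{\eps \in (0,1]}$. I do not anticipate a substantive obstacle: all the content resides in Theorem~\ref{thm:densifying_general}, and the argument is simply the observation that the source density in that theorem can be pushed arbitrarily close to $0$ while staying within $(0,\eps^\star]$, so the reachable set of targets already covers all of $(0,1]$ in one step and no iteration is needed. The only two points that merit a line of care are (i) the nonemptiness of the admissible set of source densities, which is exactly where the hypothesis $\eps^\star > 0$ enters, and (ii) that the endpoint $\eps' = 1$ must be reached through part~(2) rather than part~(1), because the set of targets reachable by part~(1) is open at $1$.
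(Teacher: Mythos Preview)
Your proof is correct and matches the intended derivation: the paper states this result as an immediate corollary of Theorem~\ref{thm:densifying_general} without a separate proof, and your argument—choosing a source density $\eps \in (0,\eps^\star]$ small enough that the target $\eps'$ lies in the admissible range of part~(1) (or invoking part~(2) when $\eps'=1$)—is exactly the one-step densification the paper has in mind (see also Fig.~\ref{fig:red_densify_cor}). Your verification that $\eps < \eps'/(2+\eps')$ simultaneously enforces $\eps \in (0,1/3)$ and $\eps' > 2\eps/(1-\eps)$, together with the separate handling of the endpoint $\eps'=1$, is precisely the care needed.
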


Theorem~\ref{thm:densifying_general} follows from Theorem~\ref{thm:specific_intro}, which gives explicit $k \leftrightarrow \eps$ tradeoffs of our densifying reductions.

\begin{figure}[h]
    \centering

    \captionsetup[subfigure]{labelformat=simple, labelsep=colon}
    \renewcommand{\thesubfigure}{Fig.~\thefigure(\alph{subfigure})}
    \begin{subfigure}[t]{0.57\textwidth}
        \centering
        \resizebox{\linewidth}{!}{
            \begin{tikzpicture}[
  >=Latex,
  axis/.style={thick,-{Latex[length=3mm]}},
  vline/.style={dashed,gray!90,thick},
  dot/.style={circle,fill,inner sep=1.5pt},
  arc/.style={-{Latex[length=2.2mm]},thick},
  scale = 0.6
]

\def\kmin{2}
\def\kmax{10}
\def\xscale{10}      
\def\yscale{0.55}   
\def\epsmid{0.6}   

\pgfmathsetmacro{\Xzero}{0.2*\xscale}
\pgfmathsetmacro{\Xone}{1.2*\xscale}
\pgfmathsetmacro{\Xmax}{1.4*\xscale}
\pgfmathsetmacro{\Ymax}{(\kmax-\kmin+1)*\yscale}

\pgfmathsetmacro{\XmidOne}{(0.2+1/20)*\xscale}
\pgfmathsetmacro{\XmidThree}{(0.2+1/2)*\xscale}
\pgfmathsetmacro{\XmidFour}{(0.2+5/40)*\xscale}
\pgfmathsetmacro{\XmidFive}{(0.2+1/3)*\xscale}

\draw[axis] (0,0) -- (\Xmax,0) node[below right=2pt] {$\eps$};
\draw[axis] (0,0) -- (0,\Ymax) node[left=2pt] {$k$};

\foreach \X/\label in {\Xzero/0, \XmidOne/{$\frac 1 {20}$}, \XmidThree/{$\frac 1 {2}$}, \XmidFour/{$\frac 1 {8}$}, \XmidFive/{$\frac 13$}, \Xone/1}{
  \draw[vline] (\X,0) -- (\X,\Ymax);
  \node[below] at (\X,0) {\scriptsize\label};
}

\fill[dot, purple!90!black] (\XmidFour,{(8-\kmin)*\yscale}) circle[radius=0.1];
\fill[dot, purple!90!black] (\XmidThree,{(4-\kmin)*\yscale}) circle[radius=0.1];

\fill[dot, purple!90!black] (\XmidOne,{(5-\kmin)*\yscale}) circle[radius=0.1];
\fill[dot, purple!90!black] (\XmidFour,{(4-\kmin)*\yscale}) circle[radius=0.1];

\fill[dot, blue!90!black] (\XmidFive,{(7-\kmin)*\yscale}) circle[radius=0.1];
\fill[dot, blue!90!black] (\Xone,{(4-\kmin)*\yscale}) circle[radius=0.1];
\fill[dot, blue!90!black] (\XmidThree,{(9-\kmin)*\yscale}) circle[radius=0.1];

\draw[->, >=Latex, thick, purple!90!black, bend left=40] (\XmidFour,{(8-\kmin)*\yscale}) to (\XmidThree,{(4-\kmin)*\yscale});
\draw[->, >=Latex, thick, purple!90!black, bend left=35] (\XmidOne,{(5-\kmin)*\yscale}) to (\XmidFour,{(4-\kmin)*\yscale});

\draw[->, >=Latex, thick, blue!90!black, bend left=35] (\XmidFive,{(7-\kmin)*\yscale}) to (\Xone,{(4-\kmin)*\yscale});
\draw[->, >=Latex, thick, blue!90!black, bend left=30] (\XmidThree,{(9-\kmin)*\yscale}) to (\Xone,{(4-\kmin)*\yscale});

\foreach \kk in {\kmin,...,\kmax}{
  \draw[gray!40] (0,{(\kk-\kmin)*\yscale}) -- ++(-0.08,0);
  \node[left,gray!60] at (-0.12,{(\kk-\kmin)*\yscale}) {\scriptsize \kk};
}

\end{tikzpicture}
        }
        \caption{Thm.~\ref{thm:specific_intro}: red arrow exemplifies reductions to densities $\eps' \in \paren{2\eps/(1-\eps), 1}$, blue -- to Tensor PCA ($\eps'=1$).}
        \label{fig:red_densify_explicit}
    \end{subfigure}
    \hfill
    \begin{subfigure}[t]{0.39\textwidth}
        \centering
        \resizebox{\linewidth}{!}{
            \begin{tikzpicture}[
  >=Latex,
  axis/.style={thick,-{Latex[length=3mm]}},
  vline/.style={dashed,gray!90,thick},
  dot/.style={circle,fill,inner sep=1.5pt},
  arc/.style={-{Latex[length=2.2mm]},thick},
  scale = 0.6
]

\def\kmin{2}
\def\kmax{10}
\def\xscale{6}      
\def\yscale{0.55}   
\def\epsmid{0.6}

\pgfmathsetmacro{\Xzero}{0.2*\xscale}
\pgfmathsetmacro{\Xone}{1.2*\xscale}
\pgfmathsetmacro{\Xmax}{1.4*\xscale}
\pgfmathsetmacro{\Ymax}{(\kmax-\kmin+1)*\yscale}

\pgfmathsetmacro{\XmidOne}{(0.2+1/20)*\xscale}
\pgfmathsetmacro{\XmidTwo}{(0.2+1/10)*\xscale}
\pgfmathsetmacro{\XmidThree}{(0.2+2/5)*\xscale}
\pgfmathsetmacro{\XmidFour}{(0.2+7/40)*\xscale}
\pgfmathsetmacro{\XmidFive}{(0.2+1/3)*\xscale}

\def\ShadeWeak{gray!5}         
\def\ShadeStrong{teal!65!blue}  
\def\ShadeOpacity{0.5}

\draw[axis] (0,{-\yscale}) -- (\Xmax,{-\yscale}) node[below right=2pt] {$\eps$};
\draw[axis] (0,{-\yscale}) -- (0,\Ymax) node[left=2pt] {$k$};

\foreach \X/\label in {\Xzero/0, \Xone/1}{
  \draw[vline] (\X,{-\yscale}) -- (\X,\Ymax);
  \node[below] at (\X,{-\yscale}) {\scriptsize\label};
}

\def\smshift{0.03}

\tikzset{thinheads/.style={-{Latex[length=3pt,width=2pt]}}}
\pgfmathtruncatemacro{\kstop}{\kmax-2}
\foreach \kk in {\kmin,...,\kstop}{
    \foreach \ee in {0,...,9}{
    \pgfmathsetmacro{\xx}{\Xzero + 0.1*\ee*\xscale}
    \pgfmathsetmacro{\yy}{\kk*\yscale}

    \ifodd\kk\relax
    
    \else
      \ifodd\ee\relax

      \else
        \draw[thinheads, purple!90!black] ({\xx+\smshift},{\yy-\smshift}) -- ({\xx+\smshift},{\yy-2*\yscale+\smshift});
        \draw[thinheads, purple!90!black] ({\xx+\smshift},{\yy-\smshift}) -- ({\xx + 2*0.1*\xscale-\smshift},{\yy- \smshift});
        \draw[thinheads, purple!90!black] ({\xx+\smshift},{\yy-\smshift}) -- ({\xx + 2*0.1*\xscale-\smshift},{\yy-2*\yscale+\smshift});
        \draw[thinheads, purple!90!black] ({\xx+\smshift},{\yy-\smshift}) -- ({\xx + 2*0.05*\xscale},{\yy-2*\yscale+\smshift});
      \fi
    \fi
}
}

\foreach \kk in {\kmin,...,\kmax}{
  \draw[gray!40] (0,{(\kk-\kmin)*\yscale}) -- ++(-0.08,0);
  \ifodd\kk\relax
    
\else
    \pgfmathtruncatemacro{\kkk}{(\kk+2)/2 + 1}
  \node[left,gray!60] at (-0.12,{(\kk-\kmin)*\yscale}) {\scriptsize {\kkk}};
  \fi
}

\end{tikzpicture}
        }
        \caption{Conj.~\ref{conj:hardness}: conjectured hardness implication order ``$\Rightarrow$" in the space of $\paren{k,\eps}$.}
        \label{fig:conj}
    \end{subfigure}
    
\end{figure}

\begin{theorem}\label{thm:specific_intro}
    For parameters $0\leq \eps \leq \eps' \leq 1$ and $k \geq k'$ we show 
    $$
    \conj k \eps \Rightarrow \conj {k'} {\eps'}
    $$
    for the following parameter pairs $\paren{k, \eps}$ and $\paren{k', \eps'}$:
    \begin{enumerate}
        \item \textbf{(Discrete Resolution, Lem.~\ref{lem:2path_red}+Lem.~\ref{lem:prelim_decrease_k})} any $k, k'$, $\eps\in[0, 1/3), \eps' \in [0,1)$, s.t.
        \begin{align*}
            \text{even }k': \quad 2\eps< \frac{k'}k <  1-\eps,\ \eps' = \frac{2k}{k'}\eps\qquad\text{and}\qquad\text{odd }k': \quad \eps < \frac{k'}k < \frac{1-\eps}{2},\ \eps' = \frac{k}{k'}\eps\,.
        \end{align*}
        \item \textbf{(Reduction to Tensor PCA\footref{fn:shared}, Cor.~\ref{cor:tensor_pca_summary})} any $k, k'$, $\eps \in (0,1)$, $\eps' = 1$, s.t. 
        $$
        k' < \begin{cases}
            \tau_{\eps} \cdot k, &k' \text{ even}\\
            \tau_{\eps}/2 \cdot k, &k' \text{ odd}\,,
        \end{cases} \qquad \text{where}\qquad \tau_{\eps} = \begin{cases}
    2\eps, &\eps \in \paren{0,1/3}\\
    1-\eps, &\eps \in [1/3,3/5)\\
    (1+\eps)/4 , &\eps \in \sqb{3/5, 1}\,.
\end{cases}
        $$
    \end{enumerate}
\end{theorem}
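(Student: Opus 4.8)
The plan is to derive Theorem~\ref{thm:specific_intro} by composing the reduction primitives built in \autoref{sec:resolution_red}, \autoref{sec:to_tensor_pca}, and \autoref{sec:decrease_k}, and reading off their action on the coordinates $(k,\eps,\defi)$ of Def.~\ref{def:kxor_and_ktens_reparam}. By Prop.~\ref{prop:hardness_implication}, each claimed implication $\conj k \eps \Rightarrow \conj {k'} {\eps'}$ follows once one shows that for every target deficiency $\defi'>0$ there are a source deficiency $\defi>0$ and a polynomial-time average-case reduction $\xorp k n \eps \defi \to \xorp {k'} {n'} {\eps'} {\defi'}$ with $n'=\poly(n)$; the $o(1)$ slack that the primitives leave in $(\eps',\defi')$ is harmless by Prop.~\ref{prop:map_up_to_constant} and Lem.~\ref{lem:red_implications_gen}. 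By Prop.~\ref{prop:gauss_discr_equiv_gen} I will take each intermediate model in whichever (discrete or Gaussian) representation the cited primitive is stated for.

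For Part~1 with even $k'$, one application of the discrete equation-combination reduction of Lem.~\ref{lem:2path_red} suffices: it multiplies pairs of input equations that share exactly $\ell:=k-k'/2$ coordinates and emits the order-$k'$ equation on their symmetric difference. The arithmetic is routine -- the output has $m'\asymp m^2 n^{-\ell}=n^{k\eps+k'/2}$ equations, equal to $n^{k'(1+\eps')/2}$ precisely when $\eps'=2k\eps/k'$, and its noise is $\delta^2$, so the deficiency transforms as $\defi\mapsto 2\defi$ and one takes $\defi=\defi'/2$. The displayed windows are exactly the feasibility constraints: $k'/k<1-\eps$ is the condition $n^{\ell}>m$, which forces the useful pairs to form essentially a matching and is what makes the outputs (approximately) i.i.d.\ -- the dependence-avoidance mechanism -- while $2\eps<k'/k$ is admissibility $\eps'<1$ (equivalently $m'<n^{k'}$), and the two can hold simultaneously only if $\eps<1/3$. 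For odd $k'$ I compose two primitives: first run Lem.~\ref{lem:2path_red} with $\ell=k-k'$ to land at the even order $2k'$ with density $2k\eps/(2k')=k\eps/k'$ (the window $\eps<k'/k<(1-\eps)/2$ is $n^{\ell}>m$ together with admissibility at order $2k'$), then apply the order-reduction primitive of Lem.~\ref{lem:prelim_decrease_k} to pass from order $2k'$ down to the odd order $k'$ at the unchanged density; since every step only increases the deficiency, a valid $\defi>0$ exists for any prescribed $\defi'>0$ and Prop.~\ref{prop:hardness_implication} applies, giving $\conj k \eps \Rightarrow \conj {k'} {k\eps/k'}$.

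Part~2 is a direct invocation of Cor.~\ref{cor:tensor_pca_summary}: iterating the densifying primitives until the density reaches $\eps'=1$ produces the reduction, the exponent $\tau_{\eps}$ records the best attainable ratio $k'/k$, and its three-piece definition reflects which primitive is binding on $(0,1/3)$, $[1/3,3/5)$, and $[3/5,1]$ respectively; the factor $\tfrac12$ in the odd branch of the bound on $k'$ arises, exactly as in Part~1, from first reaching the even order $2k'$ and then halving it via Lem.~\ref{lem:prelim_decrease_k}. By the equivalences recorded after Thm.~\ref{thm:densifying_general} and in Cor.~\ref{cor:decreasek} the statement holds verbatim for the canonical Tensor PCA model \eqref{eq:tensor_pca} as well as for $\ktens$ and $\kxor$ at density $\eps'=1$, and deficiency-monotonicity again lets Prop.~\ref{prop:hardness_implication} conclude the hardness implication.

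The genuinely hard inputs -- the total-variation correctness of the discrete dependence-avoiding combination of Lem.~\ref{lem:2path_red}, and of the Gaussian densification underlying Cor.~\ref{cor:tensor_pca_summary}, the latter resting on the new high-dimensional CLT for Gramians of sparse Gaussian matrices -- are established in the earlier sections and used here only as black boxes. At the level of Theorem~\ref{thm:specific_intro} the one thing that needs care is the composition bookkeeping: verifying that the parity constraints, the bound $\eps<1/3$, and the ratio windows on $k'/k$ remain jointly satisfiable when a combination step is chained with an order-reduction step, and that in each composition the deficiency moves in the safe direction so that some $\defi>0$ always maps to any prescribed $\defi'>0$. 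I expect this bookkeeping -- rather than any new conceptual difficulty -- to be the main obstacle specific to this theorem.
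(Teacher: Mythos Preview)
Your proposal is correct and follows the paper's route. One technical point to flag for Part~1 with odd $k'$: Lem.~\ref{lem:2path_red} outputs a \emph{standard} instance at order $2k'$, whereas Lem.~\ref{lem:prelim_decrease_k} requires a $\kxor^{\FULL}$ input, so the two primitives do not compose directly as you describe. The paper bridges this via Prop.~\ref{prop:comp_equiv_full_collection}: one first splits the source into $k'$ independent copies (Lem.~\ref{lem:instance_split}), runs $\TwoPath$ on each to hit every even order $2,4,\dots,2k'$ at the required densities $k\eps/s$, and then $\Compose$s these into a single $\kxork{2k'}^{\FULL}$ instance before $\ReduceK$ applies. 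This is the concrete content of the ``composition bookkeeping'' you anticipate, but it involves an additional primitive and a density-matching check, not just inequality verification. A minor wording issue in Part~2: Cor.~\ref{cor:tensor_pca_summary} is not an iteration but a single application of either $\DiscrEqComb$ or $\GaussEqComb$ (plus the $2k'\to k'$ halving for odd $k'$); the first two pieces of $\tau_\eps$ both come from $\DiscrEqComb$ with different binding constraints ($2\eps$ vs.\ $1-\eps$), and only the third piece uses $\GaussEqComb$.
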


Observe that the hardness implication of all our results is in the direction of decreasing $k$ and growing $\eps$. We formulate the following conjecture; additional open problems are discussed in \autoref{sec:discussion}.
\begin{conjecture}\label{conj:one_way_hardness}
    For all $0 \leq \eps \leq \eps'\leq 1$ and $k \geq k'$, 
    $
    \conj k \eps \Rightarrow \conj {k'} {\eps'}\,.
    $
\end{conjecture}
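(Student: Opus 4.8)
The plan is to derive Conjecture~\ref{conj:one_way_hardness} by composing two ``unit'' reductions, one for each of the two independent moves in the $(k,\eps)$-plane that the conjecture asks for: (i) \emph{decreasing} the tensor order at a fixed density $\eps$, and (ii) \emph{increasing} the density at a fixed order $k$. Given a source point $(k,\eps)$ and a target $(k',\eps')$ with $k\ge k'$ and $\eps\le\eps'$, one would first apply (ii) repeatedly to move $(k,\eps)\to(k,\eps')$ and then apply (i) repeatedly to move $(k,\eps')\to(k',\eps')$; chaining the hardness implications (Prop.~\ref{prop:hardness_implication}, applied transitively) then yields $\conj k \eps \Rightarrow \conj{k'}{\eps'}$. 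Both unit reductions should be instances of the equation-combination primitive of \autoref{sec:ideas} -- multiplying pairs of observed values -- analyzed either via the discrete dependence-avoiding procedure or, through the computational equivalence of Prop.~\ref{prop:gauss_discr_equiv_gen}, in the Gaussian formulation using the Gramian CLT, whichever is cleaner at the relevant parameters.

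For the density-increasing move (ii) at a fixed even $k$, the natural construction pairs input equations whose $k$-index sets overlap in exactly $k/2$ coordinates, so that the symmetric-difference support again has size $k$; multiplying the two $\pm1$ values sends $\delta\mapsto\delta^2$, and the number of such pairs scales like $m^2/n^{k/2}$, which (as one checks) sends $\eps\mapsto 2\eps$ while only doubling the SNR deficiency $\defi$, so the output is again a conjectured-hard instance. Subsampling lets one hit any $\eps'\in(\eps,2\eps]$, and iterating $O(\log(1/\eps))$ times reaches any $\eps'\le 1$. Two obstacles appear. First, \emph{parity}: for odd $k$ there is no overlap of size $k/2$, and overlaps of size $(k\pm1)/2$ drive the order to $k\mp1$; one would have to interleave these steps with move (i) or design a bespoke odd-$k$ combination (e.g., triples instead of pairs). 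Second, and far more serious, \emph{dependence control when the source is already dense}: once $\eps$ is large, the $m^2/n^{k/2}$ combined equations must reuse each source equation $\approx n^{k\eps/2}$ times, and one must prove that the resulting correlations are invisible in total variation to the relevant null and planted laws. This is precisely the regime that the present analyses avoid -- it is the source of the $\eps<1/3$ hypothesis in Thm.~\ref{thm:specific_intro} -- and I expect it to be the true heart of the conjecture, quite possibly requiring a reduction that is not built from pairwise products at all.

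For the order-decreasing move (i) at a fixed density, the target is a clean reduction $\conj k\eps\Rightarrow\conj{k-1}\eps$ (for both detection and recovery, via the recovery-transfer bookkeeping of \autoref{subsec:recovery_assumptions}) valid for \emph{every} $\eps\in[0,1]$, which would eliminate the parity restrictions and the constant-factor losses ($k>3k'/2$, $k\ge 2k'$, $k>3k'$, $\dots$) that currently appear in Thm.~\ref{thm:decrease_k_intro}. The mechanism is again equation combination arranged so that a shared block of indices cancels, but pushing it to a genuine decrement of $1$ at arbitrary density runs into the same dependence bookkeeping as move (ii), since producing $n^{(k-1)(\eps+1)/2}$ fresh order-$(k-1)$ equations out of $n^{k(\eps+1)/2}$ order-$k$ equations forces many overlapping pair choices. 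A more modest but still useful intermediate goal is to first close the parity and constant-factor gaps at the two endpoints $\eps\in\{0,1\}$ outright, and then let the interior inherit hardness by ``lifting'' from $\eps=0$ via move (ii) (as Cor.~\ref{cor:densifying_gen} already does on part of the range) and -- if a companion \emph{sparsifying} reduction decreasing $\eps$ could be built -- from $\eps=1$ as well, so that the two directions together cover all of $(0,1)$.

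In short, the proof would assemble from (a) a fixed-$k$ density-doubling reduction, (b) a fixed-$\eps$ unit order-decrement reduction, and (c) transitivity of $\Rightarrow$; the genuinely new content -- and the reason the statement is posed as a conjecture -- is making (a) and (b) hold at \emph{all} densities, which demands sharper control of the statistical dependencies introduced by the equation-combination primitive when equations must be reused polynomially many times than the current arguments provide.
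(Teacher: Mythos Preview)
This statement is a \emph{conjecture} in the paper, not a proved result; it is explicitly listed as Open Problem~1 in \autoref{sec:discussion}, and the paper's theorems establish $\conj k \eps \Rightarrow \conj{k'}{\eps'}$ only under the various restrictions on $(k,k',\eps,\eps')$ recorded in Thms.~\ref{thm:decrease_k_intro}--\ref{thm:specific_intro}. There is therefore no proof in the paper to compare against, and you correctly diagnose at the end that making your unit moves (a) and (b) work at all densities is exactly the missing content.

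One point is worth sharpening. Your move (ii) --- doubling $\eps$ at \emph{fixed} $k$ by overlapping in exactly $k/2$ indices --- is blocked by the paper's dependence-avoiding technique not only ``once $\eps$ is large'' but for \emph{every} $\eps>0$: the admissibility condition $k'\le k(1-\eps)$ in Lemma~\ref{lem:2path_red} forces a strict drop in order whenever $\eps>0$, so a fixed-$k$ density-doubling step is never available with current methods, even at the first iteration. This is why the paper's densifying reductions (Thm.~\ref{thm:sparse_dense_red}) do not keep $k$ fixed: they pass through an auxiliary order $K'=rk'$ via $\TwoPath$, adjust density with $\DenseRed/\SparseRed$, then divide the order back down with $\ReduceK$ --- and this pipeline still only reaches $\eps'>2\eps/(1-\eps)$ (or $\eps'=1$). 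So your clean decomposition ``first move in $\eps$ at fixed $k$, then move in $k$ at fixed $\eps'$'' already requires new ideas to get off the ground; the paper's partial progress instead necessarily mixes the two coordinates at every step.
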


\subsection{Results for Sparse $\kxor$ with $m \leq n^{k/2}$}\label{subsec:sparse_res}

Our resolution reduction in Lemma~\ref{lem:2path_red} applies to $\xor k n m \delta$ for all equation counts $m \in (1, n^k]$ (equiv., $m = n^{k(1+\eps)/2}$ for $\eps \in (-1,1]$). It maps $\xor k n m \delta$ to $\xor {k'} {n'} {m'} {\delta^2}$ for every even $k' \leq k(1-\eps)$, where
\begin{equation}\label{eq:resolution_param_tradeoff_sparse}
m' = \widetilde \Theta((n')^{k'(1+\eps')/2})\quad\text{with}\quad\eps' = 2k\eps/ k'\,.
\end{equation}
For example, combined with our order-reducing reduction (\autoref{subsubsec:reducing_k_by_a_factor}), this yields a reduction from $\kxork 7$ with $m = n^{3.4}$ to the classical setting of $\kxork 3$ with $m' = \Thetat(n^{1.4})$, with constant noise in both problems.
In contrast to the denser regime (\autoref{subsec:complexity_profiles}), where the computational threshold is well understood, the complexity profile for $m < n^{k/2}$ is less complete. Consequently, while the $(k,m) \to (k',m')$ tradeoff in \eqref{eq:resolution_param_tradeoff_sparse} does not match that of the best known polynomial-time algorithm in this regime -- which requires extremely low noise $\delta = 1 - n^{-\nu}$ \cite{chen2024algorithms} -- it still yields hardness implications and a new connection across different $\kxor$ orders. See \autoref{sec:discussion} for further discussion.

\begin{theorem}[Corollary of Lem.~\ref{lem:2path_red}]\label{thm:intro_resolution_basic}
    Let $k, k' \geq 2, \eps \in (-1, 1]$ be constants satisfying $k' \leq k(1-\eps)$ and $k'$ even, and let $\delta = \delta(n)$ be a parameter. If there is \textbf{no} polynomial-time algorithm solving detection (resp. recovery) for $\xor k n {m} \delta$ with $m = n^{k(1+\eps)/2}\,,$ then there is \textbf{no} polynomial-time algorithm solving detection (resp. recovery) for $\xor {k'} n {m'} {\delta'}$ with 
    $$
    m'= \Thetat(n^{k'\paren{1+ \frac{2k\eps}{k'}}/2})\quad\text{and}\quad \delta' = \delta^2\,.
    $$
\end{theorem}

\subsection{Results for $k$-sparse LWE}\label{subsubsec:lwe_results_intro}

In \autoref{subsec:resolution_lwe} we extend our reductions to the $k$-sparse $\lwe$ model ($\klwe$), which was first formalized in \cite{jain2024systematic}. While $\kxor$ can be viewed as noisy linear equations over $\mathbb{F}_2$, $\klwe$ consists of noisy linear equations over $\Z_q$ with $k$-sparse supports (for constant $k$). Let $\chi$ be a probability distribution over $\Z_q$.
In $\lweg k n m \chi$ (Def.~\ref{def:k_sparse_lwe}) one observes $m$ i.i.d. samples
\begin{equation}\label{eq:entry_Fq_intro}
    \Y = a_{1}x_{i_{1}}+a_{2}x_{i_{2}}+\dots + a_{k}x_{i_{k}} + e \,,
\end{equation}
where $x\in \Z_q^n$ is the secret, $\alpha = \set{i_{1},\dots,i_{k}}$ is a uniformly random index set, $e \sim \chi \in \Z_q$ is i.i.d. noise, and the coefficients $a_{s} \sim_{\mathrm{i.i.d.}} \unif\paren{\Z_q\setminus 0 }$. 

\begin{theorem}[Reductions for $\klwe$; Informal]\label{thm:lwe_informal}
    Let $\set{\chi_\theta}_{\theta\in \Theta}$ be a \emph{resolution-stable} noise family.\footnote{Informally, the distribution of a difference of random variables remains in the family, see Def.~\ref{def:resolution_stable}.} For a wide range of parameters, there is an average-case reduction for both detection and recovery 
    $$
    \text{from}\quad \lweg k n m {\chi_\theta}\quad \text{to}\quad \lweg {k'} {n'} {m'} {\chi_{\theta'}}
    $$
    for $n' = n(1-o(1))$ and $(k', m', \theta') = f(k, m, \theta)$ for explicit parameter maps $f$ stated in \autoref{subsec:resolution_lwe}. The secret in the output instance is the restriction $x' = x_{\sqb{1:n'}}$.

    In particular, the following noise families are resolution-stable (Def.~\ref{def:resolution_stable}):
    \begin{itemize}
    \item \textbf{Uniform $\mathbb{Z}_q$ Noise:} $e \sim \chi_{\delta}$ if $e = \begin{cases}
        0, &\text{w. prob. }\delta\\
        \unif\paren{\Z_q }, &\text{w. prob. }1-\delta\,.
    \end{cases}$
    \item \textbf{Discrete Gaussian Noise:} 
    $
    \chi_s(e) \propto \sum_{t\in \Z} e^{-\pi (e + tq)^2/s^2}\,.
    $
    \item \textbf{Bounded $\mathbb{Z}_q$ Noise:} $e \sim \chi_{l}$,
    where $\chi_l$ is a given measure on $\set{-l, l}$ for a parameter $l \leq q$.
\end{itemize}
\end{theorem}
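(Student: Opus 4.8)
The plan is to lift the discrete $\kxor$ reduction machinery to $\Z_q$: the equation-combination primitive now forms $\Z_q$-linear combinations of two samples rather than $\mathbb{F}_2$-sums, and the genuinely new ingredients are (i) arranging that the resulting coefficient vector is again i.i.d.\ uniform over $\Z_q\setminus 0$ and (ii) identifying the law of the resulting noise. Concretely, given two samples $(\alpha,y_1)$ and $(\beta,y_2)$ of $\klwe$ over $\Z_q$ with noise $\chi_\theta$ that happen to share exactly $r$ indices, I would first run a matching/rejection step that keeps only those pairs whose coefficient vectors \emph{agree} on the $r$ shared coordinates; this happens with probability $(q-1)^{-r}$, so (for $q=\poly(n)$) it costs only a $\poly(q)$ oversampling factor that is absorbed into the explicit parameter map $f$. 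For a matched pair the plain difference $y_1-y_2$ cancels the shared coordinates and yields a sample on the symmetric-difference support $\alpha\triangle\beta$, of size $k'=2k-2r$, whose coefficients on the two sides are $\{a_s\}$ and $\{-b_s\}$ -- still i.i.d.\ uniform over $\Z_q\setminus 0$ and independent of the support, of the secret, and of the noise -- and whose noise is exactly $e_1-e_2$. Combination-stability (Def.~\ref{def:resolution_stable}) says precisely that the law of $e_1-e_2$ is again some $\chi_{\theta'}$, so the combined object is an honest $\klwe$ sample with secret $x$, support a uniform $k'$-set (see below), and noise $\chi_{\theta'}$.

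\textbf{From the primitive to the theorem.} With this in hand I would mirror the binary development step for step. First, permutation symmetry of $[n]$ makes $\alpha\triangle\beta$ uniform over $(2k-2r)$-sets once we condition on the overlap $r$, so the support law of each output sample is correct. Second, the null maps to the null: in $\Z_q$ a difference of two independent uniform values is uniform and independent, so combining two null samples produces an output null sample, giving the detection reduction (cf.\ Remark~\ref{rmk:null}). Third, the combination never touches coordinates outside $\alpha\cup\beta$, and where the reduction must discard a sublinear set of variables it restricts the secret to $x'=x_{[1:n']}$ with $n'=n(1-o(1))$, so a recovery algorithm for the output instance recovers a $(1-o(1))$-fraction of $x$ and the weak/strong/exact recovery notions of \autoref{subsec:recovery_assumptions} all transfer. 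Fourth, I would chain the primitive to realize the stated $(k,m,\theta)\to(k',m',\theta')$ tradeoffs: a single combination with overlap $r$ takes $k$ to $2k-2r$ (and changes the effective density $\eps$), odd target $k'$ is handled by the same odd-order workaround as in the $\kxor$ case, and densification is obtained by outputting more samples than are consumed. This last move reuses each input sample in several output samples, so the output samples are not literally independent; I would resolve this exactly as in the binary case by invoking the dependence-avoiding combination procedure ($\DiscrEqComb$, Lem.~\ref{lem:2path_red}), which concerns only the combinatorics of which input pairs are combined and is indifferent to the ring, and which brings the output law to within $o(1)$ total variation of the i.i.d.\ target. Hardness and algorithm transfer then follow from Lem.~\ref{lem:red_implications_gen} and Prop.~\ref{prop:hardness_implication} verbatim.

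\textbf{The three noise families.} Finally I would verify combination-stability for each listed family. For uniform $\Z_q$ noise $\chi_\delta$ (mass $\delta$ on $0$, otherwise uniform), conditioning on whether each summand is ``structured'' or ``uniform'' shows that $e_1-e_2$ is $0$ with probability $\delta^2$ and uniform otherwise, so the family is closed with $\theta'=\theta^2$. For the discrete Gaussian $\chi_s$, a difference of two independent width-$s$ samples is a discrete Gaussian of width $\sqrt 2\, s$ up to negligible total variation, by the standard Gaussian smoothing estimate (valid in the usual range $\sqrt{\log n}\ll s\ll q$). For the bounded family, a difference enlarges the support bound by a constant factor; since a constant-depth reduction performs only $O(1)$ combinations, the bound remains $O(l)\ll q$, and one simply carries this constant blow-up through the parameter map (reading the bounded family as symmetric measures supported within the corresponding interval).

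\textbf{Main obstacle.} The one conceptual difficulty is exactly the one already present in the binary case -- making the $m'$ output samples approximately i.i.d.\ when the reduction is densifying and hence reuses inputs -- which I would dispatch with the existing dependence-avoiding procedure. The new, $\Z_q$-specific bookkeeping is comparatively routine: checking that the coefficient-matching step leaves the unshared coefficients uniform over $\Z_q\setminus 0$ and independent of the support, secret, and noise, and that the combination-stability of each example family persists through the finitely many combinations the reduction performs.
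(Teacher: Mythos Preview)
Your approach is largely correct and mirrors the paper's strategy: extend discrete equation combination to $\Z_q$ by subtracting pairs of equations whose coefficients agree on the overlap, so that cancellation is exact and the remaining coefficients stay i.i.d.\ uniform on $\Z_q\setminus\{0\}$. One framing difference: rather than a rejection step, the paper (Alg.~\ref{alg:two_path_lwe}) groups equations by their full cancellation vector $\beta=c_j|_{[\kappa n+1:n]}\in\Z_q^{(1-\kappa)n}$---support \emph{and} coefficient values---and subtracts within each group. This is the direct $\Z_q$ lift of the dedicated-cancellation-coordinates trick from $\kxor$, and your $(q-1)^{-r}$ rejection cost is exactly the growth in the number of candidate $\beta$'s (the paper phrases the tradeoffs via $n_{\mathrm{eff}}=n(q-1)$), so the parameter accounting agrees. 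Your verification of the three noise families is also fine.

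There are two gaps. First, you read combination-stability as only the closure property ``$e_1-e_2\sim\chi_{\theta'}$'', but Def.~\ref{def:resolution_stable} has a second clause: an aggregation primitive $\mathcal{A}_N$ that maps $N$ noisy copies of the same value to one copy with noise $\chi_{g(\theta,N)}$. Two of the four parameter regimes in Lem.~\ref{lem:2path_red_lwe} are ``dense'' (many combination pairs hit each output coefficient vector $c'$), and there the reduction must aggregate to boost the signal---this is where $g$ enters, and without it you only reach the two sparse bullets. Second, your claim that odd target $k'$ follows from ``the same odd-order workaround as in the $\kxor$ case'' does not go through: that workaround passes through the $\kxor^{\FULL}$ model and relies on $x_i^2=1$ to collapse repeated indices, which fails over $\Z_q$ for $q>2$. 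The paper's LWE reduction accordingly treats only even $k'$.
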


Our reductions prove a new connection between different parameter values and, combined with a simple collision-based algorithm for detection (\autoref{subsec:collision_detection}), yield new algorithms that outperform it.
However, the complexity profile of $\klwe$ remains largely underexplored regarding its dependence on $k$, number of samples, and noise. See \autoref{sec:discussion} for $\klwe$ open problems and \autoref{subsec:resolution_lwe} for our results.

\section{Main Reduction Ideas}\label{sec:ideas}

We obtain our hardness implications (Thms~\ref{thm:intro_resolution_basic}, \ref{thm:decrease_k_intro}, \ref{thm:densifying_general}, \ref{thm:specific_intro}) via \emph{average-case reductions}: efficient algorithms transforming $\xor k n m \delta$ to $\xor {k'} {n'} {m'} {\delta'}$ that preserve the planted structure and distributions, up to $o(1)$ error in total variation. Since we study computational phenomena at polynomial scale, instances with equation numbers $m_1, m_2$ satisfying 
$
\lim_{n\to \infty} \log_n m_1 = \lim_{n\to \infty} \log_n m_2
$
are interchangeable. Accordingly, we parametrize the number of equations $m = n^{k(1+\eps)/2}$, with density $\eps \in (-1, 1]$ viewed as constant (Def.~\ref{def:density}). We say that a reduction maps $\eps$ to $\eps'$ if the output has 
$$
m' = \Thetat\b( (n')^{k'(1+\eps')/2} \b) \quad\text{equations}\,.
$$

Recall that in $\xor k n m \delta$ with a secret vector $x \in \set{\pm 1}^n$, there are $m$ samples $\set{\paren{\alpha_j, \Y_j}}_{j\in\sqb{m}}$ with independent index sets $\alpha_j \sim \unif\b[\binomset n k\b]$ sampled with replacement and
\begin{equation*}
    \Y_j = 
    x_{\alpha_j}\cdot \noise_{j}\,, \qquad \noise_j \sim_{i.i.d.} \rad(\delta)\,.
\end{equation*}
Here we write $x_{\alpha} = x_{i_1}\cdots x_{i_k}$ for an index set $\alpha = \set{i_1,\dots,i_k} \in \binomset n k$ and define $\rad(\delta)$ as the $\pm 1$ random variable with $\E\sqb{\rad\paren{\delta}} = \delta$. In the dense regime $\eps \geq 0$, the computational threshold is 
\begin{equation}
\qquad \qquad\quad m\delta^2 \asymp n^{k/2} \quad \text{or equivalently,}\quad \delta \asymp n^{-k\eps/4}\,.\tag{\eqref{eq:tradeoff}}
\end{equation}
Again working at polynomial scale, when $\eps \geq 0$, we encode distance to this threshold via the \emph{SNR deficiency} parameter $\defi$:
\begin{equation}\label{eq:defi_def}
  \qquad\qquad\qquad\defi \coloneq \log_n \frac{n^{-k\eps/2}}{\delta^2} = \log_n \frac{n^{k/2}}{m \delta^2} \,,\tag{\text{SNR deficiency}}
\end{equation}  
which we take to be constant. Efficiently solvable instances have $\defi < 0$ and conversely no efficient algorithms are known for $\defi > 0$. Accordingly, for $\eps \geq 0$, we write $\xorp k n \eps \defi$ to denote $\xor k n m \delta$ with 
$$
m = n^{k(1+\eps)/2}\quad\text{and}\quad \delta = n^{-k \eps/4 - \defi/2}\,.
$$
With this parametrization, a reduction (formally defined in \autoref{sec:avg_case_defs}) from $\xorp k n \eps \defi$ to $\xorp {k'} {n'} {\eps'} {\defi'}$ produces a $\xor {k'} {n'} {m'} {\delta'}$ instance with 
$$
m' = \Thetat\b((n')^{k'(\eps'+1)/2}\b)\quad\text{and}\quad \delta' = \Thetat\b((n')^{-k'\eps'/4 - \defi'/2}\b)\,.
$$

\subsection{Resolution Identity, Na\"ive Reduction, and Challenges}\label{subsec:resolution_id}

Observe that for any two distinct samples $\paren{\alpha_s, \Y_s}$ and $\paren{\alpha_t, \Y_t}$ of $\xor k n m \delta$, the product
\begin{equation}\label{eq:resolution_identity}
    \Y_{s} \Y_{t} = 
    x_{\alpha_s} x_{\alpha_t} \cdot \noise_{s}\noise_{t} = x_{\alpha_s \triangle \alpha_t} \cdot {\paren{\noise_{s}\noise_{t}}} \sim x_{\alpha_s \triangle \alpha_t} \cdot \rad(\delta^2)\,,
\end{equation}
where $\alpha_s\triangle\alpha_t$ denotes the symmetric difference. Let $k'\coloneq |\alpha_s \triangle \alpha_t|$; then $\paren{\alpha_s \triangle \alpha_t, \Y_{s} \Y_{t}}$ is distributed as a $\kxork {k'}$ sample with signal level $\delta' = \delta^2$. We call the operation of multiplying two $\kxor$ entries to obtain another $\kxork {k'}$ entry \emph{resolution} ($\EqComb$).\footnote{In logic and SAT solvers similar resolution procedures have a long history  \cite{robinson1965machine,baumgartner2000taming}; see \autoref{subsec:resolution_lit}.}

Eq.~\eqref{eq:resolution_identity} suggests building a new $\kxork {k'}$ instance by taking products $\Y_s\Y_t$ over all available pairs $(s,t)$ with overlap size $|\alpha_s\triangle\alpha_t| = \kstar$. There are two qualitatively different regimes, depending on how many pairs $(s,t)$ map to the same output index set $\gamma = \alpha_s\triangle\alpha_t \in\binomset n \kstar$. It turns out that if $2k\eps \leq k'$, then there are typically $O(1)$ equations created for each index set $\gamma$: a natural reduction is to simply return this collection of equations.
\vspace{-1mm}
\begin{enumerate}
    \item 
    \textbf{Sparse case -- few pairs $(s,t)$:} for all $s,t \in \sqb{m}$ satisfying $|\alpha_s\triangle\alpha_t| = \kstar$, include entry $\paren{\alpha_s\triangle\alpha_t, \Y_s\Y_t}$ into the output instance $\calZ$: 
    \begin{equation}\label{eq:sparse_naiive_ideas}
        \calZ = \set{ \paren{\alpha_s\triangle\alpha_t, \Y_s\Y_t}:\, s,t \in \sqb{m} \text{ and } |\alpha_s\triangle\alpha_t| = \kstar}\tag{Sparse $\EqComb$}\,.
    \end{equation}
\end{enumerate}
\vspace{-1mm}
However, if $2k\eps > k'$ (which only happens in the dense $\eps\geq 0$ regime), the individual equations that are produced are too noisy: each one has bias only $\delta' = \delta^2 \approx n^{-k\eps / 2} \ll n^{-k'/4}$ (for $\defi\approx 0$). On the other hand, in this regime we expect $\approx n^{k-k'/2} \cdot (m/n^k)^2 = n^{k\eps-k'/2}$ created equations\footnote{Since there are $\approx n^{k-k'/2}$ candidate pairs for each $\gamma \in \binomset n {k'}$ and probability to observe one is $\approx (m/n^k)^2$.} for each index set $\gamma$, so we can \emph{aggregate} these to boost the signal strength.
\vspace{-1mm}
\begin{enumerate}
    
    \item[2.] \textbf{Dense case -- many pairs $(s,t)$:} for each $\gamma \in \binomset n \kstar$ collect $\Y_s\Y_t$ with $\alpha_s\triangle\alpha_t = \gamma$ into $\Zt_{\gamma}^\mathrm{obs} = \set{\Y_{s} \Y_{t}: s,t \in [m]\text{ and }\alpha_s \triangle \alpha_t = \gamma}$ and aggregate them via a majority vote, outputting a collection of equations 
    \begin{equation}\label{eq:dense_naiive_ideas}
    \calZ = \B\{\b(\gamma, \Zt_\gamma \coloneq \maj (\Zt_{\gamma}^\mathrm{obs} )\b):\, \gamma \in \binomset n \kstar \text{ and }|\Zt_{\gamma}^{obs}| \neq 0\B\}\,.\tag{Dense $\EqComb$}
    \end{equation}
    The aggregation in \eqref{eq:dense_naiive_ideas} boosts the signal level in $\Zt_\gamma$ equations to $\delta' \approx \delta^2\cdot\sqrt{|\Zt_{\gamma}^\mathrm{obs}|}$, which turns out to be precisely the signal level needed to trace the computational threshold. 
\end{enumerate}
\vspace{-1mm}

In both cases, the \emph{marginal distribution} of individual equations in $\calZ$ as well as their total number matches that of a $\xor {k'} n {m'} {\delta'}$ instance with the same secret vector $x \in\set{\pm1}^n$, where 
$$
m' = \Thetat(n^{k'(1+\eps')/2})\quad\text{and}\quad \eps' = \min\big\{1, 2k\eps/k'\big\}\,.
$$
For $\eps < 0$, $\delta' = \delta^2$, which is the target parameter map of Thm.~\ref{thm:intro_resolution_basic}. For $\eps \geq 0$, with the $\delta = n^{-k\eps/4 - \defi/2}$ parametrization, 
$$
\delta' = \Thetat(n^{-k'\eps'/4 - \defi'/2})\quad\text{and}\quad \defi' = 2\defi\,.
$$
This change in parameters $(\eps,\defi) \to (\eps',\defi')$ traces the computational threshold \eqref{eq:tradeoff}, mapping to an arbitrarily small $\defi'> 0$ from an appropriate input deficiency $\defi=\defi'/2$. 

However, several obstacles prevent \eqref{eq:sparse_naiive_ideas} and \eqref{eq:dense_naiive_ideas} from being valid reductions: reusing the same input equations introduces \emph{noise dependence} across output equations, and (in both regimes) the resulting equation index sets do not match the target support distribution. Moreover, for any fixed input density $\eps$, the na\"ive resolution map only reaches a finite set of attainable target densities $\eps'$, which is far weaker than the continuum of implications in Thm.~\ref{thm:densifying_general}.

We make resolution into a valid reduction using two techniques: \emph{discrete resolution} ($\DiscrEqComb$) and \emph{Gaussian resolution} ($\GaussEqComb$). In $\DiscrEqComb$ (\autoref{subsec:sparse_resolution}), we avoid entry reuse by design, which ensures noise and index set independence across output samples. 
In $\GaussEqComb$ (\autoref{subsec:dense_resolution}), we introduce a \emph{Gaussian $\kxor$ analog}, $\ktens$, which enables a new dense resolution analysis. Working in  $\ktens$, we replace the majority vote in \eqref{eq:dense_naiive_ideas} by a normalized average of products $\Y_s\Y_t$. Although this introduces noise dependence, the entire aggregation step admits a clean matrix/tensor representation (a Gram-type object built from a sparse Gaussian matrix). This viewpoint lets us control the dependence via a new high-dimensional CLT, yielding an output instance that is close (in total variation) to one with i.i.d. entries.

A crucial ingredient is the intermediate model $\kxor^\FULL$ (see 
\autoref{subsec:full_ideas} below) to which we map from any regular instance. It is needed for the order-reduction subroutine used in the integer-approximation step that removes the discretization in attainable $\eps'$ 
(\autoref{subsubsec:reducing_k_by_a_factor}), and at density $\eps=1$ it provides a connection to standard Tensor PCA.

\subsection{Discrete Resolution: Avoiding Dependence by Design}\label{subsec:sparse_resolution}

In $\DiscrEqComb$ (full details in \autoref{sec:discrete_resolution}) we instantiate both \eqref{eq:sparse_naiive_ideas} and \eqref{eq:dense_naiive_ideas} starting from discrete $\kxor$ samples. Here we sketch some of the challenges and their solutions.

\subsubsection{Distributional Challenges in Discrete Resolution}\label{subsubsec:discrete_res_dist_challenge}

Two issues arise immediately: dependence created by reusing noisy input entries, and a mismatch between the output support distribution and the i.i.d. uniform support of the target model.
    
\textbf{Challenge 1: Noise dependence from entry reuse.}
A single noisy $\Y_s$ can participate in many products $\Y_s\Y_t$ (across different $\gamma = \alpha_s\triangle\alpha_t$), which introduces entry noise dependence in $\calZ$.

\textbf{Challenge 2: Index-set clustering in the sparse regime and support restriction.}
    \begin{align*}
\b\{\overbrace{i_1,\dots,i_{k'/2},\underbrace{\vio{i_{k'/2+1}\dots,i_k}}_{\vio{\kappa}}}^{\alpha_s}\b\} \triangle \b\{\overbrace{\underbrace{\vio{i_{k'/2+1}\dots,i_k}}_{\vio{\kappa}},i_{k+1},\dots,i_{k+k'/2}}^{\alpha_t}\b\} = \b\{\overbrace{i_1,\dots,i_{k'/2},i_{k+1},\dots,i_{k+k'/2}}^{\gamma}\b\}
    \end{align*}
For a pair $\alpha_s, \alpha_t \in \binom{\sqb{n}}{k}$ with $\alpha_s\triangle\alpha_t = \gamma \in \binom{\sqb{n}}{k'}$, let $\kappa = \alpha_s \cap \alpha_t \in \binom{\sqb{n}}{k-k'/2}$.
    For fixed $\kappa$, denote 
    $$ \mathcal{I}_{\kappa} = \set{j \in \sqb{m}: \kappa \subseteq \alpha_j}\,.$$ 
    Then, for most $s, t \in \mathcal{I}_{\kappa}$, we add $\paren{\alpha_s\triangle\alpha_t, \Y_s \Y_t}$ to the output $\calZ$.
    In \eqref{eq:sparse_naiive_ideas} the locations of the samples in $\calZ$ thus cluster within each $\kappa$-block, unlike the i.i.d. uniform index sets of the target model. Moreover, each produced index set $\gamma = \alpha_s\triangle \alpha_t$ is constrained to be disjoint from its originating $\kappa$.
    
\textbf{Idea: Avoiding entry reuse by design.}
For $\kappa \in \binom{\sqb{n}}{k-k'/2}$ and $\mathcal{I}_{\kappa} = \set{j \in \sqb{m}: \kappa \subseteq \alpha_j}$, all pairs $s, t \in \mathcal{I}_{\kappa}$ with $|\alpha_s \triangle \alpha_t|=k'$ yield candidate samples $\paren{\alpha_s\triangle\alpha_t, \Y_s \Y_t}$. 
If $k' \leq k (1-\eps)$, then
\begin{equation}\label{eq:sparse_resolution_regime}
\E\b| \mathcal{I}_{\kappa} \b| \approx \paren{\text{num. of }\alpha:\, \kappa\subseteq \alpha} \cdot \Pr\sqb{\alpha\text{ is observed}} \asymp n^{k'/2}\cdot \frac m {n^k} = n^{k'/2}\cdot n^{k(\eps-1)/2} = O\paren{1}\,,
\end{equation}
so for any fixed $\kappa$ w.h.p. there are at most $O(1)$ candidate products $\Y_{s}\Y_{t}$ ($s, t, \in \mathcal{I}_\kappa$) to consider.

In the regime of \eqref{eq:sparse_resolution_regime}, our \emph{dependence-avoiding} $\DiscrEqComb$ reduction (\autoref{sec:discrete_resolution}) simply selects \emph{one} disjoint pair of elements $s, t \in \mathcal{I}_{\kappa}$ at random and discards the rest.\footnote{The issue of the same observation $\Y_s$ belonging to $\mathcal I_\kappa$ for multiple $\kappa$ is fixed in \autoref{sec:discrete_resolution}.} This removes noise dependence (Challenge 1) by construction, since no observation is reused. Additionally, this eliminates the index-set dependence from the $\kappa$-block clustering (Challenge 2). What remains is the structural support restriction $\gamma\cap\kappa=\emptyset$: full index-set independence is obtained by splitting $[n]$ into dedicated “cancellation indices” (used only to form the $\kappa$ sets) and the remaining indices, see \autoref{sec:discrete_resolution}. Achieving a continuum of target $\eps'$ is a more challenging problem, discussed next.

\subsubsection{Discrete Resolution: Overcoming the Discretization in $\eps'$}\label{subsec:sparse_resolution_discrete}

\textbf{Challenge: Discrete parameter map.}
The algorithm in \autoref{subsubsec:discrete_res_dist_challenge} gives an average-case reduction 
\begin{equation}\label{eq:discrete_set}
\text{from}\qquad\kxor\text{ at density }\eps\qquad\text{to}\qquad \kxork {k'}\text{ at density }\eps' = \min\set{1, \frac{2k\eps}{k'}}\,,
\end{equation}
valid for all even $k' \leq k(1-\eps)$ (Lem.~\ref{lem:2path_red}). Thus, starting at a fixed $\eps$, the reduction cannot reach an arbitrary target density $\eps'\in \b( \frac{2\eps}{1-\eps}, 1\b]$ and order $k'$: hardness implications are obtained only for the pairs $(k', \eps')$ as above. In contrast, Thm.~\ref{thm:densifying_general} fills these gaps and shows for all $\eps > 0$, \footnote{The same argument applies verbatim for $\eps < 0$. However, it incurs an $n^{\vareps}$ loss, for arbitrarily small $\vareps$, in the signal level $\delta$. This is negligible in the dense regime, where the relevant signal levels are already inverse polynomial in $n$, but may be non-negligible in the sparse regime.}
$$
\conjeps \eps \Rightarrow \conjeps {\eps'}\,,\quad\text{for all }\eps' \in \B( \frac{2\eps}{1-\eps}, 1\B]\,.
$$

\textbf{Idea 1: Na\"ive reduction to adjust $\eps'$.} A trivial post-processing reduction (Lem.~\ref{lem:m_adjust}) can change parameter $\eps'$ by either (1) inserting pure noise $\rad(0)$ entries into the instance or (2) removing some of the observed entries at random. This achieves the transformation 
\begin{equation}\label{eq:density_adjust}
\xorp {k'} n {\eps_1} {2\defi} \to \xorp {k'} n {\eps_2} {\defi'}, \quad\text{where}\quad \defi' = 2\defi + {k'|\eps_1-\eps_2|/2}\,,
\end{equation}
where recall that $\xorp k n \eps \defi$ has $m = n^{k(1+\eps)/2}$ equations and $\delta = n^{-k\eps/4-\defi/2}$.
Na\"ively composing this with the resolution step gives deficiency $\defi' = 2\defi + k' \cdot \Delta\eps/2$, where $\Delta\eps$ is the necessary density adjustment. The discrete set of attainable $\eps'=\min\set{1, \frac{2k\eps}{k'}}$ has spacing at least $\frac {4\eps} {k-2} = \frac{2k\eps}{k-2} - \frac{2k\eps}{k}$, yielding $\defi'-2\defi \geq 2k\eps \cdot \frac \eps {k-2} > 2\eps^2$. 
Thus, for a fixed input density $\eps>0$, this fails to cover the entire regime of interest $\defi' > 0$.

Crucially, this lower bound $\defi' > 2\eps^2$ is independent of $k$. Hence, with this analysis, even the stronger $\conjeps \eps$ (i.e., $\conj k \eps$ for all $k\geq 3$) does \textbf{not} imply $\conjeps {\eps'}$ for all $\eps' \in \b( \frac{2\eps}{1-\eps}, 1\b]$. 

\textbf{Idea 2: Choosing an appropriate input order $k$.} In Thm.~\ref{thm:densifying_general}, for any fixed $\eps \in (0,1/3)$, $k'$, and $\eps' \in \b( \frac{2\eps}{1-\eps}, 1\b]$, we construct a reduction $$\text{from}\quad\xorp k n \eps \defi\quad\text{to}\quad\xorp {k'} {n'} {\eps'} {3\defi}\,,\quad n' = n^{\Theta(1)}\,,$$ (with the discrete resolution of Lem.~\ref{lem:2path_red} as subroutine) for a carefully chosen $k = k\paren{k',\eps,\eps',\defi}$.\footnote{One can achieve output deficiency $\defi' = C\defi$ for any $C > 2$.} 

Concretely, we show in Claim~\ref{claim:conv} a Diophantine approximation (i.e., rational approximation with constraints) statement, proved via the theory of continued fractions: for any (arbitrarily small) $c \in (0, 2\defi)$, and for any sufficiently small $\xi=\xi(c)$, there exists 
a $k$ and 
$r\in\mathbb{Z}$ satisfying 
$$
\frac{2k\eps}{rk'} \in \sqb{\eps' \pm \xi} \quad \text{and} \quad k \leq c \xi^{-1}\,.
$$
With this choice of $k$, the resolution reduction to instance of order $K' \coloneq rk'$ gets\footnote{$K' \leq k(1-\eps)$ follows from the $\eps' > \frac{2\eps}{1-\eps}$ condition and sufficiently large (constant) choice of $k$.} to within $\xi$ of $\eps'$ and the post-processing in \eqref{eq:density_adjust} allows us to move to $\eps'$ with a cost in deficiency of 
$$
K' \Delta \eps / 2 \leq (rk')\cdot \xi/2 \leq k\cdot  \xi/2 \leq c/2 \leq \defi\,,
$$ 
where we used $c \leq 2 \defi$. As can be seen from the last display, a weaker result bounding $k\leq c \xi^{-1.01}$ would not be sufficient; conversely, it turns out that a stronger bound $k\leq \xi^{-0.99}$ is impossible.
Note that mapping to a multiple $K' = rk'$ of the desired target $k'$ is necessary; directly mapping to $k'$
results in the discrete set of possible target densities $2kt/k'$ in \eqref{eq:discrete_set}.

The required final step is to reduce the order by a factor of $r$ (Lem.~\ref{lem:prelim_decrease_k}), yielding an instance of order $k'$ with the desired parameters, establishing hardness implications for the entire $\defi' > 0$ range. This step, however, is not immediate -- we describe the ideas and challenges in the following \autoref{subsubsec:reducing_k_by_a_factor} and develop the necessary machinery in \autoref{subsec:full_ideas}.

\subsubsection{Decreasing Order by a Factor}\label{subsubsec:reducing_k_by_a_factor}

As discussed in \autoref{subsec:sparse_resolution_discrete}, the final ingredient of discrete resolution is an order-reduction step: for an integer factor $r$, we want to map a $K'$-order instance $\xorp {K'} n {\tilde\eps}{\tilde\defi}$ with $K' = rk'$ to an instance of the target order $k'$ at the same density $\tilde\eps$. We first describe a natural change-of-variables idea and then explain the technical machinery needed to make it distributionally correct. 

\textbf{Idea: Change-of-variables.} This idea is related to coefficient-sparsifying reductions for noisy linear equations \cite{bangachev2024near} and the $k'=2$ case already appears as a subroutine in the refutation and recovery algorithms of \cite{feige2006witnesses,allen2015refute,barak2016noisy,guruswami2023efficient}, which did not require mapping to the precise target distribution as we do. 

Given an input $\xorp {K'} n {\tilde\eps}{\tilde\defi}$ for $K' = rk'$, the idea is to group the $K'$ indices of every equation into $k'$ blocks of size $r$. For each $r$-tuple $\beta = \paren{i_1,\dots,i_r} \in\sqb{n}^r$, define a new variable $y_{\beta} = x_{i_1}\dots x_{i_r}$, so $y \in \set{\pm1}^{n^r}$. Then an input sample $\paren{\alpha, \Y}$ with a set $\alpha = \set{i_1,\dots,i_{K'}} \in \binomset {n}{K}$ (in any fixed ordering) satisfies
$$
\Y = x_{i_1}\dots x_{i_{K'}} \cdot \noise = \B(\prod_{j=1}^{k'} y_{\paren{i_{(j-1)r+1},\dots,i_{jr}}}  \B)\cdot \noise\,,
$$
i.e., is a noisy observation of a $y^{\otimes k'}$ entry. To restore the uniform prior of the planted signal, we sample an independent random sign vector $w \sim \unif\sqb{\set{\pm 1}^{n^r}}$ and change the entry signal to $y\odot w$ by returning a collection of $\b(\tilde \alpha, \tilde \Y\b)$ for
$$\tilde\alpha = \set{ \paren{i_1,\dots,i_r}, \dots, \paren{i_{r(k'-1)+1,\dots,rk'}}}\quad\text{and}\quad \tilde\Y = \Y \cdot w_{\tilde\alpha}\,.$$ 

\textbf{Challenge: Wrong index-set distribution.} Marginally, each output pair $\b(\tilde \alpha, \tilde \Y\b)$ has the correct label/noise structure for $\xorp {k'} {n^r} {\tilde\eps} {\tilde\defi / r}$. The problem is the distribution of the supports $\tilde \alpha$: the input index sets $\alpha$ do not allow index repetitions, so $r$-tuples in $\tilde \alpha$ are forced to be disjoint. This mismatch is inherent: allowing overlaps between the $r$-tuples would correspond to an input equation with repeated indices, which the regular $\kxor$ model forbids.

\textbf{Solution: Work in a model that allows repetitions.}
We therefore pass to the $\FULL$ variant of $\kxor$, denoted $\kxor^\FULL$, which samples index \emph{multisets} with repetitions allowed, independently from the natural measure (\autoref{subsec:kxor_variants}, \autoref{subsec:repeated_indexes}). In $\kxor^\FULL$ , the change-of-variables mapping becomes distributionally correct, yielding the order-reduction Lemma~\ref{lem:prelim_decrease_k}:
$$
\text{from}\quad \xorpfull {K'} n {\tilde\eps} {\tilde\defi} \quad\text{to}\quad \xorpfull {k'} {n^r} {\tilde\eps} {\tilde\defi/r}\quad\text{for } K' = rk'\,. 
$$
To use this inside $\DiscrEqComb$, we must therefore construct an instance of $\xorpfull {K'} n {\tilde\eps} {\tilde\defi}$ (which includes samples with repeated indices) starting from the regular $\kxor$. This is achieved by a separate “composition” step: in \autoref{subsubsec:full_asymm_kxor_red}, Prop.~\ref{prop:comp_equiv_full_collection} shows how to assemble a $\kxor^\FULL$ instance of order $K'$ from a collection of independent low-order instances: 
$$
    \B\{ \kxork {(K'-2s)}_{\eps_s, \tilde\defi} (n) \B\}_{0\leq s \leq \frac{K'-1}{2}} \quad \to \quad \kxork{K'}_{\tilde\eps, \tilde\defi}^{ \mathsf{FULL}}(n)\,, \qquad \text{where } \eps_s = \min\set{ \frac{K'\tilde\eps}{K'-2s} , 1}\,.
$$
Finally, this collection of low-order instances can be constructed in our setting. Recall that discrete resolution maps an input $\xorp k n \eps \defi$ to order $k^\star$ with density 
$$
\eps^\star = \min\set{1, \frac{2k\eps} {k^\star}}
$$
for all even $k^\star \leq k(1-\eps)$. Setting $k^\star = K'-2s$ we match the $\eps_s$ above. Using initial ``instance splitting"(\autoref{subsubsec:cloning_splitting_overview}), we run resolution independently for each $s$ to obtain the required family $\B\{ \kxork {(K'-2s)}_{\eps_s, \tilde\defi} (n) \B\}_{0\leq s \leq \frac{K'-1}{2}}$. Prop.~\ref{prop:comp_equiv_full_collection} then assembles these into a single $\kxork{K}_{\tilde\eps, \tilde\defi}^{ \mathsf{FULL}}(n)$ instance, after which Lemma~\ref{lem:prelim_decrease_k} decreases the order by the factor $r$. 

We overview the $\kxor^\FULL$ model in the next section, both to explain Prop.~\ref{prop:comp_equiv_full_collection} and to formalize its connection to standard Tensor PCA.

\subsection{$\kxor^\FULL$: Repeated Indices and Reductions to Tensor PCA}\label{subsec:full_ideas}

We introduced the $\kxor^\FULL$ model in \autoref{subsubsec:reducing_k_by_a_factor} as an intermediate instance needed for the order-reduction subroutine of $\DiscrEqComb$. Another reason to introduce $\kxor^\FULL$ is that, at density $\eps=1$, it is computationally equivalent to the standard Tensor PCA model in Eq.~\eqref{eq:tensor_pca}. This equivalence is what enables our reductions to Tensor PCA starting from a regular $\kxor$ instance (Thm.~\ref{thm:densifying_general}, \ref{thm:specific_intro}).

\textbf{$\kxor^\FULL$ and connection to Tensor PCA.} $\xorpfull k n \eps \defi$ (see \autoref{subsec:kxor_variants}, \autoref{subsec:repeated_indexes}) is a variant of $\kxor$ that allows index repetitions -- it consists of $m$ samples $(\alpha, \Y)$ for independent index $k$-multisets $\alpha \in \set{n}^k$ sampled from a natural measure in Def.~\ref{def:multiset} and $\Y \sim x_{\alpha} \cdot \rad(\delta)$. We show in Corollary~\ref{cor:tensor_pca_wr_equiv} the computational equivalence of the two models:
\begin{equation}\label{eq:equiv_tensor_pca_full}
\xorpfull k n {\eps=1} \defi \quad\Leftrightarrow\quad \ktensorpca_\defi (n)\,,
\end{equation}
where recall that a $\ktensorpca_\defi (n)$ instance consists of $\delta x^{ \otimes k} + \noiseG$ with $\noiseG$ a symmetric tensor of i.i.d. $\N(0,1)$ entries. Computational equivalence means there exist average-case reductions both ways preserving signal parameter $\delta$ up to $\poly\log n$ factors: they use our Gaussianization/Discretization subroutines (\autoref{subsubsec:equiv_noise_model}) and reductions changing sampling of $\alpha$ to with or without replacement (\autoref{subsubsec:equiv_w_wor_replacement}). 

As we will see next, all our reductions that output $\xorp{k}{n}{\eps=1}{\defi}$ can be modified to output $\xorpfull{k}{n}{\eps=1}{\defi}$; composing with \eqref{eq:equiv_tensor_pca_full} then yields reductions to $\ktensorpca_\defi(n)$.

\textbf{$\kxor^\FULL$ as a collection of low order instances.} In Prop.~\ref{prop:comp_equiv_full_collection} we show that $\kxor^\FULL$ is computationally equivalent to a collection of independent low-order instances:
\begin{equation}\label{eq:ideas_equiv_decompose}
\kxor_{\eps, \defi}^{ \mathsf{FULL}}(n) \quad \Leftrightarrow \quad \B\{ \kxork {(k-2s)}_{\eps_s, \defi} (n) \B\}_{0\leq s \leq \frac{k-1}{2}}\,, \qquad \text{where } \eps_s = \min\set{ \frac{k\eps}{k-2s} , 1}\,.
\end{equation}
Indeed, consider those samples $(\alpha, \Y)$ of $\kxor_{\eps, \defi}^{ \mathsf{FULL}}(n)$ for which the index multiset $\alpha$ can be expressed as 
$$
\alpha = \beta \cup \set{i_1,i_1,\dots,i_s,i_s} \text{ for } \beta \in \binomset n {k-2s} \text{ and } i_1,\dots,i_s \in \sqb{n}\,,
$$
i.e. $\beta$ collects the indices of \emph{odd} multiplicity in $\alpha$. Then, 
$$
\Y = x_{\alpha} \cdot \rad(\delta)\eqdist x_{\beta} \cdot \rad(\delta)
$$
is distributed as a lower order $\kxork {(k-2s)}$ entry. Such classification (with additional aggregation step) decomposes the $\kxor^\FULL$ instance into the desired collection and the reverse reduction (with a cloning step, \autoref{subsubsec:cloning}) composes the instances into one $\kxor^\FULL$.

\textbf{Reductions to Tensor PCA.} Similarly to discrete resolution in \autoref{subsubsec:reducing_k_by_a_factor}, our reductions from $\xorp k n \eps \defi$ to $\xorp {k'} n {\eps=1} {\defi'}$  (Thm.~\ref{thm:specific_intro}) work for all even $k' \leq \tau k$, where $\tau = \tau(\eps)$ is explicit. To leverage Eq.~\eqref{eq:equiv_tensor_pca_full} and \eqref{eq:ideas_equiv_decompose}, we first apply our splitting reduction (\autoref{subsubsec:cloning_splitting_overview}) to obtain $k'/2$ independent copies of $\xorp k n \eps \defi$. We then run the densifying reduction on these copies to obtain the collection of lower-order instances at orders $k',k'-2,\dots,2$ and densities $\eps_s = \min\set{\frac{k'}{k'-2s}, 1} = 1$, assemble them into a single $\xorpfull{k'}{n}{\eps=1}{\defi'}$ instance using \eqref{eq:ideas_equiv_decompose}, and finally map to $\ktensorpca_{\defi'}(n)$ via~\eqref{eq:equiv_tensor_pca_full}.

Finally, passing through $\kxor^\FULL$ is what lets us reach odd target orders $k'$: the order-reduction step by a factor of $2$ applies in the $\FULL$ model (as in \autoref{subsubsec:reducing_k_by_a_factor}).

\subsection{Gaussian Resolution: Gram Matrix}\label{subsec:dense_resolution}

In Gaussian resolution ($\GaussEqComb$) (see \autoref{sec:gauss_resolution}) we permit entry reuse in \eqref{eq:dense_naiive_ideas}, which lets us map to $\eps'=1$ for a wider parameter range than $\DiscrEqComb$ (\autoref{subsec:sparse_resolution}).

\textbf{Challenge: Strong noise dependence.}  When the same noisy $\Y_s$ is reused across many output $\gamma$, this introduces noise dependence in the output. In the regime of $\GaussEqComb$ each $\Y_s$ is expected to contribute to $\gg 1$ output entries and this reuse is crucial: it preserves the \emph{signal strength} in the output instance -- without it, the reduction maps to a much noisier $\kxor$ instance, yielding weak (or vacuous) complexity implications. 

\textbf{Idea: Gaussian noise and tensor/matrix viewpoint.} To carry out this reduction, we introduce a Gaussian analog of $\kxor$, called $\ktens$, where we replace the multiplicative Rademacher noise $\noise_{j}$ in $\kxor$ equations with additive Gaussian $\noiseG_{j} \sim \N(0,1)$. The key in the Gaussian world is to switch to a \emph{tensor/matrix viewpoint} where the dependence created by reuse can be written explicitly and then controlled via a new high-dimensional CLT. 

Formally, in $\gtens k n m \delta$ (equiv., $\gtensp k n \eps \defi$ with $m = n^{k(1+\eps)/2}, \delta = n^{-k\eps/4-\defi/2}$) with a secret vector $x\in\set{\pm1}^n$ we observe $\calY = \set{\paren{\alpha_j, \Y_{j}}}_{j\in \sqb{m}}$ for $\set{\alpha_j}_{j\in\sqb{m}}$ sampled independently with replacement, and
\begin{equation}\label{eq:def_ktens_ideas}
\Y_{j} = \delta x_{\alpha_j} + \noiseG_{j}\quad\in\R\,,\tag{$\ktens$}
\end{equation}  
where $\noiseG_{j} \sim_\mathrm{i.i.d.} \N(0,1)$. It turns out (see \autoref{subsubsec:equiv_noise_model}) that the discrete $\kxor$ and Gaussian $\ktens$ models are \emph{computationally equivalent} and hence \emph{interchangeable for our purposes}:

\begin{proposition}[Computational Equivalence of $\kxor$ and $\ktens$]\label{prop:gauss_discr_equiv_gen}

    For any $k, n, m, \delta$, models $\xor k n m \delta$ (Eq.~\eqref{eq:def_kxor}) and $\gtens k n m {\delta}$ (Eq.~\eqref{eq:def_ktens_ideas}) are \emph{computationally equivalent}. In particular, there is a poly-time reduction algorithm that preserves signal $x\in\set{\pm1}^n$ from $\xor k n m \delta$ to $\gtens k n m {\delta'}$ for some $\delta' = \Thetat(\delta)$ and vice versa.\footnote{$\Thetat$ hides $\poly\log n$ factors.}

    As a consequence, poly-time detection/recovery algorithms are transferred between the problems.
\end{proposition}

The product of two $\ktens$ entries $\Y_{s}, \Y_{t}$ for distinct $s, t\in \sqb{m}$ with $\kstar = |\alpha_s \triangle \alpha_t|$ has a mean matching the $\ktensk \kstar$ entry with signal level $\delta' = \delta^2$ and index set $\alpha_s \triangle \alpha_t$:
\begin{equation*}
    \E \sqb{\Y_{s} \Y_{t}} = 
    \delta^2 \cdot x_{\alpha_s} x_{\alpha_t} = \delta^2 \cdot x_{\alpha_s \triangle \alpha_t}\,.
\end{equation*}
However, unlike in the discrete model (\autoref{subsec:sparse_resolution}), the product $\Y_s\Y_t$ is not (even approximately) a Gaussian-noise observation of $x_{\alpha_s \triangle \alpha_t}$:
\begin{equation}\label{eq:gauss_entry_product}
\Y_{s} \Y_{t} = 
    \delta^2 \cdot x_{\alpha_s \triangle \alpha_t} + {\delta x_{\alpha_s} \noiseG_t + \delta x_{\alpha_t} \noiseG_s} + {\noiseG_s \noiseG_t}\,.
\end{equation}
The key is the aggregation in \eqref{eq:dense_naiive_ideas}: in the Gaussian case we take a renormalized (variance-1) average of the products $\Y_s\Y_t$, which marginally yields the correct Gaussian law by CLT. 

For this, we adopt a \emph{tensor view} of the input instance $\set{\paren{\alpha_j, \Y_j}}_{j \in \sqb{m}}$, arranging the values $\Y_j$ into corresponding tensor entries and ignoring duplicates. Let $\Yb \in \paren{\R^n}^{\otimes k}$ be a symmetric (up to index permutations) tensor, where entries with repeated indices are set to $0$. Symmetry allows us to index the remaining distinct-index entries by $k$-subsets of $\sqb{n}$; we set for each $\alpha\in\binomset n k$:
\begin{equation*}
    \Yb_{\alpha} = \begin{cases}
        0, &\text{ if }\set{j \in \sqb{m}: \alpha_j = \alpha} = \emptyset\\
        \Y_{j^\star}, &\text{ otherwise, where } j^\star=\min\set{j \in \sqb{m}: \alpha_j = \alpha}\,.
    \end{cases}
\end{equation*}
Consider the matrix flattening $$Y = \mat {k-k'/2} {k'/2} \Yb \in \R^{\binom n {k-k'/2}\times \binom n {k'/2}}\,,$$which is a matrix whose rows are indexed by elements of $\binomset n {k-k'/2}$ and columns by $\binomset n {k'/2}$: $Y_{\kappa,\gamma} = \Yb_{\kappa\cup\gamma}$ for disjoint $\kappa\times\gamma \in \binomset n {k-k'/2} \times \binomset n {k'/2}$ (and $0$ otherwise). 
Each entry $\b(Y^\top Y\b)_{\gamma_1, \gamma_2}$ for disjoint $\gamma_1,\gamma_2 \in \binomset n {k'/2}$ can then be expressed as  
\begin{equation}\label{eq:gauss_resolution_ideas}
    \b(Y^\top Y\b)_{\gamma_1, \gamma_2} = \sum_{\kappa \in \binomset n {k-k'/2}} \Yb_{\gamma_1\cup \kappa} \Yb_{\gamma_2 \cup \kappa} \ \ \approx \sum_{ \substack{s, t\in \sqb{m}:\\ \alpha_s\triangle\alpha_t = \gamma_1\cup \gamma_2}} \Y_s \Y_t\,,
\end{equation}
where the last approximation ignores the potential index set duplicates in the input. Then the tensorization (i.e., the inverse procedure to $\mat {k'/2} {k'/2} \paren{\cdot}$) 
$$
\Zb = \tenso{\b(Y^\top Y\b)}\quad \in \paren{\R^{n}}^{\otimes k'}\,.
$$
has
$
\E\sqb{\Zb} \propto x^{\otimest k'}\,,
$\footnote{$x^{\otimest k'}$ denotes entries of $x^{\otimes k'}$ with distinct indices.} and a natural Gaussian analog of \eqref{eq:dense_naiive_ideas} is to output renormalized $\Zb$. We first use Gaussian Cloning, Poissonization, and index subsampling procedures (see \autoref{sec:gauss_resolution}) to reduce $Y^\top Y$ to a product of two independent non-centered, Bernoulli-masked Gaussian matrices. 

The remaining challenge is the Wishart-like noise dependence and cross-terms (as is evident in Eq.~\eqref{eq:gauss_entry_product}), which we address with our main technical Lemma~\ref{lem:wishart}. Lemma~\ref{lem:wishart} extends prior results of \cite{brennan2021finetti} for Wishart matrices ($X^\top X$ where $X$ is a matrix of i.i.d. Gaussians) to our setting of non-centered, masked Gaussian matrices $X$. This shows that our reduction based on \eqref{eq:gauss_resolution_ideas} outputs $\Zb$ close in total variation distance to a tensor with planted signal $\delta' x^{\otimest k'}$ and additive i.i.d. Gaussian noise as required by the Tensor PCA model. Additional postprocessing (\autoref{subsec:w_replacement}) can then map $\Zb$ back to the equation-based view $\gtensp {k'} {n'} {\eps'=1} {\defi'}$.

In the special case when the input density is $\eps=1$, through our tensor/matrix viewpoint we identify a connection to Spiked Covariance and Spiked Wigner Models and use recent reductions between them \cite{bresler2025computational} to obtain a wider range of output $k'$.

\subsection{Reductions that Decrease Tensor Order $k$}\label{subsec:ideas_decrease_k}

We now overview the main ideas behind the order-reducing reductions of Thm.~\ref{thm:decrease_k_intro}. For the two endpoints $\eps \in\set{0,1}$, the discrete \autoref{subsec:sparse_resolution} (for $\eps=0$) and Gaussian \autoref{subsec:dense_resolution} (for $\eps=1$) resolution techniques give a reduction from 
$$
\xorp k n \eps \defi \quad\text{to}\quad \xorp {k'} {n'} \eps \defi
$$
for any \emph{even} $k' < \begin{cases} k, & \eps = 0\\
2k/3, & \eps = 1\end{cases}$. While the result for $t=0$ is a special case of $\DiscrEqComb$ described in \autoref{subsec:sparse_resolution} (proved in \autoref{sec:discrete_resolution}), for $\eps = 1$, we additionally improve upon the general $\GaussEqComb$ results described in \autoref{subsec:dense_resolution} (proved in \autoref{sec:gauss_resolution}) to get a wider range of attainable $k'$ (see \autoref{subsec:gauss_res_eps_1}). We achieve this by generalizing a recent reduction from Spiked Covariance to Spiked Wigner \cite{bresler2025computational} and apply it to the tensor/matrix $\ktens$ viewpoint described in \autoref{subsec:dense_resolution}.

To reach \emph{odd} $k'$, in the case of even $k$ or $\eps = 1$ we stack this with the $k' \to k'/2$ for $\kxor^\FULL$ instance (see \autoref{subsubsec:reducing_k_by_a_factor}, \autoref{subsec:full_ideas}), obtaining the $k' < \begin{cases} k/2, & \eps = 0\\
k/3, & \eps = 1\end{cases}$ range claimed in Thm.~\ref{thm:decrease_k_intro}.

For $k$ odd, $\eps=0$, we obtain a wider range by doing a two-step reduction\footnote{With additional cloning subroutines that ensure instance independence, \autoref{subsec:cloning_splitting}.}: first map $k$ to $\tilde k$ as above and then use the two instances of orders $k$ and $\tilde k$ to get one of order $k'$. For this, we generalize the resolution reduction to combine two $\kxor$ inputs, which may be of different order $k$, but share the same secret $x \in \set{\pm1}^n$ (Lem.~\ref{lem:2path_red_ext_discr}): 
$$
\xorp {k_1} n \eps {\defi_1} \text{ ``+" } \xorp {k_2} n \eps {\defi_2} \rightarrow \xorp {k'} n {\eps} {\defi_1+\defi_2}\,,
$$
where $k' = k_1+k_2-2a$ and $a = |\alpha \cap \beta|$ is the number of canceled indices in the resolution identity. 

\section{Discussion and Open Problems}\label{sec:discussion}

\paragraph{Missing hardness implications in the dense $\eps \geq 0$ regime.}
As noted in \autoref{subsec:complexity_profiles}, in the regime where the number of $\kxor$ equations $m = n^{k(1+\eps)/2}$ is at least $n^{k/2}$, the believed $\kxor$ hardness at density $\eps \geq 0$ is summarized in the $\conj k \eps$ conjecture (\ref{conj:hardness}). Our reductions establish
\begin{equation}\label{eq:conj_implication}
\qquad\qquad\qquad\qquad\quad\conj k \eps \Rightarrow \conj {k'} {\eps'}\tag{Hardness Implication} 
\end{equation} 
for many pairs with $\eps \leq \eps'$ and $k \geq k'$. We conjecture (Conj.~\ref{conj:one_way_hardness}) that the current parameter restrictions reflect limitations of our techniques, rather than a fundamental barrier. 
\begin{openproblem}[Conj.~\ref{conj:one_way_hardness}]
    Establish \eqref{eq:conj_implication} for all $0 \leq \eps \leq \eps'\leq 1$ and $k \geq k'$.
\end{openproblem}
A particularly important special case is $\eps = 0$ to any $\eps' > 0$. Our densifying reductions map any fixed $\eps > 0$ to larger $\eps' $ (including $\eps'=1$, i.e., Tensor PCA), but do not currently succeed for $\eps = 0$ (canonical $\kxor$) to denser regimes.
\begin{openproblem}[Reductions from Canonical $\kxor$] Establish \eqref{eq:conj_implication} for some $k, k'\geq 3$ and $\eps = 0, \eps' \in (0,1]$.
\end{openproblem}
All our reductions increase density $\eps$: they map instances with fewer entries to denser ones (with appropriately smaller entry-wise SNR). This leaves open the natural opposite direction -- e.g., whether hardness of canonical Tensor PCA implies hardness of canonical $\kxor$.
\begin{openproblem}[Sparsifying Reduction] Establish \eqref{eq:conj_implication} for some $k, k'\geq 3$ and $\eps > \eps' \in \sqb{0,1}$.
\end{openproblem}
Such a result from $\eps=1$ to $\eps'=0$ would have an additional consequence: it would imply hardness of canonical $\kxor$ from the secret-leakage planted-clique conjecture. This is because there exists an average-case reduction from secret-leakage planted clique to Tensor PCA \cite{brennan2020reducibility}.

\paragraph{Hardness against subexponential-time algorithms.} Prior work gives \emph{subexponential-time} algorithms and matching lower bounds in restricted models in the regime where polynomial-time algorithms are believed not to exist \cite{basu2025solving,raghavendra2017strongly,kothari2017sum,wein2019kikuchi,kunisky2019notes,bhattiprolu2017sum}. In particular, it is conjectured\footnote{We believe the results for Tensor PCA, i.e., $\eps = 1$, extend to intermediate densities $\eps \in (0,1)$.} that there is no $n^{ O\paren{\ell}}$-time algorithm solving $\xor k n m \delta$ below
\begin{equation}\label{eq:subexp_tradeoff}
    m \delta^2 \asymp n \paren{n/\ell}^{k/2-1}\,,
\end{equation}
while $n^{ O\paren{\ell}}$-time algorithms exist above this threshold. Poly-time average-case reductions transfer sub-exponential runtime algorithms and, consequently, hardness; more generally, for this purpose the reduction itself may run in a smaller subexponential time.

In the sparse case, where $m = n^{k(1+\eps)/2}$ with $\eps < 0$ and $\delta$ is a fixed constant, Eq.~\eqref{eq:subexp_tradeoff} becomes
$$
\eps = - (1 - 2/k) \log_n \ell\,.
$$
Our reductions in Theorem~\ref{thm:intro_resolution_basic} map $\kxor$ at density $\eps < 0$ to $\kxork {k'}$ at density $\eps' = \frac{2k\eps}{k'} < 0$ for any even $k' \leq k(1-\eps)$. However, this transformation does not preserve the tradeoff in Eq.~\eqref{eq:subexp_tradeoff}, except in the regime $k(1+\eps) \leq 2$, which is at or below the information-theoretic threshold (\autoref{subsec:complexity_profiles}).

\begin{openproblem}[Subexponential Trade-off for $\eps < 0$] 
    Construct an average-case reduction from $\kxor$ at density $\eps < 0$ to $\kxork {k'}$ at density $\eps' < 0$ that preserves the conjectured optimal subexponential runtime curve, namely
    $$\frac \eps {1 - 2/k} = \frac {\eps'} {1 - 2/k'}\,.$$
\end{openproblem}

In the dense case, where $m = n^{k(1+\eps)/2}$ has $\eps > 0$ and $\delta = n^{-k\eps/4 - \defi/2}$, Eq.~\eqref{eq:subexp_tradeoff} becomes
$$
\defi = \paren{k/2-1}\log_n \ell\,.
$$
Most of our reductions are based on resolution, i.e., multiplying pairs of entries, which \emph{doubles the deficiency}. Since these reductions also decrease the order $k$, they do not trace the conjectured subexponential runtime curve. 

\begin{openproblem}[Subexponential Trade-off for $\eps > 0$] 
    Construct an average-case reduction from 
    $
    \xorp k n \eps \defi$ to $\xorp {k'} {n'} {\eps'} {\defi'}\,,
    $
    that preserves the conjectured optimal subexponential runtime curve, namely
    $$\frac \defi{k/2-1} = \frac {\defi'} {k'/2-1}\,.$$
\end{openproblem}

\paragraph{Sparse LWE.} As mentioned in \autoref{subsubsec:lwe_results_intro}, our $\DiscrEqComb$ average-case reductions extend to the $k$-sparse $\lwe$ model (for constant $k$), first formalized in \cite{jain2024systematic} (\autoref{subsec:resolution_lwe}). In $\lweg k n m \chi$ (Def.~\ref{def:k_sparse_lwe}) one observes $m$ i.i.d. samples
\begin{equation}\label{eq:entry_Fq_discussion}
    \Y = a_{1}x_{i_{1}}+a_{2}x_{i_{2}}+\dots + a_{k}x_{i_{k}} + e \,,
\end{equation}
where $x\in \Z_q^n$ is the secret, $\alpha = \set{i_{1},\dots,i_{k}}$ is a uniformly random index set, $e \sim \chi \in \Z_q$ is i.i.d. noise, and the coefficients $a_{s} \sim_{\text{i.i.d.}} \unif\paren{\Z_q\setminus 0 }$. We prove an average-case reduction mapping 
$$
    \text{from}\quad \lweg k n m {\chi_\theta}\quad \text{to}\quad \lweg {k'} {n'} {m'} {\chi_{\theta'}}
$$
for a wide range of parameters and explicit maps $f$, such that $(k', m', \theta') = f(k, m, \theta)$ (see Lemma~\ref{lem:2path_red_lwe} and informal Thm.~\ref{thm:lwe_informal}). Our results apply as long as the noise family $\set{\chi_\theta}_{\theta\in \Theta}$ is \emph{resolution-stable} (Def.~\ref{def:resolution_stable}), which includes the following common $\lwe$ noise models:
\begin{itemize}
    \item \textbf{Uniform $\mathbb{Z}_q$ Noise:} $e \sim \chi_{\delta}$ if $e = \begin{cases}
        0, &\text{w. prob. }\delta\\
        \unif\paren{\Z_q }, &\text{w. prob. }1-\delta\,.
    \end{cases}$
    \item \textbf{Discrete Gaussian Noise:} 
    $
    \chi_s(e) \propto \sum_{t\in \Z} e^{-\pi (e + tq)^2/s^2}\,.
    $
    \item \textbf{Bounded $\mathbb{Z}_q$ Noise:} $e \sim \chi_{l}$,
    where $\chi_l$ is a given measure on $\set{-l, l}$ for a parameter $l \leq q$.
\end{itemize}
For these noise models we present a simple detection algorithm for $\klwe$ based on equation collision (\autoref{subsec:collision_detection}) that, combined with our reductions, yields new detection algorithms for $\klwe$.

First, exhaustive exploration of resolution for $\klwe$ is outside the scope of this paper and we believe the reductions in Thm.~\ref{thm:lwe_informal} can be extended to a wider parameter range. Second, \cite{jain2024systematic} relate hardness of $\klwe$ in dimension $n$ to that of (dense) $\lwe$ in dimension $k$, and, more recently, \cite{bangachev2024near} give a polynomial-time reduction from dense $\lwe$ to $\klwe$ that yields $n^{\Omega(k/\poly\log(k,n))}$-runtime lower bounds for solving $\klwe$ under standard assumptions for dense $\lwe$. However, the overall complexity profile of $\klwe$ model remains underexplored, especially in its dependence on the order $k$, the number of samples $m$, and the noise distribution.

\begin{openproblem}[$\klwe$]
$ $
\begin{enumerate}
    \item For fixed $k$ and noise $\chi$, characterize the minimal sample complexity $m = m(n)$ for poly-time detection and recovery in $\lweg k n m \chi$, and determine whether there is an computational-statistical gap;
    \item Develop a broader set of average-case reductions between $\klwe$ instances to characterize the complexity profile of $\klwe$.
\end{enumerate}
\end{openproblem}

\paragraph{$\kxor$ against low-noise.}

In the \emph{low-noise} regime of $\xor k n m \delta$, 
$$m = n^{k(1+\eps)/2} \text{ with }\eps<0\qquad \text{and}\qquad\delta = 1 - n^{-\snr}, \snr > 0\,.$$ Recent work \cite{chen2024algorithms} gives efficient detection and recovery algorithms in this regime whenever 
\begin{equation}\label{eq:tradeoff_lownoise}
m \geq \Omega\big(n^{1+(1-\Delta)\cdot \frac{k-1}{2}}\big)\quad\text{and}\quad \delta \geq 1 - n^{-\frac{1+\Delta}{2}}\quad\text{for } \Delta \in \paren{0,1}\,.
\end{equation}
Our reductions in Theorem~\ref{thm:intro_resolution_basic} map
$$
\xor k n {m=n^{k(1+\eps)/2}} \delta \quad\text{to}\quad \xor {k'} n {m'=\Thetat(n^{k'(\eps'+1)/2})} {\delta'}
$$
for even $k' \in (2|\eps| k , k(1-\eps)]$, $\eps' = \frac {2k} {k'} \eps$, and $\delta' = \delta^2 = 1-2 (1-o(1))n^{-\snr}$. This particular reduction does \emph{not} trace the algorithmic trade-off \eqref{eq:tradeoff_lownoise} and does not yield new algorithms in this low-noise regime. 
\begin{openproblem}
    Develop average-case reductions between $\kxor$ instances with different parameters $\paren{k, m, \delta}$ that (1) work in the low-noise regime above, and (2) trace (or improve) the $m\leftrightarrow \delta$ tradeoff \eqref{eq:tradeoff_lownoise}.  
\end{openproblem}

\paragraph{Algorithmic implications of our average-case reductions.}

Our reductions to Tensor PCA (\autoref{sec:to_tensor_pca}) work for any $\eps \in (0,1]$ and map\footnote{The reductions are stated for $\defi \geq 0$, but they succeed for recovery for all $\defi\in\R$.} 
$$
\xorp k n \eps \defi \quad\text{to}\quad \xorp {k'} {n'} {\eps'=1} {2\defi}\,,\qquad k' < \alpha k \text{ is even},
$$
where $n' = n(1-o(1))$, the output secret is a restriction $x' = x_{\sqb{1:n}}$, and $\alpha = \alpha(\eps)$ is an explicit function.
Moreover, these reductions can be adjusted\footnote{ The reductions to Tensor PCA in \autoref{sec:to_tensor_pca} can map to \emph{any} even $k' \leq \alpha k$, so via standard splitting (\autoref{subsubsec:cloning_splitting_overview}), the reduction can map to a collection of $\xorp {k''} {n'} {\eps'=1} {2\defi}$ for $k'' = k', k'-2,\dots$. \autoref{subsubsec:full_asymm_kxor_red} and \autoref{subsubsec:equiv_w_wor_replacement} describe how to then obtain the desired full tensor.} to map to the full Tensor PCA model, in which one observes 
$$
\delta' (x')^{\otimes k'} + \noiseG
$$
for $\delta' = \widetilde\Omega(n^{-k'/4 - \defi})$ and $\noiseG$ is a symmetric tensor of i.i.d. $\N(0,1)$ entries. The Kikuchi-hierarchy algorithms of \cite{wein2019kikuchi} achieve weak recovery in polynomial time for the Tensor PCA model\footnote{For our purposes even the matrix case $k=2$ suffices.} above at $$\delta' \geq \tilde\Theta\paren{1}\cdot n^{-k/4}\,.$$ Putting these together gives:
\begin{proposition}
Let $\eps \in (0,1], k \in \mathbb{Z}^{+}$ and let 
$$
m = n^{k(\eps-1)/2}\quad\text{and}\quad \delta \geq \tilde\Theta\paren{1} \cdot n^{-k\eps/4}\,,
$$
where $\tilde\Theta\paren{1}$ hides the $\poly\log (n)$ factors. For $k$ satisfying 
$$
k > \begin{cases}
    1/\eps, &\eps \in \paren{0,1/3}\\
    2/\paren{1-\eps}, &\eps \in [1/3,1]\,,
\end{cases} 
$$
there exists a $\poly\paren{n}$-time algorithm for weak recovery up to a global sign for $\xor k n m \delta$.
\end{proposition}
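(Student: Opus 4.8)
The plan is to read this off as a corollary, composing three facts already in hand: the densifying reductions to Tensor PCA of \autoref{sec:to_tensor_pca}, the discrete--Gaussian computational equivalence of Prop.~\ref{prop:gauss_discr_equiv_gen}, and the polynomial-time Kikuchi recovery guarantee of \cite{wein2019kikuchi} for Tensor PCA recalled just above. No new argument is needed; the content is in matching parameter regimes.

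First I would reparametrize the input. Writing $\delta = n^{-k\eps/4-\defi/2}$, the hypothesis $\delta \geq \widetilde{\Theta}(1)\cdot n^{-k\eps/4}$ is exactly $\defi\leq o(1)$ (possibly slightly negative), since $\widetilde{\Theta}(1)=n^{o(1)}$; so the instance is $\gtensp k n \eps \defi$ at or barely above the computational threshold \eqref{eq:tradeoff}, and the reductions to Tensor PCA --- which succeed for recovery at every $\defi\in\mathbb{R}$ --- apply. Next I would fix the output order $k'=2$ and check that the hypothesis on $k$ makes it reachable. The reduction of \autoref{sec:to_tensor_pca} outputs the full Tensor PCA model at any \emph{even} order $k' < \tau_\eps\cdot k$, where $\tau_\eps$ is the piecewise function of Thm.~\ref{thm:specific_intro}/Cor.~\ref{cor:tensor_pca_summary}; hence $k'=2$ is admissible whenever $\tau_\eps k > 2$. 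Substituting $\tau_\eps = 2\eps$ on $(0,1/3)$ gives $k>1/\eps$; substituting $\tau_\eps = 1-\eps$ on $[1/3,3/5)$ gives $k>2/(1-\eps)$; and substituting $\tau_\eps = (1+\eps)/4$ on $[3/5,1)$ gives $k>8/(1+\eps)$, which is implied by $k>2/(1-\eps)$ once $\eps\geq 3/5$ (equivalently $2(1+\eps)\geq 8(1-\eps)$). This is precisely the case split in the statement; when $\eps=1$ the hypothesis on $k$ is vacuous and the claim follows by applying \cite{wein2019kikuchi} directly to the (already Tensor PCA) instance.

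With $k'=2$ fixed I would run the reduction in its full-tensor form, producing in $\poly(n)$ time and within $o(1)$ total variation a Spiked Wigner instance $\delta'(x')^{\otimes 2}+\noiseG$ with $\noiseG$ a symmetric $\N(0,1)$ matrix, on $n'=n(1-o(1))$ coordinates, with secret $x'=x_{\sqb{1:n'}}$ and --- using $\defi\leq o(1)$ and $n'=\Theta(n)$ --- signal $\delta' = \widetilde{\Omega}\big(n^{-k'/4-\defi}\big) = \widetilde{\Omega}\big(n^{-1/2}\big)$. This is exactly the regime $\delta'\geq\widetilde{\Theta}(1)\cdot(n')^{-1/2}$ in which \cite{wein2019kikuchi} (the $k'=2$ Kikuchi level, i.e.\ top-eigenvector PCA of the symmetric matrix) recovers $x'$ weakly up to global sign in polynomial time; the $o(1)$ total variation error only changes the success probability by $o(1)$. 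Finally, padding the recovered $\widehat{x}'\in\set{\pm1}^{n'}$ arbitrarily to length $n$ still weakly recovers $x$, since dropping $n-n'=o(n)$ coordinates costs at most $o(1)$ in normalized correlation; and precomposing with the discrete-to-Gaussian direction of Prop.~\ref{prop:gauss_discr_equiv_gen} gives the same poly-time weak recovery for $\xor k n m {\widetilde{\Theta}(\delta)}$.

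The ``hard part'' is purely the parameter bookkeeping: confirming the piecewise $\tau_\eps$ matches the two-case threshold on $k$, and that the deficiency-doubling incurred by equation combination leaves $\delta'$ \emph{at} (not below) the Kikuchi threshold. Neither is a real obstacle. The one point deserving care is that the reduction alters the planted secret (it becomes the restriction to the first $n'$ coordinates), so one must verify that weak recovery transfers across this change of signal --- which it does, since $n'=n(1-o(1))$.
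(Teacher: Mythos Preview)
Your proposal is correct and takes essentially the same approach as the paper: reduce the input to a $2$-Tensor PCA (Spiked Wigner) instance via the densifying reductions of \autoref{sec:to_tensor_pca}, then invoke the Kikuchi recovery guarantee of \cite{wein2019kikuchi}. The paper is slightly more terse, noting only that the displayed conditions on $k$ are exactly the admissibility conditions for mapping to full $2$-Tensor PCA via $\DiscrEqComb$; your route through the combined $\tau_\eps$ of Cor.~\ref{cor:tensor_pca_summary} arrives at the same thresholds after your check that $2/(1-\eps)\geq 8/(1+\eps)$ on $[3/5,1]$.
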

The conditions on $k$ are the admissibility conditions for mapping $\xor k n m \delta$ to the full 2-Tensor PCA for $\delta' = \tilde\Theta\paren{1}\cdot n^{-k/4}$ via $\DiscrEqComb$ (see \autoref{sec:to_tensor_pca}).
This leaves open several small-$k$ cases and the removal of the $\tilde\Theta\paren{1}$ factors. While known techniques (e.g., \cite{wein2019kikuchi}) likely extend naturally to the intermediate densities $\eps\in\paren{0,1}$, removing the $\tilde\Theta\paren{1}$ factors for $\eps \in [0,1)$ might be more challenging -- such tight bounds have been obtained for Tensor PCA only recently in \cite{kothari2025smooth}.

\begin{openproblem}[Recovery at Intermediate Sparsities $\eps \in \paren{0,1}$]
    For all $k\geq 2$ and $\eps\in\paren{0,1}$ in the above, obtain detection and recovery algorithms for $\xor k n m \delta$ at the computational threshold \eqref{eq:tradeoff}
    $$
m = n^{k(\eps+1)/2}\quad\text{and}\quad \delta \geq \Theta\paren{1} \cdot n^{-k\eps/4}\,.
$$
\end{openproblem}

\section*{Acknowledgments}

We thank Stefan Tiegel for pointing out a mistake in an earlier draft, leading to a correction in one of the reductions, and Kiril Bangachev, Sam Hopkins, and Tselil Schramm for many comments and suggestions that have improved the paper.
We are also grateful to Kiril Bangachev, Sam Hopkins, Stefan Tiegel, and Vinod Vaikuntanathan for stimulating conversations.

\section{Review of Known Algorithms and Lower Bounds for $\kxor$ and Tensor PCA}\label{sec:lit_review}

Here we overview existing algorithmic and statistical results for the two models.
From our computational equivalence results (Prop.~\ref{prop:gauss_discr_equiv_gen}), the existing algorithms and statistical and computational lower bounds transfer between discrete and Gaussian versions of the problems up to $\poly\log$ factors. 

\subsection{Resolution}\label{subsec:resolution_lit}

The key primitive in our average-case reductions is \emph{resolution}, which multiplies two input $\kxor$ equations (equivalently, adding them over $\mathbb{F}_2$) to form a new $\kxork {k'}$ equation (see \autoref{subsec:resolution_id}). Analyzing resolution has been a major research direction in both the $\kxor$ and broader CSP literature. 

In proof complexity, resolution is studied as a refutation proof system, with foundational lower bounds obtained in \cite{urquhart1987hard,ben1999short,haken1985intractability,chvatal1988many}. Proof systems that reason with linear equations via resolution were later introduced in \cite{raz2008resolution,itsykson2014lower,itsykson2020resolution,efremenko2024lower}. 
Additionally, the analysis of sum-of-squares and Lasserre hierarchies relies on combinatorial properties of $\kxor$ resolution \cite{grigoriev2001linear,schoenebeck2008linear,kothari2017sum}. 
Finally, as an algorithmic primitive, resolution has appeared in \cite{goerdt2001efficient,friedman2005recognizing,guruswami2021algorithms,barak2016noisy}. To the best of our knowledge, however, our work is the first use of resolution as a primitive in a distributionally correct reduction.

\subsection{Information-Theoretic Lower Bounds}\label{subsec:lit_it}

The first information-theoretic lower bounds for Tensor PCA, i.e. proof that the model is indistinguishable from pure noise below the threshold \eqref{eq:it_threshold}, were obtained in \cite{richard2014statistical,  montanari2015limitation} and were generalized to our case of the signal Rademacher prior $x\sim \unif\paren{\set{\pm1}^n}$ in \cite{perry2016statistical}. For sharp analysis of information-theoretically optimal estimation precision, see \cite{jagannath2020statistical}. We remark that the second-moment method of \cite{perry2016statistical} adapts to the case of Bernoulli mask, i.e., to $\ktens$ with intermediate densities $\eps \in [0, 1]$.

\subsection{Algorithms}\label{subsec:lit_algo}

State-of-the art algorithms for the canonical $\kxor$ succeed above the computational threshold \eqref{eq:tradeoff} for detection/refutation \cite{barak2016noisy,allen2015refute,raghavendra2017strongly,guruswami2021algorithms,hsieh2023simple,abascal2021strongly} and recovery \cite{basu2025solving,feldman2015subsampled, guruswami2023efficient}, with several works (starting with \cite{raghavendra2017strongly}) obtaining subexponential algorithms in the conjectured hard regime. 

For Tensor PCA, known algorithms include tensor unfolding  \cite{richard2014statistical}, gradient descent (or Langevin dynamics) and AMP (all suboptimal relative to other efficient algorithms)~\cite{arous2018algorithmic,lesieur2017statistical}, spectral and sum-of-square methods \cite{hopkins2015tensor,hopkins2017power,  hopkins2016fast,bhattiprolu2017sum,moitra2019spectral}, and Kikuchi hierarchy \cite{wein2019kikuchi,kothari2025smooth}. 
The detection algorithm of \cite{wein2019kikuchi} naturally extends to the case of $\eps \in (0,1)$. For recovery in this intermediate regime, we obtain the first known recovery algorithms directly from our reductions to Tensor PCA that succeed for sufficiently large (but constant) tensor order $k$.

\subsection{Computational Lower Bounds}\label{subsec:lit_comp_bounds}

A growing line of work gives evidence towards computational hardness in the conjectured hard regimes by proving that broad restricted classes of algorithms fail to solve the problem. For the canonical $\kxor$, \cite{kothari2017sum,schoenebeck2008linear,grigoriev2001linear} show sum-of-squares (SOS)/Lasserre and \cite{feldman2015complexity} statistical query lower bounds; for Tensor PCA, there exist lower bounds for SOS algorithms \cite{hopkins2015tensor,hopkins2017power,NEURIPS2022_e85454a1}, statistical query model \cite{dudeja2021statistical}, memory-bounded algorithms \cite{dudeja2024statistical}, and low-degree polynomials \cite{kunisky2019notes,kunisky2024tensor}; any method relying on spectral information is ruled out by \cite{montanari2015limitation}.

Additionally, reduction-based hardness from variants of planted clique conjecture (secret leakage and hypergraph PC) was obtained for Tensor PCA in \cite{brennan2020reducibility,luo2022tensor}.

\section{Model Variants and Simple Reductions}

Here we describe preliminary reductions and equivalences used as subroutines in our main results. These include (1) equivalences between variants of $\kxor$ models (e.g., discrete/Gaussian models, different equation sampling procedures, etc), summarized in \autoref{subsec:kxor_variants}, and (2) some simple parameter-changing reductions (\autoref{subsec:overview_simple_red}). The baseline $\kxor$ and $\ktens$ models introduced in \autoref{sec:intro}, \autoref{sec:ideas} are formally defined below. 

\begin{definition}[Family of $\kxor, \ktens$ Models]\label{def:kxor_family}
Let $k,n, m\in \mathbb{N}$ and $\delta \in \sqb{0,1}$ be parameters. Draw the signal vector $x \sim \unif\set{\pm 1}^n$. Let $\binomset n k$ be the set of $k$-sets of distinct indices in $\sqb{n}$ and for a set $\alpha = \set{i_1,\dots,i_k} \in \binomset n k$ write 
$
x_{\alpha} = x_{i_1}x_{i_2} \dots x_{i_k}\,.
$

In $\xor k n m \delta$, we generate a collection of pairs $\calY = \set{\paren{\alpha_j, \Y_{j}}}_{j\in \sqb{m}}$ as follows. For each $j \in \sqb{m}$, sample $\alpha_j \sim \unif\sqb{\binomset n k}$ (with replacement) and set
\begin{equation}\label{eq:def_kxor}
\Y_{j} = 
    x_{\alpha_j}\cdot \noise_{j}\quad \in \set{\pm1}\,,\tag{$\kxor$}
\end{equation}
where $\noise_{j} \sim_\mathrm{i.i.d.} \rad(\delta)$\footnote{We define $\rad(\delta)$ as the $\pm 1$ random variable such that $\E\sqb{\rad\paren{\delta}} = \delta$.} is \emph{entrywise noise}.

In $\gtens k n m \delta$, we replace the multiplicative Rademacher noise $\noise_{j}$ in \eqref{eq:def_kxor} with additive Gaussian $\noiseG_{j} \sim \N(0,1)$: we observe $\calY = \set{\paren{\alpha_j, \Y_{j}}}_{j\in \sqb{m}}$ for $\set{\alpha_j}_{j\in\sqb{m}}$ sampled as above and
\begin{equation}\label{eq:def_ktens}
\Y_{j} = \delta x_{\alpha_j} + \noiseG_{j}\quad\in\R\,,\tag{$\ktens$}
\end{equation}  
where $\noiseG_{j} \sim_\mathrm{i.i.d.} \N(0,1)$.

\end{definition}

\subsection{Variants of $\kxor$ Model}\label{subsec:kxor_variants}

A $\kxor$ instance (Def.~\ref{def:kxor_family}) consists of $m$ independent samples $\paren{\alpha, \Y}$, where $\alpha \in \binomset n k$ and $\Y$ is a noisy observation of $x_{\alpha}$. 
Prop.~\ref{prop:gauss_discr_equiv_gen} is a representative example of the basic equivalences we use as subroutines (here: discrete vs.\ Gaussian observation models).
In this subsection we record several additional variants of $\kxor$ and relate them by simple reductions.

We parametrize a $\kxor$-like model by four choices. By default (i.e., model with no superscript) we use the first option in choices (2)-(4); superscripts indicate deviations from these defaults:

\begin{figure}[h]
  \centering
    
\renewcommand{\arraystretch}{1.15}
\setlength{\tabcolsep}{6pt}
\begin{tabular}{@{}l p{0.72\linewidth}@{}}

\textbf{(1) Noise model} &
\([\checkmark]\) \(\kxor:\ \Y \sim x_\alpha\cdot \rad(\delta)\)\\
&[\textcolor{white}{\checkmark}]\ \(\ktens:\ \Y \sim \delta x_\alpha+\N(0,1)\)\\[4pt]

\textbf{(2) Number of samples} &
\([\checkmark]\) \(\kmodel:\ m\text{ samples}\)\\
&[\textcolor{white}{\checkmark}]\ \(\kmodel^{\mathsf{Pois}}:\ \mpoi\sim\poi(m)\text{ samples}\)\\[4pt]

\textbf{(3) Index space for \(\alpha\)} &
\([\checkmark]\) \(\kmodel:\ \alpha\in\binomset n k\) ($k$-sets of distinct indices)\\
& [\textcolor{white}{\checkmark}]\ \(\kmodel^{\mathsf{FULL}}:\ \alpha\in [n]^k\) ($k$-multisets, index repetitions allowed, Def.~\ref{def:multiset})\\
& [\textcolor{white}{\checkmark}]\ \(\kmodel^{\mathsf{asymm}}:\ \alpha\in [n]^{\underline{k}}\) (ordered $k$-tuples of distinct indices)\\[4pt]

\textbf{(4) Sampling of \(\alpha\)} &
\([\checkmark]\) \(\kmodel:\) with replacement\\
&[\textcolor{white}{\checkmark}]\ \(\kmodel^{\mathsf{wor}}:\) without replacement\\
\end{tabular}

    \caption{Choices in $\kxor$ variants: the defaults are checked and correspond to $\xor k n m \delta$.}
    \label{fig:kxor_variants}
\end{figure}

With this notation, the standard $\xor k n m \delta$ (Def.~\ref{def:kxor_family}) uses the default options in (1)-(4), while $\gtens k n m \delta$ differs only in (1) (Gaussian noise) and keeps the defaults in (2)–(4). 

We use the $(\eps, \defi)$ reparametrization for these models analogously to standard $\kxor$ (see \autoref{sec:ideas}). We focus on the case $\eps\geq 0$ (i.e., $m\geq n^{k/2}$), motivated by our applications; most of the results below can be naturally extended to $\eps < 0$. 

\subsubsection{Equivalence of Models with Different Noise Models: Choice (1)} \label{subsubsec:equiv_noise_model}

Switching between the discrete and Gaussian noise models (choice (1)) results in computationally equivalent models, up to $\poly\log n$ losses. We apply standard scalar reductions between Gaussian and Bernoulli variables (as developed in \cite{ma2015computational,brennan2018reducibility})
that preserve signal level $\delta$ up to $\log n$ factors to our models, entrywise, and obtain in \autoref{subsec:comp_equiv}:
\begin{proposition}[Computational Equivalence of $\kxor^\star$ and $\ktens^\star$]\label{prop:discr_gauss_equivalence_general}
Let $\star\subseteq \set{\mathsf{Pois}, \mathsf{FULL} / \mathsf{asymm}, \mathsf{wor}}$ be some fixed model choices (2)-(4) in Fig.~\ref{fig:kxor_variants}. The corresponding discrete $\xorpstar k n \eps \defi$ model and its Gaussian analog $\gtenspstar k n \eps \defi$ are computationally equivalent: there are $\poly(n)$-time average-case reductions ($\Gaussianize/\Discretize$, Alg.~\ref{alg:gaussianization}, \ref{alg:discretization}) for both detection and recovery in both directions. Moreover, the reductions preserve the secret signal vector $x$ precisely.

\end{proposition}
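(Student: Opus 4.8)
The structural observation is that $\xorpstar k n \eps \defi$ and $\gtenspstar k n \eps \defi$ \emph{share the same index process}: in both models the secret $x$, the number of samples (fixed $m$ or $\poi(m)$, choice (2)), and the index sets/multisets/tuples $\set{\alpha_j}$ (drawn according to whichever index space and sampling rule of choices (3)--(4) is in force) have identical joint law, and the two models differ only in the conditional law of the scalar observations $\set{\Y_j}$ given $(x,\set{\alpha_j})$ --- which in both cases are mutually independent across $j$ and depend on $(x,\alpha_j)$ only through the single sign $x_{\alpha_j}\in\set{\pm1}$. Hence it suffices to exhibit, in each direction, a poly-time-computable \emph{sign-equivariant} Markov kernel acting coordinatewise on each $\Y_j$ (and never touching $\alpha_j$ or the sample count) that transports the scalar observation law of one noise model to that of the other, up to small total variation, uniformly over $x_{\alpha_j}\in\set{\pm1}$. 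Applying such a kernel independently to every sample yields the reduction, and because the entire $\alpha$-process (and sample count) is passed through untouched, all of the variant choices (2)--(4) --- Poissonization, $\FULL/\asymm$ index spaces, without-replacement sampling --- are handled verbatim; the secret $x$ is likewise untouched, so it is preserved exactly.

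\textbf{Discretize ($\ktens^\star\to\kxor^\star$).} Replace each $\Y_j$ by $\sign(\Y_j)$ (the tie at $0$ has measure zero). If $\Y_j=\delta' x_{\alpha_j}+G_j$ with $G_j\sim\N(0,1)$, then $\sign(\Y_j)=x_{\alpha_j}\cdot W_j$, where $W_j$ equals $+1$ with probability $\Phi(\delta')$ and the conditional law of $W_j$ given $(x,\set{\alpha_j})$ does not depend on $x_{\alpha_j}$, so $W_j\sim\rad(2\Phi(\delta')-1)$ independently of $(x,\set{\alpha_j})$ and across $j$. Thus the output is \emph{exactly} the discrete $\kxor^\star$ model with the same index process and signal $\delta\coloneq 2\Phi(\delta')-1$; a one-line estimate gives $\delta=\Theta(\min\set{\delta',1})$, i.e.\ $\eps$ is unchanged and $\defi$ changes by $o(1)$, within the tolerance of Prop.~\ref{prop:map_up_to_constant}. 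The total variation error is zero, and $\sign(\cdot)$ is trivially poly-time (this is $\Discretize$, Alg.~\ref{alg:discretization}).

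\textbf{Gaussianize ($\kxor^\star\to\ktens^\star$).} Invoke the scalar Bernoulli-to-Gaussian reduction of \cite{ma2015computational,brennan2018reducibility} (the kernel implemented by $\Gaussianize$, Alg.~\ref{alg:gaussianization}): for any target error $\xi>0$ there is an efficiently samplable, sign-equivariant kernel $K_\xi\colon\set{\pm1}\to\R$ with $\tv\b(K_\xi\circ(\rad(\delta)\cdot s),\,\N(\delta' s,1)\b)\le\xi$ for both $s\in\set{\pm1}$, where $\delta'=\delta/\poly\log(1/\xi)$. Choosing $\xi=n^{-10}$ and applying $K_\xi$ independently to each $\Y_j$, the subadditivity of total variation under independent coordinatewise channels (conditioning on $(x,\set{\alpha_j})$ and then averaging, and using $\E\mpoi=m$ plus a Poisson tail bound in the Poissonized case) bounds the distance from the target $\gtenspstar k n \eps {\defi'}$ instance by $O(m)\cdot\xi=o(1)$, since $m=\poly(n)$. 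Again $\eps$ is unchanged and $\delta'=\Thetat(\delta)$, and the runtime is $\poly(n)$.

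\textbf{Consequences and main obstacle.} Both kernels send null to null: under the null one replaces $x_{\alpha_j}$ by a fresh uniform sign $w_j$, and $\sign$ resp.\ $K_\xi$ acting on $w_j\cdot(\text{noise})$ produces (exactly, resp.\ within $\xi$) the corresponding null of the other model. Hence by the general transfer principle for average-case reductions in total variation (Lem.~\ref{lem:red_implications_gen}) both detection and recovery algorithms carry across in both directions, giving computational equivalence and preservation of $x$. The only genuinely nontrivial ingredient is the scalar Gaussianization kernel $K_\xi$ with $\poly\log$ signal loss and exact sign-equivariance; this is precisely the content of the cited scalar reductions, so I expect the main (and routine) work to be verifying its applicability at our polynomial signal scale $\delta=n^{-\Theta(1)}$ (so that $\log_n\delta'=\log_n\delta+o(1)$) and assembling the coordinatewise total-variation bound over $\poly(n)$ samples. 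Everything else --- the $\sign$-discretization, the index-process-agnostic packaging of choices (2)--(4), and the null-to-null check --- is immediate.
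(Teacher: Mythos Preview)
Your proposal is correct and essentially identical to the paper's proof: both directions apply a scalar kernel entrywise while leaving the index process (and hence all variant choices (2)--(4)) untouched, using $\sign(\cdot)$ for $\Discretize$ and the Bernoulli-to-Gaussian kernel of \cite{ma2015computational,brennan2018reducibility} for $\Gaussianize$, with the null-to-null check and the coordinatewise total-variation accumulation over $\poly(n)$ samples handled exactly as you describe. Your phrasing in terms of sign-equivariance and the index-process-agnostic packaging is, if anything, slightly crisper than the paper's exposition.
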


\subsubsection{Equivalence of Models with Fixed and Poisson Number of Samples: Choice (2)}\label{subsubsec:equiv_sampling_pois}

Options for the number of samples (choice (2)) are also interchangeable for our purposes:
for any $\kxor$-like model in Fig.~\ref{fig:kxor_variants}, modifying the sample count rule in choice (2) yields a computationally equivalent model, up to polylogarithmic losses in parameters.

We prove the following equivalence, that allows us to switch between models with $m$ and $\poi(m'), m' = \Theta(m)$ samples, in \autoref{subsec:poi_sampling}.

\begin{proposition}[Computational Equivalence of $\kmodel$ and $\kmodel^{\mathsf{Pois}}$]\label{prop:sampling_equiv}
    Let $\star \subseteq \set{ \mathsf{FULL}/\mathsf{asymm}, \mathsf{wor} }$ and $\kmodel \in \set{\kxor, \ktens}$ be fixed model choices (1,3,4) in Fig.~\ref{fig:kxor_variants}. Then the fixed-sample $\kmodel^\star_{\eps, \defi}(n)$ and its Poissonized variant $\kmodel^{\star, \Pois}_{\eps, \defi}(n)$ are computationally equivalent: there are $\poly(n)$-time average-case reductions ($\ResampleIID/\ResampleM$, Alg.~\ref{alg:resample_iid}, \ref{alg:resample_M}) for both detection and recovery in both directions. Moreover, the reductions preserve the secret signal vector $x$ precisely.
\end{proposition}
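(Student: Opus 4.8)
The plan is to realise both directions ($\ResampleIID$ from fixed to Poisson, and $\ResampleM$ from Poisson to fixed) as \emph{sub-sampling} reductions, exploiting the single structural fact shared by every model in Fig.~\ref{fig:kxor_variants}: the samples form an \emph{exchangeable} sequence. With replacement the samples are i.i.d.; without replacement they form a uniformly random subset of the fixed universe of admissible index sets (each carrying an independently drawn noisy label), and in either case a uniformly random selection of $\ell$ terms of a length-$L$ such sequence (for $\ell \le L$) is again distributed as $\ell$ samples from the same model. The only analytic ingredient needed is a crude Poisson tail bound, available for free here because the sample counts are polynomially large: $m = n^{k(\eps+1)/2}$, so $\Pr[\poi(m/2) > m]$ and $\Pr[\poi(m) < m/2]$ are both $\le e^{-\Omega(m)} = o(1)$.

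For $\ResampleIID$, given an input $\kmodel^{\star}_{\eps,\defi}(n)$ instance with $m$ samples, I would set $m' = m/2$, draw $\mpoi \sim \poi(m')$, and on the (overwhelmingly likely) event $\{\mpoi \le m\}$ output a uniformly random $\mpoi$-subset of the input samples, outputting some fixed default instance on the complementary event. Conditioning on $\mpoi = M \le m$ and using exchangeability, the output is exactly $M$ samples from the same model; unconditioning over $M \sim \poi(m')$ shows the output law equals \emph{exactly} that of $\kmodel^{\star,\Pois}$ with Poisson mean $m'$, up to $o(1)$ total variation from the discarded tail event. Symmetrically, for $\ResampleM$, given $\kmodel^{\star,\Pois}_{\eps,\defi}(n)$ with Poisson mean $m$, I would set $m' = m/2$ and, on $\{\mpoi \ge m'\}$, output a uniformly random $m'$-subset, which by the same argument is exactly a $\kmodel^{\star}$ instance with $m'$ fixed samples, again up to $o(1)$ in total variation. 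Both maps run in $\poly(n)$ time (drawing a Poisson variate and a random subset) and only delete samples, so they preserve the secret $x$ verbatim; applied to the null distribution (which has the identical exchangeable structure, with uniform random signs in place of $x_\alpha$) they send null to null, so Lem.~\ref{lem:red_implications_gen} yields valid average-case reductions for both detection and recovery.

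Finally, there is a small bookkeeping point: the sub-sampling changes the effective sample count by a factor of $2$, hence the density parameter by $\eps \mapsto \eps - \tfrac{2\log_n 2}{k} = \eps - o(1)$, which is immaterial at polynomial scale --- by the equivalence of model sequences agreeing in $\lim_n \log_n m_n$ and $\lim_n \log_n \delta_n$ (Prop.~\ref{prop:map_up_to_constant}), $\kmodel^{\star,\Pois}_{\eps',\defi}(n)$ is interchangeable with $\kmodel^{\star,\Pois}_{\eps,\defi}(n)$, and likewise on the fixed-count side. (If one prefers to stay at $\eps' = \eps + o(1)$ more tightly, taking $m' = m(1 - 1/\sqrt{\log n})$ already makes the tail-event probability $n^{-\Omega(1)}$.) The only place requiring any care is the exchangeability claim in the without-replacement variants, where one invokes the standard identity that a uniformly random $\ell$-subset of a uniformly random $M$-subset of a universe is itself a uniformly random $\ell$-subset of that universe; everything else is routine, and I expect no genuine obstacle here.
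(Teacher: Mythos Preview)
Your proposal is correct and matches the paper's approach essentially step for step: both directions are sub-sampling reductions that halve the sample count, use a Poisson tail bound to discard a $o(1)$-probability bad event, and invoke Prop.~\ref{prop:map_up_to_constant} to absorb the constant-factor change in $m$. The only cosmetic difference is that the paper's Alg.~\ref{alg:resample_iid} and \ref{alg:resample_M} return the \emph{first} $\mpoi$ (resp.\ $m'$) samples rather than a random subset, which is equivalent by the exchangeability you invoke; your explicit treatment of the without-replacement case via the ``subset of a random subset'' identity is a detail the paper leaves implicit.
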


\subsubsection{Models that Allow Index Repetitions and Symmetrization: Choice (3)}\label{subsubsec:full_asymm_kxor_red}

We show in \autoref{subsec:repeated_indexes} that, with a natural choice of measure on the index $k$-multisets (Def.~\ref{def:multiset}), the $\kmodel^{\mathsf{FULL}}$ models are equivalent to a collection of independent lower-order instances $\kmodel, \kmodelk {(k-2)}, \kmodelk {(k-4)},\dots$ that share the same signal vector $x$.

\begin{proposition}[Computational Equivalence of $\kmodel^{\star, \mathsf{FULL}}$ and a Collection of $\kmodel^\star$]\label{prop:comp_equiv_full_collection}
Fix $k\in\mathbb{Z}, \eps\in\sqb{0,1}, \defi\in \R_{+}$ and any choices (1,2,4) $\star \subseteq \set{\Pois, \wor}$ and $\kmodel \in \set{\kxor, \ktens}$ in Fig.~\ref{fig:kxor_variants}. Then,
    $$
    \kmodel_{\eps, \defi}^{\star, \mathsf{FULL}}(n) \quad\text{and}\quad \B\{ \kmodelk {(k-2r)}^{\star}_{\eps_r, \defi} (n) \B\}_{0\leq r \leq \frac{k-1}{2}}\qquad \text{with } \eps_r = \min\set{ \frac{k\eps}{k-2r} , 1}
    $$
    are computationally equivalent: there exist poly-time average-case reductions ($\Decompose/\Compose$, Alg.~\ref{alg:decompose_full}, \ref{alg:compose_full}) both ways.\footnote{Here the second model is a collection of independent $\kmodel$ instances.} These reductions preserve the input signal vector $x\in\set{\pm1}^n$ exactly.

\end{proposition}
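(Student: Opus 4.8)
I would build both reductions $\Decompose$ and $\Compose$ around one algebraic identity. For a $k$-multiset $\alpha$ with multiplicity vector $(m_i)_{i\in[n]}$, write $\beta(\alpha) = \set{i : m_i\text{ is odd}}$ for its \emph{odd support}; since $x_i\in\set{\pm 1}$ we have $x_\alpha = \prod_i x_i^{m_i} = x_{\beta(\alpha)}$, and $|\beta(\alpha)| = k-2s$ where $s = (k-|\beta(\alpha)|)/2 \in \set{0,1,\dots,\lfloor (k-1)/2\rfloor}$ (for even $k$ there is in addition the signal-free case $|\beta(\alpha)|=0$, treated separately below). Thus a $\kmodel^{\FULL}$ sample $(\alpha,\Y)$ whose odd support has size $k-2s$ has $\Y$ distributed \emph{exactly} as a $\kmodelk{(k-2s)}$ observation of $x_{\beta(\alpha)}$ (namely $\Y\sim x_{\beta(\alpha)}\cdot\rad(\delta)$ if $\kmodel=\kxor$, and $\Y\sim \delta x_{\beta(\alpha)}+\N(0,1)$ if $\kmodel=\ktens$). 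Both directions only rearrange index multisets and aggregate or clone noise, never touching the signal, so $x$ is preserved exactly; what remains is to match the index-multiset laws and sample counts up to $o(1)$ total variation.

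\textbf{The $\Decompose$ direction.} First I would Poissonize via $\ResampleIID$ (Prop.~\ref{prop:sampling_equiv}). Partitioning the samples by the value $s=(k-|\beta(\alpha)|)/2$ of their odd support then splits the instance, by Poisson thinning, into mutually independent buckets: bucket $s$ consists of $\Pois(w_s m)$ pairs $(\beta_j,\Y_j)$, where $w_s$ is the probability that the natural measure of Def.~\ref{def:multiset} produces a multiset with odd support of size $k-2s$. Two properties of that measure, both read off its explicit form, pin down each bucket: (i) conditioned on the odd-support size being $k-2s$, the odd support is $o(1)$-close in total variation to $\unif\sqb{\binomset n {k-2s}}$ — the only discrepancy coming from the atypical multisets in which a doubled index collides with the odd support or with another doubled index, whose mass vanishes as $n\to\infty$; and (ii) $w_s m = \Thetat\big(n^{(k-2s)(\eps_s+1)/2}\big)$ with $\eps_s=\min\set{k\eps/(k-2s),1}$. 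When $\eps_s<1$ the bucket is already (a Poissonized copy of) $\kmodelk{(k-2s)}^{\star}_{\eps_s,\defi}(n)$, up to the error in (i). When $\eps_s=1$, each $\beta$ carries $\omega(1)$ i.i.d.\ noisy copies, which I would collapse to one: Gaussianize the bucket (Prop.~\ref{prop:discr_gauss_equivalence_general}), replace the copies attached to each $\beta$ by their variance-$1$ renormalized average — which is precisely one additive-$\N(0,1)$ observation of $\delta'x_\beta$, and a short computation gives $\delta'=\Thetat(n^{-(k-2s)/4-\defi/2})$, so the deficiency $\defi$ is unchanged — and then Discretize back if $\kmodel=\kxor$. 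De-Poissonizing each output bucket via $\ResampleM$ and using independence of the buckets, the union is $o(1)$-close to the target collection.

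\textbf{The $\Compose$ direction.} I would run the above in reverse, bucket by bucket, and take the union. For bucket $s$: if $\eps_s=1$, the input $\kmodelk{(k-2s)}^{\star}_{1,\defi}(n)$ has many fewer samples than the target bucket needs, so I would first apply the cloning subroutine ($\RadClone$ or $\GaussClone$, \autoref{subsubsec:cloning}) to split each noisy observation of $x_\beta$ into the required number of independent weaker copies — a short computation shows these land exactly at the signal level of $\kmodel^{\star,\FULL}_{\eps,\defi}(n)$; if $\eps_s<1$ no cloning is needed, since count (ii) already matches. Then I would "inflate" each pair $(\beta,\Y)$ into a multiset sample $(\alpha,\Y)$ by appending a doubled block: draw $i_1,\dots,i_s$ from the conditional law of Def.~\ref{def:multiset} given odd support $\beta$ (which by (i) is $o(1)$-close to $s$ i.i.d.\ uniform indices), set $\alpha=\beta\cup\set{i_1,i_1,\dots,i_s,i_s}$, and keep $\Y$. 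For even $k$ I would also synthesize directly the signal-free $|\beta|=0$ samples (uniform all-even multisets with label drawn from the marginal noise law, independent of $x$). Taking the union over $s$ and handling the Poisson/fixed-count conversion via Prop.~\ref{prop:sampling_equiv} yields an instance $o(1)$-close to $\kmodel^{\star,\FULL}_{\eps,\defi}(n)$; the $\wor$ case reduces to the with-replacement case via the corresponding equivalences together with the fact that in the relevant regime distinct input multisets have distinct odd supports with high probability. Both $\Decompose$ and $\Compose$ run in $\poly(n)$ time, so Lem.~\ref{lem:red_implications_gen} transfers detection and recovery in both directions, giving the claimed computational equivalence.

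\textbf{Main obstacle.} The part I expect to be hardest is controlling two total-variation error sources so that each stays $o(1)$ uniformly over all model choices $\star$: (a) the combinatorics of the multiset measure — showing that, within a fixed odd-support size, the "degenerate" multisets (doubled indices colliding with each other or with the odd support) carry negligible mass, so that the pushforward to the odd support is essentially uniform and the bucket counts match (ii); and (b) the aggregation at $\eps_s=1$ — ensuring the renormalized average lands \emph{exactly} in total variation on a Gaussian with the precise $\delta'$ dictated by the computational-threshold curve, which is why the aggregation must be carried out in the Gaussian world (where a sum of independent Gaussians is exactly Gaussian), sandwiched between $\Gaussianize$ and $\Discretize$. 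Threading these through the Poissonization and without-replacement bookkeeping, so that the full composition stays within $o(1)$ total variation of the exact target distribution, is where the work lies.
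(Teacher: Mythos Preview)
Your proposal is correct and follows essentially the same approach as the paper: both directions hinge on the identity $x_\alpha = x_{\beta(\alpha)}$ for the odd support $\beta(\alpha)$, partition/assemble by odd-support size via Poisson splitting, handle the dense $\eps_s=1$ buckets by Gaussian averaging (or its inverse, cloning), and inflate/deflate index multisets using the conditional law of the doubled block under the measure of Def.~\ref{def:multiset}. One minor remark: the odd support conditioned on its size is in fact \emph{exactly} uniform on $\binomset n{k-2s}$ by permutation symmetry of $\mu$, so your point (i) holds with zero error; the $o(1)$ slack you flag is only needed when matching the paper's simpler parameterization of the doubled block as $\{i_1,i_1,\dots,i_s,i_s\}$ with the full conditional law (higher-multiplicity collisions), which is indeed negligible.
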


Additionally, when sampling is \emph{with replacement} (choice (4)), one can switch between the standard index space $\binomset n k$ (first option in choice (3)) and the space of ordered $k$-tuples (third option in choice (3)), i.e., the symmetric and asymmetric $\kxor$ models become equivalent (\autoref{subsec:symm}).\footnote{We note that the models are equivalent for recovery for any option in choice (4) and the limitation is a reduction from $\kmodel$ to $\kmodel^{\mathsf{asymm}}$ for recovery.} 

\begin{proposition}[Computational Equivalence of $\kmodel^\star$ and $\kmodel^{\star,\mathsf{asymm}}$]\label{prop:symm_asymm_equiv}
Let $\star \subseteq \set{\Pois}$ and $\kmodel \in \set{\kxor, \ktens}$ be any fixed choices (1,2) in Fig.~\ref{fig:kxor_variants} and choice (4) set to sampling with replacement. Then, $\kmodel^{\star}_{\eps, \defi}(n)$ and the corresponding ``asymmetric" variant $\kmodel^{\star, \mathsf{asymm}}_{\eps, \defi}(n)$ are computationally equivalent: there are $\poly(n)$-time average-case reductions ($\DeSymm/\Symm$, Alg.~\ref{alg:desymm}, \ref{alg:symm}) for both detection and recovery in both directions. Moreover, the reductions preserve the secret signal vector $x$ precisely.

\end{proposition}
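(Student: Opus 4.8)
The plan is to observe that $\kmodel^\star_{\eps,\defi}(n)$ and $\kmodel^{\star,\asymm}_{\eps,\defi}(n)$ differ \emph{only} in whether each sample records the ordering of its $k$ distinct indices, and that this ordering is information-theoretically irrelevant because the value law depends on the indices solely through their underlying set. Concretely: in $\kmodel^\star$ each index datum is a uniform draw from $\binomset n k$, in $\kmodel^{\star,\asymm}$ it is a uniform draw from the $k!\binom nk$ ordered $k$-tuples of distinct indices, in both cases i.i.d.\ across the $m$ (or $\Pois(m)$) samples and sharing the same secret $x$; and in both the planted and the null cases the conditional law of the value $\Y$ given the index datum is a function of the \emph{set} of indices only — for planted $\Y$ because $x_\alpha=\prod_{i\in\alpha}x_i$ is permutation-invariant, and for null $\Y$ because it is a random sign (discrete case) or $\delta w+G$ with $w$ a uniform sign and $G\sim\N(0,1)$ (Gaussian case), in either case independent of the index datum altogether. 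This symmetry is the whole content; the remaining work is to package it as two exact sample-wise reductions.

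First I would define $\Symm$ (reducing $\kmodel^{\star,\asymm}$ to $\kmodel^\star$) to forget the ordering of each sample, $(\vec\alpha,\Y)\mapsto(\{\vec\alpha\},\Y)$. If $\vec\alpha$ is uniform over ordered distinct $k$-tuples then $\{\vec\alpha\}$ is uniform over $\binomset nk$ (each $k$-set is the support of exactly $k!$ of the $k!\binom nk$ tuples), and the conditional law of $\Y$ is unchanged since it was already a function of $\{\vec\alpha\}$; applying this map independently to every sample therefore produces \emph{exactly} a $\kmodel^\star_{\eps,\defi}(n)$ instance and maps the $\asymm$ null to the default null. Conversely, $\DeSymm$ (reducing $\kmodel^\star$ to $\kmodel^{\star,\asymm}$) augments each sample with an independent uniformly random ordering, $(\alpha,\Y)\mapsto(\pi(\alpha),\Y)$ with $\pi$ a uniform permutation of the $k$ elements of $\alpha$; then for any ordered tuple $\vec S$ with support $S$ one has $\Pr[\pi(\alpha)=\vec S]=\Pr[\alpha=S]\cdot\tfrac1{k!}=\tfrac1{k!\binom nk}$, independence across samples is immediate, and the value law is again correct, so the output is exactly $\kmodel^{\star,\asymm}_{\eps,\defi}(n)$. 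Both maps run in $\poly(n)$ time (they touch $\poly(n)$ samples, each of constant size $k$), leave the value coordinates — and hence the secret $x$ — untouched, and respect the planted/null dichotomy, so both the detection and the recovery guarantees transfer in both directions (for recovery, the output secret is literally the input secret). For the $\star\subseteq\set{\Pois}$ variant the identical argument applies verbatim, since a randomized i.i.d.\ sample-wise map sends a $\Pois(m)$-sized i.i.d.\ collection to a $\Pois(m)$-sized i.i.d.\ collection of the transformed law; alternatively one composes the non-Poissonized equivalence with Prop.~\ref{prop:sampling_equiv}.

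There is no substantive obstacle here — this is a cosmetic equivalence — but two points deserve a line of care. The with-replacement hypothesis in choice (4) is used precisely to keep the samples i.i.d.\ after the ordering map: for the without-replacement variants, forgetting orderings can collapse two ordered tuples to the same set (violating the ``$m$ distinct sets'' constraint), while adding orderings produces an $m$-family of ordered tuples with pairwise-distinct supports (hence not uniform among $m$-subsets of all ordered tuples), so neither direction is exact — this is why the statement restricts to sampling with replacement. The second point is the elementary counting above ($|[n]^{\underline k}|=k!\binom nk$ together with the factor-$k!$ bijection between orderings and sets), which is what makes the index-datum marginals exactly, not merely approximately, uniform; combined with the untouched value law this yields total-variation error $0$, as claimed.
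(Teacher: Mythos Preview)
Your proposal is correct and matches the paper's approach exactly: the paper's $\DeSymm$ adds a uniformly random ordering to each index set while $\Symm$ forgets the ordering, both leaving the value $\Y_j$ untouched, and the paper's proof is essentially the two-sentence observation that these maps are distributionally exact (it invokes ``Poisson splitting'' for the $\Pois$ case). Your write-up is in fact more careful than the paper's own proof, including the explicit counting argument and the remark on why without-replacement breaks both directions.
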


\subsubsection{Sampling $\alpha$ with or without Replacement: Choice (4)}\label{subsubsec:equiv_w_wor_replacement}

All reductions in this paper are stated for sampling with replacement (the default in choice (4)). With small technical modifications in the reduction algorithms, the same results also hold verbatim for the without-replacement variants, provided both the source and target models use the same sampling rule.

The above claim is not immediate: while known techniques such as cloning (\autoref{subsubsec:cloning}) imply that $\kxor$ variants with entry sampling \emph{with} or \emph{without} replacement are computationally equivalent for recovery, their equivalence for detection in our setting is not addressed, to the best of our knowledge.\footnote{The obstacle to obtaining the detection equivalence is the reduction from $\kmodel^{\star,\wor}$ to $\kmodel^{\star}$: for recovery we can use cloning to obtain entry duplicates, but for detection the null models are mapped incorrectly. This is because our null models (Def.~\ref{def:null_models}) contain a uniformly random signal $w_{\alpha} \sim \unif\sqb{\set{\pm 1}}$ in place of structured $x_\alpha$ and cloning replicates $w_\alpha$.} 
We show in \autoref{subsec:w_replacement}:
\begin{proposition}[Computational Equivalence of $\kmodel^\star$ and $\kmodel^{\star,\mathsf{wor}}$ for Recovery]\label{prop:w_wor_equiv_recovery}
Let $\star \subseteq \set{\Pois, \FULL/\asymm}$ and $\kmodel \in \set{\kxor, \ktens}$ be some fixed model choices (1-3) in Fig.~\ref{fig:kxor_variants}. The two models $\kmodel^\star_{\eps, \defi}(n)$ and $\kmodel^{\star, \wor}_{\eps, \defi}(n)$ are computationally equivalent for recovery: there are $\poly(n)$-time average-case reductions ($\ToTPCA/\ToPoiFromTPCA$, Alg.~\ref{alg:to_pca_sampling}, \ref{alg:to_poi_sampling_from_pca}) for recovery in both directions. Moreover, the reductions preserve the secret signal vector $x$ precisely.
\end{proposition}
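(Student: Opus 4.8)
The plan is to build the two maps $\ToPoiFromTPCA$ (without- to with-replacement) and $\ToTPCA$ (with- to without-replacement) separately, after first reducing the number of cases. By the $\Gaussianize$/$\Discretize$ equivalence (Prop.~\ref{prop:discr_gauss_equivalence_general}) it suffices to treat $\kmodel=\kxor$, and the remaining flags in $\star$ are orthogonal to the replacement choice: $\FULL$ and $\asymm$ only fix which index-object space $\alpha$ ranges over, while the $\Pois$ flag only fixes the law of the total number of samples. I would therefore design both maps so that they act on the block of observations attached to each index object $\alpha$ independently of the other blocks, which makes them compatible with every choice in $\star$; and I would pass to the Poissonized sample count via Prop.~\ref{prop:sampling_equiv}, so that in the with-replacement model each $\alpha$ carries an i.i.d.\ $\poi(\lambda)$ list of labels $x_\alpha\cdot\rad(\delta)$ with $\lambda\asymp m/\binom nk$, while in the without-replacement model $\alpha$ is present independently with probability $p\asymp m/\binom nk$ and, if present, carries a single label $x_\alpha\cdot\rad(\delta)$. (The $\FULL$ and $\asymm$ cases are identical with $\binom nk$ replaced by the size of the relevant index-object space.)

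For $\ToTPCA$ (with- to without-replacement) the map is simply deduplication: Poissonize the input, and for each index object that carries at least one label keep one label chosen uniformly among its copies and discard the rest. Since the copies are i.i.d.\ and independent of the count, the retained label is distributed as $x_\alpha\cdot\rad(\delta)$, independently across $\alpha$, and $\alpha$ is retained iff $\poi(\lambda)\ge 1$, i.e.\ with probability $1-e^{-\lambda}=\Theta(\lambda)$; so the output is exactly a without-replacement instance with the same secret $x$, the same SNR $\delta$, and mask probability $1-e^{-\lambda}$. As the effective sample count is $\Theta(m)$, the density $\eps$ is unchanged at polynomial scale, and de-Poissonizing (Prop.~\ref{prop:sampling_equiv} for the without-replacement family) completes this direction. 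This map in fact also sends the null to the null, but we only invoke it for recovery.

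For $\ToPoiFromTPCA$ (without- to with-replacement), for each present $\alpha$ with label $\Y_\alpha$ I would draw a replication count $N_\alpha$ distributed as $\poi(\lambda)$ conditioned on being at least $1$ (with $\lambda=-\ln(1-p)$, so the overall count per index object is $\poi(\lambda)$), and emit $N_\alpha$ copies of $\alpha$ whose labels are $N_\alpha$ independent applications of the scalar clone map $\RadClone$ ($\GaussClone$ in the Gaussian model) to $\Y_\alpha$ at clone-SNR $\delta'=(1-o(1))\delta$; absent index objects are not emitted. Each emitted label then has the correct marginal law $x_\alpha\cdot\rad(\delta')$ and $N_\alpha\sim\poi(\lambda)$, so the output ``locally'' matches a with-replacement instance at SNR $\delta'=\Theta(\delta)$ on the same secret.

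The main obstacle is that the copies emitted for a common $\alpha$ are \emph{not} mutually independent: cloning can only re-randomize around the single observed noise realization, so the joint law of the labels attached to $\alpha$ departs from the i.i.d.\ law of the with-replacement model whenever $\alpha$ receives two or more copies, and the number of such index objects is $\asymp m^2/\binom nk\asymp n^{k\eps}$, which is $\omega(1)$ as soon as $\eps>0$; hence the output is not within $o(1)$ total variation of the with-replacement model. This is precisely why the reduction does not extend to detection: under the null, $\RadClone$ replicates the single random sign $w_\alpha$ across all copies of $\alpha$, producing perfectly-correlated blocks that are statistically distinguishable from the i.i.d.\ with-replacement null -- the obstruction flagged in the footnote preceding the statement. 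For recovery the correlated blocks are benign, and this is the one step whose details I would write out carefully: the cloned instance is a randomized function of the without-replacement instance that leaves the planted $x$ untouched, so any recovery guarantee robust to higher-order noise correlations survives; more quantitatively, shrinking the clone-SNR slightly absorbs the $n^{k\eps}$ correlated blocks into extra i.i.d.\ noise and brings the output within $o(1)$ total variation of a genuine with-replacement planted instance at a mildly larger SNR-deficiency, which suffices for the recovery hardness implication since the conjectures quantify over all positive deficiencies (Prop.~\ref{prop:hardness_implication}). The remaining work -- verifying the Poissonization bookkeeping uniformly over the cases in $\star$ and checking that both maps are genuine $\poly(n)$-time average-case reductions for recovery preserving $x$ -- is routine given the per-block design.
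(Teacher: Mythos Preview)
Your $\ToTPCA$ direction (deduplicate, keep one label per observed index object) matches the paper. The gap is in $\ToPoiFromTPCA$: you assert that ``the copies emitted for a common $\alpha$ are not mutually independent: cloning can only re-randomize around the single observed noise realization.'' This is false for the cloning primitive the paper uses. $\GaussClone$ (Prop.~\ref{prop:entry_clone}) sends $Y\sim\N(\mu,1)$ together with a fresh $Z\sim\N(0,1)$ to $((Y-Z)/\sqrt 2,\,(Y+Z)/\sqrt 2)$, which are \emph{exactly} independent $\N(\mu/\sqrt 2,1)$; applied recursively it yields $A$ mutually independent clones at SNR $\mu/\sqrt A$, and $\XorClone$ inherits this independence by routing through $\Gaussianize$/$\Discretize$ (Lemma~\ref{lem:instance_clone}). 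So in the planted case the clones are genuinely i.i.d.\ and there is no correlation to ``absorb.'' Your diagnosis of the detection obstruction (the shared $w_\alpha$ under the null) is correct and is the sole reason the reduction is recovery-only, but that obstruction has no analogue in the planted law.

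Consequently the paper's $\ToPoiFromTPCA$ is much simpler than your sketch: it first simulates the target with-replacement multiplicity pattern by drawing $m$ i.i.d.\ elements from the index space, records the maximum multiplicity $m_c$, clones the \emph{entire} input instance $m_c$ times (so every output label has the same SNR $\delta'=\Thetat(\delta/\sqrt{m_c})$), and then fills the $m_i$ required copies of each index from $m_i$ distinct clones. The only estimate needed is the balls-into-bins bound $m_c=\Thetat(1)$, which gives $\delta'=\Thetat(\delta)$ and outputs an exact with-replacement planted instance. Your per-index scheme with random $N_\alpha$ would produce $\alpha$-dependent SNR $\delta/\sqrt{N_\alpha}$ unless you also clone to the global maximum, at which point it collapses to the paper's algorithm; and your fallback ``shrink the clone-SNR to absorb $n^{k\eps}$ correlated blocks into i.i.d.\ noise'' is both unnecessary and, as written, does not establish the claimed $o(1)$ total variation.
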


Additionally, we show the complete computational equivalence of $\kmodel$ and $\kmodel^{\wor}$ (for both detection and recovery) for $\eps = 1$:
\begin{proposition}[Computational Equivalence of $\kmodel^\star$ and $\kmodel^{\star,\mathsf{wor}}$ for $\eps = 1$]\label{prop:sampling_equiv_eps1}
Let $\star \subseteq \set{\Pois, \FULL/\asymm}$ and $\kmodel \in \set{\kxor, \ktens}$ be some fixed model choices (1-3) in Fig.~\ref{fig:kxor_variants} and let $\defi > 0$. The two models $\kmodel^\star_{\eps=1, \defi}(n)$ and $\kmodel^{\star, \wor}_{\eps=1, \defi}(n)$ are computationally equivalent: there are $\poly(n)$-time average-case reductions ($\ToTPCA/\ToPoiFromTPCA$, Alg.~\ref{alg:to_pca_sampling}, \ref{alg:to_poi_sampling_from_pca}) for both detection and recovery in both directions. Moreover, the reductions preserve the secret signal vector $x$ precisely.
\end{proposition}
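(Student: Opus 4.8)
The plan is to peel off every model decoration, reduce to the plain Gaussian case, Poissonize, and then prove the single core equivalence -- between $\ktens$ and $\ktens^{\wor}$ at $\eps=1$ -- by a purely entrywise argument whose engine is a \emph{Gaussian cloning} step. Cloning is an exact distributional identity, so it acts simultaneously and correctly on the planted and the null law; the only place the hypotheses $\eps=1$ and $\defi>0$ will really be used is to ensure, via Prop.~\ref{prop:tv_null}, that the null is statistically indistinguishable from pure Gaussian noise, which is what legitimizes applying cloning to null observations. I expect that last point -- mapping the null to the null -- to be the only nontrivial step.

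First I would strip the decorations. By Prop.~\ref{prop:discr_gauss_equivalence_general} ($\Gaussianize/\Discretize$) it suffices to treat the Gaussian model $\ktens$ (the $\kxor$ case then follows). By Prop.~\ref{prop:sampling_equiv} I may freely add or remove the $\Pois$ decoration, so I work with $\ktens^{\Pois}$ and $\ktens^{\wor,\Pois}$: by Poisson thinning the former observes each $\alpha\in\binomset n k$ an i.i.d.\ $\poi(\lambda)$ number of times with fresh noise, while the latter observes each $\alpha$ independently with probability $p$; at $\eps=1$ both $\lambda$ and $p$ are $\Theta(1)$, and using the $\Theta(1)$ slack in $m$ I pin them down by $1-e^{-\lambda}=p$. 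For $\FULL\in\star$ I invoke Prop.~\ref{prop:comp_equiv_full_collection}: at $\eps=1$ all sub-instances in its decomposition have density $\eps_r=\min\{k/(k-2r),1\}=1$, so (for either sampling rule) the $\FULL$ case reduces termwise to the core case; $\asymm$ is stripped by the symmetrization of \autoref{subsec:symm} (Prop.~\ref{prop:symm_asymm_equiv} covers the with-replacement case, and its without-replacement analog is routine). All of these preserve $n$ and the secret $x$ exactly.

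Next I would handle the two directions of the core equivalence; in each I process the $\binomset n k=\poly(n)$ sets $\alpha$ independently, so the reduction is poly-time. The direction $\ktens^{\Pois}\to\ktens^{\wor,\Pois}$ is easy and in fact holds for every $\eps$: for each $\alpha$ observed at least once keep one observation and discard the rest; the kept-indicators are i.i.d.\ $\ber(1-e^{-\lambda})=\ber(p)$ and, conditioned on being kept, the value is a fresh $\N(\delta x_\alpha,1)$ (planted) or $\N(\delta w_\alpha,1)$ (null), so the output is exactly $\ktens^{\wor,\Pois}_{\eps=1,\defi}(n)$. The other direction $\ktens^{\wor,\Pois}\to\ktens^{\Pois}$ is the substantive one: if $\alpha$ is unobserved I emit nothing; if $\alpha$ is observed with value $V_\alpha$ I draw $N_\alpha$ from the zero-truncated $\poi(\lambda)$ law capped at $N_{\max}=\lceil\log n\rceil$ (perturbing total variation by only $n^{-\omega(1)}$), rescale to $V_\alpha/\sqrt{N_{\max}}$, apply Gaussian cloning (cf.\ \autoref{subsubsec:cloning}) to obtain $N_{\max}$ i.i.d.\ copies, and keep a uniformly random $N_\alpha$ of them as equations on support $\alpha$. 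Since cloning turns one draw from $\N(\mu,\sigma^2)$ into $N_{\max}$ i.i.d.\ draws from $\N(\mu,N_{\max}\sigma^2)$, with $\sigma^2=1/N_{\max}$ it yields i.i.d.\ $\N(\mu,1)$ copies where $\mu=\delta x_\alpha/\sqrt{N_{\max}}$ in the planted case; hence each kept copy is a fresh $\N(\delta' x_\alpha,1)$ with $\delta'=\delta/\sqrt{N_{\max}}=\Thetat(\delta)$, the per-$\alpha$ counts are i.i.d.\ -- a mixture putting mass $1-p=e^{-\lambda}$ on $0$ and otherwise following the capped zero-truncated $\poi(\lambda)$, i.e.\ $\poi(\lambda)$ up to the $n^{-\omega(1)}$ cap error -- and re-Poissonizing the multiplicities back into ``$m$ equations'' (Prop.~\ref{prop:sampling_equiv}) gives $\ktens^{\Pois}_{\eps=1,\defi'}(n)$ with $\defi'=\defi+o(1)$. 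Both reductions leave $x$ untouched, so the recovery part transfers immediately in both directions.

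The main obstacle is verifying that the substantive reduction sends the null to the null. Under the null of $\ktens^{\wor,\Pois}$ the value at an observed $\alpha$ is $\N(\delta w_\alpha,1)$, and cloning it replicates the single hidden sign $w_\alpha$ across all $N_\alpha$ copies -- which is \emph{not} the target null law whenever $\delta$ carries genuine signal; this is exactly the obstruction flagged for general $\eps$ (footnote near Prop.~\ref{prop:w_wor_equiv_recovery}). The fix is precisely the hypothesis $\eps=1,\defi>0$: by Prop.~\ref{prop:tv_null} the $w$-null is within $o(1)$ total variation of the $\delta=0$ null, for both source and target, so I analyze the reduction on the $\delta=0$ null, where $V_\alpha\sim\N(0,1)$ is signal-free and cloning $V_\alpha/\sqrt{N_{\max}}$ genuinely produces i.i.d.\ $\N(0,1)$ copies; the output is then exactly the $\delta=0$ null of $\ktens^{\Pois}$, and chaining the two $o(1)$ bounds shows the source null is sent to the target null within $o(1)$. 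This completes the detection equivalence, and together with the preservation of $x$ it completes the proposition. The only remaining care is bookkeeping -- matching a $\ber(p)$ indicator composed with a zero-truncated Poisson to a true $\poi(\lambda)$ via $1-e^{-\lambda}=p$, and the $\poly\log n$ truncation that tames the unbounded Poisson support -- which I expect to be routine.
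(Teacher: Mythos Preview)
Your proposal is correct and matches the paper's approach: both directions are handled by the same two primitives (keep-one-per-$\alpha$ for $\to\wor$, Gaussian cloning to manufacture duplicates for $\wor\to$), and the null-case obstruction is resolved identically via Prop.~\ref{prop:tv_null} at $\eps=1,\defi>0$. The only differences are cosmetic: the paper works generically across all decorations $\star$ at once (so skips your stripping step) and samples the multiplicity pattern globally via a balls-into-bins draw rather than per-$\alpha$ from a zero-truncated Poisson; in particular it avoids your claim that $\ktens^{\wor,\Pois}$ has independent $\ber(p)$ presence indicators, which is not literally true for the Poissonized-wor model as defined.
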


Case $\eps = 1$ in Prop.~\ref{prop:sampling_equiv_eps1} is of independent interest, since it captures Tensor PCA model. Recall that in the standard Tensor PCA, denoted as $\ktensorpca_\defi (n)$ (equiv. $\tensorpca k n \delta$ with $\delta = n^{-k/4 - \defi/2}$), there is a secret vector $x\in\set{\pm 1}^n$ and one observes a noisy tensor 
\begin{equation}\label{eq:tpca_full_overview}
\delta \cdot x^{\otimes k} + \noiseG\,,\qquad \delta = n^{-k/4 - \defi/2}\,,
\end{equation}
where $\noiseG$ is a symmetric order-$k$ tensor of $\N(0,1)$. As a special case, Prop.~\ref{prop:sampling_equiv_eps1} relates this model to the other variants we consider.\footnote{To obtain the exact number of samples in $\ktensorpca$, the potentially missing $(1- \poly\log(n)^{-1})$ fraction of entries can be filled with pure noise via the $\DenseRed$ reduction (Alg.~\ref{alg:dense_red}), see \autoref{subsec:entry_adjust}}
\begin{corollary}[of Prop.~\ref{prop:sampling_equiv_eps1}]\label{cor:tensor_pca_wr_equiv}
    The $\ktensorpca_\defi (n)$ (equiv. $\tensorpca k n \delta$ with $\delta = n^{-k/4 - \defi/2}$) model with $\defi > 0$, which observes the noisy symmetric tensor in Eq.~\eqref{eq:tpca_full_overview}, is computationally equivalent to $\kmodel^{\star, \FULL}_{\eps=1, \defi}(n)$ for the choices $\star \subseteq \set{\Pois, \asymm, \wor}$ (and, in particular, $\gtenspfull k n {\eps=1} \defi$). 

    Moreover, the model that observes only distinct-index entries of Eq.~\eqref{eq:tpca_full_overview} is computationally equivalent to $\kmodel^{\star}_{\eps=1, \defi}(n)$ for the choices $\star \subseteq \set{\Pois, \asymm, \wor}$ (and, in particular, $\gtensp k n {\eps=1} \defi$). 
\end{corollary}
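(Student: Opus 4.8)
The plan is to realize the standard (full–symmetric) Tensor PCA observation as a single specific member of the $\kxor$/$\ktens$ family at density $\eps=1$ — the Gaussian-noise, multiset-indexed, without-replacement variant — and then read off the stated equivalences by composing the with/without-replacement reduction of Prop.~\ref{prop:sampling_equiv_eps1} with the cosmetic equivalences of \autoref{subsec:kxor_variants}.

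First I would observe that observing the noisy symmetric tensor $\delta x^{\otimes k}+\noiseG$ of Eq.~\eqref{eq:tpca_full_overview} is, verbatim, observing for every $k$-multiset $\alpha\in[n]^k$ the scalar $\delta x_\alpha+\noiseG_\alpha$ with $\noiseG_\alpha\sim_{\mathrm{i.i.d.}}\N(0,1)$ indexed by multisets; since the number of $k$-multisets is $\binom{n+k-1}{k}=\Theta(n^k)$, and models differing by constant (or $\poly\log n$) factors in $m$ are equivalent for our purposes (Prop.~\ref{prop:map_up_to_constant}), this count matches $m=n^{k(\eps+1)/2}$ at $\eps=1$. Hence this is exactly the $\ktens$ model with Gaussian noise (choice (1)), fixed sample count (choice (2)), multiset index space (choice (3): $\FULL$, Def.~\ref{def:multiset}), and without-replacement sampling (choice (4): $\wor$), i.e.\ $\ktens^{\FULL,\wor}_{\eps=1,\defi}(n)$; and observing only the distinct-index entries of $\delta x^{\otimes k}+\noiseG$ is the same thing with the default index space $\binomset n k$ (there are $\binom nk=\Theta(n^k)$ $k$-subsets), i.e.\ $\ktens^{\wor}_{\eps=1,\defi}(n)$. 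The one wrinkle in this identification is that, passing from the with-replacement model to the genuine full tensor, at most a $\poly\log(n)^{-1}$ fraction of multisets may be unsampled; I would fill these in with independent pure-noise entries using $\DenseRed$ (Alg.~\ref{alg:dense_red}, \autoref{subsec:entry_adjust}), which perturbs the target law by $o(1)$ in total variation.

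Next I would invoke Prop.~\ref{prop:sampling_equiv_eps1} (its hypothesis $\defi>0$ is in force) with $\kmodel=\ktens$ and $\star=\set{\FULL}$, respectively $\star=\emptyset$, to obtain two-way, $\poly(n)$-time, signal-preserving average-case reductions between $\ktens^{\FULL,\wor}_{\eps=1,\defi}(n)$ and $\gtenspfull k n {\eps=1}\defi$, and between $\ktens^{\wor}_{\eps=1,\defi}(n)$ and $\gtensp k n {\eps=1}\defi$, for both detection and recovery. For the detection direction I would additionally use that at $\eps=1$ with $\defi>0$ one has $\delta\ll n^{-k/4}$, so the two null conventions (random signs $w_\alpha$ versus $\delta=0$) are $o(1)$-indistinguishable (Remark~\ref{rmk:null}, Prop.~\ref{prop:tv_null}) and the detection tasks coincide. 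Finally, composing with the remaining cosmetic equivalences — $\Gaussianize/\Discretize$ for choice (1) (Prop.~\ref{prop:discr_gauss_equivalence_general}), $\ResampleIID/\ResampleM$ for choice (2) (Prop.~\ref{prop:sampling_equiv}), $\DeSymm/\Symm$ for the $\asymm$ option of choice (3) (Prop.~\ref{prop:symm_asymm_equiv}), with Prop.~\ref{prop:sampling_equiv_eps1} itself already handling the $\wor$ option of choice (4) — promotes both equivalences to equivalence of $\ktensorpca_\defi(n)$ with $\kmodel^{\star,\FULL}_{\eps=1,\defi}(n)$ and of the distinct-index version with $\kmodel^{\star}_{\eps=1,\defi}(n)$, for every $\star\subseteq\set{\Pois,\asymm,\wor}$ and $\kmodel\in\set{\kxor,\ktens}$, including in particular $\gtenspfull k n {\eps=1}\defi$ and $\gtensp k n {\eps=1}\defi$.

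I expect the only genuinely nonroutine point to be the with-replacement $\leftrightarrow$ without-replacement passage at $\eps=1$ — which is exactly the content of Prop.~\ref{prop:sampling_equiv_eps1}, and for which the $\defi>0$ (equivalently $\delta\ll n^{-k/4}$) assumption is what lets one resample fresh Gaussian noise into duplicated entries — while everything else is bookkeeping: chaining already-established reductions, matching entry/sample counts up to constant and $\poly\log n$ factors via Prop.~\ref{prop:map_up_to_constant}, patching missing entries with $\DenseRed$, and aligning the two null conventions via Prop.~\ref{prop:tv_null}.
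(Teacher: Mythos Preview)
Your proposal is correct and follows essentially the same approach as the paper: identify $\ktensorpca_\defi(n)$ with $\ktens^{\FULL,\wor}_{\eps=1,\defi}(n)$ (resp.\ the distinct-index version with $\ktens^{\wor}_{\eps=1,\defi}(n)$), invoke Prop.~\ref{prop:sampling_equiv_eps1} for the with/without-replacement passage, and compose with the remaining cosmetic equivalences of \autoref{subsec:kxor_variants}, patching the sample count via $\DenseRed$. The paper treats this corollary as an immediate consequence with only the $\DenseRed$ footnote spelled out, so your write-up is in fact more detailed than the paper's own justification.
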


\subsection{Simple Reductions}\label{subsec:overview_simple_red}

\subsubsection{Cloning and Splitting Reductions}\label{subsubsec:cloning_splitting_overview}

Here we consider the reductions (cloning and splitting) that map one input instance $\calY \sim \kmodel^{\star}$ to two output instances $\calY\up1, \calY\up2 \sim \kmodel^\star$ sharing the same signal $x$ (\autoref{subsec:cloning_splitting}). Cloning keeps the same index sets as the input but ensures the noise independence across the two outputs. Splitting makes the two outputs fully independent instances (independent indices and independent noise), while preserving the same signal $x$.

\paragraph{Cloning.} Cloning uses the standard Gaussian-cloning identity (see, e.g., \cite{brennan2018reducibility}): given $Y\sim\N(\mu,1)$, generate an independent $Z\sim\N(0,1)$ and set $Y\up 1 = \frac {Y-Z} {\sqrt{2}}$, $ Y\up 2 = \frac{Y+Z}{\sqrt{2}}$. Then $Y\up 1, Y\up2$ are independent and each is distributed as $\N(\mu/\sqrt{2},1)$. Applying this primitive entrywise,\footnote{For $\kmodel=\kxor$ we first reduce to the Gaussian analog via \autoref{subsubsec:equiv_noise_model}, apply the cloning primitive entrywise, then (if needed) map back.} we obtain:

\begin{lemma}[Instance Cloning]\label{lem:instance_clone}

Let $\star \subseteq \set{\Pois, \FULL/\asymm, \wor}$ and $\kmodel \in \set{\kxor, \ktens}$ be some fixed model choices (1-4) in Fig.~\ref{fig:kxor_variants}. There exists a poly-time algorithm ($\XorClone/\kTensClone$, Alg.~\ref{alg:xor_clone}) that, given an input $\calY \sim \kmodel^{\star}(n, m, \delta)$, returns two instances $\calY\up1, \calY\up 2 \sim \kmodel^{\star}(n, m, \Thetat(\delta))$ that have independent entrywise noise, the same index sets, and the same signal vector $x\in\set{\pm1}^n$ as the input $\calY$. 
\end{lemma}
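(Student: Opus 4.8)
The plan is to realize cloning \emph{entrywise} using the Gaussian cloning identity, treating the Gaussian noise model $\ktens$ directly and reducing the discrete model $\kxor$ to it.

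\textbf{Gaussian case $\kmodel = \ktens$.} Given an input $\calY = \set{(\alpha_j, \Y_j)}_{j}$ with $\Y_j = \delta\, x_{\alpha_j} + \noiseG_j$, the algorithm $\kTensClone$ draws fresh $Z_j \sim_{\mathrm{i.i.d.}} \N(0,1)$, copies the index sets verbatim (and the realized sample count $\mpoi$, if $\star$ includes $\Pois$), and outputs $\calY\up b = \set{(\alpha_j, \Y_j\up b)}_j$ for $b \in \set{1,2}$ with $\Y_j\up 1 = (\Y_j - Z_j)/\sqrt 2$ and $\Y_j\up 2 = (\Y_j + Z_j)/\sqrt 2$. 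I would then argue: conditionally on $x$ and on $\set{\alpha_j}_j$, we have $\Y_j\up b = (\delta/\sqrt 2)\, x_{\alpha_j} + \noiseG_j\up b$, where $\noiseG_j\up 1 = (\noiseG_j - Z_j)/\sqrt2$ and $\noiseG_j\up 2 = (\noiseG_j + Z_j)/\sqrt2$ are jointly Gaussian with $\Var(\noiseG_j\up b) = 1$ and $\Cov(\noiseG_j\up 1, \noiseG_j\up 2) = (\Var(\noiseG_j) - \Var(Z_j))/2 = 0$, hence independent $\N(0,1)$; and these pairs are independent over $j$ because the $(\noiseG_j, Z_j)$ are. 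Thus each $\calY\up b$ is distributed exactly as $\ktens^\star(n,m,\delta/\sqrt 2)$ with the same signal $x$, the two outputs share the index sets and $x$, and their entrywise noise is conditionally independent. Since $\kTensClone$ never touches the index-set sampling, this holds verbatim for every choice $\star \subseteq \set{\Pois, \FULL/\asymm, \wor}$, and $\delta/\sqrt 2 = \Theta(\delta)$.

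\textbf{Discrete case $\kmodel = \kxor$.} I would compose three coordinatewise maps: $\Gaussianize$ (Prop.~\ref{prop:discr_gauss_equivalence_general}, Alg.~\ref{alg:gaussianization}), mapping $\kxor^\star(n,m,\delta)$ to $\ktens^\star(n,m,\Thetat(\delta))$ while preserving $x$ and all index sets exactly; then $\kTensClone$ above, yielding two noise-independent copies of $\ktens^\star(n,m,\Thetat(\delta))$; then $\Discretize$ (Alg.~\ref{alg:discretization}) applied separately to the two copies, returning instances of $\kxor^\star(n,m,\Thetat(\delta))$. Since $\Gaussianize$ and $\Discretize$ act entrywise with fresh, independent randomness, running them separately on the two branches preserves the conditional noise independence established above; each preserves $x$ and the index sets; and each costs only a $\poly\log n$ factor in the signal level, absorbed into $\Thetat(\cdot)$. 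The running time is $\poly(n,m) = \poly(n)$.

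\textbf{Null instances and remaining care.} Since the null is obtained from the planted model by replacing $x_{\alpha_j}$ with i.i.d.\ uniform signs $w_j$ (Remark~\ref{rmk:null}), running the same algorithm on a null input returns two null instances that share $\set{w_j}_j$ and have independent entrywise noise, as needed downstream; combining the cases yields the stated algorithm. The only delicate point is bookkeeping the conditioning --- all independence claims are conditional on the shared signal and the shared index sets --- together with the observation that composition with entrywise $\Gaussianize/\Discretize$ does not couple the two output branches; I do not anticipate any substantial obstacle beyond this.
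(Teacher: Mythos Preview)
Your proposal is correct and follows essentially the same approach as the paper: entrywise Gaussian cloning for $\ktens$, and for $\kxor$ the composition $\Gaussianize \to \kTensClone \to \Discretize$, with the index sets and signal untouched throughout. The paper's proof is terser but identical in substance.
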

\begin{remark}[Reduction up to $\poly\log n$ Factors]\label{rmk:clone_const}
    In the $\eps, \defi$ parametrization, Lemma~\ref{lem:instance_clone} yields an average-case reduction from $\calY \sim \kmodel^{\star}_{\eps, \defi}(n)$ to $\calY\up1, \calY\up2 \sim \kmodel^{\star}_{\eps, \defi}(n)$ (two same-index copies with independent noise) by Prop.~\ref{prop:map_up_to_constant}.
\end{remark}
\begin{remark}[Creating Several Clones]\label{rmk:clone_several}
    To obtain $A > 2$ clones $\calY\up1, \dots, \calY\up {A}$, we can repeat the cloning procedure recursively $\lceil \log A \rceil$ times, incurring a $\Theta(\sqrt{A})$ loss in $\delta$. In $\eps, \defi$ parametrization this becomes an average-case reduction from $\calY \sim \kmodel^{\star}_{\eps, \defi}(n)$ to $\calY\up1, \dots, \calY\up {A} \sim \kmodel^{\star}_{\eps, \defi'}(n)$ for $\defi' = \defi+ \log_n A$ admitting the same guarantees: entrywise noise independence across the copies, the same index sets and signal vector $x \in \set{\pm 1}^n$ as in $\calY$. 
\end{remark}

\paragraph{Splitting.}
Cloning achieves noise independence but reuses the input index sets. Splitting instead produces two fully independent instances by allocating each observed sample to one of the two outputs.

We state splitting for sampling with replacement. For sampling without replacement our techniques suffice for an average-case reduction for recovery;\footnote{With $\poly\log n$ losses in parameters} with minor modifications this suffices as a subroutine in our main reductions.

\begin{lemma}[Instance Splitting]\label{lem:instance_split}

Let $\star \subseteq \set{\Pois, \FULL/\asymm}$ and $\kmodel \in \set{\kxor, \ktens}$ be some fixed model choices (1-3) in Fig.~\ref{fig:kxor_variants} and choice (4) set to sampling with replacement. There exists a poly-time algorithm ($\kTensSplit$) that, given an input $\calY \sim \kmodel^{\star}(n, m, \delta)$, returns two independent instances $\calY\up1, \calY\up 2 \sim \kmodel^{\star}(n, \Theta(m), \delta)$ that share the same signal vector $x\in\set{\pm1}^n$ as the input $\calY$. 
\end{lemma}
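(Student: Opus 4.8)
The plan is to exploit the fact that, under sampling with replacement (choice (4)), the $m$ samples of $\calY\sim\kmodel^{\star}(n,m,\delta)$ are, conditionally on the signal $x\sim\unif\set{\pm1}^n$, i.i.d.\ draws from one fixed per-sample law (fixed by the noise model in choice (1) and the index space in choice (3)), and a $\kmodel^{\star}(n,M,\delta)$ instance with that signal is by definition nothing but $M$ such conditionally-i.i.d.\ draws (with a $\poi(M)$ count if $\Pois\in\star$). So splitting amounts to partitioning the input samples into two groups that (i) obey the count rule of choice (2) and (ii) are conditionally independent given $x$ — unlike the cloning reduction of Lem.~\ref{lem:instance_clone}, which reuses the index sets and only decorrelates the \emph{noise}, splitting must genuinely divide the samples. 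Crucially the algorithm $\kTensSplit$ never needs to know $x$: it only reshuffles the samples it is handed, so it is a legitimate poly-time (in fact linear-time in the input, which is $\poly(n)$ for constant $k$) average-case reduction, and the transfer of the detection/recovery tasks is then automatic via Lem.~\ref{lem:red_implications_gen}.

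I would treat the two count rules of choice (2) separately. If choice (2) is the fixed-count model ($\Pois\notin\star$), set $M=\lfloor m/2\rfloor$ and let $\calY\up1$ be samples $1,\dots,M$ of $\calY$ and $\calY\up2$ be samples $M{+}1,\dots,2M$ (discarding sample $m$ if $m$ is odd): two disjoint blocks of conditionally-i.i.d.\ samples are conditionally independent given $x$, and each block is \emph{exactly} distributed as $\kmodel^{\star}(n,M,\delta)$ with the same signal $x$, so the reduction has zero total-variation error and $M=\Theta(m)$. If choice (2) is Poissonized ($\Pois\in\star$), I would instead use Poisson thinning: flip an independent fair coin for each of the $\mpoi\sim\poi(m)$ samples and route it to $\calY\up1$ or $\calY\up2$. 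By the splitting property of the Poisson distribution, $\calY\up b$ carries $\mpoi\up b\sim\poi(m/2)$ samples, the counts $\mpoi\up1,\mpoi\up2$ are independent, and given these counts (and $x$) the retained samples are i.i.d.\ from the original law — hence $\calY\up1,\calY\up2$ are conditionally independent given $x$, each distributed exactly as $\kmodel^{\star,\Pois}(n,m/2,\delta)$ with signal $x$. (A single unified argument is also available: apply $\ResampleIID$ of Prop.~\ref{prop:sampling_equiv} to pass to the Poissonized model, Poisson-thin, then apply $\ResampleM$ of Prop.~\ref{prop:sampling_equiv} to each output to return to the desired count rule; this composes a constant number of $o(1)$-error reductions and is still $o(1)$ in total variation.)

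The remaining verifications are bookkeeping. The index space of choice (3) is irrelevant: disjoint blocks, resp.\ Poisson-thinned subsets, of i.i.d.\ samples are again i.i.d.\ samples from the same per-sample law whether $\alpha$ ranges over $\binomset n k$, $[n]^k$, or $[n]^{\underline{k}}$. The signal is preserved exactly and is shared — the output joint law is ``draw $x\sim\unif\set{\pm1}^n$, then draw $\calY\up1,\calY\up2$ conditionally independent given $x$ from the target model,'' which is precisely what ``two independent instances that share the same signal $x$'' means (so $x$ is still recovered identically from either output). For the detection task, the same construction sends a null instance of $\kmodel^{\star}$ to a pair of null instances that are in fact \emph{unconditionally} independent, since the null carries no shared planted structure. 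Runtime is linear in the input, hence $\poly(n)$, and the total-variation error is $0$ for the direct construction (or $o(1)$ along the unified route).

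The one point worth stressing — and the only place any real difficulty would arise — is \emph{why} the argument is confined to sampling with replacement: for the without-replacement variant the $m$ index sets form a uniform sample \emph{without} replacement from the index space, so although the first block of $M$ of them is again a without-replacement sample of size $M$, the second block is a uniform without-replacement sample \emph{conditioned on being disjoint from the first}, hence not independent of it. Circumventing this (not needed here; cf.\ the remark following the statement and Prop.~\ref{prop:w_wor_equiv_recovery}) would be the genuine obstacle, whereas for with-replacement sampling there is essentially nothing beyond the bookkeeping above.
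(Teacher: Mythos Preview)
Your proposal is correct and takes essentially the same approach as the paper: split the samples into two groups (a deterministic first-half/second-half partition for the fixed-count case, Poisson thinning for the Poissonized case), with the conditional i.i.d.\ structure doing all the work. The paper's proof is a one-paragraph sketch that uses a \emph{random} partition in the fixed-count case rather than your deterministic one, but since the samples are exchangeable this is a cosmetic difference only.
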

Remarks~\ref{rmk:clone_const} and \ref{rmk:clone_several} similarly apply.

\subsection{Reductions that Change the Number of Samples $m$}\label{subsec:entry_adjust}

In \autoref{subsec:entry_adjust_app} we prove simple reductions that modify the number of samples in $\kmodel^\star(n,m,\delta)$ from $m$ to $m'$ by either (1) injecting pure noise observations, or (2) removing some of the samples. 

\begin{lemma}[Adjusting the Number of Samples $m$]\label{lem:m_adjust}
    Let $\star \subseteq \set{\Pois, \FULL/\asymm, \wor}$ and $\kmodel \in \set{\kxor, \ktens}$ be some fixed model choices (1-4) in Fig.~\ref{fig:kxor_variants} and let $k, n \in \Z^{+}$, $\delta \in \R$ and $m,m' \in\Z^{+}$ be parameters. There exists a poly-time average-case reduction ($\DenseRed/\SparseRed$, Alg.~\ref{alg:dense_red}, \ref{alg:sparse_red}) for both detection and recovery $$\text{from}\quad\kmodel^\star(n,m,\delta)\quad\text{to}\quad\kmodel^\star(n,m',\delta')\,, \qquad\text{with }\delta' = \begin{cases}
        \delta\,, &m' \leq m\\
        \Thetat(\delta\cdot m/m')\,, &m' > m\,. \end{cases}$$
        Moreover, the reduction preserves the secret signal vector $x$ precisely.
\end{lemma}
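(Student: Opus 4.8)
The plan is to split into the two regimes and treat them by elementary sampling operations: when $m' \le m$ the reduction $\SparseRed$ simply \emph{discards} samples, and when $m' > m$ the reduction $\DenseRed$ \emph{injects} independent pure-noise samples. In both cases the secret $x$ is never touched, so recovery transfers automatically; one only needs to verify that the planted law and the null law (Remark~\ref{rmk:null}) are each mapped within $o(1)$ total variation, which then also gives the detection transfer.

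\textbf{Removing samples ($m' \le m$).} Since the $m$ samples of $\kmodel^\star(n,m,\delta)$ are i.i.d.\ (Def.~\ref{def:kxor_family}), I would output a uniformly random size-$m'$ sub-collection. By exchangeability this is distributed \emph{exactly} as $\kmodel^\star(n,m',\delta)$, and it sends the null (uniform labels in place of $x_\alpha$) to the target null; for the Poissonized variants the same effect is obtained by Poisson thinning (keep each sample independently with probability $m'/m$, so $\poi(m)\mapsto\poi(m')$). Hence $\delta'=\delta$ with no loss, and the construction is uniform over all four model choices in Fig.~\ref{fig:kxor_variants}.

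\textbf{Adding samples ($m' > m$), discrete model.} Here the engine is the Rademacher mixture identity: a $\pm1$ variable is determined by its mean, so if $Z$ equals $x_\alpha\rad(\delta)$ with probability $p$ and is an independent uniform sign otherwise, then $\Pr[Z=1\mid\alpha,x]=\tfrac12(1+p\,x_\alpha\delta)$, i.e.\ $Z\disteq x_\alpha\rad(p\delta)$. Thus a (weaker) $\kxor$ sample is produced by a coin flip: with probability $p$ emit a genuine observed sample, otherwise emit $(\alpha,w)$ with $\alpha$ freshly drawn from the model's index distribution and $w$ an independent uniform sign. I expect the main obstacle to be exactly this: we are handed \emph{exactly} $m$ signal-carrying samples, so naively concatenating $m'-m$ pure-noise samples and shuffling makes the number of signal samples deterministically $m$ rather than $\bin(m',m/m')$; then the count of ``revealing'' coordinates is $\bin(m,\delta)$ instead of the target's $\bin(m',\delta m/m')$, and for $\delta=\Theta(1)$ these are at constant total-variation distance. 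The fix is to make the signal/noise split genuinely i.i.d.\ per sample by Poissonizing: via Prop.~\ref{prop:sampling_equiv} pass to $\kxor^{\star,\Pois}(n,\Theta(m),\delta)$, append an \emph{independent} $\poi(\lambda)$ collection of pure-noise samples (with minor care in the $\wor$ case to keep index sets distinct), and use that a superposition of two independent Poisson collections is a Poisson collection whose samples are i.i.d.\ draws from the mixture of the two sample laws; by the identity this is exactly $\kxor^{\star,\Pois}\big(n,\,\Theta(m)+\lambda,\ \tfrac{\Theta(m)}{\Theta(m)+\lambda}\delta\big)$. Choosing $\lambda$ so the total Poisson mean is $m'\poly\log n$ makes the bias $\Thetat(\delta m/m')$; de-Poissonizing (Prop.~\ref{prop:sampling_equiv}) and then subsampling down to exactly $m'$ samples via the first part yields $\kxor^\star(n,m',\Thetat(\delta m/m'))$, with null mapped to null since pure-noise samples carry uniform labels, as does the discrete null.

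\textbf{Gaussian model.} For $\ktens^\star$ and $m'\le m$ I again subsample exactly ($\delta'=\delta$). For $m'>m$ the mixture identity is unavailable — a mixture of $\N(\delta x_\alpha,1)$ and $\N(0,1)$ is not Gaussian — so I would detour through the discrete model: $\ktens^\star(n,m,\delta)\xrightarrow{\Discretize}\kxor^\star(n,m,\Thetat(\delta))$, then the discrete $\DenseRed$ above, then $\kxor^\star(n,m',\Thetat(\delta m/m'))\xrightarrow{\Gaussianize}\ktens^\star(n,m',\Thetat(\delta m/m'))$, all via Prop.~\ref{prop:discr_gauss_equivalence_general}. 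Each sub-reduction is poly-time, preserves $x$, and transfers detection, and the two extra $\poly\log n$ factors are absorbed into $\Thetat$; the conclusion then matches the Lemma up to the $\poly\log n$ equivalence of Prop.~\ref{prop:map_up_to_constant}.
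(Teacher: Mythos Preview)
Your proposal is correct and follows essentially the same approach as the paper: subsample (or Poisson-thin) when $m'\le m$, and for $m'>m$ work in the Poissonized discrete model, superpose an independent Poisson stream of pure-noise samples so that the Rademacher mixture identity yields the reduced bias $\delta m/m'$, then transfer to the other variants via Prop.~\ref{prop:sampling_equiv} and Prop.~\ref{prop:discr_gauss_equivalence_general}. Your explicit discussion of why naive concatenate-and-shuffle can fail (deterministic versus binomial signal count) is a nice addition the paper omits; the $\poly\log n$ overshoot you mention for $\lambda$ is unnecessary since de-Poissonization already handles the concentration, but it does no harm within the $\Thetat$.
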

In the $\eps, \defi$ parametrization this yields:
\begin{corollary}\label{cor:m_adjust}
    Let $\star \subseteq \set{\Pois, \FULL/\asymm, \wor}$ and $\kmodel \in \set{\kxor, \ktens}$ be some fixed model choices (1-4) in Fig.~\ref{fig:kxor_variants} and let $k, n \in \Z^{+}$, $\eps, \eps' \in [0,1]$, and $\defi \in \R$ be parameters. There exists a poly-time average-case reduction ($\DenseRed/\SparseRed$, Alg.~\ref{alg:dense_red}, \ref{alg:sparse_red}) for both detection and recovery 
\begin{align*}&\text{from}\quad\kmodel^\star_{\eps,\defi}(n)\quad\text{to}\quad\kmodel^\star_{\eps',\defi'}(n)\,, \quad \text{where } \defi' = \defi + {k|\eps-\eps'|/2}\,.
    \end{align*}
    Moreover, the reduction preserves the secret signal vector $x$ precisely.
\end{corollary}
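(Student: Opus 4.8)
The plan is to recognize the claim as the $(\eps,\defi)$-reparametrization (Def.~\ref{def:kxor_and_ktens_reparam}) of Lemma~\ref{lem:m_adjust}, unfold the notation, invoke that lemma once, and check that the induced change of signal level is exactly the claimed shift $\defi' = \defi + k|\eps-\eps'|/2$. Concretely, I would write $\kmodel^\star_{\eps,\defi}(n) = \kmodel^\star(n,m,\delta)$ with $m = n^{k(\eps+1)/2}$ and $\delta = n^{-k\eps/4-\defi/2}$, set the target sample count to $m^{\star} = n^{k(\eps'+1)/2}$ (rounded to an integer, which is immaterial at polynomial scale), and -- depending on the sign of $\eps-\eps'$ -- apply either $\SparseRed$ (when $\eps'\le\eps$, i.e.\ $m^\star\le m$) or $\DenseRed$ (when $\eps'>\eps$, i.e.\ $m^\star>m$) from Lemma~\ref{lem:m_adjust} to obtain an instance of $\kmodel^\star(n,m^\star,\delta^\star)$. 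It then remains to identify $\delta^\star$ with $n^{-k\eps'/4-\defi'/2}$.

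In the sparsifying case Lemma~\ref{lem:m_adjust} gives $\delta^\star=\delta=n^{-k\eps/4-\defi/2}$, and equating exponents with $n^{-k\eps'/4-\defi'/2}$ yields $\defi'=\defi+k(\eps-\eps')/2=\defi+k|\eps-\eps'|/2$. In the densifying case the lemma gives $\delta^\star=\Thetat\b(\delta\cdot m/m^\star\b)$, and since $m/m^\star=n^{k(\eps-\eps')/2}$ this equals $\Thetat\b(n^{-k\eps/4-\defi/2+k(\eps-\eps')/2}\b)=\Thetat\b(n^{-k\eps'/4-\defi'/2}\b)$ with $\defi'=\defi+k(\eps'-\eps)/2=\defi+k|\eps-\eps'|/2$. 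The two computations agree, so in both regimes the output matches $\kmodel^\star_{\eps',\defi'}(n)$ up to the $\poly\log n$ factors hidden in $\Thetat$, which Prop.~\ref{prop:map_up_to_constant} allows us to ignore. Preservation of $x$, correctness for both detection and recovery, and the poly-time bound are all inherited verbatim from Lemma~\ref{lem:m_adjust}.

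The main (and only mildly delicate) point is the densifying direction: $\DenseRed$ reaches the larger sample count by inserting pure-noise observations and pays a multiplicative $m/m^\star$ factor in signal, so I would double-check that this loss is precisely what the $(\eps,\defi)$-bookkeeping predicts -- the exponent arithmetic above confirms this -- and that invoking Lemma~\ref{lem:m_adjust} a single time keeps the accumulated slack at $\poly\log n$ rather than letting it grow. Everything else is a direct translation of Lemma~\ref{lem:m_adjust} into the $(\eps,\defi)$ coordinates.
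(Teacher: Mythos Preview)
Your proposal is correct and follows exactly the intended route: the paper presents this corollary as the immediate $(\eps,\defi)$-reparametrization of Lemma~\ref{lem:m_adjust} (``In the $\eps,\defi$ parametrization this yields:''), and your exponent bookkeeping in both the sparsifying and densifying cases is precisely what is needed to justify that translation. The only thing to note is that the paper does not spell out a separate proof at all, so your write-up is actually more explicit than what appears there.
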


\subsection{Reduction Decreasing Order $k$ to $k/a$}\label{subsec:decrease_k_prelim}

In \autoref{subsec:decrease_k_prelim_app} we prove a reduction that maps $\kmodel^\star$ to $\kmodelk {(k/a)}^\star$ for some $a|k$ in the case $\FULL \in \star$. The reduction uses a change of variables, modifying the signal $x$ to $x^{\otimes a} \odot y$ for a known $y\sim\unif\sqb{\set{\pm1}^{n^a}}$. We emphasize that the assumption of the full $\kxor$ variant ($\FULL \in \star$), where index repetitions are allowed, is essential for a correct distributional mapping: otherwise, the resulting index sets do not match the desired distribution.

\begin{lemma}[Reducing Order $k$ to $k/a$]\label{lem:prelim_decrease_k}
    Let $\star \subseteq \set{\Pois, \wor}$ and $\kmodel \in \set{\kxor, \ktens}$ be some fixed model choices (1,2,4) in Fig.~\ref{fig:kxor_variants} and let $k,n \in \Z^{+}, \eps \in [0,1], \defi \in \R$ be parameters.
    For every $a \in \Z^{+}, a | k$, there is an average-case reduction for both detection and recovery ($\ReduceK$, Alg.~\ref{alg:prelim_decrease_k}) $$\text{from}\quad\kmodel^{\star,\FULL}_{\eps,\defi}(n)\quad\text{to}\quad \kmodelk {(k/a)}^{\star,\FULL}_{\eps,\defi/a}(n')\,,\quad n' = n^a\,.$$
    Moreover, Alg.~\ref{alg:prelim_decrease_k} maps an instance with a secret signal vector $x\in \set{\pm 1}^n$ to one with a signal vector $x' = x^{\otimes a}\odot y$ for a known $y\sim \unif\sqb{\set{\pm 1}^{n'}}$. 
\end{lemma}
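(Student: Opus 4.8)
The plan is to realize $\ReduceK$ (Alg.~\ref{alg:prelim_decrease_k}) as a randomized change of variables that regroups the $k$ coordinates of each equation into $k/a$ blocks of size $a$. Identify the new ground set $[n']=[n^a]$ with $[n]^a$, so an element of $[n^a]$ is an ordered tuple $\beta=(b_1,\dots,b_a)$ with $b_s\in[n]$, and $\paren{x^{\otimes a}}_\beta\coloneq x_{b_1}\cdots x_{b_a}$ for $x\in\set{\pm1}^n$. Given $\calY=\set{(\alpha_j,\Y_j)}_j\sim\kmodel^{\star,\FULL}_{\eps,\defi}(n)$, I first draw one vector $y\sim\unif\sqb{\set{\pm1}^{n'}}$. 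Then, for each $j$, I draw a uniformly random ordering $(i_{j,1},\dots,i_{j,k})$ of the $k$-multiset $\alpha_j$ (equivalently, assign the $k$ elements of $\alpha_j$, counted with multiplicity, to the $k$ slots uniformly at random), set $\beta_{j,l}=(i_{j,(l-1)a+1},\dots,i_{j,la})\in[n^a]$ for $l\in[k/a]$, form the $(k/a)$-multiset $\tilde\alpha_j=\set{\beta_{j,1},\dots,\beta_{j,k/a}}$, and output $(\tilde\alpha_j,\tilde\Y_j)$ with $\tilde\Y_j=\Y_j\cdot y_{\tilde\alpha_j}$, where $y_{\tilde\alpha_j}\coloneq\prod_{l=1}^{k/a}y_{\beta_{j,l}}$. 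I return $\set{(\tilde\alpha_j,\tilde\Y_j)}_j$ and, for the recovery direction, the known vector $y$. Since $a\le k=O(1)$, this runs in $\poly(n)$ time with $n'=n^a=\poly(n)$.

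\textbf{Verifying the output law.} The two key observations: re-blocking preserves the underlying $k$-multiset of $[n]$-indices, so ``flattening'' $\tilde\alpha_j$ recovers $\alpha_j$ and $x_{\alpha_j}=\paren{x^{\otimes a}}_{\tilde\alpha_j}$ (both equal the product of $x$ over the same $k$ indices); and, by Def.~\ref{def:multiset}, the natural measure on $k$-multisets of $[n]$ is the pushforward of the uniform law on $[n]^k$ under the forget-order map, so drawing $\alpha_j$ from it and then a uniform ordering yields a uniform tuple in $[n]^k=\paren{[n]^a}^{k/a}=[n^a]^{k/a}$, and forgetting the order of the $k/a$ blocks gives exactly the natural measure on $(k/a)$-multisets of $[n^a]$. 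Hence in the with-replacement model the supports $\tilde\alpha_j$ are i.i.d.\ from the target index law, and the Poissonized case is identical since the count $\mpoi\sim\poi(m)=\poi(m')$ is untouched. On the value side, $\tilde\Y_j=\paren{x^{\otimes a}\odot y}_{\tilde\alpha_j}\cdot\noise_j$ with $\noise_j\sim\rad(\delta)$ in the discrete model, while in the Gaussian model $\tilde\Y_j=\delta\paren{x^{\otimes a}\odot y}_{\tilde\alpha_j}+\noiseG_j\,y_{\tilde\alpha_j}$ with $\noiseG_j\,y_{\tilde\alpha_j}\sim\N(0,1)$ (a sign flip of an independent standard Gaussian); since $y$ is uniform and independent of everything else, $x'\coloneq x^{\otimes a}\odot y$ is uniform on $\set{\pm1}^{n'}$ and independent of the noise. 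So the output is distributed exactly as $\kmodelk{(k/a)}^{\star,\FULL}\!\paren{n',m,\delta}$ with secret $x'$, and in the $(\eps,\defi)$ parametrization this is $\kmodelk{(k/a)}^{\star,\FULL}_{\eps,\defi/a}(n')$, because $\paren{n'}^{(k/a)(\eps+1)/2}=n^{k(\eps+1)/2}=m$ and $\paren{n'}^{-(k/a)\eps/4-(\defi/a)/2}=n^{-k\eps/4-\defi/2}=\delta$.

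\textbf{Without-replacement sampling and transfer of the two tasks.} For sampling without replacement the naive re-blocking is not exact on its own (it always produces $\tilde\alpha_j$ whose flattenings are distinct, so it cannot realize target configurations where two of the $\tilde\alpha_j$ have equal flattening); I would handle this case by the technical modifications of \autoref{subsubsec:equiv_w_wor_replacement}, or by first passing to the with-replacement / Poissonized model via the equivalences of \autoref{subsec:kxor_variants} (e.g.\ Prop.~\ref{prop:sampling_equiv}, Prop.~\ref{prop:w_wor_equiv_recovery}), applying the above, and passing back. Detection transfers because the map is signal-equivariant: it sends the structured model to the structured model and the null model (a random sign $w_j$ in place of $x_{\alpha_j}$) to the null model, since $w_j\,y_{\tilde\alpha_j}$ is again an independent uniform sign. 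Recovery transfers because $y$ is known: from an estimate $\hat x'$ of $x'=x^{\otimes a}\odot y$ I form $\hat x'\odot y$, an estimate of $x^{\otimes a}$, and read off $x$ up to a global sign (e.g.\ from $\paren{x^{\otimes a}}_{(i,1,\dots,1)}\paren{x^{\otimes a}}_{(1,\dots,1)}=x_ix_1$); the weak/strong/exact guarantees are preserved, as in the recovery-transfer framework (Def.~\ref{def:weak_strong_recovery}).

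\textbf{Main obstacle.} The only substantive content is pinning down the index-set distribution — showing that the randomized re-blocking sends the natural measure on $k$-multisets of $[n]$ to the natural measure on $(k/a)$-multisets of $[n^a]$, compatibly with the sampling rule across all $m$ samples, with the without-replacement variant being the genuinely delicate point. This is exactly where the $\FULL$ assumption is indispensable: if the input index sets were required to consist of distinct indices, re-blocking would force the output $a$-tuples to have distinct coordinates and to be pairwise coordinate-disjoint, which is far from the unrestricted natural measure on multisets of $[n^a]$, so the change of variables would be distributionally wrong.
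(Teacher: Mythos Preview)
Your proof is correct and follows essentially the same approach as the paper: randomly order each input $k$-multiset, block into $k/a$ ordered $a$-tuples, multiply by $y_{\tilde\alpha}$, and verify that the pushforward of the natural multiset measure and the parameter identities land exactly on the target model. Your treatment is in fact slightly more thorough than the paper's own proof, which handles only the with-replacement case explicitly and does not spell out the null/recovery transfer or the $\wor$ subtlety you flag.
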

The reduction in Lemma~\ref{lem:prelim_decrease_k}, in combination with reductions in Prop.~\ref{prop:comp_equiv_full_collection} that decompose an instance of $\kmodel^{\star, \FULL}$ into standard (distinct index) instances of orders $k, k-2, k-4, \dots$, yields the following order-reducing reductions. For convenience, we also refer to the reduction in Cor.~\ref{cor:prelim_decrease_k} as $\ReduceK$ (Alg.~\ref{alg:prelim_decrease_k}).
\begin{corollary}[of Lemma~\ref{lem:prelim_decrease_k}]\label{cor:prelim_decrease_k}
    Let $\star \subseteq \set{\Pois, \wor}$ and $\kmodel \in \set{\kxor, \ktens}$ be some fixed model choices (1,2,4) in Fig.~\ref{fig:kxor_variants} and let $k,n \in \Z^{+}, \eps \in [0,1], \defi \in \R$ be parameters.
    For every $a \in \Z^{+}, a | k$ and $k' \in \set{k/a, k/a-2, k/a-4,\dots} \cap \Z_{>0}$, there is an average-case reduction for both detection and recovery $$\text{from}\quad\kmodel^{\star,\FULL}_{\eps,\defi}(n)\quad\text{to}\quad \kmodelk {k'}^{\star}_{\eps,\defi/a}(n')\,,\quad n' = n^a\,.$$
    Moreover, Alg.~\ref{alg:prelim_decrease_k} maps an instance with a secret signal vector $x\in \set{\pm 1}^n$ to one with a signal vector $x' = x^{\otimes a}\odot y$ for a known $y\sim \unif\sqb{\set{\pm 1}^{n'}}$. 
\end{corollary}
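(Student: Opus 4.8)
The plan is to obtain Cor.~\ref{cor:prelim_decrease_k} by \emph{composing} two reductions already in hand: the factor-$a$ order reduction of Lemma~\ref{lem:prelim_decrease_k}, which operates inside the $\FULL$ variant, followed by the $\Decompose$ direction of the equivalence in Prop.~\ref{prop:comp_equiv_full_collection}, which turns a $\FULL$ instance into a \emph{collection} of ordinary (distinct-index) instances of orders $k/a,\,k/a-2,\,k/a-4,\dots$; one then simply keeps the single component of the desired order $k'$ and discards the rest. No new combinatorial or analytic content is needed beyond these two ingredients.

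\textbf{Step 1 (shrink the order).} The hypotheses here ($\star\subseteq\set{\Pois,\wor}$, $\kmodel\in\set{\kxor,\ktens}$) are exactly those under which Lemma~\ref{lem:prelim_decrease_k} applies with $\FULL\in\star$. Applying it with the divisor $a\mid k$ to the input $\kmodel^{\star,\FULL}_{\eps,\defi}(n)$ produces an instance of $\kmodelk{(k/a)}^{\star,\FULL}_{\eps,\defi/a}(n')$, $n'=n^a$, with density $\eps$ unchanged, deficiency divided by $a$, and the secret carried from $x\in\set{\pm1}^n$ to $x'=x^{\otimes a}\odot y$ for a \emph{known} $y\sim\unif[\set{\pm1}^{n'}]$.
\textbf{Step 2 (decompose and project).} The instance from Step~1 has choices $(1,2,4)$ in the range covered by Prop.~\ref{prop:comp_equiv_full_collection}, so its $\Decompose$ reduction, applied with parameter $k/a$ in place of $k$, density $\eps$, deficiency $\defi/a$, dimension $n^a$, outputs the collection
$$
\B\{\, \kmodelk{(k/a-2r)}^{\star}_{\eps_r,\,\defi/a}(n^a)\,\B\}_{0\le r\le (k/a-1)/2},\qquad \eps_r=\min\B\{\tfrac{(k/a)\eps}{\,k/a-2r\,},\,1\B\},
$$
with the secret still $x'$ (the $\Decompose/\Compose$ equivalence preserves the signal exactly). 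Given $k'=k/a-2r_0\in\Z_{>0}$, discard every component but the $r_0$-th: keeping one coordinate of a product law is trivially an average-case reduction, and its marginal is exactly $\kmodelk{k'}^{\star}_{\eps_{r_0},\,\defi/a}(n^a)$, with the analogous statement for the null (each component's signal replaced by an independent uniform sign). For $r_0=0$ we have $\eps_0=\min\set{\eps,1}=\eps$, matching the claim on the nose; for $\eps=1$ every $\eps_r=1$ and again no adjustment is needed; for $r_0\ge1$ with $\eps<1$ we have $\eps_{r_0}\ge\eps$, and post-composing with the sample-removal reduction of Cor.~\ref{cor:m_adjust} brings the density down to $\eps$ at an additive constant cost in deficiency.

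\textbf{Composition and the two tasks.} Each ingredient is a $\poly(n)$-time average-case reduction for both detection and recovery (note $n'=n^a$ remains polynomial since $a$ is constant), and total-variation error is additive under composition, so the composite is again an $o(1)$-error reduction; we call it $\ReduceK$ (Alg.~\ref{alg:prelim_decrease_k}). Detection transfers because every subroutine maps the null to the null — this is the content we borrow from Prop.~\ref{prop:comp_equiv_full_collection}, which handles nulls alongside the planted model — so the composite sends the null of $\kmodel^{\star,\FULL}_{\eps,\defi}(n)$ to the null of $\kmodelk{k'}^{\star}$. Recovery transfers because the secret is tracked explicitly through both steps: the net map is $x\mapsto x'=x^{\otimes a}\odot y$ with $y$ known to the reduction, exactly as stated, after which the weak/strong/exact recovery transfer results of \autoref{subsec:recovery_assumptions} apply verbatim.

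\textbf{Main obstacle.} The substantive work lives in the two cited results — the change-of-variables in Lemma~\ref{lem:prelim_decrease_k} (which is \emph{why} one must pass to the $\FULL$ model: ungrouping $a$-tuples forces repeated indices) and the $\FULL$-to-collection decomposition of Prop.~\ref{prop:comp_equiv_full_collection}. What remains, and is the only delicate point here, is the parameter bookkeeping through the composition: tracking $(\eps,\defi,n)$ exactly, in particular the interplay between the clamp $\min\set{\cdot,1}$ in $\eps_r$ and the target density $\eps$, which is invisible for $k'=k/a$ or $\eps=1$ but for the smaller orders yields an induced density $\eps_{r_0}\ge\eps$ and hence calls for the elementary density adjustment of Cor.~\ref{cor:m_adjust}; and checking that the null-to-null property of each subroutine genuinely composes, so that the detection half of the statement goes through.
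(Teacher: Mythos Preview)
Your composition---Lemma~\ref{lem:prelim_decrease_k} followed by the $\Decompose$ direction of Prop.~\ref{prop:comp_equiv_full_collection}, then projecting onto the $r_0$-th component---is exactly the paper's argument, which the paper states in a single sentence; your treatment of the signal map and the null case is correct.

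One caveat on your Step~2: you correctly observe that for $k'<k/a$ with $\eps\in(0,1)$ the retained component sits at density $\eps_{r_0}=(k/a)\eps/(k/a-2r_0)>\eps$, not $\eps$. However, your proposed fix via $\SparseRed$ (Cor.~\ref{cor:m_adjust}) moves the deficiency from $\defi/a$ to $\defi/a + k'|\eps_{r_0}-\eps|/2 = \defi/a + r_0\eps$, a constant bounded away from $\defi/a$, so it does not recover the stated target $(\eps,\defi/a)$. The paper's one-line derivation glosses over the same point; every actual invocation of the corollary in the paper (Thms.~\ref{thm:sparse_dense_red} and~\ref{thm:decreasek}) takes $k'=k/a$ (i.e., $r_0=0$) or $\eps=1$, where the discrepancy vanishes and your argument goes through exactly. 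So your proof is at least as complete as the paper's, and your flagging of the $r_0>0$, $\eps\in(0,1)$ case is a sharper reading than the paper itself provides.
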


\section{Preliminaries}\label{sec:preliminaries}

\subsection{Notation}\label{sec:notation}

\textbf{Distributions.} Denote $\rad\paren{\delta}$ to be a $\pm 1$ random variable such that $\E \sqb{\rad\paren{\delta}} = \delta$. 

For $M\in \R^{a\times b}$, $\N(M, I_{a\times b})$ is a distribution on $\R^{a\times b}$ with i.i.d. Gaussian entries of mean $M$ and entrywise variance $1$. 

A two-component Gaussian mixture is written as $$
p \N(\mu_1, \sigma_1^2) + (1-p) \N(\mu_2, \sigma_2^2)\,,\quad p \in \paren{0,1}\,.
$$
\textbf{Sets and indices.} For a set $S$, $\binom{S}{k}$ denotes all $k$-tuples of elements in $S$ without repetitions. We write $\sqb{n} = \set{1,\dots,n}$. We write $\set{n}^k$ to denote a set of all $k$-multisets of $n$ and $\sqb{n}^{\underline k}$ to denote a set of all distinct-element $k$-tuples of $n$.

We denote $\calS_n$ to be a set of all permutations $\sqb{n} \to \sqb{n}$. 

For a vector $x \in \R^n$ and an index tuple $\alpha = \paren{i_1,\dots,i_k} \in \binom{\sqb{n}}{k}$, denote $x_{\alpha} = x_{\paren{i_1,\dots,i_k}} = x_{i_1}\dots x_{i_k}$.

Denote $x^{\otimest a}$ to be the vector composed of $x_{\alpha}$ for all $\alpha \in \binom{\sqb{n}}{a}$. This is just the vectorized rank-1 tensor of order $a$ where we keep only entries with distinct indices. 

For a vector $x \in \R^{n}$ and indices $a < b \in [n]$, denote $x_{\sqb{a:b}} \in \R^{b-a+1}$ to be $x_{\sqb{a:b}} = \paren{x_a,\dots,x_b}^\top$. 

For $x, y$ vectors, matrices, or tensors of the same dimension, denote $x\odot y$ to be their entrywise product.\\ 
\textbf{Symmetric tensors.} We say a tensor $\Yb\in \paren{\R^{n}}^{\otimes k}$ is symmetric if it is symmetric up to index permutations, i.e. for any permutation $\sigma \in \calS_k$, $\Yb_{i_1,\dots,i_k} = \Yb_{\sigma(i_1),\dots,\sigma(i_k)}$.

Invariance under index permutations allows us to index symmetric tensors by multisets $\alpha \in \set{n}^k$: we write $\Yb_\alpha$ for $\Yb_{i_1,\dots,i_k}$, where $i_1,\dots,i_k$ is any fixed ordering of $\alpha$. 

For a symmetric tensor $\Yb \in \paren{\R^{n}}^{\otimes k}$, denote the flattening $Y = \mat a {k-a} \paren{\Yb} \in \R^{{\binom{n}{a}} \times \binom{n}{k-a}}$. It is indexed by a pair of sets $(\alpha, \beta) \in \binomset n {a} \times \binomset n {k-a}$ and satisfies 
$$
Y_{\alpha, \beta} = \Yb_{\alpha \cup \beta}\,.
$$
In a special case $a=k$, $Y \in \R^{\binom n a}$ is a vector.
For a matrix $Y \in \R^{{\binom{n}{a}} \times \binom{n}{k-a}}$, denote the reverse operation to be $\Yb = \tenso \paren{Y} \in \paren{\R^{n}}^{\otimes k}$ (where missing entries are set to $0$).\\
\textbf{Polylog factors.} We will sometimes use $\Ot, \Thetat, \widetilde \Omega$ to hide $\poly\log n$ factors, e.g., $\Ot(1)$ may be as large as $\poly\log n$.

\subsection{Concentration Inequalities}\label{subsec:conc}

\begin{proposition}[Concentration of Binomial RV]\label{prop:binomial_conc}
        Let $X\sim\bin(n,p)$, where $p > 0$ and $np \geq n^{\eps}$ for some $\eps > 0$. By Bernstein's inequality,
        $$
        \Pr\sqb{|X - pn| \geq t} \leq 2 \exp\paren{-\frac{t^2/2}{np(1-p) + t/3}}\,,
        $$
        and as a consequence, with probability at least $1 - n^{-c}$,
        $$
        X\in \sqb{pn \pm C\sqrt{pn}\log n} =: \sqb{pn \pm \phi(p,n)}\,,
        $$
        where the constant $C= C(c)$ is an explicit function of $c$. 
    \end{proposition}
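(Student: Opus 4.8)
The plan is to derive both displays directly from the Bernstein inequality for sums of independent bounded random variables. Write $X = \sum_{i=1}^n X_i$ with $X_i \sim \ber(p)$ independent, so that $\E X = pn$, $\sum_i \Var(X_i) = np(1-p)$, and $|X_i - \E X_i| \le \max(p,1-p) \le 1$ almost surely. Applying Bernstein's inequality with envelope bound $1$ and variance $np(1-p)$ yields
$$
\Pr\sqb{|X - pn| \ge t} \le 2\exp\paren{-\frac{t^2/2}{np(1-p) + t/3}}\,,
$$
which is exactly the first display; nothing further is needed there.

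For the stated consequence I would substitute $t = \phi(p,n) = C\sqrt{pn}\,\log n$ and show that, for an appropriate $C = C(c)$, the exponent is at least $c\log n$. Since the numerator equals $\phi(p,n)^2/2 = \tfrac12 C^2 pn(\log n)^2$, it suffices to control the denominator $np(1-p) + \phi(p,n)/3$, and the natural step is to split on which of the two summands dominates. In the variance-dominated regime $np(1-p) \ge \phi(p,n)/3$ the denominator is at most $2np(1-p) \le 2np$, so the exponent is at least $\tfrac{C^2 pn(\log n)^2/2}{2np} = \tfrac{C^2}{4}(\log n)^2 \ge c\log n$ once $C^2\log 2 \ge 4c$ (using $n \ge 2$). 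In the tail-dominated regime $np(1-p) < \phi(p,n)/3$ the denominator is at most $\tfrac23\phi(p,n)$, so the exponent is at least $\tfrac{C^2 pn(\log n)^2/2}{(2/3)C\sqrt{pn}\,\log n} = \tfrac{3C}{4}\sqrt{pn}\,\log n$; here I would invoke the hypothesis $np \ge n^{\eps}$, which gives $\sqrt{pn} \ge n^{\eps/2} \ge 1$, so the exponent is at least $\tfrac{3C}{4}\log n \ge c\log n$ once $3C \ge 4c$. Choosing $C = C(c)$ large enough that both $C^2 \log 2 \ge 4(c+1)$ and $3C \ge 4(c+1)$ hold then gives $\Pr\sqb{|X - pn| \ge \phi(p,n)} \le 2 n^{-(c+1)} \le n^{-c}$ for all $n \ge 2$ (using $2/n \le 1$); the degenerate case $n = 1$ forces $p = 1$ and $X = pn$ deterministically.

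There is no genuine obstacle in this proposition; the only point requiring a little care is the case split in the denominator and checking that $C$ can be chosen uniformly in $n$ and $p$. Conceptually, the hypothesis $np \ge n^{\eps}$ serves two roles: it is precisely what makes the tail-dominated regime go through, and it guarantees $\phi(p,n)/(pn) = C\log n/\sqrt{pn} = O\paren{\log n/n^{\eps/2}} \to 0$, so that $\sqb{pn \pm \phi(p,n)}$ is a genuine, vanishing-relative-width concentration window. Since the argument only uses $n^{\eps} \ge 1$, any $\eps > 0$ works and the resulting $C = C(c)$ does not depend on $\eps$.
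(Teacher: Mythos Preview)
Your proposal is correct and follows exactly the approach the paper intends: the paper states the proposition as an immediate consequence of Bernstein's inequality without spelling out the details, and your derivation of the first display from Bernstein plus the case split on which denominator term dominates is the standard (and only natural) way to extract the high-probability window. There is nothing to compare.
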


\begin{proposition}[Concentration of Poisson RV]\label{prop:pois_conc}
        For $X \sim \poi(\lambda)$, the Chernoff bound argument yields for any $t > 0$,
        $$
        \Pr\sqb{|X - \lambda| \geq t} \leq 2e^{-\frac{t^2}{2(t+\lambda)}}\,.
        $$
        In particular, if $\lambda = n^{\eps}$ for some $\eps > 0$, with probability at least $1 - n^{-c}$,
        $$
        X\in \sqb{\lambda \pm C\sqrt{\lambda}\log n}\,,
        $$
        where the constant $C= C(c)$ is an explicit function of $c$. 
    \end{proposition}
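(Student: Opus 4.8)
The plan is a textbook Chernoff/Cram\'er argument based on the Poisson moment generating function $\E\sqb{e^{sX}} = \exp\paren{\lambda(e^s-1)}$, which holds for every $s\in\R$. For the upper tail, fix $t>0$ and $s>0$; Markov's inequality applied to $e^{sX}$ gives $\Pr\sqb{X\geq \lambda+t}\leq \exp\paren{-s(\lambda+t)+\lambda(e^s-1)}$, and minimizing the exponent over $s$ (the optimizer is $s=\ln(1+t/\lambda)$) yields $\Pr\sqb{X\geq\lambda+t}\leq \exp\paren{-\lambda\, h(t/\lambda)}$ with $h(u)=(1+u)\ln(1+u)-u$. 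I would then invoke the standard one-variable inequality $h(u)\geq \frac{u^2}{2(1+u/3)}$ for $u\geq 0$ (an elementary comparison of $h'(u)=\ln(1+u)$ with the derivative of the right-hand side, both vanishing at $u=0$), so that $\lambda\, h(t/\lambda) \geq \frac{t^2}{2(\lambda+t/3)}\geq \frac{t^2}{2(\lambda+t)}$, which is the upper-tail half of the claim.

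For the lower tail, the same computation with $-X$ in place of $X$ (equivalently taking $s<0$) gives, for $0<t<\lambda$, $\Pr\sqb{X\leq\lambda-t}\leq\exp\paren{-\lambda\,h(-t/\lambda)}$ where $h(-u)=(1-u)\ln(1-u)+u$; here the cleaner bound $h(-u)\geq u^2/2$ on $[0,1)$ (via $-\ln(1-u)\geq u$) already suffices, yielding $\Pr\sqb{X\leq\lambda-t}\leq\exp\paren{-t^2/(2\lambda)}\leq\exp\paren{-t^2/(2(\lambda+t))}$. For $t\geq\lambda$ the lower-tail event is empty since $X\geq 0$, so the bound is trivial there. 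Summing the two one-sided estimates produces $\Pr\sqb{|X-\lambda|\geq t}\leq 2\exp\paren{-t^2/(2(\lambda+t))}$.

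For the ``in particular'' statement, substitute $t=C\sqrt{\lambda}\log n$ with $\lambda=n^{\eps}$. For $n$ larger than a threshold depending on $\eps,C$ we have $t\leq\lambda$, hence $\lambda+t\leq 2\lambda$ and $\frac{t^2}{2(\lambda+t)}\geq \frac{C^2\log^2 n}{4}$; choosing $C=C(c)$ large enough that $\frac{C^2\log^2 n}{4}\geq c\log n+\log 2$ for all relevant $n$ makes the two-sided bound at most $n^{-c}$. (To avoid any size restriction on $n$ one can instead use the crude estimate $\frac{t^2}{2(\lambda+t)}\geq t/4\geq \frac{C}{4}\log n$ valid when $t\geq\lambda$, which covers the complementary range.) There is no genuine obstacle; the only step needing a line of care is the elementary inequality for $h$ — it is what converts the sharp Bennett-type exponent $\lambda h(t/\lambda)$ into the stated Bernstein-type form $t^2/(2(\lambda+t))$ — together with checking that the final constant $C$ depends only on $c$.
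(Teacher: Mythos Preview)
Your proof is correct and is precisely the standard Chernoff/Cram\'er argument the paper invokes; the paper itself does not spell out a proof, merely stating that ``the Chernoff bound argument yields'' the inequality, so your derivation fills in exactly what is being alluded to.
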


\subsection{Properties of Total Variation}\label{subsec:tv_properties} The analysis of our reductions will make use of the following well-known facts and inequalities concerning total variation distance (see, e.g., \cite{Polyanskiy_Wu_2025}). Recall that the total variation distance between distributions $P$ and $Q$ on a measure space $(\mathcal{X}, \mathcal{B})$ is given by $\tv(P,Q) = \sup_{A\in \mathcal{B} } |P(A) - Q(A)|$.

\begin{fact}[TV Properties] \label{tvfacts}
The distance $\TV$ satisfies the following properties:
\begin{enumerate}
\item (Triangle Inequality) For any distributions $P, Q, S$ on a measurable space $(\mathcal{X}, \mathcal{B})$, it holds that 
$$
\tv (P, Q) \leq \tv(P, S) + \tv(S, Q)\,.
$$
\item (Conditioning on an Event) For any distribution $P$ on a measurable space $(\mathcal{X}, \mathcal{B})$ and event $A \in \mathcal{B}$, it holds that
$$\TV\b( P(\,\cdot\, | A), P \b) = 1 - P(A)\,.$$
\item (Conditioning on a Random Variable, Convexity) For any two pairs of random variables $(X, Y)$ and $(X', Y')$ each taking values in a measurable space $(\mathcal{X}, \mathcal{B})$, it holds that
\begin{align*}\TV\big( \law(X), \law(X') \big) &\le \TV\big( \law(Y), \law(Y') \big)
+ \E_{y \sim Y} \big[ \TV\big( \law(X | Y = y), \law(X' | Y' = y) \big)\big]\,,\end{align*}
where we define $\TV\b( \law(X | Y = y), \law(X' | Y' = y) \b) = 1$ for all $y \not \in \textnormal{supp}(Y')$.
\item (Data Processing Inequality) For any two random variables $X, X'$ on a measurable space $(\mathcal{X}, \mathcal{B})$ and any Markov Kernel $K: \mathcal{X} \to \mathcal{Y}$, it holds that 
$$
\tv\b( \law(K(X)), \law(K(X')) \b) \leq \tv\b(\law(X), \law(X')\b)\,.
$$
\end{enumerate}
\end{fact}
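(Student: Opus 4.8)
All four items are classical, so the plan is to derive them quickly from two standard representations of the total variation distance. Fixing a common $\sigma$-finite dominating measure $\mu$ (say $\mu = (P+Q)/2$) with densities $p = dP/d\mu$, $q = dQ/d\mu$, I would first record the identities
$$
\tv(P,Q) \;=\; \tfrac12\int |p-q|\,d\mu \;=\; \sup_{f:\,\mathcal X \to [0,1]} \int f\,d(P-Q)\,,
$$
each of which is one line from the definition $\tv(P,Q) = \sup_{A}|P(A)-Q(A)|$: the middle expression equals $P(A^\star)-Q(A^\star)$ for $A^\star = \set{p\ge q}$, and the rightmost supremum is attained at the indicator $f = \ind_{A^\star}$ since $f\mapsto\int f\,d(P-Q)$ is affine and hence maximized at an extreme point among $[0,1]$-valued functions. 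Given these, item (1) is immediate: $|P(A)-Q(A)| \le |P(A)-S(A)|+|S(A)-Q(A)| \le \tv(P,S)+\tv(S,Q)$ for every $A$, so take the supremum over $A$. For item (2) I would write, for $B\in\mathcal B$, the identity $P(B\mid A)-P(B) = \tfrac{1-P(A)}{P(A)}\,P(B\cap A) - P(B\cap A^c)$; over all $B$ this is maximized at $B = A$ (value $1-P(A)$) and minimized at $B = A^c$ (value $-(1-P(A))$), so $\sup_B|P(B\mid A)-P(B)| = 1-P(A)$.

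For item (4), observe that for a Markov kernel $K$ and any measurable $A$ in the target space one has $\law(K(X))(A) = \E[K(X,A)] = \int K(x,A)\,dP(x)$ with $P = \law(X)$, $P' = \law(X')$. Since $x\mapsto K(x,A)$ is measurable with values in $[0,1]$, the rightmost identity above gives $\law(K(X))(A)-\law(K(X'))(A) = \int K(\cdot,A)\,d(P-P') \le \tv(P,P')$; taking the supremum over $A$ yields the data processing inequality.

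For item (3), the first ingredient is the convexity bound for two mixtures over a \emph{common} mixing measure: if $\mu_\bullet = \int\mu_\omega\,d\pi(\omega)$ and $\nu_\bullet = \int\nu_\omega\,d\pi(\omega)$, then $|\mu_\bullet(A)-\nu_\bullet(A)| \le \int|\mu_\omega(A)-\nu_\omega(A)|\,d\pi(\omega) \le \int\tv(\mu_\omega,\nu_\omega)\,d\pi(\omega)$, hence $\tv(\mu_\bullet,\nu_\bullet) \le \int\tv(\mu_\omega,\nu_\omega)\,d\pi(\omega)$. I would then take $\pi = \law(Y)$, $\pi' = \law(Y')$ and regular conditional distributions $\mu_y = \law(X\mid Y=y)$, $\nu_y = \law(X'\mid Y'=y)$ (these exist on standard Borel spaces, the only setting used here; set $\nu_y$ arbitrarily for $y\notin\supp(Y')$, which is harmless given the stated convention), so that $\law(X) = \int\mu_y\,d\pi(y)$ and $\law(X') = \int\nu_y\,d\pi'(y)$. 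Then item (1) gives
$$
\tv\b(\law(X),\law(X')\b) \;\le\; \tv\B(\textstyle\int\mu_y\,d\pi(y),\, \int\nu_y\,d\pi(y)\B) \;+\; \tv\B(\textstyle\int\nu_y\,d\pi(y),\, \int\nu_y\,d\pi'(y)\B)\,.
$$
The first term is at most $\int\tv(\mu_y,\nu_y)\,d\pi(y) = \E_{y\sim Y}[\tv(\mu_y,\nu_y)]$ by the convexity bound, and the second term is the total variation between the pushforwards of $\pi$ and $\pi'$ under the Markov kernel $y\mapsto\nu_y$, hence at most $\tv(\pi,\pi') = \tv(\law(Y),\law(Y'))$ by item (4); adding these proves item (3). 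I do not expect a genuine obstacle anywhere: the only care needed is the measure-theoretic bookkeeping in item (3) — existence of the regular conditional laws and the harmless handling of $y\notin\supp(Y')$ — and everything else follows in one or two lines from the two displayed representations of $\tv$.
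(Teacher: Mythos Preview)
Your proof is correct and complete; each of the four items is handled by a standard argument, and the decomposition you use for item~(3) (triangle inequality into a ``same mixing measure'' convexity piece plus a data-processing piece on the kernel $y\mapsto\nu_y$) is clean and correct. Note, however, that the paper does not actually prove this Fact: it is stated as a list of well-known properties with a reference to a textbook, so there is no paper proof to compare against. Your write-up could serve as a self-contained appendix proof if one were desired.
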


\begin{lemma}[KL-Divergence Between Gaussians]
\label{l:KLGauss}
Let $\mu_1,\mu_2\in \R^n$ and $\Sigma_1,\Sigma_2\in \R^{n\times n}$ be positive definite matrices. Then
$$
\kl(\N(\mu_1,\Sigma_1)\| \N(\mu_2,\Sigma_2))
= \frac12\Big[ 
\log\frac{\det \Sigma_2}{\det \Sigma_1} - n + \mathrm{Tr}(\Sigma_2^{-1}\Sigma_1) +(\mu_2 - \mu_1)^\top \Sigma_2^{-1} (\mu_2 - \mu_1)
\Big]\,.
$$
In particular, by Pinsker's inequality,
\begin{equation}\label{eq:tv_gauss_pinsker}
    \tv(\N(\mu_1,I_n), \N(\mu_2,I_n)) \leq \sqrt{\frac12 \kl(\N(\mu_1,I_n)\| \N(\mu_2,I_n))} = \frac12 \|\mu_2-\mu_1\|_2\,. 
\end{equation}
\end{lemma}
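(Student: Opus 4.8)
The plan is the standard direct computation from the definition $\kl(P\|Q)=\E_{x\sim P}\big[\log\tfrac{p(x)}{q(x)}\big]$, specialized to Gaussian densities, followed by an application of Pinsker's inequality for the isotropic case. First I would write the log-density ratio of $P=\N(\mu_1,\Sigma_1)$ and $Q=\N(\mu_2,\Sigma_2)$:
$$
\log\frac{p(x)}{q(x)} = \frac12\log\frac{\det\Sigma_2}{\det\Sigma_1} - \frac12 (x-\mu_1)^\top\Sigma_1^{-1}(x-\mu_1) + \frac12 (x-\mu_2)^\top\Sigma_2^{-1}(x-\mu_2)\,.
$$
Taking the expectation over $x\sim\N(\mu_1,\Sigma_1)$, the first term is deterministic. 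For the second, I would use the trace trick: $\E\big[(x-\mu_1)^\top\Sigma_1^{-1}(x-\mu_1)\big] = \mathrm{Tr}\big(\Sigma_1^{-1}\,\E[(x-\mu_1)(x-\mu_1)^\top]\big) = \mathrm{Tr}(\Sigma_1^{-1}\Sigma_1) = n$. For the third, I would substitute $x-\mu_2 = (x-\mu_1) + (\mu_1-\mu_2)$ and expand; the cross term $\E\big[(x-\mu_1)^\top\Sigma_2^{-1}(\mu_1-\mu_2)\big]$ vanishes since $\E[x-\mu_1]=0$, leaving $\mathrm{Tr}(\Sigma_2^{-1}\Sigma_1) + (\mu_1-\mu_2)^\top\Sigma_2^{-1}(\mu_1-\mu_2)$. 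Collecting the three contributions gives exactly the claimed formula (using $(\mu_1-\mu_2)^\top\Sigma_2^{-1}(\mu_1-\mu_2)=(\mu_2-\mu_1)^\top\Sigma_2^{-1}(\mu_2-\mu_1)$).

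For the second assertion I would invoke Pinsker's inequality $\tv(P,Q)\le\sqrt{\tfrac12\,\kl(P\|Q)}$ and plug in $\Sigma_1=\Sigma_2=I_n$: the $\log\det$ ratio vanishes, $\mathrm{Tr}(I_n^{-1}I_n)=n$ cancels the $-n$, and we are left with $\kl(\N(\mu_1,I_n)\|\N(\mu_2,I_n)) = \tfrac12\|\mu_2-\mu_1\|_2^2$. Substituting into Pinsker yields $\tv(\N(\mu_1,I_n),\N(\mu_2,I_n)) \le \sqrt{\tfrac12\cdot\tfrac12\|\mu_2-\mu_1\|_2^2} = \tfrac12\|\mu_2-\mu_1\|_2$, as stated in Eq.~\eqref{eq:tv_gauss_pinsker}.

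There is no genuine obstacle here: the only points requiring a little care are the trace manipulations for Gaussian quadratic forms and verifying that the cross term in the expansion of $(x-\mu_2)^\top\Sigma_2^{-1}(x-\mu_2)$ integrates to zero; both are routine. This lemma is classical and I would, if desired, simply cite a standard reference (e.g., \cite{Polyanskiy_Wu_2025}) rather than include the computation in full.
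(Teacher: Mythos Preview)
Your proposal is correct and matches the paper's approach: the paper does not prove this lemma at all, simply stating it as a well-known fact in the preliminaries with a reference to \cite{Polyanskiy_Wu_2025}, which is exactly what you suggest doing at the end. Your sketched computation is the standard one and is sound.
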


Combining the convexity of total variation (Item 3 in Fact~\ref{tvfacts}) and the bound~\eqref{eq:tv_gauss_pinsker} above, we obtain the following lemma. 

\begin{lemma}[TV Between Gaussians with Shifted Means]
\label{l:GaussianShift}
Let $X, A, B\in \R^d$ be random variables with $X\sim \N(0,I_d)$ independent of $A$ and $B$. Then
$$
\tv\b(\law(X + A + B), \law(X + B)\b) \leq \E \|A\|_2\,.
$$
\end{lemma}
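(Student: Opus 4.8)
The plan is to condition on the pair $(A,B)$, thereby reducing to the elementary case of two Gaussians differing only in their means, which is exactly the content of~\eqref{eq:tv_gauss_pinsker}.

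First I would invoke Item~3 of Fact~\ref{tvfacts} (convexity / conditioning on a random variable) with the ``conditioning variable'' taken to be $Y = (A,B)$ on both sides, i.e. with $(X+A+B, (A,B))$ and $(X+B,(A,B))$. Since the two copies of $(A,B)$ have identical law, the leading term $\TV(\law(A,B),\law(A,B))$ vanishes, and we are left with
$$
\tv\b(\law(X+A+B), \law(X+B)\b) \;\le\; \E_{(a,b)\sim (A,B)}\!\left[\TV\b(\law(X+A+B \mid (A,B)=(a,b)),\ \law(X+B \mid (A,B)=(a,b))\b)\right].
$$
Here I would use the hypothesis that $X \sim \N(0,I_d)$ is independent of $(A,B)$: conditionally on $(A,B)=(a,b)$, the random vector $X+A+B$ is distributed as $\N(a+b, I_d)$ and $X+B$ as $\N(b,I_d)$.

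It then remains to bound each conditional term. By~\eqref{eq:tv_gauss_pinsker},
$$
\TV\b(\N(a+b,I_d), \N(b,I_d)\b) \;\le\; \tfrac12\,\|(a+b)-b\|_2 \;=\; \tfrac12\,\|a\|_2 ,
$$
and integrating over $(a,b)\sim (A,B)$ gives $\tv(\law(X+A+B),\law(X+B)) \le \tfrac12\,\E\|A\|_2 \le \E\|A\|_2$, as claimed (in fact with an extra factor $\tfrac12$ to spare).

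\textbf{Main obstacle.} There is essentially no analytic difficulty here; the only point requiring a moment's care is the justification that, conditionally on $(A,B)=(a,b)$, the laws of $X+A+B$ and $X+B$ are the stated Gaussians — this is where independence of $X$ from $(A,B)$ is used, and it also guarantees the relevant conditional laws are well defined. Everything else is a direct application of the convexity property of $\TV$ and the Pinsker-type Gaussian bound already recorded in~\eqref{eq:tv_gauss_pinsker}.
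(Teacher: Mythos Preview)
Your proposal is correct and follows exactly the approach indicated in the paper: condition on $(A,B)$ via the convexity of total variation (Item~3 of Fact~\ref{tvfacts}), use independence of $X$ from $(A,B)$ to reduce to two Gaussians with identity covariance and shifted means, and apply~\eqref{eq:tv_gauss_pinsker}. As you note, you even obtain the sharper bound $\tfrac12\,\E\|A\|_2$.
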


\section{Discrete Resolution}\label{sec:discrete_resolution}

\subsection{Reduction Idea and Lemma Statement} 

In $\xor k n m \delta$ with a secret vector $x \in \set{\pm 1}^n$, we observe a collection $\calY = \set{\paren{\alpha_j, \Y_j}}_{j \in \sqb{m}}$ for $\alpha_j \sim_{i.i.d.} \unif\sqb{\binomset n k}$ and 
\begin{equation*}
    \Y_{j} = 
    x_{\alpha_j}\cdot \noise_{j}\,, 
\end{equation*}
where $\noise_{j} \sim \rad(\delta)$ and we recall that $x_{\alpha} = x_{i_1}\dots x_{i_k}$ for an index set $\alpha = \set{i_1,\dots,i_k} \in \binomset n k$. As described in detail in \autoref{subsec:resolution_id}, the product of two $\kxor$ entries $\Y_{s}, \Y_{t}$ for $s, t\in \sqb{m}$ with $\kstar = |\alpha_s \triangle \alpha_t|$ is distributed as a $\kxork \kstar$ entry with signal level $\delta' = \delta^2$ corresponding to an index set $\alpha_s \triangle \alpha_t$:
\begin{equation}\label{eq:resolution_id7}
    \Y_{s} \Y_{t} = 
    x_{\alpha_s} x_{\alpha_t} \cdot \noise_{s}\noise_{t} = x_{\alpha_s \triangle \alpha_t} \cdot {\paren{\noise_{s}\noise_{t}}} \sim x_{\alpha_s \triangle \alpha_t} \cdot \rad\paren{\delta^2}\,.
\end{equation}
We call the operation of multiplying two $\kxor$ entries to obtain another $\kxork \kstar$ entry \emph{resolution}. The observation in Eq.~\eqref{eq:resolution_id7} suggests two potential na\"ive reductions depending on the number of available pairs $(s,t)$ with $|\alpha_s\triangle\alpha_t| = \kstar$:
\begin{enumerate}
    \item \textbf{Sparse case -- few pairs $(s,t)$:} for all $s,t \in \sqb{m}$ satisfying $|\alpha_s\triangle\alpha_t| = \kstar$, include a pair $\paren{\alpha_s\triangle\alpha_t, \Y_s\Y_t}$ into the output instance $\calZ$:
    \begin{equation}\label{eq:sparse_naiive7}
        \calZ = \set{ \paren{\alpha_s\triangle\alpha_t, \Y_s\Y_t}:\, s,t \in \sqb{m} \text{ and } |\alpha_s\triangle\alpha_t| = \kstar}\tag{Sparse $\EqComb$}\,.
    \end{equation}
    \item \textbf{Dense case -- many pairs $(s,t)$:} if for every index set $\gamma \in \binomset n \kstar$ we expect many $s,t\in\sqb{m}$ satisfying $\alpha_s\triangle\alpha_t = \gamma$, we can collect all of these observations into
        \begin{equation}\label{eq:naiive_resolution7} \Zt_{\gamma}^{obs} = \set{\Y_{s} \Y_{t}: s,t \in [m]\text{ and }\alpha_s \triangle \alpha_t = \gamma}
    \end{equation} 
    and aggregate them via a majority vote, outputting a collection of equations 
    \begin{equation}\label{eq:dense_naiive7}
    \calZ = \set{\paren{\gamma, \maj \b(\Zt_{\gamma}^{obs} \b)}:\, \gamma \in \binomset n \kstar \text{ and }|\Zt_{\gamma}^{obs}| \neq 0}\,.\tag{Dense $\EqComb$}
    \end{equation}
\end{enumerate}

As discussed in \autoref{subsec:resolution_id}, in both cases the output $\calZ$ does not match the $\kxork \kstar$ distribution because of the dependencies in the index sets of observed entries and in the noise distribution. In Lemma~\ref{lem:2path_red}, we build a combination-based reduction for both sparse and dense cases that avoids index set and entry dependence by design. 
Essentially, Algorithm~\ref{alg:two_path} ensures that each input entry $\Y_j$ participates in at most one product pair, so the products in Eqs.~\eqref{eq:sparse_naiive7} and \eqref{eq:dense_naiive7} do not reuse noise across different $\gamma$ (see \autoref{subsec:sparse_resolution} for more intuition). The resulting reduction is the main building block of our densifying reductions in Thm.~\ref{thm:sparse_dense_red} (\autoref{sec:resolution_red}).

\begin{lemma}[Discrete Equation Combination Reduction]\label{lem:2path_red}
    Fix $k,n \in \Z^{+}, \eps \in (-1,1], \defi \in \R$. For any even integer $\kstar$, s.t. 
    $$\kstar \leq k\paren{1-\eps}\,,$$
    there exists an average-case reduction (Def.~\ref{def:avg_case_points})
    for both detection and recovery (Alg.~\ref{alg:two_path}) for $\eps \geq 0$,
    $$\text{from}\quad \xorp k n \eps \defi\quad\text{to}\quad\xorp {\kstar} {n'} {\eps'}{2\defi} \quad\text{with}\quad\eps' = \min\set{\frac{2k}{\kstar}\eps, 1}\,,
    $$
    and for $\eps < 0$ and $k' > 2k |\eps|$,
    $$
    \text{from}\quad\xor k n {m=n^{k(1+\eps)/2}} \delta\quad\text{to}\quad\xor {\kstar} {n'} {m'=\widetilde\Theta((n')^{k'(1+\eps')/2})} {\delta^2}\quad\text{with}\quad \eps' = \frac{2k} {k'} \eps\,.
    $$
    
    Moreover, Alg.~\ref{alg:two_path} maps an instance with a secret signal vector $x\in \set{\pm 1}^n$ to one with the signal vector $x' = x_{[1:n']}$ for $n' = n(1- o(1))$.\footnote{The algorithm can set $x'$ to have any $n'$-coordinate subset of $x$; the algorithm succeeds for any $n' = n\paren{1 - (\poly\log n)^{-1}}$.} 
\end{lemma}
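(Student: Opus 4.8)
The plan is to build Algorithm~\ref{alg:two_path} around the equation combination identity \eqref{eq:resolution_id7} while enforcing that each input equation is used in at most one product, so that both index-set and noise dependence are avoided by construction. First I would Poissonize the input via Prop.~\ref{prop:sampling_equiv}, so that the number of samples is $\mpoi \sim \poi(m)$, which makes the downstream counting arguments independent across disjoint index regions. Next I would reserve a small ``cancellation block'' of indices: partition $[n]$ into $C = [1{:}n-n']$ of size $n - n' = n/\poly\log n$ and the retained block $[n'{+}1{:}n]$ (after relabeling, $[1{:}n']$), and throw away all input equations whose support does not consist of exactly $k - \kstar/2$ indices from $C$ and $\kstar/2$ indices from $[1{:}n']$; by the binomial concentration of Prop.~\ref{prop:binomial_conc} a $\Theta(1)$-fraction (more precisely a $\poly\log n$-controlled fraction) of equations survives, which is all we need up to the allowed $\poly\log n$ slack in parameters. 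For each $\rho \in \binom{C}{k-\kstar/2}$ let $\mathcal I_\rho$ be the surviving equations containing $\rho$; by the computation in \eqref{eq:sparse_resolution_regime} (using $\kstar \le k(1-\eps)$) we have $\E|\mathcal I_\rho| \asymp n^{\kstar/2}\cdot m/n^{k} $, and in the dense subcase this is $\approx n^{k\eps - \kstar/2}$. In either case, the reduction goes through $\mathcal I_\rho$, repeatedly pulls out a pair $s,t \in \mathcal I_\rho$ whose supports intersect exactly in $\rho$ (i.e. $|\alpha_s \triangle \alpha_t| = \kstar$), emits the combined equation $(\alpha_s\triangle\alpha_t,\ \Y_s\Y_t)$, and removes $s,t$ from all $\mathcal I_{\rho'}$ they belong to; this greedy disjoint-pairing guarantees each $\Y_j$ is consumed at most once.

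The core distributional claim is then: conditioned on the surviving equations, the emitted collection $\calZ$ is within $o(1)$ total variation of $\xorp{\kstar}{n'}{\eps'}{2\defi}$ with secret $x' = x_{[1:n']}$, where $\eps' = \min\{2k\eps/\kstar, 1\}$. Marginally this is immediate from \eqref{eq:resolution_id7}: each emitted equation has support $\alpha_s\triangle\alpha_t \subseteq [1{:}n']$, value $x'_{\alpha_s\triangle\alpha_t}\cdot\rad(\delta^2)$, and $\delta^2 = n^{-k\eps/2 - \defi} = (n')^{-\kstar \eps'/4 - (2\defi)/2}$ after matching exponents (using $\kstar\eps' = 2k\eps$ in the un-truncated case, and a separate check when the $\min$ is $1$, in which case one additionally applies Lem.~\ref{lem:m_adjust}/Cor.~\ref{cor:m_adjust} to correct the equation count). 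The independence across emitted equations is where the reserved-block design pays off: since we only use $\rho \subseteq C$ as the ``glue'' and the retained indices $[1{:}n']$ never appear in any $\rho$, the output support $\gamma = \alpha_s\triangle\alpha_t$ is an unconstrained $\kstar$-subset of $[1{:}n']$; and since no input equation is reused, the noises $\noise_s\noise_t$ across distinct emitted equations are products of disjoint i.i.d. $\rad(\delta)$ variables, hence independent. I would make ``each output support is (asymptotically) uniform and independent'' precise by: (i) showing the number of emitted equations is $\Thetat((n')^{\kstar(1+\eps')/2})$ with high probability via Poisson/binomial concentration on $|\mathcal I_\rho|$ summed over $\rho$; (ii) showing the support of a single emitted equation is uniform on $\binom{[n']}{\kstar}$ up to $o(1)$ TV, since for fixed $\rho$ the two halves of $\gamma$ are uniform disjoint $\kstar/2$-subsets of $[1{:}n']$ and a short symmetrization/union-bound argument handles the disjointness and the $\poly\log n$ fraction of discarded mass; (iii) concluding via the conditioning/convexity and data-processing properties of Fact~\ref{tvfacts}, together with Poissonization, that the joint law of $\calZ$ factorizes into i.i.d. equations up to $o(1)$. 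Transfer of detection and recovery then follows from the general reduction-implication lemma, noting that the secret only gets restricted to a coordinate subset (Lem.~\ref{lem:red_implications_gen}, and the weak/strong/exact recovery bookkeeping of Def.~\ref{def:weak_strong_recovery}).

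The main obstacle I anticipate is the subtle dependence flagged in the excerpt's footnote: a single equation $\Y_j$ can lie in $\mathcal I_\rho$ for several different $\rho$ (any $(k-\kstar/2)$-subset of its $C$-part), so naively pairing within each $\mathcal I_\rho$ independently would reuse equations and reintroduce both noise and support dependence. The fix — and the delicate part of the analysis — is to run the greedy disjoint-pairing globally (processing $\rho$'s in some order and deleting consumed equations from all buckets) and then argue that the deletions change each bucket's size by a lower-order amount, so the pairing still succeeds for a $1 - o(1)$ fraction of the ``available'' pairs; quantitatively this is a second-moment / Poisson-thinning estimate showing that the probability a given $\rho$-bucket is ``spoiled'' by a collision through a shared equation with another bucket is $o(1)$ on average. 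A clean way to sidestep the hardest version of this is to further split the cancellation block $C$ into $k - \kstar/2$ disjoint sub-blocks and require the $i$-th coordinate of $\rho$ to come from the $i$-th sub-block; then distinct emitted equations that share a $C$-index are rare, and the remaining dependence is handled by the same concentration bounds. Everything else (Poissonization, block reservation, exponent matching, count correction via Cor.~\ref{cor:m_adjust}) is routine bookkeeping with the tools already in the excerpt, so I would budget the write-up accordingly: a paragraph setting up the blocks and the greedy pairing, a paragraph on the high-probability event that enough equations survive and pair up, and a paragraph assembling the $o(1)$-TV bound from Fact~\ref{tvfacts}.
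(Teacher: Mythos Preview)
Your overall architecture matches the paper: Poissonize, reserve a cancellation block $C$ of size $n/\poly\log n$, prune to equations with exactly $k-\kstar/2$ indices in $C$ and $\kstar/2$ outside, combine pairs sharing the same cancellation set, and read off the output parameters. Two points, however, deserve correction.

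\textbf{The ``main obstacle'' is a phantom.} You correctly prune to equations whose support has \emph{exactly} $k-\kstar/2$ indices in $C$. But then each surviving equation has a $C$-part of size exactly $k-\kstar/2$, so there is a \emph{unique} $\rho\in\binom{C}{k-\kstar/2}$ it contains: the buckets $\{\mathcal I_\rho\}$ already \emph{partition} the surviving equations. Your parenthetical ``any $(k-\kstar/2)$-subset of its $C$-part'' only makes sense if the $C$-part could be strictly larger, which your own pruning forbids. So no equation is ever in two buckets, the greedy cross-bucket deletion is unnecessary, and the sub-block refinement is unnecessary. The paper's Alg.~\ref{alg:two_path} simply picks \emph{one} random pair from each bucket $\mathcal N(\beta)$ with $|\mathcal N(\beta)|\ge 2$; since $\E|\mathcal N(\beta)|=\Thetat(n^{k(\eps-1)/2+\kstar/2})=\Ot(1)$ under the hypothesis $\kstar\le k(1-\eps)$, this loses only polylog factors. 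The footnote you cite from \S\ref{subsec:sparse_resolution} is describing the situation \emph{before} the cancellation-block idea is introduced; the exact-pruning design you already wrote down is precisely the fix.

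\textbf{The dense case needs aggregation, not just count correction.} When $2k\eps>\kstar$ (so $\eps'=1$), the collection $\calZ$ has $\Thetat(n^{k\eps+\kstar/2})\gg n^{\kstar}$ equations, each with signal $\delta^2=n^{-k\eps/2-\defi}$. Simply invoking Cor.~\ref{cor:m_adjust} to thin down to $n^{\kstar}$ equations keeps the signal at $\delta^2$, which corresponds to deficiency $k\eps/2-\kstar/4+\defi$, strictly larger than the claimed $2\defi$ whenever $2k\eps>\kstar$. The paper instead groups $\calZ$ by output index set $\gamma$, takes a \emph{majority vote} over the $\Thetat(n^{k\eps-\kstar/2})$ observations per $\gamma$, boosting the signal to $\Thetat(\delta^2\cdot n^{(k\eps-\kstar/2)/2})=\Thetat(n^{-\kstar/4-\defi})$, which is exactly the target at $(\eps',\defi')=(1,2\defi)$. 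Your write-up should include this aggregation step (and the subsequent $\ToPoiFromTPCA$ call to restore the with-replacement sampling).
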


\begin{remark}[Sparse and Dense $\EqComb$]
    In Lemma~\ref{lem:2path_red}, the case $\eps' = \frac{2k} \kstar \eps \leq 1$ 
    corresponds to the \eqref{eq:sparse_naiive7} and $\frac{2k} \kstar \eps > 1$ with $\eps' = 1$ to the \eqref{eq:dense_naiive7}. 
\end{remark}

\begin{remark}[$\kxor$ Sampling Procedure for Alg.~\ref{alg:two_path} and Gaussian Input]\label{rmk:discrete_resolution_alg_sampling_gaussian}
    While Algorithm~\ref{alg:two_path}, as stated, is a reduction from $\kxor^{\Pois}$ to $\kxork{k'}^{\Pois}$ (defined in \autoref{subsec:kxor_variants}), we sometimes use it as a subroutine for $\kxor$ models with the standard sampling procedure, i.e., as a reduction from $\kxor$ to $\kxork {k'}$. It is implicitly assumed that first a Poissonization (Alg.~\ref{alg:resample_iid}) reduction is applied, then Alg.~\ref{alg:two_path}, and then the reduction back to standard sampling model (Alg.~\ref{alg:resample_M}).
    
    Similarly if Alg.~\ref{alg:two_path} is applied as a Gaussian model reduction for $\ktens$ or $\ktens^{\mathsf{Pois}}$: this implicitly assumes the insertion of the appropriate Gaussianization/Discretization (\autoref{subsec:gaussianize}, \autoref{subsec:discretize}) or/and Resampling (\autoref{subsec:poi_sampling}) reductions.
\end{remark}

\begin{algorithm}\SetAlgoLined\SetAlgoVlined\DontPrintSemicolon
    \KwIn{
    $\calY = \set{ \b( \alpha_j, \Y_{j}\b)}_{j\in\sqb{\mpoi}}$, where $\forall j \in\sqb{\mpoi}$, $\b(\alpha_j, \Y_{j}\b) \in \binomset n k \times \set{\pm1}$, $k'\in \Z$, $\kappa \in (0,1)$.
    }
    \KwOut{$\widetilde\calZ = \set{ \b( \gamma_j, \Zt_{j}\b)}_{j\in\sqb{\mpoi'}}$, where $\forall j \in\sqb{\mpoi'}$, $\b(\gamma_j, \Zt_{j}\b) \in \binomset {\kappa n} {k'} \times \set{\pm1}$.}

    \BlankLine
    \tcp{Prune the input equations; only consider index sets $\alpha_j$ that contain exactly $k-k'/2$ coordinates in $\sqb{\kappa n + 1: n}$, denote them as the "cancellation set" $\Ccal(j)$}
    $\Ss = \set{j:\, j\in\sqb{\mpoi}\text{ and } |\alpha_j \cap \sqb{\kappa n}| = k'/2}$

    For each $ j \in \Ss$, $\Ccal(j) \gets \alpha_j \cap \sqb{\kappa n + 1: n}$

    \BlankLine
    \tcp{Aggregate entries with the same cancellation set $\beta$}
    For each $ \beta \in \binomset {\kappa n+1: n} {k-k'/2}$, let $$\calN(\beta) = \set{j \in \sqb{\Ss}:\, \Ccal(j) = \beta}$$

    \BlankLine
    \tcp{Form disjoint combination pairs and record them}
    $\calZ \gets \emptyset$

    \ForAll{$\beta \text{ s.t. } 
    \b|\calN(\beta)\b| \geq 2$}{
        $s, t \gets \unif\sqb{\binom{\calN(\beta)}2}$

        \If{$|\alpha_s\triangle \alpha_t| = k'$}{
            $\calZ \gets \calZ \cup \set{\paren{\alpha_s\triangle\alpha_t, \Y_s\Y_t}}$
        }
        }

    \BlankLine
    \If{$2k\eps \leq k'$}{\tcp{Sparse case -- few observations in $ \calZ$}

        $\mpoit \gets \poi(B_Z)$, $\mpoi' = \min\big\{\mpoit, |\calZ|\big\}$ \tcp{See pf. of Lem.~\ref{lem:2path_red} for defn of $B_Z$}

        $\widetilde \calZ \gets \text{ random }\mpoi'\text{-subset of }\calZ$ 

        Return $\widetilde\calZ$
        
    }\Else{ \tcp{Dense case -- many observations in $\calZ$}
    
        $\forall \gamma \in \binomset {\kappa n} {k'}$, $\Zt^{obs}_{\gamma} \gets \big\{ \Zt:\, \paren{\gamma, \Zt} \in \calZ \big\}$

        $N \gets B_{obs}$ \ \tcp{See proof of Lemma~\ref{lem:2path_red} for defn of $B_{obs}$}

        $\widetilde \calZ \gets \emptyset$ \ \tcp{Initialize the answer to be empty}
        
        \For{$\gamma \in \binomset {\kappa n} {k'}$}{

            $\Zt^{obs}_{\gamma} \gets \text{ random }\min\set{N, |\Zt^{obs}_{\gamma}|}\text{-subset of }\Zt^{obs}_{\gamma}$ 
        
            $\widetilde \calZ \gets \widetilde \calZ \cup \set{\paren{\gamma, \maj\b( \Zt^{obs}_{\gamma} \b)}}$ 
        }

        Return $\ToPoiFromTPCA\b(\widetilde\calZ\b)$ \ \tcp{Alg.~\ref{alg:to_poi_sampling_from_pca}}
    }

    \caption{$\TwoPath\paren{\calY, k', \kappa = 1 - \log^{-1}n}$}
\label{alg:two_path}
\end{algorithm}

We prove Lemma~\ref{lem:2path_red} in \S\ref{subsec:proof_lemma_2path}. In \S\ref{subsec:extended_2path} we generalize Lemma~\ref{lem:2path_red} to accept as an input two instances $k_1\mathsf{\text{-}XOR}$ and $k_2\mathsf{\text{-}XOR}$ of different order, which will be used as a subroutine to our reductions that decrease the tensor order $k$ (\S\ref{subsec:ideas_decrease_k} and \S\ref{sec:decrease_k}). In \S\ref{subsec:resolution_lwe}, we extend Lemma~\ref{lem:2path_red} to the $k$-sparse LWE model, which considers linear equations over $\mathbb{Z}_q$ as opposed to $\mathbb{F}_2$ as in $\kxor$.

\subsection{Proof of Lemma~\ref{lem:2path_red}}\label{subsec:proof_lemma_2path}

    We prove Lemma~\ref{lem:2path_red} for $\kxor^{\mathsf{Pois}}$ (defined in \autoref{subsec:kxor_variants}). 
    The analogous result for the standard $\kxor$ immediately follows by the computational equivalence of $\kxor$ and $\kxor^{\mathsf{Pois}}$: we show in Prop.~\ref{prop:sampling_equiv} that there exist average-case reductions for both detection and recovery from $\kxor^{\Pois}$ to $\kxor$ and vice versa, preserving the number of samples $m$ up to constant factors and noise level $\delta$ unchanged. 

    Algorithm~\ref{alg:two_path} maps $\kxor$ in dimension $n$ with a signal vector $x \in \set{\pm 1}^n$ to $\kxork {k'}$ in dimension $n' = \kappa n$ with a signal vector $x' = x_{[1:n']}$. We present the proof for the planted case and the same argument applies verbatim to the null case $\delta = 0$ (see Def.~\ref{def:null_models} for the null model definition).
    
    Let $\calY = \set{ \b( \alpha_j, \Y_{j}\b)}_{j\in\sqb{\mpoi}}, \mpoi \sim \poi(m)$ be a (planted) instance of $\xorpoi k n m \delta$ with $m = n^{k(1+\eps)/2}$ and $\delta = \begin{cases}
        \Thetat(1), &\eps < 0\\
        n^{-k\eps/4 - \defi/2}, &\eps \geq 0\,.
    \end{cases}$. In particular, $\alpha_j \sim_{i.i.d.} \unif\sqb{\binomset n k}$ and
    $$
    \Y_{j} = 
        x_{\alpha_j}\cdot \noise_{j}\,,\text{ where }\noise_{j} \sim_{i.i.d.} \rad(\delta)\,.
    $$
    We will show that Alg.~\ref{alg:two_path} outputs $\widetilde\calZ = \TwoPath(\calY,k',\kappa = 1 - \log^{-1}n)$, satisfying 
    \begin{equation}\label{eq:final_tv_resolution}  
        \law (\widetilde\calZ) \approx_{\tv = o(1)} \xorpoi {k'} {n'} {m'} {\delta'}\,,
    \end{equation} 
    where $n' = \kappa n$, $m' = \Thetat\paren{ (n')^{k'(1+\eps')/2}}$. For $\eps < 0$, $\delta' = \delta^2$ and $\eps' = \frac{2k}{k'}\eps$; for $\eps \geq 0$, $\delta' = \Thetat\paren{(n')^{-k'\eps'/4 - \defi}}$ and $\eps' = \min\set{\frac{2k}{\kstar}\eps, 1}$.
    Since it is sufficient to map parameters up to $\poly\log n$ factors by Prop.~\ref{prop:map_up_to_constant}, Eq.~\eqref{eq:final_tv_resolution} shows that in the case of $\eps \geq 0$, Alg.~\ref{alg:two_path} is an average-case reduction to $\xorppoi {k'} {n'} {\eps'} {2\defi}$.

\paragraph{Step 1: Distribution of elements in $\calZ$.}  
    We show that $\calZ = \set{\paren{\gamma_j, \Zt_j}}_{j\in \sqb{|\calZ|}}$, which is the set constructed in Lines 4-8 of Alg.~\ref{alg:two_path}, contains elements $\paren{\gamma_j, \Zt_j} \in \binomset {\kappa n}{k'} \times \set{\pm1}$ for which
    \begin{equation}\label{eq:z_dist}
    \gamma_j \sim \unif\sqb{\binomset {\kappa n} {k'}}\quad\text{and}\quad \Zt_j \sim x_{\gamma_j} \cdot \rad\paren{\delta^2}
    \end{equation}
    are independent across $j \in \sqb{|\calZ|}$.
    
    For every $\gamma \in \binomset {\kappa n} {k'}$ and every recorded observation $\Y_s\Y_t \in \calZ$ with $\alpha_s \triangle \alpha_t = \gamma $ in Lines 4-8
    \begin{equation}\label{eq:two_path}
        \Y_{s} \Y_{t} = x_{\alpha_s} x_{\alpha_t} \noise_{s} \noise_{t} \sim x_{\alpha_s\triangle \alpha_t} \rad\paren{\delta^2}\,.
    \end{equation}
    By construction, every entry $\Y_j, j \in\sqb{\mpoi}$ appears in exactly one $\N(\beta)$ for some $\beta \in \binomset {\kappa n+1:n} {k-k'/2}$, and since every $\N(\beta)$ yields at most one recorded observation into $\calZ$, every entry $\Y_j$ appears in at most one equation \eqref{eq:two_path}. Therefore, the entry-wise noise terms $\rad\paren{\delta^2}$ in Eq.~\eqref{eq:two_path} are independent, so the distribution of values $\Zt_j$ in the elements $\paren{\gamma_j, \Zt_j} \in \calZ$ matches the one in Eq.~\eqref{eq:z_dist}.

    Moreover, for any $\beta$ with $|\calN\paren{\beta}| \geq 2$, if the chosen $s,t \sim \unif\sqb{\binom {\N(\beta)} {2}}$ satisfy $|\alpha_s\triangle\alpha_t| = k'$, the resulting index set $\alpha_s\triangle\alpha_t$ is uniformly random on $\binomset {\kappa n } {k'}$, chosen independently for different $\beta$. Hence, the distribution of index sets $\gamma_j$ in the elements $\paren{\gamma_j, \Zt_j} \in \calZ$ matches the one in Eq.~\eqref{eq:z_dist}. 

\paragraph{Step 2: Bound on $|\calZ|$.} We next show a high probability bound on $|\calZ|$, which is the number of observations recorded in Lines 4-8 of Alg.~\ref{alg:two_path}. We show that, with high probability, 
$$
    |\calZ| \geq C_Z \cdot n^{k\eps+k'/2} \eqcolon 2B_Z\,,
    $$
    where $C_Z = \Thetat (1)$ hides constants depending on $k,a$ and $\poly\log n$ factors and denote the corresponding event 
    $$
    S_{|\calZ|} = \set{ |\calZ| \geq 2B_Z }\,,\qquad \Pr\sqb{S_{|\calZ|}^c} = o(1)\,.
    $$   

    From Poisson splitting, for each $\beta \in \binomset {\kappa n+1: n} {k-k'/2}$, given that $\kappa = 1 - \log^{-1} n$,
    $$
    |\calN\paren{\beta}| \sim \poi\paren{m\Thetat(n^{-(k-k'/2)})}\quad\text{and}\quad m\Thetat(n^{-(k-k'/2)}) = \Thetat\paren{n^{k(\eps-1)/2 +k'/2}} = \Ot(1)\,,
    $$
    where the last equality uses $k'\leq k(1-\eps)$. Consequently, 
    $$\Pr\sqb{|\calN\paren{\beta}| \geq 2} = \Thetat \big(n^{k(\eps-1)+k'}\big)\,.$$ 
    Given $|\calN\paren{\beta}| \geq 2$ and $s,t \sim \unif\sqb{\binom {\N(\beta)} {2}}$, with at least constant probability the remaining indices are disjoint and satisfy $|\alpha_s\triangle\alpha_t| = k'$. Then, 
    $$
    |\calZ| \sim \bin\paren{ \binom{n(1-\kappa)}{k-k'/2}, \Thetat \paren{n^{k(\eps-1)+k'}}}\,,
    $$
    and by Binomial concentration in Prop.~\ref{prop:binomial_conc}, with high probability, 
    $$
    |\calZ| \geq  \Thetat (1) \cdot n^{k-k'/2} \cdot n^{k(\eps-1)+k'} = C_Z \cdot n^{k\eps+k'/2} \eqcolon 2B_Z\,,
    $$
    for $C_Z = \Thetat(1)$.

    To summarize Steps 1 and 2, $\calZ$ contains $|\calZ| \geq 2B_Z$ elements $\paren{\gamma_j, \Zt_j}, j \in\sqb{|\calZ|}$ for $\gamma_j \sim_{i.i.d.} \unif\sqb{\binomset {\kappa n}{k'}}$ and independent $\Zt_j \sim x_{\gamma_j} \cdot \rad(\delta^2)$. 

\paragraph{Step 3: Distribution of the output set $\widetilde\calZ$.} Here we conclude the proof by showing 
 $$
    \tv\paren{\law(\widetilde \calZ), \xorpoi {k'} {n'} {m'} {\delta'}} = o_n(1)\,,
    $$
where $\widetilde \calZ$ is the output of the Alg.~\ref{alg:two_path}. This is handled differently in the sparse and dense regimes. 

    \textbf{Case 1: $2k\eps \leq k'$ (sparse).} In Alg.~\ref{alg:two_path}, the output $\widetilde \calZ$ is set to be a random $\mpoi'$ subset of $\widetilde\calZ$, where $\mpoi' = \min\{\mpoit, |\calZ|\}$ for a generated $\mpoit \sim \poi(B_Z)$. By the Poisson concentration in Prop.~\ref{prop:pois_conc}, with high probability, given $S_{|\calZ|}$, $$\mpoit \leq 2B_M \leq |\calZ|
    \,,$$ in which case $\mpoi' =
    \mpoit \sim \poi(B_Z)$. Denote this event $S_\mathrm{Pois} = \set{ \mpoit \leq |\calZ| }$, $\Pr\sqb{S_{Pois}^c} = o(1)$. Conditioned on $S_{Pois}$ and $S_{|\calZ|}$, $\widetilde\calZ$ contains $\mpoi' \sim \poi(B_Z)$ $\kxork {k'}$ entries with uniformly random index sets and signal level $\delta' = \delta^2$, and therefore, for $n' = \kappa n$ and $m' = B_Z$, 
    $$
    \law(\widetilde \calZ | S_{Pois} \cap S_{|\calZ|}) = \xorpoi {k'} {n'} {m'} {\delta'}\,.
    $$
    Then, conditioning on a high probability event in total variation (Fact~\ref{tvfacts}), 
    \begin{align*}
        \tv\paren{\law(\widetilde\calZ), \xorpoi {k'} {n'} {m'} {\delta'}} = \tv\paren{\law(\widetilde\calZ), \law(\widetilde\calZ| S_{Pois} \cap S_{|\calZ|})} = 1 - \Pr\sqb{S_{Pois} \cap S_{|\calZ|}} = o_n(1)\,.
    \end{align*}
    It remains to express $m' = B_Z$ and $\delta' = \delta^2$ in terms of $\eps' = \frac{2k}{k'}\eps$ and $n' = \kappa n$:
    $$
    m' = C_Z/2 \cdot n^{k\eps + k'/2} = \Thetat(1) \cdot \paren{n'}^{k'(\eps' + 1)/2}
    $$
    for $\eps' = \frac{2k}{k'}\eps$ and, if $\eps \geq 0$, $\delta' = n^{-k \eps / 2 - \defi} = \Thetat(1) \cdot \paren{n'}^{- k' \eps' / 4 - \defi}$. This concludes the proof of Eq.~\eqref{eq:final_tv_resolution}.
    
    \textbf{Case 2: $2k\eps > k'$ (dense).} In this case $|\calZ| \geq 2B_Z \gg n^{k'}$, i.e., the number of recorder observations in $\calZ$ significantly exceeds the number of distinct index sets in $\kxork {k'}$. By binomial concentration (Prop.~\ref{prop:binomial_conc}), for all $\gamma \in \binomset {\kappa n}{k'}$, simultaneously with high probability,
    \begin{equation}\label{eq:obs_size}
        \b| \Zt_{\gamma}^{obs} \b| \geq \Thetat(1) \cdot B_Z n^{-k'} = C_{obs} \cdot n^{k\eps - k'/2} \eqcolon B_{obs}
    \end{equation}
    for $C_{obs} = \Thetat(1)$ hiding constants depending on $k, a$ and $\poly\log n$ factors. Denote the corresponding event $S_{obs} = \set{\forall \gamma, \b| \Zt_{\gamma}^{obs} \b| \geq B_{obs}}$, $\Pr[S_{obs}^c] = o(1)$.
    Given $S_{obs}$, $$\widetilde\calZ = \set{\paren{\gamma, \maj\b( \Zt^{obs}_{\gamma} \b)}}_{\gamma \in \binomset {\kappa n}{k'}}\,,$$
    where each majority is taken over exactly $B_{obs}$ observations. In this case, 
    $$
    \E \,\maj\b( \Zt^{obs}_{\gamma} \b) = x_{\gamma} \cdot \Thetat\big( \delta^2 \cdot n^{\frac12\paren{k\eps - k'/2}}\big) = x_{\gamma} \cdot \Thetat\big( \paren{\kappa n}^{-k'/4 - \defi }\big) \eqcolon x_\gamma \cdot \delta'\,,
    $$
    which shows that (see Fig.~\ref{fig:kxor_variants} for the definition of $\kxor^\wor$) with $M = \binom {\kappa n}{k'}$,
    $$
    \law(\widetilde \calZ | S_{|\calZ|} \cap S_{obs}) = \kxork {k'}^{\wor}(\kappa n, M, \delta')\,.
    $$
    Then, conditioning on a high probability event in total variation (Fact~\ref{tvfacts}),
    $$
    \tv\paren{\law (\widetilde\calZ), \kxork {k'}^{\wor}(\kappa n, M, \delta')} = \tv\paren{\law (\widetilde\calZ), \law(\widetilde \calZ | S_{|\calZ|} \cap S_{obs})} = 1 - \Pr\sqb{S_{|\calZ|} \cap S_{obs}} = o_n(1)\,,
    $$
    and therefore, Alg.~\ref{alg:two_path} is an average-case reduction to $\kxork {k'}^{\wor}_{\eps'=1, 2\defi}(\kappa n)$.
    Finally, by Prop.~\ref{prop:sampling_equiv_eps1}, $\ToPoiFromTPCA$ (Alg.~\ref{alg:to_poi_sampling_from_pca}) is an average-case reduction from $\kxork {k'}^{\wor}_{\eps'=1, 2\defi}(\kappa n)$ to $\kxork {k'}^{\Pois}_{\eps'=1, 2\defi}(\kappa n)$, concluding the proof. 
    
\subsection{Resolution for Two Input Tensors of Different Orders}\label{subsec:extended_2path}

As described in \autoref{subsec:ideas_decrease_k}, the equation combination identity of Eq.~\eqref{eq:resolution_id7} can be extended to accept as an input two instances $\calY\up1 \sim \xorp {k_1} n {\eps_1} {\defi_1}$ and $\calY\up2 \sim \xorp {k_2} n {\eps_2} {\defi_2}$ that share a secret vector $x\in\set{\pm1}^n$. Our use case for such a reduction is the order-reducing reduction (\autoref{sec:decrease_k}), for which it suffices to consider $\eps_1 = \eps_2 = 0$. Lemma~\ref{lem:2path_red_ext_discr} is a generalization of Lemma~\ref{lem:2path_red} for this case.\footnote{Similar result can be established for general $\eps_1,\eps_2\in [0,1)$ using analogous techniques.}
\begin{lemma}[Extended Discrete Resolution Reduction]\label{lem:2path_red_ext_discr}

    Given independent instances $\calY\up1 \sim \xorp {k_1} n {\eps=0} {\defi_1}$ and $\calY\up2 \sim \xorp {k_2} n {\eps=0} {\defi_2}$ that share a secret vector $x\in\set{\pm1}^n$, Alg.~\ref{alg:two_path_two_inputs} outputs 
    $\calZ = \TwoPathTwo\paren{\calY\up1, \calY\up2, \kstar}$, where $$\tv\paren{\calZ, \xor \kstar {n'} {m'} {\delta'}} \to 0\quad \text{as }n\to\infty\,,$$ where $m' = \Thetat(n^{\kstar/2})$ and $\delta' = \Thetat(n^{-\defi_1/2 - \defi_2/2})$, if $(k_1+k_2-\kstar)/2$ is an integer and $\kstar$ satisfies
    $$
    \kstar\in \sqb{|k_1-k_2|, \min\set{k_1,k_2}}\,.
    $$
    Moreover, Alg.~\ref{alg:two_path_two_inputs} maps two instances with the same secret signal vector $x\in \set{\pm 1}^n$ to one with the signal vector $x' = x_{[1:n']}$ for $n' = n(1- o(1))$.
\end{lemma}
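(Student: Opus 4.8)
The plan is to run the sparse case of the proof of Lemma~\ref{lem:2path_red}, but with the two halves of every combined equation drawn from the two \emph{different} input instances. Write $\calY\up1 = \{(\alpha_s,\Y\up1_s)\}$ and $\calY\up2 = \{(\beta_t,\Y\up2_t)\}$ with $\Y\up1_s = x_{\alpha_s}\noise\up1_s$, $\Y\up2_t = x_{\beta_t}\noise\up2_t$, $\noise\up i\sim\rad(\delta_i)$, $\delta_i = n^{-\defi_i/2}$, and (since $\eps = 0$) $m_i = n^{k_i/2}$ equations each. Put $a = (k_1+k_2-\kstar)/2$; the hypotheses say precisely that $a$ is an integer with $\max\{k_1,k_2\}/2 \le a \le \min\{k_1,k_2\}$. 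For any pair with $|\alpha_s\cap\beta_t| = a$, the combination identity of Eq.~\eqref{eq:resolution_id7} gives $\Y\up1_s\Y\up2_t = x_{\alpha_s\triangle\beta_t}\cdot(\noise\up1_s\noise\up2_t)\sim x_{\alpha_s\triangle\beta_t}\cdot\rad(\delta_1\delta_2)$ with $|\alpha_s\triangle\beta_t| = \kstar$. Since two independent random index sets are typically disjoint, the algorithm (Alg.~\ref{alg:two_path_two_inputs}) must \emph{force} the intersection, exactly as Alg.~\ref{alg:two_path}: after Poissonizing both inputs (Prop.~\ref{prop:sampling_equiv}, cf.\ Remark~\ref{rmk:discrete_resolution_alg_sampling_gaussian}), reserve a cancellation block $\Lambda = \sqb{\kappa n + 1 : n}$ with $\kappa = 1 - \log^{-1}n$, discard every equation not having exactly $a$ of its indices in $\Lambda$, record the cancellation set $\Ccal_1(s) = \alpha_s\cap\Lambda$ (resp.\ $\Ccal_2(t) = \beta_t\cap\Lambda$), and for each $\rho\in\binom{\Lambda}{a}$ for which $\calN_1(\rho) = \{s:\Ccal_1(s)=\rho\}$ and $\calN_2(\rho)=\{t:\Ccal_2(t)=\rho\}$ are both nonempty, pick a uniformly random $s\in\calN_1(\rho)$, $t\in\calN_2(\rho)$ and, if their non-cancellation parts are disjoint, append $(\alpha_s\triangle\beta_t,\Y\up1_s\Y\up2_t)$ to a list $\calZ$. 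Finally subsample $\calZ$ to a $\poi(B_Z)$-size subset for an appropriate $B_Z$, restrict index sets to $\sqb{1:n'}$ with $n' = \kappa n$, and convert back to the fixed-sample model (Prop.~\ref{prop:sampling_equiv}); the output secret is $x' = x_{\sqb{1:n'}}$ as required.

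\textbf{Step 1 (marginal law of $\calZ$).} Because each equation's cancellation set is its \emph{full} intersection with $\Lambda$, every input equation lies in a unique class $\calN_i(\cdot)$ and participates in at most one recorded pair, so the factors $\noise\up1_s\noise\up2_t\sim\rad(\delta_1\delta_2)$ are mutually independent across recorded equations. Conditioned on being recorded, the index set is $\alpha_s\triangle\beta_t = (\alpha_s\setminus\rho)\sqcup(\beta_t\setminus\rho)$, the disjoint union of independent uniform $(k_1-a)$- and $(k_2-a)$-subsets of $\sqb{\kappa n}$, hence uniform on $\binomset{\kappa n}{\kstar}$; and these are independent across distinct $\rho$. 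Thus $\calZ$ is a random-length i.i.d.\ sequence of pairs $(\gamma, x_\gamma\rad(\delta_1\delta_2))$ with $\gamma\sim\unif\sqb{\binomset{\kappa n}{\kstar}}$ — exactly the $\kxork{\kstar}$ target law, up to the sample count.

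\textbf{Step 2 (size of $\calZ$) and Step 3 (output law).} Poissonization makes the class occupancies independent over $\rho$, with $|\calN_i(\rho)|\sim\poi(\lambda_i)$, $\lambda_i = m_i\cdot\Pr[\Ccal_i(\cdot) = \rho] = \Theta(n^{k_i/2 - a})$; since $a\ge\max\{k_1,k_2\}/2$, both $\lambda_1,\lambda_2 = O(1)$ (the sparse per-class regime). Each $\rho$ contributes to $\calZ$ with probability $(1-e^{-\lambda_1})(1-e^{-\lambda_2})\cdot(1-O(1/n)) = \Theta(\lambda_1\lambda_2)$ (valid for $\lambda_i$ bounded), independently, so $|\calZ|$ is Binomial over the $\binom{\Lambda}{a} = \Thetat(n^a)$ classes with mean $\Thetat(n^{a}\cdot n^{k_1/2-a}\cdot n^{k_2/2-a}) = \Thetat(n^{\kstar/2})$ (using $\kstar/2 = k_1/2 + k_2/2 - a$); Binomial concentration (Prop.~\ref{prop:binomial_conc}) then gives $|\calZ|\ge 2B_Z$ with high probability, for an appropriate $B_Z = \Thetat(n^{\kstar/2})$. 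Now, exactly as in Lemma~\ref{lem:2path_red}: draw $\mpoit\sim\poi(B_Z)$; by Poisson concentration (Prop.~\ref{prop:pois_conc}) the event $\{\mpoit\le|\calZ|\}$ holds with probability $1-o(1)$, and on it the output is a uniform $\mpoit$-subset of $\calZ$, hence (by Step 1) distributed exactly as $\xorpoi{\kstar}{n'}{m'}{\delta'}$ with $n'=\kappa n$, $m' = B_Z = \Thetat(n^{\kstar/2})$ and $\delta' = \delta_1\delta_2 = \Thetat(n^{-\defi_1/2-\defi_2/2})$ — i.e.\ density $\eps' = 0$ and deficiency $\defi' = \defi_1+\defi_2$. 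Conditioning on a high-probability event (Fact~\ref{tvfacts}) and converting back to fixed sampling (Prop.~\ref{prop:sampling_equiv}) yields $\tv(\law(\calZ),\xor{\kstar}{n'}{m'}{\delta'}) = o(1)$. The null case (replace $x$ by the i.i.d.\ uniform signs of Def.~\ref{def:null_models}) goes through verbatim, since a product of two independent uniform signs is uniform and independence is preserved, giving the detection reduction; for recovery, since $x' = x_{\sqb{1:n'}}$ with $n' = n(1-o(1))$, the weak/strong/exact recovery guarantees transfer through the recovery-transfer framework (Def.~\ref{def:exact_recovery}, \ref{def:weak_strong_recovery}).

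\textbf{Main obstacle.} The crux is the same issue Alg.~\ref{alg:two_path} is engineered around, now with two instances: producing \emph{independent} output noise while keeping the \emph{uniform} target distribution on the output index sets. Routing each input equation to the single cancellation class determined by its full intersection with $\Lambda$, and taking only one pair per class, handles both at once (and the residual constraint $\gamma\cap\rho=\emptyset$ is automatically satisfied because the cancellation indices come from a block $\Lambda$ disjoint from the signal coordinates $\sqb{1:n'}$). The only genuinely new bookkeeping relative to Lemma~\ref{lem:2path_red} is that the two instances have \emph{distinct} per-class occupancy rates $\lambda_1\ne\lambda_2$; the point to verify carefully is that the admissibility window $|k_1-k_2|\le\kstar\le\min\{k_1,k_2\}$ is exactly what simultaneously (i) keeps both $\lambda_i = O(1)$, so that pairs are formed sparsely, one per surviving class, and (ii) forces the expected number of surviving classes to be $\Thetat(n^{\kstar/2})$, matching the $\eps'=0$ density of the target model.
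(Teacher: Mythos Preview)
Your proposal is correct and follows essentially the same approach as the paper's proof, which explicitly frames the argument as a direct adaptation of Lemma~\ref{lem:2path_red} with the cancellation set now of size $a=(k_1+k_2-k')/2$, the noise product $\rad(\delta_1\delta_2)$, and the Poisson occupancies $|\calN^{(i)}(\beta)|\sim\poi(\Thetat(n^{k'/2-k_{3-i}/2}))$ yielding $|\calZ|\ge 2B_Z=\Thetat(n^{k'/2})$ with high probability. Your Step~2 bookkeeping is in fact more explicit than the paper's, which simply lists the four parameter differences and defers the rest to the proof of Lemma~\ref{lem:2path_red}.
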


Remark~\ref{rmk:discrete_resolution_alg_sampling_gaussian} similarly applies.

\begin{algorithm}\SetAlgoLined\SetAlgoVlined\DontPrintSemicolon
    \KwIn{
    $\calY\up a = \set{ \b( \alpha_j\up a, \Y_{j}\up a\b)}_{j\in\sqb{\mpoi\up a}}$ for $a\in\set{1,2}$, where $\forall j \in\sqb{\mpoi\up a}$, $\b(\alpha_j\up a, \Y_{j}\up a\b) \in \binomset n {k_a} \times \set{\pm1}$, $k'\in \Z$, $\kappa \in (0,1)$.
    }
    \KwOut{$\widetilde\calZ = \set{ \b( \gamma_j, \Zt_{j}\b)}_{j\in\sqb{\mpoi'}}$, where $\forall j \in\sqb{\mpoi'}$, $\b(\gamma_j, \Zt_{j}\b) \in \binomset {\kappa n} {k'} \times \set{\pm1}$.}

    \BlankLine
    \tcp{Prune the input equations; only consider index sets $\alpha_j\up a$ that contain exactly $(k_1+k_2-k')/2$ coordinates in $\sqb{\kappa n + 1: n}$, denote them as the "cancellation set" $\Ccal(j)$}
    $\Ss\up a = \set{j:\, j\in\sqb{\mpoi\up a}\text{ and } |\alpha_j\up a \cap \sqb{\kappa n+1:n}| = (k_1+k_2-k')/2}$ for $a \in \set{1,2}$

    For $a\in \set{1,2}$ and each $ j \in \Ss\up a$, $\Ccal\up a(j) \gets \alpha_j\up a \cap \sqb{\kappa n + 1: n}$

    \BlankLine
    \tcp{Aggregate entries with the same cancellation set $\beta$}
    For each $ \beta \in \binomset {\kappa n+1: n} {(k_1+k_2-k')/2}$, let $$\calN\up a(\beta) = \set{j \in \sqb{\Ss\up a}:\, \Ccal\up a(j) = \beta}\text{ for }a\in \set{1,2}$$

    \BlankLine
    \tcp{Form disjoint combination pairs and record them}
    $\calZ \gets \emptyset$

    \ForAll{$\beta \text{ s.t. } 
    \b|\calN\up 1(\beta)\b| \geq 1$ and $\b|\calN\up 2(\beta)\b| \geq 1$}{
        $s \gets \unif\sqb{\calN\up 1(\beta)}$, $t \gets \unif\sqb{\calN\up 2(\beta)}$

        \If{$|\alpha_s\up 1\triangle \alpha_t\up 2| = k'$}{
            $\calZ \gets \calZ \cup \set{\paren{\alpha_s\up 1\triangle\alpha_t\up 2, \Y_s\up 1\Y_t\up 2}}$
        }
        }

    \BlankLine
    $\mpoit \gets \poi(B_Z)$, $\mpoi' = \min\big\{\mpoit, |\calZ|\big\}$ \tcp{See pf. of Lem.~\ref{lem:2path_red_ext_discr} for defn of $B_Z$}

    $\widetilde \calZ \gets \text{ random }\mpoi'\text{-subset of }\calZ$ 

    Return $\widetilde\calZ$

    \caption{$\TwoPathTwo\paren{\calY\up1, \calY\up2, k', \kappa = 1 - \log^{-1}n}$}
\label{alg:two_path_two_inputs}
\end{algorithm}

\begin{proof}[Proof of Lemma~\ref{lem:2path_red_ext_discr}.]

The reduction Alg.~\ref{alg:two_path_two_inputs} is a generalization of the discrete resolution reduction (Alg.~\ref{alg:two_path}) and its analysis for both planted and null distributions closely follows that of Alg.~\ref{alg:two_path} in Lemma~\ref{lem:2path_red}. The only differences are:
\begin{enumerate}
    \item If the two input instances had noise levels $\delta_1, \delta_2$, the output will have i.i.d. $\rad\paren{\delta_1\delta_2}$ entry-wise noise -- this yields the $\defi' = \defi_1 + \defi_2$ deficiency in the output instance.
    \item The cancellation vector is now $(k_1 + k_2 - k')/2$ sparse, which is guaranteed to be $\leq \min\set{k_1, k_2}$ by the condition $k' \geq |k_1 - k_2|$.
    \item In Step 2, we similarly prove $|\calZ| \geq 2 B_Z$ with high probability, but now for $B_Z = \Thetat\paren{n^{k'/2}}$. In particular, we have 
    $$
    \b|\N(\beta)\up 1\b|\sim \poi\paren{\Thetat\paren{n^{k_1/2} \cdot n^{-(k_1+k_2-k')/2}} = \Thetat\paren{n^{k'/2 - k_2/2}}}\,, 
    $$
    and an analogous condition for $\N(\beta)\up 2$.
    We have $k' \leq \min\set{k_1, k_2}$, in which case $|\calZ| \sim \bin\paren{\binom {n(1-\kappa)} {(k_1+k_2-k')/2}, \Thetat\paren{n^{k' - k_1/2 - k_2/2}}}$. By binomial concentration, with high probability, 
    $$|\calZ| \geq C_Z \cdot n^{k'/2} = 2B_Z\,.$$
    for $C_Z  = \Thetat(1)$. Note that $B_Z$ is the desired number of output equations up to $\poly\log n$ factors. 
    \item We only consider the sparse case for the regime of interest $\eps = 0$. \qedhere
\end{enumerate}

\end{proof}

\section{Discrete Resolution for $k$-Sparse $\lwe$.}\label{subsec:resolution_lwe}

In $\kxor$, via the natural $\mathbb{F}_2 \leftrightarrow \set{\pm1}$ map, the entries are evidently noisy linear equations over $\mathbb{F}_2$. In this alternative view, in $\kxor$ we have a signal $x \in \mathbb{F}_2^n$ and observe $m$ equations over $\mathbb{F}_2$ of the form
\begin{equation}\label{eq:entry_F2_s7}
    x_{i_{1}}+x_{i_{2}}+\dots + x_{i_{k}} + e = \Y \quad \paren{\mathsf{mod}\ 2}\,,
\end{equation}
where $e$ is i.i.d. noise and $\alpha = \set{i_{1},\dots,i_{k}}$ is a uniformly random index set from $\binomset n k$. Our $\EqComb$ step adds (or, equivalently, subtracts) two equations whose index sets overlap, thereby canceling the variables that appear twice. 

Noisy linear equations over larger rings $\Z_q$ are central in both cryptography and average-case complexity. The $k$-sparse LWE problem (for constant $k$), i.e., the analog of \eqref{eq:entry_F2_s7} over $\Z_q$, was first formalized in \cite{jain2024systematic}: one observes $m$ samples of the form
\begin{equation}\label{eq:entry_Fq_s7}
    a_{1}x_{i_{1}}+a_{2}x_{i_{2}}+\dots + a_{k}x_{i_{k}} + e = \Y\,,
\end{equation}
where $x\in \Z_q^n$ is the secret, $\alpha = \set{i_{1},\dots,i_{k}}$ is a uniformly random index set, $e \sim \chi \in \Z_q$ is i.i.d. noise, and the coefficients $a_{s} \sim_{\text{i.i.d.}} \unif\paren{\Z_q\setminus 0 }$. 

We extend our discrete resolution reduction (Lem.~\ref{lem:2path_red}) to $k$-sparse $\lwe$ in \autoref{subsec:resolution_red_lwe}. We present a general resolution-based reduction for $k$-sparse $\lwe$ in Lemma~\ref{lem:2path_red_lwe} that succeeds whenever the entrywise noise distribution $\chi$ is \emph{resolution-stable} (Def.~\ref{def:resolution_stable}). A noise distribution $\chi$ is cobination-stable if it is closed under subtraction, i.e., subtracting two equations of the form \eqref{eq:entry_Fq_s7} yields an equation of $k'$-sparse $\lwe$ for $k' = 2k - 2a$ for $ a = (\text{\# of canceled indices})$, possibly with a change in the noise parameter.  

In \autoref{subsec:discr_res_lwe_uniform}, \autoref{subsec:discr_res_lwe_gauss}, and \autoref{subsec:discr_res_lwe_bound}, we apply the general resolution-based reduction for three common $\lwe$ noise models:
\begin{itemize}
    \item \textbf{Uniform $\mathbb{Z}_q$ Noise:} $e \sim \chi_{\delta}$ if $$e = \begin{cases}
        0, &\text{w. prob. }\delta\\
        \unif\paren{\Z_q }, &\text{w. prob. }1-\delta\,.
    \end{cases}$$
    \item \textbf{Discrete Gaussian Noise:} 
    $$
    \chi_s(e) \propto \sum_{i\in \Z} e^{-\pi (e + iq)^2/s^2}\,.
    $$
    \item \textbf{Bounded $\mathbb{Z}_q$ Noise:} $e \sim \chi_{l}$,
    where $\chi_l$ is a given measure on $\sqb{-l, l}$ for a parameter $l \leq q$.
\end{itemize}

Finally, for the three noise models above we present a simple detection algorithm for $k$-sparse $\lwe$ based on equation collision in \autoref{subsec:collision_detection}; combined with our reduction this yields new detection algorithms for $k$-sparse $\lwe$. 
We do not attempt an exhaustive exploration of resolution for $k$-sparse $\lwe$. For instance, one can relax the resolution requirement that coefficients cancel exactly on the support overlap; we leave this and other directions for future work. More generally, the computational complexity of $k$-sparse $\lwe$ remains open: so far, no plausible conjectures have been proposed regarding existence of polynomial-time algorithms and how this depends on number of samples. 

We now formally define the $k$-sparse $\lwe$ ($\klwe$) model as a $\Z_q$ analog of $\kxor$. Here we will call a vector $c \in \Z_q^n$ \emph{$k$-sparse} if it has exactly $k$ nonzero coordinates, i.e., $|\supp (c)| = k$. For simplicity we assume $q$ is prime, which allows better parameters tradeoffs in the case of the uniform $\Z_q$ noise in Lemmas~\ref{lem:collision_detection} and \ref{lem:2path_red_lwe_uniform}. For general odd $q$, the uniform noise detection thresholds incur a factor of $\sqrt{q-1}$\footnote{The condition in Lemma~\ref{lem:collision_detection} becomes $m\delta^2 = \widetilde\Omega\paren{n^{k/2}(q-1)^{(k-1)/2}}$ with $n^{k/2} (q-1)^{k/2} \lesssim n \lesssim n^k (q-1)^k$.}, while the resolution reduction for uniform noise no longer improves over a general reduction; all other results (Lemmas~\ref{lem:2path_red_lwe}, \ref{lem:collision_detection}, \ref{lem:2path_red_lwe_uniform}, \ref{lem:2path_red_lwe_gauss}, and Cor.~\ref{cor:lwe_alg_bounded}) hold without modification. 

\begin{definition}[$k$-Sparse $\lwe$ ($\klwe$)]\label{def:k_sparse_lwe}
Let $k, n, m \in \Z^{+}$, $q$ prime, and $\chi$ be a distribution on $\Z_q$. In the (planted) $\lweg k n m \chi$, there is a signal vector $x \gets \unif\paren{\Z_q^n}$ and we observe a collection of equations $\calY = \set{\paren{c_j, \Y_j}}_{j \in \sqb{m}}$ over $\Z_q$, where
\begin{equation}\label{eq:def_lwe}
\Y_{j} = \langle c_j, x\rangle + e_{j}\,,\tag{$\klwe$}
\end{equation}
where $e_{j} \sim_{i.i.d.} \chi$ is \emph{entrywise noise} and the coefficient vector $c_j \in \Z_q^n$ has uniformly random $k$-sparse support $$\alpha_j \eqcolon \supp(c_j) = \set{i \in \sqb{n}:\, \paren{c_j}_i \neq 0} \sim_{i.i.d.} \unif\sqb{\binomset n k}$$
and satisfies $\paren{c_{j}}_i = \begin{cases}
    \unif\sqb{(\Z_q\setminus 0)}, & i \in \alpha_j\\
    0, &\text{otherwise}\,.
\end{cases}$

In the null distribution of $k$-sparse $\lwe$ ($\lwegnull k n m$), for all $j\in\sqb{m}$, $c_j$ is generated as above, and $$\Y_{j} \sim_{i.i.d.} \unif\paren{\Z_q}\,.$$

\end{definition}

\begin{remark}
    In $\klwe$ we think of $n$ as the main dimensionality parameter, $k$ and $\log_n m$ as constants, and $q \leq n^{O(1)}$. 
\end{remark}

\begin{remark}[Equation Sampling Procedures]\label{rmk:lwe_sampling} Similarly to $\kxor$ and $\ktens$ (see \autoref{subsec:kxor_variants} for $\kxor$ variants overview), the model above with exactly $m$ equations is computationally equivalent to the one with $\mpoi \sim \poi(m)$ equations, which we denote $\lwegpoi k n m \chi$. In particular, there exist average-case reductions in both directions that preserve $m$ up to constant factors (see Alg.~\ref{alg:resample_iid}, \ref{alg:resample_M}). There also exists a trivial reduction from the model above to the one that samples $k$-sparse coefficient vectors $c_j$ independently \emph{without replacement}, but we are not aware of a reverse reduction that would succeed in all parameter regimes (see discussion in \autoref{subsubsec:equiv_w_wor_replacement}). 

Similarly to $\kxor$, our reduction results for $k$-sparse $\lwe$ (Def.~\ref{def:k_sparse_lwe}) hold without modification for the sampling-without-replacement model (where both the input and the output instance have the same sampling procedure).\footnote{The algorithms require small technical modifications to match the distribution in the different sampling model.} Our computational threshold heuristic, however, only exists for the sampling-with-replacement model.
\end{remark}

\subsection{General Resolution Reduction for $k$-sparse $\lwe$}\label{subsec:resolution_red_lwe}

\begin{definition}[Resolution-Stable Noise Models for $\lwe$]\label{def:resolution_stable}
    Let $\chi_\theta$ be a noise distribution over $\Z_q$ parametrized by $\theta \in \Theta$. We say $\chi_\theta$ is $(f,g)$\emph{-resolution-stable} for deterministic functions $f : \Theta \to \Theta$ and $g : \Theta \times \Z^{+} \to \Theta$ if 
    \begin{enumerate}
        \item For $e_1, e_2 \sim \chi_\theta$ independent, 
        $$e = e_1 - e_2 \mod q \sim \chi_{\theta'}$$ for $\theta' = f(\theta)$.
        \item Given a secret $s \in \Z_q$ and $N$ samples $s_1,\dots,s_N \in \Z_q$:
        $$
        s_i = s + e_i\,, \quad e_i \sim \chi_\theta\,, 
        $$
        there is a poly-time algorithm $\mathcal{A}_N: \Z_q^N \to \Z_q$ that for every $N$, outputs $\hat{s} = \mathcal{A}_N(s_1,\dots,s_N)$ such that 
        $$
        \tv\paren{\hat s, s + \chi_{\theta'}} = n^{-C}\,,
        $$
        for an arbitrarily large constant $C$ and $\theta' = g(\theta, N)$.
    \end{enumerate}
\end{definition}
\begin{remark}
    Note that the second condition in Def.~\ref{def:resolution_stable} can be trivially satisfied with $\calA_N\paren{s_1,\dots,s_N} = s_1$ and $g\paren{\theta, N} = \theta$. We aim for $g$ that ``decrease noise" with larger $N$.
\end{remark}

\begin{lemma}[General Discrete Resolution for $\lwe$]\label{lem:2path_red_lwe}
    Fix $k,n,m \in \Z^{+}, \theta\in\Theta$ and let $\chi_\theta$ be a $(f,g)$-resolution-stable distribution over $\Z_q$ (Def.~\ref{def:resolution_stable}). For any even integer $\kstar\leq 2k$, there exists an average-case reduction\footnote{Average-case reductions for $\kxor$ are formally defined in Def.~\ref{def:avg_case_points}; the definition for $\klwe$ is analogous: reduction algorithm outputs an instance $\calZ$ such that $\tv\paren{\law\paren{\calZ}, \lweg {\kstar} {n'} {m'} {\chi_{\theta'}}} = o_n(1)$, which has algorithmic and hardness implications analogous to Lemma~\ref{lem:red_implications_gen}.} for both detection and recovery (Alg.~\ref{alg:two_path_lwe}) $$\text{from}\quad\lweg k n m {\chi_\theta}\quad\text{to}\quad\lweg {\kstar} {n'} {m'} {\chi_{\theta'}}\,,$$ where $n' = n(1-o(1))$ and parameters $m', \theta'$ depend on $\theta$ and the input number of samples $m$ as follows. Denote $\neff = n(q-1)$ and assume $\eps = \log_{\neff} m$ is a fixed constant;
    \begin{itemize}
        \item If $\eps \leq \min\set{k-\frac{k'}2, \frac k2 + \frac{k'}4}$, 
        $$ m' = \Thetat\paren{m^2 \neff^{-(k-k'/2)}} \quad\text{and}\quad \theta' = f(\theta)\,;$$
        \item If $ \eps \in \Big( \frac k 2 + \frac{k'}4, k-\frac{k'}2\Big]$,
        $$
        m' = \Thetat\paren{\neff^{k'}} \quad\text{and}\quad \theta' = g\paren{f(\theta), \Thetat\paren{m^2 \neff^{-k - k'/2}}}\,;
        $$
        \item If $\eps \in \Big(k-\frac{k'}2,k'\Big]$,
        $$
        m' = \Thetat\paren{m}\quad\text{and}\quad \theta' = f(\theta)\,;
        $$
        \item If $\eps > \max\set{k-\frac{k'}2, k'}$,
        $$
        m' = \Thetat\paren{\neff^{k'}} \quad\text{and}\quad \theta' = g\paren{f(\theta), \Thetat\paren{m \neff^{-k'}}}\,.
        $$
    \end{itemize}
    Moreover, Alg.~\ref{alg:two_path_lwe} maps an instance with a secret signal vector $x\in \Z_q^n$ to one with the signal vector $x' = x_{\sqb{1:n'}}$.
\end{lemma}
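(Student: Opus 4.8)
\textbf{Proof proposal for Lemma~\ref{lem:2path_red_lwe}.}

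The plan is to mirror the proof of Lemma~\ref{lem:2path_red} (discrete $\kxor$ equation combination), replacing $\mathbb{F}_2$ arithmetic with $\Z_q$ arithmetic and tracking how the noise parameter transforms under subtraction of equations. As in the $\kxor$ case, I would first reduce to the Poissonized sampling model $\lwegpoi k n m {\chi_\theta}$ via Remark~\ref{rmk:lwe_sampling} (the Poisson/fixed-sample equivalence), run the analog of Alg.~\ref{alg:two_path} on $\klwe$ samples, and then pass back. The algorithm $\TwoPath$-for-$\klwe$ (Alg.~\ref{alg:two_path_lwe}) partitions $[n]$ into working indices $[1{:}\kappa n]$ and cancellation indices $[\kappa n+1{:}n]$ with $\kappa = 1 - \log^{-1} n$; it keeps only samples whose support meets the cancellation block in exactly $(2k-k')/2 = k - k'/2$ positions, buckets these by their cancellation-support set $\beta$, and within each bucket picks one pair $(s,t)$. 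The key new wrinkle over $\kxor$: when we subtract equation $t$ from equation $s$, the coefficients on the shared cancellation indices must cancel \emph{exactly}, which requires $c_s$ and $c_t$ to agree on $\beta$; this happens with probability $(q-1)^{-(k-k'/2)}$, so the effective number of candidate pairs per target support scales with $\neff = n(q-1)$ rather than $n$. This is exactly why the parametrization uses $\eps = \log_{\neff} m$.

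The core distributional claim is the $\klwe$ analog of Step 1 in the proof of Lemma~\ref{lem:2path_red}: for a retained pair $(s,t)$ with matching coefficients on $\beta$, the difference
$$
\Y_s - \Y_t = \langle c_s - c_t, x\rangle + (e_s - e_t)
$$
has coefficient vector $c_s - c_t$ supported on $\alpha_s \triangle \alpha_t \subseteq [1{:}\kappa n]$ with i.i.d.\ uniform nonzero entries on the surviving $k'$ coordinates (using that $q$ is prime so differences of distinct uniform nonzero elements are uniform nonzero — this is where primality enters for the clean $\Z_q$ reduction), and noise $e_s - e_t \sim \chi_{f(\theta)}$ by combination-stability (Def.~\ref{def:resolution_stable}, item 1). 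Because each input sample lands in exactly one bucket $\calN(\beta)$ and each bucket emits at most one output equation, no input sample is reused, so the output noises are independent and the output supports $\gamma = \alpha_s\triangle\alpha_t$ are i.i.d.\ uniform on $\binomset{\kappa n}{k'}$. Thus $\calZ$ is a collection of genuine i.i.d.\ $\klwe$ samples in dimension $n' = \kappa n$ with noise $\chi_{f(\theta)}$, up to the size-control and aggregation steps.

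The remaining work is the counting argument (Step 2) and the four-case analysis of $m'$ and $\theta'$, which follow by Poisson/binomial concentration exactly as in Lemma~\ref{lem:2path_red} but with $n \to \neff$: the expected bucket size is $\E|\calN(\beta)| \approx m \cdot \neff^{-(k-k'/2)}$ (a factor $\neff^{-(k-k'/2)}$ for a random support meeting $\beta$ correctly and coefficients matching), and the expected number of candidate pairs per target $\gamma$ is $\approx m^2 \neff^{-(k+k'/2)}$. Whether a bucket is "sparse" (at most $O(1)$ pairs, just output them, $m' \approx m^2\neff^{-(k-k'/2)}$, $\theta'=f(\theta)$) or "dense" (many pairs per $\gamma$, aggregate $\Thetat(m^2\neff^{-(k+k'/2)})$ noisy copies of $\langle c_\gamma,x\rangle$ via the algorithm $\mathcal{A}_N$ from Def.~\ref{def:resolution_stable} item 2 to get noise $g(f(\theta), N)$) is decided by comparing $\eps$ with $k-k'/2$ and with $\tfrac k2 + \tfrac{k'}4$; the two further subcases with $m'=\Thetat(m)$ (resp.\ $m' = \Thetat(\neff^{k'})$, $\theta' = g(f(\theta),\Thetat(m\neff^{-k'}))$) arise when $\eps > k - k'/2$, i.e.\ buckets are so full that a single disjoint pair per bucket already saturates either the sample budget or the support space, so we subsample or re-aggregate accordingly. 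In each regime, conditioning on the high-probability concentration events (that all relevant bucket/observation counts are within $\poly\log n$ of their means) and using the "conditioning on a high-probability event" TV bound (Fact~\ref{tvfacts}) closes the argument, with the aggregation error $n^{-C}$ from $\mathcal{A}_N$ absorbed into the $o(1)$ total-variation slack; finally map back to the fixed-sample model and to $\lweg{k'}{n'}{m'}{\chi_{\theta'}}$ (Prop.~analog of~\ref{prop:sampling_equiv}). The main obstacle I expect is not any single step but the bookkeeping of the four parameter regimes simultaneously with the two-level sampling (Poissonization of both the input count and the emitted-equation count) while keeping all concentration events compatible; the conceptually delicate point is the exact-cancellation-of-coefficients event and verifying that, conditioned on it, $c_s - c_t$ is distributed as a fresh uniform $k'$-sparse coefficient vector — which is precisely where primality of $q$ is used and where the general-odd-$q$ caveat in the surrounding text comes from.
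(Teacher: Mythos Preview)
Your approach matches the paper's: reduce to the Poissonized model, bucket by cancellation data, subtract paired equations (noise becomes $\chi_{f(\theta)}$ by combination-stability), then either output the resulting equations directly or aggregate via $\mathcal A_N$. The counting via $\neff=n(q-1)$ is also as in the paper.

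There is one concrete gap. You say the algorithm ``within each bucket picks one pair $(s,t)$'', following Alg.~\ref{alg:two_path} verbatim. That suffices for the first two bullets (where $\E|\calN(\beta)|=O(1)$), but it cannot produce the third and fourth bullets. When $\eps>k-k'/2$ the buckets are large ($|\calN(\beta)|\asymp m\,\neff^{-(k-k'/2)}\gg 1$), and one pair per bucket yields only $\approx \neff^{k-k'/2}\ll m$ output equations, not $m'=\Thetat(m)$. The paper's Alg.~\ref{alg:two_path_lwe} differs from Alg.~\ref{alg:two_path} precisely here: it runs a \emph{while} loop over each $\calN(\beta)$, repeatedly extracting disjoint pairs until fewer than two remain, so each bucket contributes $\Theta(|\calN(\beta)|)$ equations and $|\calZ|\ge C_Z\cdot m$ in this regime. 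Your sentence ``a single disjoint pair per bucket already saturates either the sample budget or the support space'' is not a correct accounting; you need the multi-pair extraction (which still avoids reuse since the pairs are disjoint within a bucket and buckets are disjoint across $\beta$). Once that is in place, bullets three and four follow as you outline.

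A smaller point: your identification of where primality of $q$ enters is off. On the surviving $k'$ coordinates you never form a difference of two nonzero coefficients; each surviving entry is either $(c_s)_i$ or $-(c_t)_i$, and negation is a bijection on $\Z_q\setminus\{0\}$ for any $q$. The paper in fact remarks that Lemma~\ref{lem:2path_red_lwe} holds for general odd $q$; primality is used only in the uniform-noise refinement (Lemma~\ref{lem:2path_red_lwe_uniform}), where one rescales by a coefficient inverse. Relatedly, the paper buckets by the full cancellation \emph{vector} $\beta\in\Z_q^{(1-\kappa)n}$ (coefficients included) rather than by support-plus-coefficient-match; your version is equivalent in the counting but the paper's phrasing makes the independence of output coefficient vectors more transparent.
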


Similarly to the $\DiscrEqComb$ in Lemma~\ref{lem:2path_red}, Lemma~\ref{lem:2path_red_lwe} splits into regimes depending on the number of available resolution pairs. In the regimes where each $\klwek {k'}$ equation is observed only $O(1)$ times (bullets 1 and 3), the reduction outputs these observations yielding $\theta' = f(\theta)$. In the regimes where for each $\klwek {k'}$ equation many copies are observed (bullets 2 and 4), the reduction additionally applies the ``aggregation" procedure $\calA_N$ from Def.~\ref{def:resolution_stable}, yielding $\theta' = g(f(\theta), N)$ for an appropriate $N$.

\begin{algorithm}\SetAlgoLined\SetAlgoVlined\DontPrintSemicolon
    \KwIn{
    $\calY = \set{ \b(c_j, \Y_{j}\b)}_{j\in\sqb{\mpoi}}$, where $\forall j \in\sqb{\mpoi}$, $\b(c_j, \Y_{j}\b) \in \Z_q^n \times \Z_q$, $k'\in \Z^+$, $\calA = \set{\calA_N: \Z_q^N \to \Z_q}$, $\kappa \in (0,1)$.
    }
    \KwOut{$\widetilde\calZ = \b\{ \b(c'_j, \Zt_{j}\b)\b\}_{j\in\sqb{\mpoi'}}$, where $\forall j \in\sqb{\mpoi'}$, $\b(c'_j, \Zt_{j}\b) \in \Z_q^{\kappa n} \times \Z_q$.}

    \BlankLine
    \tcp{Prune the input equations; only consider those with exactly $k'/2$ nonzero coefficients in $\sqb{\kappa n}$}
    $\Ss = \set{j:\, j\in\sqb{\mpoi}\text{ and } |\supp(c_j) \cap \sqb{\kappa n}| = k'/2}$

    \tcp{Denote the restriction of $c_j$ on $\sqb{\kappa n + 1: n}$ as a cancellation vector $\Ccal(j)$}
    For each $ j \in \Ss$, 
    $$\Ccal(j) \gets \paren{c_j}_{\sqb{\kappa n + 1: n}}$$

    \BlankLine
    \tcp{Aggregate entries with same $(k-k'/2)$-sparse cancellation vector $\beta$}
    For each $(k-k'/2)$-sparse $ \beta \in \Z_q^{(1-\kappa)n}$, let $\calN(\beta) = \set{j \in \sqb{\Ss}:\, \Ccal(j) = \beta}$

    \BlankLine
    \tcp{Form disjoint resolution pairs and record them}
    $\calZ \gets \emptyset$

    \ForAll{$(k-k'/2)$-sparse $\beta \in \Z_q^{(1-\kappa)n}$}{
        \While{$\b|\calN(\beta)\b| \geq 2$}{
            $s, t \gets \unif\sqb{\binom{\calN(\beta)}2}$
    
            \If{$|\supp \paren{c_s - c_t}| = k'$}{
                $\calZ \gets \calZ \cup \set{\paren{(c_s -c_t)_{[1:\kappa n]}, \Y_s - \Y_t}}$
            }

            $\calN(\beta) \gets \calN(\beta)\setminus \set{s, t}$
        }
        }

    \BlankLine
    \If{ $\log_{n(q-1)} B_Z \leq k'$ }{\tcp{Sparse case -- few observations in $ \calZ$}

        $\mpoit \gets \poi(B_Z/2)$, $\mpoi' = \min\big\{\mpoit, |\calZ|\big\}$ \ \tcp{See pf. of Lem.~\ref{lem:2path_red_lwe} for defn of $B_Z$}

        $\widetilde \calZ \gets \text{ random }\mpoi'\text{-subset of }\calZ$ 
        
    }\Else{ \tcp{Dense case -- many observations in $\calZ$}

        For each $k'$-sparse $c' \in \Z_q^{\kappa n}$,
        $$\Zt^{obs}_{c'} \gets \big\{ \Zt:\, \paren{c', \Zt} \in \calZ \big\}$$

        $N \gets B_{obs}$, $\widetilde \calZ \gets \emptyset$ \ \tcp{See proof of Lemma~\ref{lem:2path_red_lwe} for defn of $B_{obs}$}

        
        \ForAll{$k'$-sparse $c' \in \Z_q^{\kappa n}$}{

            $m_{c'} \sim \poi\paren{1}$ 

            \For{$i \in \sqb{m_{c'}}$}{
                $\Zt^{obs}_{c'}(i) \gets \text{ random }\min\set{N, |\Zt^{obs}_{c'}|}\text{-subset of }\Zt^{obs}_{c'}$ 
        
                $\widetilde \calZ \gets \widetilde \calZ \cup \set{\paren{c', \calA\b( \Zt^{obs}_{c'}(i) \b)}}$ 

                $\Zt^{obs}_{c'} \gets \Zt^{obs}_{c'} \setminus \Zt^{obs}_{c'}(i)$
                
            }
             
        }

    }

    Return $\widetilde \calZ$.

    \caption{$\DiscrResolutionLWE\paren{\calY, k', \mathcal{A}, \kappa = 1 - \log^{-1}n}$}
\label{alg:two_path_lwe}
\end{algorithm}

\begin{proof}[Proof of Lemma~\ref{lem:2path_red_lwe}]
    The reduction is Alg.~\ref{alg:two_path_lwe}, which is a direct analogue of the $\DiscrEqComb$ reduction in Alg.~\ref{alg:two_path} for $\kxor$. Here we highlight the main proof steps and differences from the proof of Lemma~\ref{lem:2path_red}.

    As mentioned in Remark~\ref{rmk:lwe_sampling}, Alg.~\ref{alg:resample_iid} and \ref{alg:resample_M} are average-case reductions from $\lweg k n m {\chi_{\theta}}$ to $\lwegpoi k n {m'} {\chi_{\theta}}$ and vice versa, preserving the number of samples $m$ up to constants. Therefore, it is sufficient to show that Alg.~\ref{alg:two_path_lwe} is an average-case reduction for both detection and recovery 
    $$\text{from}\quad\lwegpoi k n m {\chi_{\theta}}\quad\text{to}\quad\lwegpoi {\kstar} {n'} {m'} {\chi_{\theta'}}\,,$$ 
    where $n'=\kappa n, m', \theta'$ satisfy the conditions of Lemma~\ref{lem:2path_red_lwe}.
    We present the proof for the planted case and the same argument applies to the null case. We show that for an input $\calY = \set{ \b(c_j, \Y_{j}\b)}_{j\in\sqb{\mpoi}} \sim \lwegpoi k n m {\chi_{\theta}}, \mpoi \sim \poi(m)$, the output of Alg.~\ref{alg:two_path_lwe} $\widetilde\calZ = \DiscrResolutionLWE(\calY,k',\kappa = 1 - \log^{-1}n, \calA)$ satisfies 
    \begin{equation}\label{eq:final_tv_resolution_lwe}  
        \law (\widetilde\calZ) \approx_{\tv = o(1)} \lwegpoi {\kstar} {n'} {m'} {\chi_{\theta'}}\,.
    \end{equation} 
    Here, $\calA$ is the ``aggregation" algorithm for $\chi_{\theta}$ from the definition of resolution-stable distribution Def.~\ref{def:resolution_stable}. In this proof, we denote $\neff = n(q-1)$: in $\klwe$ we have $\Theta\paren{\neff^k}$ number of possible coefficient vectors.

\paragraph{Step 1: Distribution of elements in $\calZ$.} 
    We show that $\calZ = \big\{(c'_j, \Zt_j)\big\}_{j\in \sqb{|\calZ|}}$, which is the set constructed in Lines 4-8 of Alg.~\ref{alg:two_path_lwe}, contains elements $(c'_j, \Zt_j) \in \Z_q^{\kappa n} \times \Z_q$ distributed as i.i.d. entries of $\klwek {k'}$ with noise distribution $\chi_{f(\theta)}$, where recall that we assumed $\chi$ is $(f,g)$-resolution-stable. 
    
    The proof is analogous to Step 1 in Lemma~\ref{lem:2path_red}. The recorded value $\Y_s - \Y_t$ has noise distribution $\chi_{f(\theta)}$, which follows from the definition of resolution-stable noise distribution (Def.~\ref{def:resolution_stable}). Moreover, since the coefficient vectors $c_s, c_t$ are uniformly random $k$-sparse and satisfy $\supp\paren{c_s - c_t} = \supp \paren{\paren{c_s - c_t}_{[1:\kappa n]}} = k'$, the resulting coefficient vectors $c'_j = \paren{c_s - c_t}_{[1:\kappa n]} \in\Z_q^{\kappa n}$ are uniformly random $k'$-sparse: $$(c'_j)_i = \begin{cases}
        \unif\sqb{(\Z_q\setminus 0)}, &i\in \supp(c_j')\\
        0, &\text{otherwise}\,.
    \end{cases}$$
    Similarly to Lemma~\ref{lem:2path_red}, since the input equations are not reused in the resolution pairs, the resulting entries $(c'_j, \Zt_j)$ are jointly independent. 

\paragraph{Step 2: Bound on $|\calZ|$.} We prove that with high probability the number of recorded equations $|\calZ|$ satisfies the lower bound
$$
    |\calZ| \geq B_Z = \begin{cases}
        C_Z \cdot m^2 \neff^{-(k-k'/2)}, & \log_{\neff}  m \leq k - k'/2 \\
        C_Z \cdot m, &\text{otherwise}\,,
    \end{cases}
    $$
    where $C_Z = \Thetat (1)$ hides constants depending on $k$ and $\poly\log n$ factors; denote the corresponding event 
    $$
    S_{|\calZ|} = \set{ |\calZ| \geq B_Z }\,,\qquad \Pr\sqb{S_{|\calZ|}^c} = o(1)\,.
    $$   
    In particular, from Poisson splitting, for each $(k-k'/2)$-sparse $ \beta \in \Z_q^{(1-\kappa)n}$ (the cancellation vector), we have 
    $$
    |\calN\paren{\beta}| \sim \poi\paren{\Theta\paren{m \neff^{-(k-k'/2)}}}\,.
    $$
    In the case $\log_{\neff}  m \leq k - k'/2$, each cancellation vector $\beta$ has $O(1)$ equations in its set $\N(\beta)$. Then, each set produces a resolution pair with probability $\Thetat\paren{m^2 \cdot \neff^{-2(k-k'/2)}}$, so
    $$
    |\calZ| \sim \bin\paren{ \binom{n(1-\kappa)}{k-k'/2} \cdot (q-1)^{k-k'/2}, \Thetat\paren{m^2 \cdot \neff^{-2(k-k'/2)}}}\,,
    $$
    and by Binomial concentration in Prop.~\ref{prop:binomial_conc}, with high probability, 
    $$
    |\calZ| \geq  C_Z \cdot m^2 \neff^{-(k-k'/2)} \eqcolon B_Z\,,
    $$
    for $C_Z = \Thetat(1)$.

    Otherwise, by Poisson concentration in Prop.~\ref{prop:pois_conc}, with high probability,  simultaneously for all $\beta$, $|\N(\beta)| \geq \Thetat\paren{\E |\N(\beta)|} = \Thetat\paren{m \neff^{-(k-k'/2)}}$. In this case, each $\N(\beta)$ yields $\Theta\paren{|\N(\beta)|}$ resolution pairs, so
    $$
    |\calZ| \geq C_Z \cdot m \eqcolon B_Z\,,
    $$
    for $C_Z = \Thetat(1)$.

\paragraph{Step 3: Distribution of the output set $\widetilde\calZ$.} Here we conclude the proof by showing 
 $$
    \tv\paren{\law(\widetilde \calZ), \lwegpoi {k'} {n'} {m'} {\chi_{\theta'}}} = o_n(1)\,,
    $$
where $\widetilde \calZ$ is the output of the Alg.~\ref{alg:two_path}. This is handled differently in the \emph{sparse regime}, when the number of observed $\klwek {k'}$ equations $B_Z$ is smaller than the number of all  $k'$-sparse coefficient vectors $\Thetat\paren{\neff^{k'}}$, and the \emph{dense regime}, when $B_Z$ is much larger than $\Thetat\paren{\neff^{k'}}$ and the aggregation step is required.

    \textbf{Case 1: $\log_{\neff} B_Z \leq {k'}$ (sparse).} The remaining step of Alg.~\ref{alg:two_path_lwe} in Lines 11-13 chooses a random $\widetilde\mpoi \sim \poi(B_Z/2)$ subset of $\calZ$ if $|\calZ| \geq \widetilde\mpoi$. Given $S_{|\calZ|}$, i.e., $|\calZ| \geq B_Z$, this happens with high probability by Poisson concentration (Prop.~\ref{prop:pois_conc}). Denoting this event $S_{\poi}$, we observe that 
    $$
    \law( \widetilde\calZ | S_{\poi} \cap S_{|\calZ|} ) = \lwegpoi {k'} {n'} {m'} {\chi_{\theta'}}\,,
    $$
    for $\theta' = f(\theta)$ and $m' = B_Z / 2$. By conditioning on the high probability event in total variation (Fact~\ref{tvfacts}), we obtain 
    $$\tv\paren{ \widetilde\calZ, \lwegpoi {k'} {n'} {m'} {\chi_{\theta'}} } \leq 1 - \Pr\sqb{S_{\poi}} = o(1)\,.$$
    The condition $\log_{\neff} B_Z \leq {k'}$ corresponds to bullets 1 and 3 in the Lemma~\ref{lem:2path_red_lwe} statement. 
    
    \textbf{Case 2: $\log_{\neff} B_Z > {k'}$ (dense).} We introduce a slight modification to the argument in Lemma~\ref{lem:2path_red} by skipping the intermediate reduction to $k'\text{-}\mathsf{tPCA}$ and map directly to $\lwegpoi {k'} {n'} {m'} {\chi_{\theta'}}$. Let $C_{\poi} = \Thetat(1)$ be such that 
    $$
    \Pr[\poi(1) \leq C_{\poi}] \geq 1 - n^{-K}\,,
    $$
    for some large fixed constant $K$. The remaining steps of Alg.~\ref{alg:two_path_lwe} in Lines 17-22 produce exactly $m_{c'}$ output equations with $k'$-sparse coefficient vector $c'$, and by above, simultaneously with high probability for all $c'$, $m_{c'} \leq C_\poi$. Moreover, since $\calZ$ contains uniformly random $k'$-sparse equations, given $S_{|\calZ|}$, $|\calZ| \geq B_Z$ and for all $k'$-sparse $c'\in\Z_q^{\kappa n}$, simultaneously with high probability,
    \begin{equation}\label{eq:obs_size_lwe}
        \b| \Zt_{c'}^{obs} \b| \geq \Thetat\paren{B_Z \neff^{-k'}} = C_{Pois}\cdot C_{obs} \cdot B_Z \neff^{-k'} \eqcolon C_{Pois} \cdot B_{obs}
    \end{equation}
    for $C_{obs} = \Thetat(1)$ hiding constants depending on $k$ and $\poly\log n$ factors. Denote the event above as $$S_{obs} = \set{\forall k'\text{-sparse }c'\in\Z_q^{\kappa n}\,,  \b| \Zt_{c'}^{obs} \b| \geq C_{Pois} \cdot B_{obs}}\,.$$
    Given $S_{|\calZ|}, S_{obs}$, the output $\widetilde\calZ$ contains $m_{c'} \sim \poi(1)$ equations for every $k'$-sparse coefficient vector $c'$ with a noise distribution $\chi_{\theta'}$, where 
    $$
    \theta' = g \paren{f(\theta), B_{obs}}\,.
    $$
    Then,
    $$
    \law( \widetilde\calZ | S_{|\calZ|} \cap S_{obs}) = \lwegpoi {k'} {n'} {m'} {\chi_{\theta'}}\,,
    $$
    where $n' = \kappa n$ and $m' = \binom {\kappa n} {k'} (q-1)^{k'} = \Thetat\paren{\neff^{k'}}$. By conditioning on the high probability event in total variation (Fact~\ref{tvfacts}), we obtain 
    $$\tv\paren{ \widetilde\calZ, \lwegpoi {k'} {n'} {m'} {\chi_{\theta'}} } \leq 1 - \Pr\sqb{S_{|\calZ|} \cap S_{obs}} = o(1)\,.$$
    The condition $\log_{\neff} B_Z > {k'}$ corresponds to bullets 2 and 4 in Lemma~\ref{lem:2path_red_lwe}.
\end{proof}

\subsection{Collision-Based Detection Algorithm for $k$-sparse $\lwe$.}\label{subsec:collision_detection}

\begin{algorithm}\SetAlgoLined\SetAlgoVlined\DontPrintSemicolon
    \KwIn{
    $\calY = \set{ \b(c_j, \Y_{j}\b)}_{j\in\sqb{m}}$, where $\forall j \in\sqb{m}$, $\b(c_j, \Y_{j}\b) \in \Z_q^n \times \Z_q$, $T\in \R$.
    }
    \KwOut{$\mathsf{ANS} \in \set{\mathsf{PLANTED}, \mathsf{NULL}}$.}

    \BlankLine
    \tcp{Aggregate entries with the same index set and coefficient vector}
    For all $k$-sparse $c \in \Z_q^n$, $$ \N\paren{c} \gets \set{ j \in \sqb{m}:\, c = c_j } $$

    \BlankLine
    \tcp{Examine the Collisions}
    $C_{1}, C_{0} \gets 0$

    \ForAll{$k$-sparse $c \in \Z_q^n$}{
        \While{$|\N\paren{c}| \geq 2$}{
            select $s\neq t$ from $\N\paren{c}$
            
            $a \gets \mathds{1}\set{\Y_{s} = \Y_{t}}$ 
            
            $C_{a} \gets C_a + 1$

            $\N\paren{c} \gets \N\paren{c} \setminus \set{s, t}$
        }
    }

    \BlankLine
    Return $\mathsf{PLANTED}$ \textbf{if} $\frac {C_1}{C_0 + C_1} \geq T$ \textbf{else} $\mathsf{NULL}$
    
    \caption{$\CollisionDetection\paren{\calY, T}$}
\label{alg:collision}
\end{algorithm}

The collision-based detection algorithm (Alg.~\ref{alg:collision}) accepts $m$ equations $\calY = \set{\paren{c_j, \Y_j}}_{j\in\sqb{m}}$ and finds disjoint ``collisions", i.e., pairs $s,t \in\sqb{m}$ satisfying $$c_s = c_t\,.$$ 
The algorithm then computes the portion of such pairs $s,t$ for which $\Y_s = \Y_t$ (as it would have been in a noiseless setting) and returns $\mathsf{PLANTED}$ if the portion is sufficiently large and $\mathsf{NULL}$ otherwise. 

\begin{lemma}[$\CollisionDetection$ Guarantees]\label{lem:collision_detection}
    Let $\calY$ be drawn from one of the two distributions:
    \begin{align*}
        &H_1: \quad\calY \sim \lweg k n m \chi\\
        &H_0: \quad\calY \sim \lwegnull k n m\,.
    \end{align*}
    Then there exists a threshold $T = T(k,n,m,\chi)$ such that $\mathsf{ANS} = \CollisionDetection\paren{\calY, T}$ (Alg.~\ref{alg:collision}) can be computed in $O(\poly (n,q))$ time and satisfies 
    $$
    \Pr_{\calY\sim H_0}\sqb{\mathsf{ANS} = \mathsf{PLANTED}} + \Pr_{\calY\sim H_1}\sqb{\mathsf{ANS} = \mathsf{NULL}} \to 0\quad \text{as }n\to \infty
    $$
    if the following conditions are met:
    \begin{itemize}
        \item \textbf{Uniform $\Z_q$ Noise $\chi_\delta$ for $\delta = O(1/q)$:} $n^{k/2}(q-1)^{k/2} \lesssim m \lesssim n^{k}(q-1)^{k-1}$ and
        $$
            m \delta^2 = \widetilde \Omega\paren{n^{k/2} (q-1)^{k/2 - 1}}\,;
            $$
        
        \item \textbf{Discrete Gaussian Noise $\chi_s$ for $s \leq q n^{-\vareps}, \vareps>0$:}
            $$
            n^{k/2}(q-1)^{k/2} \cdot \max\set{\sqrt{s}, 1} \lesssim m \lesssim n^k (q-1)^k \,;
            $$
        \item \textbf{Bounded $\Z_q$ Noise:} for $\chi_l$ supported on $\sqb{-l,l}$, $n^{k/2} (q-1)^{k/2}\paren{\|\chi_l\|^2_2 - 1/q}^{-1/2} \lesssim  m \lesssim n^k (q-1)^k$, and
        $$
        m \paren{\|\chi_l\|^2_2 - 1/q} = \widetilde\Omega\paren{n^{k/2}(q-1)^{(k-1)/2}}\,,
        $$
        where $\|\chi_l\|^2_2 = \sum_{e \in \sqb{-l, l}} \chi_l(e)^2$.
    \end{itemize}
    In the above $\widetilde\Omega$ hides constants in $k$ and $\poly\log n$ factors.
\end{lemma}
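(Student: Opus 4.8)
\textbf{Proof plan for Lemma~\ref{lem:collision_detection}.}

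The plan is to analyze the test statistic $\widehat p \coloneq C_1/(C_0+C_1)$, which is (approximately) the empirical fraction, among disjoint collision pairs $s\neq t$ with $c_s = c_t$, of pairs satisfying $\Y_s = \Y_t$. First I would compute the number of collision pairs that the algorithm finds. Since there are $\Theta(\neff^k)$ possible $k$-sparse coefficient vectors (with $\neff = n(q-1)$) and $m$ equations with i.i.d.\ uniform coefficient vectors, the expected number of equations sharing any fixed $c$ is $\Theta(m/\neff^k)$, and a birthday/second-moment computation gives that the total number of disjoint collision pairs is $N_{\mathrm{coll}} = \Theta(m^2/\neff^k)$ with high probability (when $m \gg \neff^{k/2}$, which is exactly the lower bound on $m$ assumed in each bullet; the upper bound $m \lesssim n^k(q-1)^{k-1}$ or $n^k(q-1)^k$ ensures we are still in the sparse-collision regime so that collisions are "isolated" and the count concentrates). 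Here it is convenient to Poissonize (via Alg.~\ref{alg:resample_iid}, \ref{alg:resample_M}) so that the $|\N(c)|$ are independent Poissons and the pair counts across distinct $c$ are independent; then $N_{\mathrm{coll}}$ concentrates by a standard Poisson/Bernstein argument (Prop.~\ref{prop:pois_conc}, \ref{prop:binomial_conc}).

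Next I would compute the conditional law of $\indc{\Y_s = \Y_t}$ given a collision $c_s = c_t = c$. Under $H_0$, $\Y_s, \Y_t \sim_{\mathrm{i.i.d.}} \unif(\Z_q)$, so $\Pr[\Y_s = \Y_t] = 1/q$ exactly, independently of $c$. Under $H_1$, $\Y_s = \la c, x\ra + e_s$ and $\Y_t = \la c, x\ra + e_t$ with the \emph{same} $\la c,x\ra$ (this is the point of conditioning on $c_s = c_t$), so $\Y_s = \Y_t \iff e_s = e_t$ with $e_s, e_t \sim_{\mathrm{i.i.d.}} \chi$, giving $\Pr[\Y_s=\Y_t] = \sum_{e} \chi(e)^2 = \|\chi\|_2^2$. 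Thus the per-pair success probabilities under the two hypotheses are $q_0 \coloneq 1/q$ and $q_1 \coloneq \|\chi\|_2^2$, and the gap is $\|\chi\|_2^2 - 1/q \geq 0$ by Cauchy--Schwarz (with equality iff $\chi$ is uniform). I would then instantiate $\|\chi\|_2^2 - 1/q$ for each noise model: for uniform $\Z_q$ noise $\chi_\delta$ one computes $\|\chi_\delta\|_2^2 - 1/q = \delta^2(1 - 1/q) = \Theta(\delta^2)$; for discrete Gaussian $\chi_s$ with $s \leq q n^{-\vareps}$, a Poisson-summation / theta-function estimate gives $\|\chi_s\|_2^2 = \Theta(1/\max\{s,1\})$ and $1/q$ is negligible in comparison under the stated bounds, so the gap is $\Theta(1/\max\{s,1\})$; for bounded noise it is exactly the quantity $\|\chi_l\|_2^2 - 1/q$ appearing in the statement.

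Finally I would set the threshold $T$ to be (a suitable point between $q_0$ and $q_1$, e.g.) $T = q_0 + \tfrac12(q_1 - q_0)$ and apply a concentration inequality to $\widehat p$. Conditionally on the collision structure, $C_1 \mid C_0 + C_1 = N_{\mathrm{coll}} \sim \bin(N_{\mathrm{coll}}, q_i)$ under $H_i$ (the disjointness of the pairs makes the indicators independent across pairs). By Bernstein/Chernoff (Prop.~\ref{prop:binomial_conc}), the test succeeds with probability $1 - o(1)$ provided $N_{\mathrm{coll}} \cdot (q_1 - q_0)^2 = \widetilde\Omega(1)$, i.e.
$$
\frac{m^2}{\neff^k}\,\bigl(\|\chi\|_2^2 - \tfrac1q\bigr)^2 = \widetilde\Omega(1)
\quad\Longleftrightarrow\quad
m\,\bigl(\|\chi\|_2^2 - \tfrac1q\bigr) = \widetilde\Omega\bigl(n^{k/2}(q-1)^{k/2}\bigr),
$$
using $\neff^{k/2} = n^{k/2}(q-1)^{k/2}$. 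Substituting the three expressions for $\|\chi\|_2^2 - 1/q$ recovers exactly the three conditions in the lemma (the $\max\{\sqrt s,1\}$ in the Gaussian case and the $(q-1)^{k/2-1}$ versus $(q-1)^{(k-1)/2}$ bookkeeping for prime versus general $q$ come out of this substitution). The runtime is $O(\poly(n,q))$ since the only nontrivial step is bucketing the $m$ equations by their coefficient vector, which is a single pass. I expect the main obstacle to be the concentration of $N_{\mathrm{coll}}$ and, more subtly, verifying that the collision pairs can be chosen disjointly so that the success indicators are genuinely independent (rather than merely pairwise controlled) in the regime where $m$ is near the upper end $n^k(q-1)^{k-1}$; the Poissonization reduction and a careful greedy pairing argument should handle this, but the second-moment computation controlling the variance of $N_{\mathrm{coll}}$ against the square of its mean is where the two-sided bounds on $m$ are actually used.
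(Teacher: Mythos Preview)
Your overall structure is right, but two concrete issues prevent the argument from reaching the bounds stated in the lemma.

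\textbf{(1) Missing rescaling step for uniform noise.} In the uniform case, the paper exploits that multiplying an equation by any $s\in\Z_q\setminus\{0\}$ preserves the noise law, so one may normalize the first nonzero coefficient of every $c_j$ to $1$. This shrinks the effective coefficient space from $\Theta(n^k(q-1)^k)$ to $\Theta(n^k(q-1)^{k-1})$ and hence \emph{increases} the collision count by a factor $q-1$. You never use this, treating all three noise models with the same $N_{\mathrm{coll}}=\Theta(m^2/\neff^k)$. Without it your uniform-noise bound would be $m\delta^2=\widetilde\Omega(n^{k/2}(q-1)^{k/2})$, not the claimed $(q-1)^{k/2-1}$. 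Your remark that the $(q-1)^{k/2-1}$ vs.\ $(q-1)^{(k-1)/2}$ discrepancy is ``prime versus general $q$ bookkeeping'' is not correct: the prime assumption is what \emph{enables} the rescaling (invertibility of every nonzero element), but the mechanism itself must be invoked.

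\textbf{(2) Wrong variance in the binomial tail.} Your success criterion $N_{\mathrm{coll}}(q_1-q_0)^2=\widetilde\Omega(1)$ is what Hoeffding gives, treating $\Var(C_1)$ as $\Theta(N_{\mathrm{coll}})$. But $q_0=1/q$ and $q_1$ are both small here, so $\Var(C_1)\approx N_{\mathrm{coll}}\max\{q_0,q_1\}$, and Bernstein (Prop.~\ref{prop:binomial_conc}) yields the sharper condition $N_{\mathrm{coll}}(q_1-q_0)\gtrsim\sqrt{N_{\mathrm{coll}}\max\{q_0,q_1\}}$. The paper sets the threshold at $T=1/q+\widetilde\Theta(\sqrt{1/(N_{\mathrm{coll}}q)})$ precisely to exploit this. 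Concretely: for bounded noise your condition gives $m(\|\chi_l\|_2^2-1/q)\gtrsim n^{k/2}(q-1)^{k/2}$, a $\sqrt{q-1}$ factor worse than the lemma's $(q-1)^{(k-1)/2}$; for discrete Gaussian you get $N_{\mathrm{coll}}\gtrsim s^2$ hence $m\gtrsim\neff^{k/2}s$, a $\sqrt s$ factor worse than the stated $\neff^{k/2}\max\{\sqrt s,1\}$; for uniform noise this loss compounds with issue~(1) to give the full $(q-1)$ gap.

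Both fixes are easy once identified, but without them the derived conditions are strictly stronger than what the lemma asserts, so the lemma as stated is not proved.
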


We now provide proof sketches for the correctness of Alg.~\ref{alg:collision} in the three noise settings. 

\paragraph{Uniform $\Z_q$ Noise.}
In the uniform  $\Z_q$ noise $e \sim \chi_{\delta}$ if $$e = \begin{cases}
        0, &\text{w. prob. }\delta\\
        \unif\paren{\Z_q }, &\text{w. prob. }1-\delta\,,
    \end{cases}$$
where $\delta = O(1/q)$.
Denote $\mu_{c}$ to be the law of the observed equation \eqref{eq:def_lwe} with a $k$-sparse coefficient vector $c$. Then, multiplying $\Y_{j} \sim \mu_{c_j}$ by a nonzero $s \in \Z_q \setminus \set{0}$ preserves the noise distribution and yields
\begin{align*}
s \Y_{j} = \la sc_j, x \ra + s e_{j} \sim \mu_{s c_j}\,.
\end{align*}
Then, under the uniform noise model, we can assume that for all $j\in\sqb{m}$ we have $\paren{c_j}_{\min\supp\paren{c_j}} = 1$ (i.e., we can rescale $\Y_{j}$ by $\paren{c_j}_{\min\supp\paren{c_j}}^{-1}$). Then, each of the $m$ equations has a coefficient vector drawn uniformly out of $ \Thetat\paren{n^k (q-1)^{k-1}}$ options. A direct calculation then gives that, if $m = \Ot\paren{n^k (q-1)^{k-1}}$, with high probability,  
$$
C_1 + C_0 = \widetilde \Theta\paren{ \frac{m^2}{n^k (q-1)^{k-1}} } \eqcolon N\,.
$$
For our algorithm to succeed, we require $N/q \gg 1$, translating into $m = \widetilde\Omega\paren{n^{k/2}(q-1)^{k/2}}$.
Under the two hypothesis distributions we have
\begin{align*}
    \mathsf{PLANTED}: \quad& C_1 \sim \bin\paren{N, \delta^2(1-1/q) + 1/q}\\
    \mathsf{NULL}: \quad& C_1 \sim \bin\paren{N, 1/q}\,.
\end{align*}
From the binomial concentration (Prop.~\ref{prop:binomial_conc}), if we set the threshold $T = 1/q + \widetilde\Theta\paren{\sqrt{1/(Nq)}}$, Alg.~\ref{alg:collision} succeeds at distinguishing $\mathsf{PLANTED}$ vs $\mathsf{NULL}$ with high probability whenever $N \delta^2 = \widetilde\Omega\paren{ \sqrt{N/q}}$, or, equivalently, 
\begin{equation}\label{eq:thr_unif}
m \delta^2 = \widetilde \Omega\paren{n^{k/2} (q-1)^{k/2 - 1}}\,.
\end{equation}
For $q=2$, which is the case of $\kxor$, and any $q = O\paren{\poly\log n}$, Eq.~\eqref{eq:thr_unif} matches the $\kxor$ computational threshold \eqref{eq:tradeoff} up to $\poly\log n$ factors.

\paragraph{Discrete Gaussian Noise.}
In the discrete Gaussian noise, 
$$
    \chi_s(e) \propto \sum_{i\in\Z} e^{-\pi (e + iq)^2/s^2}\,, \quad e\in\Z_q\,,
    $$
where $1 \leq s \leq qn^{-\vareps}$ for some $\vareps > 0$. 
Since there are $\binom n k$ potential values for index sets $\alpha_j = \supp\paren{c_j}$ and $(q-1)^k$ values for the nonzero entries of $c_j$, if $m = \Ot\paren{n^{k}(q-1)^k}$,  with high probability, 
$$
C_1 + C_0 = \widetilde \Theta\paren{ \frac{m^2}{n^k (q-1)^{k}} } \eqcolon N\,.
$$
A direct calculation gives that under the two hypothesis distributions we have
\begin{align*}
    \mathsf{PLANTED}: \quad& C_1 \sim \bin\paren{N, \Theta(1/s)}\\
    \mathsf{NULL}: \quad& C_1 \sim \bin\paren{N, 1/q}\,.
\end{align*}
From the binomial concentration (Prop.~\ref{prop:binomial_conc}), if we set the threshold $T = 1/q + \widetilde\Theta\paren{\sqrt{1/(Nq)}}$, Alg.~\ref{alg:collision} succeeds at distinguishing $\mathsf{PLANTED}$ vs $\mathsf{NULL}$ with high probability whenever $N = \widetilde\Omega\paren{ \max\set{1, s}}$, where $\widetilde\Omega(1)$ hides $\poly\log n$ factors and constants, since $s\ll q$. Equivalently, 
$$
n^{k/2}(q-1)^{k/2} \cdot \max\set{\sqrt{s}, 1} \lesssim m \lesssim n^k (q-1)^k\,.
$$

\paragraph{Bounded $\Z_q$ Noise.} The bounded noise $\chi_l$ is supported on $\sqb{-l, l}$, so, denoting $\|\chi_l\|^2_2 = \sum_{e \in \sqb{-l, l}} \chi_l(e)^2$, we have the probability that two samples from $\chi_l$ collide equal to $\|\chi_l\|^2_2$. By the argument analogous to above, in Alg.~\ref{alg:collision}, if $m = \Ot\paren{n^{k}(q-1)^k}$, with high probability, 
$$
C_1 + C_0 = \widetilde \Theta\paren{ \frac{m^2}{n^k (q-1)^{k}} } \eqcolon N\,,
$$
and therefore,
\begin{align*}
    \mathsf{PLANTED}: \quad& C_1 \sim \bin\paren{N, \|\chi_l\|^2_2}\\
    \mathsf{NULL}: \quad& C_1 \sim \bin\paren{N, 1/q}\,.
\end{align*}
From the binomial concentration (Prop.~\ref{prop:binomial_conc}), if we set the threshold $T = 1/q + \widetilde\Theta\paren{\sqrt{1/(Nq)}}$, Alg.~\ref{alg:collision} succeeds at distinguishing $\mathsf{PLANTED}$ vs $\mathsf{NULL}$ with high probability whenever $N\paren{\|\chi_l\|^2_2 - 1/q} = \widetilde\Omega\paren{ \max\set{1, \sqrt{N/q}}}$, or, equivalently, 
$$
 m \paren{\|\chi_l\|^2_2 - 1/q} = \widetilde\Omega\paren{n^{k/2}(q-1)^{(k-1)/2}} \quad\text{and}\quad n^{k/2} (q-1)^{k/2}\paren{\|\chi_l\|^2_2 - 1/q}^{-1/2} \lesssim  m \lesssim n^k (q-1)^k\,.
$$

\subsection{Resolution Reduction for Uniform $\lwe$ Noise}\label{subsec:discr_res_lwe_uniform}

In this section we give an improved resolution reduction for $\klwe$ with uniform noise in the case of prime $q$ (Lemma~\ref{lem:2path_red_lwe_uniform}). Compared to the general result in Lemma~\ref{lem:2path_red_lwe}, it gains $(q-1)$ factors for the number of equations in the output instance. This is achieved by observing that, similarly to the detection algorithm in \autoref{subsec:collision_detection}, we can assume that in the input instance $\calY = \set{\paren{c_j, \Y_j}}_{j \in \sqb{m}}$, the coefficient vectors $c_j$ are defined up to a multiplicative factor in $\Z_q\setminus 0$.

\begin{proposition}[Uniform $\lwe$ Noise is resolution-stable]
    Let $\chi_\delta$ be the uniform $\lwe$ noise, i.e., $e\sim \chi_{\delta}$ if for $\delta = O(1/q)$,
    $$e = \begin{cases}
        0, &\text{w. prob. }\delta\\
        \unif\paren{\Z_q}, &\text{w. prob. }1-\delta\,.
    \end{cases}$$
    Then, $\chi_\delta$ is $(f,g)$-resolution-stable, where 
    $$
    f(\delta) = \delta^2\quad\text{and}\quad g(\delta, N) = \begin{cases}
        \Thetat\paren{\sqrt{Nq} \delta}\,, & \widetilde\Omega(1) \leq Nq \leq \delta^{-2}\\
        \delta\,, &\text{otherwise},
    \end{cases}
    $$
    where $\Thetat$ hides constants and $\poly\log\paren{Nq}$ factors. 
\end{proposition}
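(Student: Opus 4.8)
The plan is to verify the two clauses of Def.~\ref{def:resolution_stable} separately; the first is a one-line distributional identity, and essentially all the content lives in the second.

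\textbf{Closure under subtraction ($f(\delta)=\delta^2$).} I would realize $\chi_\delta$ as a two-stage sample: with probability $\delta$ output $0$ (call this outcome \emph{atomic}), and with probability $1-\delta$ output a uniform element of $\Z_q$. Take $e_1,e_2\sim\chi_\delta$ independently. If both draws are atomic — probability $\delta^2$ — then $e_1-e_2=0$. Otherwise at least one draw, say $e_1$, is a fresh uniform element of $\Z_q$ independent of $e_2$ and of the atomic/uniform labels, so $e_1-e_2\bmod q$ is itself uniform on $\Z_q$. Hence $e_1-e_2\bmod q$ equals $0$ with probability $\delta^2+(1-\delta^2)/q$ and is uniform on $\Z_q\setminus\{0\}$ otherwise, i.e.\ $e_1-e_2\bmod q\sim\chi_{\delta^2}$. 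So one may take $f(\delta)=\delta^2$; this uses nothing about $q$ beyond $q\ge 2$.

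\textbf{Aggregation (the map $g$).} Given samples $s_i=s+e_i$, $e_i\sim\chi_\delta$ i.i.d., $i\in\sqb{N}$, form the empirical counts $n_a=|\{i:s_i=a\}|$, so that $n_s\sim\bin(N,\delta+(1-\delta)/q)$ and, for $a\neq s$, $n_a\sim\bin(N,(1-\delta)/q)$, and moreover $(n_a)_{a\neq s}$ is exchangeable conditionally on $n_s$. The estimator $\calA_N$ I have in mind is: if $Nq\le\widetilde\Omega(1)$ or $Nq>\delta^{-2}$, simply output $\hat s=s_1$, so that $\hat s\sim s+\chi_\delta$ exactly, giving $\theta'=\delta$ — this is the stated $g$ in those cases and, by the remark following Def.~\ref{def:resolution_stable}, is all one needs there; otherwise output $\hat s_0=\arg\max_a n_a$ with uniform random tie-breaking, and then \emph{calibrate} by replacing $\hat s_0$ with a fresh $\unif(\Z_q)$ draw with a suitable probability $\beta$, returning the result $\hat s$. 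In the main regime $\widetilde\Omega(1)\le Nq\le\delta^{-2}$, the mean of $n_s$ exceeds that of every other count by exactly $N\delta$, while counts are of size $\Theta(N/q)$ with fluctuations $\Theta(\sqrt{N/q})$, so the governing signal-to-noise ratio is $N\delta/\sqrt{N/q}=\sqrt{Nq}\,\delta\le 1$. Using binomial and Poisson concentration (Props.~\ref{prop:binomial_conc}, \ref{prop:pois_conc}) to pin all counts down on an event of probability $1-n^{-C}$, together with a first-order expansion of $\Pr[\,n_s>\max_{a\neq s}n_a\,]$ about its zero-signal value $1/q$ (a Gaussian/extreme-value computation over the $q-1$ competing counts), one gets $\Pr[\hat s_0=s]=1/q+\Thetat(\sqrt{Nq}\,\delta)$. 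By the conditional exchangeability of $(n_a)_{a\neq s}$, $\Pr[\hat s_0=a]$ is identical across $a\neq s$, so $\hat s_0$ is, up to $n^{-C}$ in total variation, already of the form $s+\chi_{\theta_0}$ with $\theta_0$ determined by $\Pr[\hat s_0=s]$; since mixing in $\unif(\Z_q)$ preserves uniformity on $\Z_q\setminus\{s\}$, choosing $\beta$ so that $\Pr[\hat s=s]=\theta'+(1-\theta')/q$ for the target $\theta'=g(\delta,N)$ (any value $\le\theta_0$, in particular $\Thetat(\sqrt{Nq}\,\delta)$ with a small enough hidden constant) makes $\hat s\sim s+\chi_{\theta'}$ up to $n^{-C}$. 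Poly-time running and the "arbitrarily large $C$" are immediate, the latter by taking the constants in Props.~\ref{prop:binomial_conc}, \ref{prop:pois_conc} large, and all conditioning on high-probability events is via Fact~\ref{tvfacts}.

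\textbf{Main obstacle.} Everything except one estimate is routine bookkeeping (the concentration events, the conditioning, the calibration mixture, and the degenerate few-samples regime). The crux is the two-sided estimate $\Pr[\hat s_0=s]=1/q+\Thetat(\sqrt{Nq}\,\delta)$ in the \emph{low-SNR} regime $\sqrt{Nq}\,\delta\le 1$: the lower bound must show that the $\Theta(N\delta)$ mean shift of $n_s$ genuinely translates into a multiplicative $\approx\sqrt{Nq}\,\delta$ gain over blind guessing, which I would argue through a Gaussian approximation of the count vector together with control of $\max_{a\neq s}n_a$ over $q-1$ near-i.i.d.\ binomials, and the upper bound needs the matching tail. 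Here one must separate the sub-regime in which each value is seen $O(1)$ times (counts essentially $0/1$) from the one in which the counts concentrate, and carefully track the $\poly\log q$ factors that $\Thetat$ absorbs.
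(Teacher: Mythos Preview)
Your proposal is correct and follows the paper's approach exactly: the paper's entire proof consists of the phrases ``direct calculation'' (for $f(\delta)=\delta^2$) and ``achieved by a majority vote'' (for $g$), and you have faithfully expanded the latter into the mode estimator with uniform-mixing calibration, identifying the governing SNR $\sqrt{Nq}\,\delta$ and correctly flagging the first-order estimate on $\Pr[\hat s_0=s]$ as the remaining work. Your write-up is in fact considerably more detailed than the paper's one-line sketch.
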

\begin{proof}
    $f(\delta) = \delta^2$ can be verified via a direct calculation; $g(\delta, N)$ is achieved by a majority vote whenever the number of samples is sufficiently large. 
\end{proof}

\begin{lemma}[Discrete Resolution for $\lwe$ with Uniform Noise]\label{lem:2path_red_lwe_uniform}
    Fix $k,n,m,q \in \Z^{+}, \delta = O(1/q) \in (0,1)$ and let $\chi_\delta$ be uniform $\Z_q$ noise. 
    
    For any even integer $\kstar< 2k$, there exists an average-case reduction for both detection and recovery (Alg.~\ref{alg:two_path_lwe}) $$\text{from}\quad\lweg k n m {\chi_\delta}\quad\text{to}\quad\lweg {\kstar} {n'} {m'} {\chi_{\delta'}}\,,$$ where $n' = n(1-o(1))$ and parameters $m', \theta'$ depend on $\theta$ and the input number of samples $m$ as follows. Denote $\qq = \log_n (q-1)$ and assume $\eps = \log_n m$ is a fixed constant;
    \begin{itemize}
        \item If $\eps \leq \min\set{k - \frac{k'}2 + \paren{k-\frac{k'}2-1} \qq, \frac k2 + \frac{k'}4 + \paren{\frac k 2 + \frac{k'}4 - 1}\qq}$ ,
        $$ m' = \Thetat\paren{m^2 n^{-(k-k'/2)} (q-1)^{-(k-k'/2-1)}} \quad\text{and}\quad \delta' = \delta^2\,;$$
        
        \item If $\eps \in \Big(\frac k2 + \frac{k'}4 + \paren{\frac k 2 + \frac{k'}4 - 1}\qq, k - \frac{k'}2 + \paren{k-\frac{k'}2-1} \qq\Big]$,
        $$
        m' = \Thetat\paren{n^{k'}(q-1)^{k'-1}} \quad\text{and}\quad \delta' = \Thetat\paren{\delta^2 mn^{-(k/2+k'/4)}(q-1)^{-(k/2+k'/4-3/2)}}\,;
        $$
        \item If $\eps \in \Big(k - \frac{k'}2 + \paren{k-\frac{k'}2-1} \qq, k' + (k'-1)\qq\Big]$,
        $$
        m' = \Thetat\paren{m}\quad\text{and}\quad \delta' = \delta^2\,;
        $$
        \item If $\eps > \max\set{k - \frac{k'}2 + \paren{k-\frac{k'}2-1} \qq, k' + (k'-1)\qq}$,
        $$
        m' = \Thetat\paren{n^{k'}(q-1)^{k'-1}} \quad\text{and}\quad \delta' = \Thetat\paren{\delta^2 m^{1/2} n^{-k'/2} (q-1)^{-(k'/2-1)}}\,.
        $$
    \end{itemize}
    Moreover, Alg.~\ref{alg:two_path_lwe} maps an instance with a secret signal vector $x\in \Z_q^n$ to one with the signal vector $x' = x_{\sqb{1:n'}}$. 
\end{lemma}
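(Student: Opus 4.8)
The plan is to obtain this from (a $(q-1)$-refined version of) the proof of Lemma~\ref{lem:2path_red_lwe}; the only new ingredient is a normalization preprocessing step that exploits the scale-invariance of uniform $\mathbb{Z}_q$ noise. Since $q$ is prime and $\lambda\cdot\chi_\delta=\chi_\delta$ for every $\lambda\in\mathbb{Z}_q\setminus\{0\}$ (multiplying a uniform-noise sample by a nonzero scalar leaves it uniform), rescaling an equation $\langle c,x\rangle+e=\Y$ by $\lambda$ yields the valid equation $\langle\lambda c,x\rangle+e'=\lambda\Y$ with $e'\sim\chi_\delta$, and the same holds for the null distribution. First I would run $\DiscrResolutionLWE$ (Alg.~\ref{alg:two_path_lwe}) exactly as stated, but right after the pruning step that keeps equations with exactly $k'/2$ nonzero coordinates in $\sqb{\kappa n}$ -- hence $k-k'/2\ge 1$ nonzero coordinates in $\sqb{\kappa n+1:n}$, using $\kstar<2k$ -- I rescale each surviving equation by the inverse of the leading nonzero coordinate of its cancellation part $(c_j)_{\sqb{\kappa n+1:n}}$. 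After this rescaling, the cancellation vector $\mathcal{C}(j)$ ranges over a set of size $\Theta\big(n^{k-k'/2}(q-1)^{k-k'/2-1}\big)$ -- one fewer power of $(q-1)$ than in Lemma~\ref{lem:2path_red_lwe} -- while the $\sqb{\kappa n}$-restriction of $c_j$ remains uniform over $k'/2$-sparse supports with i.i.d.\ $\unif(\mathbb{Z}_q\setminus 0)$ nonzero entries and independent of the normalized cancellation part (multiplication by a fixed nonzero scalar is a bijection of $\mathbb{Z}_q\setminus 0$, independent of the rest).

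Next I would rerun Steps 1--3 of the proof of Lemma~\ref{lem:2path_red_lwe} with counts updated accordingly. The joint distribution of the recorded entries $(c'_j,\Zt_j)\in\calZ$ is argued verbatim: a pair $(s,t)$ with equal normalized cancellation vector gives $\Y_s-\Y_t\sim\chi_{f(\delta)}=\chi_{\delta^2}$, and whenever $|\supp(c_s-c_t)|=k'$ the vector $c'=(c_s-c_t)_{\sqb{\kappa n}}$ has $k'$ nonzero coordinates, each either $(c_s)_i$ or $-(c_t)_i$ -- i.e.\ uniform on $\mathbb{Z}_q\setminus 0$ -- so $\widetilde\calZ$ is a bona fide $\klwek{k'}$ instance; no input equation is reused, so entries are jointly independent. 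For the count $|\calZ|$ (Step 2), with $\eps=\log_n m$, $\qq=\log_n(q-1)$ and $\mathbb{E}|\mathcal{N}(\beta)|=\Theta(m\,n^{-(k-k'/2)}(q-1)^{-(k-k'/2-1)})$, Poisson/Binomial concentration (Prop.~\ref{prop:binomial_conc},~\ref{prop:pois_conc}) gives $|\calZ|\ge B_Z$ w.h.p.\ with $B_Z=\Thetat(m^2 n^{-(k-k'/2)}(q-1)^{-(k-k'/2-1)})$ when $\eps\le k-\tfrac{k'}2+(k-\tfrac{k'}2-1)\qq$ and $B_Z=\Thetat(m)$ otherwise -- exactly the threshold in bullets 1 and 3.

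For Step 3, the number of normalized $k'$-sparse output coefficient vectors is $\Theta(n^{k'}(q-1)^{k'-1})$, so the sparse/dense dichotomy compares $\log_n B_Z$ to $k'+(k'-1)\qq$. In the sparse regime I subsample a $\poi(B_Z/2)$ collection, obtaining $\law(\widetilde\calZ)\approx_{\tv=o(1)}\lweg{k'}{n'}{m'}{\chi_{\delta^2}}$ with $m'=\Thetat(B_Z)$ after conditioning on a high-probability event (Fact~\ref{tvfacts}), which is bullets 1 and 3. In the dense regime I apply the aggregation algorithm (here majority vote) with $N=B_{obs}:=\Thetat(B_Z\,n^{-k'}(q-1)^{-(k'-1)})$ observations per output vector; by the stated $g(\delta^2,N)=\Thetat(\sqrt{Nq}\,\delta^2)$ in the relevant range and using $q=\Theta(q-1)$, one gets $\delta'=\Thetat(\sqrt{B_{obs}\,q}\,\delta^2)$, which simplifies to $\Thetat(\delta^2 m\,n^{-(k/2+k'/4)}(q-1)^{-(k/2+k'/4-3/2)})$ when $B_Z=\Thetat(m^2\cdots)$ (bullet 2, whose extra crossover $\eps\in(\tfrac k2+\tfrac{k'}4+(\tfrac k2+\tfrac{k'}4-1)\qq,\ k-\tfrac{k'}2+(k-\tfrac{k'}2-1)\qq]$ is precisely the condition $\log_n B_Z>k'+(k'-1)\qq$ together with $B_Z=\Thetat(m^2\cdots)$) and to $\Thetat(\delta^2 m^{1/2}n^{-k'/2}(q-1)^{-(k'/2-1)})$ when $B_Z=\Thetat(m)$ (bullet 4). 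As in Lemma~\ref{lem:2path_red}, the signal maps to $x'=x_{\sqb{1:n'}}$ with $n'=\kappa n=n(1-o(1))$, and Alg.~\ref{alg:resample_iid},~\ref{alg:resample_M} let us pass between the $\poi(m)$- and $m$-sample conventions.

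I expect the main obstacle to be purely the bookkeeping: tracking the four regimes and verifying that each stated $(m',\delta')$ pair falls out of the refined counts, together with the one genuinely non-mechanical check that the difference of two independent $\unif(\mathbb{Z}_q\setminus 0)$ coordinates, restricted to the $k'$ symmetric-difference coordinates, is again $\unif(\mathbb{Z}_q\setminus 0)$ (so $\widetilde\calZ$ is an exact $\klwek{k'}$ instance, not merely close in a weaker sense). The scale-invariance argument justifying the normalization, including its effect on the null model, is the only conceptual step and is short; primality of $q$ enters exactly here (invertibility of every nonzero coordinate), which is why the improvement over Lemma~\ref{lem:2path_red_lwe} is specific to prime $q$.
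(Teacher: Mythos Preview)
Your proposal is correct and follows essentially the same approach as the paper: normalize each surviving equation so that the leading nonzero coordinate of its cancellation part equals $1$ (using the scale-invariance of uniform $\mathbb{Z}_q$ noise for prime $q$), then rerun the proof of Lemma~\ref{lem:2path_red_lwe} with the cancellation-vector count reduced by a factor of $(q-1)$. You are in fact slightly more explicit than the paper about the output side---the paper states the sparse/dense threshold at $k'+(k'-1)\qq$ and the dense-case $m'=\Thetat(n^{k'}(q-1)^{k'-1})$ without spelling out that this requires normalizing the output coefficient vectors as well, whereas you flag this directly.
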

Combined with results in Lemma~\ref{lem:collision_detection}, our average-case reductions (specifically, second bullet point) yield new algorithms for $k$-sparse $\lwe$ with uniform noise that improve upon Lemma~\ref{lem:collision_detection} in a specific parameter regime by a factor of $\sqrt{q-1}$. 
\begin{corollary}[Detection Algorithms for $k$-sparse $\lwe$ with Uniform Noise] There exists a $O(\poly(n,q))$-time algorithm solving the detection task for $\klwe$ with uniform noise $\chi_\delta$ if $k \geq 6$ and for some even integer $k'\geq 2$, $ n^{k/2 + k'/4}(q-1)^{k/2+k'/4 - 1} \lesssim m \lesssim n^{k-k'/2}(q-1)^{k-k'/2-1}$ and 
$$
m \delta^2 = \widetilde \Omega\paren{n^{k/2} (q-1)^{k/2 -3/2}}\,.
$$

\end{corollary}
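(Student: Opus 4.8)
\textit{Proof plan.} The plan is to obtain the algorithm by composing the equation combination reduction for uniform $\lwe$ noise (Lemma~\ref{lem:2path_red_lwe_uniform}) with the collision-based detection algorithm (Lemma~\ref{lem:collision_detection}); the content is entirely in checking that the two parameter windows are compatible and that the output SNR, fed into the collision detector, reproduces exactly the condition $m\delta^2 = \widetilde\Omega(n^{k/2}(q-1)^{k/2-3/2})$.

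First I would fix an even integer $k'$ with $2 \le k' < 2k/3$ (such a $k'$ exists when $k \ge 6$; e.g.\ $k' = 2$, since then $2 < 2k/3$), which is exactly the condition making the second regime of Lemma~\ref{lem:2path_red_lwe_uniform} a nondegenerate interval. Taking $\log_n$ of the hypothesis $n^{k/2+k'/4}(q-1)^{k/2+k'/4-1} \lesssim m \lesssim n^{k-k'/2}(q-1)^{k-k'/2-1}$ places $\eps = \log_n m$ inside precisely the second window $\big(\tfrac k2+\tfrac{k'}4 + (\tfrac k2+\tfrac{k'}4-1)\qq,\ k-\tfrac{k'}2 + (k-\tfrac{k'}2-1)\qq\big]$, where $\qq = \log_n(q-1)$. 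Hence Alg.~\ref{alg:two_path_lwe} is a $\poly(n,q)$-time average-case reduction from $\lweg k n m {\chi_\delta}$ to $\lweg {k'} {n'} {m'} {\chi_{\delta'}}$, and on the null side from $\lwegnull k n m$ to $\lwegnull {k'} {n'} {m'}$, with $n' = n(1-o(1))$, $m' = \Thetat(n^{k'}(q-1)^{k'-1})$ and $\delta' = \Thetat\big(\delta^2 m\, n^{-(k/2+k'/4)}(q-1)^{-(k/2+k'/4-3/2)}\big)$.

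Next I would run $\CollisionDetection$ (Alg.~\ref{alg:collision}) on this output instance, with the threshold $T$ supplied by Lemma~\ref{lem:collision_detection} in the uniform-noise case, and verify its two hypotheses. The sample-count window $(n')^{k'/2}(q-1)^{k'/2} \lesssim m' \lesssim (n')^{k'}(q-1)^{k'-1}$ holds because $m' = \Thetat(n^{k'}(q-1)^{k'-1})$ lies (up to $\poly\log n$) at the top of this window and $k' \ge 2$ forces $n^{k'/2}(q-1)^{k'/2-1} \gg 1$. For the SNR condition, substituting and collecting exponents gives
$$
m'(\delta')^2 \;=\; \Thetat\!\Big(\delta^4 m^2\, n^{\,k'/2-k}(q-1)^{\,k'/2-k+2}\Big)\,,
$$
so the requirement $m'(\delta')^2 = \widetilde\Omega\big((n')^{k'/2}(q-1)^{k'/2-1}\big)$ is equivalent to $\delta^4 m^2 \gtrsim n^{k}(q-1)^{k-3}$, i.e.\ to $m\delta^2 = \widetilde\Omega\big(n^{k/2}(q-1)^{k/2-3/2}\big)$ --- exactly the corollary's hypothesis. (Should $\delta' = \omega(1/q)$, the output instance is only easier: the planted and null collision frequencies $(\delta')^2(1-1/q)+1/q$ and $1/q$ stay well separated, so $\CollisionDetection$ still succeeds with a suitable $T$, as in the proof sketch of Lemma~\ref{lem:collision_detection}.) Finally, the reduction transports the planted and null laws up to $o(1)$ total variation, $\CollisionDetection$ separates the reduced laws with error $o(1)$, and composing the two errors through the data-processing inequality (Fact~\ref{tvfacts}) yields an $o(1)$-error detector; both stages run in $\poly(n,q) = \poly(n',q)$ time. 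The only genuinely non-mechanical step is the exponent bookkeeping in the SNR identity above; everything else is a direct application of the two lemmas, and the $\sqrt{q-1}$ gain over the bare collision threshold $n^{k/2}(q-1)^{k/2-1}$ is precisely what the densifying reduction contributes.
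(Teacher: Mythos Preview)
Your proposal is correct and follows exactly the paper's approach: compose the second bullet of Lemma~\ref{lem:2path_red_lwe_uniform} with the collision detector of Lemma~\ref{lem:collision_detection}, then check that the output SNR condition rewrites to $m\delta^2 = \widetilde\Omega(n^{k/2}(q-1)^{k/2-3/2})$. The paper states this as a one-line corollary without spelling out the exponent bookkeeping; your verification of that bookkeeping (and your remark handling the case $\delta' = \omega(1/q)$) is accurate and slightly more careful than what the paper writes.
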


\begin{proof}[Proof of Lemma~\ref{lem:2path_red_lwe_uniform}]
    The reduction is Alg.~\ref{alg:two_path_lwe} with one modification. For each $j \in \Ss$, we first modify $c_j \gets c_j \cdot\paren{ c_j }_{\min \set{\supp c_j \cap \sqb{\kappa n +1:n}}}^{-1}$, i.e., ensure that the smallest nonzero coordinate in the cancellation vector of $c_j$ is equal to $1$. This decreases the space of all cancellation vectors to have size $\binom {(1-\kappa)n}{k-k'/2} (q-1)^{k-k'/2-1}$ compared to $\binom {(1-\kappa)n}{k-k'/2} (q-1)^{k-k'/2}$ in Lemma~\ref{lem:2path_red_lwe}, which in turn increases the number of resolution pairs in the reduction. The proof follows the proof of Lemma~\ref{lem:2path_red_lwe} with one key modification: the bound on $|\calZ| \geq B_Z$. An argument analogous to that in Lemma~\ref{lem:2path_red_lwe} shows that, with high probability, 
        $$
            |\calZ| \geq B_Z = \begin{cases}
            C_Z \cdot m^2 n^{-(k-k'/2)} (q-1)^{-(k-k'/2-1)}, &\text{if } \log_n  m \leq k - k'/2 + (k-k'/2-1) \qq \\
            C_Z \cdot m, &\text{otherwise}\,,
        \end{cases}
        $$
    where $C_Z = \Thetat(1)$. The sparse and dense cases of Lemma~\ref{lem:2path_red_lwe} correspond to $\log_n B_Z \leq k' + (k'-1)\qq$ and $\log_n B_Z > k' + (k'-1)\qq$. With the rest of the proof unchanged, this yields the bounds in Lemma~\ref{lem:2path_red_lwe_uniform}.
\end{proof}

\subsection{Resolution Reduction for Discrete Gaussian Noise}\label{subsec:discr_res_lwe_gauss}

\begin{proposition}[Discrete Gaussian is resolution-stable]\label{prop:gauss_res_stable}
    Let $\chi_s$ be discrete Gaussian distribution in $\Z_q$ for odd $q>2$, i.e., $\forall e \in \Z_q$,
    $$
    \chi_s(e) \propto \sum_{i\in \Z} e^{-\pi (e + iq)^2/s^2}\,,
    $$
    where $s \leq q n^{-\vareps}$ for $\vareps > 0$.
    Then, $\chi_s$ is $(f,g)$-resolution-stable, where 
    $$
    f(s) = \sqrt{2} s\quad\text{and}\quad g(s, N) = \Thetat\paren{s/\sqrt{N}}\,. 
    $$
\end{proposition}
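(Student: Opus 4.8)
The plan is to verify the two clauses of Definition~\ref{def:resolution_stable} by working over the lattice $\Z$ and folding modulo $q$ only at the end. The key point is that $\chi_s$ is \emph{exactly} the pushforward of the discrete Gaussian $D_{\Z,s}$ over $\Z$ under reduction mod $q$: writing $\rho_t(x):=e^{-\pi x^2/t^2}$, one has $\chi_s(e)=\sum_{i\in\Z}\rho_s(e+iq)/\rho_s(\Z)$, which is precisely the mod-$q$ fold of $D_{\Z,s}$. Since reduction mod $q$ is a deterministic map, by the data-processing inequality it suffices to prove both claims over $\Z$ and push them through the fold. Because $s\le qn^{-\vareps}$, $D_{\Z,s}$ places all but an $e^{-\Omega(n^{2\vareps})}$ fraction of its mass on the ``short'' residues, which is what makes the coherent unfolding below legitimate; and throughout one works in the regime where $s$ exceeds the smoothing parameter of $\Z$ with polynomial margin, i.e. $s\gtrsim\sqrt{C\log n}$, so that $\sum_{m\ge1}e^{-\pi s^2m^2}\le n^{-C}$ (this is the only interesting regime; the hypothesis $s\le qn^{-\vareps}$ only caps $s$ from above).

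\textbf{Clause 1: the multiplier $f(s)=\sqrt 2\,s$.} For independent $\tilde e_1,\tilde e_2\sim D_{\Z,s}$, the law of $\tilde e_1-\tilde e_2$ assigns to $k\in\Z$ mass proportional to $\sum_{j\in\Z}\rho_s(j)\rho_s(j-k)$. Completing the square via $j^2+(j-k)^2=2(j-k/2)^2+k^2/2$, this equals $\rho_{\sqrt2 s}(k)\cdot\rho_{s/\sqrt2}(\Z-k/2)$, and by Poisson summation $\rho_{s/\sqrt2}(\Z-c)=\tfrac{s}{\sqrt2}\paren{1+2\sum_{m\ge1}e^{-\pi s^2m^2/2}\cos(2\pi mc)}=\tfrac{s}{\sqrt2}(1\pm n^{-C})$ uniformly in $c$ (as $s/\sqrt2$ is above smoothing). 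Hence this mass is $\rho_{\sqrt2 s}(k)(1\pm n^{-C})$ up to normalization, so $\tv\b(\law(\tilde e_1-\tilde e_2),D_{\Z,\sqrt2 s}\b)=O(n^{-C})$; folding mod $q$ (which is exact on both sides) gives $\tv\b(\law(e_1-e_2\bmod q),\chi_{\sqrt2 s}\b)=O(n^{-C})$ for arbitrary $C$, and the output parameter $\sqrt2 s\le qn^{-\vareps/2}$ is still admissible.

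\textbf{Clause 2: the map $g(s,N)=\Thetat(s/\sqrt N)$ and the aggregator $\mathcal A_N$.} Given $s_1,\dots,s_N$ with $s_i=s+e_i$, $e_i\sim_{\mathrm{i.i.d.}}\chi_s$, I would first \emph{coherently unfold}: with probability $1-e^{-\Omega(n^{2\vareps})}$ all $|e_i-e_1|\ll q$, so there are integer lifts $\tilde s_i=\tilde s+\tilde e_i$ with $\tilde e_i\sim D_{\Z,s}$. Then output $\hat s:=\paren{\tilde s_1+\mathrm{round}\b(\tfrac1N\sum_i(\tilde s_i-\tilde s_1)\b)}\bmod q$, so that $\hat s-s\equiv\mathrm{round}\b(\tfrac1N\sum_i\tilde e_i\b)\pmod q$ and, being $O(s/\sqrt N)\ll q$ whp, equals that rounded sum over $\Z$. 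By the completing-the-square argument applied to $N$ summands (equivalently, the convolution theorem for discrete Gaussians), $\sum_i\tilde e_i$ is $n^{-C}$-close in total variation to $D_{\Z,s\sqrt N}$; and a Poisson-summation/midpoint-rule estimate shows $\mathrm{round}(\tfrac1N Z)$ for $Z\sim D_{\Z,s\sqrt N}$ has mass at $j$ equal to $\tfrac{\sqrt N}{s}\int_{j-1/2}^{j+1/2}\rho_{s/\sqrt N}(u)\,du$ up to a factor $1\pm n^{-\Omega(1)}$, hence lies within $O\b((s/\sqrt N)^{-2}\b)+e^{-\Omega((s/\sqrt N)^2)}$ of $D_{\Z,s/\sqrt N}$. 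So when $s/\sqrt N$ is polynomially large the output is already a discrete Gaussian of parameter $\Theta(s/\sqrt N)$ up to negligible error; and when $s/\sqrt N\lesssim 1/\sqrt{\log n}$, i.e. $N\gtrsim (C/\pi)s^2\log n$, the rounded empirical mean recovers $s$ \emph{exactly} with probability $1-n^{-C}$, which matches $s+\chi_{s/\sqrt N}$ because $\chi_{s/\sqrt N}$ is then within $n^{-\Omega(C)}$ of the point mass at $0$. For the small-$N$ end, $\hat s=s_1$ and $\theta'=s$ trivially work. The aggregator $\mathcal A_N$ is plainly poly-time.

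The step I expect to be the main obstacle is the middle of Clause 2: producing a distribution that is close in total variation to a \emph{bona fide} discrete Gaussian of the prescribed width, not merely a ``Gaussian-shaped'' rounded law. Rounding a rescaled discrete Gaussian distorts it at the grid scale, with an error of order $(s/\sqrt N)^{-2}$ that is polynomially large in the \emph{moderate-width} window $1/\sqrt{\log n}\lesssim s/\sqrt N\lesssim n^{C}$, where neither exact secret recovery nor the direct discrete-Gaussian estimate alone certifies $n^{-C}$ accuracy; bridging this window — either by a more careful construction (e.g. conditioning the unfolded sum on divisibility, or injecting extra discrete-Gaussian noise and re-invoking the convolution theorem) or by a mild restriction on $(q,s,N)$ consistent with the way $g$ is used in Lemma~\ref{lem:2path_red_lwe} — is where essentially all the work lies, and where one must also verify that a single choice $\theta'=\Thetat(s/\sqrt N)$ works uniformly in $N$ across the transition. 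Everything else — Clause 1, the ``fold mod $q$'' identifications, and the attendant concentration bounds — is elementary Gaussian algebra together with Poisson summation.
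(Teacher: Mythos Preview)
Your Clause 1 argument is essentially the paper's ``direct calculation'': completing the square plus Poisson summation above smoothing. Nothing to add there.

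For Clause 2 the paper takes a genuinely different route that sidesteps precisely the moderate-width obstacle you flag. Rather than averaging and rounding, it invokes the \emph{exact} halving identity from the discrete-Gaussian-sampling literature (Aggarwal--Dadush--Regev--Stephens-Davidowitz and Stephens-Davidowitz): for independent $s_1,s_2\sim\chi_s$,
\[
\law\Big(\tfrac{s_1+s_2}{2}\ \Big|\ s_1\equiv s_2\pmod 2\Big)=\chi_{s/\sqrt 2}\,,
\]
with no approximation error. Iterating this (pairing samples by parity so that almost all pairs succeed) halves the sample count and divides the width by $\sqrt2$ at each step, reaching $\Thetat(s/\sqrt N)$ after $\log_2 N$ rounds. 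Because each step outputs an \emph{exact} discrete Gaussian of the prescribed parameter (up to the usual smoothing caveat when re-randomizing parities), there is no ``rounded law versus true discrete Gaussian'' discrepancy at any intermediate width; the problematic window $1/\sqrt{\log n}\lesssim s/\sqrt N\lesssim n^{C}$ simply never arises.

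Your own proposed fix---conditioning the unfolded sum on divisibility---is the same idea in disguise: conditioning $\sum_i\tilde e_i\equiv 0\pmod N$ and dividing by $N$ gives $D_{\Z,s/\sqrt N}$ exactly, but succeeds only with probability $\approx 1/N$; the halving trick is the iterated version with success probability $\approx 1/2$ per step, which is what makes the sample accounting work out. So your diagnosis of where the difficulty lies is correct, and the paper's resolution is the known lattice-crypto primitive rather than any of the workarounds you sketch.
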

\begin{proof}
    $f(s) = \sqrt{2}s$ can be verified via a direct calculation; $g(s, N) = \Thetat\paren{s/\sqrt{N}}$ can be achieved as follows. The key identity is that given $s_1, s_2 \sim \chi_s$ independent,
    \begin{equation}\label{eq:discr_gauss_avg}
    \law\paren{\frac {s_1+s_2}2\, \B|\, s_1\equiv s_2 \mod{2}} = \chi_{s/\sqrt{2}}\,,
    \end{equation}
    proved in \cite{aggarwal2015solving,stephens2017gaussian}. Iterative application of \eqref{eq:discr_gauss_avg} achieves the desired $g(s, N)$.
\end{proof}

Plugging the parameters in Prop.~\ref{prop:gauss_res_stable} into the general Lemma~\ref{lem:2path_red_lwe}, we obtain: 
\begin{lemma}[Discrete Resolution for $\lwe$ with Discrete Gaussian Noise]\label{lem:2path_red_lwe_gauss}
    Fix $k,n,m \in \Z^{+}, 0<s<qn^{-\vareps}$ and let $\chi_s$ be be discrete Gaussian noise. For any even integer $\kstar\leq 2k$, there exists an average-case reduction for both detection and recovery (Alg.~\ref{alg:two_path_lwe}) $$\text{from}\quad\lweg k n m {\chi_s}\quad\text{to}\quad\lweg {\kstar} {n'} {m'} {\chi_{s'}}\,,$$ where $n' = n(1-o(1))$ and parameters $m', s'$ depend on $s$ and the input number of samples $m$ as follows. Denote $\neff = n(q-1)$ and assume $\eps = \log_{\neff} m$ is a fixed constant;
    \begin{itemize}
        \item If $\eps \leq \min\set{k-\frac{k'}2, \frac k2 + \frac{k'}4}$, 
        $$ m' = \Thetat\paren{m^2 \neff^{-(k-k'/2)}} \quad\text{and}\quad s' = \sqrt{2}s\,;$$
        \item If $ \eps \in \Big( \frac k 2 + \frac{k'}4, k-\frac{k'}2\Big]$,
        $$
        m' = \Thetat\paren{\neff^{k'}} \quad\text{and}\quad s' = \Thetat(s\cdot m^{-1} \neff^{k/2 + k'/4})\,;
        $$
        \item If $\eps \in \Big(k-\frac{k'}2,k'\Big]$,
        $$
        m' = \Thetat\paren{m}\quad\text{and}\quad s' = \sqrt{2}s\,;
        $$
        \item If $\eps > \max\set{k-\frac{k'}2, k'}$,
        $$
        m' = \Thetat\paren{\neff^{k'}} \quad\text{and}\quad s' = \Thetat(s \cdot m^{-1/2} \neff^{k'/2})\,.
        $$
    \end{itemize}
    Moreover, Alg.~\ref{alg:two_path_lwe} maps an instance with a secret signal vector $x\in \Z_q^n$ to one with the signal vector $x' = x_{\sqb{1:n'}}$. 
\end{lemma}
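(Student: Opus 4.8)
The plan is to obtain Lemma~\ref{lem:2path_red_lwe_gauss} as a direct instantiation of the general discrete equation combination reduction of Lemma~\ref{lem:2path_red_lwe}, specialized to the discrete Gaussian noise family $\{\chi_s\}$. The only ingredient is the combination-stability established in Proposition~\ref{prop:gauss_res_stable}: for odd $q$ (in particular for prime $q>2$, as assumed throughout this section) and $0<s<qn^{-\vareps}$, the family is $(f,g)$-combination-stable with $f(s)=\sqrt 2\, s$ and $g(s,N)=\Thetat\!\paren{s/\sqrt N}$, the $\Thetat$ hiding $\poly\log(Nq)$ factors. With $\Theta=(0,qn^{-\vareps})$ this meets the hypotheses of Lemma~\ref{lem:2path_red_lwe}, so that lemma applies verbatim and produces an average-case reduction for detection and recovery (Alg.~\ref{alg:two_path_lwe}) from $\lweg k n m {\chi_s}$ to $\lweg {\kstar} {n'} {m'} {\chi_{s'}}$ with $n'=n(1-o(1))$, the secret mapped to its restriction $x'=x_{\sqb{1:n'}}$, and the four regime-dependent formulas for $(m',\theta')$ expressed through $(f,g)$. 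Since the $o(1)$ total-variation closeness, the detection/recovery guarantees, and the case boundaries (which depend only on $k,k',\eps$) are inherited unchanged, all that remains is to substitute the explicit $f$ and $g$.

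Carrying out the substitution: in the first and third regimes Lemma~\ref{lem:2path_red_lwe} gives $\theta'=f(s)=\sqrt 2\, s$, which is the claimed $s'$, and $m'=\Thetat(m^2\neff^{-(k-k'/2)})$ resp.\ $m'=\Thetat(m)$ carry over directly. In the second regime $\theta'=g\!\b(f(s),N\b)$ with $N=\Thetat(m^2\neff^{-k-k'/2})$, so
\[
s'=\Thetat\!\paren{\frac{\sqrt 2\, s}{\sqrt{m^2\neff^{-k-k'/2}}}}=\Thetat\!\paren{s\cdot m^{-1}\neff^{k/2+k'/4}}\,,
\]
and $m'=\Thetat(\neff^{k'})$. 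In the fourth regime $N=\Thetat(m\,\neff^{-k'})$, hence
\[
s'=\Thetat\!\paren{\frac{\sqrt 2\, s}{\sqrt{m\,\neff^{-k'}}}}=\Thetat\!\paren{s\cdot m^{-1/2}\neff^{k'/2}}\,,
\]
with $m'=\Thetat(\neff^{k'})$. These match the four bullets of the lemma. Because $N=\poly(n)$ in both aggregation regimes, the $\poly\log(Nq)$ factors inside $g$ are $\poly\log n$, consistent with the $\Thetat$ convention.

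There is essentially no new content beyond Lemma~\ref{lem:2path_red_lwe} and Proposition~\ref{prop:gauss_res_stable}, so the only delicate point is the internal structure of $g$ for the discrete Gaussian, which underlies Proposition~\ref{prop:gauss_res_stable} and which I would spell out there: the aggregation algorithm $\calA_N$ iterates the congruence-conditioned averaging identity \eqref{eq:discr_gauss_avg}, $\law\!\b(\tfrac{s_1+s_2}{2}\mid s_1\equiv s_2\bmod 2\b)=\chi_{s/\sqrt 2}$ of \cite{aggarwal2015solving,stephens2017gaussian}, over $\Theta(\log N)$ rounds, discarding the constant fraction of non-congruent pairs each round, which shrinks the width by the claimed $\sqrt N$ factor up to $\poly\log$ losses. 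One must also verify Definition~\ref{def:resolution_stable}'s requirement that $\tv(\hat s,\, s+\chi_{\theta'})=n^{-C}$ for arbitrarily large $C$; this holds because the conditional identity is exact, so the only error stems from the high-probability event that enough congruent pairs survive at every round, which is $1-n^{-C}$ after adjusting constants. Given Proposition~\ref{prop:gauss_res_stable}, the lemma follows by the substitution above.
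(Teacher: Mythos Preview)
Your proposal is correct and matches the paper's approach exactly: the paper states the lemma as obtained by ``plugging the parameters in Prop.~\ref{prop:gauss_res_stable} into the general Lemma~\ref{lem:2path_red_lwe},'' and your substitution of $f(s)=\sqrt{2}s$ and $g(s,N)=\Thetat(s/\sqrt{N})$ into the four regimes is precisely that.
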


Combined with results in Lemma~\ref{lem:collision_detection}, our average-case reductions (specifically, third bullet point) yield new algorithms for $k$-sparse $\lwe$ with discrete Gaussian noise. 
\begin{corollary}[Detection Algorithms for $k$-sparse $\lwe$ with Discrete Gaussian Noise] There exists a $O(\poly(n,q))$-time algorithm solving the detection task for $\klwe$ with discrete Gaussian noise $\chi_s$ in the following parameter regime. Denoting $k'\leq k$ be the smallest even integer satisfying $k' \geq \log_{\neff} m$ and $k' > 2(k-\log_{\neff} m)$, it suffices to have 
$$
\neff^{k'/2} \cdot \max\set{\sqrt{s}, 1} \lesssim m\,.
$$

\end{corollary}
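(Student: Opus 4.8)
The plan is to obtain this algorithm by composing two results already established above: the equation combination reduction for discrete Gaussian $\lwe$ noise (Lemma~\ref{lem:2path_red_lwe_gauss}) and the collision-based detection algorithm (Lemma~\ref{lem:collision_detection}). Set $\eps = \log_{\neff} m$ and let $k'$ be the smallest even integer with $k' \geq \eps$ and $k' > 2(k-\eps)$, as in the statement. These two inequalities are exactly $\eps \in \big(k - \tfrac{k'}2,\, k'\big]$, which is the third bullet of Lemma~\ref{lem:2path_red_lwe_gauss}; moreover $k'$ is even and, under the hypothesis $k'\le k$, also $k'\le 2k$, so the reduction is applicable. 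The minimality of $k'$ is what makes the resulting sample-complexity threshold $\neff^{k'/2}\max\set{\sqrt s,1}$ as small as possible.

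First I would apply Lemma~\ref{lem:2path_red_lwe_gauss} in this third regime: it gives a $\poly(n,q)$-time average-case reduction from $\lweg k n m {\chi_s}$ to $\lweg {k'} {n'} {m'} {\chi_{s'}}$ with $n' = n(1-o(1))$, $m' = \Thetat(m)$, and $s' = \sqrt 2\, s$, mapping the planted distribution to the planted target and the null distribution $\lwegnull k n m$ to the null target, each up to $o(1)$ total variation. Writing $\neff' = n'(q-1)$, we have $\neff' = \Theta(\neff)$ and $\log_{\neff'} m' = \eps + o(1)$; also $s' = \sqrt 2\, s < q(n')^{-\vareps/2}$ for $n$ large (using the standing assumption $s < qn^{-\vareps}$), so the target is a genuine discrete Gaussian $\klwek{k'}$ instance to which Lemma~\ref{lem:collision_detection} applies.

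Next I would run the collision detector of Lemma~\ref{lem:collision_detection} on the output of the reduction. Its discrete Gaussian hypothesis, instantiated with order $k'$, dimension $n'$, sample count $m'$, and width $s'$, reads
$$(\neff')^{k'/2}\,\max\set{\sqrt{s'},\,1}\;\lesssim\; m'\;\lesssim\;(\neff')^{k'}\,.$$
Substituting $m' = \Thetat(m)$, $\neff' = \Theta(\neff)$, and $\max\set{\sqrt{s'},1} = \Theta(\max\set{\sqrt s,1})$, this is equivalent at polynomial scale (i.e.\ up to $\poly\log n$ factors) to $\neff^{k'/2}\max\set{\sqrt s,1}\lesssim m\lesssim\neff^{k'}$; the lower bound is precisely the corollary's hypothesis, and the upper bound is implied by $k'\ge\eps=\log_{\neff}m$. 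Hence the detector distinguishes the planted from the null target with error $o(1)$, and composing through the reduction (data processing inequality for total variation, Fact~\ref{tvfacts}) yields a $\poly(n,q)$-time algorithm distinguishing the original planted $\lweg k n m {\chi_s}$ from $\lwegnull k n m$ with error $o(1)$.

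The argument is essentially bookkeeping once Lemmas~\ref{lem:2path_red_lwe_gauss} and~\ref{lem:collision_detection} are in hand, so I do not expect a serious obstacle; the one thing to check carefully is that the chosen $k'$ sits in the correct window simultaneously for both results: it must be even with $k'\le 2k$ for the reduction, it must land in the regime $\eps\in(k-k'/2,k']$ so that the reduction is the non-noise-boosting one ($m'=\Thetat(m)$, $s'=\sqrt 2 s$) rather than the aggregation regime, and after the parameter map $(m,s,\neff)\mapsto(\Thetat(m),\sqrt 2 s,\Theta(\neff))$ the collision detector's two-sided window must still hold. One should also confirm that the $o(1)$ total-variation errors from the reduction and the $\poly\log n$ slacks in $m',s'$ compose harmlessly with the detector's $o(1)$ error, which they do because we work at polynomial scale throughout.
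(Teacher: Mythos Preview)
Your proposal is correct and takes essentially the same approach as the paper: the paper explicitly states that this corollary follows by combining Lemma~\ref{lem:collision_detection} with the third bullet point of Lemma~\ref{lem:2path_red_lwe_gauss}, and you have correctly identified that the conditions $k'\ge\log_{\neff}m$ and $k'>2(k-\log_{\neff}m)$ place you in exactly that regime. Your additional bookkeeping (checking the upper bound $m\lesssim\neff^{k'}$, tracking the $\sqrt{2}$ factor in $s'$, and noting the $o(1)$ TV errors compose) is more detailed than what the paper provides but entirely consistent with it.
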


\subsection{Resolution Reduction for Bounded $\Z_q$ Noise}\label{subsec:discr_res_lwe_bound}

While the most general version of resolution reduction for bounded $\lwe$ noise is stated in Lemma~\ref{lem:2path_red_lwe}, we illustrate a particular example. Let $\chi_1$ be \emph{binary $\lwe$ noise}, i.e., $\chi_1$ is only supported on $\set{0, \pm 1}$ and assume $q\geq 5$. Denote $\chi_2$, supported on $\set{0,\pm1, \pm2}$ to be the distribution of the difference of two independent samples from $\chi_1$:
$$
\chi_2(e) = \sum_{e_1,e_2 \in \set{0,\pm1}: e_1-e_2 =e} \chi_1(e_1) \chi_1(e_2)\,.
$$
In this case, our resolution reduction (Alg.~\ref{alg:two_path_lwe}) can map an instance with noise distribution $\chi_1$ to noise distribution $\chi_2$. We state the results for regimes of Lemma~\ref{lem:2path_red_lwe} where the output noise is exactly $\chi_2$; other regimes (bullet points two and four) also apply, but require explicit aggregation procedure (e.g., a majority vote) for the bounded noise.

\begin{corollary}[of Lemma~\ref{lem:2path_red_lwe}]\label{cor:lwe_alg_bounded}
    Fix $k,n,m \in \Z^{+}$.
    For any even integer $\kstar\leq 2k$, there exists an average-case reduction for both detection and recovery (Alg.~\ref{alg:two_path_lwe}) $$\text{from}\quad\lweg k n m {\chi_1}\quad\text{to}\quad\lweg {\kstar} {n'} {m'} {\chi_2}\,,$$ where $n' = n(1-o(1))$ and $m'$ depends on the input number of samples $m$ as follows. Denote $\neff = n(q-1)$ and assume $\eps = \log_{\neff} m$ is a fixed constant;
    \begin{itemize}
        \item If $\eps \leq \min\set{k-\frac{k'}2, \frac k2 + \frac{k'}4}$, then 
        $ m' = \Thetat\paren{m^2 \neff^{-(k-k'/2)}}\,;$
        \item If $\eps \in \Big(k-\frac{k'}2,k'\Big]$, then
        $
        m' = \Thetat\paren{m}\,.
        $
    \end{itemize}
    Moreover, Alg.~\ref{alg:two_path_lwe} maps an instance with a secret signal vector $x\in \Z_q^n$ to one with the signal vector $x' = x_{\sqb{1:n'}}$. 
\end{corollary}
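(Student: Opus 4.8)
The plan is to derive this as a direct instantiation of Lemma~\ref{lem:2path_red_lwe}. Two things need to be checked: (i) that the binary noise $\chi_1$ sits inside a combination-stable family (Definition~\ref{def:resolution_stable}) whose first self-difference is exactly $\chi_2$, and (ii) that in the two parameter regimes claimed the reduction of Alg.~\ref{alg:two_path_lwe} already produces output noise $\chi_2$ with no aggregation --- these are precisely the ``sparse'' bullets (one and three) of Lemma~\ref{lem:2path_red_lwe}, where $\theta' = f(\theta)$.

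First I would fix the family. Take $\Theta = \Z_{\geq 0}$, set $\chi^{(0)} := \chi_1$, and let $\chi^{(i+1)}$ be the law of $e_1 - e_2 \bmod q$ for $e_1,e_2$ i.i.d.\ from $\chi^{(i)}$; then $\chi^{(1)} = \chi_2$ by the definition of $\chi_2$ in the statement, and since $q \geq 5$ the residues $0, \pm 1, \pm 2$ are distinct in $\Z_q$, so $\chi^{(1)}$ is genuinely supported on $\set{0, \pm 1, \pm 2}$. This family is $(f,g)$-combination-stable with $f(i) = i+1$ (condition~1 is the recursive definition) and with the trivial aggregator $\calA_N(s_1,\dots,s_N) = s_1$, $g(i,N) = i$ (condition~2 holds with zero error). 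I would note in passing that a nontrivial $g$ --- say a coordinatewise majority over the $N$ available copies --- is what would be needed to profit from the large number of combinations available in the dense regimes, which is exactly why bullets two and four of Lemma~\ref{lem:2path_red_lwe} carry an aggregation step and are omitted from this corollary.

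Next I would feed this family into Lemma~\ref{lem:2path_red_lwe} with input parameter $\theta = 0$ (so $\chi_\theta = \chi_1$) and target order $k'$, and simply read off the conclusion. When $\eps = \log_{\neff} m \leq \min\set{k - k'/2,\ k/2 + k'/4}$, the lemma outputs $\lweg {k'} {n'} {m'} {\chi_{\theta'}}$ with $\theta' = f(0) = 1$, i.e.\ noise $\chi^{(1)} = \chi_2$, and $m' = \Thetat\paren{m^2 \neff^{-(k - k'/2)}}$; when $\eps \in (k - k'/2,\ k']$ it again gives $\theta' = f(0) = 1$ and $m' = \Thetat(m)$. These are the two displayed cases, and the dimension bound $n' = n(1-o(1))$, the secret map $x' = x_{[1:n']}$, and validity for both detection and recovery are inherited verbatim from Lemma~\ref{lem:2path_red_lwe}.

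The only mild subtlety --- not really an obstacle --- is that the two-label set $\set{\chi_1,\chi_2}$ is not itself closed under self-difference, which is why one embeds it into the closed family $\set{\chi^{(i)}}_{i\geq 0}$ before citing the lemma; equivalently, one can observe that in the sparse regimes Alg.~\ref{alg:two_path_lwe} performs exactly one subtraction per output equation, so only condition~1 of Definition~\ref{def:resolution_stable} at the single parameter $\chi_1$ is ever invoked, and then $\Y_s - \Y_t \sim \chi_2$ by definition. Either way the corollary is immediate, and I expect no estimates beyond those already carried out in the proof of Lemma~\ref{lem:2path_red_lwe}.
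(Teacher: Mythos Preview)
Your proposal is correct and follows exactly the paper's approach: the corollary is stated as an immediate specialization of Lemma~\ref{lem:2path_red_lwe} to the binary noise family, keeping only the two ``sparse'' regimes (bullets one and three) where $\theta' = f(\theta)$ and hence the output noise is $\chi_2$ without aggregation. Your explicit construction of the combination-stable family $\{\chi^{(i)}\}_{i\geq 0}$ and the observation that only condition~1 of Definition~\ref{def:resolution_stable} is invoked in the sparse regimes make precise what the paper leaves implicit.
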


\section{Gaussian Resolution}\label{sec:gauss_resolution}

The discrete resolution reduction ($\DiscrEqComb$) in \autoref{sec:discrete_resolution} combines pairs of entries in the input and achieves entrywise noise independence by simply not reusing the input equations -- this works best for sparser instances of $\kxor$, i.e. small density $\eps$. We introduce a new reduction in Lemma~\ref{lem:gauss_resolution}, \emph{Gaussian resolution} ($\GaussEqComb$), that improves upon parameter tradeoffs $k' \leq \min\set{k(1-\eps), 2k|\eps|}$ for target $\eps' = 1$ in \autoref{sec:discrete_resolution} for some inputs with larger density $\eps \geq 1/3$. See \autoref{sec:to_tensor_pca} for the comparison of the two reductions. 

\subsection{Gaussian Resolution Reduction Idea and Lemma Statement} 

We prove a reduction for the Gaussian $\ktens$ models in Lemma~\ref{lem:gauss_resolution} and the analogous result for discrete $\kxor$ follows by the discrete-Gaussian equivalence in Prop.~\ref{prop:discr_gauss_equivalence_general}. In this section we work with densities $\eps \geq 0$.

In $\gtensp k n \eps \defi$ (equivalently $\gtens k n m \delta$ with $m = n^{k(1+\eps)/2}$ and $\delta = n^{-k\eps/4 - \defi/2}$) with a secret vector $x \in \set{\pm 1}^n$, we observe a collection $\calY = \set{\paren{\alpha_j, \Y_j}}_{j \in \sqb{m}}$ for $\alpha_j \sim_{i.i.d.} \unif\sqb{\binomset n k}$ and 
\begin{equation*}
    \Y_{j} = 
    \delta \cdot x_{\alpha_j} + \noiseG_{j}\,, 
\end{equation*}
where $\noiseG_{j} \sim \N(0,1)$ and we recall that $x_{\alpha} = x_{i_1}\dots x_{i_k}$ for an index set $\alpha = \set{i_1,\dots,i_k} \in \binomset n k$. The product of two $\ktens$ entries $\Y_{s}, \Y_{t}$ for distinct $s, t\in \sqb{m}$ with $\kstar = |\alpha_s \triangle \alpha_t|$ has a mean matching the $\kxork \kstar$ entry with signal level $\delta' = \delta^2$ corresponding to an index set $\alpha_s \triangle \alpha_t$:
\begin{equation}\label{eq:resolution_id9}
    \E \sqb{\Y_{s} \Y_{t}} = 
    \delta^2 \cdot x_{\alpha_s} x_{\alpha_t} = \delta^2 \cdot x_{\alpha_s \triangle \alpha_t}\,.
\end{equation}
Contrary to $\DiscrEqComb$ in \autoref{sec:discrete_resolution}, where the \emph{distribution} of $\Y_s \Y_t$ matched that of $\kxork \kstar$, 
\begin{equation}\label{eq:resolution_id9_2}
\Y_{s} \Y_{t} = 
    \delta^2 \cdot x_{\alpha_s \triangle \alpha_t} + {\delta x_{\alpha_s} \noiseG_t + \delta x_{\alpha_t} \noiseG_s} + {\noiseG_s \noiseG_t} \ \ \not\approx_{\tv}\ \ \delta^2 \cdot x_{\alpha_s\triangle \alpha_t} + \N(0,1)\,,
\end{equation}
i.e., multiplying two $\ktens$ entries does not give a $\ktensk \kstar$ entry. Our insight and the main technical contribution of this section is that in some parameter regimes, aggregating many pairs $s,t \in \sqb{m}$ with the same $\gamma = \alpha_s\triangle \alpha_t \in \binomset n \kstar$ \emph{does} yield the correct $\ktensk \kstar$ distribution, even when entries are reused in Eq.~\eqref{eq:resolution_id9_2}.

For the reduction below, we consider $\eps \in (1/3,1], \defi \in \R$ and assume
$$-k\eps/2 < \defi \leq (k-2)/2\,,$$ 
which in particular ensures the entrywise signal $\delta = n^{-k\eps/4 -\defi/2}$ is $o(1)$ and that the input instance is in the conjectured hard regime (see \autoref{subsec:complexity_profiles}). 

\begin{lemma}[Gaussian Resolution: Reduction to Tensor PCA]\label{lem:gauss_resolution}
    Let $\eps \in (1/3,1], \defi \in \R, k \in \mathbb{Z}^{+}$ be the $\kxor$ parameters satisfying $-k\eps/2 < \defi \leq (k-2)/2$. If $k'$ is even and satisfies
    \begin{equation}\label{eq:tensor_red_cond}
        k' < \min\set{ k\eps + 2\defi, \frac{k\paren{3\eps -1}}{2}, \frac{k\paren{1+\eps}}{4}}\,,
    \end{equation}
    then there exists a poly-time average-case reduction\footnote{As defined in Def.~\ref{def:avg_case_points}} for both detection and recovery (Alg.~\ref{alg:gauss_res})
    $$
    \text{from}\quad\gtensp k n \eps \defi \quad\text{to}\quad \gtensp {k'} {n'} {\eps'=1} {2\defi}\,,
    $$
    for $n' = n\paren{1- o(1)}$. Moreover, Alg.~\ref{alg:gauss_res} maps an instance with a secret signal vector $x \in\set{\pm1}^n$ to one with a signal vector $x' = x_{\sqb{1:n'}} \in \set{\pm 1}^{n'}$.\footnote{Here $x'$ can be any subset of $n'$ coordinates of $x$ by permutation invariance of the model. See Sec.~\ref{subsec:recovery_assumptions} for why this is sufficient to transfer weak/strong/exact recovery.}
\end{lemma}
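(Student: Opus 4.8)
The plan is to realize the aggregated products $\Y_s\Y_t$ over pairs with $|\alpha_s\triangle\alpha_t|=\kstar$ as entries of a Gram-type object built from a sparse Gaussian matrix, and to show via the high-dimensional CLT of Lemma~\ref{lem:wishart} that, after renormalization, this object is within $o(1)$ total variation of a Tensor PCA tensor $\delta' x^{\otimest\kstar}+\noiseG'$. I work in the Gaussian $\ktens$ model throughout; the discrete $\kxor$ statement then follows from Prop.~\ref{prop:discr_gauss_equivalence_general}. First, encode the instance $\calY=\{(\alpha_j,\Y_j)\}_{j\in[m]}$ into the symmetric tensor $\Yb\in(\R^n)^{\otimes k}$ whose distinct-index entry indexed by $\alpha\in\binomset n k$ is $\Y_{j^\star}$ for the first $j^\star$ with $\alpha_{j^\star}=\alpha$ (and $0$ on repeated-index entries), and flatten to $Y=\mat{k-\kstar/2}{\kstar/2}\Yb\in\R^{\binom n{k-\kstar/2}\times\binom n{\kstar/2}}$, so that $Y_{\rho,\gamma}=\Yb_{\rho\cup\gamma}$ for disjoint $\rho,\gamma$ and $0$ otherwise. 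Then for disjoint $\gamma_1,\gamma_2$ of size $\kstar/2$ one has $(Y^\top Y)_{\gamma_1,\gamma_2}=\sum_{\rho}\Yb_{\gamma_1\cup\rho}\Yb_{\gamma_2\cup\rho}$, which up to the vanishingly rare input index-set collisions equals $\sum_{s,t:\,\alpha_s\triangle\alpha_t=\gamma_1\cup\gamma_2}\Y_s\Y_t$; the tensorization $\Zb=\tenso(Y^\top Y)$ thus has $\E[\Zb]\propto\delta^2 x^{\otimest\kstar}$ and is the natural candidate output after renormalization to unit entrywise variance.

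The obstacle in this encoding is that $Y^\top Y$ is a genuine Wishart object — the same row/column of $Y$ appears on both sides, and the diagonal entries are sums of squares — so its correlation structure and means do not match those of a Tensor PCA tensor. To decouple, I Poissonize the sample count (Prop.~\ref{prop:sampling_equiv}), use Gaussian instance cloning and splitting (Lemmas~\ref{lem:instance_clone}, \ref{lem:instance_split}) to obtain two independent $\ktens$ instances with the same secret $x$, place one in the row block and the other in the column block, and restrict the index range so that the cancellation coordinates $\rho$ live in a dedicated block disjoint from the output coordinates (which, as in $\DiscrEqComb$ of \autoref{sec:discrete_resolution}, simultaneously repairs the output support distribution). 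After these routine steps $Y^\top Y$ is replaced by $A^\top B$ with $A,B$ \emph{independent} Bernoulli-masked, non-centered (rank-one mean) Gaussian matrices; a counting argument shows there are $\asymp n^{k-\kstar/2}$ candidate $\rho$ per output index set, each observed with probability $\asymp(m/n^k)^2$, so $N\asymp n^{k\eps-\kstar/2}$ terms feed each $\gamma$, and after dividing by $\sqrt N$ the planted mean becomes $\delta'\asymp\delta^2\sqrt N=n^{-\kstar/4-\defi}$, exactly the signal level of $\gtensp{\kstar}{n'}{1}{2\defi}$.

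The heart of the argument is Lemma~\ref{lem:wishart}, which extends the Wishart-to-Gaussian transition of \cite{brennan2021finetti} to products $A^\top B$ of independent masked, non-centered Gaussian matrices and shows that, under its admissibility conditions, the renormalized tensorization of $A^\top B$ is within $o(1)$ total variation of $\delta' x^{\otimest\kstar}+\noiseG'$ with $\noiseG'$ a symmetric i.i.d.\ Gaussian tensor. The three hypotheses on $\kstar$ in \eqref{eq:tensor_red_cond} are precisely these admissibility conditions: $\kstar<\tfrac{k(3\eps-1)}{2}$ guarantees $N\to\infty$ (and already forces $\eps>1/3$, which is why this technique complements $\DiscrEqComb$ at larger density); $\kstar<\tfrac{k(1+\eps)}{4}$ controls the effective dimension-to-sample ratio — with $d\asymp n^{\kstar/2}$ and $N\asymp n^{k-\kstar/2}$, further degraded by the masking fraction $n^{\eps-1}$ — so that the higher cumulants of $A^\top B$ vanish; and $\kstar<k\eps+2\defi$ keeps the planted and cross terms of \eqref{eq:resolution_id9_2} subordinate to the Gaussian noise, so that the limiting law is exactly the Tensor PCA form. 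Given the lemma, I convert $\Zb$ back to the equation view $\gtensp{\kstar}{n'}{1}{2\defi}$ using the tensor-to-equation and with/without-replacement postprocessing of \autoref{subsec:w_replacement} (and Lemma~\ref{lem:m_adjust} to hit the exact sample count up to $\poly\log n$), note that the output secret is the restriction $x'=x_{[1:n']}$ with $n'=n(1-o(1))$, which by \autoref{subsec:recovery_assumptions} transfers both detection and all recovery variants, and deduce the discrete $\kxor$ statement by pre/post-composing with $\Gaussianize/\Discretize$ (Prop.~\ref{prop:discr_gauss_equivalence_general}). The special case $\eps=1$ goes through here as well; a wider admissible range of $\kstar$ in that case is obtained separately via the Spiked Covariance/Wigner reductions of \autoref{subsec:gauss_res_eps_1} and is not needed for the statement as phrased.

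I expect the main obstacle to be the masked, non-centered Wishart CLT of Lemma~\ref{lem:wishart}: the centered, fully observed case is \cite{brennan2021finetti}, but the Bernoulli mask (which shrinks the effective sample size) and the rank-one mean shift (which both biases the Gram entries and injects the cross terms of \eqref{eq:resolution_id9_2}) must be carried through the cumulant/coupling analysis, and it is exactly this bookkeeping that pins down the thresholds in \eqref{eq:tensor_red_cond}. A secondary difficulty is the decoupling step: one must arrange the cloning/splitting and index partitioning so that $A$ and $B$ are genuinely independent \emph{and} the resulting output index sets are close to i.i.d.\ uniform $\kstar$-subsets, without which the target distribution is not matched even after the CLT.
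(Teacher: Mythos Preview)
Your proposal is correct and matches the paper's proof: split into two independent instances, restrict indices so the cancellation block is disjoint from the output block, form the asymmetric Gram object $(Y^{(1)})^\top Y^{(2)}$ of the flattenings, apply the masked non-centered Wishart CLT (Lemma~\ref{lem:wishart}), and finish with the sampling-model postprocessing. Two minor points: only splitting (not cloning) is used, and your attribution of the three bounds in \eqref{eq:tensor_red_cond} to individual phenomena is slightly loose---in the paper all three arise together from verifying conditions (b)--(e) of Lemma~\ref{lem:wishart} (with $\tilde d\asymp n^{k-\kstar/2}$ the inner dimension and $\tilde n=\tilde m\asymp n^{\kstar/2}$ the output dimensions, the reverse of your $d,N$) plus a separate Frobenius bound showing the random mean $p^{-1}|M|^{-1/2}(\mu^{(1)})^\top\mu^{(2)}$ is $o(1)$-close to the deterministic rank-one target, which is where $\kstar<k\eps+2\defi$ enters.
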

\begin{remark}
    In the case of $\defi \to 0$, i.e. at the \emph{computational threshold}, the parameter regime above can be rewritten as follows: all even $k'$ satisfying
    \begin{enumerate}
        \item $\eps \in (1/3, 3/5)$: $k' < \frac{k\paren{3\eps - 1}}{2}$;
        \item $\eps \in (3/5, 1]$: $k' < \frac {k(1+\eps)} 4$.
    \end{enumerate}
\end{remark}

\begin{remark}[$\ktens$ Sampling Procedure for Alg.~\ref{alg:gauss_res} and Discrete Input]\label{rmk:gaussian_resolution_alg_sampling_discr}
    While Algorithm~\ref{alg:gauss_res}, as stated, is a reduction from $\gtensppoi k n \eps \defi$ to $\gtensppoi {\kstar} {n'} {\eps'=1} {2\defi}$, we sometimes use it as a subroutine for $\ktens$ models with the standard sampling procedure, i.e., as a reduction from $\gtensp k n \eps \defi$ to $\gtensp {\kstar} {n'} {\eps'=1} {2\defi}$. It is implicitly assumed that first a Poissonization (Alg.~\ref{alg:resample_iid}) reduction is applied, then Alg.~\ref{alg:gauss_res}, and then the reduction back to standard sampling model (Alg.~\ref{alg:resample_M}).
    
    Similarly if Alg.~\ref{alg:gauss_res} is applied as a discrete model reduction for $\kxor$ or $\kxor^{\mathsf{Pois}}$: this implicitly assumes the insertion of the appropriate Gaussianization/Discretization (\autoref{subsec:gaussianize}, \autoref{subsec:discretize}) and/or Resampling (\autoref{subsec:poi_sampling}) reductions.
\end{remark}

We present a brief proof sketch here and provide the full proof of Lemma~\ref{lem:gauss_resolution} in \autoref{subsec:gauss_res_proof}. For this proof we adopt a \emph{tensor view} of the input instance $\calY = \set{\paren{\alpha_j, \Y_j}}_{j \in \sqb{m}}$. Let $\Yb \in \paren{\R^n}^{\otimes k}$ be a symmetric tensor satisfying for all $\alpha\in\binomset n k$
\begin{equation*}
    \Yb_{\alpha} = \begin{cases}
        \Y_{j^\star}, &\text{ if }\ j^\star=\min\set{j \in \sqb{m}: \alpha_j = \alpha}\\
        0, &\text{ if }\set{j \in \sqb{m}: \alpha_j = \alpha} = \emptyset\,.
    \end{cases}
\end{equation*}
Here, since $\Yb$ is symmetric up to index permutations, we index its entries by the multisets of indices in $\sqb{n}$. The idea behind the $\GaussEqComb$ is to consider a tensor flattening $Y = \mat {k-k'/2} {k'/2} \Yb \in \R^{\binom n {k-k'/2}\times \binom n {k'/2}}$ -- a matrix whose rows are indexed by elements of $\binomset n {k-k'/2}$ and columns by $\binomset n {k'/2}$.\footnote{See \autoref{sec:notation} for notation.} We consider the tensorization (i.e., the inverse procedure to $\mat {k'/2} {k'/2} \paren{\cdot}$) of a Gram matrix
$$
\Zb = \tenso{\paren{Y^\top Y}}\quad \in \paren{\R^{n}}^{\otimes k'}\,.
$$
Note that for all $\gamma\in \binomset n \kstar$, the resulting entry $\Zb_{\gamma}$ is a sum of products in Eq.~\eqref{eq:resolution_id9_2} and we have 
$$
\E\sqb{\Zb_{\gamma }} \propto x_{\gamma}\,.
$$
To ensure \emph{independence of the entry-wise noise} and correct output distribution, our reduction works with a Gram-like matrix similar to $Y^\top Y$. Our main technical contribution is a distributional result for such  Wishart-like matrices in Lemma~\ref{lem:wishart}, which is a generalization of known results for Wishart matrices from \cite{brennan2021finetti}. Lemma~\ref{lem:wishart} is proved in \autoref{subsec:wishart}.

\begin{lemma}[Generalization of Theorem 2.6 of \cite{brennan2021finetti}]\label{lem:wishart}
Let $\Mr\in[0,1]^{d\times n}, \Ml\in[0,1]^{d\times m}$ have i.i.d. $\ber(\pr)$ and $\ber(\pl)$ entries respectively. Let $\deltaR,\deltaL \in \R$ be parameters such that $\delta_R, \delta_L \ll 1$ and let $\Mur^\full \in \set{\pm 1}^{d \times n}, \Mul^\full \in \set{\pm 1}^{d \times m}$ be any fixed matrices. Let $\paren{\Xgr}^\full \sim \N\paren{0,I_{d \times n}}$ and $\paren{\Xgl}^\full \sim \N\paren{0,I_{d \times m}}$. 

Define for $a \in \set{R, L}$:
$$
\Mu_a = \delta_a \Mu_a^\full \odot \M_a\quad\,, \Xg_a = \paren{\Xg_a}^\full \odot \M_a\,, \quad \text{and} \quad X_a = \Mu_a + \Xg_a\,.
$$

If the following assumptions on parameters $\pr, \pl, \deltaR, \deltaL, d, n, m$ are satisfied: 
\begin{align}
    \pr, \pl &\in \set{1} \cup (0, 1-c) \text{ for some }c>0\tag{a}\label{eq:cond_a}\\
    \deltaR &\ll \min\set{1, \paren{\pr n}^{-1/2}}\text{ and } \deltaL\ll \min\set{1, \paren{\pl m}^{-1/2}}\tag{b}\label{eq:cond_b}\\
    n &\ll \pr^2 \pl d\text{ and }m\ll \pl^2 \pr d\tag{c}\label{eq:cond_c}\\
    mn &\ll \min\set{\pr\pl d, \deltaR^{-4}, \deltaL^{-4}}\tag{d}\label{eq:cond_d}\\
    n^2m &\ll \min\set{\pl d, \pr^{-2} \deltaR^{-4}},\text{ and }nm^2 \ll \min\set{\pr d, \pl^{-2} \deltaL^{-4}}\tag{e}\label{eq:cond_e}\,,
\end{align}
then 
$$
\tv\paren{N\paren{\paren{\pr\pl d}^{-1/2}\Mur^T \Mul, I_{n\times m}}, \law\paren{\paren{\pr \pl d}^{-1/2} \Xr\tr \Xl}} \to 0\,, \quad d\to 
\infty\,.
$$

\end{lemma}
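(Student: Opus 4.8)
The plan is to follow the proof of Theorem~2.6 of \cite{brennan2021finetti} --- which compares $X\tr X$ for an i.i.d.\ Gaussian $X$ with a Gaussian matrix --- and adapt it to the non-centered, masked, rectangular setting at hand. Throughout, condition on the two masks $\Mr,\Ml$; by Bernoulli concentration (Prop.~\ref{prop:binomial_conc}) and assumption~\eqref{eq:cond_a}, on an event of probability $1-o(1)$ the column supports $S_i := \set{a : (\Mr)_{a,i}=1}$ satisfy $|S_i| = \pr d\,(1+o(1))$ and $|S_i \cap S_{i'}| = \pr^2 d\,(1+o(1))$, and it suffices to prove the conditional total-variation bound on this event and then average over $\Mr,\Ml$ (the target law carries the same mask randomness, so no extra term is incurred). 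The first step is to \emph{linearize in one Gaussian factor}: expose $\Xl$ in addition to the masks. Since the columns of $\Xr$ are independent and, given $\Mr$, column $i$ is $\N\paren{(\Mur)_{\cdot,i},\ \mathrm{diag}\paren{(\Mr)_{\cdot,i}}}$, the cross-Gram $W := \paren{\pr\pl d}^{-1/2}\Xr\tr\Xl$ becomes conditionally Gaussian with independent rows:
$$ \law\paren{W \mid \Xl,\Mr,\Ml} = \bigotimes_{i=1}^{n} \N\paren{\mu_i,\ \Sigma_i}, \qquad \mu_i = \paren{\pr\pl d}^{-1/2}(\Mur)_{\cdot,i}\tr\Xl, \qquad \Sigma_i = \paren{\pr\pl d}^{-1}(\Xl)_{S_i,\cdot}\tr(\Xl)_{S_i,\cdot}. $$
This conditioning folds $(\Xgr)\tr\Mul$ and the pure-noise Wishart $(\Xgr)\tr\Xgl$ into the covariances $\Sigma_i$, while the signal $\Mur\tr\Mul$ and the nuisance cross-term $\Mur\tr\Xgl$ sit inside the means $\mu_i$.

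\textbf{Covariance step.} Write $\Sigma_i = I_m + \Delta_i$. Expanding $(\Xl)_{S_i,\cdot}\tr(\Xl)_{S_i,\cdot}$ into its $\Xgl$- and $\Mul$-parts, the dominant piece $\sum_{a\in S_i}(\Xgl)_{a,\cdot}\tr(\Xgl)_{a,\cdot}$ has conditional mean $\mathrm{diag}\paren{\sum_{a\in S_i}(\Ml)_{a,\cdot}} = \pr\pl d\,(1+o(1))\, I_m$ and fluctuates in Frobenius norm by $O\paren{\paren{m^2\pr d + m\pr d\,\pl}^{1/2}}$, and the signal contributions are controlled using~\eqref{eq:cond_b}. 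This yields $\E_{\Xl}\|\Delta_i\|_F^2 \lesssim \frac{m^2}{\pr d} + \frac{m}{\pr\pl d} + m^2\pl^2\deltaL^4 + \cdots$, and assumptions~\eqref{eq:cond_c}--\eqref{eq:cond_e} force $\sum_{i=1}^{n}\E_{\Xl}\|\Delta_i\|_F^2 \to 0$ while keeping the eigenvalues of $\Sigma_i$ inside $[1/2,2]$ with high probability. Hence, combining the KL formula between Gaussians (Lemma~\ref{l:KLGauss}), the inequality $\tv \le \sqrt{2}\,H$, subadditivity of squared Hellinger distance over products, and Jensen,
$$ \E_{\Xl,\Mr,\Ml}\ \tv\paren{\ \bigotimes_{i}\N\paren{\mu_i,\Sigma_i},\ \ \bigotimes_{i}\N\paren{\mu_i,I_m}\ } \ \lesssim\ \paren{\ \sum_{i=1}^{n}\E_{\Xl}\|\Delta_i\|_F^2\ }^{1/2} = o(1). $$

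\textbf{Mean step and conclusion.} Given $\Mr,\Ml$, the intermediate law $\bigotimes_i\N\paren{\mu_i,I_m}$ is, as a law of $W$, the mixture $\E_{\Xgl}\,\N\paren{m^\star + \xi,\ I_{n\times m}}$, where $m^\star := \paren{\pr\pl d}^{-1/2}\Mur\tr\Mul$ and $\xi := \paren{\pr\pl d}^{-1/2}\Mur\tr\Xgl$. Since $\xi$ is a \emph{centered} Gaussian --- its only randomness, $\Xgl$, is being integrated out --- a Gaussian mixed over a Gaussian mean is again Gaussian: $\E_{\Xgl}\,\N\paren{m^\star+\xi,I_{n\times m}} = \N\paren{m^\star,\ I_{n\times m} + \Sigma_\xi}$ with $\Sigma_\xi = \Cov(\xi)$. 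A direct second-moment computation --- in which the adversarial signs $\Mur^\full,\Mul^\full$ can at worst align --- gives $\|\Sigma_\xi\|_F^2 \lesssim mn\,\deltaR^4 + n^2 m\,\pr^2\deltaR^4 = o(1)$ by~\eqref{eq:cond_d},\eqref{eq:cond_e}, so $\tv\paren{\N\paren{m^\star,I_{n\times m}+\Sigma_\xi},\ \N\paren{m^\star,I_{n\times m}}} \lesssim \|\Sigma_\xi\|_F = o(1)$ by Lemma~\ref{l:KLGauss}. Combining the covariance and mean steps through the triangle inequality and convexity of total variation (Fact~\ref{tvfacts}) gives $\tv\paren{\law\paren{W\mid\Mr,\Ml},\ \N\paren{m^\star,I_{n\times m}}} = o(1)$ on the good mask event; averaging over $\Mr,\Ml$ (and recalling that $m^\star$ is exactly the target mean $\paren{\pr\pl d}^{-1/2}\Mur\tr\Mul$) finishes the proof.

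\textbf{Main obstacle.} The heart of the argument is the covariance step: each $\Sigma_i$ is itself a masked, non-centered rectangular Wishart matrix, and one needs the \emph{cumulative} deviation over all $n$ rows --- not merely each $\Sigma_i$ individually --- to vanish; this is exactly why $d$ must be polynomially larger than both $n$ and $m$ (conditions~\eqref{eq:cond_c}--\eqref{eq:cond_e}), and the bookkeeping is made harder by the worst-case sign patterns $\Mur^\full,\Mul^\full$, for which one cannot rely on cancellations in sums like $\sum_a (\Mul^\full)_{a,j}(\Mul^\full)_{a,j'}$. The other subtlety is the nuisance cross-term $\Mur\tr\Xgl$: handling it crudely, e.g.\ via Lemma~\ref{l:GaussianShift} with the Frobenius norm of the shift (which scales like $\sqrt{mn}\,\deltaR$), is far too lossy; it is essential to integrate $\Xgl$ out \emph{first}, so that the shift re-expresses itself as the quadratically cheaper covariance perturbation $\Sigma_\xi$, whose Frobenius norm is $\asymp \sqrt{mn}\,\deltaR^2$ --- which is precisely what makes condition~\eqref{eq:cond_d} sufficient.
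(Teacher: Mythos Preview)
Your proof is correct and is essentially the paper's argument with the roles of $R$ and $L$ interchanged: the paper conditions on $\Xr,\Ml$ and tensorizes over columns for its (TV1), then integrates out $\Xgr$ row by row for its (TV2), whereas you condition on $\Xl,\Mr,\Ml$ and tensorize over rows for the covariance step, then integrate out $\Xgl$ for the mean step --- your $\Sigma_\xi$ is block-diagonal over columns with blocks $\norm^{-1}(\Mur)_{T_j,\cdot}^\top(\Mur)_{T_j,\cdot}$, the exact $R\!\leftrightarrow\!L$ mirror of the paper's $\Delta_{\Mul,\Mr}$. Since the hypotheses \eqref{eq:cond_a}--\eqref{eq:cond_e} are symmetric under $R\!\leftrightarrow\!L$, both orderings give the same Frobenius-norm and operator-norm estimates and the argument goes through identically.
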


\subsection{Proof of Lemma~\ref{lem:gauss_resolution}}\label{subsec:gauss_res_proof}

\begin{algorithm}\SetAlgoLined\SetAlgoVlined\DontPrintSemicolon
    \KwIn{
    $\calY = \set{ \b( \alpha_j, \Y_{j}\b)}_{j\in\sqb{\mpoi}}$, where $\forall j \in\sqb{\mpoi}$, $\b(\alpha_j, \Y_{j}\b) \in \binomset n k \times \R$, $k'\in \Z$, $\eps\in\sqb{0,1}$, $\kappa \in (0,1)$.
    }
    \KwOut{$\widetilde\calZ = \set{ \b( \gamma_j, \Zt_{j}\b)}_{j\in\sqb{\mpoi'}}$, where $\forall j \in\sqb{\mpoi'}$, $\b(\gamma_j, \Zt_{j}\b) \in \binomset {\kappa n} {k'} \times \R$.}

    \BlankLine
    $\calY\up1, \calY\up2 \gets \XorSplit\paren{\calY}$

    Denote $\calY\up A = \set{ \b( \alpha_j\up A, \Y_{j} \up A\b)}_{j\in\sqb{\mpoi \up A}}$ for $A \in\set{1,2}$
    
    \BlankLine
    \tcp{Construct the tensor view of the input}
    $$
    \Yb_{\alpha}\up A = \begin{cases}
        \Y_{j^\star}\up A, &\text{ if }\ j^\star=\min\set{j \in \sqb{\mpoi\up A}: \alpha_j\up A = \alpha}\\
        0, &\text{ if }\set{j \in \sqb{\mpoi \up A}: \alpha_j\up A = \alpha} = \emptyset
    \end{cases} \quad \text{for }A\in\set{1,2}\text{ and }\alpha \in \binomset n k
    $$
    
    \BlankLine
    \tcp{Flatten the tensors}
    $\hat Y\up A \gets \mat {k - k'/2} {k'/2} \paren{ \Yb\up A} \in \R^{\binomset n {k-k'/2} \times \binomset n {k'/2}}$ for $A\in\set{1,2}$

    \BlankLine
    \tcp{Only keep part of the entries}
    For $A\in\set{1,2}$ let $Y\up A \in \R^{\binom {(1-\kappa)n}{k-k'/2} \times \binom {\kappa n}{k'/2}}$:
    $$
        \forall \alpha \in \binom{\sqb{\kappa n+1,\dots,n}}{k-k'/2}, \beta \in  \binom{\sqb{\kappa n}}{k'/2}\,, \qquad Y_{\alpha,\beta}\up A \gets \hat Y_{\alpha,\beta}\up A
    $$

    \BlankLine
    \tcp{Compute Revealed Entry Probability}
    $p \gets C\cdot n^{k(\eps-1)/2}$, where $C$ is defined in the proof of Lem.~\ref{lem:gauss_resolution} and satisfies $\Pr[\Y\up A \neq 0] = p$

    \BlankLine 
    \tcp{Construct the Gram matrix}
    $$ Z \gets p^{-1}\binom {(1-\kappa)n}{k-k'/2}^{-1/2} \paren{Y\up 1}^\top Y\up 2 $$

    \BlankLine
    \tcp{Arrange the entries of $\tenso\paren{Z}$ into an instance of $\ktensk {k'}^\wor$}
    $\Zb = \tenso\paren{Z}$, $\widetilde\calZ = \set{\paren{\gamma, \Zb_\gamma}}_{\gamma \in \binomset {\kappa n} {k'}}$
    
    \BlankLine
    Return $\ToPoiFromTPCA\b(\widetilde\calZ\b)$. \ \tcp{Alg.~\ref{alg:to_poi_sampling_from_pca}}
    
    \caption{$\GaussResolution\paren{\calY, k', \eps, \kappa = 1 - \log^{-1}n}$}
\label{alg:gauss_res}
\end{algorithm}

We prove that $\GaussResolution\paren{\cdot}$ (Alg.~\ref{alg:gauss_res}) is an average-case reduction for both detection and recovery, as defined in Def.~\ref{def:avg_case_points}, 
$$
    \text{from}\quad\gtensppoi k n \eps \defi \quad\text{to}\quad \gtensppoi {k'} {n'} {\eps'=1} {2\defi}\,,
$$
where parameters $k, k', \eps, \defi, n'$ satisfy the conditions of Lemma~\ref{lem:gauss_resolution}. This immediately yields an average-case reduction from $\gtensp k n \eps \defi$ to $\gtensppoi {k'} {n'} {\eps'=1} {2\defi}$, since by Prop.~\ref{prop:sampling_equiv} there exist average-case reductions between $\gtensp k n \eps \defi$ and $\gtensppoi k n \eps \defi$ both ways.

To prove the reduction, since it is sufficient to map parameters up to $\poly\log n$ factors by Prop.~\ref{prop:map_up_to_constant}, we show that in the planted case $\calY \sim \gtensppoi k n \eps \defi$, the output $\calZ = \GaussResolution\paren{\calY, k', \eps}$ satisfies 
$$
\tv\paren{\law\paren{\calZ}, \gtenspoi {k'} {n'} {m'} {\delta'}} = o(1)\,, \quad \text{where } m' = \Thetat\paren{n^{k'}} \text{ and } \delta' = \Thetat\paren{ n^{-k'/4 - \defi} }\,,
$$
and in the null case $\calY \sim \gtensppoinull k n \eps$, the output $\calZ = \GaussResolution\paren{\calY, k', \eps}$ satisfies 
$$
\tv\paren{\law\paren{\calZ}, \gtenspoinull {k'} {n'} {m'}} = o(1)\,, \quad \text{where } m' = \Thetat\paren{n^{k'}}\,.
$$

Denote $m = n^{k(1+\eps)/2}, \delta = n^{-k\eps/4 - \defi/2}$ and assume the input instance contains $\mpoi \sim \poi\paren{m}$ equations.
\paragraph{Step 1: tensor view of $\ktens$.} In Lines 1-6 of Alg.~\ref{alg:gauss_res}, we reshape the input instance into two matrices $Y\up 1, Y\up 2$, which we hereby describe. We show the calculation for the planted case and the argument in the null case is analogous.
From the guarantees of the splitting procedure (Lemma~\ref{lem:instance_split}),  $\calY\up1, \calY\up2 \sim \gtenspoi k n {\tilde m} {\delta}$, $\tilde m = \Theta(m)$, are independent and share the input signal vector $x$. Then, we can express for all $\alpha\in\binomset n k$
$$\Yb_{\alpha}\up A = \paren{\delta \cdot x_{\alpha} + \noiseG_{\alpha}\up A}\cdot \M_{\alpha}\up A\,, \quad \text{for }A\in \set{1,2}\,,$$
where $\noiseG_{\alpha}\up A \sim \N(0,1)$ is the entry-wise noise and $\M_{\alpha}\up A \sim \ber\paren{p}$ for
$$
p = \Pr\sqb{ \exists j \in \sqb{\mpoi\up A}:\, \alpha_j = \alpha } = \Pr\sqb{\poi\paren{\tilde m \binom n k ^{-1} }} = \Thetat\paren{ n^{k(\eps-1)/2} } \eqcolon C \cdot n^{k(\eps-1)/2} \,,
$$
for a known $C = \Thetat(1)$, where we used $\mpoi\up A \sim \poi\paren{\tilde m}$ and Poisson splitting.
The flattenings $\mat {k-k'/2} {k'/2} \Yb\up A$ are matrices of dimensions $\binom n {k-k'/2} \times \binom n {k'/2}$ with rows indexed by elements in $\binomset n {k-k'/2}$ and columns by $\binomset n {k'/2}$. Denote the pruned sets as $N = \binomset {\kappa n}{k'/2}$ and $M = \binom{(1-\kappa)n}{k-k'/2}$. Then in Line 5 of Alg.~\ref{alg:gauss_res} we set for all $(\alpha, \beta) \in M \times N$ $$Y_{\alpha, \beta}\up A = \paren{\delta \cdot x_{\alpha \cup \beta} + \noiseG_{\alpha \cup \beta}\up A}\cdot \M_{\alpha\cup\beta}\up A\,, \quad \text{for }A\in \set{1,2}\,.$$

Let $x\up {k-k'/2} = \veco\paren{\set{x_{\alpha}}_{\alpha \in M}}$, $x\up {k'/2} = \veco\paren{\set{x_{\beta}}_{\beta \in N}}$, so we can rewrite
$$Y\up A = \paren{\delta x\up {k-k'/2} \paren{x \up {k'/2}}^\top + \noiseG \up A}\odot \M \up A\,, \quad \text{for }A\in \set{1,2}\,,$$
where $\noiseG\up A \sim \N(0, I_{|M|\times |N|})$ are independent noise matrices and $\odot$ denotes the entry-wise multiplication.

\paragraph{Step 2: Gram matrix: a reduction to $\ktensk {k'}^\wor$ (see Fig.~\ref{fig:kxor_variants} and \autoref{subsubsec:equiv_w_wor_replacement}).} In Line 7 Alg.~\ref{alg:gauss_res} computes the Gram matrix 
\begin{equation}\label{eq:gaus_res_Z_def}
Z = p^{-1} |M|^{-1/2} \paren{Y\up 1}^\top Y\up 2\,.
\end{equation}
The goal of this step is to show that in Lines 1-8, Alg.~\ref{alg:gauss_res} makes an average-case reduction for both detection and recovery to $\ktensk {k'}^\wor_{\eps'=1, 2\defi}(n'=\kappa n)$ (defines in \autoref{subsubsec:equiv_w_wor_replacement}). For this it is sufficient to demonstrate that in the planted case 
\begin{equation}\label{eq:g_res_final_tv}
    \tv\paren{\law\paren{Z}, \delta' x \up {k'/2}\paren{x \up {k'/2}}^\top + \N\paren{0, I_{|N|, |N|}}} = o(1)\,,
\end{equation}
where $\delta' = \Thetat\paren{n^{-k'/4 - \defi}}$ and in the null case
\begin{equation}\label{eq:g_res_final_tv_null}
    \tv\paren{\law\paren{Z}, \N\paren{0, I_{|N|, |N|}}} = o(1)\,.
\end{equation}
Note that, since $\Y\up 1, \Y\up 1$ are independent,
\begin{align*}
    \E Z &= p^{-1} |M|^{-1/2} \E \sqb{Y\up 1}^\top \E \sqb{Y\up 2}\\
    &= p^{-1} |M|^{-1/2} \paren{\delta x\up {k'/2} \paren{x \up {k-k'/2}}^\top }\odot \E\sqb{\M \up 1}^\top \paren{\delta x\up {k-k'/2} \paren{x \up {k'/2}}^\top }\odot \E\sqb{\M \up 2}\\
    &= p^{-1} |M|^{-1/2} \delta^2 p^2 |M| x\up {k'/2} \paren{x\up {k'/2}}^\top = \underbrace{\delta^2 p |M|^{1/2}}_{\delta'} x\up {k'/2} \paren{x\up {k'/2}}^\top\,,
\end{align*}
where we have $\delta' = \delta^2 p |M|^{1/2} = \Theta\paren{n^{-k\eps/2 - \defi + k(\eps-1)/2 + (k-k'/2)/2}} = \Theta \paren{n^{-k'/4 - \defi}}$. It remains to show that the entries of $Z$ are indeed independent Gaussian entries with unit variance. We achieve this by applying Lemma~\ref{lem:wishart} proved in \autoref{subsec:wishart}.

Matrix $Z$ in Eq.~\eqref{eq:gaus_res_Z_def} fits the setting of Lemma~\ref{lem:wishart}. Indeed, matrices $Y\up 1, Y \up2 \in \R^{|M| \times |N|}$ satisfy the Lemma~\ref{lem:wishart} conditions with $\tilde \pr = \tilde \pl = p$, $\tilde\deltaR = \tilde\deltaL = \delta$, and $\tilde d = |M|, \tilde n = \tilde m = |N|$, where recall $p = \Thetat\paren{n^{k(\eps-1)/2}}, \delta = \Thetat\paren{n^{-k\eps/4 - \defi/2}}$ and $|M| = \Thetat\b(n^{k-k'/2}\b), |N| = \Thetat\b(n^{k'/2}\b)$. It is left to verify the asymptotic assumptions on parameters $\tilde\pr, \tilde\pl, \tilde\deltaR, \tilde\deltaL, \tilde d, \tilde n, \tilde m$: condition \eqref{eq:cond_a} is trivially satisfied and 
\begin{align*}
    &\eqref{eq:cond_b}: -k\eps/4 - \defi/2 < 0 \quad\text{and}\quad k/2 - \defi < k-k'/2\\
    &\eqref{eq:cond_c}: k(5-3\eps)/4 < k-k'/2\\
    &\eqref{eq:cond_d}: k(3-\eps)/3 < k-k'/2 \quad\text{and}\quad k(2-\eps)/2 - \defi < k-k'/2\\
    &\eqref{eq:cond_e}: k(7-\eps)/8 < k-k'/2 \quad\text{and}\quad 2k/3 - 2\defi/3 < k-k'/2\,,
\end{align*}
all of which are satisfied in the regime of our reduction:
$$k-k'/2 >\max\set{ k\paren{1-\eps/2} - \defi, k(5-3\eps)/4, k(7-\eps)/8} \text{ and } -k\eps/2 < \defi \leq (k-2)/2\,.$$ 

Denoting $$\mu\up A = \delta x\up {k-k'/2} \paren{x\up {k'/2}}^\top \odot \M\up A\,, \quad \text{for }A\in \set{1,2}$$
to be the means of the matrices $\Y\up 1, \Y\up 2 \in \R^{|M| \times |N|}$, we conclude from Lemma~\ref{lem:wishart},
$$
\tv\paren{\N\paren{p^{-1}|M|^{-1/2}\paren{\mu \up 1}^T \mu\up 2, I_{N\times N}}, p^{-1}|M|^{-1/2} \paren{Y \up 1}^T Y\up 2} \to 0\,, \quad n\to 
\infty\,.
$$
By $\tv$ triangle inequality (Fact~\ref{tvfacts}), to conclude the theorem statement it is now sufficient to show 
$$
\tv\paren{\N\paren{p^{-1}|M|^{-1/2}\paren{\mu \up 1}^T \mu\up 2, I_{|N|\times |N|}}, \N\paren{\delta^2 p |M|^{1/2} x\up {k'/2} \paren{x\up{k'/2}}^\top, I_{|N|\times |N|}}} \to 0\,, \quad n\to 
\infty\,.
$$
By Lemma~\ref{l:KLGauss}, Eq.~\eqref{eq:tv_gauss_pinsker}, the square of the total variation above is upper bounded by 
\begin{align*}
    \tv ^2 &\leq \frac 14 \E_{\mu\up 1, \mu\up 2} \B\| p^{-1}|M|^{-1/2}\paren{\mu \up 1}^T \mu\up 2 -  \delta^2 p |M|^{1/2} x\up {k'/2} \paren{x\up{k'/2}}^\top \B\|^2\\
    &= \frac 14 \E_{\mu\up 1, \mu\up 2} \B\| p^{-1}|M|^{-1/2}\paren{\mu \up 1}^T \mu\up 2 -  \E p^{-1}|M|^{-1/2}\paren{\mu \up 1}^T \mu\up 2\B\|^2\,.
\end{align*}
Now recall that for all $i,j \in [|N|]$, 
$$
\paren{\paren{\mu \up 1}^T \mu\up 2}_{ij} \sim \delta^2 \bin\paren{|M|, p^2}\,,
$$
and therefore, $$ \E \sqb{\paren{\paren{\mu \up 1}^T \mu\up 2}_{ij} - \E \paren{\paren{\mu \up 1}^T \mu\up 2}_{ij} }^2 = \delta^4 |M| p^2 (1-p^2) \leq \delta^4 |M| p^2\,.$$
We conclude that 
\begin{align*}
    \tv^2 \leq \frac 14 p^{-2} |M|^{-1} \cdot |N|^2 \delta^4 |M| p^2 = \frac 14 |N|^2 \delta^4 = \Theta\paren{ n^{k' - k\eps - 2\defi}} = o(1)\,,
\end{align*}
since $k' - k\eps - 2 \defi < 0$, which shows Eq.~\eqref{eq:g_res_final_tv} and concludes the proof for the planted case. 

For the null case (Def.~\ref{def:null_models}), the analysis is analogous and Lemma~\ref{lem:wishart} is similarly applicable. We conclude that the output is close in total variation distance to $N\paren{p^{-1}|M|^{-1/2}\paren{\mu \up 1}^T \mu\up 2, I_{|N|\times |N|}}$, where every entry of $\mu\up 1, \mu\up 2$ is i.i.d. $\sim\begin{cases}
    0, &\text{w.p. }1-p\\
    +\delta, &\text{w.p. } p/2\\
    -\delta, &\text{w.p. } p/2\,.
\end{cases}$. Notice that with high probability on the Bernoulli mask,
\begin{align*}
    p^{-2}|M|^{-1} \E \b\|\paren{\mu \up 1}^T \mu\up 2\b\|_2^2 &\leq p^{-2}|M|^{-1} O\paren{ |N|^2 |M| p^2 \delta^4} = O\paren{\delta^4 N^2} = O\paren{n^{k' - k\eps - 2\defi}} = o(1) \,,
\end{align*}
since $k-k'/2 > k\paren{1-\eps/2} - \defi$ and we only consider the off-diagonal entries.
By properties of total variation (Fact~\ref{tvfacts}) and Lemma~\ref{l:KLGauss}, we conclude that the output is close in total variation to $\N\paren{0, I_{|N|\times |N|}}$, proving Eq.~\eqref{eq:g_res_final_tv_null}.

\paragraph{Step 3: from $\ktensk{k'}^\wor$ to $\kgtensppoik {k'}$.} Since Lines 1-8 are an average-case reduction to $\ktensk {k'}^\wor_{\eps'=1, 2\defi}(n'=\kappa n)$, the final step is an average-case reduction to $\gtensppoi {k'} {n'} {\eps'=1} {2\defi}$ by the guarantees for $\ToPoiFromTPCA$ in Prop.~\ref{prop:sampling_equiv_eps1} (and implicit Poissonization step, \autoref{subsubsec:equiv_sampling_pois}). This concludes the proof of Lemma~\ref{lem:gauss_resolution}.

\subsection{Improved Gaussian Resolution for $\eps=1$ via Sparse PCA}\label{subsec:gauss_res_eps_1}

In Lemma~\ref{lem:gauss_resolution} we proved an average-case reduction from $\gtensp k n \eps \defi$ to $\gtensp {k'} {n'} {\eps'=1} {2\defi}$, where $\eps \in (1/3, 1]$. For $\eps=1$, i.e., when the input instance is also Tensor PCA, the reduction in Lemma~\ref{lem:gauss_resolution} maps from Tensor PCA of order $k$ to Tensor PCA of even order $k' < k/2$. In this section we prove Lemma~\ref{lem:2path_red_ext_gauss}: it extends $\GaussEqComb$ to accept as an input two $\ktens$ instances of different orders $k_1, k_2$ that share the same secret vector $x\in\set{\pm1}^n$:
$$
\gtensp {k_1} n {\eps=1} {\defi_1} \ ``+" \ \gtensp {k_2} n {\eps=1} {\defi_2}\ \to \ \gtensp {k'} {n'} {\eps=1} {\defi_1+\defi_2}
$$
for admissible range of parameters $k' < \frac{k_1+k_2}{3}$ and $n' = n(1-o(1))$. Our main insight is a connection to recent results for Spiked Covariance and Spiked Wigner Models in \cite{bresler2025computational} (\autoref{subsubsec:sparse_pca}). Consequently, we improve the reduction of Lemma~\ref{lem:gauss_resolution} to map from order $k$ to all even orders $k' < 2k/3$. The proof of Lemma~\ref{lem:2path_red_ext_gauss} is in \autoref{subsubsec:proof_two_tens_gen}.

We present the reduction for the Gaussian $\ktens$ and the same result holds for discrete $\kxor$ by the discrete-Gaussian equivalence in Prop.~\ref{prop:discr_gauss_equivalence_general}.

\begin{lemma}[Extended Gaussian Resolution]\label{lem:2path_red_ext_gauss}

    Given independent instances $\calY\up1 \sim \gtensp {k_1} n {\eps=1} {\defi_1}$ and $\calY\up2 \sim \gtensp {k_2} n {\eps=1} {\defi_2}$ that share a secret vector $x\in\set{\pm1}^n$ and have $\defi_1,\defi_2 \geq 0$, Alg.~\ref{alg:two_path_gen_tens} outputs 
    $\calZ = \GaussResolutionTwo\paren{\calY\up1, \calY\up2, \kstar}$, where
    $$\tv\paren{\calZ, \gtens \kstar {n'} {m'} {\delta'}} \to 0\quad \text{as }n\to\infty\,,$$ where $m' = \Thetat(n^{k'})$ and $\delta' = \Thetat(n^{-k'/4 - (\defi_1+\defi_2)/2})$, if $(k_1+k_2-\kstar)/2$ is an integer and $\kstar$ satisfies
    $$
    \kstar\in \sqb{|k_1-k_2|, \min\set{k_1,k_2}}\quad\text{and}\quad k' < \frac{k_1+k_2}{3}\,.
    $$
    Moreover, Alg.~\ref{alg:two_path_gen_tens} maps two instances with the same secret signal vector $x\in \set{\pm 1}^n$ to one with the signal vector $x' = x_{[1:n']}$ for $n' = n(1- o(1))$.
\end{lemma}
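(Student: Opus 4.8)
The plan is to mirror the proof of Lemma~\ref{lem:gauss_resolution} but work with two input tensors of (possibly) different orders $k_1,k_2$, and to replace the plain Gram-matrix analysis (which forces $k'<k/2$) by the Spiked Covariance $\to$ Spiked Wigner reduction of \cite{bresler2025computational}, which handles the more delicate "square" regime and buys the improved range $k'<(k_1+k_2)/3$. First I would fix $a\coloneq (k_1+k_2-k')/2$ (an integer by hypothesis) and reshape each input into a tensor view: let $\Yb\up A\in(\R^n)^{\otimes k_A}$ have entries $\Yb\up A_\alpha = \delta_A x_\alpha + \noiseG\up A_\alpha$ on a Bernoulli-$p_A$-masked set of index $k_A$-sets (here $p_A=\Thetat(n^{-k_A/2})$ since $\eps=1$ means $m=n^{k_A}$), and $0$ elsewhere; after an initial splitting/cloning step (Lem.~\ref{lem:instance_split}, \ref{lem:instance_clone}) we may assume the two instances are independent and share $x$. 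Then flatten $\Yb\up 1$ as a $\binom{n}{k_1-a}\times\binom{n}{a}$ matrix $Y\up1$ with rows indexed by "cancellation" sets $\rho$ and columns by $k'/2$-sets $\gamma_1$, and $\Yb\up 2$ as a $\binom{n}{k_2-a}\times\binom{n}{a}$ matrix $Y\up2$ with rows indexed by $\rho$ and columns by $\gamma_2$ (so the row index runs over the \emph{same} cancellation-set family, which requires $k_1-a=k_2-a$, i.e.\ $k_1=k_2$ — wait, in general $|k_1-k_2|\le k'$ is exactly what's needed: then the cancellation sets on the two sides have sizes $(k_1+k_2-k')/2\le\min\{k_1,k_2\}$ and one aligns them via a fixed injection, exactly as in the two-input discrete reduction Lem.~\ref{lem:2path_red_ext_discr}). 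The entrywise product identity $\E[\Y\up1_s\Y\up2_t]=\delta_1\delta_2\, x_{\alpha_s\triangle\alpha_t}$ gives $\E[(Y\up1)^\top Y\up2]\propto x\up{k'/2}(x\up{k'/2})^\top$, the desired rank-one $k'$-tensor spike after tensorization $\Zb=\tenso((Y\up1)^\top Y\up2)$.

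Second, the core distributional step: I would show $(p_1 p_2 |M|)^{-1/2}(Y\up1)^\top Y\up2$ is close in total variation to $\delta' x\up{k'/2}(x\up{k'/2})^\top+\N(0,I)$ with $\delta'=\delta_1\delta_2\, p_{1/2}\,|M|^{1/2}\asymp n^{-k'/4-(\defi_1+\defi_2)/2}$. For $k'<(k_1+k_2)/3$ the naive application of the masked-Wishart CLT (Lem.~\ref{lem:wishart}) of \cite{brennan2021finetti} just barely covers the case $k'<\min(k_1,k_2)/2$-type constraints, but to reach $k'<(k_1+k_2)/3$ I would instead route through \cite{bresler2025computational}: identify $(Y\up1)^\top Y\up2$ (after the Poissonization/cloning preprocessing that decouples the two factors into independent non-centered Bernoulli-masked Gaussian matrices, as in Step~1 of the proof of Lem.~\ref{lem:gauss_resolution}) with a Spiked Covariance instance, apply their reduction to a Spiked Wigner instance of dimension $\binom{n'}{k'/2}$, and read off that the output is exactly (in $\tv$) the $\ktensk{k'}^\wor$ law at signal $\delta'$. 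The parameter bookkeeping is: row dimension $|M|=\binom{(1-\kappa)n}{a}\asymp n^{a}=n^{(k_1+k_2-k')/2}$, column dimension $|N|=\binom{\kappa n}{k'/2}\asymp n^{k'/2}$, mask probabilities $p_A\asymp n^{-k_A/2}$; the admissibility inequalities of \cite{bresler2025computational} (analogues of \eqref{eq:cond_a}–\eqref{eq:cond_e}) reduce, after substituting these polynomial scales and using $\defi_1,\defi_2\ge0$, precisely to $k'<(k_1+k_2)/3$ together with $|k_1-k_2|\le k'\le\min(k_1,k_2)$. I would also need the second-moment bound $|N|^2(\delta_1\delta_2)^2\,|M|\,p_{1/2}^2\cdot(p_{1/2}|M|^{1/2})^{-2}\asymp n^{k'-(k_1+k_2)/2-\defi_1-\defi_2}=o(1)$, which again follows from $k'<(k_1+k_2)/3$ (in fact from the weaker $k'<(k_1+k_2)/2$), to pass from the Binomial-coordinate law back to the Gaussian-mean law, exactly as in the $\tv^2\le\tfrac14|N|^2\delta^4$ estimate in the proof of Lem.~\ref{lem:gauss_resolution}.

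Third, I would finish with the same postprocessing chain as in Lemma~\ref{lem:gauss_resolution}: the reshaping restricts column/row index sets to the "signal block" $\sqb{1:\kappa n}$ and the "cancellation block" $\sqb{\kappa n+1:n}$, which yields output secret $x'=x_{\sqb{1:n'}}$ with $n'=\kappa n=n(1-o(1))$; then $\ToPoiFromTPCA$ (Alg.~\ref{alg:to_poi_sampling_from_pca}, via Prop.~\ref{prop:sampling_equiv_eps1}) converts the $\ktensk{k'}^\wor_{\eps'=1,\defi_1+\defi_2}(n')$ instance to $\gtensppoi{k'}{n'}{\eps'=1}{\defi_1+\defi_2}$ and an implicit resampling step (Prop.~\ref{prop:sampling_equiv}) returns the fixed-$m'$ model with $m'=\Thetat(n^{k'})$. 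The null case is handled identically — the mask is the same and the mean is $0$ — so the same CLT gives $\tv(\law(\calZ),\gtensnull{k'}{n'}{m'})=o(1)$. Recovery transfer is immediate since $x'$ is a coordinate restriction of $x$ (Sec.~\ref{subsec:recovery_assumptions}), and detection transfer follows because planted maps to planted and null to null.

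The main obstacle I anticipate is matching the two-input asymmetry ($k_1\ne k_2$) to the Spiked Covariance format of \cite{bresler2025computational}: their reduction is stated for $X^\top X$ with the \emph{same} matrix, whereas here we have a product of two independent rectangular masked Gaussian matrices of different column dimensions (when $k_1\neq k_2$, the cancellation-set family has size $\binom{n}{(k_1+k_2-k')/2}$ which must simultaneously serve as the "sample" dimension for both factors). Reconciling this requires either a symmetric embedding — pad/align the two factors to a common ambient matrix and absorb the dimension mismatch into the mask — or a direct extension of Lem.~\ref{lem:wishart}/\cite{brennan2021finetti} to rectangular products, and verifying that the resulting admissibility window is exactly $k'<(k_1+k_2)/3$ rather than something lossier (e.g.\ $k'<2\min(k_1,k_2)/3$) is the delicate part; I would handle it by the same cloning-then-subsample trick used in Step~1 of Lem.~\ref{lem:gauss_resolution}, which reduces the product to two genuinely independent factors over a shared index set, after which \cite{bresler2025computational} applies verbatim.
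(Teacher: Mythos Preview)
Your overall route is the paper's: flatten both tensors along the shared cancellation dimension $a=(k_1+k_2-k')/2$, form the cross-product $(Y\up1)^\top Y\up2$, and invoke the Spiked Covariance $\to$ Spiked Wigner machinery of \cite{bresler2025computational} to conclude the output is close to a rank-one spike plus i.i.d.\ Gaussians, with the admissibility condition $M\gg N_1N_2$ translating exactly to $k'<(k_1+k_2)/3$. So the skeleton is right.

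However, you are carrying unnecessary baggage from the general-$\eps$ proof and this creates confusion in several places. First, since $\eps=1$ the tensor is \emph{fully observed}: there is no Bernoulli mask at all (your claim $p_A\asymp n^{-k_A/2}$ is wrong; for $\eps=1$ one has $m=n^{k_A}\asymp\binom{n}{k_A}$, so $p=\Theta(1)$, and the paper in fact passes first to the $\wor$ model via Prop.~\ref{prop:sampling_equiv_eps1} so that every $\alpha$ carries exactly one observation). This means no cloning, no Poissonization, no mask-probability bookkeeping is needed, and the matrices $Y\up A$ are simply $\delta_A\, x\up{a}(x\up{k_A-a})^\top + X\up A$ with $X\up A$ i.i.d.\ Gaussian---precisely the Spiked Covariance form $\sqrt{\theta\up A}\,g(u\up A)^\top+X\up A$ with $g=x\up a$, $u\up A\propto x\up{k_A-a}$. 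Second, your flattening is transposed and the column sizes are wrong: rows are indexed by $a$-sets (the same family for both $A=1,2$), columns by $(k_A-a)$-sets, so $Y\up A\in\R^{\binom{n}{a}\times\binom{n}{k_A-a}}$. The two column dimensions $k_1-a$ and $k_2-a$ sum to $k'$ but are \emph{not} both $k'/2$ unless $k_1=k_2$; the product $(Y\up1)^\top Y\up2$ is therefore a genuinely rectangular $N_1\times N_2$ matrix with $N_A=\binom{\kappa n}{k_A-a}$.

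This rectangularity is exactly the ``main obstacle'' you spend the last paragraph worrying about, and it is not an obstacle. The paper handles it in one line (Claim~\ref{claim:assym_wish}): the underlying CLT from \cite{brennan2021finetti} already allows arbitrary $N_1,N_2$ under $M\gg N_1N_2$ (their Cor.~4.3), and the argument of \cite{bresler2025computational} adapts by simply skipping the symmetrization step and checking the cross-term bounds $(\theta\up1)^2N_2,(\theta\up2)^2N_1=o(1)$, both of which follow from $M\gg N_1N_2$ and $\defi_A\ge 0$. No ``symmetric embedding'' or mask-alignment trick is needed. Once you drop the mask and fix the flattening, the signal computation is just $\delta'=\delta_1\delta_2\sqrt{M}\asymp n^{-(k_1+k_2)/4-(\defi_1+\defi_2)/2}\cdot n^{a/2}=n^{-k'/4-(\defi_1+\defi_2)/2}$, and the remaining postprocessing is as you describe.
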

Remark~\ref{rmk:gaussian_resolution_alg_sampling_discr} similarly applies.

\begin{algorithm}[h]\SetAlgoLined\SetAlgoVlined\DontPrintSemicolon
    \KwIn{
    $\calY\up A = \set{ \b( \alpha_j\up A, \Y_{j}\up A\b)}_{j\in\sqb{M\up A}}$ for $A\in\set{1,2}$, where $\forall j \in\sqb{M\up A}$, $\b(\alpha_j\up A, \Y_{j}\up A\b) \in \binomset n {k_A} \times \R$, $k'\in \Z$, $\kappa \in (0,1)$.
    }
    \KwOut{$\widetilde\calZ = \set{ \b( \gamma_j, \Zt_{j}\b)}_{j\in\sqb{M'}}$, where $\forall j \in\sqb{M'}$, $\b(\gamma_j, \Zt_{j}\b) \in \binomset {\kappa n} {k'} \times \R$.}

    \BlankLine
    \tcp{Construct tensor views of the two inputs}
    $$
    \Yb_{\alpha}\up A = \Y_j\up A, \text{ where }\ j \in \sqb{M\up A}: \alpha_j\up A = \alpha \quad \text{for }A\in\set{1,2}\text{ and } \alpha \in \binomset n {k_A}
    $$

    \BlankLine
    Denote $a \coloneq (k_1+ k_2 - k')/2$ (assumed integer)
    
    \BlankLine
    \tcp{Flatten the tensors}
    $\hat Y\up A \gets \mat {a} { k_A - a} \paren{ \Yb\up A} \in \R^{\binomset n {a} \times \binomset n {k_A - a}}$ for $A\in\set{1,2}$

    \BlankLine
    \tcp{Only keep part of the entries}
    For $A\in\set{1,2}$ let $Y\up A \in \R^{\binom {(1-\kappa)n}{a} \times \binom {\kappa n}{k_A - a}}$:
    $$
        \forall \alpha \in \binom{\sqb{\kappa n+1,\dots,n}}{a}, \beta \in  \binom{\sqb{\kappa n}}{k_A - a}\,, \qquad Y_{\alpha,\beta}\up A \gets \hat Y_{\alpha,\beta}\up A
    $$

    \BlankLine 
    \tcp{Construct the Gram matrix}
    $$ Z \gets \binom {(1-\kappa)n}{a}^{-1/2} \paren{Y\up 1}^\top Y\up 2 $$

    \BlankLine
    \tcp{Arrange the entries of $\tenso\paren{Z}$ into an instance of $\ktensk {k'}^\wor$}
    $\Zb = \tenso\paren{Z}$, $\widetilde\calZ = \set{\paren{\gamma, \Zb_\gamma}}_{\gamma \in \binomset {\kappa n} {k'}}$
    
    \BlankLine
    Return $\ToPoiFromTPCA\b(\widetilde\calZ\b)$. \ \tcp{Alg.~\ref{alg:to_poi_sampling_from_pca}}
    
    \caption{$\GaussResolutionTwo\paren{\calY\up 1, \calY\up 2, k', \kappa = 1-\log^{-1} n}$}
\label{alg:two_path_gen_tens}
\end{algorithm}

\subsubsection{Connection to Spiked Covariance and Spiked Wigner Models}\label{subsubsec:sparse_pca}

To obtain this generalized $\EqComb$ two-to-one reduction,
we draw connections between $\ktens$ and the two models of high-dimensional statistics: Spiked Covariance Model and Spiked Wigner Model. We will use the average-case reductions between them developed in \cite{bresler2025computational}. While this reduction is similar to the one in \autoref{sec:to_tensor_pca}, we directly reduce to the result in \cite{bresler2025computational} to obtain sharper $k$ tradeoffs for a specific case of $\eps = 1$.

For $\eps=1$ we proved in Prop.~\ref{prop:sampling_equiv_eps1} that $\gtensp k n {\eps=1} \defi$ is computationally equivalent to $\ktens^\wor_{\eps=1, \defi}(n)$, i.e., there exist average-case reductions between the models both ways. In $\ktens^\wor_{\eps=1, \defi}(n)$, we observe $m = \binom n k$ $\ktens$ equations, exactly one for each index set $\alpha \in \binomset n k$. Then, in a tensor view, a $\ktens^\wor_{\eps=1, \defi}(n)$ instance $\calY = \set{\paren{\alpha_j, \Y_j}}_{j \in \sqb{m}}$ can be reshaped into a symmetric tensor $\Yb \in \paren{\R^n}^{\otimes k}$ satisfying for all $\alpha\in\binomset n k$
\begin{equation*}
    \Yb_{\alpha} = \Y_j, \text{ where }\ j \in \sqb{m}: \alpha_j = \alpha\,.
\end{equation*}
Here, since $\Yb$ is symmetric up to index permutations, we index its entries by the sets of indices in $\sqb{n}$. A flattening $Y = \mat {a} {k-a} \Yb \in \R^{\binom n {a}\times \binom n {k-a}}$ for some $a$ is an $M = \binom n a$ by $N = \binom n {k-a}$ matrix whose rows are indexed by elements of $\binomset n {a}$ and columns by $\binomset n {k-a}$. 

\begin{definition}[Spiked Covariance and Spiked Wigner Models]\label{def:spiked_cov_and_wigner}
    Under the \emph{Spiked Covariance Model} of Johnstone and Lu \cite{johnstoneSparse04}, one observes $M$ i.i.d. samples from $\N\paren{0,I_N + \theta u u^\top}$ for a unit vector $u\in \R^N$ and an SNR parameter $\theta$. Arranging the samples in the rows of matrix $Z \in \R^{M\times N}$, the model data has a convenient representation 
    $$Z = X + \sqrt{\theta} g u^\top, \qquad \text{where } X\sim \N(0,I_{M\times N})\text{ and } g\sim \N(0, I_M) \text{ are independent}\,.$$
    Under the \emph{Spiked Wigner Model}, one observes a matrix 
    $$
    Y = \lambda u u^\top + W, \qquad W\sim\GOE(N)\,,
    $$
    where $u$ is the unit signal vector and $\lambda$ is the SNR.
\end{definition}

With these definitions in mind, note that $Y = \mat a {k-a} \paren{\Yb}$ can be represented as 
$$
Y^\top = X + \sqrt{\theta} x_L x_R^\top, \qquad X\sim \N\paren{0, I_{M\times N}}\,,
$$
where $x_L = x^{\otimest a} \in \set{\pm 1}^{M}, x_R = N^{-1/2} x^{\otimest k-a} \in \set{\pm 1/\sqrt{N}}^{N}$ and $\theta = \delta^2 N$. With the exception of the distribution of $x_L$ (which corresponds to $g \sim \N(0, I_{M})$), this resembles the distribution of the Spiked Covariance Model with $M$ samples in dimension $N$ and the signal unit vector $x_R = N^{-1/2} x^{\otimest k-a}$. Moreover, the Wigner Model with the same signal vector, $$Y = \lambda x_R x_R^\top + W$$
resembles the flattening of the $(2k-2a)$-order tensor with the secret vector $x$. This connection is the inspiration for this reduction that closely mirrors the reduction of \cite{bresler2025computational} from Spiked Covariance to Spiked Wigner Model.

\subsubsection{Proof of Lemma~\ref{lem:2path_red_ext_gauss}}\label{subsubsec:proof_two_tens_gen}

We show that Alg.~\ref{alg:two_path_gen_tens} is an average-case reduction from $\calY\up 1 \sim \ktensk {k_1}^\wor_{\eps=1, \defi_1}(n)$ and $\calY \up 2 \sim \ktensk {k_2}^\wor_{\eps=1, \defi_2}(n)$, that share a secret vector $x\in\set{\pm1}^n$, to $\gtensp {k'} n {\eps=1} {\defi_1+\defi_2}$ and the Lemma~\ref{lem:2path_red_ext_gauss} follows from the sampling equivalence in Prop.~\ref{prop:sampling_equiv_eps1}. In particular, under the conditions of the lemma, we prove that the output $\calZ = \GaussResolutionTwo\paren{\calY\up1, \calY\up2, \kstar}$ satisfies
    \begin{equation}\label{eq:tv_two_tens}
        \tv\paren{\calZ, \gtens \kstar {n'} {m'} {\delta'}} \to 0\quad \text{as }n\to\infty\,,
    \end{equation}
    where $m' = \Thetat(n^{k'})$ and $\delta' = \Thetat(n^{-k'/4 - \defi_1/2 - \defi_2/2})$, which shows the desired average-case reduction by Prop.~\ref{prop:map_up_to_constant}. 

The proof of Eq.~\eqref{eq:tv_two_tens} closely follows that of Theorem 8.2 in \cite{bresler2025computational}. 
From the definition of $\ktens^\wor$ in \autoref{subsubsec:equiv_w_wor_replacement}, we can express 
\begin{align*}
    \forall \alpha \in \binom{\sqb{n}}{k_1}: &\qquad\Yb_{\alpha}\up 1 = \delta\up 1 \cdot x_{\alpha} + \noiseG_{\alpha}\up 1\,,\\
    \forall \alpha \in \binom{\sqb{n}}{k_2}:&\qquad \Yb_{\alpha}\up 2 = \delta\up 2 \cdot x_{\alpha} + \noiseG_{\alpha}\up 2\,,
\end{align*}
where $\delta\up 1 = \Theta\paren{n^{-k_1/4-\defi_1/2}}$ and $\delta\up 2 = \Theta \paren{n^{-k_2/4-\defi_2/2}}$. The flattenings $Y\up1, Y\up2$ in Lines 3-4 of Alg.~\ref{alg:two_path_gen_tens} then satisfy for all $(\alpha, \beta) \in \binomset {(1-\kappa)n}{a} \times \binomset {\kappa n}{k_1 - a}$, $$Y_{\alpha, \beta}\up 1 = \delta\up 1 \cdot x_{\alpha, \beta} + \noiseG_{\alpha, \beta}\up 1$$
and for all $(\alpha, \beta) \in \binomset {(1-\kappa)n}{a} \times \binomset {\kappa n}{k_2-a}$,
$$Y_{\alpha, \beta}\up 2 = \delta\up 2 \cdot x_{\alpha, \beta} + \noiseG_{\alpha, \beta}\up 2\,,$$ where we index the rows and columns of these matrices by the corresponding index sets.  Denote $N_1 = \binom {\kappa n}{k_1 - a}, N_2 = \binom {\kappa n}{k_2-a}, M = \binom {(1-\kappa)n}{a}$ and $g = \veco\paren{\set{x_{\alpha}}_{\alpha \in \binomset {(1-\kappa)n}{a}}}$, $u\up 1 = N_1^{-1/2}\veco\paren{\set{x_{\beta}}_{\beta \in \binomset {\kappa n}{k_1 - a}}}$, and $u\up 2 = N_2^{-1/2}\veco\paren{\set{x_{\beta}}_{\beta \in \binomset {\kappa n}{k_2-a}}}$, we can express 
$$Y\up 1 = \sqrt{\theta\up 1} g \paren{u\up 1}^\top + X\up 1, \qquad \text{and} \qquad Y\up 2 = \sqrt{\theta\up 2} g \paren{u\up 2}^\top + X\up 2\,,$$
where $X\up 1,X\up 2$ are independent i.i.d. $\N(0,1)$ matrices and $\theta\up1 = \tilde\Theta\paren{n^{k_1/2 - a-\defi_1}} = \tilde\Theta\paren{\sqrt{N_1/M}n^{-\defi_1}}$, $\theta\up 2 = \tilde\Theta\paren{n^{k_2/2 - a-\defi_2}} = \tilde\Theta\paren{\sqrt{N_2/M}n^{-\defi_2}}$. This brings us into the setting of the Spiked Covariance Model (Def.~\ref{def:spiked_cov_and_wigner}), so we apply the tools developed for Theorem 8.2 of \cite{bresler2025computational} directly here in the following claim. 
\begin{claim}\label{claim:assym_wish}
    In the setup above, if $M \gg N_1 N_2$, for $W \in \R^{N_1 \times N_2}$ with i.i.d. $\N(0,1)$ entries,
    $$
    \tv\paren{M^{-1/2} \paren{Y\up 1}^\top Y\up 2, \lambda u\up 1 \paren{u\up 2}^\top + W} \to 0 \qquad \text{as } N,M \to \infty\,,
    $$
    where $\lambda = \sqrt{\theta\up 1 \theta \up 2 M}$.
\end{claim}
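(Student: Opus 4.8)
\textbf{Proof plan for Claim~\ref{claim:assym_wish}.}
The plan is to expand the Gram matrix into a rank-one signal term, two ``cross'' terms, and a Wishart cross term, then control the latter three using (i) the rectangular Wishart CLT of Lemma~\ref{lem:wishart} and (ii) a covariance-perturbation bound between Gaussians; this mirrors the proof of Theorem~8.2 of~\cite{bresler2025computational}, adapted to the rectangular/asymmetric setting. Since $u\up1,u\up2$ and $g$ are deterministic functions of the secret $x$, I would first condition on $x$; then $g\in\set{\pm1}^M$ with $\|g\|_2^2=M$ exactly, $u\up1\in\R^{N_1}$ and $u\up2\in\R^{N_2}$ are fixed unit vectors, and the only randomness is $X\up1,X\up2$ on the left and the fresh $W$ on the right. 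Next, pick any orthogonal $O\in\R^{M\times M}$ whose first column is $g/\sqrt M$; as $O$ is independent of $X\up1,X\up2$, the matrices $\tilde X\up A:=O^\top X\up A$ are again i.i.d.\ standard Gaussian, and writing $r\up A\in\R^{N_A}$ for the (transposed) first row of $\tilde X\up A$ and $\bar X\up A\in\R^{(M-1)\times N_A}$ for the remaining rows, the four objects $r\up1,r\up2,\bar X\up1,\bar X\up2$ are mutually independent with $g^\top X\up A=\sqrt M\,(r\up A)^\top$ and $(X\up1)^\top X\up2=r\up1(r\up2)^\top+(\bar X\up1)^\top\bar X\up2$. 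Substituting $Y\up A=\sqrt{\theta\up A}\,g(u\up A)^\top+X\up A$ and using $\|g\|_2^2=M$ gives exactly
\begin{equation}\label{eq:claim_gram_expansion}
    M^{-1/2}(Y\up1)^\top Y\up2=\lambda\,u\up1(u\up2)^\top+\sqrt{\theta\up1}\,u\up1(r\up2)^\top+\sqrt{\theta\up2}\,r\up1(u\up2)^\top+M^{-1/2}r\up1(r\up2)^\top+M^{-1/2}(\bar X\up1)^\top\bar X\up2\,,
\end{equation}
with $\lambda=\sqrt{\theta\up1\theta\up2 M}$, as claimed.

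Next I would replace the Wishart term by a genuine i.i.d.\ Gaussian. Since $\bar X\up1,\bar X\up2$ are independent standard Gaussian matrices with $M-1\gg N_1N_2$, the $\pr=\pl=1$, $\deltaR=\deltaL=0$ specialization of Lemma~\ref{lem:wishart} (with $d=M-1$) gives $\tv\b(\law((M-1)^{-1/2}(\bar X\up1)^\top\bar X\up2),\ \N(0,I_{N_1\times N_2})\b)\to0$; the $1-O(1/M)$ mismatch between $M^{-1/2}$ and $(M-1)^{-1/2}$ costs only $O(\sqrt{N_1N_2}/M)=o(1)$ in total variation. All the remaining summands of~\eqref{eq:claim_gram_expansion} are obtained from the Wishart term by adjoining the fixed matrix $\lambda u\up1(u\up2)^\top$ and applying a randomized map that uses only the fresh, independent Gaussians $r\up1,r\up2$, so by the data-processing inequality (Fact~\ref{tvfacts}) this replacement is free. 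The small term $M^{-1/2}r\up1(r\up2)^\top$ is then dropped via Lemma~\ref{l:GaussianShift}, with $W$ as the ambient Gaussian (independent of $r\up1,r\up2$): its expected Frobenius norm is $M^{-1/2}\,\E\|r\up1\|_2\,\E\|r\up2\|_2\le\sqrt{N_1N_2/M}\to0$.

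It then remains to bound $\tv\b(\lambda u\up1(u\up2)^\top+\sqrt{\theta\up1}u\up1(r\up2)^\top+\sqrt{\theta\up2}r\up1(u\up2)^\top+W,\ \lambda u\up1(u\up2)^\top+W\b)$. By translation invariance of $\tv$ the deterministic $\lambda u\up1(u\up2)^\top$ cancels, so, marginalizing $r\up1,r\up2,W$, this equals $\tv\b(\N(0,\Sigma),\N(0,I)\b)$ where, in vectorized form, $\Sigma=I+\theta\up1(I_{N_2}\otimes u\up1(u\up1)^\top)+\theta\up2(u\up2(u\up2)^\top\otimes I_{N_1})$. The two rank-one perturbations commute, so $\Sigma$ has eigenvalue $1+\theta\up1+\theta\up2$ once, $1+\theta\up1$ with multiplicity $N_2-1$, $1+\theta\up2$ with multiplicity $N_1-1$, and $1$ otherwise; using $\theta\up A=o(1)$ (a consequence of the constraints $|k_1-k_2|\le k'\le\min\set{k_1,k_2}$ and $k'<(k_1+k_2)/3$), Lemma~\ref{l:KLGauss} and Pinsker give $\tv^2\lesssim\kl(\N(0,\Sigma)\|\N(0,I))\lesssim N_2(\theta\up1)^2+N_1(\theta\up2)^2$. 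Plugging in $N_A=\Thetat(n^{k_A-a})$, $\theta\up A=\Thetat(n^{k_A/2-a-\defi_A})$, and $a=(k_1+k_2-k')/2$ yields $N_2(\theta\up1)^2=\Thetat(n^{(3k'-k_1-k_2)/2-2\defi_1})$ and symmetrically, both of which vanish exactly when $k'<(k_1+k_2)/3$ — equivalently $M\gg N_1N_2$, which is the stated hypothesis. Chaining the three total-variation bounds with the triangle inequality completes the proof.

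I expect the main obstacle (beyond invoking Lemma~\ref{lem:wishart} as a black box) to be the covariance-perturbation step: the point is that one must \emph{not} bound the Frobenius norms of the cross terms directly — that would only give the weaker range $k'<\tfrac12\min\set{k_1,k_2}$ — but rather integrate out the fresh Gaussians $r\up1,r\up2$ first, turning each cross term into a covariance inflation of order $\theta\up A$ per coordinate and hence a $\kl$ contribution of order $(\theta\up A)^2$ rather than $\theta\up A$; this is precisely what buys the sharper threshold $k'<(k_1+k_2)/3$. A secondary point requiring care is the bookkeeping around $\kappa$ (swept into $\Thetat(\cdot)$) and the clean separation of the two independent Gaussian components $r\up A$ and $\bar X\up A$, so that the data-processing step with $r\up1,r\up2$ as auxiliary randomness is legitimate.
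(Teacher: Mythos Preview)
Your approach — rotate to isolate the $g$-direction, expand into signal plus two cross terms plus a bipartite Wishart, apply a Wishart CLT, then absorb the cross terms as a rank-$(N_1{+}N_2)$ covariance perturbation via Pinsker — is exactly what the paper sketches (it simply defers to Theorem~8.2 of \cite{bresler2025computational} and records that the cross terms vanish when $(\theta\up1)^2 N_2,(\theta\up2)^2 N_1=o(1)$). Your eigenvalue computation for $\Sigma$ and the resulting bound $N_2(\theta\up1)^2+N_1(\theta\up2)^2=\Thetat\big(n^{(3k'-k_1-k_2)/2}\big)\to 0$ are correct and match the paper's condition verbatim.

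There is one genuine gap, however: you invoke Lemma~\ref{lem:wishart} with $\pr=\pl=1$, $\deltaR=\deltaL=0$ and assert that $M\gg N_1N_2$ suffices for its hypotheses. It does not: condition~\eqref{eq:cond_e} then reads $N_1^2N_2\ll M$ and $N_1N_2^2\ll M$, strictly stronger than $N_1N_2\ll M$. (The proof of Lemma~\ref{lem:wishart} conditions on one factor and tensorizes $\kl$ over columns, which costs an extra factor of $N_1$ or $N_2$ in the pure-Gaussian specialization.) Concretely, with $k_1=k_2=10$ and $k'=6$ one gets $a=7$, $N_1=N_2\asymp n^3$, $M\asymp n^7$, so $N_1N_2=n^6\ll M$ but $N_1^2N_2=n^9\not\ll M$; your invocation fails here, and more generally your argument only covers $k'<\tfrac12\min\{k_1,k_2\}$ rather than the full $k'<(k_1+k_2)/3$. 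The paper is aware of this and does \emph{not} use its own Lemma~\ref{lem:wishart} for this step — it instead cites Corollary~4.3 of \cite{brennan2021finetti}, the sharp bipartite Wishart CLT at threshold $M\gg N_1N_2$. Swapping that citation in for Lemma~\ref{lem:wishart} fixes your proof.
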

\begin{proof}
    The proof (for both planted and null distributions) follows that of Theorem 8.2 in \cite{bresler2025computational} with the following modifications:
    \begin{enumerate}
        \item We skip the symmetrization step of the original Theorem 8.2.
        \item The key proposition of \cite{brennan2021finetti} that shows Claim~\ref{claim:assym_wish} in the case of $Y\up 1, Y\up 2$ having i.i.d. $\N(0,1)$ entries is stated only for the case of $N_1 = N_2$. As proved in the source paper \cite{brennan2021finetti} (Corollary 4.3), the same result holds for any dimensions $N_1, N_2, M$ satisfying $M \gg N_1 N_2$.
        \item The standard in the field argument for removing random cross-terms will succeed whenever both $\paren{\theta\up 1}^2 N_2$ and $\paren{\theta\up 2}^2 N_1$ are $o(1)$ which again holds for $M \gg N_1 N_2$ and $\theta\up1, \theta\up2$ as in Lemma~\ref{lem:2path_red_ext_gauss}.
        \item Again we allow one of the matrices to be a vector. The proof is not altered.
    \end{enumerate}
\end{proof}

Since $a = (k_1+k_2 - k')/2 > (k_1+k_2)/3$, the condition $M \gg N_1 N_2$ holds, and therefore, Claim~\ref{claim:assym_wish} shows that $Z \sim \lambda u\up 1 \paren{u\up 2}^\top + W$ for $\lambda = \sqrt{\theta\up 1 \theta \up 2 M}$. Therefore $\Zb = \tenso\paren{Z}$ satisfies for all $\alpha \in \binom{\sqb{\kappa n}}{k_1+k_2-2a}$,
$$
\Zt_{\alpha} = \sqrt{\theta\up 1 \theta \up 2 M} N_1^{-1/2}N_2^{-1/2} \cdot x_{\alpha} + W_{\alpha}\,,
$$
where $W_{\alpha}$ are i.i.d. $\N(0,1)$. Finally note that $\sqrt{\theta\up 1 \theta \up 2} M N_1^{-1/2}N_2^{-1/2} = \Theta\paren{n^{-(k_1 + k_2 - 2a)/4}-(\defi_1+\defi_2)/2}$, which is the appropriate signal level for $\gtensp {(k_1+k_2-2a)}{\kappa n} {\eps=1} {\defi_1 + \defi_2}$. This concludes the proof of Lemma~\ref{lem:2path_red_ext_gauss}.

\section{Densifying Reduction: Growing $\eps$}\label{sec:resolution_red}

In this section, we build average-case reductions that increase the $\kxor$ density parameter $\eps \geq 0$. The discrete resolution reduction in \autoref{sec:discrete_resolution} gives a densifying reduction
\begin{equation*}
\text{from}\qquad\kxor\text{ at density }\eps\qquad\text{to}\qquad \kxork {k'}\text{ at density }\eps' = \min\set{1, \frac{2k\eps}{k'}}\,,
\end{equation*}
As discussed in \autoref{subsec:sparse_resolution_discrete}, for a fixed starting density $\eps$, the hardness implications of this reduction are limited by the parameter tradeoff above. In particular, even allowing arbitrary input order $k$ and applying a natural density-adjusting post-processing is not sufficient to obtain Corollary~\ref{cor:sparse_dense_red_conj}, which essentially shows 
$$\conjeps \eps \Rightarrow \conjeps {\eps'} \quad\text{for any}\quad \eps>0\quad\text{and}\quad \eps' \in \b(2\eps/(1-\eps),1\b]\,.$$
The goal of this section is to prove Corollary~\ref{cor:sparse_dense_red_conj}. To do so, we use the resolution reduction above as a subroutine together with a careful integer-approximation step for selecting an appropriate starting order $k$.

\subsection{Result Overview}

Our average-case reductions are in Theorem~\ref{thm:sparse_dense_red} and their immediate implications for hardness conjectures (Conj.~\ref{conj:hardness}) are in Corollary~\ref{cor:sparse_dense_red_conj}.


\begin{theorem}[Densifying Reduction: Growing $\eps$]\label{thm:sparse_dense_red} 
    Let $\kstar \in \Z^{+}, \eps, \eps' \in (0,1]$, and $\defi \geq 0$ be constants, s.t. $$\eps' > 2\eps/(1-\eps)\quad\text{or}\quad\eps<\eps'=1\,.$$ 
    There exists sufficiently large $k=k(\eps, \eps', \defi, \kstar) \in \Z^{+}$ and an average-case reduction\footnote{As defined in Def.~\ref{def:avg_case_points}.} for both detection and recovery (Alg.~\ref{alg:sparse_dense}) 
    $$
    \text{from}\quad \xorp {k} n \eps {\defi}\quad\text{to}\quad\xorp {\kstar} {n'} {\eps'} {\defi'}\,,
    $$ 
    where $\defi' \leq 3\defi$ and $n' = \tilde n^A$ for a known constant $A = A\paren{\eps, \eps', \defi, \kstar} \in \mathbb{Z}^{+}$ and $\tilde n = n(1-o(1))$. Moreover, Alg.~\ref{alg:sparse_dense} maps an instance with a secret signal vector $x\in\set{\pm 1}^n$ to one with a secret vector $x' = \paren{x_{\sqb{1:\tilde n}}}^{\otimes A}\odot y$ for a known $y\sim\unif\paren{\set{\pm1}^{n'}}$\footnote{See Sec.~\ref{subsec:recovery_assumptions} on why this transfers weak/strong/exact recovery algorithms.}. 
\end{theorem}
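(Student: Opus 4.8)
\textbf{Proof proposal for Theorem~\ref{thm:sparse_dense_red}.}

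The plan is to compose four reductions already available in the excerpt, choosing the intermediate tensor order $k$ via a Diophantine approximation argument so that all density adjustments are cheap. First, I would invoke the discrete equation combination reduction (Lemma~\ref{lem:2path_red}) to map $\xorp k n \eps \defi$ to an instance of order $K' \coloneq r\kstar$ for a suitable integer $r$, obtaining density $\widetilde\eps = \min\{1, \tfrac{2k\eps}{K'}\}$ and deficiency $2\defi$; this step requires $K' \leq k(1-\eps)$, which is arranged by taking $k$ large (using $\eps < 1/3$ so that $2\eps/(1-\eps) < 1$, or handling $\eps' = 1$ separately). The key point, as sketched in \autoref{subsec:sparse_resolution_discrete}, is that the na\"ive map $\eps \mapsto \tfrac{2k\eps}{\kstar}$ only reaches a discrete set of target densities with spacing bounded below independently of $k$; to get arbitrarily close to $\eps'$ I must route through a \emph{multiple} $K' = r\kstar$ of the target order. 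Concretely, I would prove (or cite) Claim~\ref{claim:conv}: for any $c > 0$ and all sufficiently small $\xi = \xi(c)$, there exist $k$ and $r \in \mathbb{Z}^{+}$ with $\tfrac{2k\eps}{r\kstar} \in [\eps' \pm \xi]$ and $k \leq c\xi^{-1}$; the proof uses continued fractions / the three-distance theorem to find rationals approximating $\eps'/(2\eps/\kstar)$ with denominator $r$ and numerator controlled so $k$ stays linear in $\xi^{-1}$.

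Second, I would apply the order-reduction reduction of Corollary~\ref{cor:prelim_decrease_k} (via Lemma~\ref{lem:prelim_decrease_k}), decreasing the order from $K' = r\kstar$ to $\kstar$ by the factor $r$; this is the step that forces passage through the $\FULL$ model, so it must be preceded by the ``composition'' step of Prop.~\ref{prop:comp_equiv_full_collection} assembling a $\xorpfull {K'} {} {\widetilde\eps}{2\defi}$ instance from a collection $\{\kxork{(K'-2s)}_{\eps_s, 2\defi}\}_{s}$ with $\eps_s = \min\{\tfrac{K'\widetilde\eps}{K'-2s}, 1\}$. That collection is exactly what Lemma~\ref{lem:2path_red} supplies for every even target order $K'-2s \leq k(1-\eps)$ simultaneously, after an initial splitting step (Lemma~\ref{lem:instance_split}) to create $\Theta(K')$ independent copies of the input. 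The order-reduction divides deficiency by $r$ and raises dimension to $n' = (\tilde n)^{r}$, and it changes the signal to $x' = (x_{[1:\tilde n]})^{\otimes r} \odot y$ for a known random sign vector $y$ — which matches the theorem statement with $A = r$. Third and finally, I would apply the naive entry-count adjustment of Corollary~\ref{cor:m_adjust} ($\DenseRed/\SparseRed$) to move from density $\approx \eps' \pm \xi$ to exactly $\eps'$; this costs at most $\kstar \cdot |\Delta\eps|/2 \leq K'\xi/2 \leq k\xi/2 \leq c/2$ in deficiency. Choosing $c \leq 2\defi$ makes the total output deficiency at most $2\defi + \defi = 3\defi$, as claimed.

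For the $\eps' = 1$ case I would instead use the densifying reduction to Tensor PCA (Cor.~\ref{cor:tensor_pca_summary}, referenced in Theorem~\ref{thm:densifying_general}), which handles any $\eps \in (0,1)$ directly and already outputs $\eps' = 1$; the deficiency bookkeeping there gives $2\defi$, well within the $3\defi$ budget. The bulk of the argument then consists of verifying that the four composed reductions are each average-case reductions in total variation in the sense of Def.~\ref{def:avg_case_points}, so that by the composition property (Lem.~\ref{lem:red_implications_gen}) the result is again such a reduction with $n' = \poly(n)$ and runtime $\poly(n)$; this is routine given the cited lemmas. I expect the main obstacle to be the Diophantine approximation Claim~\ref{claim:conv}: the linear-in-$\xi^{-1}$ bound on $k$ is tight (a bound $k \leq \xi^{-0.99}$ is impossible, and $k \leq c\xi^{-1.01}$ would not close the deficiency accounting), so the continued-fraction argument must be executed carefully — in particular controlling both the denominator $r$ and the extra factor relating $k$ to $r$, while respecting the admissibility constraint $K' = r\kstar \leq k(1-\eps)$ that ties $k$ back to $r$. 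A secondary technical point is ensuring the independence assumptions across the collection of low-order instances produced for Prop.~\ref{prop:comp_equiv_full_collection}, which is handled by the splitting/cloning subroutines of \autoref{subsubsec:cloning_splitting_overview} at negligible ($\poly\log n$, i.e. $o(1)$ in deficiency) cost.
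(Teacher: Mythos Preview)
Your proposal is correct and follows essentially the same approach as the paper: split into independent copies, apply $\TwoPath$ to obtain the collection $\{\kxork{(K'-2s)}\}_s$, $\Compose$ into a $\kxork{K'}^{\FULL}$ instance, then apply $\ReduceK$ by the factor $r = K'/\kstar$, with $k$ chosen via the continued-fraction approximation of Claim~\ref{claim:conv} so that the residual density adjustment costs at most $\defi$ extra deficiency. The only cosmetic difference is that the paper applies the $\DenseRed/\SparseRed$ adjustment \emph{before} $\ReduceK$ (on the order-$K'$ $\FULL$ instance) rather than after; since $\ReduceK$ divides deficiency by $r$ and the adjustment cost scales with the current order, both placements give the same final bound, and your accounting is sound. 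For the $\eps'=1$ case the paper handles it within the same algorithm by choosing $a=k-\kstar$ so that $\hat\eps$ saturates to $1$ directly (no Diophantine step needed), whereas you defer to Cor.~\ref{cor:tensor_pca_summary}; either route works.
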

\begin{corollary}[Densifying Reduction (Growing $\eps$), Corollary of Theorem~\ref{thm:sparse_dense_red}]\label{cor:sparse_dense_red_conj}
    For any $\eps, \eps' \in (0,1]$ satisfying $\eps' > 2\eps/(1-\eps)$ or $\eps < \eps'=1$ 
    and every $k' \geq 2$, there exists some $k \geq k'$ such that 
    $$\conj k \eps \Rightarrow \conj {k'} {\eps'}\,.$$
    In particular, for $\eps,\eps'$ as above, 
    $$\conjeps \eps \Rightarrow \conjeps {\eps'}\,.$$
\end{corollary}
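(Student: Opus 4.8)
This corollary is the hardness–conjecture translation of Theorem~\ref{thm:sparse_dense_red} through Prop.~\ref{prop:hardness_implication}: given a target deficiency $\defi' > 0$, the plan is to apply Theorem~\ref{thm:sparse_dense_red} with input deficiency $\defi := \defi'/3$, obtaining a poly‑time average‑case reduction from $\xorp k n \eps \defi$ (at the order $k = k(\eps,\eps',\defi',k')$ the theorem supplies) to $\xorp {k'} {n'} {\eps'} {\defi''}$ with $\defi'' \leq 3\defi = \defi'$ and $n' = \poly(n)$; multiplying each output label by an independent biased Rademacher (a trivial average‑case reduction that sends the null to the null) then raises $\defi''$ to exactly $\defi'$. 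Because the required input order $k$ grows as $\defi' \downarrow 0$, the conclusion can only be stated at the level of $\conjeps\eps$, and the ``in particular'' part then follows by quantifying over all $k$. The substance is thus Theorem~\ref{thm:sparse_dense_red} itself, which I plan to prove as follows.

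\textbf{Strategy.} The only nontrivial primitive is discrete equation combination (Lemma~\ref{lem:2path_red}), which maps $\xorp k n \eps \defi$ to $\xorp {k^\star} {\tilde n} {\min\{1,\,2k\eps/k^\star\}} {2\defi}$ for every even $k^\star \leq k(1-\eps)$, at the price of $\tilde n = n(1-o(1))$ indices. Iterating it only reaches the lattice of densities $\{2k\eps/k^\star\}$, so the plan is: (i) choose the \emph{input} order $k$ by a Diophantine approximation so that, for a suitable integer multiple $K' = rk'$ of the target order (with $K'$ even, doubling $r$ if $k'$ is odd), the combined density $2k\eps/K'$ lies within $\xi$ of $\eps'$; (ii) run equation combination in parallel at all orders $K', K'-2, \dots$ to assemble a single $\kxor^\FULL$ instance of order $K'$; (iii) collapse the order by the factor $r$ via the FULL‑model reduction $\ReduceK$ (Lemma~\ref{lem:prelim_decrease_k}, Cor.~\ref{cor:prelim_decrease_k}), which preserves density; (iv) move the density from its approximate value to exactly $\eps'$ using the cheap post‑processing $\SparseRed$ (Cor.~\ref{cor:m_adjust}).

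\textbf{Steps.} First fix $c := 2\defi$; by Claim~\ref{claim:conv} there is, for $\xi = \xi(c)$ small enough, a pair $(k,r)$ with $\tfrac{2k\eps}{rk'} \in [\eps' \pm \xi]$ and $k \leq c\xi^{-1}$ (and $k \geq k'$ once $\xi$ is small). Writing $K' := rk'$, the hypothesis $\eps' > 2\eps/(1-\eps)$ forces $\tfrac{2\eps}{\eps'} < 1-\eps$, hence $K' \approx \tfrac{2\eps}{\eps'}k < k$ with enough slack to absorb $\xi$ and the $o(1)$ index loss, so $K' \leq k(1-\eps)$ and Lemma~\ref{lem:2path_red} is usable at every order $K'-2s$; set $\tilde\eps := \min\{1,\,2k\eps/K'\} \in [\eps' \pm \xi]$. (When $\eps' = 1$ no approximation is needed: take $\tilde\eps := 1$ and any $r$ with $K' = rk'$ even and $\leq k(1-\eps)$.) Next, by recursively splitting the input (Lemma~\ref{lem:instance_split}, at $o(1)$ deficiency cost) obtain $\approx K'/2$ independent copies of $\xorp k n \eps \defi$ on a common index prefix of size $\tilde n$, all with secret $x_{[1:\tilde n]}$. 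Apply Lemma~\ref{lem:2path_red} to the $s$‑th copy with target order $K'-2s$, producing $\xorp {(K'-2s)} {\tilde n} {\tilde\eps_s} {2\defi}$ with $\tilde\eps_s = \min\{1,\,2k\eps/(K'-2s)\} = \min\{1,\,K'\tilde\eps/(K'-2s)\}$ — exactly the density profile Prop.~\ref{prop:comp_equiv_full_collection} demands. Glue this collection into one $\xorpfull {K'} {\tilde n} {\tilde\eps} {2\defi}$ via the $\Compose$ direction of Prop.~\ref{prop:comp_equiv_full_collection}; move the density $\tilde\eps \to \eps'$ at order $K'$ with Cor.~\ref{cor:m_adjust}, at deficiency cost $K'|\tilde\eps - \eps'|/2 \leq K'\xi/2 \leq k\xi/2 \leq c/2 = \defi$; and finally apply Cor.~\ref{cor:prelim_decrease_k} with factor $a = r$ to pass from $\kxor^\FULL$ of order $K' = rk'$ to $\xorp {k'} {(\tilde n)^r} {\eps'} {\defi'}$ with $\defi' \leq (2\defi+\defi)/r \leq 3\defi$ and secret $(x_{[1:\tilde n]})^{\otimes r} \odot y$ for a known uniform $y$, so $A = r$ and $n' = (\tilde n)^A$. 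The $\poly\log n$ slack (Prop.~\ref{prop:map_up_to_constant}) absorbs all constants, and every step above is simultaneously a detection reduction, so the null chain tracks in parallel.

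\textbf{Main obstacle.} The crux — and the reason a naive iterated combination fails — is quantitative: one cannot reduce directly to the target order $k'$, since that pins the density to the lattice $\{2k\eps/k'\}$ whose spacing is $\Omega(1/k)$ regardless of how large $k$ is; one must instead target a multiple $K' = rk'$, reach it through the $\kxor^\FULL$ detour, and collapse the order — and for the concluding density nudge $\tilde\eps \to \eps'$ to stay inside the budget $\defi' \leq 3\defi$ one needs the \emph{precise} bound $k \leq c\xi^{-1}$ of Claim~\ref{claim:conv} (even $k \leq c\xi^{-1.01}$ would already fail). Granting Claim~\ref{claim:conv}, what remains is careful bookkeeping: checking $K' \leq k(1-\eps)$ after absorbing $\xi$ and the $o(1)$ index loss (the one place $\eps' > 2\eps/(1-\eps)$ is used), verifying that the densities produced at orders $K'-2s$ match the profile Prop.~\ref{prop:comp_equiv_full_collection} requires, and confirming that the chain — split, then combine (which doubles the deficiency), then compose, then density‑adjust (which adds at most $\defi$), then collapse the order (which divides the deficiency by $r$) — has total deficiency $\leq 3\defi$ with every intermediate instance still in the conjectured‑hard regime $\defi > 0$.
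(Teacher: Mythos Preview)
Your proposal is correct and matches the paper's approach exactly: the corollary is the hardness translation of Theorem~\ref{thm:sparse_dense_red} through Prop.~\ref{prop:hardness_implication}, and your outline of the theorem's proof---Diophantine choice of $k$ via Claim~\ref{claim:conv}, parallel $\TwoPath$ at orders $K', K'-2,\dots$, $\Compose$ into a single $\kxor^\FULL$ instance of order $K'=rk'$, density adjustment via Cor.~\ref{cor:m_adjust}, then $\ReduceK$ by the factor $r$---is precisely Algorithm~\ref{alg:sparse_dense} and its analysis in \S\ref{sec:proof_of_sparse_dense}, down to the deficiency bookkeeping $2\defi + K'\xi/2 \le 3\defi$. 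Your biased-Rademacher postprocessing to land on $\defi'$ exactly, and your explicit check that the densities $\tilde\eps_s$ at orders $K'-2s$ match the profile $\Compose$ requires, are clean completions the paper leaves implicit.

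Your caveat that the input order $k$ grows as $\defi'\downarrow 0$ (so that, for $\eps'<1$, only the $\conjeps$-level implication is fully justified) is a real subtlety that the paper does not address separately: in the paper's $\eps'=1$ case the chosen $k>\max\{k'/\eps,\,2k'/(1-\eps)\}$ is independent of $\defi$, so the first clause is unproblematic there, but for intermediate $\eps'$ the paper's own proof, like yours, only delivers the ``in particular'' part.
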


\paragraph{Theorem~\ref{thm:sparse_dense_red} proof overview.}
The key reduction subroutine of Alg.~\ref{alg:sparse_dense} is $\TwoPath$ (Alg.~\ref{alg:two_path}), for which the following guarantee is proved in \autoref{sec:discrete_resolution}.

\begin{mylem}{\ref{lem:2path_red}}\textnormal{(Discrete Resolution Reduction)}
    Fix $k,n \in \Z^{+}, \eps \in [0,1], \defi \in \R$. For any even integer $\kstar$, s.t. 
    $$\kstar \leq k\paren{1-\eps}\,,$$
    there exists an average-case reduction for both detection and recovery (Alg.~\ref{alg:two_path}) $$\text{from}\quad\xorp k n \eps \defi\quad\text{to}\quad\xorp {\kstar} {n'} {\eps'} {2\defi}\,,$$ where $\eps' = \min\set{\frac{2k}{\kstar}\eps, 1}$. Moreover, Alg.~\ref{alg:two_path} maps an instance with a secret signal vector $x\in \set{\pm 1}^n$ to one with the signal vector $x' = x_{[1:n']}$ for $n' = n(1- o(1))$.\footnote{The algorithm can set $x'$ to have any $n'$-coordinate subset of $x$; the algorithm succeeds for any $n' = n\paren{1 - (\poly\log n)^{-1}}$.}
\end{mylem}

For the case of $\kstar \geq 2k\eps$, the reduction of Lemma~\ref{lem:2path_red} shows a map from an instance with parameter tuple $\paren{k, \eps, \defi}$ to an instance with parameter tuple $\paren{\kstar, \eps' = \frac{2k}{\kstar}\eps, 2\defi}$ for an appropriate range of even $\kstar$. In particular, given an instance of a $\xorp k n \eps \defi$, one could map to an instance with 
$$\eps' \in \set{ \frac{2k\eps}{ k_{\min}}, \frac{2k\eps}{ k_{\min}+2},\dots,\frac{2k\eps}{k_{\max} } } \subset \B( \frac{2\eps}{1-\eps}, 1\B]\,,$$
where $$k_{\min} = \min_{\kstar\in \mathbb{Z}} \set{\kstar \text{ even}: \kstar \geq 2k\eps}\quad\text{and}\quad k_{\max} = \max_{\kstar\in \mathbb{Z}} \set{\kstar \text{ even}: \kstar < k(1-\eps)}\,.$$ 
The main technical challenge of Theorem~\ref{thm:sparse_dense_red} is to apply the reduction of Lemma~\ref{lem:2path_red} with an appropriate parameter $k$, and map to \emph{any} $\eps' \in \B( \frac{2\eps}{1-\eps}, 1\B]$.
This requires an intricate integer approximation step presented in \autoref{sec:proof_of_sparse_dense}. The guarantees for subroutines $\DenseRed, \SparseRed$ (Alg.~\ref{alg:dense_red}, \ref{alg:sparse_red}) are in \autoref{subsec:entry_adjust} and for $\ReduceK$ (Alg.~\ref{alg:prelim_decrease_k}) in \autoref{subsec:decrease_k_prelim}.

\begin{algorithm}\SetAlgoLined\SetAlgoVlined\DontPrintSemicolon

    \KwIn{$\calY = \set{ \b( \alpha_j, \Y_{j}\b)}_{j\in\sqb{m}}$, where $\forall j \in\sqb{m}$, $\b(\alpha_j, \Y_{j}\b) \in \set{n}^{k} \times \set{\pm1}$, $a, \kstar \in \Z, \eps, \eps' \in [0,1]$
        }
    \KwOut{$\calZ = \set{ \b( \alpha_j', \Zt_{j}\b)}_{j\in\sqb{m'}}$, where $\forall j \in\sqb{m'}$, $\b(\alpha_j', \Zt_{j}\b) \in \set{n'}^{k'} \times \set{\pm1}$.
        }

    \BlankLine 
    $\hat \eps \gets \min\set{k \eps / (k-a), 1}$

    \BlankLine
    $\calY\up1, \dots,\calY\up {k-a} \gets \XorSplit\paren{\calY, k-a}$ \ \tcp{\autoref{subsubsec:splitting}, see Rmk.~\ref{rmk:clone_several}}
    
    \ForAll{$i \in \sqb{k-a}$}{
        $\calY^{\hat \eps}_{2i} \gets \TwoPath\paren{\calY\up i, 2i}$ \ \tcp{Alg.~\ref{alg:two_path}}
    }

    \BlankLine
    \tcp{Combine the $k-a$ instances into one full instance with noise parameter $\hat\eps$}
    $\calY^{\hat\eps} \gets \Compose\paren{\set{\calY^{\hat \eps}_{2i}}_{i\in\sqb{k-a}}, \hat\eps}$ \ \tcp{Alg.~\ref{alg:compose_full}, see Rmk.~\ref{rmk:to_full_discrete_input}}

    \BlankLine
    \tcp{Adjust the density parameter $\eps$}
    \If {$\hat \eps \leq \eps'$}
    {
        $\calY^{\eps'} \gets \DenseRed\paren{\calY^{\hat \eps}, n^{k(\hat\eps+1)/2}, n^{k(\eps'+1)/2}}$ \ \tcp{Alg.~\ref{alg:dense_red}}
    }\Else{
        $\calY^{\eps'} \gets \SparseRed\paren{\calY^{\hat \eps}, n^{k(\hat\eps+1)/2}, n^{k(\eps'+1)/2}}$ \ \tcp{Alg.~\ref{alg:sparse_red}}
    }

    \BlankLine
    \tcp{Map to the desired order $k'$}
    $\calZ \gets \ReduceK\paren{\calY^{\eps'}, (2k-2a)/\kstar}$ \ \tcp{Alg.~\ref{alg:prelim_decrease_k}}

    \BlankLine
    Return $\calZ$.
    
    \caption{$\SparseToDense\paren{\calY, a, \kstar, \eps, \eps'}$}
\label{alg:sparse_dense}
\end{algorithm}

\subsection{Proof of Theorem~\ref{thm:sparse_dense_red}}\label{sec:proof_of_sparse_dense}

For a given target order $\kstar$, densities $\eps, \eps' \in \sqb{0,1}$, and deficiency $\defi > 0$, the reduction algorithm for Theorem~\ref{thm:sparse_dense_red} is Algorithm~\ref{alg:sparse_dense} with carefully chosen parameters $a$ and $k$. Alg.~\ref{alg:sparse_dense} is a sequential composition of several average-case reductions and in this proof we verify that every reduction call is (1) valid and (2) maps the $\kxor$ parameters appropriately.

\subsubsection{Preliminary Lemmas}

In addition to Lemma~\ref{lem:2path_red}, which shows a guarantee on the $\TwoPath$ (Alg.~\ref{alg:two_path}) subroutine, we restate the guarantees for the remaining parts of Alg.~\ref{alg:sparse_dense} here for convenience. 

The guarantee in Cor.~\ref{cor:m_adjust} for $\DenseRed$, $\SparseRed$ and Cor.~\ref{cor:prelim_decrease_k} for $\ReduceK$ are stated for all $\kxor$ variants in \autoref{subsec:entry_adjust} and \autoref{subsec:decrease_k_prelim}; we restate these for $\kxorfull$.
\begin{mycor}{\ref{cor:m_adjust}}\textnormal{(Guarantees for $\DenseRed$ (Alg.~\ref{alg:dense_red}) and $\SparseRed$ (Alg.~\ref{alg:sparse_red}))}
Let $k,n \in \Z^{+}, \eps, \eps' \in [0,1], \defi \in \R^{+}$ be the $\kxor$ parameters. There is an average-case reduction for both detection and recovery ($\DenseRed$ if $\eps \leq \eps'$ and $\SparseRed$ otherwise) 
    \begin{align*}\text{from}\quad\xorpfull k n \eps \defi\quad\text{to}\quad\xorpfull {k} n {\eps'} {\defi'}\,, \quad \text{where } \defi' = \defi + {k|\eps-\eps'|/2}\,.
    \end{align*}
    Moreover, Alg.~\ref{alg:dense_red} and Alg.~\ref{alg:sparse_red} map an instance with a secret signal vector $x\in\set{\pm1}^n$ to one with the same signal vector $x$.
\end{mycor}

\begin{mycor}{\ref{cor:prelim_decrease_k}}\textnormal{(of Lemma~\ref{lem:prelim_decrease_k}: Guarantees of $\ReduceK$)}
    Let $k,n \in \Z^{+}, \eps \in [0,1], \defi \in \R^{+}$ be the $\kxor$ parameters. For every $a \in \Z^{+}, a | k$ there is an average-case reduction for both detection and recovery $$\text{from}\quad\xorpfull k n \eps \defi\quad\text{to}\quad\xorp {k'} {n'} {\eps} {\defi/a}\,,\quad n' = n^a\,,$$
    for all $k' = k/a, k/a-2, k/a-4,\dots$.
    Moreover, the reduction maps an instance with a secret signal vector $x\in \set{\pm 1}^n$ to one with a signal vector $x' = x^{\otimes a}\odot y$ for a known $y\sim \unif\paren{\set{\pm 1}^{n'}}$. From the discrete-Gaussian equivalence (Prop.~\ref{prop:discr_gauss_equivalence_general}), the analogous results hold for $\ktens$ models. 
\end{mycor}

Prop.~\ref{prop:comp_equiv_full_collection} relating $\kxor^\FULL$ and a collection of low-order $\kxor$ is stated for a general model variant in \autoref{subsubsec:full_asymm_kxor_red} and proved in \autoref{subsec:repeated_indexes}; we restate one of the directions for the standard $\kxor$ here. 
\begin{myprop}{\ref{prop:comp_equiv_full_collection}}\textnormal{(Reduction from a Collection of $\kxor$ to $\kxorfull$: Guarantees for $\Compose$)}
    Fix $k\in\mathbb{Z}, \eps\in\sqb{0,1}, \defi\in \R_{+}$. There exists a poly-time average-case reduction\footnote{As defined in def.~\ref{def:avg_case_points}} (Alg.~\ref{alg:compose_full})
$$
\text{from}\quad \B\{ \xorp {(k-2r)} n {\eps_r} \defi \B\}_{0\leq r \leq \frac{k-1}{2}} \quad\text{to}\quad \xorpfull k n {\eps} \defi\qquad \text{with } \eps_r = \min\set{ \frac{k\eps}{k-2r} , 1}\,,
$$
where the input instances share the same signal vector $x\in\set{\pm1}^n$, which is preserved exactly in the reduction.
\end{myprop}

\subsubsection{Special Case: $0 < \eps < \eps' = 1$}

For any $\kstar, \defi$ and $0 < \eps < \eps' = 1$, it is sufficient to choose any $k > \max\set{\frac \kstar \eps, \frac{2\kstar}{1-\eps}}$ and $a = k - \kstar$ as inputs to Alg.~\ref{alg:sparse_dense}. In particular, let $k, a$ be as above and $\calY \sim \xorp k n \eps \defi$ be the input instance to Alg.~\ref{alg:sparse_dense} of order $k$. We show that the map $$\calY \to \calZ = \SparseToDense\paren{\calY, a, \kstar, \eps, \eps'}$$
is an average-case reduction for both detection and recovery to $\xorp {\kstar} {n'} {\eps'} {\defi'}$, where parameters $n', \defi'$ satisfy the requirements of Theorem~\ref{thm:sparse_dense_red}.

By the guarantees for $\XorSplit$ in Lemma~\ref{lem:instance_split}, guarantees for $\TwoPath$ in Lemma~\ref{lem:2path_red}, and guarantees for $\Compose$ in Prop.~\ref{prop:comp_equiv_full_collection}, Lines 1-5 of Alg.~\ref{alg:sparse_dense} make an average-case reduction for both detection and recovery $$\text{from}\quad\xorp k n \eps \defi\quad\text{to}\quad\xorpfull {2\kstar} {\tilde n} {\hat \eps} {2\defi}\,,$$ where $\tilde n = n(1-o(1))$ and $\hat\eps = \min\set{\frac{2k\eps}{2k'}, 1} = 1$. In this case, Lines 6-9 of Alg.~\ref{alg:sparse_dense} keep the instance $\calY^{\hat \eps}$ unchanged. By guarantees for $\ReduceK$ in Lem.~\ref{lem:prelim_decrease_k}, the final step of the algorithm is an average-case reduction 
$$\text{from}\quad\xorpfull {2\kstar} {\tilde n} {\hat \eps=1} {2\defi}\quad\text{to}\quad \xorpfull {\kstar} {\tilde n^2} {\hat \eps=1} {\defi}\,,$$
which concludes the proof.

\subsubsection{Proof for $0 < \eps < \eps' < 1$}

For any fixed parameters $\kstar, \eps, \eps', \defi$, we identify parameters $k, a$ such that Algorithm~\ref{alg:sparse_dense} is an average-case reduction 
$$
    \text{from}\quad \xorp {k} n \eps {\defi}\quad\text{to}\quad\xorp {\kstar} {n'} {\eps'} {\defi'}\,,
$$ 
where $n', \defi'$ satisfy the conditions of Theorem~\ref{thm:sparse_dense_red}. The key technical challenge of the following proof is the appropriate selection of the input parameters $k, a$.

    \paragraph{Step 1: mapping $\eps\to \hat \eps$ in Lines 1-5.} By the guarantees for $\XorSplit$ in Lemma~\ref{lem:instance_split}, guarantees for $\TwoPath$ in Lemma~\ref{lem:2path_red}, and guarantees for $\Compose$ in Prop.~\ref{prop:comp_equiv_full_collection}, Lines 1-5 of Alg.~\ref{alg:sparse_dense} make an average-case reduction for both detection and recovery $$\text{from}\quad\xorp k n \eps \defi\quad\text{to}\quad\xorpfull {2(k-a)} {\tilde n} {\hat \eps} {2\defi}\,,$$
    where $\tilde n = n(1-o(1))$ and $\hat\eps = \min\set{\frac{2k\eps}{2(k-a)}, 1}$ if the conditions of Lemma~\ref{lem:2path_red} are satisfied. Namely, the reduction succeeds for any $k$ and $a \in \b[\frac{k(1+\eps)}{2}, k\b)$ an integer. Set 
    \begin{equation}\label{eq:choice_a}
    a \coloneq \lfloor k(1 - \eps / \eps') \rceil\,.\end{equation}
     where recall that $\lfloor \cdot \rceil$ denotes rounding to the nearest integer and $k$ is a constant to be chosen later.
    Note that, since $\eps' > \frac {2\eps}{1-\eps}$, for sufficiently large constant $k$, $k > a \geq k\paren{1 - \frac{\eps} {\eps'}} - 1\geq \frac{k(1+\eps)}{2}$, i.e., the choice of $a$ in Eq.~\eqref{eq:choice_a} satisfies the conditions of Lemma~\ref{lem:2path_red}. Denote $$\psi \coloneq a - k(1 - \eps / \eps') = \begin{cases}
        \set{k\eps/\eps'}, &\set{k\eps/\eps'} \leq 1/2\\
        \set{k\eps/\eps'}-1, &\set{k\eps/\eps'} > 1/2
    \end{cases}\,,$$
    where $\lfloor \cdot \rfloor, \set{\cdot}$ are the integer and fractional parts of a number. Then, for sufficiently large $k \geq \frac{\eps'}{\eps(1-\eps')}$, $\frac{k\eps}{k-a} \leq 1$, so 
    \begin{align*}
        \hat \eps = \frac{k\eps}{k-a} = \frac{k\eps}{k - k(1 - \eps / \eps')  - \psi} = \frac{k\eps}{k\eps / \eps' - \psi} = \frac{k\eps\eps'}{k\eps - \psi \eps'} = \eps' + \frac{\psi (\eps')^2}{k\eps - \psi \eps'} =: \eps' + \Delta\,.
    \end{align*}
    Our goal is to map to an instance with the sparsity parameter $\eps'$, so we want $\hat \eps \in \sqb{\eps'\pm \xi}$, i.e. $\Delta \in [\pm\xi]$, for a small parameter $\xi>0$ to be chosen later. In order for $|\Delta| \leq \xi$, it is sufficient to take $k$ such that 
    \begin{equation}\label{eq:k_condition}
        \frac k {|\psi|} > \max \set{\frac {\eps'} \eps, \frac {\eps' (\eps' + \xi)}{\eps \xi}} = \frac {\eps' (\eps' + \xi)}{\eps \xi}\,.
    \end{equation}
    The following Claim shows existence of $k$ satisfying \eqref{eq:k_condition} and additional properties to be used later. It is proved in \autoref{subsec:proof_claim_conv}.
    \begin{claim}\label{claim:conv}
        Fix $c, r \in (0,1), \kstar\in \Z^{+}$. There exists a constant $A(c, r, \kstar)$, such that for every $A \geq A(c, r, \kstar)$, there exists $k \in \Z^{+}$ such that $k \leq c A$, $\lfloor k r \rceil$ is a multiple of $\kstar$, and 
        \begin{equation}\label{eq:approx_bound_0}
            \frac{k}{\|kr\|} > A\,,
        \end{equation}
        where $\|x\| = \b| x - \lfloor x \rceil \b|$.

    \end{claim}

    We now use Claim~\ref{claim:conv} to find a good $k$ that satisfies \eqref{eq:k_condition}. Let $c = \paren{2\eps'}^{-1} \defi$ be a constant and apply Claim~\ref{claim:conv} for $c, r = \eps/\eps', \kstar$. This demonstrates that for $\eps' < 1$, there exists a constant $A\paren{c, \eps/\eps', \kstar}$ such that for any $A > A\paren{c, \eps/\eps', \kstar}$ the result of the claim holds.
    In particular, there exists $k \leq c A$ such that $\lfloor k \eps / \eps' \rceil$ is a multiple of $\kstar$ and $$\frac{k}{\|k \eps/\eps'\|} > A = \frac{\eps' (\eps' + \xi)}{\eps \xi}\,,$$
    for $\xi = \xi(A)$ chosen appropriately. Notice that $\|k \eps/\eps'\| = |k \eps/\eps' - \lfloor k \eps/\eps' \rceil| = |\psi|$. As described above, this implies that $\hat{\eps} \in [\eps' \pm \xi(A)]$ for this choice of $k = k(A)$. 
    
    Denote $\xi(c, \eps/\eps', \kstar) = \xi\paren{A(c, \eps/\eps', \kstar)}$, and notice that as $A\to \infty$, $\xi(A) \to 0$. Then, for any constant $c$ and any $\xi \leq \xi(c, \eps/\eps', \kstar)$, we have found $k \in \Z^{+}$ such that:
    \begin{enumerate}
        \item $k \leq c A = c \cdot \frac{\eps' (\eps' + \xi)}{\eps \xi}$;
        \item $\lfloor k \eps / \eps' \rceil = k - \lfloor k (1 - \eps / \eps') \rceil = k-a$ is a multiple of $\kstar$;
        \item lines 1-5 of Alg.~\ref{alg:sparse_dense} make an average-case reduction from $\xorp k n \eps \defi$ to $\xorpfull {2(k-a)} {\tilde n} {\hat \eps} {2\defi}$ for $\hat \eps \in \sqb{\eps' \pm \xi}$ and $a = \lfloor k(1-\eps / \eps')\rceil$.
    \end{enumerate}

    \paragraph{Step 2: mapping $\hat\eps \to \eps'$ in Lines 6-9.} By the guarantees for $\DenseRed, \SparseRed$ in Cor.~\ref{cor:m_adjust}, Lines 6-9 make an average-case reduction 
    $$
    \text{from}\quad \xorpfull {2(k-a)} {\tilde n} {\hat \eps} {2\defi} \quad\text{to}\quad \xorpfull {2(k-a)} {\tilde n} {\eps'} {\defi'}\,,
    $$
    where 
    \begin{align*}
    \defi' = 2\defi + (2k-2a) |\hat \eps - \eps'|/2 &\leq 2\defi + {(k-a)\xi} \\
    &\leq 2\defi+ \paren{k \eps / \eps'+1} \xi \\
    &\leq 2\defi + \xi + c \cdot \frac{\eps'(\eps'+\xi)}{\eps \xi}\cdot \frac{\eps} {\eps'} \cdot \xi \\
    &= 2\defi + {c(\eps' + \xi)+\xi}\\
    &\leq 2\defi + {2c \eps'} \leq 3\defi\,,
    \end{align*}
    for sufficiently small $\xi$ and $c = \paren{2\eps'}^{-1} \defi$.

    \paragraph{Step 3: mapping $(2k-2a) \to \kstar$ in Line 10.} The final step is the $\ReduceK$ reduction that, by guarantees in Lem.~\ref{lem:prelim_decrease_k}, is an average-case reduction 
    $$
    \text{from}\quad \xorpfull {2(k-a)} {\tilde n} {\eps'} {\defi'} \quad\text{to}\quad \xorp {\kstar} {\paren{\tilde n}^{2(k-a)/\kstar}} {\eps'} {\defi''}\,,
    $$
    where $\defi'' = \defi' \kstar / (k-a)/ 2 \leq \defi'$. This is a valid reduction, since Claim~\ref{claim:conv} guarantees that $(k-a)$ is a multiple of $\kstar$. This concludes the proof of Theorem~\ref{thm:sparse_dense_red}.
\qed


\section{Reductions to Tensor PCA}\label{sec:to_tensor_pca}

In this section we summarize the reductions that map a $\kxor$ or $\ktens$ instance with an arbitrary density $\eps\in (0,1]$ to a Tensor PCA instance, i.e. an instance of $\kxor$ or $\ktens$ with $\eps = 1$. We compare two approaches: Discrete Resolution ($\DiscrEqComb$, \autoref{sec:discrete_resolution}) and Gaussian Resolution ($\GaussEqComb$, \autoref{sec:gauss_resolution}). Both methods apply to both discrete $\kxor$ and Gaussian $\ktens$ settings, possibly after a preprocessing step that switches between distributions (\autoref{subsec:gaussianize}, \autoref{subsec:discretize}). For example, to apply $\DiscrEqComb$ to a $\ktens$ instance, one can first discretize the input (Alg.~\ref{alg:discretization}) the input, then apply the $\DiscrEqComb$ reduction, and finally gaussianize the output (Alg.~\ref{alg:gaussianization}) to obtain an instance of $\ktensk {k'}$. For simplicity, we state results for $\ktens$ here; the analogous results for $\kxor$ follow similarly.

\paragraph{Tensor PCA model variants.} In our Tensor PCA model $\gtensp k n {\eps = 1} {\defi}$ (Def.~\ref{def:kxor_family}) we generate $m = n^{k}$ noisy equations of the form $$\Y = \delta x_{\alpha} + \N(0,1)$$ for independent $\alpha\sim\unif\sqb{\binomset n k}$ and $\delta = n^{-k\eps/4 - \defi/2}$. 

A closely related Tensor PCA model is $\ktens^\wor_{\eps=1, \defi} (n)$ (see \autoref{subsec:kxor_variants}), which is computationally equivalent to $\gtensp k n {\eps = 1} {\defi}$ (Prop.~\ref{prop:sampling_equiv_eps1}), and therefore inherits all results below. The difference is purely in sampling procedure: $\ktens^\wor_{\eps=1, \defi} (n)$ reveals exactly $M = \binom n k$ observations -- one for each index subset $\alpha \in \binomset n k$, so it contains all distinct-index entries of 
\begin{equation}\label{eq:tpca_sec11}
\delta\cdot x^{\otimes k} + \noiseG\,,\tag{TensorPCA}
\end{equation}
where $\noiseG$ has i.i.d. $\N(0,1)$ entries. 

Finally, \autoref{sec:intro} defined $\tensorpca k n \delta$ model (equiv. $\tensorpcap k n \defi$), which reveals \emph{all} entries in Eq.~\eqref{eq:tpca_sec11}, including those with repeated indices. In \autoref{subsubsec:full_asymm_kxor_red} we show that $\tensorpcap k n \defi$ is computationally equivalent to a collection of independent $\ktens^\wor_{\eps=1, \defi} (n)$ instances of orders $k, k-2, k-4, \dots$ (Prop.~\ref{prop:comp_equiv_full_collection}), and our reductions can be adapted to output this collection, see \autoref{subsec:full_ideas} for intuition. 

Consequently, any reduction stated below to $\gtensp {k'} {n'} {\eps'=1} {\defi'}$ immediately yields analogous reduction to $\ktensk {k'}^\wor_{\eps'=1, \defi'} (n')$ or $\tensorpcap {k'} {n'} {\defi'}$, without changing the parameters.

\paragraph{Reductions to Tensor PCA.}
The $\DiscrEqComb$ reduction in \autoref{sec:discrete_resolution} as a special case builds a reduction to Tensor PCA in the following regime. (Lemma~\ref{lem:2path_red} in \autoref{sec:discrete_resolution} is stated for the discrete $\kxor$, we restate here the analogous result for $\ktens$, which holds by discrete-Gaussian equivalence in Prop.~\ref{prop:discr_gauss_equivalence_general}).
\begin{corollary}[of Lemma~\ref{lem:2path_red}: Discrete Resolution Reduction]\label{cor:tensor_from_resolution}
    Let $k \in \mathbb{Z}^{+}$, $\eps \in (0, 1]$, and $\defi \in \R$ be $\ktens$ parameters. If $k'$ is even and satisfies 
    \begin{equation}\label{eq:tensor_from_resolution_condition}
        k' \leq \min\set{ k\paren{1-\eps}, 2k\eps}\,,
    \end{equation}
    then there exists a poly-time average-case reduction for both detection and recovery 
    $$
    \text{from}\quad\gtensp k n \eps \defi \quad\text{to}\quad \gtensp {k'} {n} {\eps'=1} {2\defi}\,.
    $$
    Consequently,\footnote{Since the reduction in Lemma~\ref{lem:2path_red} only maps to even $k'$, an additional step of $k' \to k'/2$ reduction in \autoref{subsec:decrease_k_prelim} is required; it is applicable since reduction is Lemma~\ref{lem:2path_red} can be modified to map to a full $\ktensk {k'}$ instance (see \autoref{subsec:kxor_variants}). Formally, this can be achieved by first splitting the input into $k'/2$ independent instances via $\kTensSplit$ and then applying the reduction in Lemma~\ref{lem:2path_red} to every copy, mapping to instances of orders $k', k'-2,k'-4,\dots$. It remains to apply $\Compose$ (Alg.~\ref{alg:compose_full}) to obtain a full $\ktensk {k'}$ instance.} for any $\eps \in \paren{0, 1}$, 
    $$
    \conjeps \eps \Rightarrow \conjeps 1\,.
    $$
\end{corollary}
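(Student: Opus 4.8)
The plan is to read the reduction straight off Lemma~\ref{lem:2path_red} (discrete equation combination): take its target order to be $k'$, and observe that the two clauses of~\eqref{eq:tensor_from_resolution_condition} are exactly (i) the admissibility hypothesis of Lemma~\ref{lem:2path_red} and (ii) the condition that the output density equals $1$. For the hardness consequence $\conjeps\eps\Rightarrow\conjeps 1$ the only extra work is to choose the input order $k$ large enough (which is where $\eps\in(0,1)$ is used) and, in order to reach \emph{odd} target orders, to detour through the $\kxor^\FULL$ model.

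\emph{The reduction.} It suffices to prove the statement for discrete $\kxor$; the $\ktens$ version then follows from the discrete--Gaussian equivalence (Prop.~\ref{prop:discr_gauss_equivalence_general}), i.e.\ by $\Discretize$-ing the input and $\Gaussianize$-ing the output. Apply Lemma~\ref{lem:2path_red} with target order $k'$. Its hypothesis ``$k'$ even and $k'\le k(1-\eps)$'' is the first clause of~\eqref{eq:tensor_from_resolution_condition}, and it produces $\xorp{k'}{n'}{\eps'}{2\defi}$ with $\eps'=\min\{\tfrac{2k}{k'}\eps,\,1\}$ and $n'=n(1-o(1))$; the second clause $k'\le 2k\eps$ is precisely the inequality $\tfrac{2k}{k'}\eps\ge 1$, which forces $\eps'=1$. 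Reindexing the sequence of problems by $n'$ (harmless, since $n'=\Theta(n)\to\infty$) gives the reduction to $\gtensp{k'}{n}{\eps'=1}{2\defi}$; it is poly-time as a composition of $O(1)$ poly-time maps, and it carries both detection and recovery because Lemma~\ref{lem:2path_red} does and transforms the secret by a known coordinate restriction (Sec.~\ref{subsec:recovery_assumptions}).

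\emph{The hardness implication.} Fix $\eps\in(0,1)$, assume $\conjeps\eps$ (equivalently $\conj k\eps$ for every $k\ge 3$), and put $c:=\min\{1-\eps,\,2\eps\}>0$; we must show $\conj{k'}{1}$ for every $k'\ge 3$. First let $k'\ge 4$ be even. Choose any integer $k\ge k'/c$ (note $k>3$); then $k'\le\min\{k(1-\eps),\,2k\eps\}$, so for every $\defi'>0$ the reduction above, applied with input deficiency $\defi'/2>0$, maps $\xorp k n\eps{\defi'/2}$ to $\xorp{k'}{n'}{1}{\defi'}$ with $n'=\poly(n)$ (up to the standard $\poly\log$-in-parameter slack). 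By Prop.~\ref{prop:hardness_implication} this yields $\conj k\eps\Rightarrow\conj{k'}{1}$, and $\conj k\eps$ holds, so $\conj{k'}{1}$ holds. Now let $\ell\ge 3$ be odd and choose any integer $k\ge 2\ell/c$. Given $\xorp k n\eps{\defi'}$, apply $\kTensSplit$ repeatedly (Lemma~\ref{lem:instance_split}) to obtain $\ell$ independent copies at the same density and --- up to an $o(1)$ change, absorbed --- the same deficiency $\defi'$; apply Lemma~\ref{lem:2path_red} to the $r$-th copy with target order $2\ell-2r$ for $0\le r\le\ell-1$ (admissible since $2\ell-2r\le 2\ell\le\min\{k(1-\eps),2k\eps\}$), giving the collection $\{\xorp{(2\ell-2r)}{n'}{\eps'=1}{2\defi'}\}_{0\le r\le\ell-1}$; apply $\Compose$ (Prop.~\ref{prop:comp_equiv_full_collection}), whose density parameters $\min\{2\ell/(2\ell-2r),\,1\}$ all equal $1$ here, to assemble $\xorpfull{2\ell}{n'}{\eps'=1}{2\defi'}$; and finally apply $\ReduceK$ with $a=2$ (Cor.~\ref{cor:prelim_decrease_k}, valid as $2\mid 2\ell$ and $\ell\in\{2\ell/2,\,2\ell/2-2,\dots\}$), reaching $\xorp{\ell}{(n')^2}{\eps'=1}{\defi'}$. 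Every $\defi'>0$ is thus attained from a positive input deficiency, so Prop.~\ref{prop:hardness_implication} gives $\conj k\eps\Rightarrow\conj{\ell}{1}$, hence $\conj{\ell}{1}$. Combining the even and odd cases, $\conj{k'}{1}$ holds for all $k'\ge 3$, i.e.\ $\conjeps 1$; the discrete and Gaussian forms of these conjectures coincide (Prop.~\ref{prop:gauss_discr_equiv_gen}), so nothing is lost in passing between the two models along the pipeline.

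\emph{Where the difficulty is.} There is no deep step here: given Lemma~\ref{lem:2path_red}, the argument is parameter bookkeeping. The two things to get right are that the admissibility bound $k'\le k(1-\eps)$ and the saturation bound $k'\le 2k\eps$ of~\eqref{eq:tensor_from_resolution_condition} can be satisfied simultaneously by enlarging $k$ --- possible exactly because $\eps$ is bounded away from both $0$ and $1$, which is why the consequence excludes $\eps=0$ --- and that the deficiency is tracked correctly: equation combination doubles $\defi$ while $\ReduceK$ with $a=2$ halves it, so the net cost is a factor $2$ for even targets and a factor $1$ for odd targets reached through the $\FULL$ detour, and in either case an arbitrary $\defi'>0$ is reachable from a positive input deficiency, which is what Prop.~\ref{prop:hardness_implication} requires. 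The only ingredient beyond Lemma~\ref{lem:2path_red} that is not purely cosmetic is the $\kxor^\FULL\leftrightarrow$-collection equivalence (Prop.~\ref{prop:comp_equiv_full_collection}) together with the order-halving $\ReduceK$ (Cor.~\ref{cor:prelim_decrease_k}), invoked solely to descend to odd target orders.
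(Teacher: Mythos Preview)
Your proposal is correct and matches the paper's approach exactly: the first part is the immediate specialization of Lemma~\ref{lem:2path_red} (the two clauses of~\eqref{eq:tensor_from_resolution_condition} are precisely the admissibility bound $k'\le k(1-\eps)$ and the saturation condition $2k\eps/k'\ge 1$ forcing $\eps'=1$), and the ``Consequently'' part is a careful expansion of the footnote's sketch (split, apply Lemma~\ref{lem:2path_red} at each even order $\le 2\ell$, $\Compose$ into a $\kxor^\FULL$ instance, then $\ReduceK$ with $a=2$ to reach odd $\ell$). Your deficiency bookkeeping and the verification that $k\ge 2\ell/c$ makes all the intermediate calls admissible are exactly the details the paper leaves implicit.
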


The $\DiscrEqComb$ reduction combines pairs of entries in the input and achieves noise independence by not reusing them -- this works best for sparser instances of $\ktens$, i.e. small density $\eps$. Our $\GaussEqComb$ reduction in \autoref{sec:gauss_resolution} beats parameter tradeoffs in Cor.~\ref{cor:tensor_from_resolution} in the case of denser input (larger $\eps$). 

\begin{mylem}{\ref{lem:gauss_resolution}}\textnormal{(Gaussian Resolution: Reduction to Tensor PCA)}
    Let $\eps \in (1/3,1], \defi, k \in \mathbb{Z}^{+}$ be the $\ktens$ parameters satisfying $-k\eps/2 < \defi \leq (k-2)/2$. If $k'$ is even and satisfies
    \begin{equation}\label{eq:tensor_red_cond_to_tpca}
        k' < \min\set{ k\eps + 2\defi, \frac{k\paren{3\eps -1}}{2}, \frac{k\paren{1+\eps}}{4}}\,,
    \end{equation}
    then there exists a poly-time average-case reduction\footnote{As defined in Def.~\ref{def:avg_case_points}} for both detection and recovery (Alg.~\ref{alg:gauss_res})
    $$
    \text{from}\quad\gtensp k n \eps \defi \quad\text{to}\quad \gtensp {k'} {n'} {\eps'=1} {2\defi}\,,
    $$
    for $n' = n\paren{1- o(1)}$. Moreover, Alg.~\ref{alg:gauss_res} maps an instance with a secret signal vector $x \in\set{\pm1}^n$ to one with a signal vector $x' = x_{\sqb{1:n'}} \in \set{\pm 1}^{n'}$.\footnote{Here $x'$ can be any subset of $n'$ coordinates of $x$ by permutation invariance of the model. See Sec.~\ref{subsec:recovery_assumptions} for why this is sufficient to transfer weak/strong/exact recovery.}
\end{mylem}
\begin{remark}
    In the case of $\defi \to 0$, i.e. at the \emph{computational threshold}, the parameter regime above can be rewritten as follows: all even $k'$ satisfying: 
    \begin{enumerate}
        \item $\eps \in (1/3, 3/5)$: $k' < \frac{k\paren{3\eps - 1}}{2}$;
        \item $\eps \in (3/5, 1]$: $k' < \frac {k(1+\eps)} 4$.
    \end{enumerate}
\end{remark}

\paragraph{Comparison of Cor.~\ref{cor:tensor_from_resolution} and Lemma~\ref{lem:gauss_resolution}}

Both Cor.~\ref{cor:tensor_from_resolution} ($\DiscrEqComb$) and Lemma~\ref{lem:gauss_resolution} ($\GaussEqComb$) obtain a reduction from 
$$
\gtensp k n \eps \defi \quad\text{to}\quad \gtensp {k'} n {\eps'=1} {2\defi}\,,
$$
for even $k' < k\cdot \tau$, where $\tau = \tau\paren{\eps, \defi}$:
\begin{align*}
    &\text{$\DiscrEqComb$ (Cor.~\ref{cor:tensor_from_resolution})}: \quad \tau = \min\set{1-\eps, 2\eps}\\
    &\text{$\GaussEqComb$ (Lem.~\ref{lem:gauss_resolution})}: \quad \tau = \min\set{\eps + 2\tilde\defi, \frac{3\eps-1}{2}, \frac {1+\eps}4}\,,
\end{align*}
where $-\eps/2<\tilde\defi \coloneq \defi/k < 1/2 - 1/k$. For the instances inside the conjectured computationally hard region ($\tilde\defi > 0$) the following hold:
\begin{enumerate}
    \item $\eps \in (0, 1/3]$: only the result of Cor.~\ref{cor:tensor_from_resolution} ($\DiscrEqComb$) applies;
    \item $\eps \in (1/3, 3/5)$: both reductions are applicable, but Cor.~\ref{cor:tensor_from_resolution} ($\DiscrEqComb$) gives a wider range of admissible $k'$, i.e. $\GaussEqComb$ (Lem.~\ref{lem:gauss_resolution}) does not achieve new results in this regime;
    \item $\eps \in [3/5, 1]$: both reductions are applicable, but Lem.~\ref{lem:gauss_resolution} ($\GaussEqComb$) gives a wider range of admissible $k'$ and shows new reductions not achievable with the $\DiscrEqComb$. In this case we can simplify $\tau = \min\set{\eps + 2\tilde\defi, \frac {1+\eps}4}$.
\end{enumerate}
For $\defi > 0$, we summarize the admissible range of parameters $k' < k \tau$, $k'$ even, as follows:
\begin{equation}\label{eq:admissible}
\tau_{\eps, \defi/k} = \begin{cases}
    2\eps, &\eps \in \paren{0,1/3}\\
    1-\eps, &\eps \in [1/3,3/5)\\
    \frac {1+\eps}4, &\eps \in \sqb{3/5, 1}\,.
\end{cases} 
\end{equation}
Combined with our average-case reductions for decreasing the tensor order $k$ in \autoref{subsec:decrease_k_prelim},\footnote{Both reductions in Cor.~\ref{cor:tensor_from_resolution} and in Lem.~\ref{lem:gauss_resolution} can be modified to map to a full $\kxor$ instance, so the results in Lem.~\ref{lem:prelim_decrease_k} of \autoref{subsec:decrease_k_prelim} are applicable and make a $k'\to k'/2$ average-case reduction.} we obtain the following summary of average-case reductions to Tensor PCA. The equivalent statement for discrete $\kxor$ holds from equivalence in Prop.~\ref{prop:discr_gauss_equivalence_general}.

\begin{corollary}[of Lem.~\ref{lem:gauss_resolution} and Lem.~\ref{lem:2path_red}]\label{cor:tensor_pca_summary}
Let $\eps \in (0,1], \defi, k \in \mathbb{Z}^{+}$ be the $\ktens$ parameters satisfying $0 \leq \defi \leq (k-2)/2$. For $\tau_{\eps,\defi/k}$ defined in Eq.~\eqref{eq:admissible}, if $k'$ satisfies
$$
k' < \begin{cases}
    \tau_{\eps, \defi/k} \cdot k, &k' \text{ even}\\
    \tau_{\eps, \defi/k}/2 \cdot k, &k' \text{ odd}\,,
\end{cases}
$$
there exists a poly-time average-case reduction for both detection and recovery
$$
\text{from}\quad\gtensp k n \eps \defi \quad\text{to}\quad \gtensp {k'} {n'} {\eps'=1} {2\defi}\,,
$$
with $n' = n^{\Theta(1)}$.
\end{corollary}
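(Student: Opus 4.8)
The plan is to derive Corollary~\ref{cor:tensor_pca_summary} by assembling, with essentially no new technical input, the two equation-combination reductions already established: the discrete one of Cor.~\ref{cor:tensor_from_resolution} (built on Lemma~\ref{lem:2path_red}) and the Gaussian one of Lemma~\ref{lem:gauss_resolution} (built on the Wishart CLT Lemma~\ref{lem:wishart}), used for \emph{even} target order $\kstar$, together with the order-halving reduction of Lemma~\ref{lem:prelim_decrease_k} (via Cor.~\ref{cor:prelim_decrease_k}) routed through the $\FULL$ variant to reach \emph{odd} $\kstar$. By the discrete--Gaussian equivalence (Prop.~\ref{prop:discr_gauss_equivalence_general}) it suffices to argue entirely in the $\ktens$ world; pre- and post-composing with $\Discretize/\Gaussianize$ transfers the statement to $\kxor$. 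Throughout, deficiency and sample-count slack of size $o(1)$ is absorbed using Prop.~\ref{prop:map_up_to_constant}, and the switches between with-replacement, Poissonized, and without-replacement sampling use Props.~\ref{prop:sampling_equiv}, \ref{prop:sampling_equiv_eps1}.

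\emph{Even $\kstar$.} I would split $\eps$ into the three ranges of Eq.~\eqref{eq:admissible}. For $\eps \in (0,1/3)$ and $\eps \in [1/3,3/5)$, apply Cor.~\ref{cor:tensor_from_resolution}; its admissibility window is $\kstar \le \min\set{k(1-\eps),\,2k\eps}$, and a one-line comparison ($2\eps < 1-\eps \iff \eps < 1/3$) shows this minimum equals $2k\eps$ on $(0,1/3)$ and $k(1-\eps)$ on $[1/3,3/5)$ --- which is exactly $\tau_{\eps,\defi/k}\cdot k$ in those ranges. Hence any even $\kstar < \tau_{\eps,\defi/k}\,k$ is admissible, giving the reduction to $\gtensp{\kstar}{n}{\eps'=1}{2\defi}$. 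For $\eps \in [3/5,1]$, apply Lemma~\ref{lem:gauss_resolution}: its side conditions $\eps > 1/3$ and $-k\eps/2 < \defi \le (k-2)/2$ hold since $\eps \ge 3/5$ and $0 \le \defi \le (k-2)/2$, and its window $\kstar < \min\set{k\eps+2\defi,\ k(3\eps-1)/2,\ k(1+\eps)/4}$ collapses to $k(1+\eps)/4 = \tau_{\eps,\defi/k}\,k$, because for $\eps \ge 3/5$ one has $(3\eps-1)/2 \ge (1+\eps)/4$ and $\eps + 2\defi \ge \eps > (1+\eps)/4$. In all three ranges this produces the claimed reduction for even $\kstar$ with output deficiency exactly $2\defi$ and $n' = n(1-o(1))$.

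\emph{Odd $\kstar$.} Given odd $\kstar < \tau_{\eps,\defi/k}\,k/2$, set $K \coloneq 2\kstar$, which is even and satisfies $K < \tau_{\eps,\defi/k}\,k$, so every order in $\set{K,K-2,K-4,\dots}$ is also below $\tau_{\eps,\defi/k}\,k$. Following \autoref{subsec:full_ideas}: first $\kTensSplit$ the input (Lemma~\ref{lem:instance_split}, Remark~\ref{rmk:clone_several}) into $O(K)$ independent copies of $\gtensp{k}{n}{\eps}{\defi}$ (deficiency loss $o(1)$); run the even-$\kstar$ reduction above on each copy, targeting a distinct even order $K-2r$ at density $\eps'=1$; and assemble the resulting collection $\set{\gtensp{(K-2r)}{n'}{\eps'=1}{2\defi}}_{r}$ into a single $\gtenspfull{K}{n'}{\eps'=1}{2\defi}$ via $\Compose$ (Prop.~\ref{prop:comp_equiv_full_collection}, where at $\eps'=1$ every $\eps_r = \min\set{K/(K-2r),1}=1$). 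Then $\ReduceK$ with $a=2$ (Cor.~\ref{cor:prelim_decrease_k}) yields an average-case reduction to the standard distinct-index model $\gtensp{\kstar}{(n')^2}{\eps'=1}{\defi}$, sending the signal $x$ to $x' = x^{\otimes 2}\odot y$ for a known $y$, which transfers weak/strong/exact recovery by \autoref{subsec:recovery_assumptions}. Since $\defi \le 2\defi$, I finally inflate the deficiency to exactly $2\defi$ by adding i.i.d.\ $\N(0,\sigma^2)$ noise to every entry and renormalizing (turning $\delta x_\alpha + \N(0,1)$ into $(\delta/\sqrt{1+\sigma^2})\,x_\alpha + \N(0,1)$, with $\sigma^2 = n^{\defi}-1$). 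All intermediate dimensions are $n^{\Theta(1)}$, so $n' = n^{\Theta(1)}$.

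\emph{Main obstacle.} The inequalities above are routine; the delicate point is the $\FULL$-routing for odd $\kstar$. One must verify that the base reductions of Cor.~\ref{cor:tensor_from_resolution} and Lemma~\ref{lem:gauss_resolution} genuinely produce (after the standard sampling-model conversions) distinct-index Tensor PCA instances at \emph{every} even order up to $\tau_{\eps,\defi/k}\,k$, so that $\Compose$ is fed the exact collection $\set{\gtensp{(K-2r)}{\cdot}{\eps'=1}{2\defi}}_r$ required by Prop.~\ref{prop:comp_equiv_full_collection}, and that the density bookkeeping $\eps_r=\min\set{K\eps'/(K-2r),1}$ indeed collapses to $1$ at $\eps'=1$. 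One also needs to check that the $o(1)$ deficiency losses from $\kTensSplit$ and from the $\poly\log n$ slack in the base reductions do not accumulate beyond $o(1)$, so the final deficiency is $2\defi + o(1)$, which is equivalent to $2\defi$ for our purposes by Prop.~\ref{prop:map_up_to_constant}.
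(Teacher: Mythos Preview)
Your proposal is correct and follows essentially the same route as the paper: the even-$\kstar$ case is handled by choosing between Cor.~\ref{cor:tensor_from_resolution} and Lemma~\ref{lem:gauss_resolution} according to the range of $\eps$ (the paper records exactly the same comparison and the same partition $(0,1/3)$, $[1/3,3/5)$, $[3/5,1]$), and the odd-$\kstar$ case is obtained by producing a full collection of even-order instances via splitting, assembling them into a $\FULL$ instance at order $2\kstar$ with $\Compose$, and then applying $\ReduceK$ with $a=2$ (the paper sketches this in the footnote attached to Cor.~\ref{cor:tensor_pca_summary}). Your explicit noise-inflation step to push the odd-$\kstar$ output deficiency from $\defi$ back up to $2\defi$ is a clean way to match the literal statement; the paper does not spell this adjustment out.
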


\section{Reductions that Reduce the Tensor Order $k$}\label{sec:decrease_k}

We devote this section to average-case reductions from $\xorp k n \eps \defi$ to $\xorp {\kstar} {n'} \eps {\defi'}$, i.e. reductions that change the tensor order $k\to \kstar$ at the same density $\eps$, focusing on the canonical cases $\eps\in\set{0,1}$. Fully analogous results for $\ktens$ follow from the discrete-Gaussian equivalence (Prop.~\ref{prop:discr_gauss_equivalence_general}). Our reductions are in Theorem~\ref{thm:decreasek} with the corresponding hardness implications in Corollary~\ref{cor:decreasek}. 

\begin{theorem}[Reductions that Reduce Tensor Order $k$]\label{thm:decreasek} There exists a poly-time average-case reduction for both detection and recovery (Alg.~\ref{alg:decreasek_0}, \ref{alg:decreasek_1}) 
$$
\text{from}\quad \xorp k n \eps \defi\quad\text{to}\quad \xorp {\kstar} {n'} \eps {\defi'}
$$
with $n' = n^{\Theta(1)}, \defi' = O(\defi)$ for the following parameter regimes:
\begin{enumerate}
    \item \textbf{Canonical $\kxor$, $\eps = 0, \defi \geq 0$ (Alg.~\ref{alg:decreasek_0}):} 
    \begin{enumerate}
        \item \textbf{even $\kstar$:} all $k > \kstar$;
        \item \textbf{odd $\kstar$:} all odd $k > \kstar$ and even $k \geq 2\kstar$. 
    \end{enumerate}
    \item \textbf{Tensor PCA, $\eps = 1, \defi \geq 0$ (Alg.~\ref{alg:decreasek_1}):}
    \begin{enumerate}
        \item \textbf{even $\kstar$:} all $k > 3\kstar/2$;
        \item \textbf{odd $\kstar$:} all $k > 3\kstar$. 
    \end{enumerate}
\end{enumerate}  
Moreover, the reductions map an instance of dimension $n$ and signal vector $x\in\set{\pm 1}^n$ to one of dimension $n'$ and signal $x' \in\set{\pm 1}^{n'}$ as follows.\footnote{See \autoref{subsec:recovery_assumptions} for why this transfers weak/strong/exact recovery algorithms.} Denote $\paren{\tilde n, \tilde x} = \paren{n(1-o(1)), x_{\sqb{1: \tilde n}}}$.
\begin{enumerate}
    \item If $\kstar$ is odd and either (i) $\eps=1$ or (ii) $\eps = 0$ with $k$ even:
    $$
        n' = \tilde n^2\quad\text{and}\quad x' = \tilde x^{\otimes 2}\odot y\,,
    $$
    for a known $y \sim \unif\paren{\set{\pm 1}^{n'}}$;
    \item Otherwise: 
    $$
    \paren{n', x'} = \paren{\tilde n, \tilde x}\,.
    $$
\end{enumerate}

\end{theorem}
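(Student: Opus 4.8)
The plan is to obtain every clause of Theorem~\ref{thm:decreasek} as a short composition of the equation‑combination primitives of \S\ref{sec:discrete_resolution} and \S\ref{sec:gauss_resolution} with the order‑halving reduction $\ReduceK$ (Lemma~\ref{lem:prelim_decrease_k}, Cor.~\ref{cor:prelim_decrease_k}) and the $\FULL$/collection equivalence $\Compose$ (Prop.~\ref{prop:comp_equiv_full_collection}). I would prove the statements for the discrete $\kxor$ models directly; whenever a Gaussian tool is invoked (Lemma~\ref{lem:gauss_resolution}, Lemma~\ref{lem:2path_red_ext_gauss}) I first conjugate by $\Gaussianize/\Discretize$ (Prop.~\ref{prop:discr_gauss_equivalence_general}), which preserves $(\eps,\defi)$ up to $\poly\log n$ factors. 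For the parameter accounting: every equation‑combination step multiplies $\defi$ by a bounded constant ($2$ for a single input, at most $3$ for our two‑input uses), each $\ReduceK$ with factor $a$ divides $\defi$ by $a$, and $\kTensSplit$, cloning and $\Compose$ preserve $\defi$ up to $o(1)$; hence the output deficiency is always $O(\defi)$ and remains nonnegative when $\defi\ge0$. The secret transformations in the theorem are then read off from the individual lemmas: $\TwoPath$, $\TwoPathTwo$, $\GaussResolutionTwo$ map $x\mapsto x_{[1:n']}$, while $\ReduceK$ with $a=2$ maps $x\mapsto x^{\otimes 2}\odot y$ for a known $y$.

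\emph{Even target order.} For $\eps=0$ and even $\kstar$ with $k>\kstar$, a single call of $\TwoPath$ (Lemma~\ref{lem:2path_red}, valid since $\kstar\le k(1-0)=k$ and the output density is $\min\{2k\eps/\kstar,1\}=0$) already gives the reduction $\xorp k n 0 {\defi}\to\xorp{\kstar}{n'}0{2\defi}$ with $n'=n(1-o(1))$, $x'=x_{[1:n']}$. For $\eps=1$ and even $\kstar$ with $k>3\kstar/2$, I would split the input into two independent copies sharing $x$ ($\kTensSplit$, Lemma~\ref{lem:instance_split}) and then apply the extended Gaussian combination Lemma~\ref{lem:2path_red_ext_gauss} with $k_1=k_2=k$; its hypotheses $(2k-\kstar)/2\in\Z$, $\kstar\le k$, $\kstar<2k/3$ hold exactly for even $\kstar$ with $k>3\kstar/2$, and the output is $\xorp{\kstar}{n'}1{2\defi}$ with $x'=x_{[1:n']}$. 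These two constructions realize the ``Otherwise: $(n',x')=(\tilde n,\tilde x)$'' clause of the theorem.

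\emph{Odd target order.} The idea is to build a $\FULL$ instance of order $2\kstar$ and halve it. Split the input into a constant number of independent copies sharing $x$, and from these produce the collection of instances of orders $2\kstar,2\kstar-2,\dots,2$ at density $\eps$: for $\eps=0$ by $\TwoPath$ (each target even order $\le k$, which needs $2\kstar\le k$, i.e.\ $k\ge2\kstar$), and for $\eps=1$ by the extended Gaussian combination applied to pairs of order‑$k$ copies (each target order $<2k/3$, which needs $2\kstar<2k/3$, i.e.\ $k>3\kstar$). The densities required by $\Compose$ are $\eps_r=\min\{(2\kstar)\eps/(2\kstar-2r),1\}$, which equal $\eps$ at both endpoints $\eps\in\{0,1\}$, so $\Compose$ assembles these into $\xorpfull{2\kstar}{\tilde n}{\eps}{O(\defi)}$, and $\ReduceK$ with $a=2$ (legitimate since $2\mid2\kstar$ and $\kstar=2\kstar/2$) yields $\xorp{\kstar}{\tilde n^2}{\eps}{O(\defi)}$ with secret $\tilde x^{\otimes2}\odot y$ — matching the ``$n'=\tilde n^2$, $x'=\tilde x^{\otimes2}\odot y$'' clause for $\eps=1$ and for $\eps=0$ with $k$ even. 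The remaining case, $\eps=0$ with odd $\kstar$ and odd $k>\kstar$, instead uses the two‑input, two‑order combination Lemma~\ref{lem:2path_red_ext_discr}: split the input into two independent copies sharing $x$, keep one at order $k$, map the other to some even order $\tilde k$ via $\TwoPath$, and (after a trivial common restriction of the index sets) combine. I need $(k+\tilde k-\kstar)/2\in\Z$ (automatic for $\tilde k$ even, $k,\kstar$ odd) and $\kstar\in[|k-\tilde k|,\min\{k,\tilde k\}]$, equivalently $\max\{\kstar,k-\kstar\}\le\tilde k\le k$; since $\max\{\kstar,k-\kstar\}\le k-1$ and $k$ is odd, this interval contains an even integer $\tilde k$, so an admissible $\tilde k$ always exists and the reduction outputs $\xorp{\kstar}{\tilde n}0{O(\defi)}$ with $x'=\tilde x$.

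\emph{Main obstacle.} Since the substantive distributional arguments are encapsulated in the cited lemmas, what remains is bookkeeping of three kinds: (i) the arithmetic feasibility of the intermediate orders — chiefly the existence of an admissible even $\tilde k$ in the odd/odd discrete case and the verification that $\Compose$'s density pattern $\eps_r$ collapses to $\eps$ at the two endpoints; (ii) the independence of every instance fed into a two‑input combination or into $\Compose$, which I enforce by always deriving them from the original via independent $\kTensSplit$ \emph{before} any further processing (so that shared secret but independent noise and supports both hold); and (iii) the parameter accounting above. The one genuinely hard input I would use as a black box is Lemma~\ref{lem:2path_red_ext_gauss}: the admissible range $\kstar<2k/3$ is exactly where the Wishart‑type comparison for products of masked Gaussian matrices (via the Spiked Covariance/Wigner reductions of \cite{bresler2025computational}) is known to hold, so the thresholds $k>3\kstar/2$ and $k>3\kstar$ cannot be improved by this method without a stronger version of that lemma.
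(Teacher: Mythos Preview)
Your proposal is correct and follows essentially the same approach as the paper: the even-target cases are handled by a single call of $\TwoPath$ (for $\eps=0$) or $\GaussResolutionTwo$ on a split pair (for $\eps=1$), and the odd-target cases by building the full collection of orders $2\kstar,2\kstar-2,\dots,2$, composing into a $\FULL$ instance, and applying $\ReduceK$ with $a=2$, except for the odd/odd discrete case where the two-input $\TwoPathTwo$ is used instead. The only cosmetic difference is that in the odd/odd discrete case the paper fixes the intermediate even order to $\tilde k=k-1$ rather than arguing existence of an admissible even $\tilde k$ in $[\max\{\kstar,k-\kstar\},k]$; since $k$ is odd, $k-1$ is even and always lies in this interval, so the two are equivalent.
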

\begin{corollary}\label{cor:decreasek}
    For parameters $\eps, k$, and $\kstar$ as above, 
    \begin{equation}\label{eq:decreasek_implication}
    \conj k \eps \Rightarrow \conj {\kstar} \eps\,.
    \end{equation}
    In particular, for any $\eps \in \set{0,1}$, the relation Eq.~\eqref{eq:decreasek_implication} holds for any $\kstar$ and sufficiently large $k$.
\end{corollary}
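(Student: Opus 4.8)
The plan is to prove Theorem~\ref{thm:decreasek} by composing the equation-combination and order-reduction primitives established earlier, handling the four cases of the statement separately, and then to deduce Corollary~\ref{cor:decreasek} from the hardness-transfer principle of Prop.~\ref{prop:hardness_implication}. Throughout I work with the discrete $\kxor$ models, invoking the discrete--Gaussian equivalence (Prop.~\ref{prop:discr_gauss_equivalence_general}) whenever a subroutine is phrased for $\ktens$, and I write $\tilde n = n(1-o(1))$, $\tilde x = x_{[1:\tilde n]}$. In each case the reduction is an explicit sequential composition of subroutines, so the work is to check that every call is admissible (its parameter hypotheses hold) and to track how $\defi$ and the signal vector evolve.

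\textbf{Even target orders.} For $\eps=0$ and even $\kstar$, every $k>\kstar$ satisfies $\kstar<k=k(1-\eps)$, so Lemma~\ref{lem:2path_red} applies directly and gives an average-case reduction from $\xorp k n 0 \defi$ to $\xorp \kstar {\tilde n} 0 {2\defi}$ (its output density is $\min\set{2k\cdot 0/\kstar,1}=0$) with output signal $\tilde x$; this is the entire reduction for case~1(a). For $\eps=1$ and even $\kstar$, the one-input Gaussian combination of Lemma~\ref{lem:gauss_resolution} only reaches $\kstar<k/2$, which is weaker than the claimed $k>3\kstar/2$. Instead I first split the order-$k$, density-$1$ instance into two independent copies sharing the secret (Lemma~\ref{lem:instance_split}, together with Prop.~\ref{prop:sampling_equiv_eps1} to pass between with/without-replacement sampling at $\eps=1$) and then apply the extended Gaussian combination of Lemma~\ref{lem:2path_red_ext_gauss} with $k_1=k_2=k$: its integrality condition forces $\kstar$ even, its range condition $\kstar<(k_1+k_2)/3=2k/3$ is exactly $k>3\kstar/2$, and it outputs $\xorp \kstar {\tilde n} 1 {2\defi}$ with signal $\tilde x$. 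In both even-target cases the final signal is $\tilde x$ on $\tilde n$ coordinates, matching the ``otherwise'' branch of the theorem.

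\textbf{Odd target orders.} The common template for odd $\kstar$ is: reduce to the even order $2\kstar$ at the same density, assemble a $\kxor^\FULL$ instance of order $2\kstar$, and then halve the order with $\ReduceK$ at $a=2$ (Cor.~\ref{cor:prelim_decrease_k}), which introduces the squared signal $x'=\tilde x^{\otimes 2}\odot y$ and $n'=\tilde n^2$ as in case~1 of the theorem. Concretely, for $\eps=0$ with $k$ even (so $2\kstar\le k$): split the input into $\kstar$ copies, apply Lemma~\ref{lem:2path_red} to copy $i$ to obtain a density-$0$ order-$2i$ instance for $i=1,\dots,\kstar$, assemble these via $\Compose$ (Prop.~\ref{prop:comp_equiv_full_collection}, whose intermediate densities all collapse to $0$ here) into $\kxor^\FULL$ of order $2\kstar$ at density $0$, and finish with $\ReduceK$ to order $\kstar$; the deficiency runs $\defi\to 2\defi\to 2\defi\to \defi$ up to $o(1)$ splitting losses. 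For $\eps=1$ with $k>3\kstar$ (so $2\kstar<2k/3$) the same template applies, with the even-order reduction of the previous paragraph (Lemma~\ref{lem:2path_red_ext_gauss}, applied to independent copies obtained by splitting, reaching every even order below $2k/3$, in particular $2,4,\dots,2\kstar$) in place of Lemma~\ref{lem:2path_red}, and $\Compose$/$\ReduceK$ carried out in the $\ktens^\FULL$ model. The remaining subcase of~1(b), $\eps=0$ with both $k$ and $\kstar$ odd, uses the extended \emph{discrete} combination of Lemma~\ref{lem:2path_red_ext_discr}: split the input, run Lemma~\ref{lem:2path_red} on one copy to get an order-$(k-1)$ instance ($k-1$ is even and $\le k$), restrict the other copy's index sets to $[\tilde n]$ so both instances carry the secret $\tilde x$, and combine them with $k_1=k$, $k_2=k-1$; since $\kstar$ is odd we have $(k_1+k_2-\kstar)/2=(2k-1-\kstar)/2\in\Z$, and $1=|k-(k-1)|\le\kstar\le k-2\le\min\set{k,k-1}$, so Lemma~\ref{lem:2path_red_ext_discr} outputs order $\kstar$ at density $0$ with deficiency $\defi+2\defi=3\defi$ and signal $\tilde x$, matching the ``otherwise'' branch.

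\textbf{Hardness implications and the main obstacle.} In every case the output deficiency is a fixed constant multiple of the input deficiency up to $o(1)$ and $n'=n^{\Theta(1)}$; hence for any target $\defi'>0$ one may take the input deficiency $\defi=\Theta(\defi')$, so Prop.~\ref{prop:hardness_implication} (absorbing the $o(1)$ slack via the small-ball convention of Lem.~\ref{lem:red_implications_gen}) gives $\conj k \eps \Rightarrow \conj \kstar \eps$; and since for $\eps\in\set{0,1}$ the admissible $k$ include all sufficiently large orders, this yields the final sentence of Corollary~\ref{cor:decreasek}. I expect the genuine work, beyond routine parameter bookkeeping, to lie in the odd-target cases: one must verify that the $\ReduceK$ route produces exactly the target distribution --- this is precisely where the $\FULL$ model is indispensable, since without repeated indices the grouped $r$-tuple supports would be forced disjoint and the support law would be wrong --- and that the family fed into $\Compose$ is exactly $\set{2,4,\dots,2\kstar}$ at the correct (collapsed) densities. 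Tracking the secret through splitting, coordinate restriction, and the $x\mapsto x^{\otimes 2}\odot y$ step, and confirming that the two resulting forms match those listed in the theorem, is straightforward but needs care.
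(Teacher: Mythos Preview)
Your proposal is correct and follows essentially the same case split and composition of primitives as the paper's proof of Theorem~\ref{thm:decreasek}: direct $\TwoPath$ for $\eps=0$ even $\kstar$; split then $\GaussResolutionTwo$ for $\eps=1$ even $\kstar$; the split--combine--$\Compose$--$\ReduceK$ template for odd $\kstar$ with $k$ even (at $\eps=0$) or any $k>3\kstar$ (at $\eps=1$); and the two-stage $\TwoPath$ then $\TwoPathTwo$ route for $\eps=0$ with both $k,\kstar$ odd, all followed by the hardness transfer of Prop.~\ref{prop:hardness_implication}. Your deficiency bookkeeping is in fact slightly more careful than the paper's (you correctly obtain $3\defi$ in the odd/odd case, whereas the paper's text states $2\defi$ there).
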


The reductions in Theorem~\ref{thm:decreasek} map from an instance with larger parameter $k$ to an instance with a smaller one. We conjecture (Conj.~\ref{conj:one_way_hardness}) that such a reduction is indeed possible for any pair of parameters $k > \kstar$, i.e. the missing cases of Theorem~\ref{thm:decreasek} are merely a limitation of our technique. 

The results for \emph{even} $\kstar$ are special cases of the corresponding $\EqComb$ reductions: $\DiscrEqComb$ in Lemma~\ref{lem:2path_red} for $\eps = 0$ and $\GaussEqComb$, $\GaussResolutionTwo$ in Lemmas~\ref{lem:gauss_resolution}, \ref{lem:2path_red_ext_gauss} for $\eps = 1$.

To reach \emph{odd} $\kstar$, in the case of even $k$, we stack the discrete (or Gaussian) resolution with the $2\kstar \to \kstar$ reduction of Lemma~\ref{lem:prelim_decrease_k}, obtaining the $k \geq 2\kstar$ range in Thm.~\ref{thm:decreasek}. The reduction in Lemma~\ref{lem:prelim_decrease_k} is applicable, since both results in Lemma~\ref{lem:2path_red} and Lemmas~\ref{lem:gauss_resolution}, \ref{lem:2path_red_ext_gauss} can be modified to map to an instance of the full $\kxork {2k'}$. Formally, this can be achieved by first splitting the input into $k'$ independent instances via $\XorSplit$ and then applying the discrete or Gaussian resolution reduction to every copy, mapping to instances of orders $2k', 2k'-2,2k'-4,\dots$. It remains to apply $\Compose$ (Alg.~\ref{alg:compose_full}) to obtain a full $\kxork {2k'}$ instance.

For $\eps=0$ and $k$ odd, we obtain a wider range by using the generalized discrete resolution (Lemma~\ref{lem:2path_red_ext_discr}) that combines two $\kxor$ inputs that share the same secret $x \in \set{\pm1}^n$.

\subsection{Proof of Theorem~\ref{thm:decreasek}}

\begin{algorithm}\SetAlgoLined\SetAlgoVlined\DontPrintSemicolon
    \KwIn{
    $\calY = \set{ \b( \alpha_j, \Y_{j}\b)}_{j\in\sqb{m}}$, where $\forall j \in\sqb{m}$, $\b(\alpha_j, \Y_{j}\b) \in \binomset n k \times \set{\pm1}$, $k'\in \Z$.
    }
    \KwOut{$\widetilde\calZ = \set{ \b( \gamma_j, \Zt_{j}\b)}_{j\in\sqb{m'}}$, where $\forall j \in\sqb{m'}$, $\b(\gamma_j, \Zt_{j}\b) \in \binomset {n'} {k'} \times \set{\pm1}$.}

    \tcp{Decrease-order reduction for $\eps=0$}

    \BlankLine
    \If{$\kstar$ even}{\tcp{Directly apply discrete resolution}
        $\calZ \gets \TwoPath\paren{\calY, k'}$ \ \tcp{Alg.~\ref{alg:two_path}}
    }

    \BlankLine
    \If{$\kstar$ odd}{
        \If{$k$ even}{\tcp{Map to a collection of instances of even order $2\kstar$ and smaller}
        $\calY\up 1, \dots, \calY\up \kstar \gets \XorSplit\paren{\calY, \kstar}$ \ \tcp{\autoref{subsubsec:splitting}, see Rmk.~\ref{rmk:clone_several}}

        For each $i\in \sqb{k'}$, $\calZ\up i \gets \TwoPath\paren{\calY\up i, 2i}$ \ \tcp{Alg.~\ref{alg:two_path}}

        \tcp{Reduce the order by a factor of $2$}
        $\calZ \gets \ReduceK\paren{\Compose\b(\set{\calZ\up i}_i\b), 2}$ \ \tcp{Alg.~\ref{alg:prelim_decrease_k}, \ref{alg:compose_full}, see Rmk.~\ref{rmk:to_full_discrete_input}}
        }\Else{
        $\calY\up1, \calY\up2 \gets \XorSplit\paren{\calY}$ \ \tcp{\autoref{subsubsec:splitting}}

        \tcp{Create an instance of order $k-1$}
        $\tilde \calY\up{1} \gets \TwoPath\paren{\calY\up1, k-1}$ \ \tcp{Alg.~\ref{alg:two_path}}

        \tcp{Combine tensors of orders $k-1, k$}
        $\calZ \gets \TwoPathTwo\paren{\tilde \calY\up1, \calY\up2, \kstar}$ \ \tcp{Alg.~\ref{alg:two_path_two_inputs}}
        }
    }

    \BlankLine
    Return $\calZ$
    
    \caption{$\ReduceKDiscrete\paren{\calY, \kstar}$}
\label{alg:decreasek_0}
\end{algorithm}
\begin{algorithm}\SetAlgoLined\SetAlgoVlined\DontPrintSemicolon
    \KwIn{
    $\calY = \set{ \b( \alpha_j, \Y_{j}\b)}_{j\in\sqb{m}}$, where $\forall j \in\sqb{m}$, $\b(\alpha_j, \Y_{j}\b) \in \binomset n k \times \R$, $k'\in \Z$.
    }
    \KwOut{$\widetilde\calZ = \set{ \b( \gamma_j, \Zt_{j}\b)}_{j\in\sqb{m'}}$, where $\forall j \in\sqb{m'}$, $\b(\gamma_j, \Zt_{j}\b) \in \binomset {n'} {k'} \times \R$.}

    \tcp{Decrease-order reduction for $\eps=1$}

    \BlankLine
    \If{$\kstar$ even}{\tcp{Directly apply Gaussian resolution}
        $\calY\up 1, \calY\up 2 \gets \kTensSplit(\calY)$ \ \tcp{\autoref{subsubsec:splitting}}
    
        $\calZ \gets \GaussResolutionTwo\paren{\calY\up 1, \calY\up 2, \kstar}$ \ \tcp{Alg.~\ref{alg:two_path_gen_tens}}
    }

    \BlankLine
    \If{$\kstar$ odd}{
        \tcp{Map to a collection of instances of even order $2\kstar$ and smaller}
        $\calY\up 1, \dots, \calY\up \kstar \gets \kTensSplit\paren{\calY, \kstar}$ \ \tcp{\autoref{subsubsec:splitting}, see Rmk.~\ref{rmk:clone_several}}

        \ForAll{$i\in \sqb{k'}$}{
            $\calY\up {i,1}, \calY\up {i,2} \gets \kTensSplit(\calY\up i)$ \ \tcp{\autoref{subsubsec:splitting}}

            $\calZ\up i \gets \GaussResolutionTwo\paren{\calY\up {i,1}, \calY\up{i,2}, 2i}$ \ \tcp{Alg.~\ref{alg:two_path_gen_tens}}
        }

        \tcp{Reduce the order by a factor of $2$}
        $\calZ \gets \ReduceK\paren{\Compose\b(\set{\calZ\up i}_i\b), 2}$ \ \tcp{Alg.~\ref{alg:prelim_decrease_k}, \ref{alg:compose_full}}
    }

    \BlankLine
    Return $\calZ$
    
    \caption{$\ReduceKGauss\paren{\calY, \kstar}$}
\label{alg:decreasek_1}
\end{algorithm}

\paragraph{Order Reduction for $\eps = 0$.}

The reduction algorithm is Algorithm~\ref{alg:decreasek_0}. Let $\calY\sim \xorp k n {\eps=0} \defi$ be the (planted) input. The argument holds for the null case without modification. 

In the case of $\kstar$ even, by Lemma~\ref{lem:2path_red}, $\TwoPath\paren{\calY, \kstar}$ is an average-case reduction for both detection and recovery that maps $\xorp k n {\eps=0} \defi$ to $\xorp \kstar {n'} {\eps=0} {2\defi}$ for $n' = n(1-o(1))$ and the signal vector transformed from $x\in\set{\pm1}^n$ to $x' = x_{\sqb{1:n'}}$. 

In the case of $\kstar$ odd and $k$ even, by Lemma~\ref{lem:instance_split}, $\XorSplit\paren{\calY}$ is an average-case reduction to independent instances of $\xorp k n {\eps=0} \defi$, denoted as $\calY\up1,\dots,\calY\up\kstar$, that share the same signal vector $x$ as the input instance $\calY$. By Lemma~\ref{lem:2path_red}, $\TwoPath\paren{\calY\up i, 2i}$ is an average-case reduction to $\xorp {2i} {\tilde n} {\eps=0} {2\defi}$ for $\tilde n = n(1-o(1))$ and the signal vector $\tilde x = x_{\sqb{1:\tilde n}}$. Finally, by Prop.~\ref{prop:comp_equiv_full_collection} and Lemma~\ref{lem:prelim_decrease_k}, $\calZ \gets \ReduceK\paren{\Compose\b(\set{\calZ\up i}_i\b), 2}$ is an average-case reduction to $\xorp \kstar {n'} {\eps=0} {\defi}$ for $n' = \tilde n^2$ with a signal vector $x' = \tilde x^{\otimes 2} \odot y$ for a known $y \sim \unif\paren{\set{\pm1}^{n'}}$.

In the case of both $k,\kstar$ odd, by Lemma~\ref{lem:instance_split}, $\XorSplit\paren{\calY}$ is an average-case reduction to two independent instances of $\xorp k n {\eps=0} \defi$, denoted as $\calY\up1, \calY\up 2$, that share the same signal vector $x$ as the input instance $\calY$. By Lemma~\ref{lem:2path_red}, $\tilde\calY\up 1 = \TwoPath\paren{\calY\up1, k-1}$ is an average-case reduction to $\xorp {k-1} {\tilde n} {\eps=0} {2\defi}$ for $\tilde n = n(1-o(1))$ and the signal vector $\tilde x = x_{\sqb{1:\tilde n}}$. Then, restricting $\calY\up2$ to the selected $\tilde n$ coordinates, by Lemma~\ref{lem:2path_red_ext_discr}, $\calZ = \TwoPathTwo\paren{\tilde \calY\up1, \calY\up2, \kstar}$ is an average-case reduction to  $\xorp \kstar {n'} {\eps=0} {2\defi}$ for $n' = n(1-o(1))$ and the signal vector $x' = x_{\sqb{1:n'}}$. 

\paragraph{Order Reduction for $\eps = 1$.}

The reduction algorithm is Algorithm~\ref{alg:decreasek_1}. Let $\calY\sim \gtensp k n {\eps=1} \defi$ be the (planted) input. The argument holds for the null case without modification. The analogous result in Thm.~\ref{thm:decreasek} for discrete $\xorp k n {\eps=1} \defi$ follows from discrete-Gaussian equivalence in Prop.~\ref{prop:discr_gauss_equivalence_general}.

In the case of $\kstar$ even, by Lemma~\ref{lem:instance_split} and Lemma~\ref{lem:2path_red_ext_gauss}, $\GaussResolutionTwo\paren{\calY\up 1, \calY\up 2, \kstar}$ is an average-case reduction for both detection and recovery that to $\gtensp \kstar {n'} {\eps=1} {2\defi}$ for $n' = n(1-o(1))$ and the signal vector transformed from $x\in\set{\pm1}^n$ to $x' = x_{\sqb{1:n'}}$. 

In the case of $\kstar$ odd, by Lemma~\ref{lem:instance_split}, $\kTensSplit\paren{\calY}$ is an average-case reduction to independent instances of $\gtensp k n {\eps=1} \defi$, denoted as $\calY\up1,\dots,\calY\up\kstar$, that share the same signal vector $x$ as the input instance $\calY$. By Lemma~\ref{lem:instance_split} and Lemma~\ref{lem:2path_red_ext_gauss}, $\GaussResolutionTwo\paren{\calY\up {i,1}, \calY\up{i,2}, 2i}$ is an average-case reduction to $\gtensp {2i} {\tilde n} {\eps=1} {2\defi}$ for $\tilde n = n(1-o(1))$ and the signal vector $\tilde x = x_{\sqb{1:\tilde n}}$. Finally, by Lemma~\ref{lem:prelim_decrease_k}, $\calZ \gets \ReduceK\paren{\Compose\b(\set{\calZ\up i}_i\b), 2}$ is an average-case reduction to $\gtensp \kstar {n'} {\eps=1} {\defi}$ for $n' = \tilde n^2$ with a signal vector $x' = \tilde x^{\otimes 2} \odot y$ for a known $y \sim \unif\paren{\set{\pm1}^{n'}}$.

\bibliographystyle{alphaurl}
\bibliography{bib}

\appendix

\newpage 
\section{Reductions between $\kxor$ Variants}

For convenience, we restate our notation for $\kxor$ model variants. We parametrize a $\kxor$-like model by four choices. By default (i.e., model with no superscript) we use the first option in choices (2)-(4); superscripts indicate deviations from these defaults:

\begin{figure}[h]
  \centering
    
    \caption{Choices in $\kxor$ variants: the defaults are checked and correspond to $\xor k n m \delta$.}
    \label{fig:kxor_variants_app}
\end{figure}

We use the $(\eps, \defi)$ reparametrization for these models analogously to standard $\kxor$ (see \autoref{sec:ideas}) and focus on the case $\eps \geq 0$.

For example, $\gtenspvar k n \eps \defi {\Pois, \wor}$ (equiv. $\gtensvar k n m \delta {\Pois, \wor}$ with $m = n^{k(1+\eps)/2}, \delta = n^{-k\eps/4 - \defi/2}$) observes $\mpoi \sim \poi(m)$ independent entries $\set{(\alpha_j, \Y_j)}_{j \in \sqb{\mpoi}}$ for $\alpha_j \in \unif\sqb{\binomset n k}$ sampled without replacement and $\Y_j \sim \delta x_{\alpha_j} + \N(0,1)$.

\subsection{Reductions between Discrete ($\kxor$) and Gaussian ($\ktens$) Models}\label{subsec:comp_equiv}

In this section we recall known scalar reductions between Gaussian and Bernoulli variables that preserve signal level $\delta$ up to log factors. Applying these to the $\kxor, \ktens$ models yields the following general equivalence, which will allow us to work with whichever model is more convenient.

\begin{myprop}{\ref{prop:discr_gauss_equivalence_general}}\textnormal{(Computational Equivalence of $\kxor^\star$ and $\ktens^\star$)}
    Let $\star\subseteq \set{\mathsf{Pois}, \mathsf{FULL} / \mathsf{asymm}, \mathsf{wor}}$ be some fixed model choices (2)-(4) in Fig.~\ref{fig:kxor_variants_app}. The corresponding discrete $\xorpstar k n \eps \defi$ model and its Gaussian analog $\gtenspstar k n \eps \defi$ are computationally equivalent: there are $\poly(n)$-time average-case reductions ($\Gaussianize/\Discretize$, Alg.~\ref{alg:gaussianization}, \ref{alg:discretization}) for both detection and recovery in both directions. Moreover, the reductions preserve the secret signal vector $x$ precisely.
\end{myprop}

As a consequence, proving a $\poly(n)$-time average-case reduction from one of $\xorpstar k n \eps \defi$ or $\gtenspstar k n \eps \defi$ to one of $\xorpstar {k'} {n'} {\eps'} {\defi'}$ or $\gtenspstar {k'} {n'} {\eps'} {\defi'}$ gives the remaining three reductions by composition.

\subsubsection{Gaussianization}\label{subsec:gaussianize} First, we describe the guarantees for a subroutine that maps a pair of Bernoulli random variables with different means to a pair of Gaussian random variables with means differing by almost the same amount. Such entry-wise transformations of measure were devised in \cite{ma2015computational}, \cite{hajek2015computational}, and later generalized in \cite{brennan2018reducibility}. The following is a special case of Lemma 5.4 in \cite{brennan2018reducibility}, applied to $\bern {1/2-p}, \bern {1/2+p}$ random variables:

\begin{lemma}[Gaussianization of $\bern {1/2-p}, \bern {1/2+p}$]\label{lem:gaussianization}
   Let $n$ be a parameter and suppose that $p = p(n)$ satisfies $n^{-O(1)} \leq p < c < 1/2$ for some constant $c$. Then there exists a constant $C(c)$ such that for $\mu = \mu(n) \in (0,1)$ defined as
$$
\mu = \frac{Cp}{2 \sqrt{6 \log n + 2 \log (2p)^{-1}}} = \Thetat(p)
$$
the following holds. There exists a map $\mathtt{RK}_{p}$ that can be computed in $O(\poly (n))$-time and satisfies
$$
\tv\big(\mathtt{RK}_p(\bern{1/2+p}), \N(\mu,1)\big) = O_n(n^{-K}), \text{  and  } \tv\big(\mathtt{RK}_p(\bern{1/2-p}), \N(0,1)\big) = O_n(n^{-K})\,,
$$
for any constant $K$.
\end{lemma}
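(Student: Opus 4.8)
The map $\mathtt{RK}_p$ is a \emph{rejection kernel} in the sense of \cite{brennan2018reducibility}: it reads a single bit $x\in\set{0,1}$ drawn from one of $\bern{1/2-p},\bern{1/2+p}$ and outputs a real number whose law is within $n^{-K}$ total variation of the corresponding member of the target pair $\set{\N(0,1),\N(\mu,1)}$. The plan is to invoke the general rejection-kernel construction of \cite{brennan2018reducibility} (their Lemma~5.4) with base pair $(\Pcal_0,\Pcal_1)=(\bern{1/2-p},\bern{1/2+p})$ and target pair $(\Qcal_0,\Qcal_1)=(\N(0,1),\N(\mu,1))$, and to verify that the hypotheses of that lemma hold for $\mu$ as in the statement. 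Concretely the kernel proposes $y$ from (a truncation of) the mixture $\tfrac12\Qcal_0+\tfrac12\Qcal_1$, then accepts $y$ with a probability proportional to an explicit, bounded likelihood-ratio expression that depends on the observed bit $x$; the accept/reject step is repeated $\poly(n)$ times, and on the (rare) event that every trial rejects one outputs a fresh draw from $\Qcal_0$ (resp.\ $\Qcal_1$). All of this runs in $O(\poly(n))$ time, which is the only role of the $\poly$-time claim.

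\textbf{Verifying the hypotheses.} The condition that makes such a kernel have error $n^{-K}$ is, in essence, that on a set $\calA$ carrying all but $n^{-\Omega(1)}$ of the mass of both $\Qcal_0$ and $\Qcal_1$ the Radon--Nikodym derivative $\frac{d\Qcal_1}{d\Qcal_0}$ stays inside a $(1\pm O(p))$ window, which is exactly the window of the Bernoulli likelihood ratios $\frac{1/2+p}{1/2-p}=1+\Theta(p)$ and $\frac{1/2-p}{1/2+p}=1-\Theta(p)$. Here $\frac{d\Qcal_1}{d\Qcal_0}(y)=\exp(\mu y-\mu^2/2)$, so on $\calA=\set{|y|\le T}$ with $T$ taken to be (a $K$-dependent constant times) the square root in the denominator of $\mu$, i.e.\ $T\asymp\sqrt{6\log n+2\log(2p)^{-1}}$, one has
$$\frac{d\Qcal_1}{d\Qcal_0}(y)=\exp\!\big(\pm\mu T+O(\mu^2)\big)=1\pm O(\mu T),$$
and the choice
$$\mu=\frac{Cp}{2\sqrt{6\log n+2\log(2p)^{-1}}}$$
gives $\mu T=O(p)$ with the right constant $C=C(c)$, so the ratio window sits inside the Bernoulli window and Lemma~5.4 of \cite{brennan2018reducibility} applies and yields the two displayed total-variation bounds. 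The hypothesis $p\ge n^{-O(1)}$ ensures $\log(2p)^{-1}=O(\log n)$, hence $\mu=\Thetat(p)$; the hypothesis $p<c<1/2$ keeps the Bernoulli likelihood ratios bounded away from $0$ and $\infty$, which is what makes the acceptance probabilities well defined and bounded.

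\textbf{Main obstacle.} There is no conceptual difficulty; the only work is bookkeeping three error contributions: (i) the Gaussian tails beyond $T$, made $\le n^{-K}$ by the size of $T$ (note $\Pr[|\N(\mu,1)|>T]\le n^{-\Omega(1)}$ since $\mu=o(1)$); (ii) the gap between the idealized ratio $1\pm O(\mu T)$ and the exact Bernoulli ratios, reconciled by the constant $C$; and (iii) the all-reject event after $\poly(n)$ trials, which is $(1-n^{-\Omega(1)})^{\poly(n)}\le n^{-K}$ once the number of trials is a large enough polynomial in $n$. Since each of these is handled at the required level of generality inside \cite{brennan2018reducibility}, the cleanest proof is simply to instantiate Lemma~5.4 as above and record the parameter choices. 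If a self-contained argument is preferred, one instead \emph{defines} $\mathtt{RK}_p$ directly through the acceptance rule $\Pr[\text{accept }y\mid x=1]\propto \frac{d\Qcal_1}{d\Qcal_0}(y)\,\indc{|y|\le T}$ (symmetrically for $x=0$) composed with the mixture proposal and the $\poly(n)$-fold repetition, and then verifies the two bounds by precisely the tail/ratio estimates indicated above; this is the route I would take to avoid relying on an external statement.
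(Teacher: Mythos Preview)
Your proposal is correct and matches the paper's approach exactly: the paper does not prove this lemma at all but simply records it as ``a special case of Lemma~5.4 in \cite{brennan2018reducibility}, applied to $\bern{1/2-p},\bern{1/2+p}$ random variables,'' which is precisely what you do when you instantiate that lemma with the target pair $(\N(0,1),\N(\mu,1))$ and verify the likelihood-ratio window condition. Your additional bookkeeping of the three error sources and the self-contained alternative go beyond what the paper provides.
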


\begin{remark}[Gaussianization of Rademachers]
Using $-1$ in place of $0$ yields the same mapping guarantees for $\rad(\pm p)$ in place of $\bern{1/2\pm p}$. We denote the corresponding map by $\mathtt{RK}_p^{\rad}$. 
\end{remark} 

The following Lemma applies $\mathtt{RK}_p^{\rad}$  entrywise to map from $\kxor^\star$ to $\ktens^\star$ and proves one direction of Prop.~\ref{prop:discr_gauss_equivalence_general}.

\begin{lemma}[Gaussianization of $\kxor^\star$]\label{lem:kxor_gauss}
    For any constant $k, \eps, \defi$, there exists an $O\paren{\poly(n)}$-time average-case reduction (Alg.~\ref{alg:gaussianization}) for both detection and recovery from 
    $$
    \xorpstar k n \eps \defi \quad\text{to}\quad \gtenspstar k n \eps \defi\,.
    $$
    Moreover, given an input instance with a secret signal vector $x \in \set{\pm 1}^n$, Alg.~\ref{alg:gaussianization} outputs an instance with the same signal vector $x$ and the same nonzero entry locations.
\end{lemma}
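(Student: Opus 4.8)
The plan is to reduce Lemma~\ref{lem:kxor_gauss} to the scalar Gaussianization primitive $\mathtt{RK}_p^{\rad}$ of Lemma~\ref{lem:gaussianization} applied entrywise, and then propagate the per-entry total-variation error through the whole instance using the tensorization (union-bound / data-processing) properties of $\TV$ collected in Fact~\ref{tvfacts}. Concretely, the algorithm $\Gaussianize$ (Alg.~\ref{alg:gaussianization}) takes the input collection $\calY = \set{(\alpha_j, \Y_j)}_{j}$ of a $\xorpstar k n \eps \defi$ instance, leaves the index sets $\alpha_j$ (and the sampling pattern, hence the "$\star$" choices) untouched, and replaces each label $\Y_j \in \set{\pm 1}$ by $\Zt_j \coloneq \mathtt{RK}^{\rad}_{\delta}(\Y_j)$ using fresh independent randomness for each $j$, outputting $\calZ = \set{(\alpha_j, \Zt_j)}_j$. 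Since the model's noise parameter is $\delta = n^{-k\eps/4 - \defi/2}$, which for fixed $k$ and $\defi$ satisfies $n^{-O(1)} \le \delta < c < 1/2$ for some constant $c$ (possibly after an initial harmless truncation of the regime, or absorbing a $\poly\log n$ factor via Prop.~\ref{prop:map_up_to_constant}), Lemma~\ref{lem:gaussianization} applies with $\mu = \Thetat(\delta)$, giving for each $j$, conditionally on the event $\set{\Y_j \neq 0}$ and on $x_{\alpha_j}$,
$$
\tv\b(\law(\Zt_j \mid x_{\alpha_j} = +1), \N(\mu, 1)\b) = O(n^{-K}), \qquad
\tv\b(\law(\Zt_j \mid x_{\alpha_j} = -1), \N(-\mu, 1)\b) = O(n^{-K})
$$
for an arbitrarily large constant $K$; equivalently $\tv(\law(\Zt_j \mid x_{\alpha_j}), \N(\mu\, x_{\alpha_j}, 1)) = O(n^{-K})$, and the same bound holds in the null case with $x_{\alpha_j}$ replaced by the uniform sign $w_j$.

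The key steps, in order: (1) verify the parameter regime hypothesis of Lemma~\ref{lem:gaussianization}, namely $n^{-O(1)} \le \delta < c$; the upper bound is where one may need to either restrict to $\defi$ bounded below by a constant (harmless for the hardness conjectures, which concern $\defi > 0$) or note that if $\delta$ is not bounded away from $1/2$ one first applies a trivial noise-inflation step. (2) Describe $\Gaussianize$ as "keep $\set{\alpha_j}_j$ fixed, apply $\mathtt{RK}^{\rad}_{\delta}$ entrywise with independent randomness," and observe it runs in $\poly(n)$ time and preserves the nonzero-entry locations and the signal vector $x$ exactly (the map acts only on labels). (3) Couple: conditionally on the signal $x$ and on all index sets $\set{\alpha_j}_j$ (in the null case, conditionally on the random signs $\set{w_j}_j$ and $\set{\alpha_j}_j$), the output labels $\Zt_j$ are independent across $j$, so by the tensorization/subadditivity of total variation over product measures (a consequence of Item~3 in Fact~\ref{tvfacts} applied coordinate-by-coordinate, or the triangle inequality over the $m$ coordinates),
$$
\tv\B(\law(\calZ \mid x, \set{\alpha_j}), \bigotimes_{j} \N(\mu\, x_{\alpha_j}, 1)\B) \le \sum_{j} O(n^{-K}) = m \cdot O(n^{-K}) = O(n^{-K + O(1)}),
$$
since $m = n^{k(1+\eps)/2} = n^{O(1)}$; choosing $K$ large enough makes this $o(1)$. (4) Average over $x$ and $\set{\alpha_j}$ (and over the Poisson count $\mpoi$ if $\Pois \in \star$, which only affects the — random but independent — number of terms, with the same bound holding conditionally on $\mpoi$) using the convexity of $\TV$ (Item~3 of Fact~\ref{tvfacts}) to conclude $\tv(\law(\calZ), \gtenspstar k n \eps \defi) = o(1)$, which is exactly the definition of an average-case reduction (Def.~\ref{def:avg_case_points}); the null instance maps to the null instance by the identical argument since $\mathtt{RK}^{\rad}$ treats $+1$ and $-1$ symmetrically. (5) Finally, invoke Lemma~\ref{lem:red_implications_gen} (or its $\klwe$-analog) to transfer detection and recovery algorithms, noting that recovery transfers because the signal $x$ and nonzero entry locations are preserved verbatim.

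The main obstacle is essentially bookkeeping rather than conceptual: one must be careful that the "$\star$" sampling structure (Poissonized count, with/without replacement, multiset vs.\ ordered tuples) is genuinely untouched by the entrywise relabeling — this is true because $\Gaussianize$ never looks at or modifies $\set{\alpha_j}_j$, so the joint law of index sets is literally unchanged and only the conditional law of labels given index sets is perturbed — and that the $O(n^{-K})$ per-entry error survives the union bound over $m = n^{O(1)}$ entries, which forces us to take the constant $K$ in Lemma~\ref{lem:gaussianization} larger than $\log_n m + 1 = O(1)$; this is permitted since $K$ is arbitrary. A secondary subtlety is the handling of the regime $\delta$ not bounded away from $1/2$ (equivalently $\defi$ very negative); for the intended application to the hardness conjectures ($\defi > 0$) this never arises, and in general it is dispatched by a one-line preprocessing reduction that mildly inflates the noise, at the cost of a $\poly\log n$ change in $\defi$ that is absorbed by Prop.~\ref{prop:map_up_to_constant}. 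I do not expect any analytic difficulty beyond these, since all the heavy lifting — the construction and analysis of the scalar map $\mathtt{RK}^{\rad}_p$ with its $\Thetat(p)$ mean guarantee — is imported from \cite{brennan2018reducibility} via Lemma~\ref{lem:gaussianization}.
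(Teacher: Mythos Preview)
Your proposal is correct and takes essentially the same approach as the paper: apply the scalar kernel $\mathtt{RK}_\delta^{\rad}$ entrywise, union-bound the per-entry $O(n^{-K})$ TV error over the $m=n^{O(1)}$ labels, and invoke Prop.~\ref{prop:map_up_to_constant} to absorb the $\poly\log n$ shift in $\delta$; the paper's own proof is the same argument, only stated more tersely. One cosmetic slip: as written in Lemma~\ref{lem:gaussianization}, $\mathtt{RK}_p^{\rad}(\rad(-p))\approx\N(0,1)$, not $\N(-\mu,1)$, so the map is not literally sign-symmetric and your $\N(\mu\,x_{\alpha_j},1)$ should read $\N(\mu\cdot\tfrac{x_{\alpha_j}+1}{2},1)$; subtracting the known constant $\mu/2$ from every output restores the symmetric form $\N(\tfrac{\mu}{2}x_{\alpha_j},1)$, and the paper elides the very same detail.
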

\begin{algorithm}\SetAlgoLined\SetAlgoVlined\DontPrintSemicolon
    \KwIn{$\calY^{\mathsf{xor}} = \set{ \b( \alpha_j, \Y^{\mathsf{xor}}_{j}\b)}_{j\in\sqb{m}}$; $\delta > 0$.
    }
    \KwOut{$\calY^{\mathsf{tens}} = \set{ \b( \alpha_j, \Y^{\mathsf{tens}}_{j}\b)}_{j\in\sqb{m}}$.}

    \BlankLine
    $\forall j \in \sqb{m}$,
    $$
    \Y^{\mathsf{tens}}_{j} \gets 
        \mathtt{RK}_{\delta}^{\rad} \paren{\Y^{\mathsf{xor}}_{j}}
    $$

    \BlankLine
    
    Return $\calY^{\mathsf{tens}} = \set{ \b( \alpha_j, \Y^{\mathsf{tens}}_{j}\b)}_{j\in\sqb{m}}$.
    
    \caption{$\Gaussianize\paren{\calY^{\mathsf{xor}}, \delta}$}
\label{alg:gaussianization}
\end{algorithm}
\begin{proof}

    Alg.~\ref{alg:gaussianization} applies the Gaussianization kernel $\mathtt{RK}_{\delta}^{\rad}$ on every revealed entry of the input instance. In the planted case, $\calY^{\mathsf{xor}}\sim \xorstar k n m \delta$ (where $m = n^{k(\eps+1)/2}$, $\delta = n^{-k\eps/4-\defi/2}$) and by entry-wise total variation guarantees of Lemma~\ref{lem:gaussianization} with an appropriate choice of constant $K = K(k)$, the output $\calY^{\mathsf{tens}} = \Gaussianize\paren{\calY^{\mathsf{xor}}, \delta}$ satisfies
    $$
    \tv\paren{\law\paren{\calY^{\mathsf{tens}}}, \gtensstar k n m {\delta'}} \leq o_n\paren{1}\,,
    $$
    where $\delta' = \tilde{\Theta}(\delta)$. 
    In the null case, the input $\calY^{\mathsf{xor}} \sim \kxor_{\eps}^{\star, \nulll}(n)$ has i.i.d. $\rad\paren{0}$ entries. By Lemma~\ref{lem:gaussianization}, for any choice of constant $K$,
    $$\tv\paren{ \mathtt{RK}_{\delta}^{\rad}\paren{\rad(0)}, \frac12 \N\paren{\delta', 1}+ \frac12\N\paren{-\delta', 1}} \leq O_n\paren{n^{-K}}\,,$$
    for some $\delta' = \tilde\Theta\paren{\delta}$, and therefore, the output $\calY^{\mathsf{tens}} = \Gaussianize\paren{\calY^{\mathsf{xor}}, \delta}$ satisfies satisfies $$ \tv\paren{ \Gaussianize\paren{\calY^{\mathsf{xor}}, \delta}, \ktens_{\eps, \defi}^{\star, \nulll}(n)} \leq o_n(1) $$
    for sufficiently large $K$ (see Def.~\ref{def:null_models} for the null model definitions).

    As the problem parameters are preserved up to $\poly\log n$ factors, this yields (Prop.~\ref{prop:map_up_to_constant}) the desired average-case reduction from $
    \xorpstar k n \eps \defi$ to $\gtenspstar k n \eps \defi$ for both detection and recovery.
    
\end{proof}

\subsubsection{Discretization}\label{subsec:discretize}

In this section we describe a ``reverse" procedure --\emph{discretization} -- that maps Gaussian random variables to Rademacher (or Bernoulli) random variables preserving the means up to polylog factors. This is achieved by a simple thresholding ($\sign$) operation;  we let $\sign(0) \sim \unif\paren{\set{\pm1}}$.

\begin{lemma}[Discretization of $\N(\mu,1), \N(0,1), \N(-\mu, 1)$]\label{lem:discretization}
   Let $n$ be a parameter and suppose that $\mu = \mu(n)$ satisfies $0 < \mu < 1/2$. Then there exists a sequence $p = p(n) \in (0,1)$, such that $$p = \tilde\Theta(\mu)\,,$$ and a map $\Discr : \R \to \set{\pm 1}$ that can be computed in $O(1)$ time that satisfies 
    $$
    \Discr \Big(\frac12 \N(-\mu, 1) + \frac12 \N(\mu, 1)\Big) \eqdist \rad(0)\,,
    $$
    $$
    \Discr\paren{\N(\mu,1)}\eqdist \rad(p)\,, \quad\text{and} \quad \Discr\paren{\N(-\mu,1)}\eqdist \rad(-p)\,.
    $$
\end{lemma}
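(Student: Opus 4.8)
\textbf{Proof proposal for Lemma~\ref{lem:discretization}.}

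The plan is to construct the map $\Discr$ simply as the sign function with the convention $\Discr(0)\sim\unif\{\pm1\}$ (which only matters on a measure-zero set, so it is immaterial for the continuous laws here), and then verify the three claimed distributional identities by a short Gaussian CDF computation. The key observation is that if $Z\sim\N(\mu,1)$ then $\Pr[\sign(Z)=+1]=\Pr[Z>0]=\Phi(\mu)$, where $\Phi$ is the standard normal CDF. Hence $\sign(\N(\mu,1))$ is a $\pm1$ random variable with mean $2\Phi(\mu)-1$. Setting
$$
p \coloneq 2\Phi(\mu)-1 = \Pr[|\N(0,1)|\le \mu]\,,
$$
we get $\Discr(\N(\mu,1))\eqdist\rad(p)$ by definition of $\rad(\cdot)$, and by symmetry of the Gaussian, $\Pr[\sign(\N(-\mu,1))=+1]=\Phi(-\mu)=1-\Phi(\mu)$, so $\Discr(\N(-\mu,1))\eqdist\rad(-p)$. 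For the mixture, the density of $\tfrac12\N(-\mu,1)+\tfrac12\N(\mu,1)$ is symmetric about $0$, so $\sign$ of a draw is $\pm1$ with probability exactly $1/2$ each, i.e.\ $\Discr\big(\tfrac12\N(-\mu,1)+\tfrac12\N(\mu,1)\big)\eqdist\rad(0)$. Equivalently, this last identity follows by the law of total probability: $\tfrac12(1-p)+\tfrac12(1+p)/\ldots$ — more cleanly, $\tfrac12\Pr[\sign(\N(\mu,1))=+1]+\tfrac12\Pr[\sign(\N(-\mu,1))=+1]=\tfrac12\Phi(\mu)+\tfrac12(1-\Phi(\mu))=\tfrac12$.

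It remains to check the quantitative claim $p=\widetilde\Theta(\mu)$ for $0<\mu<1/2$. Since $\Phi'(t)=\varphi(t)=\frac1{\sqrt{2\pi}}e^{-t^2/2}$ and $\mu\in(0,1/2)$, we have $\varphi(t)\in[\varphi(1/2),\varphi(0)]\subset[c_1,c_2]$ for absolute constants $0<c_1<c_2$ on the interval $t\in[0,\mu]$, so by the mean value theorem
$$
p = 2\Phi(\mu)-2\Phi(0) = 2\mu\,\varphi(\xi) \quad\text{for some }\xi\in(0,\mu)\,,
$$
giving $2c_1\mu\le p\le 2c_2\mu$, i.e.\ $p=\Theta(\mu)$, which is in particular $\widetilde\Theta(\mu)$. (The tilde in the statement leaves room for the $\poly\log n$ slack that appears when $\mu=\mu(n)$ is polynomially small, but here even $\Theta(\mu)$ holds.) This completes the proof. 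I do not anticipate a genuine obstacle: the only subtlety is the $\Discr(0)$ tie-breaking convention and the fact that the stated identities are equalities in distribution rather than up to total variation, which is exactly why the randomized sign is used and why the computations above are exact rather than approximate. As with $\mathtt{RK}_p$ in Lemma~\ref{lem:gaussianization}, $\Discr$ is then applied entrywise to pass from $\ktens^\star$ to $\kxor^\star$, but that entrywise lifting is carried out in the companion lemma and algorithm ($\Discretize$, Alg.~\ref{alg:discretization}) rather than here.
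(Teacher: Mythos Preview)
Your proposal is correct and is essentially identical to the paper's own proof: define $\Discr=\sign$ with randomized tie-break at $0$, read off $p=2\Phi(\mu)-1=\Theta(\mu)$, and use symmetry for the $\N(-\mu,1)$ and mixture cases. The paper argues the $\Theta(\mu)$ bound in one line rather than via the mean value theorem, but there is no substantive difference.
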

\begin{proof}
    Let $\Discr : \R \to \set{\pm 1}$ be simply $\Discr(x) = 
        \sign(x)$, where recall $\sign(0) \sim \unif\paren{\set{\pm1}}$.
    Then, 
    $$
    \Pr_{x \sim \N(\mu,1)} \sqb{\Discr(x) = 1} = \Pr_{x \sim \N(-\mu,1)} \sqb{\Discr(x) = -1} = 1/2 + \Theta(\mu)$$ 
    and
    $$\Pr_{x \sim \frac12 \N(-\mu,1) + \frac12 \N(\mu, 1)} \sqb{\Discr(x) = 1} = 1/2\,,
    $$
    which concludes the proof. 
\end{proof}

The following Lemma maps from $\ktens^\star$ to $\kxor^\star$ and proves the remaining direction of Prop.~\ref{prop:discr_gauss_equivalence_general}.
\begin{lemma}[Discretization of $\ktens^\star$]\label{lem:ktens_discr}
    For any constant $k, \eps, \defi$, there exists an $O\paren{n^k}$-time average-case reduction (Alg.~\ref{alg:discretization}) for both detection and recovery from 
    $$
    \gtenspstar k n \eps \defi \quad\text{to}\quad \xorpstar k n \eps \defi\,.
    $$
    Moreover, given an input instance with a secret signal vector $x \in \set{\pm 1}^n$, Alg.~\ref{alg:discretization} outputs an instance with the same signal vector $x$ and the same nonzero entry locations.
\end{lemma}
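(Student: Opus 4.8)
\textbf{Proof Plan for Lemma~\ref{lem:ktens_discr} (Discretization of $\ktens^\star$).}

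The plan is to apply the scalar thresholding map $\Discr$ from Lemma~\ref{lem:discretization} entrywise to the revealed entries of the input instance, mirroring the structure of the proof of Lemma~\ref{lem:kxor_gauss}. Concretely, $\Discretize$ (Alg.~\ref{alg:discretization}) takes an input instance $\calY^{\mathsf{tens}} = \set{(\alpha_j, \Y^{\mathsf{tens}}_j)}_{j \in [m]}$ and outputs $\calY^{\mathsf{xor}} = \set{(\alpha_j, \sign(\Y^{\mathsf{tens}}_j))}_{j \in [m]}$, keeping the index sets $\alpha_j$ (and hence the nonzero-entry locations) and the secret vector $x$ unchanged. I would first fix the target constant $K = K(k)$ governing the allowable per-entry total variation error, then invoke Lemma~\ref{lem:discretization} to obtain the parameter $p = \widetilde\Theta(\delta)$ such that thresholding a Gaussian of mean $\pm\delta$ yields exactly $\rad(\pm p)$, and thresholding the symmetric mixture $\tfrac12\N(-\delta,1)+\tfrac12\N(\delta,1)$ yields exactly $\rad(0)$.

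The argument then splits into the planted and null cases. In the planted case $\calY^{\mathsf{tens}} \sim \gtensstar k n m \delta$ with $m = n^{k(\eps+1)/2}$, $\delta = n^{-k\eps/4 - \defi/2}$, each revealed entry is $\Y^{\mathsf{tens}}_j = \delta x_{\alpha_j} + \N(0,1)$, i.e.\ $\N(\delta,1)$ or $\N(-\delta,1)$ depending on the sign of $x_{\alpha_j}$. By the exact distributional identity in Lemma~\ref{lem:discretization}, $\sign(\Y^{\mathsf{tens}}_j) \disteq x_{\alpha_j}\cdot\rad(p)$, and since $\sign$ is applied entrywise to jointly independent entries, the output is exactly distributed as $\xorstar k n m {p}$ with the same signal $x$ — in fact the total variation error here is zero (unlike Gaussianization, thresholding is exact). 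In the null case $\calY^{\mathsf{tens}} \sim \ktens^{\star,\nulll}_{\eps,\defi}(n)$, each revealed entry has law $\tfrac12\N(\delta,1)+\tfrac12\N(-\delta,1)$ (or $\N(0,1)$ if one uses the $\delta=0$ null convention), and Lemma~\ref{lem:discretization} again gives $\sign$ of this is exactly $\rad(0)$, so the output matches $\kxor^{\star,\nulll}_{\eps,\defi}(n)$ exactly. One should double-check the various model variants $\star \subseteq \set{\Pois, \FULL/\asymm, \wor}$: since the discretization map acts purely entrywise and does not touch the index sets $\alpha_j$ or the sample count, it commutes with all of the sampling/index-space choices, so the reduction holds verbatim for each $\star$. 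Finally, since $p = \widetilde\Theta(\delta)$, the $(\eps,\defi)$ parameters are preserved up to $\poly\log n$ factors, so by Prop.~\ref{prop:map_up_to_constant} this is an average-case reduction from $\gtenspstar k n \eps \defi$ to $\xorpstar k n \eps \defi$ for both detection and recovery, completing one direction of Prop.~\ref{prop:discr_gauss_equivalence_general} (the other direction being Lemma~\ref{lem:kxor_gauss}).

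I do not anticipate a serious obstacle here: the map is a one-line $\sign$ operation and the key distributional facts are already packaged in Lemma~\ref{lem:discretization}. The only points requiring a little care are (i) confirming that the null-model convention used (mixture-of-Gaussians versus $\delta = 0$) is handled — both are covered, the first by the mixture identity and the second trivially since $\sign(\N(0,1)) \disteq \rad(0)$ — and (ii) noting that the runtime is $O(n^k)$ (or $O(\poly(n))$), dominated by reading and transforming the $m \le n^k$ revealed entries. The handling of $\sign(0)$ as a uniform random sign is a measure-zero event for the Gaussian inputs and does not affect any distribution.
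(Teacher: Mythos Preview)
Your proposal is correct and follows essentially the same approach as the paper: apply the entrywise $\sign$ map from Lemma~\ref{lem:discretization}, observe that planted and null inputs map exactly to the corresponding $\kxor^\star$ distributions with signal $p=\Theta(\delta)$, and conclude via Prop.~\ref{prop:map_up_to_constant}. Your write-up is in fact more detailed than the paper's (you explicitly note the exactness of the map, handle both null conventions, and check that the entrywise operation commutes with all $\star$ choices), and the vestigial ``fix $K=K(k)$'' step is unnecessary since, as you yourself observe, the TV error here is zero.
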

\begin{algorithm}\SetAlgoLined\SetAlgoVlined\DontPrintSemicolon
    \KwIn{$\calY^{\mathsf{tens}} = \set{ \b( \alpha_j, \Y^{\mathsf{tens}}_{j}\b)}_{j\in\sqb{m}}$, where $\b(\alpha_j, \Y^{\mathsf{tens}}_{j}\b) \in \binomset n k \times \R\, \forall j \in\sqb{m}$.
    }
    \KwOut{$\calY^{\mathsf{xor}} = \set{ \b( \alpha_j, \Y^{\mathsf{xor}}_{j}\b)}_{j\in\sqb{m}}$, where $\b(\alpha_j, \Y^{\mathsf{xor}}_{j}\b) \in \binomset n k \times \set{\pm1}\, \forall j \in\sqb{m}$.}

    \BlankLine
    $\forall j \in \sqb{m}$,
    $$
    \Y^{\mathsf{xor}}_{j} \gets 
        \Discr \paren{\Y^{\mathsf{tens}}_{j}}
    $$

    \BlankLine
    
    Return $\calY^{\mathsf{xor}} = \set{ \b( \alpha_j, \Y^{\mathsf{xor}}_{j}\b)}_{j\in\sqb{m}}$.
    
    \caption{$\Discretize\paren{\calY^{\mathsf{tens}}}$}
\label{alg:discretization}
\end{algorithm}

\begin{proof}
    Alg.~\ref{alg:discretization} applies the thresholding operation $\Discr$ from Lemma~\ref{lem:discretization} to every revealed entry of the input instance. From LEmma~\ref{lem:discretization}, $\calY^{\mathsf{xor}} = \Discretize\paren{\calY^{\mathsf{tens}}}$ satisfies 
    \begin{align*}
        \law\paren{\calY^{\mathsf{xor}}} = \xorstar k n m {\delta'}\,, &\qquad\text{if } \calY^{\mathsf{tens}} \sim \gtensstar k n m \delta\\
        \law\paren{\calY^{\mathsf{xor}}} = \kxor^{\star, \nulll}(n, m)\,, &\qquad\text{if } \calY^{\mathsf{tens}} \sim \ktens^{\star, \nulll}(n, m, \delta)\,,
    \end{align*}
    where $\delta' = \Theta(\delta)$. As the problem parameters are preserved up to $\poly\log n$ factors, this yields (Prop.~\ref{prop:map_up_to_constant}) an average-case reduction for both detection and recovery from $
    \gtenspstar k n \eps \defi $ to $\xorpstar k n \eps \defi\,.
    $

\end{proof}

\subsection{Poisson Entry Sampling}\label{subsec:poi_sampling}

In this section we formally prove standard reductions between $\kxor$ with exactly $m$ entries and the $\kxorpoi$ in which there are $\mpoi \sim \poi(m)$ entries; this is a widely used equivalence, see, e.g., \cite{bogdanov2025sample}. The following proposition shows the equivalence of the two settings in a variety of models from Fig.~\ref{fig:kxor_variants_app} (see \autoref{subsec:kxor_variants} for overview of $\kxor$ variants).

\begin{myprop}{\ref{prop:sampling_equiv}}\textnormal{(Computational Equivalence of $\kmodel^\star$ and $\kmodel^{\star, \Pois}$)}
    Let $\star \subseteq \set{ \mathsf{FULL}/\mathsf{asymm}, \mathsf{wor} }$ and $\kmodel \in \set{\kxor, \ktens}$ be fixed model choices (1,3,4) in Fig.~\ref{fig:kxor_variants_app}. Then the fixed-sample $\kmodel^\star_{\eps, \defi}(n)$ and its Poissonized variant $\kmodel^{\star, \Pois}_{\eps, \defi}(n)$ are computationally equivalent: there are $\poly(n)$-time average-case reductions ($\ResampleIID/\ResampleM$, Alg.~\ref{alg:resample_iid}, \ref{alg:resample_M}) for both detection and recovery in both directions. Moreover, the reductions preserve the secret signal vector $x$ precisely.
\end{myprop}

\begin{algorithm}\SetAlgoLined\SetAlgoVlined\DontPrintSemicolon
    \KwIn{
    $\calY = \set{ \b( \alpha_j, \Y_{j}\b)}_{j\in\sqb{m}}$.
    }
    \KwOut{$\calZ = \set{ \b( \alpha_j, \Zt_{j}\b)}_{j\in\sqb{\mpoi}}$.}
    
    \BlankLine
    \tcp{Sample the number of observed equations $\mpoi$ in the output instance}
    $\widetilde\mpoi \sim \poi\paren{m/2}$, $\mpoi \gets \min\paren{m, \widetilde\mpoi}$

    \BlankLine
    \tcp{Return exactly $\mpoi$ equations}
    Return $\calZ = \set{ \b( \alpha_j, \Y_{j}\b)}_{j\in\sqb{\mpoi}}$.
    
    \caption{$\ResampleIID\paren{\calY}$}
\label{alg:resample_iid}
\end{algorithm}

\begin{algorithm}\SetAlgoLined\SetAlgoVlined\DontPrintSemicolon
    \KwIn{
    $\calY = \set{ \b( \alpha_j, \Y_{j}\b)}_{j\in\sqb{\mpoi}}$.
    }
    \KwOut{$\calZ = \set{ \b( \alpha_j, \Zt_{j}\b)}_{j\in\sqb{m'}}$.}

    \BlankLine
    \tcp{Compute the number of observed equations $m'$ in the output instance}
    $\widetilde m \gets \lfloor m/2 \rfloor$, $m' \gets \min\paren{\widetilde m, \mpoi}$

    \BlankLine
    \tcp{Return exactly $m'$ equations}
    Return $\calZ = \set{ \b( \alpha_j, \Y_{j}\b)}_{j\in\sqb{m'}}$.
    
    \caption{$\ResampleM\paren{\calY, m}$}
\label{alg:resample_M}
\end{algorithm}

\begin{proof}[Proof of Prop.~\ref{prop:sampling_equiv}.] The analysis for the planted and null cases is the same for both $\ResampleM$ and $\ResampleIID$. 

\textbf{Correctness of $\ResampleIID$ (Alg.~\ref{alg:resample_iid}).} By the concentration of Poisson random variable (Prop.~\ref{prop:pois_conc}), for any $c> 0$, with probability at least $1-n^{-c}$,
$$
 \widetilde\mpoi \leq \frac12 m + C \sqrt{\frac12 m} \log n \leq m\,.
$$
Therefore, with probability at least $1-n^{-c}$, in Alg.~\ref{alg:resample_iid}, $\mpoi = \widetilde\mpoi$, and thus, $\calZ \sim \kmodel^{\star, \Pois}(n, m/2, \delta)$ if $\calY \sim  \kmodel^{\star}(n, m,\delta)$. By the properties of total variation distance (Fact~\ref{tvfacts}), 
$$
\tv\paren{ \law\paren{\calZ}, \kmodel^{\star, \Pois}(n, m/2, \delta)} \leq n^{-c}\,,
$$
yielding an average-case reduction for both detection and recovery from $ \kmodel^{\star}_{\eps, \defi}(n)$ to $\kmodel^{\star, \Pois}_{\eps, \defi}(n)$ (since it is sufficient to map parameters up to constant factors by Prop.~\ref{prop:map_up_to_constant}).

 \textbf{Correctness of $\ResampleM$ (Alg.~\ref{alg:resample_M}).} Since $\mpoi \sim \poi(m)$, by the concentration of Poisson random variable (Prop.~\ref{prop:pois_conc}), for any $c> 0$, with probability at least $1-n^{-c}$,
$$
\mpoi \geq m - C \sqrt{m} \log n \geq \frac 12 m \geq \widetilde m\,.
$$
Therefore, with probability at least $1-n^{-c}$, in Alg.~\ref{alg:resample_M}, $m' = \widetilde m = \lfloor m/2\rfloor$, and thus, $\calZ \sim \kmodel^{\star}(n, {m'}, \delta)$ if $\calY \sim \kmodel^{\star, \Pois}(n, {m}, \delta)$. By the properties of total variation distance (Fact~\ref{tvfacts}), 
$$
\tv\paren{ \law\paren{\calZ}, \kmodel^{\star}(n, {m'}, \delta)} \leq n^{-c}\,,
$$
yielding an average-case reduction for both detection and recovery from $ \kmodel^{\star, \Pois}_{\eps, \defi}(n)$ to $\kmodel^{\star}_{\eps, \defi}(n)$ (since it is sufficient to map parameters up to constant factors by Prop.~\ref{prop:map_up_to_constant}).

\end{proof}

\subsection{Repeated indices in $\kxor$}\label{subsec:repeated_indexes}

While $\kxor$ (Def.~\ref{def:kxor_family}) consists of entries $(\alpha, \Y)$ with $k$-sets of distinct indices $\alpha \in \binomset n k$, the Tensor PCA literature often considers models that allow index repetitions. For example, the standard $\delta x^{\otimes k} + \noiseG$ model, where $\noiseG$ is a symmetric tensor of $\N(0,1)$, includes the entries with repeated indices. We show that this ``full" variant ($\kxorvar \FULL$) that allows index repetitions is computationally equivalent to a collection of independent instances of the models with no index repetitions of orders $k, k-2, k-4, \dots$ that share the same signal vector $x$:

\begin{myprop}{\ref{prop:comp_equiv_full_collection}}\textnormal{(Computational Equivalence of $\kmodel^{\star, \mathsf{FULL}}$ and a Collection of $\kmodel^\star$)}
    Fix $k\in\mathbb{Z}, \eps\in\sqb{0,1}, \defi\in \R_{+}$ and any choices (1,2,4) $\star \subseteq \set{\Pois, \wor}$ and $\kmodel \in \set{\kxor, \ktens}$ in Fig.~\ref{fig:kxor_variants_app}. Then,
    $$
    \kmodel_{\eps, \defi}^{\star, \mathsf{FULL}}(n) \quad\text{and}\quad \B\{ \kmodelk {(k-2r)}^{\star}_{\eps_r, \defi} (n) \B\}_{0\leq r \leq \frac{k-1}{2}}\qquad \text{with } \eps_r = \min\set{ \frac{k\eps}{k-2r} , 1}
    $$
    are computationally equivalent: there exist poly-time average-case reductions ($\Decompose/\Compose$, Alg.~\ref{alg:decompose_full}, \ref{alg:compose_full}) both ways.\footnote{Here the second model is a collection of independent $\kmodel$ instances.} These reductions preserve the input signal vector $x\in\set{\pm1}^n$ exactly.
\end{myprop}

We formally prove Prop.~\ref{prop:comp_equiv_full_collection} for $\ktenspoi$ model (see \autoref{subsec:poi_sampling}). In $\gtensppoi k n \eps \defi$ one observes $\mpoi \sim \poi(m)$ independent samples $(\alpha, \Y)$ with $$\Y \sim \delta x_\alpha + \N(0,1)$$ ($\delta = n^{-k\eps/4 - \defi/2}$) and i.i.d. index sets $\alpha \sim \unif\sqb{\binomset n k}$. The only difference in $\gtenspvar k n \eps \defi {\Pois, \FULL}$ is that $\alpha \in \set{n}^k$ are $k$-multisets sampled from a natural measure $\mu$ from Def.~\ref{def:multiset}.

The analogous results for the $\kmodel$ variants with discrete entries and/or fixed-$m$ sampling follow by the computational equivalence of these models (see \autoref{subsubsec:equiv_noise_model}, \autoref{subsubsec:equiv_sampling_pois} and \autoref{subsec:comp_equiv}, \autoref{subsec:poi_sampling}) and the proof similarly holds for $\kmodel^{\mathsf{wor}}$ models, which sample index sets without replacement.

\begin{definition}[$k$-Multiset of $\sqb{n}$]\label{def:multiset}
    A $k$-multiset $\alpha = \set{i_1,\dots,i_k}$ is an unordered collection of (not necessarily distinct) $k$ elements in $\sqb{n}$. We denote the set of all $k$-multisets of $\sqb{n}$ as $\set{n}^k$ and define a measure $\mu$ on $\set{n}^k$ as:
    $$
    \mu\paren{\set{i_1,\dots,i_k}} \propto \b| \paren{j_1,\dots,j_k} \in \sqb{n}^k:\, \set{j_1,\dots,j_k} = \set{i_1,\dots,i_k} \b| \qquad \text{for all }\set{i_1,\dots,i_k} \in \set{n}^k\,.
    $$
\end{definition}

\subsubsection{Proof Sketch for Prop.~\ref{prop:comp_equiv_full_collection}}
The goal is to show that $\gtenspvar {k} n {\eps} \defi {\Pois, \FULL}$ is equivalent to a collection of $\gtensppoi {(k-2r)} n {\eps_r} \defi$ instances for all $r = 0, 1,\dots, \lfloor \frac {k-1} 2 \rfloor$ sharing the same signal vector $x$: 
\begin{align*}
    \gtenspvar {k} n {\eps} \defi {\Pois, \FULL} \quad\leftrightarrow\quad \B\{ \gtensppoi {(k-2r)} n {\eps_r} \defi \B\}_{0\leq r \leq \frac{k-1}{2}}\,,\qquad \text{where } \eps_r = \min\set{ \frac{k\eps}{k-2r} , 1}\,.
\end{align*}

\paragraph{From $\ktensvar {\Pois, \FULL}$ to a collection of $\ktenspoi$.} For a target order $k' = k-2r$ for some $r \in \sqb{0, \frac{k-1}{2}}$, the reduction finds in the input instance $\calY = \set{\paren{\alpha_j, \Y_j}}_{j\in\sqb{m}} \sim \gtenspvar k n \eps \defi {\Pois, \FULL}$ equations $\paren{\alpha_j, \Y_j}$ such that 
\begin{equation}\label{eq:full_index_multiset}
\alpha_j = \beta_j \cup \set{i_1,i_1,\dots,i_r,i_r} \text{ for } \beta \in \binomset n {k'} \text{ and }i_1,\dots,i_r \in \sqb{n}\,.
\end{equation}
For $\alpha_j$ as above, 
$$
\Y_j = \delta \cdot x_{\beta_j} x_{i_1}x_{i_1} \dots x_{i_r} x_{i_r} + \N(0,1) = \delta\cdot x_{\beta_j} + \N(0,1)\,,
$$
where $\delta = n^{-k\eps/4 - \defi/2}$.
Then the equation $\paren{\beta_j, \Y_j}$ is distributed as an entry of $\ktensk {k'}^{\Pois}$ with signal level $\delta$. The expected number of $\alpha_j$ satisfying Eq.~\eqref{eq:full_index_multiset} is $\approx m \cdot n^{-r} = n^{k(1+\eps)/2 - r}$. The reduction constructs the output $\ktensk {k'}^{\Pois}$ out of such equations $\paren{\beta_j, \Y_j}$ differently, depending on whether there are few ($k(1+\eps)/2 - r \leq k' \Leftrightarrow \frac{k\eps}{k-2r} \leq 1$) or many ($k(1+\eps)/2 - r > k' \Leftrightarrow \frac{k\eps}{k-2r} > 1$) of them available. 
\begin{itemize}
    \item \textbf{Sparse case: few equations ($\frac{k\eps}{k-2r} \leq 1$).} The reduction returns $m'$ equations with signal level $\delta' = \delta$, where 
    $$
    m' \approx n^{k(1+\eps)/2 - r} = n^{k'(1+\eps_r)/2}\quad\text{and}\quad \delta' = \delta = n^{-k\eps/4 - \defi/2} = n^{-k'\eps_r - \defi/2}\,,
    $$
    where $\eps_r = \frac{k\eps}{k-2r}$.
    \item \textbf{Dense case: many equations ($\frac{k\eps}{k-2r} > 1$).} Since there are more than $n^{k'}$ equations available, the reduction aggregates the ones with the same index sets by averaging the corresponding values $\Y_{\alpha_j}$. As we will have $\approx n^{k(1+\eps)/2 - r}n^{-k'} = n^{k(-1+\eps)/2 + r}$ equations for every index support, the resulting parameters are $m', \delta'$:
    $$
    m' \approx n^{k'} = n^{k'(1+\eps_r)/2}\quad\text{and}\quad \delta' \approx \delta \cdot n^{k(-1+\eps)/4 + r/2} = n^{-k'\eps_r/4 - \defi/2}\,,
    $$
    where $\eps_r = 1$.
\end{itemize}

\paragraph{From a collection of $\ktenspoi$ to $\ktensvar {\Pois, \FULL}$.} The reverse reduction combines the entries of the collections of $\ktenspoi$ instances in one $\ktensvar {\Pois, \FULL}$, where the reverse procedure to the aggregation in Case 2 above is the cloning procedure (\autoref{subsec:cloning_splitting}).

\subsubsection{Proof of Prop.~\ref{prop:comp_equiv_full_collection}}

\begin{algorithm}[h]\SetAlgoLined\SetAlgoVlined\DontPrintSemicolon
    \KwIn{$\calY = \set{ \b( \alpha_j, \Y_{j}\b)}_{j\in\sqb{\mpoi}}$, where $\forall j \in\sqb{\mpoi}$, $\b(\alpha_j, \Y_{j}\b) \in \binomset n k \times \R$
        }
    \KwOut{$\calZ\up{k-2r} = \set{ \b( \alpha_j\up{k-2r}, \Zt_{j}\up{k-2r}\b)}_{j\in\sqb{\mpoi\up{k-2r}}}$, where $\forall j \in\sqb{\mpoi\up{k-2r}}$, $\b(\alpha_j\up{k-2r}, \Zt_{j}\up{k-2r}\b) \in [n]^{k-2r} \times \R$ for $0 \leq r \leq \lfloor \frac{k-1}{2}\rfloor$.
        }

    \BlankLine
    \For{ $r \in \sqb{0, \lfloor \frac{k-1}{2}\rfloor} $}{

    \tcp{Create all possible equations of order $k' = k-2r$} 
    
    $
    \calZ = \set{\paren{\beta_j, \Y_j}:\, j \in\sqb{\mpoi},\alpha_j = \beta_j \cup \set{i_1,i_1,\dots,i_r,i_r} , \beta \in \binomset n {k'}, \text{ and }i_1,\dots,i_r \in \sqb{n}\beta }
    $
    
    \If{$\frac{k\eps}{k-2r} \leq 1$}{
    
        $\calZ \up {k-2r} \gets \calZ$
    
    }\Else{

        $\calZ \up {k-2r} \gets \emptyset$
        
        \ForAll{$\beta \in \binomset n {k-2r}$}{

            \tcp{Aggregate equations with the same index set}
            $\N_{\beta} = \set{j \in \sqb{|\calZ|}:\, \beta_j = \beta}$

            \tcp{See pf. of Prop.~\ref{prop:comp_equiv_full_collection} for def. of $N_r$}
            $\N_\beta \gets \text{random }\min\b\{N_r, |\N_\beta|\b\}\text{-subset of } \N_\beta$ 

            \tcp{Denote $\calZ = \set{\paren{\beta_j, \Zt_j}}_{j\in\sqb{|\calZ|}}$}
            $\calZ \up {k-2r} \gets \calZ \up {k-2r} \cup \set{\paren{\beta, \frac{\sum_{j\in\N_\beta} \Zt_j}{ \sqrt{|\N_\beta|}}}}$
        }

        $\calZ \up {k-2r} \gets \ToPoiFromTPCA\paren{\calZ \up {k-2r}}$
    
    }
    }
    
    \BlankLine
    Return $\set{\calZ \up {k-2r}}_r$.
    
    \caption{$\Decompose\paren{\calY, \eps}$}
\label{alg:decompose_full}
\end{algorithm}

\paragraph{ From $\ktensvar {\Pois, \FULL}$ to a collection of $\ktenspoi$: Guarantees for $\Decompose(\calY, \eps)$.}

    We show that Alg.~\ref{alg:decompose_full} is an average-case reduction from $\gtenspvar k n {\eps} \defi {\Pois, \FULL}$, with $\mpoi \sim \poi(m)$ equations to a collection $\B\{ \gtensppoi {(k-2r)} n {\eps_r} \defi \B\}_{0\leq r \leq \frac{k-1}{2}}$ with $\eps_r = \min\set{ \frac{k\eps}{k-2r} , 1}$. We show the proof for the planted case and the same argument holds for the null case. 
    
    Denote the input $\calY = \set{ \b( \alpha_j, \Y_{j}\b)}_{j\in\sqb{\mpoi}} \sim \gtenspvar k n \eps \defi {\Pois, \FULL}$ and let $m = n^{k(1+\eps)/2}, \delta = n^{-k\eps/4-\defi/2}$, so that $\mpoi \sim \poi(m)$. $\Decompose\paren{\calY, \eps}$ returns a collection of instances $\set{\calZ \up {k-2r}}_r$, which are independent, since the entries of $\calY$ are not reused for different $r$. Moreover, for every $r$ in the loop, the collection $\calZ$ contains independent pairs $\paren{\beta_j, \Zt_j}$ for $\beta_j \sim_{i.i.d.} \unif\paren{\binomset n {k-2r}}$ and 
    $$
    \Zt_j = \delta\cdot x_{\beta_j} + \N(0,1)\,.
    $$
    Note that by Poisson splitting, $\calY$ contains $\poi\paren{m \mu(\alpha_j)}$ equations with index multiset $\alpha_j \in \set{n}^k$, where $\mu$ is the measure defined in Def.~\ref{def:multiset}. Then, $\calZ\up{k-2r} = \calZ$ contains $\poi\paren{\Theta\paren{ m n^{-r} n^{-(k-2r)}}}$ entries for each $\beta \in \binomset n {k-2r}$. 

    \paragraph{Case 1: few entries in $\calZ$ ($\frac{k\eps}{k-2r}\leq 1$).} In this case, by the above, 
    $$
    \law\paren{ \calZ\up{k-2r} } = \gtenspoi {(k-2r)} {n} {m'} {\delta'}\,,
    $$
    where 
    $$
    m' = \Theta(mn^{-r}) = \Theta\paren{n^{k'(1+\eps_r)/2}}\quad\text{and}\quad \delta' = \delta = n^{-k\eps/4 - \defi/2} = n^{-(k-2r)\eps_r/4 - \defi/2}\,,
    $$
    and $\eps_r = \frac{k\eps}{k-2r}$. Since it is sufficient to map parameters up to $\poly\log n$ factors (Prop~\ref{prop:map_up_to_constant}), this concludes the proof of the average-case reduction.

    \paragraph{Case 2: many entries in $\calZ$ ($\frac{k\eps}{k-2r}> 1$).} 
    By Poisson concentration in Prop.~\ref{prop:pois_conc}, since $|\N_\beta = \set{j \in \sqb{|\calZ|}:\, \beta_j = \beta }| \sim \poi \paren{\Theta\paren{ m n^{-r} n^{-(k-2r)}}}$, with high probability simultaneously for all $\beta \in \binomset n {k-2r}$, 
    $$
    |\N_\beta| \geq \Thetat(1) \cdot m n^{-r} n^{-(k-2r)} = C \cdot n^{k(-1+\eps)/2 + r} \eqcolon N_r\,,
    $$
    where $C = \Thetat(1)$ hides factors constant in $k, r$ and $\poly\log n$. Conditioned on this event, which we denote, $S_N$, every entry of $\calZ\up{k-2r}$ is an average of $N_r$ observations of the form $\delta\cdot x_{\beta} + \N(0,1)$. Then, 
    $$
    \law\paren{\calZ\up{k-2r} | S_N } = \pcag {(k-2r)} n {\delta'}\,,
    $$
    where $$\delta' = \Thetat\paren{\delta \cdot n^{k(-1+\eps)/4 + r/2}} = \Thetat\paren{n^{-(k-2r)\eps_r/4-\defi/2}}\,,$$
    where $\eps_r = 1$. Therefore, by conditioning in total variation (Fact~\ref{tvfacts}), 
    $$\tv\paren{\law\paren{\calZ\up{k-2r}}, \pcag {(k-2r)} n {\delta'} } \leq \Pr\sqb{S_N^c} = o_n(1)\,.$$
    Since it is sufficient to map parameters up to $\poly\log n$ factors (Prop~\ref{prop:map_up_to_constant}), the average-case reduction follows from the guarantees for the $\ToPoiFromTPCA$ algorithm in Prop.~\ref{prop:sampling_equiv_eps1}.

\paragraph{ From a collection of $\ktenspoi$ to $\ktensvar {\Pois, \FULL}$: Guarantees for $\Compose(\set{\calY \up {k-2r}}_r, \eps)$.}

\begin{algorithm}[h]\SetAlgoLined\SetAlgoVlined\DontPrintSemicolon
    \KwIn{$\calY\up{k-2r} = \set{ \b( \alpha_j\up{k-2r}, \Y_{j}\up{k-2r}\b)}_{j\in\sqb{\mpoi\up{k-2r}}}$, where $\forall j \in\sqb{\mpoi\up{k-2r}}$, $\b(\alpha_j\up{k-2r}, \Y_{j}\up{k-2r}\b) \in [n]^{k-2r} \times \R$ for $0 \leq r \leq \lfloor \frac{k-1}{2}\rfloor$, $\eps \in \sqb{0,1}$
        }
    \KwOut{
    $\calZ = \set{ \b( \beta_j, \Zt_{j}\b)}_{j\in\sqb{\mpoi}}$, where $\forall j \in\sqb{\mpoi}$, $\b(\beta_j, \Zt_{j}\b) \in \binomset n k \times \R$.
        }

    \BlankLine
    \For{ $r \in \sqb{0, \lfloor \frac{k-1}{2}\rfloor} $}{
    
    \If{$\frac{k\eps}{k-2r} > 1$}{

        $\widetilde\calY\up {k-2r} \gets \ToTPCA\paren{\calY\up {k-2r}}$
        
        \ForAll{$j \in \binom n {k-2r}$}{

            \tcp{Here $\GaussClone(\cdot, A)$ is the $\GaussClone(\cdot)$ procedure repeated $\lceil \log A\rceil$ times to obtain $A$ clones.}
            $\Y^1,\dots,\Y^{A_r} \gets \GaussClone(\Y_j\up{k-2r}, A_r)$ \tcp{See pf. of Prop.~\ref{prop:comp_equiv_full_collection} for def. of $A_r$}

            $m_\beta \gets \poi\paren{A_r / 2}, m_\beta \gets \min\set{A_r, m_\beta}$

            $\widetilde\calY\up {k-2r} \gets \widetilde\calY\up {k-2r} \setminus \set{\alpha_j, \Y_j} \cup \set{\paren{\alpha_j, \Y^i}}_{i\in\sqb{m_\beta}}$
        }

        $\calZ \up {k-2r} \gets \ToPoiFromTPCA\paren{\calZ \up {k-2r}}$
    
    }\Else{
        $\widetilde\calY\up {k-2r} \gets \calY\up{k-2r}$
        }
    
    \BlankLine
    \tcp{Create a random output $k$-sparse equation out of each input one} 

    $\calZ \gets \emptyset$

    \ForAll{$j \in \sqb{\mpoi\up{k-2r}}$}{
        $\beta \gets \mu_{\alpha_j\up{k-2r}}$, where $\mu_{\alpha}, \alpha \in \binomset n k$ is law of $\mu$ conditioned on the output equal to $\alpha \cup \set{i_1,i_1,\dots,i_r,i_r}, \text{ where }i_1,\dots,i_r \in \sqb{n}\alpha$

        $\calZ \gets \calZ \cup \set{\paren{\beta, \Y_j\up{k-2r}}}$

    }
    }
    \BlankLine
    Return $\calZ$.
    
    \caption{$\Compose\paren{\set{\calY \up {k-2r}}_r, \eps}$}
\label{alg:compose_full}
\end{algorithm}

    We show that Alg.~\ref{alg:compose_full} is an average-case reduction from a collection $\B\{ \gtensppoi {(k-2r)} n {\eps_r} \defi \B\}_{0\leq r \leq \frac{k-1}{2}}$ with $\eps_r = \min\set{ \frac{k\eps}{k-2r} , 1}$ to $\gtenspvar k n {\eps} \defi {\Pois, \FULL}$. We show the proof for the planted case and the same argument holds for the null case. 
    
    Let the inputs $\calY\up{k-2r} \sim \gtensppoi k n {\eps_r} \defi$ be $\calY\up{k-2r} = \set{ \b( \alpha_j\up{k-2r}, \Y_{j}\up{k-2r}\b)}_{j\in\sqb{\mpoi\up{k-2r}}}$ and denote $m_r = n^{(k-2r)(1+\eps_r)/2}, \delta_r = n^{-(k-2r)\eps_r/4-\defi/2}$. $\Compose\paren{\set{\calY\up {k-2r}}_r, \eps}$ returns an instance $\calZ$ and we will show that
    $$
    \tv\paren{\law\paren{\calZ}, \gtensvar k n {m'} {\delta'} {\Pois, \FULL}} = o_n(1)\,,
    $$
    where $m' = \Thetat\paren{n^{k(1+\eps)/2}}, \delta' = \Thetat\paren{n^{-k\eps/4 - \defi/2}}$ match the parameters of $\gtenspvar k n {\eps} {\defi} {\Pois, \FULL}$ up to $\poly\log n$ factors and constants. By Prop.~\ref{prop:map_up_to_constant}, this proves the desired average-case reduction. 

    Consider any index set $\beta \in \binomset n k$ and assume $\beta =  \alpha \cup \set{i_1,i_1,\dots,i_r,i_r}, \text{ where }i_1,\dots,i_r \in \sqb{n}$ for some $r \in \sqb{0, \frac{k-1}{2}}$, i.e., $\alpha$ is the set of distinct indices that are included an odd number of times in $\beta$. 

    \paragraph{Case 1: $\frac{k\eps}{k-2r} \leq 1$.} In this case, by Poisson splitting, since the input instance $\calY\up {k-2r}$ contained $\poi\paren{m\cdot \binom n {k-2r}^{-1}}$ equations with the index set $\alpha$, the output instance has $\poi\paren{m_r\cdot \binom n {k-2r}^{-1} \cdot \Theta(n^{-r})}$ equations with the index set $\beta$, where 
    $$
    m_r\cdot \binom n {k-2r}^{-1} \cdot \Theta(n^{-r}) = \Theta\paren{ n^{(k-2r)(1+\eps_r)/2 - (k-2r) - r} } = \Theta\paren{n^{k(\eps-1)/2}}\,.
    $$

    \paragraph{Case 2: $\frac{k\eps}{k-2r} > 1$.} In this case we again want $\poi\paren{\Theta\paren{n^{k(\eps-1)/2}}}$ equations with index set $\beta$ in the output index and we set $A_r/2 = \Theta\paren{n^{k(\eps-1)/2}} \cdot \Theta\paren{n^r}$ for the appropriate choice of constant. Then, with high probability, the output will contain $\poi\paren{A_r/2 \cdot \Theta(n^{-r})}$ equations with this index set, each independent of the form $\delta' \cdot x_{\beta} + \N(0,1)$, where 
    $$
    \delta'  = \delta \cdot \Theta\paren{A_r^{-1/2} n^{r/2}} = \Theta\paren{n^{-(k-2r)/4 - \defi/2 - k(\eps-1)/4}} = \Theta\paren{n^{-k\eps/4 - \defi/2}}\,.
    $$
    Denote this high probability event $S_r$ and their union $S = S_1 \cap \dots \cap S_{\lfloor \frac{k-1}{2} \rfloor}$.

    The two cases above show that  
    $$
    \law\paren{\calZ | S} = \gtensvar k n {m'} {\delta'} {\Pois, \FULL}
    $$
    for the appropriate $m', \delta'$ up to $\poly\log n$ factors. By conditioning on event in total variation (Fact~\ref{tvfacts}), we conclude that 
  \begin{equation*}
    \tv\paren{\law\paren{\calZ}, \gtensvar k n {m'} {\delta'} {\Pois, \FULL}} = \Pr\sqb{S^c} = o_n(1)\,.
    \qedhere \end{equation*}

\begin{remark}[Discrete Inputs to Alg.~\ref{alg:decompose_full}, \ref{alg:compose_full}]\label{rmk:to_full_discrete_input}
    While Algorithms \ref{alg:decompose_full} and \ref{alg:compose_full}, as stated, are reductions for the Gaussian $\ktenspoi$, $\ktensvar{\Pois, \FULL}$ models, we sometimes use them as subroutines for Discrete models and/or models with the fixed number of samples. It is implicitly assumed that first a Gaussianization (\autoref{subsec:gaussianize}) and/or Poissonization (\autoref{subsec:poi_sampling}) reduction is applied, then the appropriate algorithm, and then the discretization (\autoref{subsec:discretize}) and/or Fixed-Sample (\autoref{subsec:poi_sampling}) reduction.
\end{remark}

\subsection{Symmetrization and Symmetry Breaking}\label{subsec:symm}

While the standard option for $\kxor$ sampling has index sets $\alpha \in \binomset n k$, the ``asymmetric" variant $\kxor^{\asymm}$ instead indexes equations by \emph{ordered tuples of distinct indices} $\alpha \in \sqb{n}^{\underline k}$. We show that in the case when $\alpha$ are sampled with replacement, the two models are computationally equivalent.

\begin{myprop}{\ref{prop:symm_asymm_equiv}}\textnormal{(Computational Equivalence of $\kmodel^\star$ and $\kmodel^{\star,\mathsf{asymm}}$)}
    Let $\star \subseteq \set{\Pois}$ and $\kmodel \in \set{\kxor, \ktens}$ be any fixed choices (1,2) in Fig.~\ref{fig:kxor_variants_app} and choice (4) set to sampling with replacement. Then, $\kmodel^\star_{\eps, \defi}(n)$ and the corresponding ``asymmetric" variant $\kmodel^{\star, \mathsf{asymm}}_{\eps, \defi}(n)$ are computationally equivalent: there are $\poly(n)$-time average-case reductions ($\DeSymm/\Symm$, Alg.~\ref{alg:desymm}, \ref{alg:symm}) for both detection and recovery in both directions. Moreover, the reductions preserve the secret signal vector $x$ precisely.
\end{myprop}

We prove Prop.~\ref{prop:symm_asymm_equiv} for $\kxorpoi$ model. Recall that in $\xorppoi k n \eps \defi$ (equiv. $\xorpoi k n m \delta$ with $m = n^{k(1+\eps)/2}, \delta = n^{-k\eps/4}$) there are $\mpoi \sim \poi(m)$ independent entries $(\alpha, \Y)$ with $\Y \sim x_\alpha \cdot \rad(\delta)$ and $\alpha\sim_{i.i.d.} \unif\sqb{\binomset n k}$. The only difference in $\xorpvar k n \eps \defi {\Pois, \asymm}$ is that $\alpha$ are sampled i.i.d. from $\sqb{n}^{\underline k}$ -- the set of ordered $k$-tuples of distinct indices. 

The analogous results for the variants with Gaussian entries and/or fixed-$m$ sampling follow by the computational equivalence of these models (see \autoref{subsubsec:equiv_noise_model}, \autoref{subsubsec:equiv_sampling_pois}).

\begin{algorithm}\SetAlgoLined\SetAlgoVlined\DontPrintSemicolon
    \KwIn{$\calY = \set{ \b( \alpha_j, \Y_{j}\b)}_{j\in\sqb{\mpoi}}$, where $\forall j \in\sqb{\mpoi}$, $\b(\alpha_j, \Y_{j}\b) \in \binomset n k \times \R$
        }
    \KwOut{$\calZ = \set{ \b( \alpha_j', \Zt_{j}\b)}_{j\in\sqb{\mpoi}}$, where $\forall j \in\sqb{\mpoi}$, $\b(\alpha_j', \Zt_{j}\b) \in [n]^{\underline{k}} \times \R$.
        }

    \BlankLine
    $\forall j \in \sqb{\mpoi}$, $\sigma_j \gets \unif\sqb{\calS_{k}}$ and set
        $$
        \alpha_j' \gets \sigma_j(\alpha_j)\quad\text{and}\quad \Zt_j \gets \Y_j
        $$

    \BlankLine
    Return $\calZ = \set{ \b( \alpha_j', \Zt_{j}\b)}_{j\in\sqb{\mpoi}}$.
    
    \caption{$\DeSymm\paren{\calY}$}
\label{alg:desymm}
\end{algorithm}
\begin{algorithm}\SetAlgoLined\SetAlgoVlined\DontPrintSemicolon
    \KwIn{$\calY = \set{ \b( \alpha_j, \Y_{j}\b)}_{j\in\sqb{\mpoi}}$, where $\forall j \in\sqb{\mpoi}$, $\b(\alpha_j, \Y_{j}\b) \in [n]^{\underline{k}} \times \R$
        }
    \KwOut{$\calZ = \set{ \b( \alpha_j', \Zt_{j}\b)}_{j\in\sqb{\mpoi}}$, where $\forall j \in\sqb{\mpoi}$, $\b(\alpha_j', \Zt_{j}\b) \in \binomset n k \times \R$.
        }

    \BlankLine
    $\forall j \in \sqb{\mpoi}$, set
        $$
        \alpha_j' \gets \mathsf{set}(\alpha_j)\quad\text{and}\quad \Zt_j \gets \Y_j
        $$

    \BlankLine
    Return $\calZ = \set{ \b( \alpha_j', \Zt_{j}\b)}_{j\in\sqb{\mpoi}}$.
    
    \caption{$\Symm\paren{\calY}$}
\label{alg:symm}
\end{algorithm}
\begin{proof}[Proof of Prop.~\ref{prop:symm_asymm_equiv}]
    Given an input $\calY = \set{ \b( \alpha_j, \Y_{j}\b)}_{j\in\sqb{\mpoi}} \sim \xorppoi k n \eps \defi$, the reduction $\DeSymm$ (Alg.~\ref{alg:desymm}) randomly orders the index sets $\set{\alpha_j}_{j \in \sqb{\mpoi}}$ and preserves the equation values $\Y_j$ intact. Given an input $\calY = \set{ \b( \alpha_j, \Y_{j}\b)}_{j\in\sqb{\mpoi}} \sim \xorpvar k n \eps \defi {\Pois, \asymm}$, the reduction $\Symm$ (Alg.~\ref{alg:symm}) ``removes" the index order from the observations, preserving the equation values $\Y_j$ intact.

    By Poisson splitting, the above are the desired average-case reductions in both planted and null cases. 
\end{proof}

\subsection{Entry Sampling without Replacement}\label{subsec:w_replacement}

In this section we show:

\begin{myprop}{\ref{prop:w_wor_equiv_recovery}}\textnormal{(Computational Equivalence of $\kmodel^\star$ and $\kmodel^{\star,\wor}$ for Recovery)}
    Let $\star \subseteq \set{\Pois, \FULL/\asymm}$ and $\kmodel \in \set{\kxor, \ktens}$ be some fixed model choices (1-3) in Fig.~\ref{fig:kxor_variants_app}. The two models $\kmodel^\star_{\eps, \defi}(n)$ and $\kmodel^{\star, \wor}_{\eps, \defi}(n)$ are computationally equivalent for recovery: there are $\poly(n)$-time average-case reductions ($\ToTPCA/\ToPoiFromTPCA$, Alg.~\ref{alg:to_pca_sampling}, \ref{alg:to_poi_sampling_from_pca}) for recovery in both directions. Moreover, the reductions preserve the secret signal vector $x$ precisely.
\end{myprop}
Additionally, we prove the complete computational equivalence of $\kmodel^\star$ and $\kmodel^{\star, \wor}$ (for both detection and recovery) for $\eps = 1$:
\begin{myprop}{\ref{prop:sampling_equiv_eps1}}\textnormal{(Computational Equivalence of $\kmodel^\star$ and $\kmodel^{\star,\wor}$ for $\eps = 1$)}
    Let $\star \subseteq \set{\Pois, \FULL/\asymm}$ and $\kmodel \in \set{\kxor, \ktens}$ be some fixed model choices (1-3) in Fig.~\ref{fig:kxor_variants_app} and let $\defi > 0$. The two models $\kmodel^\star_{\eps=1, \defi}(n)$ and $\kmodel^{\star, \wor}_{\eps=1, \defi}(n)$ are computationally equivalent: there are $\poly(n)$-time average-case reductions ($\ToTPCA/\ToPoiFromTPCA$, Alg.~\ref{alg:to_pca_sampling}, \ref{alg:to_poi_sampling_from_pca}) for both detection and recovery in both directions. Moreover, the reductions preserve the secret signal vector $x$ precisely.
\end{myprop}

Both propositions are proved via two simple transformations. In the direction “with replacement $\to$ without replacement”, we delete duplicates by keeping one noisy observation per distinct index set (Alg.~\ref{alg:to_pca_sampling}). In the reverse direction, we synthetically introduce duplicates using the cloning procedure from \autoref{subsubsec:cloning} (Alg.~\ref{alg:to_poi_sampling_from_pca}).

\begin{algorithm}[h]\SetAlgoLined\SetAlgoVlined\DontPrintSemicolon

    \KwIn{$\calY = \set{ \b( \alpha_j, \Y_{j}\b)}_{j\in\sqb{m}}$, $\eps \in \sqb{0,1}$;
        }
    \KwOut{$\calZ = \set{ \b( \alpha_j', \Zt_{j}\b)}_{j\in\sqb{m'}}$.
        }

    \BlankLine
    $\mathcal{I} \gets \set{\alpha_j }_{j\in m}$ \ \tcp{$\mathcal{I}$ is the set of distinct index sets appearing in $\calY$}

    $\mathcal{I} \gets \text{random }\min\set{n^{k(1+\eps)/2}/4,|\mathcal{I}|}\text{-subset of }\mathcal{I}$

    \BlankLine
    $\calZ\gets \emptyset$
    
    \tcp{Pick a random input noisy observation for each index set}
    \ForAll{$\alpha \in \mathcal{I}$}{
        $\Zt_{\alpha} \gets \unif\sqb{\set{\Y_j:\, \alpha_j = \alpha}}$

        $\calZ \gets \calZ \cup \set{(\alpha, \Zt_\alpha)}$
    }
    
    \BlankLine
    Return $\calZ$.
    
    \caption{$\ToTPCA\paren{\calY, \eps}$}
\label{alg:to_pca_sampling}
\end{algorithm}

\begin{algorithm}[h]\SetAlgoLined\SetAlgoVlined\DontPrintSemicolon

    \KwIn{$\calY = \set{ \b( \alpha_j, \Y_{j}\b)}_{j\in\sqb{m}}$;
        }
    \KwOut{$\calZ = \set{ \b( \alpha_j', \Zt_{j}\b)}_{j\in\sqb{m'}}$.
        }

    \BlankLine
    Sample $m$ elements $a_1, \dots, a_{m}$ uniformly from $\sqb{|\mathsf{A}|}$ with replacement

    \BlankLine
    Sample $s = |\set{a_i}_{i\in\sqb{m}}|$ elements $\alpha\up 1, \dots, \alpha \up {s}$ from $\set{\alpha_j}_{j\in\sqb{m}}$ at random without replacement 

    \BlankLine
    Let $b_1, \dots, b_s$ be a random enumeration of distinct elements $\set{a_i}_{i\in\sqb{m}}$

    \BlankLine
    \tcp{Compute maximum multiplicity of sampling with replacement}
    For all $i \in \sqb{s}$: $m_i = |\set{ j \in \sqb{m}:\, a_j = b_i }|$

    $m_c \gets \max_{i\in\sqb{s}} m_i$

    \BlankLine
    $\calY\up 1, \dots, \calY \up {m_c} \gets \XorClone\paren{\calY, m_c}$ \ \tcp{$\kTensClone$ if $\kmodel = \ktens$; see Alg.~\ref{alg:xor_clone}, Rmk.~\ref{rmk:clone_several}}
    
    \BlankLine
    $\calZ = \emptyset$
    \ForAll{$ i \in \sqb{s}$}{
        Let $j$ be the unique index s.t. $\alpha_j = \alpha \up i$

        $\calZ \gets \calZ \cup \set{\paren{\alpha_j, \Y_j\up t}}_{t=1}^{m_i}$
    }

    \BlankLine
    Return $\calZ$.
    
    \caption{$\ToPoiFromTPCA\paren{\calY, \mathsf{A} = \binomset n k}$}
\label{alg:to_poi_sampling_from_pca}
\end{algorithm}

\begin{proof}{of Prop.~\ref{prop:w_wor_equiv_recovery} and Prop.~\ref{prop:sampling_equiv_eps1}}
    
    \textbf{Alg.~\ref{alg:to_pca_sampling}: from $\kmodel^{\star}_{\eps, \defi}(n)$ to $\kmodel^{\star, \wor}_{\eps, \defi}(n)$.} Let $\calY = \set{ \b( \alpha_j, \Y_{j}\b)}_{j\in\sqb{m}}$ be drawn from $\kmodel^{\star}_{\eps, \defi}(n)$ (or the corresponding null distribution), where $\alpha_j$ are sampled uniformly from $\mathsf{A}$ with replacement and $\delta = n^{-k\eps/4-\defi/2}$). Alg.~\ref{alg:to_pca_sampling} forms the set $\mathcal I=\set{\alpha_j}_{j\in[m]}$ of distinct index sets appearing in $\calY$, then selects
    $m' = \min\set{n^{k(1+\eps)/2}/4, |\mathcal I|}$
    distinct indices from $\mathcal I$, and for each chosen $\alpha$ outputs one uniformly random associated observation. Hence the output $\calZ = \ToTPCA\paren{\calY, \eps}$ satisfies
    $$
    \law\paren{\calZ} = \kmodel^{\star, \wor}(n, m', \delta)\,.
    $$
    It remains to show that $m' = n^{k(1+\eps)/2}/4$ with high probability. The map $(\alpha_1,\dots,\alpha_m) \to |\mathcal I|$ is 1-Lipschitz, so, combined with the standard bounds on $\E|\mathcal I|$, we have, with high probability, $$
    |\mathcal I| \geq m / 2\,.
    $$
    In the Poissonized case ($\Pois \in \star$) we have
    $m\sim \Pois(n^{k(1+\eps)/2})$, so by Poisson concentration (Prop.~\ref{prop:pois_conc}), $m \geq n^{k(1+\eps)/2}/2$. Therefore, for all models, with high probability, $m' = n^{k(1+\eps)/2}/4$. By conditioning on this high probability event in $\tv$ (Fact~\ref{tvfacts}), we conclude 
    $$
    \tv\paren{\law(\calZ), \kmodel^{\star, \wor}(n, m'=n^{k(1+\eps)/2}/4, \delta)} = o_n(1)\,,
    $$
    which yields an average-case reduction for both detection and recovery, since it is sufficient to map parameters $m, \delta$ up to constant factors (Prop.~\ref{prop:map_up_to_constant}).

    \textbf{Alg.~\ref{alg:to_poi_sampling_from_pca}: from $\kmodel^{\star,\wor}_{\eps, \defi}(n)$ to $\kmodel^{\star}_{\eps, \defi}(n)$.} We first consider the planted case. Let $\calY = \set{ \b( \alpha_j, \Y_{j}\b)}_{j\in\sqb{m}}$ be drawn from $\kmodel^{\star,\wor}_{\eps, \defi}(n)$, where $\alpha_j$ are sampled from a set $\mathsf{A}$ without replacement and denote $m = n^{k(1+\eps)/2}$, $\delta = n^{-k\eps/4-\defi/2}$). 
    Alg.~\ref{alg:to_poi_sampling_from_pca} first samples i.i.d. labels $a_1,\dots,a_m \in \sqb{|\mathsf{A}|}$ (with replacement), computes their multiplicities $\set{m_i}$ and the maximum $m_c = \max_i m_i$, then produces $m_c$ independent clones of the input instance and uses them to realize the desired multiplicity pattern. By Lemma~\ref{lem:instance_clone}, the output $\calZ = \ToPoiFromTPCA\paren{\calY, \mathsf{A}}$ has independent entries and the correct ``with replacement" index distribution; moreover the entrywise signal level is 
    $$
    \delta' = \Thetat(\delta/\sqrt{m_c})\,.
    $$
    Therefore, in the planted case,
    $$
    \law\paren{\calZ} = \kmodel^{\star}(n, m ,\delta')\,.
    $$
    It remains to control $m_c$:
    in the standard ``$m$ balls into $|\mathsf{A}|\geq m$ bins" model, with high probability, the maximum load $m_c = \max_{i \in \sqb{s}} m_i = \Thetat(1)$, yielding $\delta' = \Thetat(\delta)$. Since it is sufficient to map parameters $m, \delta$ up to polylog factors (Prop.~\ref{prop:map_up_to_constant}), this proves that Alg.~\ref{alg:to_poi_sampling_from_pca} is an average-case reduction for recovery. 

    In the null case, cloning replicates the random $\pm 1$ signal component (see Def.~\ref{def:null_models}), so the output does not exactly match the intended null distribution. When $\eps=1$, however, the null model is $o_n(1)$-close in total variation to the distribution with $\delta = 0$ (Prop.~\ref{prop:tv_null}); in this special case the same argument yields the reduction for detection as well, completing Prop.~\ref{prop:sampling_equiv_eps1}. 
    
\end{proof}

\section{Simple Reductions}\label{sec:simple_reductions}

\subsection{Instance Cloning and Splitting}\label{subsec:cloning_splitting}

As described in \autoref{subsubsec:cloning_splitting_overview}, cloning and splitting reductions map one input instance $\calY \sim \kmodel^{\star}$ to two output instances $\calY\up1, \calY\up2 \sim \kmodel^\star$ sharing the same signal $x$. 

\subsubsection{Cloning Reduction}\label{subsubsec:cloning}

Cloning keeps the same index sets as the input but ensures the noise independence across the two outputs:

\begin{mylem}{\ref{lem:instance_clone}}\textnormal{(Instance Cloning)}
Let $\star \subseteq \set{\Pois, \FULL/\asymm, \wor}$ and $\kmodel \in \set{\kxor, \ktens}$ be some fixed model choices (1-4) in Fig.~\ref{fig:kxor_variants_app}. There exists a poly-time algorithm ($\XorClone/\kTensClone$, Alg.~\ref{alg:xor_clone}) that, given an input $\calY \sim \kmodel^{\star}(n, m, \delta)$, returns two instances $\calY\up1, \calY\up 2 \sim \kmodel^{\star}(n, m, \Thetat(\delta))$ that have independent entrywise noise, the same index sets, and the same signal vector $x\in\set{\pm1}^n$ as the input $\calY$. 
\end{mylem}

\begin{algorithm}[h]\SetAlgoLined\SetAlgoVlined\DontPrintSemicolon

    \KwIn{$\calY = \set{ \b( \alpha_j, \Y_{j}\b)}_{j\in\sqb{m}}$.
        }
    \KwOut{$\calY\up1, \calY\up 2$, where $\calY\up a = \set{ \b( \alpha_j, \Y_{j} \up a \b)}_{j\in\sqb{m}}$.}

    \BlankLine
    For each $j \in \sqb{m}$:
    $$
    \paren{\Y_{j}\up1, \Y_{j}\up2} \gets 
        \GaussClone \paren{\Y_{j}}
    $$
    
    \BlankLine
    Return $\calY\up1, \calY\up 2$, where $\calY\up a = \set{ \b( \alpha_j, \Y_{j} \up a \b)}_{j\in\sqb{m}}$ for $a \in\set{1,2}$.
    
    \caption{$\kTensClone\paren{\calY}$}
\label{alg:xor_clone}
\end{algorithm}

The reduction proving Lemma~\ref{lem:instance_clone} uses the standard Gaussian-cloning identity (e.g. \cite{brennan2018reducibility}): given $Y\sim\N(\mu,1)$, generate an independent $Z\sim\N(0,1)$ and set $Y\up 1 = \frac {Y-Z} {\sqrt{2}}$, $ Y\up 2 = \frac{Y+Z}{\sqrt{2}}$. Then $Y\up 1, Y\up2$ are independent and each is distributed as $\N(\mu/\sqrt{2},1)$. The primitive is applied entrywise for $\ktens$ inputs ($\kTensClone$ Alg.~\ref{alg:xor_clone}). For $\kxor$ inputs, we first reduce to the Gaussian analog via \autoref{subsubsec:equiv_noise_model}, apply the cloning primitive entrywise, then (if needed) map back; we denote this algorithm $\XorClone$.

We sometimes apply the cloning reduction recursively to obtain more than two clones (Rmk.~\ref{rmk:clone_several}): we denote the corresponding maps to $A > 2$ clones $\kTensClone(\calY, A)$ and $\XorClone(\calY, A)$.

\begin{proof}[Proof of Lemma~\ref{lem:instance_clone}]
    
    For the $\GaussClone$ primitive above, we have the following guarantee:
    \begin{proposition}[Gaussian Cloning]\label{prop:entry_clone}
    There exists an $O(1)$-time algorithm $\GaussClone : \R \to \R^2$ such that given $Y \sim \N(\mu, 1)$ as an input, $\GaussClone(Y)$ outputs $Y_1, Y_2 \in \R$ that satisfy 
        $$
        \law(Y_1, Y_2) = \N(\mu/\sqrt{2}, 1) \times \N(\mu / \sqrt{2}, 1)\,,
        $$
        i.e., $Y_1, Y_2$ are sampled independently from $\N(\mu/\sqrt{2}, 1)$. Note that the algorithm works simultaneously for all values $\mu \in \R$.
\end{proposition}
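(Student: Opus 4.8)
The plan is to exhibit the algorithm explicitly and verify its output distribution by a one-line change-of-variables computation. Given the input $Y \sim \N(\mu, 1)$, the algorithm $\GaussClone$ samples an independent standard Gaussian $Z \sim \N(0,1)$ (this is the only randomness it uses, and it runs in $O(1)$ time), and then outputs the pair
$$
Y_1 = \frac{Y - Z}{\sqrt{2}}, \qquad Y_2 = \frac{Y + Z}{\sqrt{2}}\,.
$$
First I would record that $(Y_1, Y_2)$ is a linear image of the Gaussian vector $(Y, Z) \sim \N\big((\mu, 0)^\top, I_2\big)$ under the orthogonal matrix $\tfrac{1}{\sqrt2}\begin{psmallmatrix} 1 & -1 \\ 1 & 1\end{psmallmatrix}$, hence $(Y_1, Y_2)$ is jointly Gaussian.

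Next I would compute the mean and covariance. The mean vector is $\tfrac{1}{\sqrt2}(\mu, \mu)^\top = (\mu/\sqrt 2, \mu/\sqrt 2)^\top$, and since the transformation is orthogonal and $(Y,Z)$ has identity covariance, the covariance of $(Y_1,Y_2)$ is again $I_2$. Therefore
$$
\law(Y_1, Y_2) = \N(\mu/\sqrt 2, 1) \times \N(\mu/\sqrt 2, 1)\,,
$$
i.e.\ $Y_1$ and $Y_2$ are independent, each distributed as $\N(\mu/\sqrt 2, 1)$. Crucially, the algorithm never uses the value of $\mu$: it only manipulates the observed scalar $Y$ and the freshly drawn $Z$, so the same procedure works simultaneously for every $\mu \in \R$, which is exactly what is needed to apply it entrywise to an instance whose coordinate-wise means $\delta x_\alpha$ take both signs (and, in the null case, may be arbitrary).

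There is essentially no obstacle here — the statement is a standard rotation-invariance fact about Gaussians, and the proof is the covariance computation above. The only thing worth being careful about is the normalization constant $1/\sqrt2$: I would double-check that it is chosen so that each output has variance exactly $1$ (not $2$) while the mean shrinks by the factor $1/\sqrt2$, since this is the version of the identity invoked downstream in Lemma~\ref{lem:instance_clone} (and, via recursion, in Remark~\ref{rmk:clone_several}, where $\lceil \log A\rceil$ iterations yield $A$ clones with signal shrunk by $\Theta(\sqrt A)$). Lemma~\ref{lem:instance_clone} then follows by applying $\GaussClone$ independently to each revealed entry $\Y_j$ of the input (for $\kxor$ inputs, after first Gaussianizing via \S\ref{subsubsec:equiv_noise_model} and discretizing the outputs back): the index sets $\alpha_j$ are copied verbatim into both $\calY\up1,\calY\up2$, the per-entry noises across the two copies are independent by Proposition~\ref{prop:entry_clone}, and the shared signal vector $x$ is untouched, so both outputs are distributed as $\kmodel^\star(n,m,\Thetat(\delta))$ with the same secret $x$.
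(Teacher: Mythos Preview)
Your proof is correct and matches the paper's approach exactly: the paper gives precisely the same construction $Y_1=(Y-Z)/\sqrt2$, $Y_2=(Y+Z)/\sqrt2$ (citing \cite{brennan2018reducibility}) and states the conclusion without further justification, so your orthogonal-rotation verification is more detailed than what the paper provides.
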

For $\ktens$ inputs, $\kTensClone$ (Alg.~\ref{alg:xor_clone}) applies $\GaussClone$ entrywise, so the Lemma statement follows from Prop.~\ref{prop:entry_clone}.

For $\kxor$ inputs, $\XorClone$ first applies Gaussianization, then $\kTensClone$, and then Discretization. The lemma follows from Prop.~\ref{prop:entry_clone} and Gaussianization/Discretization guarantees in \autoref{subsubsec:equiv_noise_model}.
\end{proof}

\subsubsection{Splitting Reduction}\label{subsubsec:splitting}

Splitting makes the two outputs fully independent instances (independent indices and independent noise), while preserving the same signal $x$.

\begin{mylem}{\ref{lem:instance_split}}\textnormal{(Instance Splitting)} Let $\star \subseteq \set{\Pois, \FULL/\asymm}$ and $\kmodel \in \set{\kxor, \ktens}$ be some fixed model choices (1-3) in Fig.~\ref{fig:kxor_variants_app} and choice (4) set to sampling with replacement. There exists a poly-time algorithm ($\kTensSplit$) that, given an input $\calY \sim \kmodel^{\star}(n, m, \delta)$, returns two independent instances $\calY\up1, \calY\up 2 \sim \kmodel^{\star}(n, \Theta(m), \delta)$ that share the same signal vector $x\in\set{\pm1}^n$ as the input $\calY$. 
\end{mylem}
Remarks~\ref{rmk:clone_const}, \ref{rmk:clone_several} similarly apply. Lemma~\ref{lem:instance_split} follows from a simple entry-allocation argument in two cases:
\begin{itemize}
    \item If $\Pois \not\in \star$, we randomly partition $m$ input samples into two groups of size $m/2$ to form the outputs $\calY\up1, \calY\up 2$. This achieves a reduction to two instances of $\kmodel^{\star}(n, m/2, \delta)$.
    \item If $\Pois \in \star$, we allocate each of $m$ input entries to $\calY\up1 $ or $\calY\up 2$ independently at random. This yields a reduction to two instances of $\kmodel^{\star}(n, m/2, \delta)$ by Poisson splitting. 
\end{itemize}
We denote the reductions above as $\kTensSplit(\calY)$. Similarly to cloning, we write $\kTensSplit(\calY, A)$ to denote a reduction that does splitting several times to output $A$ instances (Rmk.~\ref{rmk:clone_several}).

\subsection{Reductions that Change the Number of Revealed Equations $m$}\label{subsec:entry_adjust_app}

We prove simple reductions that modify the number of samples in $\kmodel^\star(n,m,\delta)$ from $m$ to $m'$ by either (1) injecting pure noise observations ($\DenseRed$), or (2) removing some of the samples ($\SparseRed$). Lemma~\ref{lem:m_adjust} and Corollary~\ref{cor:m_adjust} track the resulting change in signal level $\delta$ (or, in $\eps,\defi$ parametrization, the deficiency $\defi$).

\begin{mylem}{\ref{lem:m_adjust}}\textnormal{(Adjusting the Number of Samples $m$)}
    Let $\star \subseteq \set{\Pois, \FULL/\asymm, \wor}$ and $\kmodel \in \set{\kxor, \ktens}$ be some fixed model choices (1-4) in Fig.~\ref{fig:kxor_variants_app} and let $k, n \in \Z^{+}$, $\delta \in \R$ and $m,m' \in\Z^{+}$ be parameters. There exists a poly-time average-case reduction ($\DenseRed/\SparseRed$, Alg.~\ref{alg:dense_red}, \ref{alg:sparse_red}) for both detection and recovery $$\text{from}\quad\kmodel^\star(n,m,\delta)\quad\text{to}\quad\kmodel^\star(n,m',\delta')\,, \qquad\text{with }\delta' = \begin{cases}
        \delta\,, &m' \leq m\\
        \Thetat(\delta\cdot m/m')\,, &m' > m\,. \end{cases}$$
        Moreover, the reduction preserves the secret signal vector $x$ precisely.
\end{mylem}
In the $\eps, \defi$ parametrization this yields:
\begin{mycor}{\ref{cor:m_adjust}}
    Let $\star \subseteq \set{\Pois, \FULL/\asymm, \wor}$ and $\kmodel \in \set{\kxor, \ktens}$ be some fixed model choices (1-4) in Fig.~\ref{fig:kxor_variants_app} and let $k, n \in \Z^{+}$, $\eps, \eps' \in [0,1]$, and $\defi \in \R$ be parameters. There exists a poly-time average-case reduction ($\DenseRed/\SparseRed$, Alg.~\ref{alg:dense_red}, \ref{alg:sparse_red}) for both detection and recovery 
\begin{align*}&\text{from}\quad\kmodel^\star_{\eps,\defi}(n)\quad\text{to}\quad\kmodel^\star_{\eps',\defi'}(n)\,, \quad \text{where } \defi' = \defi + {k|\eps-\eps'|/2}\,.
    \end{align*}
    Moreover, the reduction preserves the secret signal vector $x$ precisely.
\end{mycor}

As stated, $\DenseRed$ (Alg.~\ref{alg:dense_red}) and $\SparseRed$ (Alg.~\ref{alg:sparse_red}) accept and return discrete Poissonized $\kxor$ variants, i.e., $\kmodel^\star$ with $\Pois\in \star$ and $\kmodel = \kxor$. The existence of analogous reductions for fixed-$m$ sampling and Gaussian $\ktens$ follows from equivalence of options in choices (1,2) (\autoref{subsubsec:equiv_noise_model}, \autoref{subsubsec:equiv_sampling_pois}).

\begin{algorithm}[h]\SetAlgoLined\SetAlgoVlined\DontPrintSemicolon

    \KwIn{$\calY = \set{ \b( \alpha_j, \Y_{j}\b)}_{j\in\sqb{\mpoi}}$, $m < m' \in \Z^{+}$
        }
    \KwOut{$\calZ = \set{ \b( \alpha_j', \Zt_{j}\b)}_{j\in\sqb{\mpoi'}}$.
        }

    \BlankLine
    $\Delta_m \gets \poi(m'-m)$, $\mpoi' \gets \mpoi + \Delta_m$ 
    
    \BlankLine
    \tcp{Create new $\poi(m'-m)$ pure noise entries}
    $\forall j = \mpoi+1,\dots, \mpoi'$,
        $$
        \alpha_j \gets \unif\sqb{\binomset n k}\text{ and } \Y_{j} \gets \rad(0)
        $$

    \BlankLine
    $\sigma \gets \calS_{\mpoi'}$ \tcp{Sample a random $\mpoi'$-permutation}

    \BlankLine
    $\forall j \in \sqb{\mpoi'}$, 
        $$
        \alpha_j' \gets \alpha_{\sigma(j)}\quad\text{and}\quad\Zt_j \gets \Y_{\sigma(j)}
        $$

    \BlankLine
    Return $\calZ = \set{ \b( \alpha_j', \Zt_{j}\b)}_{j\in\sqb{\mpoi'}}$.
    
    \caption{$\DenseRed\paren{\calY, m, m'}$}
\label{alg:dense_red}
\end{algorithm}

\begin{algorithm}[h]\SetAlgoLined\SetAlgoVlined\DontPrintSemicolon
    \KwIn{$\calY = \set{ \b( \alpha_j, \Y_{j}\b)}_{j\in\sqb{\mpoi}}$, $m > m' \in \Z^{+}$
        }
    \KwOut{$\calZ = \set{ \b( \alpha_j', \Zt_{j}\b)}_{j\in\sqb{\mpoi'}}$.
        }

    \BlankLine
    $j' \gets 1$
    
    \BlankLine
    \For{$j \in \sqb{\mpoi}$}{
        $I \gets \ber(m'/m)$
        
        \If{$I = 1$}{$$
        \alpha_{j'}' \gets \alpha_j\quad\text{and}\quad \Zt_{j'} \gets \Y_j
        $$
        $j' \gets j' + 1$
        }
    }

    \BlankLine
    $\mpoi' \gets j' - 1$

    \BlankLine
    Return $\calZ = \set{ \b( \alpha_j', \Zt_{j}\b)}_{j\in\sqb{\mpoi'}}$.
    
    \caption{$\SparseRed\paren{\calY, m, m'}$}
\label{alg:sparse_red}
\end{algorithm}

\begin{proof}[Proof of Lemma~\ref{lem:m_adjust}.]
The reduction algorithm is either $\DenseRed$ (Alg.~\ref{alg:dense_red}) or $\SparseRed$ (Alg.~\ref{alg:sparse_red}) depending on whether $m < m'$ or $m > m'$. We show that for $\Pois\in \star$ and $\kmodel = \kxor$ (i.e. the first option in choice (1) and the second option in choice (2) of Fig.~\ref{fig:kxor_variants_app}) these reductions map $\kmodel^\star(n,m,\delta)$ to $\kmodel^\star(n,m',\delta')$ for $\delta'$ as in Lemma~\ref{lem:m_adjust}; the analogous results for $\Pois \not \in \star$ and $\kmodel = \ktens$ follow from the option equivalence in choices (1,2) (\autoref{subsubsec:equiv_noise_model}, \autoref{subsubsec:equiv_sampling_pois}).

First, we demonstrate that the corresponding algorithm correctly maps the planted distributions. Let $\calY \sim \kxor^{\star}( n, m, \delta)$ for $\Pois \in \star$.
    \begin{itemize}
        \item \textbf{Case 1: $m < m'$.} 
        Let $\calZ = \DenseRed\paren{\calY, m, m'}$. In $\calZ = \set{ \b( \alpha_j', \Zt_{j}\b)}_{j\in\sqb{\mpoi'}}$, index sets $\alpha_j' \sim_{i.i.d.} \unif\sqb{\binomset n k}$. Moreover, the distribution of $\calZ$ is such that $\mpoi' \sim \poi(m')$ and $\forall j\in \sqb{\mpoi'}$, 
        $$
        Z_j \sim \begin{cases}
            x_{\alpha_j'} \cdot \rad(\delta) &\text{w. prob. } m/m'\\
            \rad(0) &\text{w. prob. } 1-m/m'\,.
        \end{cases}
        $$
        We conclude that $\calZ \sim \kxor^\star(n, {m'}, {\delta'})$ for $\delta' = \delta \cdot m/m'$.
        \item \textbf{Case 2: $m > m'$.} Let $\calZ = \SparseRed\paren{\calY, m, m'}$. In $\calZ = \set{ \b( \alpha_j', \Zt_{j}\b)}_{j\in\sqb{\mpoi'}}$, index sets $\alpha_j' \sim_{i.i.d.} \unif\sqb{\binomset n k}$. Moreover, the distribution of $\calZ$ is such that $\mpoi' \sim \poi(m')$ and $\forall j\in \sqb{\mpoi'}$, 
        $$
        Z_j \sim
            x_{\alpha_j'} \cdot \rad(\delta)\,.
        $$
        We conclude that $\calZ \sim \kxor^\star(n, {m'}, {\delta})$.
    \end{itemize}
    In the null case of $\calY \sim \kxor^{\star, \nulll}(n,m)$ (Def.~\ref{def:null_models}), by the same analysis, $\calZ \sim \kxor^{\star, \nulll}(n,m')$, which yields an average-case reduction for detection and recovery. 
\end{proof}

\subsection{Reduction Decreasing Order $k$ to $k/a$}\label{subsec:decrease_k_prelim_app}

We prove a reduction that maps $\kmodel^\star$ to $\kmodelk {(k/a)}^\star$ for some $a|k$ in the case $\FULL \in \star$. The reduction uses a change of variables, modifying the signal $x$ to $x^{\otimes a} \odot y$ for a known $y\sim\unif\sqb{\set{\pm1}^{n^a}}$; the assumption $\FULL \in \star$ is crucial for a correct distributional mapping.

\begin{mylem}{\ref{lem:prelim_decrease_k}}\textnormal{(Reducing Order $k$ to $k/a$)}
    Let $\star \subseteq \set{\Pois, \wor}$ and $\kmodel \in \set{\kxor, \ktens}$ be some fixed model choices (1,2,4) in Fig.~\ref{fig:kxor_variants_app} and let $k,n \in \Z^{+}, \eps \in [0,1], \defi \in \R$ be parameters.
    For every $a \in \Z^{+}, a | k$, there is an average-case reduction for both detection and recovery ($\ReduceK$, Alg.~\ref{alg:prelim_decrease_k}) $$\text{from}\quad\kmodel^{\star,\FULL}_{\eps,\defi}(n)\quad\text{to}\quad \kmodelk {(k/a)}^{\star,\FULL}_{\eps,\defi/a}(n')\,,\quad n' = n^a\,.$$
    Moreover, Alg.~\ref{alg:prelim_decrease_k} maps an instance with a secret signal vector $x\in \set{\pm 1}^n$ to one with a signal vector $x' = x^{\otimes a}\odot y$ for a known $y\sim \unif\sqb{\set{\pm 1}^{n'}}$. 
\end{mylem}
The reduction in Lemma~\ref{lem:prelim_decrease_k}, in combination with reductions in Prop.~\ref{prop:comp_equiv_full_collection} that decompose an instance of $\kmodel^{\star, \FULL}$ into standard (distinct index) instances of orders $k, k-2, k-4, \dots$, yields the following order-reducing reductions. For convenience, we also refer to the reduction in Cor.~\ref{cor:prelim_decrease_k} as $\ReduceK$ (Alg.~\ref{alg:prelim_decrease_k}).
\begin{mycor}{\ref{cor:prelim_decrease_k}}\textnormal{(of Lemma~\ref{lem:prelim_decrease_k})}
    Let $\star \subseteq \set{\Pois, \wor}$ and $\kmodel \in \set{\kxor, \ktens}$ be some fixed model choices (1,2,4) in Fig.~\ref{fig:kxor_variants_app} and let $k,n \in \Z^{+}, \eps \in [0,1], \defi \in \R$ be parameters.
    For every $a \in \Z^{+}, a | k$ and $k' \in \set{k/a, k/a-2, k/a-4,\dots} \cap \Z_{>0}$, there is an average-case reduction for both detection and recovery $$\text{from}\quad\kmodel^{\star,\FULL}_{\eps,\defi}(n)\quad\text{to}\quad \kmodelk {k'}^{\star}_{\eps,\defi/a}(n')\,,\quad n' = n^a\,.$$
    Moreover, Alg.~\ref{alg:prelim_decrease_k} maps an instance with a secret signal vector $x\in \set{\pm 1}^n$ to one with a signal vector $x' = x^{\otimes a}\odot y$ for a known $y\sim \unif\sqb{\set{\pm 1}^{n'}}$. 
\end{mycor}

\begin{algorithm}[h]\SetAlgoLined\SetAlgoVlined\DontPrintSemicolon
    \KwIn{$\calY = \set{ \b( \alpha_j, \Y_{j}\b)}_{j\in\sqb{m}}$, $a\in\Z:\, a|k$
        }
    \KwOut{$\calZ = \set{ \b( \alpha_j', \Zt_{j}\b)}_{j\in\sqb{m}}$.
        }

    \BlankLine
    $y \gets \unif \sqb{\set{\pm 1}^{n^a}}$

    \BlankLine
    \tcp{Enumerate elements of $\sqb{n}^a$}
    Let $N: \sqb{n}^a \to \sqb{n^a}$ be any bijection

    \BlankLine
    $\calZ \gets \emptyset$
    
    \ForAll{$j \in \sqb{m}$}{
        $\alphab_j \gets \text{ random ordering of }\alpha_j$

        \tcp{Group indexes of $\alphab_j$  into $k/a$ groups of size $a$}
        $\alpha_j' \gets\set{N\paren{ \paren{\alphab_j}_{ai+1},\dots, \paren{\alphab_j}_{ai + a} }}_{i = 0}^{k/a-1}$ \ \tcp{Here $\set{\cdot}$ denotes multiset}

        $\Zt_j \gets \Y_j \cdot \prod_{i \in \alpha_j'} y_i$

        $\calZ \gets \calZ \cup \{(\alpha_j', \Zt_j)\}$
    }

    \BlankLine
    Return $\calZ$
    
    \caption{$\ReduceK\paren{\calY, a}$}
\label{alg:prelim_decrease_k}
\end{algorithm}

\begin{proof}[Proof of Lemma~\ref{lem:prelim_decrease_k}.] The reduction is $\ReduceK$ (Alg.~\ref{alg:prelim_decrease_k}); we show the planted case and the null case proof is the same. 

    Let $\calY = \set{ \b( \alpha_j, \Y_{j}\b)}_{j\in\sqb{m}} \sim \kmodel^{\star,\FULL}_{\eps,\defi}(n)$, where $m = n^{k(1+\eps)/2}$, and $\calZ = \ReduceK\paren{\calY, a}$ be the output of Algorithm~\ref{alg:prelim_decrease_k}. For every equation $\paren{\alpha, \Y} \in\calY$, the reduction creates one entry in $\calZ$ as follows. Given an index multiset $\alpha = \set{i_1,\dots,i_k}$, the coefficients are first randomly ordered into a tuple $\paren{i_1,\dots,i_k}$ and then grouped into new indices $(i_1,\dots,i_a), (i_{a+1},\dots,i_{2a}), \dots$. Given that $\alpha \sim \mu\paren{\set{n}^k}$ (Def.~\ref{def:multiset}), the tuple $\paren{i_1,\dots,i_k}$ is uniformly random on $\sqb{n}^k$, so the grouped tuple $\paren{(i_1,\dots,i_a), (i_{a+1},\dots,i_{2a}), \dots}$ is uniformly random on $\sqb{\sqb{n}^a}^{k/a}$. Therefore, the index multisets of $\calZ$ are drawn independently from $\mu\paren{\set{\sqb{n}^a}^{k/a}}$.

    Moreover, for an input equation $\paren{\alpha, \Y}$ and the new index multiset $\alpha'$ in $\calZ$, if $\kmodel = \kxor$, the corresponding
    $$
    \Zt = \Y \cdot \prod_{i \in \alpha'} y_i \sim \prod_{i \in \alpha'} \underbrace{\paren{x^{\otimes a}\odot y}_i}_{x'_i} \cdot \rad\paren{\delta}\,,
    $$
    where $\delta = n^{-k\eps/4 - \defi/2}$ is the noise parameter. If $\kmodel = \ktens$, $
    \Zt = \Y \cdot \prod_{i \in \alpha'} y_i \sim \delta \prod_{i \in \alpha'} {x'_i} +\N(0,1)
    $. Therefore, 
    $$
    \law\paren{\calZ} = \kmodelk {(k/a)}^{\star,\FULL}(n^a, m ,\delta)\,,
    $$
    where 
    $$
    m = n^{k(1+\eps)/2} = \paren{n^a}^{k/a\cdot(1+\eps)/2}\quad\text{and}\quad \delta = n^{-k\eps/4 - \defi/2} = \paren{n^a}^{-k/a\cdot\eps/4 - \defi/a/2}\,.
    $$
    This concludes the proof of average-case reduction to $\kmodelk {(k/a)}^{\star,\FULL}_{\eps,\defi/a}(n^a)$.
\end{proof}

\section{Definitions of Average-Case Reductions}\label{sec:avg_case_defs}

In this section we formally define average-case reductions. The definition varies depending on the task, so we separately define average-case reductions for detection and for recovery. Note that all the reductions we prove in this paper are simultaneously reductions for both tasks.

For a more general overview of average-case reductions (both detection and recovery) and parametrization choices for statistical models see, e.g. \cite{bresler2023algorithmic, bresler2025computational}. Here we present formal definitions specific to the tensor problems that are the focus of this paper: $\kxor$ and $\ktens$. 

\subsection{Planted $\kxor$ and $\ktens$ Models}

Recall that for every set of parameters $\paren{k, \eps, \defi}$, a dimensionality parameter $n$, and a signal vector $x \in \set{\pm 1}^n$, the (planted) distributions for $\kxor$ and $\ktens$ are defined as follows:

\begin{mydef}{\ref{def:kxor_family}}\textnormal{(Planted Distributions for $\kxor$, $\ktens$)}
    Fix $k, \eps, \defi$ and let $m = n^{k(\eps+1)/2}, \delta = n^{-k\eps/4 - \defi/2}$ and $x\sim \unif\paren{\set{\pm 1}^n}$.
    \begin{enumerate}
    \item For $\calY \sim \xorp k n \eps \defi$ we observe $m$ pairs $\set{\paren{\alpha_j, \Y_j}}_{j\in\sqb{m}}$, where $\alpha \sim \unif\sqb{\binomset n k}$ are independent and for each $j\in\sqb{m}$,
        \begin{equation}\label{eq:def_kxor_2}
        \Y_{j} = 
            x_{\alpha}\cdot \noise_{j}\,,\tag{$\kxor$}
        \end{equation}
        where $\noise_{j} \sim \rad(\delta)$.

    \item For $\calY \sim \gtensp k n \eps \defi$ we observe $m$ pairs $\set{\paren{\alpha_j, \Y_j}}_{j\in\sqb{m}}$, where $\alpha \sim \unif\sqb{\binomset n k}$ are independent and for each $j\in\sqb{m}$,
        \begin{equation}\label{eq:def_ktens2}
        \Y_{j} = 
            \delta \cdot x_{\alpha} + \noiseG_{j}\,,\tag{$\ktens$}
        \end{equation}
        where $\noiseG_{j} \sim \N(0,1)$.
\end{enumerate}
\end{mydef}
\begin{remark}
    The $(\eps, \defi)$ parametrization above is most natural for $\eps \geq 0$ and is motivated by the computational threshold in this regime (see \autoref{subsec:complexity_profiles}). For the case $\eps < 0$, one can set by default $\defi = - k \eps /2$ (so that $\delta = \Theta(1)$) and use the definitions below without modification.
\end{remark}

For convenience denote $\Omega_n^k$ to be the space of $\kxor(n), \ktens(n)$ instances.

For every parameter tuple $\paren{k, \eps, \defi}$, $\set{\xorp k n \eps \defi}_n$ and $\set{\gtensp k n \eps \defi}_n$ form a \emph{sequence of planted distributions}, i.e. distributions that depend on a signal vector $x$. Both models describe \textit{families} of such sequences, as parameters $\paren{k, \eps, \defi}$ vary. Each fixed point of $\paren{k, \eps, \defi}$ corresponds to a specific planted problem, whose complexity might differ from the one of a different planted problem with parameters $\paren{k', \eps', \defi'}$. 

\subsection{Statistical Tasks: Detection and Recovery}\label{subsec:tasks}

\paragraph{Detection Task.} 
Fix parameters $\paren{k,\eps,\defi}$. The \emph{detection task} is to decide whether a given sample instance $\calY$ was drawn from the planted distribution $\xorp k n \eps \defi$ (or $\gtensp k n \eps \defi$) or a natural \emph{null (pure noise) distribution} that contains no structured signal $x^{\otimes k}$. 

We define the null distributions for $\xorp k n \eps \defi$ and $\gtensp k n \eps \defi$ by replacing the structured signal $x_{\alpha_j}$ in equation $j$ with a random sign $w_{j} \sim \unif\b(\set{\pm 1}\b)$. A common alternative is to set the SNR parameter $\delta$ to be $0$ in Eq. \eqref{eq:def_kxor_2}, \eqref{eq:def_ktens2}. For all discrete models $\xorp k n \eps \defi$, this coincides exactly with our null; for canonical Tensor PCA ($\gtensp k n {\eps=1} \defi$) the two versions are statistically indistinguishable in the conjectured hard regime $\defi > 0$ (i.e., their total variation distance is $o(1)$, see Prop.~\ref{prop:tv_null}). For $\ktens$ with all other densities $\eps\in[0,1)$, the two choices of null distributions are distinct (i.e., their total variation distance is $\Omega(1)$).

The formal null-model Definition~\ref{def:null_models} is the same as the planted Definition~\ref{def:kxor_family} above with the structured signal $x_{\alpha_j}$ in equation $j$ replaced by a random sign $w \in \set{\pm 1}$:

\begin{definition}[Null Models for $\kxor, \ktens$]\label{def:null_models}
Fix $k, \eps, \defi$ and let $m = n^{k(\eps+1)/2}, \delta = n^{-k\eps/4 - \defi/2}$.
    \begin{enumerate}
    \item For $\calY \sim \xorp k n \eps \defi$ we observe $m$ pairs $\set{\paren{\alpha_j, \Y_j}}_{j\in\sqb{m}}$, where $\alpha \sim \unif\sqb{\binomset n k}$ are independent and for each $j\in\sqb{m}$,
        \begin{equation}\label{eq:def_kxor_2_null}
        \Y_{j} = 
            w_{j}\cdot \noise_{j} \eqdist \rad(0)\,,\tag{$\kxor$}
        \end{equation}
        where $w_j \sim \unif\sqb{\set{\pm1}}$ and $\noise_{j} \sim \rad(\delta)$.

    \item For $\calY \sim \gtensp k n \eps \defi$ we observe $m$ pairs $\set{\paren{\alpha_j, \Y_j}}_{j\in\sqb{m}}$, where $\alpha \sim \unif\sqb{\binomset n k}$ are independent and for each $j\in\sqb{m}$,
        \begin{equation}\label{eq:def_ktens_2_null}
        \Y_{j} = 
            \delta \cdot w_{j} + \noiseG_{j}\,,\tag{$\ktens$}
        \end{equation}
        where $w_j \sim \unif\sqb{\set{\pm1}}$ and $\noiseG_{j} \sim \N(0,1)$.
\end{enumerate}
\end{definition}
As stated before, for Tensor PCA, the above null is close in total variation to the standard ``$\delta=0$" null, hence the corresponding detection tasks are equivalent. 
\begin{proposition}\label{prop:tv_null}
    Let $\calY \sim \gtenspnull k n {\eps=1} \defi$ with $\defi > 0$ and $\tilde\calY$ be such that for each $j\in \sqb{m}$ in the Def.~\ref{def:null_models} above,
    $$
    \Y_j = \noiseG_j\,,
    $$
    where $\noiseG_{j} \sim \N(0,1)$ are i.i.d.. Then for any $\defi > 0$,
    $$
    \tv\paren{\law(\calY), \law(\tilde\calY)}\to 0 \quad \text{as }n\to\infty\,.
    $$
\end{proposition}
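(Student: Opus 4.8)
The plan is to compare, entry by entry, the null distribution $\gtenspnull k n {\eps=1} \defi$ (in which $\Y_j = \delta w_j + \noiseG_j$ for an independent uniform sign $w_j$) with the pure-Gaussian distribution $\widetilde\calY$ (in which $\Y_j = \noiseG_j$). Since $\eps = 1$, the index sets $\alpha_j$ are a fixed deterministic part of the model (they range over all $\binom n k$ distinct-index subsets, possibly after the cosmetic adjustments of \autoref{subsubsec:equiv_w_wor_replacement}, Cor.~\ref{cor:tensor_pca_wr_equiv}), and the two distributions differ only in the scalar marginals of the $\Y_j$. Because the entries are independent across $j$, it suffices to bound the total variation by the $\ell^2$ (or KL) distance of the product measures, and then to control the per-entry discrepancy between $\law(\delta w + \noiseG)$ and $\N(0,1)$.

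Concretely, the first step is to note that $\law(\delta w + \noiseG) = \frac12 \N(\delta,1) + \frac12 \N(-\delta,1)$, a symmetric two-component Gaussian mixture. I would compute (or quote from a standard second-moment / $\chi^2$ estimate) that for small $\delta$,
\begin{equation*}
\chi^2\b(\tfrac12\N(\delta,1)+\tfrac12\N(-\delta,1)\,\big\|\,\N(0,1)\b) = \E\b[e^{\delta^2 Z Z'}\b] - 1 = \cosh(\delta^2) - 1 = O(\delta^4)\,,
\end{equation*}
where $Z, Z'$ are independent standard Gaussians (this is exactly the standard spiked-model second-moment computation, as in \cite{perry2016statistical}). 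Then by tensorization of $\chi^2$ across the $m = n^k$ independent coordinates and Cauchy–Schwarz ($2\TV^2 \le \log(1+\chi^2) \le \chi^2$ for products),
\begin{equation*}
\TV\b(\law(\calY),\law(\widetilde\calY)\b)^2 \;\le\; \tfrac12\b((1+O(\delta^4))^{m} - 1\b) \;=\; O\b(m\delta^4\b)
\end{equation*}
provided $m\delta^4 = o(1)$. With $\eps=1$ we have $\delta = n^{-k/4 - \defi/2}$, so $m\delta^4 = n^k \cdot n^{-k - 2\defi} = n^{-2\defi}$, which is $o(1)$ precisely because $\defi > 0$. This gives $\TV(\law(\calY),\law(\widetilde\calY)) = O(n^{-\defi}) \to 0$, as claimed.

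The one genuine subtlety — and the only place the argument needs care rather than routine calculation — is the reduction to the i.i.d.\ scalar comparison. One must check that conditioning on the index sets $\set{\alpha_j}$ makes the $\Y_j$ genuinely independent with the stated marginals, and handle the sampling-model bookkeeping: if the underlying model is the with-replacement variant where $\alpha_j \sim \unif[\binom{[n]}k]$ are i.i.d., then the number of distinct revealed entries is itself random, but it concentrates (Prop.~\ref{prop:binomial_conc} / \ref{prop:pois_conc}), and on each revealed entry the same per-entry $\chi^2$ bound applies, while unrevealed entries are identical under both laws; conditioning on the (high-probability) event that the revealed set has the typical size and applying Fact~\ref{tvfacts} (conditioning on a high-probability event, and data processing) absorbs this into the $o(1)$. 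Once that structural reduction is in place, the bound above closes the proof.
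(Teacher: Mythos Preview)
Your approach is correct and essentially identical to the paper's: bound $\tv$ via an information divergence, tensorize over the $m=n^k$ independent entries, show the per-entry divergence between the symmetric mixture $\tfrac12\N(\delta,1)+\tfrac12\N(-\delta,1)$ and $\N(0,1)$ is $O(\delta^4)$, and conclude $m\delta^4=n^{-2\defi}\to 0$; the paper uses KL directly (computing $\delta^2/2 - \E_{X\sim\N(0,1)}\log\cosh(\delta X)=\Theta(\delta^4)$) where you use $\chi^2$, which is a cosmetic difference. Two small notes: in your $\chi^2$ identity the latent variables should be Rademacher signs $w,w'$, not Gaussians (both give $1+O(\delta^4)$, so no harm), and your last-paragraph ``subtlety'' is not needed --- in the null model $\Y_j$ is independent of $\alpha_j$, so the index sets drop out of the $\tv$ comparison without any concentration or conditioning argument.
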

\begin{proof}
    By Pinsker's inequality and the tensorization of $\kl$ divergence, 
    \begin{align*}
        2\tv\paren{\law(\calY), \law(\tilde\calY)}^2 \leq \kl\paren{\calY\|\tilde\calY} = n^k\cdot \kl\paren{ \N\paren{0,1}\|\N(\delta w_{j},1)}\,,
    \end{align*}
    where $w_{j} \sim \rad\paren{0}$ and $\delta = n^{-k/4-\defi/2} \ll 1$. By definition of $\kl$,
    $$\kl\paren{ \N\paren{0,1}\|\N(\delta w_{j},1)} = \delta^2/2 - \E_{X\sim \N(0,1)}\log\cosh\paren{\delta X} =  \Theta\paren{\delta^4}\,,$$
    so plugging in $\delta \ll n^{-k/4}$, we obtain $\tv\b(\law(\calY), \law(\tilde\calY)\b) \ll 1$.
\end{proof}

Fix $(k,\eps,\defi)$. For a fixed signal vector $x \in \set{\pm 1}^n$, denote the (planted) distribution of Eq.~\eqref{eq:def_kxor_2} $P_x = \xorp k n \eps \defi$ (resp. $P_x = \gtensp k n \eps \defi$) and let $$\Pcal = \set{P_x}_{x\in \set{\pm1}^n}$$ be a collection of all planted distributions. Denote the null distribution $P^{\nulll} = \xorpnull k n \eps$ (resp. $P^\nulll = \gtenspnull k n \eps \defi$). Then the detection task for $\xorp k n \eps \defi$ (corresp. $\gtensp k n \eps \defi$) is defines as follows.
\begin{definition}[Solving the Detection Task]\label{def:detection_problem}
    Given a sample $\calY$ from an unknown distribution $P \in P^\nulll \cup \Pcal$, the detection problem is the task of distinguishing between events $$H_0 = \set{P = P^\nulll}\quad\text{and}\quad H_1 = \set{P \in \Pcal}\,.$$ In particular, an algorithm $\mathcal{A}: \Omega_n^k \to \{0,1\}$ solves the detection task with error $\delta_n$ if the Type $I+II$ error does not exceed $\delta_n$, i.e.,
$$
\P_{\calY \sim P^\nulll} \b[\mathcal{A}(\calY) = 1\b] + \sup_{x\in\set{\pm1}^n} \P_{\calY \sim P_x} \b[\mathcal{A}(\calY) = 0\b] \leq \delta_n\,.
$$
\end{definition}

Moreover, we say that an algorithm $\mathcal{A}$ solves the corresponding detection task with Type $I+II$ asymptotic error $\delta$ if in the definition above we have 
$$
\limsup_{n\to\infty} \delta_n \leq \delta\,.
$$

\paragraph{Recovery (Search) Task.}

Fix $\paren{k,\eps,\defi}$. Similarly denote the planted distribution with signal vector $x\in\set{\pm1}^n$ as $P_x$ (distribution of Eq.~\eqref{eq:def_kxor_2} or Eq.~\eqref{eq:def_ktens2} with parameters $k,\eps,\defi,n$). Let $\Pcal = \set{P_x}_{x\in \set{\pm1}^n}$.

The recovery task is given a sample from $P_x \in \Pcal$ to estimate the signal vector $x$ (up to a global sign)\footnote{The global sign assumption is necessary for even tensor orders $k$. Since some of our reductions change the parity of the tensor order, we define the recovery up to global sign for all tensor orders. See Sec.~\ref{subsec:recovery_assumptions} for further discussion.} under some loss function $\ell_n: \set{\pm 1}^n\times \set{\pm 1}^n \to [0,+\infty)$. Formally, 

\begin{definition}[Recovery Task]\label{def:recovery_problem}
     Given a sample $\calY\in \Omega_n^k$ from an unknown distribution $P_x \in \Pcal$, the recovery task with a loss function $\ell_n$ and a loss threshold $\ell_n^\star$ is to produce an estimate $\widehat{x} \in \set{\pm 1}^n$, such that the expected loss $\ell_n$ is minimized. In particular, an algorithm $\mathcal{A}: \Omega_n^k \to \set{\pm 1}^n$ solves the recovery task (with expected loss at most $\delta_n$) if
$$ \sup_{x \in \set{\pm 1}^n} \E_{\calY\sim P_x} \min \b\{\ell_n\paren{x, \widehat{x} = \mathcal{A}(\calY)},  \ell_n\paren{-x, \widehat{x} = \mathcal{A}(\calY)}\b\}\leq \delta_n \leq \ell_n^\star\,,$$
for some $\delta_n \geq 0$ and a predetermined threshold $\ell_n^\star$.
\end{definition}
Moreover, we say that an algorithm $\mathcal{A}$ solves the recovery problem for $P$ (with asymptotic loss $\delta$) if in the definition above we have 
$$
\limsup_{n\to\infty} \delta_n \leq \delta \leq \ell^\star = \limsup_{n\to\infty} \ell_n^\star\,.
$$
See Sec.~\ref{subsec:recovery_assumptions} for an overview of common recovery problems for $\kxor$ and Tensor PCA.

\subsection{Definitions of Average-Case Reductions}

\subsubsection{Average-Case Reductions for Detection}
First, we define an average-case reduction for a fixed dimensionality parameter $n$ that maps between parameter tuples:
$$
\paren{k,\eps,\defi}\quad\to\quad \paren{k', \eps', \defi'}\,.
$$
Fix the dimensionality parameters $n$ and $n' = n^c$ (for some constant $c > 0$). Consider the detection problems associated with $\xorp k n \eps \defi$ and $\xorp {k'} {n'} {\eps'} {\defi'}$ and associated null and planted distributions:
\begin{align*}
    P^\nulll = \xorpnull k n \eps \quad &\text{and} \quad P_x = \xorp k n \eps \defi \text{ with a secret }x\in\set{\pm 1}^n\,, \Pcal = \set{P_x}_{x\in\set{\pm1}^n}\,,\\
    Q^\nulll = \xorpnull {k'} {n'} {\eps'} \quad &\text{and} \quad Q_x = \xorp {k'} {n'} {\eps'} {\defi'} \text{ with a secret }x\in\set{\pm 1}^{n'}\,, \Qcal = \set{Q_x}_{x\in\set{\pm1}^n}\,.
\end{align*}
The average-case reduction from $\xorp k n \eps \defi$ to $\xorp {k'} {n'} {\eps'} {\defi'}$ (``from $P$ to $Q$") is defined in the following Def.~\ref{def:avg_case_detect}. Note that the average-case reductions $\kxor\to \ktens, \ktens\to \kxor, \ktens\to\ktens$ are defined fully analogously.
\begin{definition}[Average-Case Reduction for Detection from $P$ to $Q$]\label{def:avg_case_detect}

We say that a $O(n^r)$-time (possibly randomized) algorithm $\mathcal{A}^{\text{red}}: \Omega_n^k\to \Omega_{n'}^{k'}$ is an \emph{average-case reduction for detection with TV-error $\epsilon_n$ from $P$ to $Q$} if for all $x \in \set{\pm 1}^{n}$, there exists some $x' \in \set{\pm 1}^{n'}$ satisfying:
    \begin{align*}
        \text{if } \calY \sim P^\nulll,  \text{ then } & \tv\paren{\mathcal{A}^{\text{red}}(\calY), Q^\nulll}\leq \epsilon_n\,, \\
        \text{if } \calY \sim P_x, \text{ then } & \tv\paren{\mathcal{A}^{\text{red}}(\calY), Q_{x'}}\leq \epsilon_n \,,
    \end{align*}
where $n' = n^c$ (for some known constant $c>0$).\footnote{We allow the dimension to change \emph{polynomially} in the reduction, since our focus is on poly-time algorithms.}
\end{definition}

The following lemma states the complexity implications of such a reduction. The proof uses the properties of total variation distance and can be found, e.g. in \cite{bresler2025computational}.

\begin{lemma}[Implication of Reduction for Detection from $P$ to $Q$]\label{lem:red_detect0}
    Assume there exists an $O(n^r)$-time average-case reduction for detection from $P$ to $Q$ with parameter $\epsilon_n$ (Def.~\ref{def:avg_case_detect}). If there is an $O(n^t)$-time algorithm for detection for $Q$ achieving Type $I+II$ error at most $\delta_n$, there exists a $O(n^{ct} + n^r)$-time algorithm for detection for $P$ achieving Type $I+II$ error at most $2\epsilon_n + \delta_{n^c}$.
\end{lemma}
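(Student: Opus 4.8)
\textbf{Proof plan for Lemma~\ref{lem:red_detect0}.}

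The plan is to compose the given detection algorithm for $Q$ with the reduction algorithm $\mathcal{A}^{\text{red}}$ and track (i) the running time and (ii) the Type $I{+}II$ error via the data processing inequality for total variation. First I would set up the composed algorithm: given an instance $\calY \in \Omega_n^k$, run $\mathcal{A}^{\text{red}}(\calY)$ to obtain $\calZ \in \Omega_{n'}^{k'}$ with $n' = n^c$, then run the assumed $O((n')^t) = O(n^{ct})$-time detection algorithm $\mathcal{B}$ for $Q$ on $\calZ$, and output $\mathcal{B}(\calZ)$. The total running time is $O(n^r + n^{ct})$, as claimed, since $\mathcal{A}^{\text{red}}$ runs in $O(n^r)$ time and producing $\calZ$ of polynomial size does not change the exponent.

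For the error analysis, I would split into the null and planted cases using Def.~\ref{def:avg_case_detect}. In the null case, $\calY \sim P^\nulll$ gives $\tv(\law(\mathcal{A}^{\text{red}}(\calY)), Q^\nulll) \leq \epsilon_n$; applying the data processing inequality (Fact~\ref{tvfacts}, Item 4) with the (possibly randomized) Markov kernel $\mathcal{B}$, we get
$$
\tv\b(\law(\mathcal{B}(\mathcal{A}^{\text{red}}(\calY))), \law(\mathcal{B}(\calZ_0))\b) \leq \epsilon_n,
$$
where $\calZ_0 \sim Q^\nulll$. Hence $\P_{\calY\sim P^\nulll}[\mathcal{B}(\mathcal{A}^{\text{red}}(\calY)) = 1] \leq \P_{\calZ_0 \sim Q^\nulll}[\mathcal{B}(\calZ_0) = 1] + \epsilon_n$, and the first term is at most $\delta_{n^c}$ by the Type $I$ guarantee for $\mathcal{B}$. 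In the planted case, for each $x \in \set{\pm1}^n$ there is $x' \in \set{\pm1}^{n'}$ with $\tv(\law(\mathcal{A}^{\text{red}}(\calY)), Q_{x'}) \leq \epsilon_n$ when $\calY \sim P_x$; the same data processing argument gives $\P_{\calY\sim P_x}[\mathcal{B}(\mathcal{A}^{\text{red}}(\calY)) = 0] \leq \P_{\calZ\sim Q_{x'}}[\mathcal{B}(\calZ) = 0] + \epsilon_n \leq \sup_{z'} \P_{\calZ \sim Q_{z'}}[\mathcal{B}(\calZ) = 0] + \epsilon_n$. Summing the two bounds, and noting the supremum over $x$ on the $P$ side is absorbed since the bound holds uniformly in $x$, the composed algorithm achieves Type $I{+}II$ error at most $\delta_{n^c} + 2\epsilon_n$.

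There is essentially no real obstacle here — the argument is a routine composition via data processing — but the one point that requires minor care is the randomization: $\mathcal{B}$ and $\mathcal{A}^{\text{red}}$ may both be randomized, so one must treat $\mathcal{B}$ as a Markov kernel and apply Item~4 of Fact~\ref{tvfacts} to the \emph{laws} of the outputs (taking the randomness of $\mathcal{B}$ as part of the kernel), rather than to pointwise outputs; and one must remember that the error bound in Def.~\ref{def:avg_case_detect} is stated with $n'$-indexed quantities, so the $Q$-side error is $\delta_{n^c}$ and not $\delta_n$. With those bookkeeping points handled, the statement follows.
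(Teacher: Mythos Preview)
Your proposal is correct and takes essentially the same approach as the paper, which simply states that ``the proof uses the properties of total variation distance and can be found, e.g., in \cite{bresler2025computational}.'' Your composition-plus-data-processing argument is exactly the standard one being referenced, with the bookkeeping (randomized kernels, $\delta_{n^c}$ vs.\ $\delta_n$) handled correctly.
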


\subsubsection{Average-Case Reductions for Recovery}

Again, we first define an average-case reduction for recovery for a fixed dimensionality parameter $n$ that maps between parameter tuples: 
$$
\paren{k,\eps,\defi}\quad\to\quad \paren{k', \eps', \defi'}\,.
$$
Fix the dimensionality parameters $n$ and $n' = n^c$ (for some constant $c > 0$). Consider the recovery problems associated with $\xorp k n \eps \defi$ and $\xorp {k'} {n'} {\eps'} {\defi'}$ and denote 
\begin{align*}
    P_x = \xorp k n \eps \defi \text{ with a secret }x\in\set{\pm 1}^n\,, \Pcal = \set{P_x}_{x\in\set{\pm1}^n}\,,\\
    Q_x = \xorp {k'} {n'} {\eps'} {\defi'} \text{ with a secret }x\in\set{\pm 1}^{n'}\,, \Qcal = \set{Q_x}_{x\in\set{\pm1}^n}\,.
\end{align*}
Let $\ell_n: \set{\pm 1}^n\times \set{\pm 1}^n \to [0,+\infty)$ and $\ell'_{n'}: \set{\pm 1}^{n'}\times \set{\pm 1}^{n'} \to [0,+\infty)$ be the two associated loss functions and $\ell_n^\star, \paren{\ell'_{n'}}^\star \geq 0$ the associated thresholds. 

The average-case reduction from $\xorp k n \eps \defi$ ($P$) to $\xorp {k'} {n'} {\eps'}{\defi'}$ ($Q$) is defined in the following Def.~\ref{def:avg_case_recover}. Note that the average-case reductions $\kxor\to \ktens, \ktens\to \kxor, \ktens\to\ktens$ are defined fully analogously.

\begin{definition}[Average-Case Reduction for Recovery from $P$ to $Q$]\label{def:avg_case_recover}

We say that a $O(n^r)$-time (possibly randomized) algorithm $\mathcal{A}^{\text{red}}: \Omega_n^k \to \Omega_{n'}^{k'}$ is an \emph{average-case reduction for recovery with TV-error $\epsilon_n$ from $P$ to $Q$} if:
\begin{enumerate}
    \item (Reduction Algorithm) For all $x \in \set{\pm 1}^{n}$, there exists some $x' \in \set{\pm 1}^{n'}$ satisfying:
    \begin{align*}
        \text{if } \calY \sim P_x, \text{ then } & \tv\paren{\mathcal{A}^{\text{red}}(\calY), Q_{x'}}\leq \epsilon_n \,.
    \end{align*}
    \item (Recovery Algorithm) There exists an $O(n^{r})$-time algorithm $\mathcal{A}^{\text{rec}}: \Omega_n^k \times \set{\pm 1}^{n'} \to \set{\pm 1}^{n}$ such that if $\calY, x, x' = x'(\calY)$ are defined as above and $\ell'_{n'}(x', \widehat{x'}) \leq \delta_{n'} \leq \paren{\ell'_{n'}}^{\star}$, then
        $$
        \sup_{x \in \set{\pm 1}^{n}} \E_{\calY \sim P_x} \ell_n\paren{x, \mathcal{A}^{\text{rec}}\paren{\calY, \widehat{x'}}} \leq \eta_n \leq \ell_n^\star\,,
        $$
        where $\eta_n = f(\delta_{n'})$ and $\paren{\ell'_{n'}}^{\star}$ and $\ell_n^\star$ are the loss thresholds for solving the corresponding recovery problems. 
\end{enumerate}
\end{definition}

The following lemma states the complexity implications of such a reduction. 

\begin{lemma}[Implication of Reduction for Recovery from $P$ to $Q$]\label{lem:red_recover0}
    Let $P, Q$ be two recovery problems defined as above and assume there exists an $O(n^{r})$-time average-case reduction for recovery with TV-error $\eps_n$ (Def.~\ref{def:avg_case_recover}). Then, if there is an $O(n^{t})$-time algorithm for recovery for $Q$, that with probability at least $1 - \psi_n$ solves the recovery problem for $Q$, there exists an $O\paren{n^{ct} + n^r}$-time algorithm for recovery for $P$ that with probability at least $1-\epsilon_n - \psi_{n^c}$ achieves loss at most $\eta_n = f(\delta_{n^c}) \leq \ell_n^\star$ for the loss function $\ell_n$.
\end{lemma}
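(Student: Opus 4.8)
The statement to prove is Lemma~\ref{lem:red_recover0}, the complexity implication of an average-case reduction for recovery. The plan is to chain the reduction algorithm $\mathcal{A}^{\text{red}}$ with the assumed recovery algorithm for $Q$ and then the post-processing algorithm $\mathcal{A}^{\text{rec}}$, and to track both the total running time and the success probability through each stage via the data-processing inequality for total variation (Fact~\ref{tvfacts}, Item 4).

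First I would fix an arbitrary secret $x \in \set{\pm 1}^n$ and let $\calY \sim P_x$. Define the composed algorithm on input $\calY$: (i) run $\mathcal{A}^{\text{red}}(\calY)$ in time $O(n^r)$ to produce $\calZ \in \Omega_{n'}^{k'}$ with $n' = n^c$; by Def.~\ref{def:avg_case_recover}(1) there is some $x'=x'(\calY)$ with $\tv(\law(\calZ), Q_{x'}) \leq \eps_n$; (ii) run the assumed $O((n')^t)$-time algorithm for recovery for $Q$ on $\calZ$ to produce $\widehat{x'}$; (iii) run $\mathcal{A}^{\text{rec}}(\calY, \widehat{x'})$ in time $O(n^r)$ to produce $\widehat{x}$. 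The total running time is $O(n^r) + O((n^c)^t) + O(n^r) = O(n^{ct} + n^r)$, matching the claimed bound.

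For correctness, I would argue as follows. Let $\calA_Q$ denote the recovery algorithm for $Q$, viewed as a (randomized) map; by assumption, when its input is drawn from $Q_{x'}$ it outputs $\widehat{x'}$ with $\ell'_{n'}(x',\widehat{x'}) \leq \delta_{n'}$ with probability at least $1-\psi_{n'}$. Consider the event $E = \set{\ell'_{n'}(x', \calA_Q(\cdot)) \leq \delta_{n'}}$, a measurable event determined by the input to $\calA_Q$ together with its internal randomness. When the input is $\calZ$ rather than a genuine $Q_{x'}$ sample, the data-processing inequality gives $\big|\Pr_{\calZ}[E] - \Pr_{Q_{x'}}[E]\big| \leq \tv(\law(\calZ), Q_{x'}) \leq \eps_n$ (applying Fact~\ref{tvfacts} Item 4 with the Markov kernel that runs $\calA_Q$ and checks the loss predicate, after appropriately averaging over the reduction's and $\calA_Q$'s randomness). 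Hence $\Pr[E] \geq 1 - \psi_{n^c} - \eps_n$. On the event $E$, the hypothesis of Def.~\ref{def:avg_case_recover}(2) is satisfied, so $\mathcal{A}^{\text{rec}}(\calY, \widehat{x'})$ achieves $\E \ell_n(x, \widehat{x}) \leq \eta_n = f(\delta_{n^c}) \leq \ell_n^\star$; more precisely, conditioning on $E$ and using that the guarantee in Def.~\ref{def:avg_case_recover}(2) is stated for all $x$, the composed algorithm solves the recovery problem for $P$ with probability at least $1-\eps_n - \psi_{n^c}$ and loss at most $\eta_n$. Taking the supremum over $x$ (which is already built into the clauses of Def.~\ref{def:avg_case_recover}) completes the argument.

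The main obstacle — really a matter of care rather than depth — is the bookkeeping of randomness: $\mathcal{A}^{\text{red}}$, $\calA_Q$, and $\mathcal{A}^{\text{rec}}$ are each randomized, and the guarantee in Def.~\ref{def:avg_case_recover}(1) is a total-variation statement about the \emph{marginal} law of $\calZ$, whereas the recovery guarantee for $Q$ must be applied to $\calZ$ fed into $\calA_Q$. The clean way to handle this is to absorb the reduction's randomness into the definition of $\law(\calZ)$, and to treat $\calA_Q$ together with the loss-check as a single fixed Markov kernel $K$ so that $\tv(K(\calZ), K(Q_{x'})) \leq \tv(\law(\calZ), Q_{x'})$ follows immediately; then the downstream $\mathcal{A}^{\text{rec}}$ step is applied on the good event. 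One should also note that $\mathcal{A}^{\text{rec}}$ takes $\calY$ itself (not just $\widehat{x'}$) as input, so it has access to the same randomness used to generate $x'(\calY)$; this is exactly what Def.~\ref{def:avg_case_recover}(2) is set up to allow, so no additional argument is needed there. I would present this as a short proof mirroring the detection case (Lemma~\ref{lem:red_detect0}), citing the total-variation facts in Fact~\ref{tvfacts}.
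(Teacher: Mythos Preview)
Your proposal is correct and follows the standard route. The paper does not actually give a proof of this lemma: for the detection analogue (Lemma~\ref{lem:red_detect0}) it says ``The proof uses the properties of total variation distance and can be found, e.g.\ in \cite{bresler2025computational},'' and Lemma~\ref{lem:red_recover0} is stated with no proof at all. Your argument---compose $\mathcal{A}^{\text{red}}$, the $Q$-recovery algorithm, and $\mathcal{A}^{\text{rec}}$, then transfer the success event via the data-processing inequality (Fact~\ref{tvfacts}, Item~4)---is precisely the standard proof the paper is implicitly invoking, so there is nothing to compare against.
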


We emphasize that while the definition above is quite general, any average-case reduction that nontrivially modifies the signal vector, i.e. $x' \neq x$ requires additional assumptions on the loss function $\ell_n$ and, potentially, the admissible recovery algorithm itself. For more details on such assumptions for our reductions, see Sec.~\ref{subsec:recovery_assumptions}.

\subsubsection{Average-Case Reductions as Parameter Maps}

\paragraph{Average-case reductions for sequences of problems. }
Notably, while Def.~\ref{def:avg_case_detect}, \ref{def:avg_case_recover} of average-case reductions are specific to a particular value of the dimensionality parameter $n$, we study the complexity of the \emph{sequences of the tasks} $\set{\xorp k n \eps \defi}_n, \set{\gtensp k n \eps \defi}_n$. To address this, for fixed parameters $\paren{k, \eps, \defi}$ and $\paren{k', \eps', \defi'}$, we define an average-case reduction
$$\text{from}\quad\set{P_n} = \set{\xorp k n \eps \defi}_n\quad \text{to}\quad \set{Q_{n'}} =\set{\xorp {k'} {n'} {\eps'} {\defi'}}_{n'}\,.$$ The definition for any combination of $\kxor$ and $\ktens$ is analogous.

\begin{definition}[Average-Case Reduction from $\set{P_n}$ to $\set{Q_{n'}}$]\label{def:avg_case_points}
We say that an $O(n^{r})$-time algorithm $\mathcal{A} = \{\mathcal{A}_n: \Omega_n^k \to \Omega_{n'}^{k'}\}$ is an \emph{average-case reduction for detection (resp. recovery)} $$\text{from} \quad \set{\xorp k n \eps \defi}_n\quad\text{to}\quad\set{\xorp {k'} {n'} {\eps'} {\defi'}}_{n'}$$ if
    for any sequence $\{\paren{\eps_i, \defi_i}\}_{i=1}^{\infty}$, such that $\lim_{i\to\infty} \paren{\eps_i, \defi_i} = \paren{\eps, \defi}$, there exist sequences $\{\paren{\eps_i', \defi_i'}, n_i\}_{i=1}^{\infty}$\footnote{Note that the definition above does not require the algorithm exist strictly for all $n$, only for a growing sequence of dimensions $n$. This is used to avoid unnecessary technical details while allowing assumptions like ``$n$ is even", which do not affect the complexity phenomena we study. We emphasize that all our reductions can formally be adjusted to work for all values of $n$, e.g. by simply ignoring one of the coordinates to make $n$ even.} and an index $i_0$, such that 
    \begin{enumerate}
        \item $\lim_{i\to\infty} \paren{\eps_i', \defi_i'} = \paren{\eps', \defi'}$ and $\lim_{i\to\infty} n_i = \infty$;
        \item $\forall \ i > i_0$, $\mathcal{A}_{n_i}$ is an average-case reduction for detection (resp. recovery) from $\xorp {k_i} {n_i} {\eps_i} {\defi_i}$ to $\xorp {k'_i} {n'_i} {\eps'_i} {\defi'_i}$ with the TV-error $\epsilon_{n_i}$, where $n'_i = n_i^c$ for some constant $c$; 
        \item $\lim_{i\to\infty} \epsilon_{n_i} = 0$. 
    \end{enumerate}
\end{definition}

This definition inherits the average-case reduction implications from Lemmas~\ref{lem:red_detect0} and \ref{lem:red_recover0} (for proof see analogous Lemmas A.2, B.2 of \cite{bresler2025computational}). Essentially, it states that an average-case reduction from $\set{P_n}$ to $\set{Q_{n'}}$ transfers the (detection and recovery) algorithms from $Q$ to $P$ if they are guaranteed to exist in a small $L_1$ parameter neighborhood of $\eps',\defi'$. 
\begin{lemma}[Implication of Reduction for Sequences]\label{lem:red_implications_gen}
    Consider the detection (resp. recovery) problems associated with $\set{P_n} = \set{\xorp k n \eps \defi}_n, \set{Q_{n'}} = \set{\xorp {k'} {n'} {\eps'} {\defi'}}_{n'}$ and assume there exists an $O(n^r)$-time average-case reduction for detection (resp. recovery) for problem sequences (Def.~\ref{def:avg_case_points}) from $\set{P}_n$ to $\set{Q}_n$.

    Fix $\xi > 0$ and assume there is a $O(n^t)$-time algorithm for detection (resp. recovery) for $\xorp {k'} {n} {\tilde\eps} {\tilde\defi}$ for all $\paren{\tilde \eps, \tilde \defi} \in B_{\xi}\paren{\eps',\defi'}$\footnote{Here $B_{\xi}\paren{\eps',\defi'}$ denotes an $L_1$-ball around a parameter tuple $\paren{ \eps',\defi'}$ of radius $\xi$.}  achieving asymptotic Type $I+II$ error $\delta$ (resp. asymptotic loss $\leq \paren{\ell'}^\star$). Then there exists a $O(n^{ct} + n^r)$-time algorithm for detection (resp. recovery) for $\xorp k n \eps \defi$ achieving asymptotic Type $I+II$ error at most $\delta$ (resp. asymptotic loss $\leq \ell^\star$ per Def.~\ref{def:avg_case_recover}).
\end{lemma}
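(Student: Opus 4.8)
The plan is to unfold the two definitions and chain the single-$n$ implication lemmas. Concretely, suppose $\mathcal{A}=\{\mathcal{A}_n\}$ is the promised $O(n^r)$-time average-case reduction for detection (resp. recovery) for sequences (Def.~\ref{def:avg_case_points}) from $\set{P_n}$ to $\set{Q_{n'}}$, and suppose there is an $O(n^t)$-time detection (resp. recovery) algorithm for $\xorp{k'}{n}{\tilde\eps}{\tilde\defi}$ achieving asymptotic Type $I+II$ error $\delta$ (resp. asymptotic loss $\le(\ell')^\star$) uniformly over all $(\tilde\eps,\tilde\defi)\in B_\xi(\eps',\defi')$, for some fixed $\xi>0$. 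I want to produce an $O(n^{ct}+n^r)$-time algorithm for $\xorp{k}{n}{\eps}{\defi}$ with the same asymptotic guarantee.

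The key steps, in order. First, I would take the constant sequence $(\eps_i,\defi_i)\equiv(\eps,\defi)$, which trivially converges to $(\eps,\defi)$. Def.~\ref{def:avg_case_points} then hands me sequences $(\eps_i',\defi_i',n_i)$ with $(\eps_i',\defi_i')\to(\eps',\defi')$, $n_i\to\infty$, a dimension exponent $c$, an index $i_0$, and TV-errors $\epsilon_{n_i}\to 0$, such that for all $i>i_0$ the map $\mathcal{A}_{n_i}$ is a single-$n$ average-case reduction from $\xorp{k}{n_i}{\eps}{\defi}$ to $\xorp{k'}{n_i'}{\eps_i'}{\defi_i'}$ with TV-error $\epsilon_{n_i}$, where $n_i'=n_i^c$. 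Second, since $(\eps_i',\defi_i')\to(\eps',\defi')$, there is an index $i_1\ge i_0$ so that for all $i>i_1$ we have $(\eps_i',\defi_i')\in B_\xi(\eps',\defi')$; hence the assumed $O(n^t)$-time algorithm applies to the target instance $\xorp{k'}{n_i'}{\eps_i'}{\defi_i'}$. Third, I compose: run $\mathcal{A}_{n_i}$ (cost $O(n_i^r)$) and then the target algorithm on its output (cost $O((n_i')^t)=O(n_i^{ct})$). In the detection case, Lemma~\ref{lem:red_detect0} bounds the resulting Type $I+II$ error by $2\epsilon_{n_i}+\delta_{n_i^c}$, where $\delta_{n}$ is the (non-asymptotic) error of the target algorithm; in the recovery case, Lemma~\ref{lem:red_recover0} gives the composed recovery algorithm with success probability $\ge 1-\epsilon_{n_i}-\psi_{n_i^c}$ and loss $\eta_{n_i}=f(\delta_{n_i^c})$, together with the recovery post-processing algorithm $\mathcal{A}^{\mathrm{rec}}$ furnished by Def.~\ref{def:avg_case_recover}. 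Fourth, I take $\limsup_{i\to\infty}$: since $\epsilon_{n_i}\to 0$ and $\limsup_n\delta_n\le\delta$ (resp. $\limsup_n\psi_n\to 0$ and $\limsup_n\ell'_n$-loss $\le(\ell')^\star$ with $f$ continuous so that $\limsup\eta_n\le\ell^\star$), the composed algorithm achieves asymptotic Type $I+II$ error $\le\delta$ (resp. asymptotic loss $\le\ell^\star$). The runtime is $O(n^{ct}+n^r)$ as claimed, since $c,r$ are fixed constants.

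One technical point worth spelling out: Def.~\ref{def:avg_case_points} only guarantees the reduction exists along a growing subsequence of dimensions $n_i$, not for every $n$, and likewise the assumed target algorithm is stated for a sequence. This is harmless for the complexity conclusion we want — as the footnote in Def.~\ref{def:avg_case_points} notes, all reductions here can be padded to all $n$ (e.g. by ignoring coordinates), and the asymptotic error/loss statements are about $\limsup_n$, which only ever sees a tail — so I would either (a) quietly pass to the subsequence, since the statement of Lemma~\ref{lem:red_implications_gen} is itself asymptotic, or (b) invoke the padding remark to extend to all $n$ and keep the argument verbatim. I would present option (a) in the main text for cleanliness and mention (b) parenthetically.

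The main obstacle — really the only place needing care — is bookkeeping the two-parameter neighborhood: making sure that the target-instance parameters $(\eps_i',\defi_i')$ that the reduction actually lands on are eventually inside the ball $B_\xi(\eps',\defi')$ on which the target algorithm is promised to work, and that the TV-error, the target algorithm's error, and (in the recovery case) the loss-threshold inequalities $\eta_n\le\ell_n^\star$ and $\delta_{n'}\le(\ell'_{n'})^\star$ all hold simultaneously for large $i$. All of this follows from the three convergence clauses of Def.~\ref{def:avg_case_points} plus the uniformity of the target algorithm over $B_\xi$, so once those are lined up the proof is a direct composition via Lemmas~\ref{lem:red_detect0} and~\ref{lem:red_recover0}; no new ideas are required. (This is exactly the argument carried out for the analogous Lemmas A.2 and B.2 of \cite{bresler2025computational}, which I would cite.)
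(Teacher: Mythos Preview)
Your proposal is correct and follows essentially the same approach as the paper, which does not give a standalone proof but simply states that the result is inherited from Lemmas~\ref{lem:red_detect0} and~\ref{lem:red_recover0} and defers to the analogous Lemmas~A.2 and~B.2 of \cite{bresler2025computational}. Your write-up unpacks exactly that deferred argument (compose the single-$n$ implications along the subsequence furnished by Def.~\ref{def:avg_case_points}, use convergence of $(\eps_i',\defi_i')$ to land in $B_\xi$, then take $\limsup$), and you even cite the same reference.
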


\paragraph{Average-case reduction as a parameter map.}

At the center of this paper is the following complexity question:
\begin{center}
    \emph{Q: for fixed parameters $\paren{k,\eps,\defi}$ and growing dimension $n$, what is the computational complexity of the detection and recovery tasks for $\xorp k n \eps \defi$ (resp. $\gtensp k n \eps \defi$)?}
\end{center}
To address this question, we establish average-case reductions between parameter tuples $\paren{k, \eps, \defi}$ and study the complexity implications of such reductions. Recall that believed hardness of $\kxor, \ktens$ can be summarized in the following conjectures:
\begin{align*}
    &\conjD k \eps: \text{ There is no poly-time algorithm solving \textbf{detection} for }\xorp k n \eps \defi\text{ with any }\defi > 0.\\
    &\conjR k \eps: \text{ There is no poly-time algorithm solving \textbf{recovery} for }\xorp k n \eps \defi\text{ with any }\defi > 0.
\end{align*}
The following hardness implication for these conjectures is a direct corollary of Lemma~\ref{lem:red_implications_gen}.

\begin{proposition}\label{prop:hardness_implication}
    Fix parameters $k,k'\in \mathbb{Z}^{+}, \eps, \eps' \in [0,1]$. If for every $\defi' > 0$ there exists $\defi > 0$ such that there is an average-case reduction for detection (resp. recovery) (Def.~\ref{def:avg_case_points}) from $\xorp k n \eps \defi$ to $\xorp {k'} {n'} {\eps'} {\defi'}$ for $n' = n^c$ (for some constant $c>0$), then $$\conjD k \eps \Rightarrow \conjD {k'} {\eps'}\quad \paren{\text{resp. }\conjR k \eps \Rightarrow \conjR {k'} {\eps'}}\,.$$
\end{proposition}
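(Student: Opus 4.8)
\textbf{Proof proposal for Proposition~\ref{prop:hardness_implication}.}

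The plan is to prove the contrapositive: assuming the negation of $\conjD {k'} {\eps'}$ (i.e., there \emph{is} a poly-time detection algorithm for $\xorp {k'} {n'} {\eps'} {\defi'}$ for \emph{some} $\defi' > 0$), I will produce a poly-time detection algorithm for $\xorp k n \eps \defi$ for some $\defi > 0$, contradicting $\conjD k \eps$. The recovery statement is proved identically, replacing Lemma~\ref{lem:red_detect0} with Lemma~\ref{lem:red_recover0} and the detection implication in Lemma~\ref{lem:red_implications_gen} with its recovery counterpart throughout; so I focus on the detection case.

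First I would unpack the hypothesis. Negating $\conjD {k'} {\eps'}$ means: there exists a constant $\defi_0' > 0$ and a $\poly(n)$-time algorithm solving detection for $\xorp {k'} {n} {\eps'} {\defi_0'}$ with asymptotic Type $I+II$ error, say, bounded away from $1$ (by the usual boosting/repetition argument one may assume the error is $o(1)$; alternatively this is built into the definition of ``solving detection'' in Def.~\ref{def:detection_problem}). The hypothesis of the proposition, applied with $\defi' := \defi_0'$, yields a constant $\defi > 0$ and an $O(n^r)$-time average-case reduction for detection in the sense of Def.~\ref{def:avg_case_points} from $\set{\xorp k n \eps \defi}_n$ to $\set{\xorp {k'} {n'} {\eps'} {\defi'}}_{n'}$.

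Next I would invoke the transfer lemma, Lemma~\ref{lem:red_implications_gen}: take $\xi > 0$ small; since the detection algorithm for the target is assumed to succeed at $(\eps', \defi_0')$, and by the density-adjusting reductions of Cor.~\ref{cor:m_adjust} (which change $\defi'$ by a controlled amount at polynomial cost, preserving algorithmic success on a small ball) a detection algorithm exists on all of $B_\xi(\eps', \defi_0')$, the hypothesis of Lemma~\ref{lem:red_implications_gen} is met. The conclusion is an $O(n^{ct} + n^r)$-time detection algorithm for $\xorp k n \eps \defi$ — and $ct + r$ is a constant since $t, r, c$ are — achieving asymptotic Type $I+II$ error bounded away from $1$. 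Since $\defi > 0$, this contradicts $\conjD k \eps$. Hence $\conjD k \eps \Rightarrow \conjD {k'} {\eps'}$, and symmetrically $\conjR k \eps \Rightarrow \conjR {k'} {\eps'}$.

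The only subtle point — and the step I would be most careful about — is the quantifier order: the hypothesis provides, \emph{for each} target deficiency $\defi'$, \emph{some} source deficiency $\defi = \defi(\defi')$, and we need only one of these to derive a contradiction, namely the one associated with the particular $\defi_0'$ witnessing $\neg\conjD{k'}{\eps'}$. A secondary subtlety is ensuring the ``small ball'' hypothesis of Lemma~\ref{lem:red_implications_gen} (algorithmic success on $B_\xi(\eps', \defi')$, not just at the point $(\eps',\defi')$): this is handled exactly by the footnote in \S\ref{subsubsec:avg_case_intro} and Cor.~\ref{cor:m_adjust}, which let a point algorithm be spread to a neighborhood at polynomial overhead, so this does not affect the hardness-conjecture implication. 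Everything else is bookkeeping: composing two polynomial-time procedures stays polynomial time, and $n' = n^c$ with $c$ constant keeps $n'$ polynomial in $n$.
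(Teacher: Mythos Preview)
Your proof is correct and follows the paper's approach: the paper simply states that Proposition~\ref{prop:hardness_implication} is ``a direct corollary of Lemma~\ref{lem:red_implications_gen},'' and your contrapositive argument is exactly how one unpacks that corollary. Your identification of the small-ball subtlety (and its resolution via Cor.~\ref{cor:m_adjust} together with noise-injection to move within the $\defi$-axis) is more explicit than what the paper spells out, but matches what the paper's footnote in \S\ref{subsubsec:avg_case_intro} is pointing to.
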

Note that while the definitions of average-case reductions admit these strong complexity implications, they allow certain flexibility in building the actual reduction algorithms in any given dimension. For example, the following proposition will be part of almost every average-case reduction we present in this paper. Intuitively, it states that the algorithms between instances are considered average-case reductions as long as they map parameters up to $\poly\log(n)$ factors. The proof follows directly from Def.~\ref{def:avg_case_points}.
\begin{proposition}\label{prop:map_up_to_constant}
    Fix parameters $\paren{k, \eps, \defi}$ and $\paren{k', \eps', \defi'}$ and denote $\delta = n^{-k\eps/4 - \defi/2}, m = n^{k(\eps+1)/2}$ and $\delta' = (n')^{-k'\eps'/4 - \defi'/2}, m' = (n')^{k'(\eps'+1)/2}$, where $n' = n^c$ (for some constant $c$). Assume one builds a sequence of algorithms $\set{\mathcal{A}^{\textnormal{red}}_n}_n$ that for a sequence of dimensionality parameters $\set{n_i}_i, \lim_{i\to\infty} n_i = \infty$ are average-case reductions (Def.~\ref{def:avg_case_detect}, \ref{def:avg_case_recover}) from $$ \xor k n m \delta \quad\to\quad \xor {k'} {n'} {\hat{m'}} {\hat{\delta'}}\,,$$
    where $$\hat{m'} = O\paren{\poly \log n}\cdot m'\quad\text{and}\quad \hat{\delta'} = O\paren{\poly\log n}\cdot \delta'\,.$$ Then, $\set{\mathcal{A}^{\textnormal{red}}_n}_n$ constitute an average-case reduction from $\xorp k n \eps \defi$ to $\xorp {k'} {n'} {\eps'} {\defi'}$ (Def.~\ref{def:avg_case_points}).
\end{proposition}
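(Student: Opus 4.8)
Proposition~\ref{prop:map_up_to_constant} is a ``soft'' statement about what it means to have an average-case reduction in the sense of Def.~\ref{def:avg_case_points}, so the plan is essentially to unwind the definitions and verify the three bullets in Def.~\ref{def:avg_case_points} directly. The key observation is that the target-parameter deviations being hidden inside $\hat m' = O(\poly\log n)\cdot m'$ and $\hat\delta' = O(\poly\log n)\cdot\delta'$ translate, in the $(\eps,\defi)$ parametrization, to a vanishing additive perturbation: if $\hat m' = (n')^{k'(\hat\eps'+1)/2}$ and $\hat\delta' = (n')^{-k'\hat\eps'/4 - \hat\defi'/2}$, then
\[
\hat\eps' - \eps' = \frac{2\log_{n'}(\hat m'/m')}{k'} = O\!\left(\frac{\log\log n}{\log n}\right) = o(1),
\]
and similarly $\hat\defi' - \defi' = o(1)$, using $\log_{n'}(n') = \Theta(\log n)$ and the polynomial relationship $n' = n^c$. (Strictly, one must also absorb the $O(\poly\log n)$ factor in $\hat m'$ into $\hat\delta'$ and keep bookkeeping consistent, but both corrections are still $o(1)$ in the exponent.) So the algorithm $\mathcal A^{\mathrm{red}}_n$, which by hypothesis is an average-case reduction (Def.~\ref{def:avg_case_detect} or \ref{def:avg_case_recover}) from $\xorp k n \eps \defi$ to $\xorp{k'}{n'}{\hat\eps'_n}{\hat\defi'_n}$ with TV-error $\epsilon_n$, has output parameters $(\hat\eps'_n,\hat\defi'_n) = (\eps',\defi') + o(1)$.

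Next I would check the three conditions of Def.~\ref{def:avg_case_points} in turn. Given any sequence $(\eps_i,\defi_i)\to(\eps,\defi)$, the hypothesis supplies for each large $i$ a dimension $n_i$ (from the subsequence $\{n_i\}$ on which $\mathcal A^{\mathrm{red}}$ is defined; we may additionally restrict to the subsequence along which $(\eps_i,\defi_i)$ is realized by the input family, which is allowed since Def.~\ref{def:avg_case_points} only requires a growing sequence of dimensions) with $n_i\to\infty$, and the reduction maps $\xorp{k}{n_i}{\eps_i}{\defi_i}$ to $\xorp{k'}{n'_i}{\eps'_i}{\defi'_i}$ where $\eps'_i = \eps' + o(1)$ and $\defi'_i = \defi' + o(1)$ by the paragraph above (the $o(1)$ now depends both on the $\poly\log n_i$ slack and on how $(\eps_i,\defi_i)$ approaches $(\eps,\defi)$, but both contributions vanish). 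Hence $(\eps'_i,\defi'_i)\to(\eps',\defi')$, which is bullet~1; bullet~2 is the hypothesis that each $\mathcal A_{n_i}$ is a per-$n$ average-case reduction; and bullet~3 is the hypothesis $\lim_i \epsilon_{n_i} = 0$, which holds because the per-$n$ reductions have TV-error $\epsilon_n\to 0$ (if the hypothesis only gives a fixed-$n$ reduction for each $n$ with its own $\epsilon_n$, one uses that $\epsilon_n = o(1)$; this is implicit in calling the $\mathcal A_n$ ``average-case reductions'', cf.\ the usage throughout \autoref{sec:simple_reductions}). The runtime is $O(n^r)$ by hypothesis, so the reduction is poly-time.

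There is essentially no hard computation here; the only point requiring a little care is the first one --- confirming that a $\poly\log n$ multiplicative slack in $(m',\delta')$ becomes an $o(1)$ additive slack in $(\eps',\defi')$ --- together with the mild technical matter of choosing the dimension subsequence so that the prescribed input parameters $(\eps_i,\defi_i)$ and the available dimensions $\{n_i\}$ are simultaneously honored. Neither is a genuine obstacle: the first is the elementary estimate displayed above, and the second is exactly the flexibility Def.~\ref{def:avg_case_points} was designed to permit (it quantifies only over \emph{a} growing sequence of $n$, not all $n$, precisely to accommodate side conditions and polylog slack). I would conclude by noting that the implications Lem.~\ref{lem:red_detect0}, Lem.~\ref{lem:red_recover0}, and hence Lem.~\ref{lem:red_implications_gen} apply verbatim, since they only reference the abstract Def.~\ref{def:avg_case_points} and never the exact values of $m',\delta'$.
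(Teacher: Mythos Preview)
Your proposal is correct and matches the paper's approach: the paper's proof is just the one-line ``The proof follows directly from Def.~\ref{def:avg_case_points},'' and your argument is precisely the unwinding of that definition, with the key observation that a $\poly\log n$ multiplicative slack in $(m',\delta')$ becomes an $o(1)$ additive slack in $(\eps',\defi')$. Your treatment is more detailed than the paper's, and you correctly flag the one genuinely delicate point (aligning the input sequence $(\eps_i,\defi_i)$ with the available dimension subsequence), which the paper leaves implicit.
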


\subsection{Assumptions on the Recovery Task}\label{subsec:recovery_assumptions}
An average-case reduction for recovery from $$P = \xorp k n \eps \defi \quad \text{to}\quad Q = \xorp {k'} {n'} {\eps'} {\defi'}$$
maps an instance of $P$ with signal $x\in\set{\pm1}^n$ to an instance of $Q$ with signal $x' = F(x) \in \set{\pm 1}^{n'}$.

Some of our reductions (e.g., entry creation/erasure, Lem.~\ref{lem:m_adjust}; cloning/splitting procedures, Lem.~\ref{lem:instance_clone}, \ref{lem:instance_split}; and Gaussianization/Discretization, Lem.~\ref{lem:gaussianization}, \ref{lem:discretization}) preserve the signal exactly: $x' = F(x) = x$. In this case, recovery transfers for \emph{any} loss function $\ell_n : \set{\pm 1}^n \times \set{\pm 1}^{n'} \to \R^{+}$: if there exists a poly-time recovery algorithm for $Q$ that succeeds with probability $1 - \psi_n$ at threshold $\ell_n^{\star}$, there exists a poly-time recovery algorithm for $P$ that succeeds with probability $1 - \psi_n -\eps_n$ for the same $(\ell_n, \ell_n^{\star})$, where $\eps_n \to_{n\to \infty} 0$ is the TV-error of the average-case reduction (see Def.~\ref{def:avg_case_recover}).

Our more technical results modify the signal in a controlled way:
\begin{itemize}
    \item[(1)] \textbf{Tensored signal vector:} In the ``Decreasing $k\to k/c$" reduction (Lem.~\ref{lem:prelim_decrease_k}) and its uses (Thm.~\ref{thm:sparse_dense_red}, Thm.~\ref{thm:decreasek}), for some integer $c$ we have $n' = \binom n c$ and $$x' = F(x) = x^{\otimes c}\odot y\,,$$ where $y \sim \unif\paren{\set{\pm 1}^{n'}}$ is known and independent from $x$.
    \item[(2)] \textbf{Subsampled signal vector:} In the discrete and Gaussian equation combination reductions(Lem.~\ref{lem:2path_red} and Lem~\ref{lem:gauss_resolution}), $n' = n\paren{1-o(1)}$ and $$x' = F(x) = x_{[1:n']}\,,$$ i.e. $x'$ is (any) subset of $n'$ coordinates of $x$.
\end{itemize}
In the cases where $x' \neq x$, we cannot claim recovery transfer for an arbitrary loss metric $\ell$. However, we show that our reductions do transfer the standard $\kxor$ and $\ktens$ recovery notions: \emph{weak, strong,} and \emph{exact recovery}. 

\paragraph{Standard Recovery Tasks for $\kxor$, $\ktens$.}
Recovery task (see Sec.~\ref{subsec:tasks}) is defined \emph{up to a global sign}: algorithms produce $\hat x$ such that $\min \b\{\ell(x, \hat x),\ell(-x, \hat x)\b\}$ is small. This requirement is necessary for the even-order tensors, and as our reductions often change the order parity, we impose it for all $k$.

In \emph{exact recovery}, the algorithm is tasked with recovering the signal $x \in \set{\pm 1}^n$ exactly with sufficiently high probability.
\begin{definition}[Exact Recovery]\label{def:exact_recovery}
    In \emph{exact recovery}, the loss function $\ell_n : \set{\pm1}^n \times \set{\pm1}^n \to \set{0,1}$ is 
    $$
    \ell_n\paren{x, \hat x} = \mathsf{1}_{x \neq \hat x}\,,
    $$
    and an algorithm $\mathcal{A} : \Omega_n^k \to \set{\pm1}^n$ solves exact recovery if, given as an input $\calY \sim P_x$ (for $P_x = \kxor$ or $\ktens$) with signal $x \in\set{\pm 1}^n$, it outputs an estimate $\hat x = \mathcal{A}(\calY)$ satisfying 
    $$
    \Pr_{\calY \sim P_x} \sqb{ \min \b\{  \ell_n(x, \hat x), \ell_n(-x,\hat x) \b\} = 0 } \geq 1 - \psi^\star_n\,.
    $$
\end{definition}
In \emph{weak} and \emph{strong recovery}, the algorithm is tasked with finding an estimate $\hat x \in \set{\pm 1}^n$ that has large coordinate-wise overlap with $x$. Formally: 
\begin{definition}[Weak and Strong Recovery]\label{def:weak_strong_recovery}
    In both \emph{weak} and \emph{strong recovery}, the loss function $\ell_n : \set{\pm1}^n \times \set{\pm1}^n \to \sqb{0,1}$ is 
    $$
    \ell_n \paren{x,\widehat{x}} = 1 - \frac1 n \b|\la x, \hat x\ra \b|\,,
    $$
    and an algorithm $\mathcal{A} : \Omega_n^k \to \set{\pm1}^n$ solves recovery if, given as an input $\calY \sim P_x$ for $P_x = \kxor$ (or $\ktens$) with signal $x \in\set{\pm 1}^n$, it outputs an estimate $\hat x = \mathcal{A}(\calY)$ and 
    $$
    \Pr_{\calY \sim P_x} \sqb{\min \b\{  \ell_n(x, \hat x), \ell_n(-x,\hat x) \b\} \leq \ell_n^\star } \geq 1 - \psi_n^\star\,,
    $$
    for some probability threshold $1 - \psi_n^\star$. 

    In \textbf{weak recovery}, $\ell_n^\star = 1 - \Omega(1)$ (recover a constant fraction $>1/2$ of coordinates). 
    
    In \textbf{strong recovery}, $\ell_n^\star = o_n(1)$ (recover $1-o(1)$ fraction of coordinates correctly).
\end{definition}

We now argue that the two signal modifications above still allow transfer of weak, strong, and exact recovery.  

\paragraph{Tensored signal vector.} We prove that given a recovery algorithm for $Q$ that returns an estimate $\widehat{x'} \in \set{\pm 1}^{n'}$, (where $x' = x^{\otimes c} \odot y$), one can recover the original vector $x$. 

\emph{Exact recovery:} if $\widehat{x'} = x'$, then $\widehat{x'} \odot y = x^{\otimes c}$, from which $x$ is trivially recovered (up to a sign flip). 

\emph{Weak/strong recovery:} we have (when recovery Alg. succeeds) 
\begin{equation}\label{eq:rec_err}
\min \b\{ \ell_n\paren{ x', \widehat{x'} }, \ell_n\paren{-x', \widehat{x'}}  \b\} = \min \b\{ \ell_n\paren{ x^{\otimes c}, \widehat{x'}\odot y }, \ell_n\paren{-x^{\otimes c}, \widehat{x'}\odot y }  \b\} \leq \ell_n^\star\,.
\end{equation}
Denote $\ell_n\paren{ x', \widehat{x'} } = \gamma$, then the set of mismatshed coordinates in $x', \widehat{x'}$ has size $n' \gamma / 2$. Because $y$ is uniform and independent of $x$, the signal $x' = x^{\otimes c} \odot y$ in $Q$ is uniform and independent of $x$ and the source instance. If we additionally apply a uniformly random coordinate permutation before using the $Q$-recovery algorithm, conditioning on the loss $\gamma$, the \emph{$n' \gamma/2$ error locations in the estimate $\widehat{x'}$ are chosen at random from $\binom {\sqb{n'}} {n'\gamma/2}$} and are independent of $x$ and the original tensor.

In the case of odd $c$, if $\ell_n\paren{ x', \widehat{x'} } < \ell_n\paren{-x', \widehat{x'}}$, then $\widehat{x'}\odot y$ is a noisy observation of $x^{\otimes c}$, and otherwise it is a noisy observation of $(-x)^{\otimes c}$. In both cases, recovering $x$ up to a global sign from $\widehat{x'}\odot y$ is a noisy $\kxor$-type ($\ktens$-type) problem with $\binom n c$ observations and at least constant entrywise noise -- solvable efficiently with existing algorithms (Sec.~\ref{subsec:lit_algo}). 

In the case of an even $c$, we can efficiently and with high probability distinguish between the two cases. We can then similarly use the noisy observation of $x^{\otimes c}$ (either $\widehat{x'}\odot y$ or $-\widehat{x'}\odot y$) to recover the original $x$. The distinguishing is possible via a following efficiently-computable test. Given a tensor $\Y$ that is either a noisy observation of $x^{\otimest c}$ or $-x^{\otimest c}$, split all $n$ coordinates into $\Theta(n)$ sets of size $3c/2$. For every such set $i_1,\dots,i_{3c/2}$, compute $\Y_{i_1,\dots,i_{c}} \cdot \Y_{i_{c/2},\dots,\Y_{i_{3c/2}}} \cdot \Y_{i_1,\dots,i_{c/2},i_{c},\dots,i_{3c/2}}$. For the case of $x^{\otimes c}$, the the number of these products equal to $1$ is concentrated around $n\paren{1/2 + \Omega(1)}$ and otherwise around $n\paren{1/2 - \Omega(1)}$. These cases are distinguishable with high probability. Thus, both weak and strong recovery are transferred. 

\paragraph{Subsampled signal vector.} In this case, the recovery algorithm for $Q$ returns $\widehat{x'}$ for $x' = x_{\sqb{1:n'}}$.

\emph{Weak/strong recovery:} the estimate $\widehat{x'}$ is already a sufficiently good estimate for the original signal vector $x$:
$$
\ell \b(x, \b(\widehat{x'}, \underbrace{0,\dots,0}_{n - n' = o(1)}\b)\b) \leq \ell^{\star} + o(1) \leq \begin{cases}
    1 - \Omega(1)\,, &\text{for weak recovery}\,,\\
    o(1)\,, &\text{for strong recovery}\,.
\end{cases}\,.
$$

\emph{Exact recovery:} exact recovery of a coordinate subset $x' = x_{\sqb{1:n'}}$ does not determine all $n$ coordinates, but applying the reduction twice with overlapping coordinate set does (with the success probability squared). 

Formally, let $\calY \sim P_x$ be the input instance with signal $x\in\set{\pm1}^n$ and let $$\calY\up1, \calY\up2 = \XorSplit\paren{\calY}$$
be the two independent instances sampled from $P_x$ (see Lem.~\ref{lem:instance_split}). Moreover, let $$\tilde \calY\up1 = \calY\up1 \odot \paren{y\up1}^{\otimes k}\quad\text{and}\quad\tilde \calY\up2 = \calY\up2 \odot \paren{y\up2}^{\otimes k}\,,$$ where $y\up1, y\up2 \sim \unif{\set{\pm 1}^{n}}$. After such transformation, $\calY\up2$ is an instance of $P_{z\up1 = x \odot y\up1}$ and $\calY\up2$ is an independent instance of $P_{z\up2 = x \odot y\up2}$.

Apply the average-case reduction $\mathcal{A}$ independently:
$\mathcal{A}\paren{\calY\up1}$ has the signal vector $w\up1$ and $\mathcal{A}\paren{\calY\up2}$ has the signal vector $w\up2$ for\footnote{The guarantees for reductions that have this signal modification (Alg.~\ref{alg:two_path} and \ref{alg:gauss_res}) state that any $n'$-subset of indices chosen for the output. This can also be achieved by simply permuting the coordinates of $\calY\up2$.}
$$w\up1 = z_{\sqb{1:n'}} = x_{\sqb{1:n'}} \odot y_{\sqb{1:n'}}\up1\quad\text{and}\quad w\up2 = z_{\sqb{n'-n, n}} = x_{\sqb{n'-n, n}} \odot y_{\sqb{n'-n, n}}\up2\,.$$
If the exact recovery algorithm for $Q$ succeeds with probability $1 - \psi_{n'}$ per instance, then, since $\mathcal{A}\paren{\calY\up1}$ and $\mathcal{A}\paren{\calY\up2}$ are fully independent, we recover both $w\up1$ and $w\up2$ with probability at least $(1 - \psi_{n'})^2$. We thus recover $x$ exactly with probability at least $(1 - \psi_{n'})^2$.

\section{Deferred Proofs}

\subsection{Proof of Claim~\ref{claim:conv}}\label{subsec:proof_claim_conv}

    \begin{myclaim}{\ref{claim:conv}}
        Fix $c, r \in (0,1), \kstar\in \Z^{+}$. There exists a constant $A(c, r, \kstar)$, such that for every $A \geq A(c, r, \kstar)$, there exists $k \in \Z^{+}$ such that $k \leq c A$, $\lfloor k r \rceil$ is a multiple of $\kstar$, and 
        \begin{equation}\label{eq:approx_bound}
            \frac{k}{\|kr\|} > A\,,
        \end{equation}
        where $\|x\| = \b| x - \lfloor x \rceil \b|$.

    \end{myclaim}
    \begin{fact}[Integer Approximation Facts \cite{khinchin1964continued}]\label{fact:conv} 
        Given $r \in (0,1)$, we can express $r$ as a simple continued fraction
        \begin{equation}\label{eq:continued_fraction}
            r = 0 + \frac{1}{b_1 + \frac 1 {b_2 + \dots}}\,,
        \end{equation}
        where $b_i > 0$ are integers. We denote expression in \eqref{eq:continued_fraction} as $[b_0=0;b_1,b_2,\dots]$. The \emph{$i$-th convergent} of $r$ is then defined as $Q_i = [0;b_1,\dots,b_i] = [0;b_1,\dots,b_i,0,0,\dots]$. The following holds:
        \begin{itemize}
            \item (Theorem 1 in \cite{khinchin1964continued}) The $i$-th convergent can be expressed as $Q_i = A_i/B_i$, where $\mathsf{gcd}(A_i, B_i) = 1$, and the sequences $\set{A_i}, \set{B_i}$ satisfy the recursive relation
            $$
            X_i = b_i X_{i-1} + X_{i-2}
            $$
            for any $i \geq 2$. As a consequence, $\set{B_i}_{i=2}^{\infty}$ is a growing sequence of integers. 
            \item (Theorem 9 in \cite{khinchin1964continued}) In the above setting, 
            $$
            |r - Q_i| \leq \frac{1}{B_i B_{i+1}}\,,
            $$
            or, equivalently,
            $$
            |r B_i - A_i| \leq \frac{1}{B_{i+1}}\,.
            $$
        \end{itemize} 
    \end{fact}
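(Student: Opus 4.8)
\textbf{Plan for proving Fact~\ref{fact:conv}.}
These are classical statements from the theory of continued fractions, and the plan is to reconstruct the standard derivation rather than cite it. I would set up the continued fraction expansion of $r\in(0,1)$ as $[0;b_1,b_2,\dots]$ via the Euclidean-type algorithm: let $r_0 = r$, and inductively $b_i = \lfloor 1/r_{i-1}\rfloor$, $r_i = 1/r_{i-1} - b_i \in [0,1)$ (terminating if some $r_i = 0$, which happens exactly when $r$ is rational). Each $b_i$ is then a positive integer for $i\geq 1$. Define the convergents $Q_i = [0;b_1,\dots,b_i]$.

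\textbf{Step 1: the recurrence $X_i = b_i X_{i-1} + X_{i-2}$ for numerators and denominators.} I would prove by induction on $i$ that if we define $A_{-1}=1, A_0 = 0, B_{-1}=0, B_0 = 1$ and $A_i = b_i A_{i-1} + A_{i-2}$, $B_i = b_i B_{i-1} + B_{i-2}$, then $Q_i = A_i/B_i$. The key algebraic observation is that passing from $[0;b_1,\dots,b_i]$ to $[0;b_1,\dots,b_i + t]$ amounts to replacing $b_i$ by $b_i+t$ in the last slot, and the map $t \mapsto [0;b_1,\dots,b_{i-1}, b_i+t]$ is a Möbius transformation with matrix $\begin{psmallmatrix} A_{i-1} & A_{i-2}\\ B_{i-1} & B_{i-2}\end{psmallmatrix}$ acting appropriately; evaluating at $t$ corresponding to the tail gives $A_i/B_i$. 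Along the way I would establish the determinant identity $A_i B_{i-1} - A_{i-1} B_i = (-1)^{i-1}$, which immediately gives $\gcd(A_i,B_i)=1$. Finally, since $b_i \geq 1$ for $i\geq 1$ and $B_1 = b_1 \geq 1$, $B_2 = b_2 b_1 + 1 \geq 2$, the recurrence shows $\{B_i\}_{i\geq 2}$ is strictly increasing in the integers.

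\textbf{Step 2: the approximation bound $|rB_i - A_i| \leq 1/B_{i+1}$.} Writing $r = [0;b_1,\dots,b_i,\xi_{i+1}]$ where $\xi_{i+1} = b_{i+1} + r_{i+1} \geq b_{i+1}$ is the (real) tail, the Möbius formula gives $r = \dfrac{A_i \xi_{i+1} + A_{i-1}}{B_i \xi_{i+1} + B_{i-1}}$. Subtracting $A_i/B_i$ and using the determinant identity yields
$$
r - \frac{A_i}{B_i} = \frac{(-1)^i}{B_i(B_i \xi_{i+1} + B_{i-1})}\,,
$$
so $|r B_i - A_i| = \dfrac{1}{B_i \xi_{i+1} + B_{i-1}} \leq \dfrac{1}{B_i b_{i+1} + B_{i-1}} = \dfrac{1}{B_{i+1}}$, and dividing by $B_i$ gives the first form $|r - Q_i| \leq 1/(B_i B_{i+1})$. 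I should be slightly careful with edge cases: if $r$ is rational the expansion terminates at some $b_N$, and the bounds hold vacuously/trivially for $i<N$; the statement in Fact~\ref{fact:conv} is only invoked for $r$ irrational (or effectively so) in Claim~\ref{claim:conv}, so I would note that the infinite-sequence assertions apply in that case.

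\textbf{Main obstacle.} There is no deep obstacle here — this is textbook material (Khinchin, \emph{Continued Fractions}, Thms.~1 and~9) — so the only real care is bookkeeping: getting the index conventions ($A_{-1}, A_0$ initial values) consistent so that the recurrence and the determinant identity both come out with the right signs, and handling the rational (terminating) case cleanly so the claim "$\{B_i\}$ is a growing integer sequence" is stated over the valid range of indices. I would keep the write-up short, presenting the two Möbius-matrix computations and the determinant induction, and then reading off both bullet points as corollaries.
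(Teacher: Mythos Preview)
Your plan is correct and is exactly the standard textbook derivation, but note that the paper does not actually prove this statement: it is presented as a cited \emph{Fact} from Khinchin's \emph{Continued Fractions} (Theorems~1 and~9 there), with no proof given. So there is nothing to compare against beyond observing that your reconstruction matches the classical argument the reference contains.
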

    \begin{proof}[Proof of Claim~\ref{claim:conv}] Express $r$ as a simple continued fraction $r = [b_0=0;b_1,b_2,\dots]$, where $b_i > 0$ are integers and consider the convergents $\set{Q_i = \frac {A_i}{B_i}}$ for $r$ defined as in Fact~\ref{fact:conv}. Then, from Fact~\ref{fact:conv}, $|rB_i - A_i| \leq \frac 1 { B_{i+1}}$ and, as a consequence, for every integer $t>0$, 
    \begin{equation}\label{eq:conv_approx}
        |r\cdot B_i t - A_i t| \leq \frac t { B_{i+1}}\,.
    \end{equation}
    
    Recall that $\set{B_i}$ is a growing sequence of integers and define $\tilde{c} = c / (2\kstar+1)$ and $i_c = \min \set{i: B_i \geq 1 / \tilde{c}}$. Let $A(c,r,\kstar) = B_{i_c} / \tilde{c}$ and choose any $A > A(c,r,\kstar)$. Let index $i$ be such that $B_i \leq \tilde{c} A < B_{i+1}$. Then, $B_i \geq B_{i_c} \geq 1/\tilde{c}$. From \eqref{eq:conv_approx} we then obtain
    $$
    \frac {B_i q} {\|B_i q r\|} \geq B_i B_{i+1} \geq 1 / \tilde{c} \cdot \tilde{c} A = A\,.
    $$
    Thus, for any $A > A(c,r,\kstar)$ and $q \leq (2\kstar + 1)$, the condition \eqref{eq:approx_bound} of the Claim holds with $k = B_i q \leq \tilde{c}(2\kstar+1) A = c A$ for $q \leq 2\kstar + 1$. 
    It is now sufficient to choose $q$ such that $\lfloor B_i q r\rceil$ is a multiple of $\kstar$. Let $\alpha = \frac{B_i r} {\kstar}$ and let $q \leq 2\kstar + 1$ be such that $$ \| q \alpha\| \leq \frac{1}{2\kstar+1}\,,$$
    where the existence of such $q$ is guaranteed by the pigeonhole principle. Then, 
    $$
    |q B_i r - \kstar \lfloor q \alpha \rceil| \leq \frac{\kstar}{2\kstar+1} < \frac12\,,
    $$
    so $\lfloor q B_i r\rceil = \kstar \lfloor q \alpha \rceil$ is a multiple of $\kstar$.
    \end{proof}

\subsection{Proof of Lemma~\ref{lem:wishart}}\label{subsec:wishart}

\begin{mylem}{\ref{lem:wishart}}\textnormal{(Generalization of Theorem 2.6 of \cite{brennan2021finetti})}
Let $\Mr\in[0,1]^{d\times n}, \Ml\in[0,1]^{d\times m}$ have i.i.d. $\ber(\pr)$ and $\ber(\pl)$ entries respectively. Let $\deltaR,\deltaL \in \R$ be parameters such that $\delta_R, \delta_L \ll 1$ and let $\Mur^\full \in \set{\pm 1}^{d \times n}, \Mul^\full \in \set{\pm 1}^{d \times m}$ be any fixed matrices. Let $\paren{\Xgr}^\full \sim \N\paren{0,I_{d \times n}}$ and $\paren{\Xgl}^\full \sim \N\paren{0,I_{d \times m}}$. 

Define for $a \in \set{R, L}$:
$$
\Mu_a = \delta_a \Mu_a^\full \odot \M_a\quad\,, \Xg_a = \paren{\Xg_a}^\full \odot \M_a\,, \quad \text{and} \quad X_a = \Mu_a + \Xg_a\,.
$$

If the following assumptions on parameters $\pr, \pl, \deltaR, \deltaL, d, n, m$ are satisfied: 
\begin{align}
    \pr, \pl &\in \set{1} \cup (0, 1-c) \text{ for some }c>0\tag{a}\label{eq:cond_a_app}\\
    \deltaR &\ll \min\set{1, \paren{\pr n}^{-1/2}}\text{ and } \deltaL\ll \min\set{1, \paren{\pl m}^{-1/2}}\tag{b}\label{eq:cond_b_app}\\
    n &\ll \pr^2 \pl d\text{ and }m\ll \pl^2 \pr d\tag{c}\label{eq:cond_c_app}\\
    mn &\ll \min\set{\pr\pl d, \deltaL^{-4}, \deltaR^{-4}}\tag{d}\label{eq:cond_d_app}\\
    n^2m &\ll \min\set{\pl d, \pr^{-2} \deltaR^{-4}},\text{ and }nm^2 \ll \min\set{\pr d, \pl^{-2} \deltaL^{-4}}\tag{e}\label{eq:cond_e_app}\,,
\end{align}
then 
$$
\tv\paren{N\paren{\paren{\pr\pl d}^{-1/2}\Mur^T \Mul, I_{n\times m}}, \law\paren{\paren{\pr \pl d}^{-1/2} \Xr\tr \Xl}} \to 0\,, \quad d\to 
\infty\,.
$$

\end{mylem}
\begin{proof}
Recall that for $a \in \set{R, L}$, 
$$
X_a = \Mu_a + \Xg_a \odot \M_a\,,
$$
and therefore the Wishart matrix of interest can be expressed as 
\begin{align*}
    \Xr^\top \Xl = \paren{\Mur +\Xgr}^\top \paren{\Mul+\Xgl} &= \Mur^\top \Mul + \Mur^\top \Xgl + \paren{\Xgr}^\top \Mul + \paren{\Xgr}^\top \Xgl\\
    &= \text{(mean)} + \text{(cross-term 1)} + \text{(cross-term 2)} + \text{(noise)}\,.
\end{align*}
We will prove the Lemma statement in two steps: intuitively, bound \eqref{eq:tv1} removes the first cross-term and proves that the noise terms is Gaussian; bound \eqref{eq:tv2} removes the second cross-term.
\begin{align}
    \tv\paren{\paren{\pr\pl d}^{-1/2} \Xr^\top \Xl,  \N\paren{\paren{\pr \pl d}^{-1/2}\Xr^\top \Mul, I_{n\times m}}} \to 0\,, \quad d\to \infty\,.\tag{TV1}\label{eq:tv1}\\
    \tv\paren{\N\paren{\paren{\pr \pl d}^{-1/2}\Mur^\top \Mul, I_{n\times m}}, \N\paren{\paren{\pr \pl d}^{-1/2}\Xr^\top \Mul, I_{n\times m}}} \to 0\,, \quad d\to \infty\,.\tag{TV2}\label{eq:tv2}
\end{align}
Then, applying the triangle inequality for total variation distance (Fact~\ref{tvfacts}), the statement of the Lemma follows. 

Note that Lemma~\ref{lem:wishart} is a generalization of a result of \cite{brennan2021finetti} for bipartite masks.

\paragraph{Proof of \eqref{eq:tv1}.}
Denote $\norm = \pr\pl d$ and let $$Y = \norm^{-1/2} \Xr\tr \Xl\,,$$
$$\mu = \law\paren{\norm^{-1/2} \Xr\tr \Xl} = \law\paren{Y},\quad \text{and}\quad \nu = \law\paren{\N\paren{\psi^{-1/2} \Xr\tr \Mul, I_{n\times m}}}\,.$$ Our goal for this section is to show that under the conditions of the Lemma, $\tv\paren{\mu, \nu} \to 0$ as $d\to \infty$.

We will identify a high-probability event $S$ on $X_R$ and $\M_L$ in the process of the proof. Then, denoting $\mu_{S}$ to be $\mu$ conditioned on the event $S$ by the properties of total variation (Fact~\ref{tvfacts}), 
$$ \tv\paren{\mu, \nu} \leq \tv\paren{\mu_{S} , \nu} + \Pr\sqb{S^c}\,,$$
and thus it is sufficient to show that $\tv\paren{\mu_{S} , \nu}\to 0$ as $d \to \infty$.
By Pinsker's inequality,
$$
\tv\paren{\mu_{S}, \nu} \leq \sqrt{\frac 12 \kl \paren{\mu_{S} \|\nu}}\,.
$$

Denote $\mu\paren{\Xr, \M_L}, \nu\paren{\Xr, \M_L}$ to be the measures $\mu, \nu$ after conditioning on the value of $\Xr, \M_L \in S$.

Note that the above are product measures by column. For $i \in [m]$ let
\begin{align*}
    \mu\paren{\Xr, \M_L}_{i} = \law\paren{ \norm^{-1/2} \Xr^\top \paren{\Xl}_i}, \quad &\text{and} \quad \nu\paren{\Xr, \M_L}_{i} = \law \paren{\N\paren{\norm^{-1/2} \Xr^\top \paren{M_L}_i, I_n}}\,.
\end{align*}
Note that even after conditioning on $\Xr$ and $\M_L$, the columns of both matrices are independent and Gaussian with parameters computed below. Then, applying the convexity and tensorization of $\kl$ divergence, 
\begin{align}
\kl \paren{\mu_S \| \nu} &\leq  \E_{\Xr, \M_L \in S} \kl \paren{ \mu\paren{\Xr, \M_L} \b\| \nu\paren{\Xr, \M_L}}\notag\\ &= \E_{\Xr, \M_L \in S} \sum_{i \in [m]} \kl \paren{ \mu\paren{\Xr, \M_L}_{i} \b\| \nu\paren{\Xr, \M_L}_{i}}\,.\label{eq:kl_expand}
\end{align}
Both $\mu\paren{\Xr, \Ml}_{i} $ and $ \nu\paren{\Xr, \Ml}_{i}$ are high-dimensional Gaussians -- we now compute their parameters. Denote $\M_i$ to be the set of nonzero coordinates in the column $\paren{\Xl}_i$ and let $A_i$ be the vector $\paren{\Mul}_i$ restricted to those coordinates. Moreover, denote $X_{|i}$ to be the submatrix of $\Xr$ consisting of the rows in $\M_i$. With this notation,
\begin{align*}
    &\mu\paren{\Xr, \Ml}_{i}: &\norm^{-1/2} \Xr^\top \paren{\Xl}_i \sim \N\paren{ \norm^{-1/2} X_{|i}^\top A_i, \norm^{-1} X_{|i}^\top X_{|i}}\\
    &\nu\paren{\Xr, \Ml}_{i}: &\N\paren{ \norm^{-1/2} X_{|i}^\top A_i, I_{n}}\,.
\end{align*}
Denote $\Sigma_{\M_i, \Xr} = \norm^{-1} X_{|i}^\top X_{|i}$ to be the covariance in $\mu\paren{\Xr}_i$ given $\M_i$ and let 
$$
\Delta_{\M_i, \Xr} = \Sigma_{\M_i, \Xr} - I_n\quad\,.
$$
Fix some $c\in (0,1)$ and define the following event:
$$
S_{op} = \set{\forall i\,,\, \|\Delta_{\M_i, \Xr}\|_{op} \leq c}\,.
$$

\begin{claim}\label{claim:tail_bound}
Event $S_{op}$ is high probability, i.e. 
$$
\Pr\sqb{S_{op}^c} \to 0 \quad \text{as }d\to\infty\,.
$$

\end{claim}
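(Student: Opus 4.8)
\textbf{Proof plan for Claim~\ref{claim:tail_bound}.}

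The plan is to bound $\Pr[S_{op}^c]$ by a union bound over the $m$ columns and then control, for each fixed column $i$, the operator norm of $\Delta_{\M_i, \Xr} = \norm^{-1} X_{|i}^\top X_{|i} - I_n$, where $X_{|i}$ is the submatrix of $\Xr$ consisting of the rows indexed by the support $\M_i$ of column $i$ of $\M_L$. The key observation is that conditionally on $\M_R$ and $\M_i$, the matrix $X_{|i}$ is a $|\M_i|$-by-$n$ matrix whose entries are $\delta_R \M_R^{\full}$-shifted Gaussians masked by $\M_R$; so $X_{|i}^\top X_{|i}$ is a (non-centered, Bernoulli-masked) Gram matrix. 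First I would condition on $|\M_i| = \sum_{j} (\M_L)_{ji}$, which is $\bin(d, \pl)$ and, by the Binomial concentration in Prop.~\ref{prop:binomial_conc} together with the lower bound $\pl d \gg n m \gg n$ from conditions \eqref{eq:cond_c}--\eqref{eq:cond_d}, lies in $[\pl d(1\pm o(1))]$ with probability $1 - n^{-\omega(1)}$. On this event set $D_i := |\M_i|$ and write $\E[X_{|i}^\top X_{|i} \mid \M_R, \M_i] = \delta_R^2 (\M_R^{\full})_{|i}^\top (\M_R^{\full})_{|i} \odot (\text{mask}) + (\text{diagonal }\pr D_i\text{-ish})$; the dominant term after normalizing by $\norm = \pr \pl d$ is the $\pr D_i / \norm \approx \pr^{-1}$-scaled identity coming from the diagonal of the masked Gram matrix, but actually the natural normalization makes $\E[\norm^{-1} X_{|i}^\top X_{|i}]$ have off-diagonal entries of size $O(\delta_R^2 \cdot \pr^2 D_i / \norm) = O(\delta_R^2 \pr)$ and diagonal close to $1$; conditions \eqref{eq:cond_b} and \eqref{eq:cond_d} force these to be $o(1)$.

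The main work is then the fluctuation bound: conditionally on the masks, $\norm^{-1} X_{|i}^\top X_{|i}$ is a sum of $D_i$ i.i.d. rank-one terms $\norm^{-1} r_\ell r_\ell^\top$ where $r_\ell \in \R^n$ has independent coordinates, each either $0$ (with probability $1-\pr$) or $\delta_R (\M_R^{\full})_{\ell k} + g_{\ell k}$ with $g_{\ell k} \sim \N(0,1)$ (with probability $\pr$). I would apply a matrix Bernstein / matrix Chernoff inequality (e.g. the truncated matrix Bernstein inequality for sums of independent random matrices with subexponential rank-one summands) to $\sum_\ell \norm^{-1}(r_\ell r_\ell^\top - \E r_\ell r_\ell^\top)$. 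The per-term operator norm is $\approx \norm^{-1} \|r_\ell\|^2 = O(\norm^{-1} \pr n) = O(n/(\pl d))$, which is $o(1)$ by \eqref{eq:cond_c}; the matrix variance parameter is $\sum_\ell \norm^{-2} \E[\|r_\ell\|^2 r_\ell r_\ell^\top] = O(D_i \cdot \norm^{-2} \pr n \cdot \pr I_n) = O(\pr^2 n D_i / \norm^2) = O(n/(\pl d))$, again $o(1)$ by \eqref{eq:cond_c}. Matrix Bernstein then gives $\Pr[\|\Delta_{\M_i,\Xr}\|_{op} > c] \le 2n \exp(-\Omega(c^2 \pl d / n))$. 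Since $\pl d / n \gg m$ by \eqref{eq:cond_c}--\eqref{eq:cond_d} (indeed $\pl d \gg nm$ from \eqref{eq:cond_e}, giving $\pl d / n \gg m$), this probability is $\le 2n \exp(-\Omega(c^2 m)) \cdot$(something) — more carefully, $\pl d / n$ exceeds any polynomial in $n$ is not claimed, so I would instead use $\pl d \gg n^2 m$ from \eqref{eq:cond_e} to get $\pl d / n \gg n m$, hence the bound is $\le 2n \exp(-\Omega(c^2 n m))$, and the union bound over $i \in [m]$ contributes a factor $m \le \poly(n)$ which is absorbed. Combined with the $n^{-\omega(1)}$ failure probability of the $D_i$-concentration step, this yields $\Pr[S_{op}^c] \to 0$.

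The step I expect to be the main obstacle is verifying that the available conditions \eqref{eq:cond_a}--\eqref{eq:cond_e} are exactly strong enough to make \emph{all} of the above $o(1)$ terms simultaneously small with a decay rate fast enough to survive the union bound over $m$ columns — in particular reconciling the non-centered shift $\delta_R \M_R^{\full}$ (which contributes extra bias terms of order $\delta_R^2 \pr n$ to the off-diagonal, needing \eqref{eq:cond_b}, and extra variance through the fourth-moment terms, needing \eqref{eq:cond_d} and \eqref{eq:cond_e}) with the masking. I would track these bias-versus-fluctuation trade-offs term by term, mirroring the bookkeeping in \cite{brennan2021finetti} but carrying the $\delta_a$-dependence through; the diagonal entries $\norm^{-1}\|r_\ell\|^2$ summing to something concentrated near $1$ (rather than the exact $\pr D_i / \norm$) is where the normalization constant and the $\bin$-concentration of $|\M_i|$ and of the per-row support sizes must be combined. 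Once $S_{op}$ is established as high-probability, the Gaussian KL computation in \eqref{eq:kl_expand} proceeds using Lemma~\ref{l:KLGauss} with $\Sigma_2 = I_n$, $\Sigma_1 = I_n + \Delta_{\M_i,\Xr}$, giving a per-column contribution $O(\|\Delta_{\M_i,\Xr}\|_F^2) \le O(n \|\Delta_{\M_i,\Xr}\|_{op}^2)$, and summing over $m$ columns and taking expectations (using that $\E\|\Delta\|_{op}^2$ is tiny on $S_{op}$, and the complement contributes negligibly) yields $\kl(\mu_S \| \nu) \to 0$, completing \eqref{eq:tv1}; \eqref{eq:tv2} is then a cross-term removal of the same flavor as in \cite{brennan2021finetti}, using \eqref{eq:cond_e}.
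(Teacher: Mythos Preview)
Your approach via matrix Bernstein on the rank-one decomposition $\norm^{-1}\sum_\ell r_\ell r_\ell^\top$ is viable, but it differs from the paper's proof and is messier. The paper instead splits $X_{|i} = \Xg_{|i} + (\Mur)_{|i}$ into noise and mean components and bounds each operator norm separately using Vershynin's singular value bound for matrices with independent isotropic subgaussian rows (Prop.~4.6.1 of \cite{vershynin2018high}). Concretely: after rescaling, $\pr^{-1/2}\Xg_{|i}$ has isotropic $\pr^{-1/2}$-subgaussian rows, giving $s_i(\norm^{-1/2}\Xg_{|i}) \in [1\pm o(1)]$ directly from condition~\eqref{eq:cond_c}; for the mean part, $(\Mur)_{|i} = \deltaR (\Mur^\full)_{|i}\odot(\Mr)_{|i}$ is controlled by bounding $\|(\Mr)_{|i}\|_{op}$ via the same proposition applied to the centered Bernoulli mask, and the $\deltaR$-prefactor together with conditions~\eqref{eq:cond_b}--\eqref{eq:cond_c} makes $\norm^{-1/2}\|(\Mur)_{|i}\|_{op}=o(1)$. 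Both pieces use the $\bin(d,\pl)$ concentration of $d'=|\M_i|$ (Prop.~\ref{prop:binomial_conc}), exactly as you anticipated.

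The advantage of the paper's route is that subgaussian-row singular value bounds handle the unbounded Gaussian entries directly, whereas your matrix Bernstein path needs either a truncation argument or a subexponential variant to deal with $\|r_\ell\|^2$ not being almost surely bounded. Your variance and per-term operator norm estimates ($O(n/(\pl d))$) are correct in order, and your appeal to condition~\eqref{eq:cond_e} to survive the union bound over $m$ columns is also right---but in the paper's proof the union bound is almost free because the subgaussian tail from Prop.~4.6.1 with $t=O(\log n)$ already gives polynomial-in-$n$ failure probability per column.

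One scope note: the claim only asks that $S_{op}$ is high probability. Your discussion of the subsequent KL computation, the Frobenius bound $\|\Delta\|_F^2$, and the cross-term removal for \eqref{eq:tv2} is outside Claim~\ref{claim:tail_bound}; in the paper those are handled separately in Claim~\ref{claim:kl_cmp} (via an explicit entrywise second-moment computation of $\E\|\Delta_{\M_i,\Xr}\|_F^2$, not an $n\|\Delta\|_{op}^2$ bound) and in the parallel Claims~\ref{claim:tail_bound2}--\ref{claim:kl_cmp2}.
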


\begin{claim}\label{claim:kl_cmp} $\forall i \in \sqb{m}$,
    \begin{align*}
        &\E_{\M_i, \Xr \in S_{op}} \kl\paren{ \N\paren{ \norm^{-1/2} X_{|i}^\top A_i, \Sigma_{\M_i, \Xr}} \b\| \N\paren{\norm^{-1/2} X_{|i}^\top A_i, I_{n}} } = o\paren{m^{-1}}\,.
    \end{align*}
\end{claim}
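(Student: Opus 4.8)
\textbf{Proof proposal for Claim~\ref{claim:kl_cmp}.}

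The plan is to use the closed-form expression for the KL divergence between two Gaussians that share the same mean (Lemma~\ref{l:KLGauss}): for $\Sigma = \Sigma_{\M_i,\Xr}$ and $\Delta = \Delta_{\M_i,\Xr} = \Sigma - I_n$, the mean term vanishes and we are left with
$$
\kl\paren{\N(\cdot,\Sigma)\,\|\,\N(\cdot,I_n)} = \frac12\B(-\log\det\Sigma - n + \Tr\Sigma\B) = \frac12\B(\Tr\Delta - \log\det(I_n+\Delta)\B)\,.
$$
On the event $S_{op}$ we have $\|\Delta\|_{op}\le c < 1$, so we may apply the scalar inequality $0 \le t - \log(1+t) \le t^2$ valid for $t \ge -c$ (with a constant depending on $c$) to each eigenvalue of $\Delta$; this gives the deterministic bound
$$
\kl \le \tfrac12\sum_{\lambda}\B(\lambda - \log(1+\lambda)\B) \le C(c)\sum_\lambda \lambda^2 = C(c)\,\|\Delta\|_F^2\,,
$$
where the sum runs over eigenvalues of $\Delta$. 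So it suffices to show $\E_{\M_i,\Xr\in S_{op}}\|\Delta_{\M_i,\Xr}\|_F^2 = o(m^{-1})$. Since conditioning on the high-probability event $S_{op}$ only inflates the expectation by a $1+o(1)$ factor (and $\|\Delta\|_F^2 \le n\|\Delta\|_{op}^2 \le nc^2$ is bounded there), it is enough to bound the unconditional $\E\|\Delta_{\M_i,\Xr}\|_F^2$, or rather $\E\big[\|\Delta\|_F^2\,\mathbf 1_{S_{op}}\big]$, and the tail contribution is controlled by Claim~\ref{claim:tail_bound}.

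Next I would compute $\E\|\Delta_{\M_i,\Xr}\|_F^2$ directly. Write $k_i = |\M_i| \sim \bin(d,\pl)$ (the number of revealed rows in column $i$ of the left matrix), and condition on $k_i$. Given $\M_i$, the matrix $X_{|i} \in \R^{k_i\times n}$ has entries $(X_{|i})_{a,b} = \delta_R (\Mur^\full)_{a,b}\M_R{}_{a,b} + (\Xgr^\full)_{a,b}\M_R{}_{a,b}$, i.e. each entry is an independent draw of $\M_R{}_{a,b}\cdot(\delta_R \epsilon_{a,b} + g_{a,b})$ with $\epsilon_{a,b}=\pm1$ fixed, $\M_R{}_{a,b}\sim\ber(\pr)$, $g_{a,b}\sim\N(0,1)$. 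Then $\Sigma_{\M_i,\Xr} = \norm^{-1}X_{|i}^\top X_{|i}$ has $(b,b')$ entry $\norm^{-1}\sum_{a\in\M_i}(X_{|i})_{a,b}(X_{|i})_{a,b'}$. The diagonal entries have mean $\norm^{-1}k_i\,\pr(1+\delta_R^2)$, which I would show concentrates around $1$ up to the correction $\norm^{-1}k_i\pr\delta_R^2$ and the fluctuation of $k_i$ around $\pl d$; the off-diagonal entries have mean $0$. Expanding $\E\|\Delta\|_F^2 = \sum_b \E(\Sigma_{bb}-1)^2 + \sum_{b\ne b'}\E\Sigma_{bb'}^2$ and using independence of entries across $(a,b)$, each variance term is $O(\norm^{-2}k_i)$ times a bounded factor (entries have bounded fourth moments since $\delta_R\ll1$), so the diagonal part contributes $O(n\,\norm^{-2}\E k_i) + (\text{bias})^2$ and the off-diagonal part $O(n^2\norm^{-2}\E k_i)$, plus $O(n)$ times the square of the mean-shift $\norm^{-1}\pl d\pr\delta_R^2 = \delta_R^2$. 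Collecting, $\E\|\Delta\|_F^2 = O\big(n^2 (\pl d)(\pr\pl d)^{-2}\big) + O(n\delta_R^4) = O\big(n^2/(\pr^2\pl d)\big) + O(n\delta_R^4)$.

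It remains to check that both terms are $o(m^{-1})$ under the Lemma's hypotheses: $n^2 m/(\pr^2\pl d) = o(1)$ is exactly the first half of condition \eqref{eq:cond_e}, and $nm\delta_R^4 = o(1)$ follows from $mn\ll\delta_R^{-4}$ in condition \eqref{eq:cond_d} (indeed $nm\delta_R^4 \le (nm\delta_R^4)$ and $mn\delta_R^4\to0$), so a fortiori $n\delta_R^4 m\to0$. Hence $\E_{\M_i,\Xr\in S_{op}}\|\Delta\|_F^2 = o(m^{-1})$ and the claim follows. The main obstacle I anticipate is not any single estimate but bookkeeping the interaction of three sources of randomness — the Bernoulli masks $\M_L,\M_R$, the sign patterns, and the Gaussians — cleanly enough that the fourth-moment bounds on $X_{|i}$ entries and the concentration of $k_i=|\M_i|$ (via Prop.~\ref{prop:binomial_conc}) feed correctly into the Frobenius-norm computation; care is also needed because $\pr$ or $\pl$ may equal $1$, in which case the corresponding mask is trivial and some variance terms simply vanish, which only helps.
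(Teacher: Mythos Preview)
Your approach is the same as the paper's: use the Gaussian KL formula (Lemma~\ref{l:KLGauss}) together with the eigenvalue bound on $S_{op}$ to reduce to $\E\|\Delta_{\M_i,\Xr}\|_F^2$, pass from the conditional to the unconditional expectation, and compute the entrywise second moments of $\Delta$ directly. The gap is in the off-diagonal moment computation. You treat entries of $X_{|i}$ as having $O(1)$ moments (``bounded fourth moments since $\deltaR\ll1$''), but each entry also carries the Bernoulli-$\pr$ mask $(\Mr)_{a,b}$, so its second and fourth moments are $O(\pr)$, not $O(1)$; for $b\ne b'$ the product $(X_{|i})_{a,b}(X_{|i})_{a,b'}$ therefore has second moment $O(\pr^2)$. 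Tracking this factor gives the correct off-diagonal contribution $O(n^2/(\pl d))$ rather than your $O(n^2/(\pr^2\pl d))$. This matters: condition~\eqref{eq:cond_e} asserts $n^2 m \ll \pl d$, \emph{not} $n^2 m \ll \pr^2\pl d$, so your citation of~\eqref{eq:cond_e} is incorrect and your bound is not $o(m^{-1})$ under the stated hypotheses. For a concrete counterexample, take $\pr=n^{-1}$, $\pl=1$, $d=n^{4.5}$, $m=n$, $\deltaR=\deltaL=n^{-10}$: all of \eqref{eq:cond_a}--\eqref{eq:cond_e} hold, yet $n^2 m/(\pr^2\pl d)=n^{0.5}\to\infty$.

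A secondary omission: since $\Mur^\full$ is a \emph{fixed} $\pm1$ matrix (not random), the off-diagonal entries of $\Sigma_{\M_i,\Xr}$ do not have mean $0$; their squared mean contributes an additional $O(n^2\pr^2\deltaR^4)$ term (the paper's $\delta^4 d(d-1)\pr^4\pl^2$ term), and it is precisely the second half of~\eqref{eq:cond_e}, namely $n^2 m \ll \pr^{-2}\deltaR^{-4}$, that controls it.
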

By Claim~\ref{claim:tail_bound}, we can set $S = S_{op}$. Then, applying Claim~\ref{claim:kl_cmp} to every term of Eq.~\eqref{eq:kl_expand}, we obtain 
$$
\kl\paren{\mu_S \| \nu} \leq o(1)\,,
$$
which concludes the proof of \eqref{eq:tv1}.

\begin{proof}[Proof of Claim~\ref{claim:tail_bound}]
    We will use the following standard bound on the singular values of random matrices.
    \begin{proposition}[Prop. 4.6.1 in \cite{vershynin2018high}]\label{prop:isotropic_rows}
        Let $A$ be an $m\times n$ matrix, where $m \geq n$, whose rows $A_i$ are independent, mean zero, sub-gaussian isotropic random vectors in $\R^n$. Denote by $s_1(A)$ and $s_n(A)$ the largest and smallest singular values of $A$. Then for every $t \geq 0$, we have 
        $$
        \sqrt{m} - CK^2(\sqrt{n}+t) \leq s_n(A) \leq s_1(A) \leq \sqrt{m} + CK^2(\sqrt{n}+t)
        $$
        with probability at least $1-2\exp(-t^2)$. Here $K = \max_i \|A_i\|_{\psi_2}$.
    \end{proposition}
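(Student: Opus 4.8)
The plan is the standard covering‑net argument for the extreme singular values of a matrix with independent, isotropic, sub‑gaussian rows. Writing $s_1(A)=\max_{x\in S^{n-1}}\|Ax\|_2$ and $s_n(A)=\min_{x\in S^{n-1}}\|Ax\|_2$, the whole statement follows once we establish that, on an event of probability at least $1-2e^{-t^2}$,
\[
\Bigl\|\tfrac1m A^\top A-I_n\Bigr\|_{op}\ \le\ \varepsilon,\qquad \varepsilon=CK^2\Bigl(\sqrt{\tfrac nm}+\tfrac{t}{\sqrt m}\Bigr),
\]
with the understanding that in the rare regime $n+t^2\gtrsim m$ (where $\varepsilon\ge 1$) one proves instead the weaker bound $CK^2(n+t^2)/m$, which still suffices. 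Indeed, $\|\tfrac1m A^\top A-I_n\|_{op}\le\varepsilon$ means $\tfrac1m\|Ax\|_2^2\in[1-\varepsilon,1+\varepsilon]$ for all unit $x$; taking square roots, using $|\sqrt{1+z}-1|\le\max(|z|,|z|^{1/2})$, and noting the lower singular‑value bound is vacuous when $\varepsilon\ge1$ (since then $\sqrt m-CK^2(\sqrt n+t)\le0$), one gets $\|Ax\|_2\in[\sqrt m-CK^2(\sqrt n+t),\ \sqrt m+CK^2(\sqrt n+t)]$ for every $x\in S^{n-1}$, which is exactly the claimed pair of inequalities after renaming $C$.

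First I would discretize: fix a $\tfrac14$‑net $\mathcal N$ of $S^{n-1}$ with $|\mathcal N|\le 9^n$, and invoke the elementary approximation lemma for symmetric matrices, $\|\tfrac1m A^\top A-I_n\|_{op}\le 2\max_{x\in\mathcal N}\bigl|\tfrac1m\|Ax\|_2^2-1\bigr|$, so that it suffices to control the scalar deviation uniformly over $\mathcal N$. For a fixed $x\in S^{n-1}$, write $\|Ax\|_2^2=\sum_{i=1}^m\langle A_i,x\rangle^2$; isotropy gives $\E\langle A_i,x\rangle^2=1$ and the definition of the row $\psi_2$‑norm gives $\|\langle A_i,x\rangle\|_{\psi_2}\le K$, so $Z_i:=\langle A_i,x\rangle^2-1$ are independent, mean‑zero, sub‑exponential with $\|Z_i\|_{\psi_1}\lesssim K^2$ (using $\|Y^2\|_{\psi_1}=\|Y\|_{\psi_2}^2$ and that centering costs a factor $2$). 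Bernstein's inequality for sums of independent sub‑exponential variables yields, for all $s\ge0$,
\[
\P\!\left[\Bigl|\tfrac1m\|Ax\|_2^2-1\Bigr|\ge s\right]\ \le\ 2\exp\!\Bigl(-c\,m\min\bigl(s^2/K^4,\ s/K^2\bigr)\Bigr).
\]

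Next I would take $s=\tfrac\varepsilon2$ and union bound. Since $\E\langle A_i,x\rangle^2=1$ forces $K\ge c_0>0$ by sub‑gaussian moment comparison, for $s=\tfrac\varepsilon2$ in the main regime the active branch of the minimum is $s^2/K^4\asymp (n+t^2)/m$, so the exponent is at least $c'(n+t^2)$ with $c'$ growing linearly in $C$ (and in the complementary regime $s>K^2$ one uses the $s/K^2$ branch together with $n\le m$ to again dominate $n+t^2$). Summing over the net,
\[
\P\!\left[\max_{x\in\mathcal N}\Bigl|\tfrac1m\|Ax\|_2^2-1\Bigr|\ge\tfrac\varepsilon2\right]\ \le\ 9^n\cdot 2\exp\!\bigl(-c'(n+t^2)\bigr)\ \le\ 2\exp(-t^2),
\]
valid once $C$ (hence $c'$) is large enough that $c'n\ge(\ln 9)n$ absorbs the net cardinality and $c'\ge2$ leaves the $e^{-t^2}$. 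On the complementary event, the approximation lemma gives $\|\tfrac1m A^\top A-I_n\|_{op}\le\varepsilon$, and the reduction in the first paragraph finishes the proof.

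The one genuinely delicate point — and the main obstacle — is the bookkeeping in the last two steps: one must check that the Bernstein exponent (which is $ms^2/K^4$ for $s\lesssim K^2$ and $ms/K^2$ for $s\gtrsim K^2$) dominates both the net entropy $n\ln 9$ and the target deviation $t^2$ simultaneously, with a single universal constant, across all ranges of $(n,m,t)$. This is handled by (i) choosing $C$ in $\varepsilon$ large, (ii) using $K\ge c_0$ to remain in the sub‑gaussian branch for the relevant $s$ in the main regime, and (iii) passing to $\max(\varepsilon,\varepsilon^2/K^2)$‑type expressions in the large‑deviation regime $n+t^2\gtrsim m$, where the displayed singular‑value bounds are in any case loose (the lower one vacuous). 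Everything else — the net approximation lemma, the $\psi_2\!\to\!\psi_1$ squaring bound, and Bernstein's inequality — is off‑the‑shelf.
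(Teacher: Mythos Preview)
Your proposal is correct and is precisely the standard covering-net plus Bernstein argument that appears in Vershynin's book. Note, however, that the paper does not give its own proof of this proposition: it is quoted verbatim from \cite{vershynin2018high} and used as a black-box tool inside the proof of Claim~\ref{claim:tail_bound}, so there is no paper-side argument to compare against.
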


    Recall that conditioned on $\M_i$, $\Delta_{\M_i, \Xr} = \norm^{-1} X_{|i}^\top X_{|i} - I$, where we can decompose $X_{|i} = \Xg_{|i} + \paren{\Mur}_{|i}$. In this proof we will denote $\delta = \deltaR$. First, $$\|\paren{\Mur}_{|i}\|_{op} \leq \delta \|\paren{\Mr}_{|i}\|_{op}\,,$$
    where recall the entries of $\paren{\Mr}_{|i}$ are i.i.d. $\ber(\pr)$. Denote $d' \leq d$ to be the number of rows in $\paren{\Mr}_{|i}$. In case $\pr = 1$, $\|\paren{\Mr}_{|i}\|_{op} \leq \sqrt{nd'}$. Otherwise, for the case $\pr \in \paren{0, 1-c}$, the rows of $\paren{\paren{\Mr}_{|i} - \pr}/\sqrt{\pr(1-\pr)}$ are $\frac{1}{\sqrt{\pr(1-\pr)}}$-subgaussian isotropic. By Prop.~\ref{prop:isotropic_rows}, for any $t\geq 0$ with probability at least $1- 2\exp\paren{-t^2}$,
    \begin{align}
    \|\paren{\Mr}_{|i}\|_{op} &\leq  \pr\sqrt{n d'} + \sqrt{\pr(1-\pr)}\paren{ \sqrt{d'} + \frac C {\pr (1-\pr)} (\sqrt{n} + t)}\notag\\
    &\leq \pr\sqrt{n d'} + O\paren{\sqrt{\pr d'} + \pr^{-1/2}\sqrt{n}}\,,\label{eq:E_op}
    \end{align}
    where we set $t = O\paren{\log n}$. Note that bound \eqref{eq:E_op} also trivially holds for $\pr = 1$. We conclude that with high probability
    \begin{align}
        \norm^{-1/2} \| \paren{\Mur}_{|i} \|_{op} &\leq \norm^{-1/2} \delta \|\paren{\Mr}_{|i}\|_{op}\notag\\
        &\leq \norm^{-1/2}\delta\pr\sqrt{n d'} + O\paren{\norm^{-1/2}\delta\sqrt{\pr d'} + \norm^{-1/2}\delta\pr^{-1/2}\sqrt{n}}\,.\label{eq:bound_op_mean}
    \end{align}

    Now $\Xg_{|i}$ has entries that are $\N(0,1)$ with probability $\pr$ and $0$ otherwise. Then, $\frac 1 {\sqrt{\pr}} \Xg_{|i}$ has isotropic, $\frac 1 {\sqrt{\pr}}$-subgaussian rows, and therefore, by Prop.~\ref{prop:isotropic_rows}, for any $t\geq 0$ with probability at least $1- 2\exp\paren{-t^2}$,
    \begin{align*}
        \frac 1 {\sqrt{\pr}} s_{i}\paren{\Xg_{|i}} \in \sqb{ \sqrt{d'} \pm C \frac 1 \pr \paren{\sqrt{n} + t} }\,.
    \end{align*}
    Setting $t = O\paren{\log n}$, we conclude that with high probability,
    \begin{align}
        s_{i}\paren{\norm^{-1/2}\Xg_{|i}} \in \sqb{ \norm^{-1/2}\sqrt{\pr d'} \pm O\paren{\norm^{-1/2}\pr^{-1/2}\sqrt{n} }}\,.\label{eq:bound_s_gauss}
    \end{align}
    
    Recall that $d' \sim \bin(n, \pl)$. We will use the following fact about concentration of binomial random variables. 
    \begin{myprop}{\ref{prop:binomial_conc}}\textnormal{(Concentration of Binomial RV)}
        Let $X\sim\bin(n,p)$, where $p > 0$ and $np \geq n^{\eps}$ for some $\eps > 0$. By Berstein's inequality,
        $$
        \Pr\sqb{|X - pn| \geq t} \leq 2 \exp\paren{-\frac{t^2/2}{np(1-p) + t/3}}\,,
        $$
        and as a consequence, with probability at least $1 - n^{-c}$,
        $$
        X\in \sqb{pn \pm C\sqrt{pn}\log n} =: \sqb{pn \pm \phi(p,n)}\,,
        $$
        where $C= C(c)$ is a universal constant. 
    \end{myprop}
    By Prop.~\ref{prop:binomial_conc}, with high probability at, $d' \in \sqb{d\pl \pm \tilde O\paren{ \sqrt{\pl d} \log d}}$. Substituting this result into \eqref{eq:bound_s_gauss}, we obtain
    \begin{align*}
        s_{i}\paren{\norm^{-1/2}\Xg_{|i}} &\in \sqb{ 1 \pm \tilde O\paren{\norm^{-1/2}\sqrt{\pr }} \pm O\paren{\norm^{-1/2}\pr^{-1/2}\sqrt{n} } }\\
        &\in \sqb{ 1 \pm o(1) }\,,
    \end{align*}
    where we used $\norm^{-1/2}\sqrt{\pr } = \paren{\pl d}^{-1/2} \ll n^{-\eps}$ from Cond.~\eqref{eq:cond_c_app}, $\norm^{-1/2}\pr^{-1/2}\sqrt{n} = \sqrt{\frac n {\pr^2 \pl d}} \ll 1$ from Cond.~\eqref{eq:cond_c_app}.
    Substituting the concentration of $d'$ into \eqref{eq:bound_op_mean}, we obtain
    \begin{align*}
        \norm^{-1/2} \| \paren{\Mur}_{|i} \|_{op} &\leq O\paren{\norm^{-1/2}\delta\pr\sqrt{n \pl d} + \norm^{-1/2}\delta\sqrt{\pr \pl d} + \norm^{-1/2}\delta\pr^{-1/2}\sqrt{n}}\\
        &= O\paren{\delta\sqrt{\pr n} + \delta + \norm^{-1/2}\delta\pr^{-1/2}\sqrt{n}}\\
        &= o(1)\,,
    \end{align*}
    where we used  $\delta, \delta\sqrt{\pr n} \ll 1$ from Cond.~\eqref{eq:cond_b_app}, $\norm^{-1/2}\delta\pr^{-1/2}\sqrt{n} \ll \sqrt{\frac n {\pr^2 \pl d}} \ll 1$ from Cond.~\eqref{eq:cond_b_app}, \eqref{eq:cond_c_app}.
    Combining the bounds on the singular values of $\norm^{-1/2}\Xg_{|i}$ and on $\norm^{-1/2} \| \paren{\Mur}_{|i} \|_{op}$, we obtain 
        \begin{align*}
        \|\Delta_{\M_i, \Xr}\|_{op} = \|\norm^{-1} X_{|i}^\top X_{|i} - I\|_{op} \ll 1\,,
        \end{align*}
        which concludes the proof. 
\end{proof}

\begin{proof}[Proof of Claim~\ref{claim:kl_cmp}]
We apply the formula of Lemma~\ref{l:KLGauss} for the $\kl$ divergence between Gaussians and the divergence of Claim~\ref{claim:kl_cmp} can be expressed as
\begin{align*}
    - \log \det \paren{I + \Delta_{\M_i, \Xr}} + \Tr\paren{\Delta_{\M_i, \Xr}} &= \Tr\paren{\Delta_{\M_i, \Xr}} + \sum_{t=1}^{\infty} \frac{(-1)^t}{t} \Tr\paren{\Delta_{\M_i, \Xr}^t} = \sum_{t=2}^{\infty} \frac{(-1)^t}{t} \Tr\paren{\Delta_{\M_i, \Xr}^t}\,.
\end{align*}
Moreover, $\forall t > 2$, $\b| \Tr\paren{\Delta_{\M_i,\Xr}^t} \b| \leq \|\Delta_{\M_i,\Xr}\|^{t-2}_{op} \Tr \paren{\Delta_{\M_i,\Xr}^2} \leq c^{t-2}\Tr \paren{\Delta_{\M_i,\Xr}^2}$, since under $S_{op}$ we have $\|\Delta_{\M_i,\Xr}\|_{op} \leq c$. Then,
\begin{align*}
    \E_{\M_i, \Xr\in S_{op}} \sqb{\paren{- \log \det \paren{I + \Delta_{\M_i,\Xr}} + \Tr\paren{\Delta_{\M_i,\Xr}}}} &= O(1)\cdot\E_{\M_i, \Xr\in S_{op}} \Tr \paren{\Delta_{\M_i,\Xr}^2}\,.
\end{align*}
Note that since $S_{op}$ is a high probability event by Claim~\ref{claim:tail_bound},
$$
\E_{\M_i, \Xr\in S_{op}} \Tr \paren{\Delta_{\M_i,\Xr}^2} \leq \Pr\sqb{S_{op}}^{-1} \E_{\M_i, \Xr} \Tr \paren{\Delta_{\M_i,\Xr}^2} = O(1) \cdot \E_{\M_i, \Xr} \|\Delta_{\M_i,\Xr}\|_2^2\,.
$$
We now explicitly compute $\E_{\M_i, \Xr} \|\Delta_{\M_i,\Xr}\|_2^2$. Recall that $\Delta_{\M_i,\Xr}= \norm^{-1} X_{|i}^\top X_{|i} - I$. In this proof, denote $\delta = \deltaR$. Recall that every row of $\Xr$ is included in $X_{|i}$ independently with probability $\pl$ and the entries $\paren{\Xr}_{ab} \sim \N(\pm \delta, 1)\cdot \paren{\Mr}_{ab}$. 
For an off diagonal entry, i.e. $a \neq b \in [n]$:
\begin{align*}
    \norm^2 \E_{\M_i, \Xr} \paren{\Delta_{\M_i, \Xr}}_{ab}^2 &= \paren{1+\delta^2}^2 d \pr^2 \pl + \delta^4 d(d-1) \pr^4 \pl^2\,.
\end{align*}
For a diagonal entry, 
\begin{align*}
        \E_{\M_i, \Xr} \paren{\Delta_{\M_i, \Xr}}_{aa}^2 &= \norm^{-2}\paren{\delta^4+6\delta^2+3} d \pr \pl + 1 + \norm^{-2} \paren{1+\delta^2}^2 d(d-1) \pr^2\pl^2 - 2 \norm^{-1} \paren{1+\delta^2} d \pr\pl\\
        &\leq O\paren{\delta^2\norm^{-1} + \delta^4} + 3 \norm^{-1} + 1 + 1 - 2\\
        &= O\paren{\delta^2\norm^{-1} + \delta^4 + \norm^{-1}}\,.
\end{align*}   
Combining these results we obtain 
\begin{align*}
    \E_{\M_i, \Xr^{G}}\|\Delta_{\M_i, \Xr^G}\|_2^2 &\leq \paren{n^2 - n} \paren{ \norm^{-2}\paren{1+\delta^2}^2 d \pr^2 \pl + \norm^{-2}\delta^4 d(d-1) \pr^4 \pl^2 } + n O\paren{\delta^2\norm^{-1} + \delta^4 + \norm^{-1}}\\
    &= O\paren{ n^2 \pr \norm^{-1} + \delta^4 n^2 \pr^2 + \delta^2 n \norm^{-1} + n\delta^4 + n\norm^{-1}} \ll m^{-1}\,,
\end{align*}
where we have used $\delta \ll 1$ (Cond.~\eqref{eq:cond_b_app}), $n^2 \pr \norm^{-1} = \frac {n^2}{\pl d} \ll m^{-1}$ and $\delta^4 n^2 \pr^2\ll m^{-1}$ from Cond.~\eqref{eq:cond_e_app}, $\max\set{\delta^2 n \norm^{-1}, n\delta^4, n\norm^{-1}} \ll m^{-1}$ from Cond.~\eqref{eq:cond_d_app}, \eqref{eq:cond_b_app}, which concludes the proof.

\end{proof}
\paragraph{Proof of \eqref{eq:tv2}.}
In this section we show the second step of our argument and prove
$$\tv\paren{\N\paren{\paren{\pr \pl d}^{-1/2}\Mur^\top \Mul, I_{n\times m}}, \N\paren{\paren{\pr \pl d}^{-1/2}\Xr^\top \Mul, I_{n\times m}}} \to 0\,, \quad d\to \infty\,,$$
where both measures are mixtures of Gaussians with corresponding distributions for $\Xr, \Mur, \Mul$.
Again denote $\norm = \pr \pl d$ and let 
$$
\mu = \law\paren{\N\paren{\norm^{-1/2}\Xr^\top \Mul, I_{n\times m}}}\,,\quad\text{and},\quad\nu = \law\paren{\N\paren{\norm^{-1/2}\Mur^\top \Mul, I_{n\times m}}}
$$
We will follow the same proof outline as \eqref{eq:tv1} and identify a high-probability event $S$ on $\Mul$ and $\Mr$ in the process of the proof. Then, denoting $\mu_{S}$ to be $\mu$ conditioned on the event $S$ by the properties of total variation (Fact~\ref{tvfacts}), 
$$ \tv\paren{\mu, \nu} \leq \tv\paren{\mu_{S} , \nu} + \Pr\sqb{S^c}\,,$$
and thus it is sufficient to show that $\tv\paren{\mu_{S} , \nu}\to 0$ as $d \to \infty$.
By Pinsker's inequality,
$$
\tv\paren{\mu_{S}, \nu} \leq \sqrt{\frac 12 \kl \paren{\mu_{S} \|\nu}}\,.
$$ 

Denote $\mu\paren{\Mul, \Mr}, \nu\paren{\Mul, \Mr}$ to be the measures $\mu, \nu$ after conditioning on the value of $\Mul, \Mr \in S$. Note that the above are product measures by row. For $i \in [n]$ let
\begin{align*}
    \mu\paren{\Mul, \Mr}_{i} = \law\paren{ \norm^{-1/2} \paren{\Xr}_i^\top \Mul + \N\paren{0, I_m}^\top}, \quad &\text{and} \quad \nu\paren{\Mul, \Mr}_{i} = \law \paren{\norm^{-1/2} \paren{\Mur}_i^\top M_L + \N\paren{0, I_m}^\top}\,.
\end{align*}
Note that even after conditioning on $\Mul$ and $\M_R$, the rows of both matrices are independent and Gaussian with parameters computed below. Then, applying the convexity and tensorization of $\kl$ divergence, 
\begin{align}
\kl \paren{\mu_S \| \nu} &\leq  \E_{\Mul, \M_R \in S} \kl \paren{ \mu\paren{\Mul, \M_R} \b\| \nu\paren{\Mul, \M_R}}\notag\\ &= \E_{\Mul, \M_R \in S} \sum_{i \in [m]} \kl \paren{ \mu\paren{\Mul, \M_R}_{i} \b\| \nu\paren{\Mul, \M_R}_{i}}\,.\label{eq:kl_expand2}
\end{align}
Both $\mu\paren{\Mul, \Mr}_{i} $ and $ \nu\paren{\Mul, \Mr}_{i}$ are high-dimensional Gaussians -- we now compute their parameters. Denote $\M_i$ to be the set of nonzero coordinates in the row $\paren{\Xr}_i$ and recall that $\paren{\Xr}_i = \paren{\Xgr}_i + \paren{\Mur}_i$. Let $A_i$ be the vector $\paren{\Mur}_i$ restricted to the nonzero coordinates $\M_i$. Moreover, denote $\paren{\Mul}_{|i}$ to be the submatrix of $\Mul$ consisting of the rows in $\M_i$. With this notation,
\begin{align*}
    &\mu\paren{\Mul, \Mr}_{i}: &\norm^{-1/2} \paren{\Xr}_i^\top \Mul + \N\paren{0, I_m}^\top \sim \N\paren{ \norm^{-1/2} A_i^\top \paren{\Mul}_{|i}, I_m + \norm^{-1} \paren{\Mul}_{|i}^\top \paren{\Mul}_{|i}}\\
    &\nu\paren{\Mul, \Mr}_{i}: &\N\paren{ \norm^{-1/2} A_i^\top \paren{\Mul}_{|i}, I_{m}}\,.
\end{align*}
Denote 
$$
\Delta_{\Mul, \Mr} = \norm^{-1} \paren{\Mul}_{|i}^\top \paren{\Mul}_{|i}\,,\quad\text{and}\quad \Sigma_{\Mul, \Mr} = I_m + \Delta_{\Mul, \Mr}\,.
$$
Fix some $c\in (0,1)$ and define the following event:
$$
S_{op} = \set{\forall i\,,\, \|\Delta_{\Mul, \Mr}\|_{op} \leq c}\,.
$$
\begin{claim}\label{claim:tail_bound2}
Event $S_{op}$ is high probability, i.e. 
$$
\Pr\sqb{S_{op}^c} \to 0 \quad \text{as }d\to\infty\,.
$$
\end{claim}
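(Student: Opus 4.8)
The plan is to establish Claim~\ref{claim:tail_bound2} by mirroring the argument for Claim~\ref{claim:tail_bound} in the proof of \eqref{eq:tv1}, now applied to the mean matrix $\Mul = \deltaL\,\Mul^\full\odot\M_L$ instead of to a noisy matrix; since $\Mul$ carries no Gaussian component the argument is in fact slightly simpler. Observe that $\Delta_{\Mul,\Mr} = \norm^{-1}\paren{\Mul}_{|i}^\top\paren{\Mul}_{|i}$ is a Gram matrix, so $\|\Delta_{\Mul,\Mr}\|_{op} = \norm^{-1}\|\paren{\Mul}_{|i}\|_{op}^2$, and it depends on the randomness only through the supports $\M_i=\supp\paren{(\Xr)_i}=\supp\paren{(\Mr)_i}$ (i.i.d.\ across $i$, with $|\M_i|\sim\bin(d,\pr)$) and through the Bernoulli mask $\M_L$ inside $\paren{\Mul}_{|i}=\deltaL\paren{\Mul^\full}_{|i}\odot\paren{\M_L}_{|i}$. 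Hence it suffices to show that with high probability, $\norm^{-1}\|\paren{\Mul}_{|i}\|_{op}^2\le c$ simultaneously for all $i\in[n]$. First, by Prop.~\ref{prop:binomial_conc} and a union bound over $i\in[n]$ — legitimate since $\pr d\gg n$ by \eqref{eq:cond_c} — with high probability every $d':=|\M_i|$ lies in $[\pr d(1\pm o(1))]$; condition on this event and fix $i$.

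Next, decompose $\paren{\Mul}_{|i} = \deltaL\,\pl\,\paren{\Mul^\full}_{|i} + \deltaL\,\paren{\Mul^\full}_{|i}\odot Q$, where $Q:=\paren{\M_L}_{|i}-\pl$ has i.i.d.\ entries that are mean zero, bounded by $1$ in absolute value, and of variance $\pl(1-\pl)$. The deterministic part has operator norm at most $\deltaL\pl\sqrt{d'm}$, since $\paren{\Mul^\full}_{|i}$ is a $\pm1$ matrix of dimensions $d'\times m$. For the fluctuation part, the rows of $\paren{\pl(1-\pl)}^{-1/2}\,\paren{\Mul^\full}_{|i}\odot Q$ are independent, isotropic vectors in $\R^m$, and (using $\pl\le 1-c$ from \eqref{eq:cond_a}) subgaussian with $\psi_2$-norm $O\paren{\pl^{-1/2}}$; since $d'\gg m$ by \eqref{eq:cond_c}, Prop.~\ref{prop:isotropic_rows} applied with $t=O(\log n)$ and a union bound over $i\in[n]$ yields, with high probability and simultaneously over $i$, $\|\paren{\Mul^\full}_{|i}\odot Q\|_{op}\lesssim \sqrt{\pl}\sqrt{d'}+\pl^{-1/2}\paren{\sqrt m+\log n}$. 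Combining these, and using $d'\asymp\pr d$ and $\norm=\pr\pl d$,
$$\|\Delta_{\Mul,\Mr}\|_{op}\lesssim \norm^{-1}\deltaL^2\paren{\pl^2 d'm+\pl d'+\pl^{-1}\paren{m+\log^2 n}}\lesssim \deltaL^2\pl m+\deltaL^2+\frac{\deltaL^2 m}{\pl^2\pr d}+\frac{\deltaL^2\log^2 n}{\pl^2\pr d}\,.$$
Each term is $o(1)$: $\deltaL^2\pl m\ll1$ and $\deltaL^2\ll1$ by \eqref{eq:cond_b}, while $m\ll\pl^2\pr d$ from \eqref{eq:cond_c} makes the last two terms $\ll\deltaL^2\ll1$ (the $\log^2 n$ being absorbed since $\pr^2\pl d\gg n$). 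Hence on a high-probability event $S_{op}$ holds, i.e.\ $\Pr\sqb{S_{op}^c}\to0$; then $\tv(\mu,\nu)\le\tv(\mu_{S_{op}},\nu)+\Pr\sqb{S_{op}^c}$ reduces \eqref{eq:tv2} to a conditional-KL estimate handled exactly as Claim~\ref{claim:kl_cmp} was for \eqref{eq:tv1}, which together with \eqref{eq:tv1} and the triangle inequality (Fact~\ref{tvfacts}) completes Lemma~\ref{lem:wishart}.

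The main obstacle is the operator-norm control of the random masked matrix $\paren{\Mul^\full}_{|i}\odot Q$ that must be (i) uniform over all $i\in[n]$ — which costs only a $\log n$ factor, comfortably absorbed by the parameter conditions — and (ii) tight enough that, even after the lossy bound $\|\paren{\Mul^\full}_{|i}\|_{op}\le\sqrt{d'm}$ on the deterministic part, the extra $\pl^2$ factor leaves $\norm^{-1}\deltaL^2\pl^2 d'm=\deltaL^2\pl m$ small under \eqref{eq:cond_b}. Everything else is routine; in contrast to Claim~\ref{claim:tail_bound}, only conditions \eqref{eq:cond_a}, \eqref{eq:cond_b}, \eqref{eq:cond_c} are needed, not \eqref{eq:cond_d} or \eqref{eq:cond_e}.
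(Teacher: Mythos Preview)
Your proof is correct and follows essentially the same approach as the paper: you reduce to bounding $\norm^{-1/2}\|(\Mul)_{|i}\|_{op}$, control $d'=|\M_i|$ via binomial concentration, decompose $(\Mul)_{|i}$ into its mean $\deltaL\pl(\Mul^\full)_{|i}$ and the centered fluctuation, bound the latter via Prop.~\ref{prop:isotropic_rows}, and conclude using conditions \eqref{eq:cond_a}--\eqref{eq:cond_c}. The paper's proof is terser (it simply says ``fully analogous to Claim~\ref{claim:tail_bound}'' and writes down the resulting three-term bound), but the decomposition and the conditions invoked are identical; your only minor imprecision is the justification for absorbing the $\log^2 n$ term (the paper instead absorbs $t=O(\log n)$ directly into $\sqrt{m}$, valid in the polynomial-scale regime), and you do not separately mention the trivial case $\pl=1$ where $Q\equiv0$.
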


\begin{claim}\label{claim:kl_cmp2} $\forall i \in \sqb{m}$,
    \begin{align*}
        &\E_{\Mul, \Mr \in S_{op}} \kl\paren{ \N\paren{ \norm^{-1/2} A_i^\top \paren{\Mul}_{|i}, \Sigma_{\Mul, \Mr}} \b\| \N\paren{ \norm^{-1/2} A_i^\top \paren{\Mul}_{|i}, I_m}} = o\paren{n^{-1}}\,.
    \end{align*}
\end{claim}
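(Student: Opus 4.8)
The plan is to follow the proof of Claim~\ref{claim:kl_cmp} almost verbatim, with the covariance perturbation now being the Gram matrix $\Delta_{\Mul,\Mr}=\norm^{-1}\paren{\Mul}_{|i}^\top\paren{\Mul}_{|i}$ of the mean block $\paren{\Mul}_{|i}$. First I would invoke Lemma~\ref{l:KLGauss}: since the two Gaussians in the claim have the same mean $\norm^{-1/2}A_i^\top\paren{\Mul}_{|i}$, the KL collapses to $\frac12\b(-\log\det\paren{I_m+\Delta_{\Mul,\Mr}}+\Tr\paren{\Delta_{\Mul,\Mr}}\b)$, and expanding the logarithm exactly as in Claim~\ref{claim:kl_cmp} this equals $\frac12\sum_{t\geq 2}\frac{(-1)^t}{t}\Tr\paren{\Delta_{\Mul,\Mr}^t}$. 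On the event $S_{op}$ one has $\|\Delta_{\Mul,\Mr}\|_{op}\leq c<1$, hence $\b|\Tr\paren{\Delta_{\Mul,\Mr}^t}\b|\leq c^{t-2}\Tr\paren{\Delta_{\Mul,\Mr}^2}$ for $t\geq 2$, so the whole sum is $O(1)\cdot\Tr\paren{\Delta_{\Mul,\Mr}^2}=O(1)\cdot\|\Delta_{\Mul,\Mr}\|_2^2$. Since $S_{op}$ is a high-probability event by Claim~\ref{claim:tail_bound2}, $\E_{\Mul,\Mr\in S_{op}}\|\Delta_{\Mul,\Mr}\|_2^2\leq\Pr\sqb{S_{op}}^{-1}\E_{\Mul,\Mr}\|\Delta_{\Mul,\Mr}\|_2^2=O(1)\cdot\E\|\Delta_{\Mul,\Mr}\|_2^2$, so it remains to show $\E\|\Delta_{\Mul,\Mr}\|_2^2=o(n^{-1})$.

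Next I would compute $\E\|\Delta_{\Mul,\Mr}\|_2^2$ entrywise, using $\paren{\Mul}_{e,a}=\deltaL\paren{\Mul^\full}_{e,a}\paren{\Ml}_{e,a}$ with $\paren{\Ml}_{e,a}\sim\ber(\pl)$ i.i.d., $|\M_i|\sim\bin(d,\pr)$ so $\E|\M_i|^2\lesssim(\pr d)^2$, and $\norm=\pr\pl d$. For a diagonal entry, $\paren{\Delta_{\Mul,\Mr}}_{aa}=\norm^{-1}\deltaL^2\sum_{e\in\M_i}\paren{\Ml}_{e,a}$, whose second moment is $O(\deltaL^4)$, so the $m$ diagonal entries contribute $O(m\deltaL^4)$. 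For $a\neq b$, $\paren{\Delta_{\Mul,\Mr}}_{ab}=\norm^{-1}\deltaL^2\sum_{e\in\M_i}\paren{\Mul^\full}_{e,a}\paren{\Mul^\full}_{e,b}\paren{\Ml}_{e,a}\paren{\Ml}_{e,b}$; conditioned on $\M_i$ its mean has magnitude at most $\norm^{-1}\deltaL^2\pl^2|\M_i|$ (since $\Mul^\full$ is $\pm1$), so $\E$ of its square is $O(\deltaL^4\pl^2)$, while $\E$ of its conditional variance is $O\paren{\norm^{-2}\deltaL^4\pl^2\,\E|\M_i|}=O\paren{\deltaL^4/(\pr d)}$; hence $\E\paren{\Delta_{\Mul,\Mr}}_{ab}^2=O\paren{\deltaL^4\pl^2+\deltaL^4/(\pr d)}$ and the $\leq m^2$ off-diagonal entries contribute $O\paren{m^2\deltaL^4\pl^2+m^2\deltaL^4/(\pr d)}$. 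Altogether $\E\|\Delta_{\Mul,\Mr}\|_2^2=O\b(m\deltaL^4+m^2\deltaL^4\pl^2+m^2\deltaL^4/(\pr d)\b)$.

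Finally I would check each term is $\ll n^{-1}$: $m\deltaL^4\ll n^{-1}$ is $mn\ll\deltaL^{-4}$, which is Cond.~\eqref{eq:cond_d}; $m^2\deltaL^4\pl^2\ll n^{-1}$ is $nm^2\ll\pl^{-2}\deltaL^{-4}$, which is Cond.~\eqref{eq:cond_e}; and $m^2\deltaL^4/(\pr d)\ll n^{-1}$ follows from $nm^2\ll\pr d$ (Cond.~\eqref{eq:cond_e}) together with $\deltaL^4\ll 1$ (Cond.~\eqref{eq:cond_b}). This yields $\E\|\Delta_{\Mul,\Mr}\|_2^2=o(n^{-1})$, and combined with the reduction in the first paragraph, the KL in the claim is $o(n^{-1})$.

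The main obstacle — indeed essentially the only point where this differs from Claim~\ref{claim:kl_cmp} — is the off-diagonal contribution: because $\Mul^\full$ is an arbitrary $\pm1$ matrix, the off-diagonal entries of $\paren{\Mul}_{|i}^\top\paren{\Mul}_{|i}$ are \emph{not} mean-zero, and their conditional mean can be as large as $\Theta(\deltaL^2\pl)$ in magnitude. Controlling the resulting $m^2\deltaL^4\pl^2$ term is exactly what forces the use of hypothesis \eqref{eq:cond_e} in the form $nm^2\pl^2\ll\deltaL^{-4}$; one must also take expectations over the random size of $\M_i$ directly (rather than conditioning and using the deterministic $S_{op}$ bound $\|\Delta_{\Mul,\Mr}\|_2^2\leq m\|\Delta_{\Mul,\Mr}\|_{op}^2$, which is too weak). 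Everything else is the same mechanical KL-expansion-plus-second-moment computation as in Claim~\ref{claim:kl_cmp}.
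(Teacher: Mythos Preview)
Your proposal is correct and follows essentially the same approach as the paper: reduce via the same-mean Gaussian KL formula and the $S_{op}$ event to bounding $\E\|\Delta_{\Mul,\Mr}\|_2^2$, then compute entrywise second moments and invoke conditions \eqref{eq:cond_d} and \eqref{eq:cond_e}. Your diagonal and off-diagonal bounds $O(\deltaL^4)$ and $O(\deltaL^4\pl^2+\deltaL^4/(\pr d))$ match the paper's (after simplifying $\pl\norm^{-1}=1/(\pr d)$), and the verification against the hypotheses is the same.
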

By Claim~\ref{claim:tail_bound2}, we can set $S = S_{op}$. Then, applying Claim~\ref{claim:kl_cmp2} to every term of Eq.~\eqref{eq:kl_expand2}, we obtain 
$$
\kl\paren{\mu_S \| \nu} \leq o(1)\,,
$$
which concludes the proof of \eqref{eq:tv2}.
\begin{proof}[Proof of Claim~\ref{claim:tail_bound2}]
    The claim is equivalent to showing that with high probability ($\ll n^{-1}$), 
    $$
    \|\norm^{-1} \paren{\Mul}_{|i}^\top \paren{\Mul}_{|i}\|_{op} \leq c\,.
    $$
    Fully analogous to the proof in Claim~\ref{claim:tail_bound}, we obtain
    \begin{align*}
        \norm^{-1/2} \| \paren{\Mul}_{|i} \|_{op} &\leq O\paren{\norm^{-1/2}\deltaL\pl\sqrt{m \pr d} + \norm^{-1/2}\deltaL\sqrt{\pl \pr d} + \norm^{-1/2}\deltaL\pl^{-1/2}\sqrt{m}}\\
        &= O\paren{\deltaL\sqrt{\pl m} + \deltaL + \norm^{-1/2}\deltaL\pl^{-1/2}\sqrt{m}}\\
        &= o(1)\,,
    \end{align*}
    where we used  $\deltaL, \deltaL\sqrt{\pl m} \ll 1$ from Cond.~\eqref{eq:cond_b_app}, $\norm^{-1/2}\deltaL\pl^{-1/2}\sqrt{m} \ll \sqrt{\frac m {\pl^2 \pr d}} \ll 1$ from Cond.~\eqref{eq:cond_b_app}, \eqref{eq:cond_c_app}, which concludes the proof of the claim.
\end{proof}

\begin{proof}[Proof of Claim~\ref{claim:kl_cmp2}]
    By the same argument as in Claim~\ref{claim:kl_cmp}, it is sufficient to show that 
    $$
    \E_{\Mul,\Mr} \|\Delta_{\Mul,\Mr}\|_2^2 = \E_{\Mul,\Mr} \|\norm^{-1} \paren{\Mul}_{|i}^\top \paren{\Mul}_{|i}\|_2^2 \ll n^{-1}\,.
    $$
    We explicitly compute the expectation. Denote here $\delta = \deltaL$. $\forall a \in \sqb[m]$,
    \begin{align*}
        \norm^{2}\E_{\Mul,\Mr} \paren{\Delta_{\Mul,\Mr}}_{aa}^2 = \delta^4 \pl \pr d + \delta^4 \pl^2 \pr^2 d(d-1) = \delta^4 O\paren{\pl \pr d + \pl^2 \pr^2 d^2 }\,.
    \end{align*}
    $\forall a\neq b \in \sqb{m}$,
    \begin{align*}
        \norm^{2}\E_{\Mul,\Mr} \paren{\Delta_{\Mul,\Mr}}_{ab}^2 \leq \delta^4 \pl^2 \pr d + \delta^4 \pl^4 \pr^2 d(d-1) = \delta^4 O\paren{\pl^2 \pr d + \pl^4 \pr^2 d^2 }\,.
    \end{align*}
    Combining the two bounds together, we obtain 
    \begin{align*}
        \E_{\Mul,\Mr} \|\Delta_{\Mul,\Mr}\|_2^2 &\leq \delta^4 \norm^{-2} \cdot O\paren{ \pl \pr dm + \pl^2 \pr^2 d^2 m + \pl^2 \pr dm^2 + \pl^4 \pr^2 d^2 m^2 }\\
        & = \delta^4 O\paren{ \norm^{-1} m + m + \pl \norm^{-1} m^2 + \pl^2 m^2 }
        \ll n^{-1}\,,
    \end{align*}
    where we used $\delta^4 \norm^{-1} m, \delta^4 m  \ll n^{-1} $ by Cond.~\eqref{eq:cond_d_app} and $\delta^4 \pl \norm^{-1} m^2 = \frac{\delta^4m^2}{\pr d} \ll n^{-1}$, $\delta^4 \pl^2 m^2 \ll n^{-1}$ by Cond.~\eqref{eq:cond_e_app}.
\end{proof}
\end{proof}

\end{document}